\begin{document}

%............................ 

\ethdissnumber{20213}

%\title{A Framework for Non-Asymptotic Quantum Information Theory and Entropic 
%Uncertainty Relations}
\title{A Framework for Non-Asymptotic Quantum Information Theory}

\author{Marco Tomamichel}
\previousdegree{Dipl.\ Ing.\ ETH}
\authorinfo{born March 13, 1981, in St.~Gallen\\ citizen of Bosco Gurin, TI, 
Switzerland}

\referees{Prof.\ Dr.\ Renato Renner, examiner\\ Prof.\ Dr.\ Andreas Winter, co-examiner\\ Prof.\ Dr.\ Amos Lapidoth, co-examiner}
\degreeyear{2012}

\maketitle

%............................ 

%\begin{dedication}
%\end{dedication}

%............................ 

\begin{acknowledgements}
  \vspace{-0.3cm}
First and foremost, I want to thank my advisor, Renato Renner. I am grateful for his willingness to supervise me and for his help with both conceptual and technical questions. Most importantly, I appreciated his unsurmountable optimism in the face of scientific challenges. For every problem I encountered, he readily proposed at least one potential solution and discussed it with me in detail.

A special thanks also goes to Roger Colbeck and Fré\-dé\-ric Du\-puis, who shared office with me as post-docs. They generously set aside time to answer my questions and I learned a lot through these interactions.

On top of that, I am greatly indebted to all my coauthors for giving me the opportunity to collaborate with and learn from them. This includes Mario Berta, Roger Colbeck, Frédéric Dupuis, Torsten Franz, Fabian Furrer, Nicolas Gisin, Esther Hänggi, Stefan Hengl, Char\-les Lim Ci Wen, Renato Renner, Christian Schaf\-f\-ner, 
Volkher Scholz, Adam Smith, Oleg Szehr, Reinhard Werner and Severin Winkler.

I want to thank my co-examiners, Andreas Winter and Amos Lapidoth, for their constructive comments that helped improve the final version of this thesis. Andreas Winter's suggestions led to a significant improvement of the asymptotic equipartition property. Matthias Christandl proposed to consider chained uncertainty relations. Frédéric Du\-puis and Johan {\AA}berg were bothered by me with various technical questions during the writeup, and always had an answer. Normand Beaudry, Mario Berta, Dejan Dukaric, Philip\-pe Faist, Lídia del Rio, Cyril Stark, and Severin Winkler all gave me helpful comments on drafts of this thesis. 

Finally, I want to extend my gratitude to all members of the Quantum Information Science group, past and present, for making my time at the Institute for Theoretical Physics very enjoyable. It goes without saying that this work would have been impossible without the continuos support of my family and friends. 
They kept me grounded in this intriguing world of ours, which fortunately distinguishes itself from the abstract quantum world of this thesis through its reluctance to be described by any meaningful mathematical model.

The present version contains minor corrections that are due to very helpful comments by Mark M.~Wilde.

\end{acknowledgements}

%............................ 

% abstract - en

\begin{abstract}
  This thesis consolidates, improves and extends the smooth entropy framework for non-asymptotic information theory and cryptography.

We investigate the conditional min- and max-entropy for quantum states, generalizations of classical Rényi entropies. We introduce the purified distance, a novel metric for unnormalized quantum states and use it to define smooth entropies as optimizations of the min- and max-entropies over a ball of close states. We explore various properties of these entropies, including data-processing inequalities, chain rules and their classical limits. The most important property is an entropic formulation of the asymptotic equipartition property, which implies that the smooth entropies converge to the von Neumann entropy in the limit of many independent copies.
The smooth entropies also satisfy duality and entropic uncertainty relations that provide limits on the power of two different observers to predict the outcome of a measurement on a 
quantum system.

Finally, we discuss three example applications of the smooth entropy framework. We show a strong converse statement for source coding with quantum side information, characterize randomness extraction against quantum side information and prove information theoretic security of quantum key distribution using an intuitive argument based on the entropic uncertainty relation.
\end{abstract}

% abstract - de

%\renewcommand{\abstractname}{Zusammenfassung}
%
%
%\begin{abstract}
%  \input{abstractde}
%\end{abstract}

%............................ 
\newpage
\startnumbering

\addtocontents{toc}{~\hfill\textbf{Page}\par}
\addtocontents{lof}{~\hfill\textbf{Page}\par}
\addtocontents{lot}{~\hfill\textbf{Page}\par}

\cleardoublepage
\phantomsection
\addcontentsline{toc}{chapter}{Contents}

\tableofcontents

%............................ 

%\chapter*{}
%\vspace{-3.5cm}

\cleardoublepage
\phantomsection
\addcontentsline{toc}{chapter}{Abbreviations and Notation}

\chapter*{Abbreviations and Notation}

\begin{table}[!ht]
  \begin{tabular}{c|p{0.70\textwidth}}
    \textbf{Abbreviation} & \textbf{Description} \\
    \hline
    \vspace{-9pt} & \\ % increases spacing after the horizontal line
    CPM & 
      Completely positive map ☼*{CPM}\\
    TP-CPM & 
      Trace-preserving completely positive map ☼*{TP-CPM}\\
    SDP & 
      Semi-definite program ☼*{SDP}\\
    POVM & 
      Positive operator-valued measurement ☼*{POVM}\\
    QKD & 
      Quantum key distribution ☼*{QKD} \\
    i.i.d. & 
      Independent and identically distributed ☼*{i.i.d.} \\
    UCR & 
      Uncertainty relation ☼*{UCR|see{uncertainty}} ☼*{uncertainty!relation} \\
    CQ & 
      Classical-quantum (e.g.\ CQ-states) ☼*{state!CQ} \\
    lhs.\ & 
      Left-hand side (of an equation) ☼*{lhs.}\\
    rhs.\ & 
      Right-hand side (of an equation) ☼*{rhs.}
  \end{tabular}
  \caption{List of Abbreviations.}
  \label{tb:abbr}
\end{table}

\begin{table}[!ht]
  \begin{tabular}{c|p{0.70\textwidth}}
    \textbf{Symbol} & \textbf{Description} \\
    \hline
    \vspace{-9pt} & \\ % increases spacing after the horizontal line
    $A, B', AD$ & 
      Typical physical systems and joint systems \\
    $X, \hat{Y}, Z $ & 
      Typical registers (random variables) \\
    $ℋ¬{AB}, ℋ¬X$ & 
      Hilbert spaces corresponding to a joint quantum and to a classical system \\
    $d¬A$ &
      Dimension of the system $A$, $d¬A = \dim{ℋ¬A}$ \\
    $\trace¬A \equiv \trace¬{ℋ¬A}$ & 
      Partial trace over subsystem $A$ \\
    $\onorm{ℋ}$ &
      Normalized quantum states on $ℋ$ \\
    $\osub{ℋ}$ &
       Sub-normalized states on $ℋ$ \\
    $ρ, \tauh, \omegaA^i, \sigmab¬{BC}$ & 
      Typical (possibly sub-normalized) quantum states \\
    $\ket{ψ}, φ¬{ABC}$ &
      Typical pure states, i.e.\ rank $1$ quantum states \\
    $π¬A, γ¬{AA'}$ &
      Completely mixed state on $A$ and maximally entangled state between $A$ and $A'$\\
    $D(ρ, τ)$ & Generalized trace distance between $ρ$ and $τ$ \\
    $F(ρ, τ)$ & Generalized fidelity between $ρ$ and $τ$ \\
    $P(ρ, τ)$ & Purified distance between $ρ$ and $τ$
  \end{tabular}
   ☼*{notation} 
  \caption{Notational Conventions for Quantum Mechanics.}
  \label{tb:notation}
\end{table}

\begin{table}[!ht]
  \begin{tabular}{c|p{0.70\textwidth}}
    \textbf{Symbol} & \textbf{Description} \\
    \hline
    \vspace{-9pt} & \\ % increases spacing after the horizontal line
    $\log \equiv \log_2$ & 
      Binary logarithm, i.e.\ logarithm to the basis $2$ \\
    $\ln$ & 
      Natural logarithm \\
    $ℝ, ℂ$ &
      Real and complex numbers \\
    $ℋ, ℋ'$ &
      Typical Hilbert spaces \\
    $\ket{φ}, \ket{ν}, \ket{θ_i}$ & 
      Typical elements of a Hilbert space, kets \\
    $\bra{ψ}, \bra{\phi}$ & 
      Typical functionals on a Hilbert space, bras \\
    $\trace, \trace¬{ℋ}$ & 
      Trace and partial trace over $ℋ$ \\
    $\olin{ℋ}, \olin{ℋ,ℋ'}$ & 
      Linear operators on $ℋ$ and from $ℋ$ to $ℋ'$\\
    $\oherm{ℋ}, \opos{ℋ}$ & 
      Hermitian and positive semi-definite operators on $ℋ$ \\
    $A ≥ B$ & 
      Equivalent to $A - B ∈ \opos{ℋ}$ \\
    $X†$ & 
      The adjoint operator of $X$ \\
    $Xᵀ$ & 
      The transpose operator of $X$, defined with regards to a 
      basis that needs to be specified \\
    $X\inv$ & 
      The generalized inverse operator of $X$ \\
    $\abs{X}$ & 
      Modulus, $\abs{X} = \sqrt{X†X}$ \\
    $\norm{X}[1]$ &
      Trace norm, $\norm{X}[1] = \trace\abs{X}$ \\
    $\norm{X}[∞]$ &
      Largest singular value of $X$, operator norm \\
    $ⅈ$ &
      Identity operator \\
    $E, X, L$&
      Typical linear operators on a Hilbert space \\
    $M, N$&
      Typical positive semi-definte operators \\
    $U, V, W$ & 
      Typical unitary operators or isometries \\
    $P, Π$ &
      Typical projectors \\
    $\cE, \cF, \cI$ &
      Typical trace-preserving completely positive maps \\
  \end{tabular}
   ☼*{notation} 
  \caption{Notational Conventions for Mathematical Expressions.}
  \label{tb:operators}
\end{table}

%............................ 

\chapter{Introduction}
\label{ch:intro}

This chapter starts with a rather philosophical introduction into quantum mechanics that does not 
assume any prior knowledge. Here, we attempt to explain and justify the information theoretic 
approach this thesis will take on the topic. This will lead into a short overview of some 
relevant aspects of information and quantum information theory. Then, we discuss the 
importance of non-asymptotic quantum information theory to characterize elementary information 
processing tasks. Finally, the introduction ends with a detailed outline of the thesis.

\section{Quantum Mechanics}
\label{se:intro/qm}

The laws of ☼[quantum mechanics]{quantum!mechanics} govern the behavior of
microscopic physical systems and are verified daily in experiments conducted in
physics laboratories worldwide. 
Here, we restrict our attention to non-relativistic 
quantum mechanics and take a static perspective on quantum theory, where the state
of a physical system is the central object of interest.

Non-relativistic quantum mechanics is the theoretical basis of 
today's semiconductor industry.
As we further miniaturize physical devices used for information processing,
the impact of quantum mechanics will become more and more relevant. Hence,
a thorough understanding of quantum physics will be pivotal for successfully 
engineering the next generation of information processing devices. 
In quantum cryptography~\cite{bb84,ekert91}, some of the more peculiar effects 
of quantum mechanics are already exploited today in order to ensure secrecy
of the communication between two distant parties.

While an understanding of the laws of quantum mechanics is thus necessary in order to fully 
comprehend the physical world surrounding us, these laws are nevertheless in stark contrast with
our intuition about the causal structure of the universe. In our everyday
experience, any observation about a physical system can be predicted perfectly
given a complete objective description of the state of said system. For example,
position, momentum and spin of a football are part of its objective state and
every ☼{observer} can verify their value independently, given appropriate 
measurement equipment. This perspective is deeply ingrained in our language: We
talk about the position, momentum and spin of a football, directly
associating our observations (the outcomes of position, momentum and spin
measurements) with the football and, thus, implying
that there exists an objective reality\,|\,the state of the football\,|\,beyond
our observations.

However, quantum mechanics does not allow for an objective description of
the state of a physical system that deterministically predicts all
observations about said system. For example, it is impossible to write
down a quantum mechanical state of an electron such that the
outcomes of both position and momentum measurements can be predicted with
arbitrary precision. This is known as the Heisenberg
☼[uncertainty principle]{uncertainty!principle}~\cite{heisenberg27}
and has mystified physicists since the early days of quantum mechanics. In what
sense can position and momentum then be considered real?
Is it even permissible to speak of the position and momentum of a quantum
mechanical object like an electron?

Furthermore\,|\,as if the loss of determinism was not enough to confuse our
human minds\,|\,quantum mechanics generally does not even allow for an
objective description of the state of a physical system that provides the
☼[probabilities]{probability} with which different observations about it are made. These
probabilities, even if the description of the state is complete within quantum
mechanics, are in general subjective to the observer. It is crucial to
note here that classical theory allows for describing subjective knowledge about
the state of a system, for example through conditional probability distributions. 
(We will encounter an example model of subjective classical information in the next section.) 
However, such a description cannot be considered
complete within classical theory, since, in principle, every observer may hold a copy of 
the full objective state of the system, e.g.\ position, momentum and spin 
of the football.

To convince yourself that quantum mechanics is incompatible with this notion of objectivity, consider two particles that are in an ☼[entangled]{entanglement} state. More specifically, this could be two electrons, $A$ and $B$, that are in a spin
singlet state. In this setting, quantum mechanics postulates that an observer
controlling electron $B$ can perfectly predict the outcomes of all
possible spin measurements on electron $A$.\footnote{The observer simply measures the spin of $B$ in the same direction as the spin of $A$ is measured to get a fully anti-correlated result.}
We call this observer an ☼[omniscient]{observer!omniscient} observer of the electron $A$.
At the same time, all other observers (i.e.\ observers who do not have access to $B$) are ☼[ignorant]{observer!ignorant} about $A$, which means
that they will see all possible measurement outcomes on $A$ with equal probability.
Quantum mechanics, in contrast to classical theory, does not allow to copy 
the quantum information encoded in the state of electron $B$ and share it with other observers.
On an intuitive level, this is often explained using the concepts of 
☼{no-cloning}~\cite{wootters82} or the ☼[monogamy of entanglement]{entanglement}.
These statements
can be formulated quantitatively and we will see in Chapter~\ref{ch:entropies}
that the fact that one observer is close to omniscient about a quantum system
implies that all other observers are necessarily close to ignorant about it.

This apparent deficiency of quantum mechanics led Einstein, Podolsky and 
Rosen to ask ``Can
quantum-mechanical description of physical reality be considered
complete?''~\cite{epr35}. The search for a theory, consistent with
quantum mechanics, that would assign objective descriptions\footnote{Such
objective descriptions of a system are usually called ☼[(local) hidden
variables]{hidden variable} in this context, since they are not accessible from
within quantum mechanics.} to physical systems that probabilistically predict
observations about said systems, was abandoned when Bell~\cite{bell64} as well as 
Kochen and Specker~\cite{kochen68} proved their pivotal
theorems. They show that such objective descriptions
cannot fully predict the probabilities of certain observations that
are consistent with quantum mechanics and can be verified
experimentally.\footnote{In fact, it was recently shown that if such objective
descriptions have any predictive power exceeding quantum
mechanics, they can be falsified experimentally under minimal assumptions.~\cite{colbeck08,colbeck10}}

Furthermore, the existence of preferred observers renders ☼[quantum cryptography]{quantum!cryptography} possible. In quantum key distribution, we consider two players, traditionally called
Alice and Bob, who want to share a secret over a public channel.
Once the two players can establish that Bob is almost ☼[omniscient]{observer!omniscient} 
about a quantum system Alice controls, a shared secret can be produced by an arbitrary
measurement of Alice's system: while Bob can predict Alice's observed
measurement result, by the laws of quantum mechanics, any eavesdropper is
guaranteed to have almost no knowledge about it.

%So, how can we successfully investigate the quantum mechanical world if our
%intuition cannot be trusted and our language is evidently not suitable for the
%task? In theoretical physics, a promising remedy is to apply the formal 
%language of mathematics to the problem.
%The fundamental laws of quantum mechanics can be phrased as
%postulates in such a language, and the logical structure of the language
%then allows us to infer non-trivial implications from these postulates. These
%implications, hopefully, will in turn enhance our understanding of the
%physical world described by quantum mechanics.
%In this thesis, we will mostly use 
%concepts from linear algebra and information theory.

\section{Information Theory}
\label{se:intro/it}

Luckily, some of the counter-intuitive effects of quantum
mechanics\,|\,in particular, the appearance of non-determinism and subjective
knowledge as explained above\,|\,have also been considered in a completely
different context, in probability and ☼{information theory}.
The latter was founded by Nyquist~\cite{nyquist28} and
Shannon~\cite{shannon48} in the early days of telegraph communication in
order to investigate information sources and the capacities of channels to
transmit digital information. 

Following Shannon, 
consider a source that outputs English text. Clearly, the
different letters in the Latin alphabet will not be produced with equal
probability by such a source. This is used, for example, in the Morse alphabet,
where the most frequent letters are given the simplest codes.\footnote{The most
frequent letters `e' and `t' are given the codes `$\,\cdot\,$' and `$-$',
respectively, whereas all other letters have codes with at least two
symbols.} Going one step further, note that certain combinations of letters are
more likely than others. For example, the letter `q' is almost always followed
by a `u'. Hence, given that the last letter this source produced was a `q', the
information content, or ☼{uncertainty}, of the next letter is very low. 

This implies that the probability with which letters occur is
relevant in order to characterize the information content of the source\,|\,a
source that produces every letter with equal probability produces more
information than a source that is biased towards certain letters.
To quantify these considerations, Shannon used ☼[entropies]{entropy}.
Very generally, he considered ☼[events]{event} that are known to
occur with a certain probability $p$ and assigned to 
them the value $-\log p$, called
☼{surprisal}, which measures how surprising the occurrence of the event is. 
(The logarithm is taken to the binary basis throughout this thesis.) If an
event occurs with certainty its surprisal is zero and as the event gets less
likely its surprisal can grow arbitrarily large. In the above example, the
surprisal of the letters `e' and `t' is smaller than the surprisal of less common
letters. Moreover, the surprisal of the letter `u' given that the last
letter was `q' is close to zero. 

The ☼[Shannon entropy]{entropy!Shannon} of a
source is the average surprisal of the events it produces and is introduced 
more formally in the following.
For the purpose of this thesis, a ☼[probability distribution]{probability} on a ☼{random variable} $X$ that takes values from a discrete 
set $\cX$ is a function $P¬X$ from $\cX$ to non-negative real numbers with the property 
that $∑_{x ∈ \cX} P¬X(x) = 1$.
The Shannon entropy of a random variable $X$ with distribution $P¬X$ is then defined as
\begin{align}
  \hvn{X}[\tinyP] := - ∑_{x ∈ \cX} P¬X(x) \log P¬X(x) ¶[shannon].
\end{align}
Here, $-\log P¬X(x)$ is the surprisal of the event ``$X = x$'' and has the properties
described above. On one hand, Shannon~\cite{shannon48} and later Rényi~\cite{renyi61} derived the mathematical 
form of the entropy, Eq.~§[shannon], from intuitive axioms that measures of the average 
surprisal should satisfy. On the other hand, 
the Shannon entropy is intimately related to physics and was inspired directly by
the Gibbs entropy of thermodynamics~\cite{gibbs1876, boltzmann1872}.

Next, we consider the case where we have ☼{side information} about $X$, modeled as another 
random variable $Y$.
The two random variables, $X$ and $Y$, have a joint probability distribution, $P¬{XY}(x, y)$. 
The ☼{marginal} probability distributions of the individual random variables 
$X$ and $Y$ are thus given 
as $P¬X(x) = ∑_y P¬{XY}(x,y)$ and $P¬Y(y) = ∑_x P¬{XY}(x,y)$, respectively. 
We employ the ☼[conditional Shannon entropy]{entropy!Shannon} of $X$
conditioned on $Y$, which is defined as $\hvn{X|Y}[\tinyP] := \hvn{XY}[\tinyP] - \hvn{Y}[\tinyP]$.
This definition, aside from its operational interpretation that we will discuss
below, has very natural properties that we would expect from a conditional entropy, i.e.\
a measure about the average ☼{surprisal} or uncertainty of $X$ given 
☼{side information} $Y$. For example,
we have $\hvn{X|Y}[\tinyP] ≤ \hvn{X}[\tinyP]$, namely, the uncertainty about $X$ increases if one ignores the side information. Equality holds if the side information is independent of $X$, i.e.\ if $P¬{XY}(x,y) = P¬X(x) P_Y(y)$.\footnote{More generally, and in accordance with our intuition, any function applied to the random variable 
$Y$ can at most increase the uncertainty about $X$, i.e.\ $\hvn{X|Y}[\tinyP] ≤ 
\hvn{X|Z}[\tinyP]$ if $Z =  f(Y)$. The entropy $\hvn{X|Z}[\tinyP]$ is evaluated for 
the probability distribution induced 
by $f$, i.e.\ the distribution 
$P¬{XZ}(x, z) = ∑_{y: f(y) = z} P¬{XY}(x, y)$.}
Relations of this type are called
☼{data processing} inequalities and will be discussed in 
Chapter~\ref{ch:smooth}.

Conditional entropies can be employed to model subjective classical information.
To see this, we return to Shannon's text sources and treat them quantitatively. 
Let us first consider a source that outputs
all $26$ letters of the English alphabet with equal probability. Clearly, the entropy 
of the output $U$ of this source is $\hvn{U} = \log 26 \approx 4.7$.
A source that outputs letters of English text, $X$, has lower entropy\,|\,$\hvn{X} 
\approx 4.14$~\cite{shannon51}\,|\,due to the non-uniform distribution of the different 
letters. Moreover, if we consider two consecutive letters of
English text, denoted $X_1$ and $X_2$, we find that the conditional 
entropy\,|\,$\hvn{X_2|X_1} \approx 3.56$~\cite{shannon51}\,|\,is even smaller 
due to correlations between the
probabilities of adjacent letters.

For another example, consider an unbiased die that is thrown secretly in a cup. The state of the die before the result is revealed is modeled as a random variable, $X$, on the set $\{ \epsdice{1}, 
\epsdice{2}, \epsdice{3}, \epsdice{4}, \epsdice{5}, \epsdice{6} \}$ of faces with uniform probability distribution $P¬X(\cdot) = \frac{1}{6}$.
The Shannon entropy or uncertainty about the outcome is simply $\hvn{X}[\tinyP] = \log 6 \approx 2.58$.
However, a cheating ☼{observer} may peek under the cup and take note of a face on the side of the die. This face, let us denote it by $Y$, is a random variable itself and we can describe the correlations between the result of the throw and $Y$ using the joint probability distribution $P¬{XY}$. For example, if the cheater sees a `$\epsdice{1}$', he can rule out the throws `$\epsdice{1}$' and `$\epsdice{6}$' whereas all other throws are equally likely. This results in the probabilities $P¬{XY}(\epsdice{1}, \epsdice{1}) = P¬{XY}(\epsdice{6}, \epsdice{1}) = 0$ and $P¬{XY}(\epsdice{2}, \epsdice{1}) = P¬{XY}(\epsdice{3}, \epsdice{1}) = P¬{XY}(\epsdice{4}, \epsdice{1}) = P¬{XY}(\epsdice{5}, \epsdice{1}) = \frac{1}{6} \cdot \frac{1}{4} = \frac{1}{24}$.  The uncertainty the cheating observer has about the outcome is thus reduced
to $\hvn{X|Y}[\tinyP] = 2$. Another observer may have seen the result already and stored it in a random variable $Z$; hence, $\hvn{X|Z}[\tinyP] = 0$.
The subjective uncertainty an observer has about the outcome of the throw thus critically depends on the available ☼{side information}, $Y$ or $Z$.%
%Note that this situation resembles the quantum spin measurement we discussed in the previous 
%section. However, while multiple observers may own a device as powerful as $Z$, the laws of 
%quantum mechanics preclude the existence of more than one perfect observer about a quantum 
%system.

The Shannon entropy has found a vast number of applications in information theory. Shannon's source coding theorem~\cite{shannon48} concerns itself with the question of how much we can compress the output of a source. If a source produces a long stream of ☼[independent and identically-distributed]{i.i.d.} (i.i.d.) symbols $X$, it states that any attempt to compress the output to less than $\hvn{X}[\tinyP]$ logical bits per symbol will almost certainly lead to information loss. Here, $P¬X$ is the probability distribution of the source symbols.
On the other hand, it is possible to compress
the stream to arbitrarily close to $\hvn{X}[\tinyP]$ bits per symbol with insignificant probability of information loss. 
More generally, if some additional side information $Y$ about each $X$ is available, the quantity $\hvn{X}[\tinyP]$ can be replaced by $\hvn{X|Y}[\tinyP]$ in the above statements~\cite{slepian73}, where $P¬{XY}$ is the joint probability distribution of source and side information. The Shannon entropy thus quantifies the amount of memory (in bits) needed 
to store the output of a source reliably. 
Memory can be considered a ☼{resource} in source coding and, more generally, the Shannon entropy
is often used to analyze the resource usage of a task in the limit of many independent and identical repetitions. In the following, we call this the 
☼{i.i.d.\ limit}

As a further important example, the capacity of a channel to transmit information (in the i.i.d.\ limit of many independent uses of a memoryless channel) can be expressed in terms of Shannon 
entropies~\cite{shannon48}.

\section{Quantum Information Theory}
\label{se:intro/qit}

An observation about a physical system, in the sense described in Section~\ref{se:intro/qm}, constitutes an ☼{event}. 
Moreover, a ☼{measurement}, i.e.\ a complete set of
mutually exclusive observations about a physical system, can be seen as an
information source. It is thus very natural to try to apply the entropy
formalism of the previous section to quantum measurements. In addition to that, many information theoretic tasks 
(we will discuss the example of source compression in more detail) can be generalized to
the quantum setting and thus the question arises whether the resource usage of these
tasks can be characterized using an analogue of the Shannon entropy as well.
We will see that this is possible indeed.

In the quantum formalism, to be comprehensively introduced in Chapter~\ref{ch:pre}, a random variable is modeled as a (classical) ☼{register}, $X$ and the probability distribution, $P¬X$, is represented as a diagonal matrix with the probabilities $P¬X(x)$ as eigenvalues. More generally, the state of a quantum system, $A$, is modeled as a positive semi-definite operator with unit trace, $ρ_A$, called the density operator or ☼{quantum state}. Thus, a random variable or register is a special case of a ☼[quantum system]{system} and the quantum formalism encompasses classical probability theory. Furthermore, a measurement is simply a map from a quantum system to a register.

Schumacher and Wootters~\cite{schumacher95} introduced the term ☼{qubit} to denote the smallest unit of quantum information, which can be represented as a two-by-two matrix. The spin degree of freedom of an electron, as we have seen before, constitutes a physical example of such a qubit system.
Holevo's bound~\cite{holevo73} implies that a qubit can only store one bit of 
classical information. This indicates that the power of quantum information processing over classical information processing lies in the possible correlations between multiple quantum systems and not necessarily in individual quantum systems themselves.

To see a first application of entropies in quantum information theory, we consider a simple ☼{uncertainty relation}. Let $A$ be qubit, e.g.\ an electron spin in a state $ρ¬A$, and let $X$ and $Y$ be registers containing the outcomes of two ☼[complementary measurements]{measurement!complementary} on $A$, for example spin measurements in the $x$ or $y$ direction. Then, an entropic formulation of Heisenberg's uncertainty principle tells us that $\hvn{X}[ρ] + \hvn{Y}[ρ] ≥ 1$.~\cite{maassen88}.
In other words, independently of how the initial state $ρ¬A$ is prepared (as long as there is no quantum side information present), there will be uncertainty about at least one of the outcomes.
Such uncertainty relations, expressed in terms of entropies, are the topic of 
Chapter~\ref{ch:uc}.

The ☼[von Neumann entropy]{entropy!von Neumann}~\cite{vonneumann32} of a quantum system
is given as
\begin{align}
  \hvn{A}[ρ] := - \log \tr{ρ¬A \log ρ¬A} ¶,
\end{align}
where $\trace$ denotes the trace. For a register this expression reduces to the
☼[Shannon entropy]{entropy!Shannon}.
In order to investigate the effect of quantum correlations, we introduce the conditional von Neumann entropy. For a bipartite quantum state $ρ¬{AB}$, this is given as 
$\hvn{A|B}[ρ] := \hvn{AB}[ρ] - \hvn{B}[ρ]$. If the system $A$ is a classical register, this entropy is nonnegative and can safely be interpreted as a measure of uncertainty given quantum side information.

To see this, we reconsider the example of Section~\ref{se:intro/qm}, where two electrons, $A$ and $B$, are in a spin singlet state.
As we have noted before, quantum mechanics predicts that a preferred observer controlling the electron $B$ can predict any outcome of a spin measurement on
the $A$ electron with certainty. Thus, the ☼{surprisal} of any observation about electron $A$
is zero for this observer. More formally, we denote by $X$ the random variable that
stores the outcome of an arbitrary spin measurement on the electron $A$.
Then, the conditional ☼[von Neumann entropy]{entropy!von Neumann} evaluates to 
$\hvn{X|B}[ρ] = 0$. Here, the entropy is evaluated for the post-measurement state $ρ¬{XB}$ that results from measuring $X$ on the joint ☼{quantum state} $ρ¬{AB}$.
On the other hand, all other observers will see a uniformly random
measurement outcome and thus have maximum surprisal. We denote by $C$ such an 
☼[ignorant]{observer!ignorant} observer. And indeed, for any tripartite quantum state 
$ρ¬{ABC}$ that is compatible with the marginal state $ρ¬{AB}$, we find $\hvn{X|C}[ρ] = 1$.
Again, the entropy is evaluated for the post-measurement state $ρ¬{XC}$ that results from measuring $X$ on the joint state $ρ¬{AC}$.

However, the conditional von Neumann entropy can also be evaluated for
the quantum state $ρ¬{AB}$ of the electron pair before measurement. This entropy evaluates to
$\hvn{A|B}[ρ] = -1$ in the example above. Generally, the conditional entropy can be negative 
in the presence of ☼[entanglement]{entanglement} and its interpretation as a measure of uncertainty has been controversial. (See, e.g.,~\cite{horodecki05,berta10,delrio11} for recent
work elucidating the issue.)

Generalizing the source coding theorem with side information of the previous section, we ask how much we can compress a long stream of ☼{i.i.d.} random variables $X$ such that an observer with
quantum side information $B$ about each $X$ can reconstruct the original 
from the compressed stream. Such side information can be modeled in a joint quantum 
state $ρ¬{XB}$. The answer was given by Devetak and Winter~\cite{devetak03}, 
who propose a quantum generalization of the Slepian-Wolf theorem. 
They show that, consistent with the classical result, the stream can be compressed to 
$\hvn{X|B}[ρ]$ bits per symbol in the ☼{i.i.d.\ limit} of long streams.

In another generalization of source coding, we ask how much we can
compress a long stream of independent and identical quantum systems $A$ that are in the state 
$ρ¬A$. The answer, that such a string can be compressed to $\hvn{A}[ρ]$ qubits, was given by 
Schumacher~\cite{schumacher95} in his pioneering work on quantum information theory.

We can go one step further into the quantum world and consider 
☼{state merging}~\cite{horodecki06}. Here, we start with a
joint quantum state $ρ¬{AB}$ shared between two parties, $A$ and $B$.
The task is to recreate the state (including its correlations with
the environment) at $B$ using free classical communication between $A$ and $B$ and 
☼{entanglement} between $A$ and $B$ as a ☼{resource}. The amount of this resource that 
needs to be utilized (in the ☼{i.i.d.\ limit} of many copies of the state) is then quantified 
by the conditional von Neumann entropy, $\hvn{A|B}[ρ]$. The conditional entropy can be consistently interpreted as the amount of entanglement needed to complete the task. In particular, if $\hvn{A|B}[ρ]$ is negative (which may happen only in the presence of entanglement in $ρ¬{AB}$), it is possible to
extract $-\hvn{A|B}[ρ]$ units of entanglement from the state while recreating it at $B$.

%.........................

\section{Non-Asymptotic Information Theory}

So far, we have considered tasks in the ☼{i.i.d.\ limit} of many repetitions and found that
the resource usage is characterized by expressions involving the Shannon and von 
Neumann entropies. This leads to the question of what happens to the resource usage
when we consider a finite number of trials and allow a small probability of failure.
In many contexts, in particular in cryptography, we are even interested in the 
amount of resource needed to perform a task just once, with high probability of success. 
We call this the ☼{one-shot} setting in the following.
These questions, which we are going to tackle in the quantum setting, have also been the 
topic of recent research in classical information theory (see, e.g.,~\cite{polyanskiy10,polyanskiythesis10,wang09}).

It turns out that the Shannon entropy and von Neumann entropies, which have been invaluable tools in the previous sections, are insufficient to characterize the 
required ☼[resources]{resource} in the one-shot setting. 
Nevertheless, there are other entropies that can be used instead to quantify uncertainty, 
as we will see in the following.

We will now introduce some aspects of non-asymptotic information theory on the example of ☼{source compression}. For this purpose, let us consider a source $X$ with probability distribution 
$P¬X$. The number of bits of memory needed to store the output of this source so that it can be recovered with certainty is given 
by $\lceil \hh{0}{X}[\tinyP] \rceil$, where $\hh{0}{X}[\tinyP]$ denotes the ☼[Hartley entropy]{entropy!Hartley}~\cite{hartley28} of the distribution $P¬X$, defined as
\begin{align}
  \hh{0}{X}[\tinyP] := \log \abs<b>{ \{ x : P¬X(x) > 0 \} } ¶,
\end{align}
The Hartely entropy corresponds to the ☼[Rényi entropy]{entropy!Rényi} of order 
$0$~\cite{renyi61} and simply measures the size of the support of $X$. As an example, we consider again a source that outputs characters of the English alphabet. If we want to
store a single character produced by this source such that it can be recovered with 
certainty, we clearly need $\lceil \log 26 \rceil = 5$ bits of memory as a ☼{resource}. This first result is rather unsatisfactory since the resource usage does not depend on the actual probability distribution of the letters but only on the size of the alphabet.

More interestingly, we may ask how much memory we need to store the output 
if we allow a small probability of failure, $ε$.
One way to tackle such problems is by investigating encoders that assign code words (i.e.\ binary strings) of a fixed length $m$ (in bits) to the events a source produces. These code words are then stored in $m$ bits of memory and a decoder is later used to compute an estimate 
of $X$ from that memory.
For a source $X$ with probability distribution $P¬X$, we are thus interested in the minimum code length, $m^{ε}(X)¬P$, for which there exists an encoder and a decoder that achieve a probability of failure not exceeding $ε$.
Gallager~\cite{gallager79} showed that a random assignment of source events to code words on average leads to a failure probability of at most $ε$ if the code length is 
sufficiently long. 
His results imply that the minimal code length satisfies
\begin{align}
  m^{ε}(X)¬P ≤ \hmax{X}[\tinyP] + \log \frac{1}{ε} + 1 ¶[Gallager].
\end{align}
Here, $\hmax{X}[\tinyP]$ denotes the ☼[max-entropy]{entropy!max-entropy}, which corresponds to 
the ☼[Rényi entropy]{entropy!Rényi} of order $\frac{1}{2}$ and is defined as
\begin{align}
  \hmax{X}[\tinyP] := \log \Big( \sum_{x ∈ \cX} √{P¬X(x)} \Big)^2 ¶.
\end{align}
Upper bounds of the type~§[Gallager] are called ☼[direct bounds]{direct bound} and show that there exist protocols using a certain amount of resource that do not exceed a fixed probability of failure. 

The above analysis can be further refined by ☼{smoothing} the max-entropy. For this purpose, let us consider probability distributions $Q¬X$ that are close to $P¬X$ and have max-entropy $\hmax{X}[\tinyQ]$ smaller than $\hmax{X}[\tinyP]$. In principle, we could design the encoding and decoding scheme for a source with distribution $Q¬X$ instead of $P¬X$. Clearly, this reduces the upper bound in~§[Gallager].
In fact, the upper bound can be expressed in terms of a ☼[smooth max-entropy]{entropy!smooth},
$\hmax[ε_1]{X}[\tinyP] := \inf_{Q \approx P} \hmax{X}[\tinyQ]$, which minimizes the max-entropy over probability distributions $Q¬X$ that are within ☼{statistical distance} $ε_1$ 
of $P¬X$.%
\footnote{Note that we use the statistical distance as a metric here for convenience of exposition. In Chapter~\ref{ch:pd}, we will argue that the ☼{purified distance} should be used instead to define the smooth entropies in the quantum setting. All smooth entropies used in this thesis, except in this introductory exposition, are thus based on the purified distance as a metric.}
More precisely, the statistical distance is defined as $D(P¬X, Q_X) := \frac{1}{2} ∑_x \abs<b>{P¬X(x) 
- Q¬X(x)}$ and we require that $D(P¬X, Q_X)≤ ε_1$.

However, if this scheme is applied to the original source, we incur 
an additional error that grows with the statistical distance between $P¬X$ and $Q¬X$.
The total probability of failure, $ε ≤ ε_1 + ε_2$, is thus split into two contributions: the statistical distance of the distributions, $ε_1$, and the contribution from the 
Gallager bound, which we denote $ε_2$. Hence, we get a family of direct bounds on the minimal code length:
\begin{align}
  m^{ε}(X)¬{P} ≤ \hmax[ε_1]{X}[\tinyP] + \log \frac{1}{ε_2} + 1, \quad \forall\, ε_1, ε_2 \tn{with} ε_1 + ε_2 = ε  ¶[renesdirect].
\end{align}

We also consider ☼[converse bounds]{converse bound} that give a lower bound on the resources
required to achieve a certain probability of success. In fact, it can be shown that
\begin{align}
  m^{ε}(X)¬{P} ≥ \hmax*{X}[\tinyP] ¶[renesconverse].
\end{align}
Hence, both the lower and upper bound on the quantity $m^{ε}(X)¬P$ can be expressed in terms of a
smooth max-entropy. We thus say that the required memory for ☼{one-shot} source compression is
☼[characterized]{characterize} by the smooth max-entropy.
(Note that a more detailed analysis of one-shot source compression with quantum side information can be found in Chapter~\ref{ch:app} and~\cite{renesrenner10}.)

To see why the Shannon entropy does not suffice to characterize the ☼{one-shot} version of 
source compression, consider a source that produces the symbol `a' with probability
$\frac{1}{2}$ and $k$ other symbols each with probability $\frac{1}{2 k}$. On the
one hand, for any fixed failure probability $\eps \ll 1$, the converse bound 
in~§[renesconverse] evaluates to approximately $\log k$ for large enough $k$. 
This implies that we cannot compress this source much beyond the ☼[Hartley entropy]{entropy!Hartley}. On the other hand, the Shannon entropy of this distribution is $\frac{1}{2}(\log k + 2)$ and underestimates the required resources by a factor of two.
 
In this thesis, we will mainly encounter two entropic quantities that are defined for quantum states in Chapters~\ref{ch:entropies} and~\ref{ch:smooth}. 
Surprisingly, it turns out that these two entropies suffice to characterize the resource usage of many information theoretic tasks in the ☼{one-shot} setting in a manner similar to source compression. The first quantity, the ☼[smooth min-entropy]{entropy!smooth}, $H_{\min}^{ε}$, is a generalization of the min-entropy or Rényi entropy of order $∞$, which evaluates the 
minimum ☼{surprisal} of a random variable $X$. Namely,
\begin{align}
  \hmin{X}[\tinyP] := \min_{x ∈ \cX} - \log P¬X(x) ¶,
\end{align}
One of the major applications of the smooth min-entropy is in ☼{randomness extraction}, where it characterizes the amount of uniform randomness that can be extracted from a biased source~\cite{impagliazzo89}.
The second quantity, the ☼[smooth max-entropy]{entropy!smooth}, $H_{\max}^{ε}$, is used to characterize source compression, as we have seen above.

%More specifically, when we say that a quantity $Q^{ε}(X)$ is characterized by a smooth entropy, 
%e.g.\ $H_{\min}^{ε}(X)$, we mean that there exists upper and lower bounds on $Q^{ε}$ of the type 
%of Eq.~§[renesbound], i.e.\
%$H_{\min}^{f(ε)}(X) \pm \log g(ε)$, where $g(ε)$ is an arbitrary polynomial in $ε$, and $f(ε) → 0$ 
%when $ε → 0$.

The smooth min- and max-entropies were first generalized to the 
quantum settings by Renner and König~\cite{rennerkoenig05,renner05}. The smooth min-entropy we use in this thesis is a refined version of the smooth min-entropy proposed in~\cite{renner05}, whereas the smooth max-entropy is based on later work by Renner, König and Schaffner~\cite{koenig08}.
The significance of these extensions to the quantum setting stems from their operational meaning. 
For example, the quantum
generalization of the min-entropy, $\hmin*{X|B}[ρ]$, characterizes randomness extraction against
quantum side information $B$, i.e.\ it characterizes the amount of uniform randomness,
independent of the side information $B$, that can be extracted from $X$~\cite{renner05,tomamichel10}. The quantum generalization of the max-entropy, 
$\hmax*{X|B}[ρ]$, characterizes source compression with quantum side 
information $B$ in the ☼{one-shot} setting~\cite{renesrenner10}. Moreover, the amount of entanglement needed in one-shot ☼{state merging} is characterized by $\hmax*{A|B}[ρ]$~\cite{berta08}. We will discuss source compression with quantum side information as well as randomness extraction against quantum side information in Chapter~\ref{ch:app}.

In addition to their operational meaning, the smooth entropies exhibit many useful properties, including ☼{data processing} inequalities and ☼[chain rules]{chain rule}.
Furthermore, the smooth min- and max-entropies converge to the von Neumann entropy in the ☼{i.i.d.\ limit}. For any $0 < ε < 1$,
\begin{align}
  \lim_{n → ∞} \frac{1}{n} \hmin*{X^n|B^n}[ρ] = \lim_{n → ∞} \frac{1}{n} \hmax*{X^n|B^n}[ρ] = \hvn{X|B} ¶.
\end{align}
We call this the entropic form of the ☼[asymptotic equipartition property]{AEP} and 
it is the topic of Chapter~\ref{ch:aep}.

This means that if the resource usage is characterized by a smooth entropy in the ☼{one-shot} setting, the resource usage in the ☼{i.i.d.\ limit} is given by the von Neumann entropy.
In fact, a simple analysis shows the upper and lower bounds 
in~§[renesdirect] and~§[renesconverse] on the code rate, $r = \frac{1}{n} m^{ε}(X^n)$,
converge to the von Neumann limit when we let $n$ go to infinity for any fixed $0 < ε < 1$. This
also shows the advantage of the smoothed direct bound~§[renesdirect] over the Gallager bound~§[Gallager], for
which such an asymptotic convergence can only be shown using additional techniques.

Moreover, the smooth entropies satisfy a ☼{duality} relation. For any tripartite quantum state $ρ¬{ABC}$ and any $0 ≤ ε < 1$, we find~\cite{koenig08,tomamichel09} 
\begin{align}
  \hmin*{A|B}[ρ] ≥ -\hmax*{A|C}[ρ] ¶[introdual]
\end{align}
and equality holds if the joint state $ρ¬{ABC}$ is pure.\footnote{Pure states offer the most complete description of a joint quantum system.}
This relation provides a connection between the min- and max-entropy, and thus the tasks characterized by them, that does not exist in classical information theory. 

It also allows us to close the circle to the discussion of quantum mechanics at the beginning of this chapter. For this purpose, let $B$ and $C$ be two observers of a quantum system $A$. Then, the min-entropy, $\hmin{A|B}$, can be viewed (cf.~Chapter~\ref{ch:entropies}) as the distance of $B$ to an ☼[omniscient observer]{observer!omniscient} of the quantum system $A$. Furthermore, the max-entropy, $-\hmax{A|C}$, can be viewed as the distance of $C$ to an 
☼[ignorant observer]{observer!ignorant} of $A$. The duality relation, Eq.~§[introdual], thus states that if $B$ is close to an omniscient observer of the quantum system $A$, then $C$ is at least as close to an ignorant observer of $A$. This can be seen as
a manifestation of the subjective knowledge of observers quantum mechanics imposes on the world.

%.........................

\section{Goal and Outline}
\label{se:intro/outline}

The goal of this thesis is to consolidate the smooth entropy framework for non-asymptotic information theory and to introduce important additions to the framework, including the entropic
asymptotic equipartition property and various uncertainty relations.
This work should be a useful reference for researchers interested in the smooth entropy framework for non-asymptotic quantum information theory. The focus of this work is thus mainly on the properties of the smooth entropies and not on their applications. 

The remainder of this thesis is organized as follows. 

In Chapter~\ref{ch:pre}, the notation and mathematical foundations of Hilbert space quantum 
mechanics are introduced. Relevant results of linear algebra are summarized in Section~\ref
{se:pre/lin} and the axioms of quantum mechanics are introduced in~\ref{se:pre/qm}. Moreover, 
Section~\ref{se:pre/math} covers operator convex functions and semi-definite programming, 
completing a mathematical toolkit that will be used extensively in this thesis.

In Chapter~\ref{ch:pd}, we introduce a novel measure of distance between (potentially incomplete) 
quantum states, the purified distance. In particular, we explore its properties and argue why they 
are relevant for the definition of the smooth quantum entropies.

In Chapter~\ref{ch:entropies}, we formally define the min- and max-entropies for quantum states, give
a collection of different expressions for the entropies and explore 
some of their properties. 
In particular, we explore the relation between the min- and the max-entropy and 
their relation to classical Rényi entropies. We also show
that both entropies are continuous functions of the quantum state.

In Chapter~\ref{ch:smooth}, we introduce smoothing and define the smooth min- and max-entropy. Various properties of the smoothing operation are discussed.
We establish relations between the smooth min- and max-entropy and 
investigate data processing inequalities. The special case where one or more systems
are classical is considered in detail. We also give a list of chain rules that have recently
been shown for the smooth min- and max-entropies.

In Chapter~\ref{ch:aep}, we show that the smooth entropies converge to the von Neumann entropy
when we consider a sequence of independent and identically distributed quantum systems. This result is an entropic version of the asymptotic equipartition property and confirms the
fundamental role of the von Neumann entropy in quantum information theory. In addition to the asymptotic result, the chapter also provides bounds for finite block lengths.

In Chapter~\ref{ch:uc}, we introduce a variety of entropic uncertainty relations that give bounds, analogous to Heisenberg's uncertainty principle, on the uncertainty of the outcomes
of two incompatible measurements on a quantum system. These uncertainties are expressed in
terms of smooth entropies as well as von Neumann entropies, which makes them directly 
applicable to problems in quantum cryptography.

In Chapter~\ref{ch:app}, we combine results from Chapters~\ref{ch:pd}-\ref{ch:aep} to investigate two information theoretic tasks, source compression with quantum side information and randomness extraction against quantum side information. In particular, we show a strong converse statement for source compression which implies that any attempt to compress to less than the Shannon limit will fail with high probability.
These results can then be used in another example application, quantum key distribution. There, 
we show how the entropic uncertainty relations of Chapter~\ref{ch:uc} can be employed in order to prove security of the original {BB84} quantum key distribution protocol in a concise and 
intuitive way.

The thesis ends with Chapter~\ref{ch:outlook} in a short conclusion and outlook. 
Some open problems are discussed.

\chapter{Preliminaries}
\label{ch:pre}

The preliminaries consist of three sections. The first two sections, which
cover the mathematical foundations of Hilbert space quantum mechanics in finite
dimensions, can be skipped entirely by readers already familiar with the concepts of 
linear algebra used in quantum information theory. The notation used throughout this 
thesis is introduced in these sections and is summarized in
Tables~\ref{tb:notation} and~\ref{tb:operators}.
The last section then introduces some mathematical tools that
are needed to derive the results of the following chapters, including
semi-definite programming and operator monotone functions.

\section{Linear Algebra on Hilbert Spaces}
\label{se:pre/lin}

☼*{linear algebra} ☼*{algebra|see{linear algebra}}

This section is based on many introductory text books, mostly on the two books
\emph{Matrix Analysis} and \emph{Positive Definite Matrices} by
Bhatia~\cite{bhatia97, bhatia07}. Moreover, John Watrous's lecture
notes~\cite{watrous-ln08} as well as Nielsen and Chuang's \emph{Quantum Computation
and Quantum Information}~\cite{nielsen00} were an invaluable resource.

\subsection{Hilbert Spaces}

\subsubsection{Bras and Kets}

Let $ℋ$ be a finite-dimensional ☼{vector space} over the complex numbers
equipped with an ☼{inner product} $\ip{·,·} : ℋ \times ℋ → ℂ$.
In the following, we will call $ℋ$ a ☼{Hilbert space}. The dual space of
$ℋ$ is the Hilbert space of (linear) functionals from $ℋ$ to $ℂ$. We use Dirac's
bra-ket notation to denote elements of $ℋ$ and its dual space, $ℋ^*$. Every
ket, $\ket{ψ} ∈ ℋ$, is in one-to-one correspondence with its dual
bra, $\bra{ψ} ∈ ℋ^*$. The bra is defined in terms of the ket via the Hilbert 
space's inner product as
\begin{align}
  \bra{ψ} : \ket{φ} ↦ \braket{ψ|φ} := \ip<b>{\ket{ψ}, \ket{φ}} ¶[pre/lin/bra]\,.
\end{align}

The right-hand side of the above equation gives a natural expression for the
inner product in terms of a bra-ket product, and we will use this notation
frequently. The bra-ket product $\braket{·|·}$ in~§ has the following
properties, which follow
directly from properties of the underlying inner product.
\begin{itemize}
  ☛ Conjugate symmetry: $\braket{ψ|φ} = \overline{\braket{φ|ψ}}$.
  ☛ Sesquilinearity: Let $\ket{θ_1}, \ket{θ_2} ∈ ℋ$ and let $α_1, α_2 ∈ ℂ$ be such
    that $\ket{φ} = α_1 \ket{θ_1} + α_2 \ket{θ_2}$. Then, $\braket{ψ|φ} = α_1
    \braket{ψ|θ_1} + α_2 \braket{ψ|θ_2}$ and, due to conjugate symmetry,
    $\braket{φ|ψ} = \bar{α}_1 \braket{θ_1|ψ} + \bar{α}_2
    \braket{θ_2|ψ}$.
  ☛ Positive-definiteness: $\braket{ψ|ψ} ≥ 0$ with equality if and
    only if $\ket{ψ} = \mathbf{0}$, where $\mathbf{0}$ is the zero
    element of the vector space.
\end{itemize}

\subsubsection{Norms and Metrics}

The usefulness of Hilbert spaces in physics stems in part from the fact that
they have natural measures of angle and distance through the inner product. In
fact, the inner product of a Hilbert space $ℋ$ induces a ☼{norm}, $\norm{·}
: ℋ → ℝ$, which in
turn induces a ☼{metric}, $D(·,·) : ℋ \times ℋ → ℝ$. They are
given by the expressions
\begin{align}
  \norm<b>{\ket{ψ}} := √{\braket{ψ|ψ}} \tn*{and}
  D \big( \ket{ψ},\ket{φ} \big) := \norm<b>{\ket{ψ} - \ket{φ}} ¶.
\end{align}

More generally\,|\,and for later reference\,|\,a metric on an arbitrary set is 
defined as follows.
\begin{definition}[Metric]
  \label{df:metric}
  Let $\cX$ be a set and let $a, b, c ∈ \cX$. Then, the
  functional $D: \cX \times \cX → ℝ$ is a metric on $\cX$ if it satisfies
  \begin{enum}
    ☛ Positive-definiteness: $D(a, b) ≥ 0$ with equality
      if and only if $a = b$.
    ☛ Symmetry: $D(a, b) = D(b, a)$.
    ☛ Triangle inequality: $D(a, c) ≤ D(a, b) + D(b, c)$.
  \end{enum}
\end{definition}
\noindent It is easy to verify that the induced metric introduced above fulfills
these conditions.

\subsubsection{Orthonormal Bases}
Let $\sB$ be a set of kets $\ket{θ_1}, \ket{θ_2}, \ldots, \ket{θ_n} ∈ ℋ$.
The ☼[linear span]{span} of $\sB$ is the subspace of $ℋ$ containing all linear
combinations of these kets,
\begin{align}
  \vecspan{\sB} = \vecspan{\ket{θ_1}, \ket{θ_2}, \ldots, \ket{θ_n} }
    := \Big\{ \sum_{i=1}^n α_i \ket{θ_i} : α_i ∈ ℂ \Big\} ¶.
\end{align}
We call $\sB$ a ☼{basis} of $ℋ$ if $\vecspan{\sB} = ℋ$. Furthermore,
a basis $\{ \ket{e_i} \}_{i=1}^d := \{ \ket{e_1}, \ket{e_2}, \ldots, \ket{e_d} \}$ 
is called
☼[orthonormal]{basis!orthonormal} ☼*{orthonormal|see{basis}} if its elements are 
mutually orthonormal. Formally, this means that $\braket{e_i|e_j} = δ_{ij}$, where 
the ☼{Kronecker delta} or ☼{indicator} function is given by $δ_{i,j} = 1$ if $i = j$ and $0$ otherwise.
Every orthonormal basis has exactly $d$ elements, where $d := \dim{ℋ}$ is the
Hilbert ☼{dimension} of $ℋ$. A ket $\ket{ψ} ∈ ℋ$ has a unique
☼{decomposition} into any orthonormal basis $\{ \ket{e_i} \}$, where it can be
represented as a $d \times 1$ column vector.
\begin{align}
  \ket{ψ} = \sum_{i = 1}^d  \braket{e_i|ψ} \, \ket{e_i} \sim
    \mat{c}{\braket{e_1|ψ} \\ \braket{e_2|ψ} \\ \vdots \\ \braket{e_d|ψ}} ¶.
\end{align}
Similarly, a bra $\bra{ψ} ∈ ℋ^*$ can be represented as a $1 \times d$ row vector.
\begin{align}
  \bra{ψ} = \sum_{i = 1}^d \braket{ψ|e_i} \bra{e_i} \sim
    \mat{cccc}{ \braket{ψ|e_1} & \braket{ψ|e_2} & \ldots & \braket{ψ|e_d} } ¶.
\end{align}

\subsubsection{Direct Sum Spaces}

Given two Hilbert spaces $ℋ$ and $ℋ'$, we introduce the ☼[direct
sum]{direct sum} Hilbert space of $ℋ$ and $ℋ'$, denoted $ℋ ⨁
ℋ'$. This space consists of linear combinations of tuples in $ℋ \times ℋ'$, which we denote by $\ket{ψ} ⨁ \ket{ψ'}$, where $\ket{ψ} ∈ ℋ$ and $\ket{ψ'} ∈ ℋ'$.
The direct sum space is motivated via its inner product, 
which we want to be a sesquiliniar extension of the relation
\begin{align}
  \ip<B>{ \ket{ψ} ⨁ \ket{ψ'},  \ket{φ} ⨁ \ket{φ'} } = \braket{ψ|φ} + \braket{ψ'|φ'} ¶
\end{align}
for any two tuples $\ket{ψ} ⨁ \ket{ψ'}, \ket{φ} ⨁ \ket{φ'} ∈ ℋ ⨁ ℋ'$.
This inner product is positive-definite if and only if
$α \big(\ket{ψ} ⨁ \ket{ψ'}\big) +\ket{φ} ⨁ \ket{φ'}
   = \big (α \ket{ψ} + \ket{φ} \big) ⨁ \big( α \ket{ψ'} + \ket{φ'} \big)$ 
   for any $α ∈ ℂ$.
This constitutes the rule for multiplication by a scalar and addition of elements in $ℋ ⨁ ℋ'$ and we may thus write
\begin{align}
  ℋ ⨁ ℋ' := \vecspan<b>{ \ket{ψ} ⨁ \ket{ψ'} : \ket{ψ} ∈ ℋ ,  
    \ket{ψ'} ∈ ℋ' } ¶\,.
\end{align}
An orthonormal basis of this space is given by $\{ \ket{e_i} ⨁ \mathbf{0} \} \cup \{ \mathbf{0} 
⨁ \ket{e_j'} \}$, where $\{ \ket{e_i} \}$ is an orthonormal basis of $ℋ$ and 
$\{ \ket {e_j'} \}$ is an orthonormal basis of $ℋ'$. Hence, $\dim{ℋ ⨁ ℋ'} 
= d + d'$, where $d$ and $d'$ are the Hilbert dimensions of $ℋ$ and $ℋ'$, 
respectively.

\subsubsection{Tensor Product Spaces}
\label{sc:pre/lin/tensor}

Given two Hilbert spaces $ℋ$ and $ℋ'$, we define the ☼[tensor
product]{tensor product} Hilbert space of $ℋ$ and $ℋ'$, denoted $ℋ ⨂
ℋ'$. The space consists of linear combinations of tuples (pure tensors) in $ℋ \times ℋ'$, denoted $\ket{ψ} ⨂ \ket{ψ'}$, where $\ket{ψ} ∈ ℋ$ and $\ket{ψ'} ∈ ℋ'$.
Again, we motivate the tensor product space via its inner product, which
we want to be a sesquiliniar extension of the relation
\begin{align}
  \ip<B>{ \bra{ψ} ⨂ \bra{ψ'} , \ket{φ} ⨂ \ket{φ'} } 
  = \braket{ψ|φ} \, \braket{ψ'|φ'} ¶\,.
\end{align}
for any two tuples $\ket{ψ} ⨂ \ket{ψ'}, \ket{φ} ⨂ \ket{φ'} ∈ ℋ ⨂ ℋ'$.
This inner product is positive-definite if and only if 
the following relations are satisfied. For any $α ∈ ℂ$, we need
\begin{align}
  &α \big(\ket{ψ} ⨂ \ket{ψ'} \big) =
    (α \ket{ψ} ) ⨂ \ket{ψ'} = \ket{ψ} ⨂ ( α \ket{ψ'} )\,, ¶\\
  &\ket{ψ} ⨂ \ket{ψ'} + \ket{ψ} ⨂ \ket{φ'} = \ket{ψ} ⨂ 
    \big( \ket{ψ'} + \ket{φ'} \big) \tn*{and}¶\\
  &\ket{ψ} ⨂ \ket{ψ'} + \ket{φ} ⨂ \ket{ψ'} 
    = \big( \ket{ψ} + \ket{φ} \big) ⨂ \ket{ψ'} ¶\,.
\end{align}
The tensor product Hilbert space is then defined as the
vector space built from linear combinations of all pure tensors modulo 
the above equivalence relations.
Moreover, if $ℋ$ has an orthonormal basis $\{ \ket{e_i} \}_i$ and $ℋ'$ has
an orthonormal basis $\{ \ket{e_j'} \}_j$, then the pure tensors $\ket{e_i} ⨂
\ket{e_j'} ∈ ℋ ⨂ ℋ'$ form an orthonormal basis of $ℋ ⨂ ℋ'$ and
$\dim{ℋ ⨂ ℋ'} = d · d'$.

\subsection{Operators on Hilbert Spaces}
☼*{operator}

\subsubsection{Linear Operators}

We denote the set of ☼[linear operators]{operator!linear} from $ℋ$ to
$ℋ'$ (vector space ☼[homomorphisms]{homomorphism}) by $\olin{ℋ, ℋ'}$. 
Every operator $L ∈ \olin{ℋ, ℋ'}$ has a unique decomposition ☼*{decomposition} 
into
any pair of orthonormal bases, $\{ \ket{e_i} \}$ of $ℋ$ and $\{ \ket{e_j'} \}$
of $ℋ'$.
The operator can be represented as a $d' \times d$ matrix in these bases.
\begin{align}
  L = \sum_{i,j} \braket{e_j'|L|e_i} \, \ket{e_j'}\!\bra{e_i} \sim
    \mat{ccccc}{
    [L]_{11} & [L]_{12} & [L]_{13} & \cdots & [L]_{1d} \\
    {}[L]_{21} & [L]_{22} & [L]_{23} & \cdot & \vdots  \\
    \vdots & \cdot & \cdot & \cdot  & \vdots \\
    {}[L]_{d'1} & \cdots & \cdots & \cdots & [L]_{d'd} } ¶[pre/lin/mat],
\end{align}
where $[L]_{ji} := \braket{e_j'|L|e_i}$.
Note that representations of kets, bras and operators in a particular basis are
only used as illustrations throughout this thesis.

For every $L ∈ \olin{ℋ, ℋ'}$, we define its ☼{adjoint} operator $L† ∈ \olin{ℋ',
ℋ}$ as the unique operator that satisfies
\begin{align}
  \braket{ψ|L|φ} = \overline{\braket{φ|L†|ψ}} \tn*{for all}
    \ket{φ} ∈ ℋ \tn{and} \ket{ψ} ∈ ℋ' ¶\,.
\end{align}
(This is equivalent to the condition $\ip<b>{\ket{ψ}, L \ket{φ}} = \ip<b>{L†
\ket{ψ}, \ket{φ}}$ expressed in terms of the inner product.) In particular,
this definition, together with~§[pre/lin/bra], implies that a ket $L \ket{φ}$
has the dual bra $\bra{φ} L†$. 

Moreover, we will also need the ☼{transpose} $Lᵀ$ of $L$ with respect to a
pair of bases,  $\{ \ket{e_i} \}$ of $ℋ$ and $\{ \ket{e_j'} \}$ of $ℋ'$.
This is defined as
\begin{align}
  Lᵀ := \sum_{i,j} \braket{e_j'|L|e_i} \ket{e_i}\!\bra{e_j'} ¶\,.
\end{align}

The ☼{kernel} of an operator $L ∈ \olin{ℋ, ℋ'}$ is the subspace of $ℋ$ spanned
by all kets that are mapped to zero by $L$, that is
\begin{align}
  \ker{L} := \vecspan{ \ket{φ} :  \ket{φ} ∈ ℋ \tn{and} L \ket{φ} = 0 } ¶\,.
\end{align}
In contrast, the ☼{support} of $L$ is the subspace orthogonal to
the kernel of $L$, namely
\begin{align}
  \supp{L} := \vecspan{ \ket{φ} : \forall \ket{ψ} \in \ker{L} \tn{we have} \braket{ψ|φ} = 0 } ¶\,.
\end{align}The ☼{rank} of $L$ is the dimension of its support, $\rank{L} := \dim{\supp{L}}$. 
Finally, the ☼{image} of $L$ is the subspace of $ℋ'$ spanned by
$L$, that is
\begin{align}
  \imag{L} := \vecspan{ L \ket{φ} : \ket{φ} ∈ ℋ } ¶\,.
\end{align}

\subsubsection{Projectors, Identity and Inverse}

The linear operators from $ℋ$ onto itself (vector space ☼[endomorphisms]{endomorphism}) are
denoted $\olin{ℋ} := \olin{ℋ, ℋ}$.
A ☼{projector} into a subspace $ℋ'$ of $ℋ$ is an operator $P ∈ \olin{ℋ}$ with 
$\supp{P} = \imag{P} = ℋ'$ that acts as an ☼{identity} on all $\ket{φ'} ∈ ℋ'$, i.e.\ 
$P \ket{φ'} = \ket{φ'}$.
Thus, $P^2 = P$ and $P = P†$.
We use the symbol $ⅈ$ to denote the ☼[identity operator]{operator!identity} on $ℋ$, 
which is the projector from $ℋ$ onto $ℋ$.
The identity operator can be decomposed in any orthonormal basis $\{ \ket{e_i}
\}$ of $ℋ$ as $ⅈ = \sum_i \proj{e_i}$. For any linear operator $L ∈ \olin{ℋ}$,
we denote the projector onto its support by $Π^L$. 

For operators $L ∈ \olin{ℋ}$, we define their ☼{inverse} (if it exists), $L\inv ∈ 
\olin{ℋ}$, as the unique operator satisfying $L\inv L = L L\inv = ⅈ$. We will often 
use a ☼[generalized inverse]{inverse!generalized}, which is defined for every 
operator $L ∈ \olin{ℋ, ℋ'}$ and is the inverse of $L$ on its support. This means, 
we define the generalized inverse $L\inv$ as the unique operator satisfying $L\inv L 
= Π^L$ and $\supp{L\inv} = \imag{L}$.

\subsubsection{Isomorphisms, Unitaries and Isometries}

An ☼{isomorphism} is a linear bijective map associating elements of 
two structured sets in 
a way that preserves that structure. In particular, an isomorphism $u\!: ℋ → ℋ'$ 
between Hilbert spaces $ℋ$ and $ℋ'$ preserves the inner product and, thus, 
satisfies 
\begin{align}
  \ip<b>{\ket{ψ},\ket{φ}} = \ip<b>{u(\ket{ψ}), u(\ket{φ})} ¶[pre/lin/iso]
\end{align}
for all $\ket{ψ}, \ket{φ} ∈ ℋ$. We call two Hilbert spaces $ℋ$ and $ℋ'$ 
isomorphic, denoted $ℋ \iso ℋ'$, if there exists an isomorphism between $ℋ$ and 
$ℋ'$. Two Hilbert spaces are isomorphic if and only if they have the same Hilbert 
dimension. (To see this, note that an isomorphism can always be seen as a bijective 
map between elements of two orthonormal bases.)

A ☼[unitary operator]{operator!unitary} is an operator $U ∈ \olin{ℋ}$ that is an 
isomorphism between $ℋ$ and itself. 
%(an ☼{automorphism}). 
Since an isomorphism has 
to satisfy~§[pre/lin/iso], we have $\braket{ψ|φ} = \braket{ψ|U†U|φ}$ or, 
equivalently, $U†U = ⅈ$. Hence, the inverse of a unitary operator $U$ is well 
defined and $U\inv = U†$.

A ☼[partial isometry]{isometry!partial} is an operator $V ∈ 
\olin{ℋ, ℋ'}$  that is an isomorphism between $\supp{V} \subseteq ℋ$ and $\imag
{V} \subseteq ℋ'$.
A partial isometry satisfies $V†V = Π^V$ and, thus, $V\inv = V†$ is its generalized 
inverse (and a partial isometry too). An ☼{isometry} is a partial isometry with 
full support on $ℋ$.

In the following, we denote the set of unitary operators on a Hilbert space $ℋ$
as $\ouni{ℋ}$ and the set of partial isometries from $ℋ$ to $ℋ'$ as $\ouni{ℋ,
ℋ'}$. Isometries can always be understood as ☼[embeddings]{embedding}. 
Let $ℋ$ and $ℋ'$ be two Hilbert spaces such that 
$\dim{ℋ} < \dim{ℋ'}$ and let $V ∈ \ouni{ℋ, ℋ'}$ be an
isometry that embeds $ℋ$ into $ℋ'$, i.e.\ it satisfies $V†V = ⅈ$ on $ℋ$. Then,
for every operator $L ∈ \olin{ℋ}$, we implicitly define its embedding $L' := V L
V† ∈ \olin{ℋ'}$.

\subsubsection{Trace}

For any Hilbert space $ℋ$, the ☼{trace} of an operator in $\olin{ℋ}$ is the
(linear) functional $\trace : \olin{ℋ} → ℂ$ with the defining properties
\begin{align}
  \tr{AB} = \tr{BA} \tn*{and} \tr{ⅈ} = d ¶[pre/lin/tr]
\end{align}
for all linear operators $A ∈ \olin{ℋ, ℋ'}$ and $B ∈ \olin{ℋ', ℋ}$. 
In particular the trace is invariant under unitary conjugation, 
$\tr{U† L\, U} = \tr{L}$ for any $U ∈ \ouni{ℋ}$ and $L ∈ \olin{ℋ}$. This implies
that there exists a representation of the trace as a functional on matrix
representations that is independent of the choice of basis used for the
representation. 

If we choose an orthonormal basis $\{ \ket{e_i} \}$ for $ℋ$ and $\{ \ket{e_j'}
\}$ for $ℋ'$, the operators $AB$ and $BA$ can be represented as matrices with
entries
\begin{align}
  [AB]_{ji} = ∑_k \braket{e_j'|A|e_k} \braket{e_k|B|e_i'} \tn*{and} 
  [BA]_{ji} = ∑_k \braket{e_k'|A|e_i} \braket{e_j|B|e_k'} ¶
\end{align}
and it is easy to verify that the only functional that satisfies~§[pre/lin/tr]
is the sum of the diagonal elements, $∑_i [AB]_{ii} = ∑_i [BA]_{ii}$. Hence, for
any $L ∈ \olin{ℋ}$ and any basis $\{ \ket{e_i} \}$ of $ℋ$, we have $\tr{L} = ∑_i
[L]_{ii} = ∑_i \braket{e_i|L|e_i}$, which is how the trace of a matrix is
commonly defined.

\subsubsection{Singular Values}

For any operator $L ∈ \olin{ℋ, ℋ'}$, there always exists a pair of bases $\{ \ket
{e_i} \}$ of $ℋ$ and $\{ \ket{e_i'} \}$ of $ℋ'$ such that $L$ can be decomposed as
\begin{align}
  L = ∑_i s_i\, \ket{e_i'}\!\bra{e_i} 
    \tn*{and} s_1 ≥ s_2 ≥ \ldots ≥ s_m > 0 ¶[pre/lin/svd]\,,
\end{align}
where $m$ is the rank of $L$. This is called the ☼[singular value decomposition]
{decomposition!singular value} and the unique positive $s_i = s_i(L)$ are called 
\emph{singular values}. The singular values are invariant under unitary rotations, 
as these operations can be absorbed into the basis. Thus, for any $U ∈ \ouni{ℋ}$ 
and $V' ∈ \ouni{ℋ'}$,
\begin{align}
  s_i(V' L\, U) = s_i(L) ¶[pre/lin/svd-uni]\,.
\end{align}

The support and image of $L$ can be expressed in terms of the two bases, 
that is~$\supp{L} = \vecspan{ \ket{e_i} }$ and $\imag{L} = \vecspan{ \ket
{e_i'} }$.

Note that the operator $L†L = \sum_i {s_i}^2\, \proj{e_i} ∈ \olin{ℋ}$ has a
unique positive square root, namely the ☼{modulus} of $L$,
\begin{align}
  \abs{L} := √{L†L} = \sum_i s_i \proj{e_i} ¶.
\end{align}
Comparing this with the singular value decomposition~§[pre/lin/svd], we find
the ☼[polar decomposition]{decomposition!polar}, $L = W \abs{L}$,
where $W: \ket{e_i} ↦ \ket{e_i'}$ is a partial isometry from $ℋ$ to $ℋ'$
defined through the bases of the singular value decomposition of $L$.

\subsubsection{Hilbert-Schmidt Inner Product and Schmidt Decomposition}
☼*{inner product!Hilbert-Schmidt}

Given two Hilbert spaces $ℋ$ with orthonormal basis $\sB = \{ \ket{e_i} \}$ and
$ℋ'$, we associate operators in $\olin{ℋ, ℋ'}$ with tensors in $ℋ ⨂ ℋ'$ through
the ☼{isomorphism} $\veciso_{\sB} \!: L ↦ ∑_i \ket{e_i} ⨂ L
\ket{e_i}$.
Using the decomposition of $L$ in~§[pre/lin/mat], we find that $\veciso(L)$ can
be written as 
\begin{align}
  \veciso_{\sB}(L) = \sum_{i,j}\, [L]_{j,i}\, \ket{e_i} ⨂ \ket{e_j'} ¶\,,
\end{align}
which simply corresponds to the rearrangement of the matrix entries of $L$ (in
the given bases) as a column vector. 

This isomorphism is useful because it
induces an inner product on linear operators. We define the ☼[Hilbert-Schmidt
inner product]{inner product!Hilbert-Schmidt} on the complex ☼{Hilbert space}
$\olin{ℋ, ℋ'}$ as
\begin{align}
  \ip{A, B} := \ip<b>{\veciso_{\sB}(A),\veciso_{\sB}(B)} = 
      \sum_{i,j} \braket{e_j|e_i} \braket{e_j|A†B|e_i} = 
      \sum_i \braket{e_i|A†B|e_i} ¶,
\end{align}
for any $A, B ∈ \olin{ℋ, ℋ'}$.
This expression is equal to the trace and, thus, independent of the basis $\sB$ 
chosen for the isomorphism:
\begin{align}
 \ip{A, B} = \tr{A†B} ¶[pre/lin/hilbert-schmidt].
\end{align}

The tensor $\veciso_{\sB}(ⅈ) = ∑_i \ket{e_i} ⨂ \ket{e_i}$ takes a special role
in the analysis of quantum systems. In particular, we will need the following
property.
\begin{lemma}[Mirror Lemma]
  \label{lm:mirror}
  Let $ℋ$ be a Hilbert space with orthonormal basis 
  $\sB$ and $L ∈ \olin{ℋ}$, then 
  $(ⅈ ⨂ L) \veciso_{\sB}(ⅈ) = (Lᵀ ⨂ ⅈ) \veciso_{\sB}(ⅈ)$, where the
  transpose is taken with regards to $\sB$.
\end{lemma}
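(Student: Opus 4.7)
The plan is to unfold both sides directly using the definition of $\veciso_{\sB}(ⅈ) = ∑_i \ket{e_i} ⨂ \ket{e_i}$ together with the matrix decomposition~§[pre/lin/mat] of $L$ and $Lᵀ$ in the basis $\sB = \{\ket{e_i}\}$, and show that both sides yield the same double sum over matrix entries $[L]_{ji}$. No nontrivial analytic input is needed; the lemma is essentially a bookkeeping identity about the symmetry of the maximally entangled tensor under transposition.

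First I would write the left-hand side as
\begin{align}
(ⅈ ⨂ L) ∑_i \ket{e_i} ⨂ \ket{e_i} = ∑_i \ket{e_i} ⨂ L\ket{e_i} = ∑_{i,j} [L]_{ji}\, \ket{e_i} ⨂ \ket{e_j},
\end{align}
by inserting $L\ket{e_i} = ∑_j \braket{e_j|L|e_i} \ket{e_j} = ∑_j [L]_{ji} \ket{e_j}$.

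Next I would expand the right-hand side similarly. From the definition of the transpose with respect to $\sB$, we have $Lᵀ \ket{e_i} = ∑_j [L]_{ij} \ket{e_j}$, so
\begin{align}
(Lᵀ ⨂ ⅈ) ∑_i \ket{e_i} ⨂ \ket{e_i} = ∑_i (Lᵀ \ket{e_i}) ⨂ \ket{e_i} = ∑_{i,j} [L]_{ij}\, \ket{e_j} ⨂ \ket{e_i}.
\end{align}
Relabeling the dummy indices $i \leftrightarrow j$ in this last expression gives $∑_{i,j} [L]_{ji}\, \ket{e_i} ⨂ \ket{e_j}$, which matches the left-hand side. The only subtle point\,|\,really the only place where one might slip\,|\,is keeping track of which index of $[L]_{ji}$ corresponds to row versus column and making sure the transpose is applied with respect to the same basis $\sB$ that defines $\veciso_{\sB}$; this is precisely why the statement of the lemma emphasizes the basis dependence of both $\veciso_{\sB}(ⅈ)$ and $Lᵀ$. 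Thus I do not anticipate any real obstacle beyond a careful index-manipulation.
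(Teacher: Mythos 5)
Your proof is correct and is essentially the paper's proof: the paper dispatches this lemma with ``By inspection,'' and your explicit expansion of both sides into the double sum $\sum_{i,j} [L]_{ji}\, \ket{e_i} ⨂ \ket{e_j}$ is precisely the computation that inspection amounts to. The index bookkeeping, including the use of the paper's convention $[L]_{ji} = \braket{e_j|L|e_i}$ and the transpose taken with respect to $\sB$, is handled correctly.
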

\begin{proof}
  By inspection.
\end{proof}
This, together with the above isomorphism, can be used to prove the existence
of the ☼[Schmidt decomposition]{decomposition!Schmidt}.
\begin{lemma}[Schmidt Decomposition]
  \label{lm:schmidt-decomp}
  Let $ℋ$, $ℋ'$ be Hilbert spaces and let $\ket{\theta} ∈ ℋ ⨂ ℋ'$. Then,
  there exist orthonormal bases $\{ \ket{θ_i} \}$ of $ℋ$ and 
  $\{ \ket{θ_i'} \}$ of $ℋ'$ as well as non-negative numbers $s_i$ such that
  \begin{align}
    \ket{\theta} = ∑_i s_i\, \ket{θ_i} ⨂ \ket{θ_i'} ¶\,.
  \end{align}
\end{lemma}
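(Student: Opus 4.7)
The plan is to reduce the statement to the singular value decomposition of a suitably chosen operator, with the isomorphism $\veciso_{\sB}$ serving as the translation between bipartite vectors and linear maps. First I fix an arbitrary orthonormal basis $\sB = \{ \ket{e_i} \}$ of $ℋ$. Since $\veciso_{\sB} : \olin{ℋ, ℋ'} → ℋ ⨂ ℋ'$ is linear between two spaces of equal complex dimension $d \cdot d'$ and is an isometry with respect to the Hilbert--Schmidt inner product (as noted right after its definition), it is a bijection; hence there exists a unique $L ∈ \olin{ℋ, ℋ'}$ with $\veciso_{\sB}(L) = \ket{\theta}$.

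Next I invoke the singular value decomposition of $L$: there are orthonormal bases $\{ \ket{u_k} \}$ of $ℋ$ and $\{ \ket{\theta_k'} \}$ of $ℋ'$ and non-negative scalars $s_k$ such that $L = \sum_k s_k \ket{\theta_k'}\!\bra{u_k}$. Substituting this into the definition of $\veciso_{\sB}$ gives
\begin{align*}
\ket{\theta}
 \,=\, \sum_i \ket{e_i} ⨂ L \ket{e_i}
 \,=\, \sum_{i,k} s_k \braket{u_k|e_i}\, \ket{e_i} ⨂ \ket{\theta_k'}
 \,=\, \sum_k s_k \ket{\theta_k} ⨂ \ket{\theta_k'},
\end{align*}
where the new ket is defined by $\ket{\theta_k} := \sum_i \braket{u_k|e_i}\, \ket{e_i}$, i.e.\ as the component-wise complex conjugate of $\ket{u_k}$ taken in the basis $\sB$. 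A direct check using conjugate symmetry confirms $\braket{\theta_k|\theta_l} = \overline{\braket{u_k|u_l}} = δ_{kl}$, so $\{ \ket{\theta_k} \}$ is again an orthonormal basis of $ℋ$ and the non-negative numbers $s_k$ inherited from the SVD play the role of the Schmidt coefficients.

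I do not anticipate any significant obstacle, since the SVD already carries out the essential diagonalization and $\veciso_{\sB}$ only translates between the operator picture and the tensor picture. The one subtlety worth flagging is that the ONB of $ℋ$ appearing in the Schmidt decomposition is the complex conjugate (with respect to $\sB$) of the right-singular basis of $L$ rather than that basis itself; this conjugation step could alternatively be absorbed using the Mirror Lemma (Lemma~\ref{lm:mirror}), which lets one move $L$ between the two factors of $\veciso_{\sB}(ⅈ)$ via a transposition and so makes the symmetry between the two Schmidt bases manifest from the outset.
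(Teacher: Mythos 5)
Your proof is correct and follows essentially the same route as the paper: identify $\ket{\theta}$ with an operator $L$ via $\veciso_{\sB}$, apply the singular value decomposition of $L$, and read off the Schmidt bases, with the complex conjugation of the right-singular basis being exactly the transpose that the paper handles via the Mirror Lemma. The only difference is presentational — you compute the basis coefficients explicitly where the paper absorbs them into $U^{\mathsf{T}}$.
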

\begin{proof}
  Given a linear operator $L ∈ \olin{ℋ, ℋ'}$ and an orthonormal basis $\sB = \{
\ket{e_i} \}$, we use the singular value decomposition of $L$ to get
  \begin{align}
    \veciso_{\sB}(L) = \veciso_{\sB}(V† S\, U) 
    = ∑_i s_i(L)\ Uᵀ \ket{e_i} ⨂ V† \ket{e_i} ¶[pre/lin/schmidt-decomp]\,,  
  \end{align}
  where $S = ∑_i s_i(L)\, \proj{e_i}$ is diagonal in $\sB$. The isometries $U ∈ 
\ouni{ℋ}$ and $V ∈ \ouni{ℋ', ℋ}$ map the bases of the  singular value
decomposition to $\sB$. Hence, every tensor can be written in the
form~§[pre/lin/schmidt-decomp], which concludes the proof.
%Note also that the numbers $s_i$ are called \emph{Schmidt coefficients} and
%correspond to the singular values of the linear operator in corresponding to
%the tensor $\ket{\theta}$.
\end{proof}

\subsubsection{Operator Norms}
☼*{norm!operator}

The singular value decomposition allows the definition of various norms on $\olin
{ℋ, ℋ'}$. In particular, we will often use the ☼[Schatten norms]{norm!
Schatten}. For $p ≥ 1$, they are defined as (cf.~e.g.~\cite{bhatia97,watrous-ln08})
\begin{align}
  \norm{L}[p] := \bigg( ∑_{i=1}^m s_i(L)^p \bigg)^{\frac{1}{p}} ¶.
\end{align}
In particular, we will often use the infinity norm, $\norm{·}[∞]$, which is equal to the induced norm of the underlying Hilbert space norm, that is
\begin{align}
  \norm{L}[∞] = s_1(L) = \sup_{ \ket{ψ} ∈ ℋ } \
    \frac{ \norm{ L \ket{ψ} } }{ \norm{\ket{ψ}} } ¶\,.
\end{align}
The $1$-norm is equal to the trace norm,
\begin{align}
  \norm{L}[1] = ∑_{i=1}^m s_i(L) = \trace{ \abs{L} } ¶\,.
\end{align}
And, finally, the $2$-norm is equal to the induced norm of the 
Hilbert-Schmidt inner product,
\begin{align}
  \norm{L}[2] = \bigg( ∑_{i=1}^m s_i(L)^2 \bigg)^{½} = √{\ip{L, L}} ¶\,.
\end{align}

These three norms satisfy $\norm{L}[∞] ≤ \norm{L}[2] ≤ \norm{L}[1]$. And, thanks to §[pre/lin/svd-uni], they are ☼[invariant under unitary rotations]{unitarily invariant}. Hence, $\norm*{L} = \norm*{V' L\, U}$, where $\norm*{·}$ denotes any of the Schatten norms introduced above. For any three operators $L, M, N ∈ \olin{ℋ}$, these norms satisfy
\begin{align}
  \norm*{MLN} ≤ \norm{M}[∞]\, \norm*{L}\, \norm{N}[∞] ¶[pre/lin/norm-mult]\,.
\end{align}
Moreover, these norms are ☼{sub-multiplicative}, i.e.\ $\norm*{MN} ≤ \norm*{M} \norm*{N}$.
%Or, more generally, a ☼{Hölder inequality} holds.
%For any $p, q, r ≥ 0$, we have~\cite{bhatia97}
%\begin{align}
%  \norm*<b>{ \abs{MN}^p }^{1/p} ≤ \norm*<b>{ \abs{M}^q }^{1/q}\, 
%    \norm*<b>{ \abs{N}^r }^{1/r}
%      \tn*{whenever} \frac{1}{p} = \frac{1}{q} + \frac{1}{r} \,.
%      ¶[pre/lin/hoelder]
%\end{align}

\subsection{Positive Semi-Definite Operators}

\subsubsection{Hermitian Operators}

An operator $M ∈ \olin{ℋ}$ is called ☼{self-adjoint} or
☼[Hermitian]{operator!Hermitian} ☼*{Hermitian|see{operator}}
if $M† = M$. The set of Hermitian operators on $ℋ$ is denoted $\oherm{ℋ} := \{
M ∈ \olin{ℋ} : M† = M \}$. Note that the real vector space $\oherm{ℋ}$ together
with the Hilbert-Schmidt inner product in~§[pre/lin/hilbert-schmidt] form a ☼[real
Hilbert space]{Hilbert space}.

Every $M ∈ \oherm{ℋ}$ has an ☼[eigenvalue
decomposition]{decomposition!eigenvalue}, namely 
\begin{align}
  M = ∑_i λ_i\, \proj{e_i} ¶\,, \tn*{where} λ_i ∈ ℝ \tn{and}
    \abs{λ_1} ≥ \abs{λ_2} ≥ \ldots ≥ \abs{λ_m} > 0 ¶.
\end{align}
The coefficients $λ_i = λ_i(M)$ are called ☼[eigenvalues]{eigenvalue} of $M$ and
$\ket{e_i}$ are ☼[eigenvectors]{eigenvector}. Together, the eigenvectors form an
☼{eigenbasis} $\{ \ket{e_i} \}$, which is an orthonormal basis of $\supp{M} =
\imag{M}$. This eigenbasis is unique if and only if all the eigenvalues are
mutually different. The singular values can be expressed in terms of the eigenvalues,
that is $s_i(M) = \abs{λ_i(M)}$.\footnote{The existence of the eigenvalue
decomposition follows, for example, from the fact that the singular value
decompositions of $M†M = ∑_i {s_i}^2 \proj{e_i}$ and $MM† = ∑_i {s_i}^2
\proj{e_i'}$ have to be the same. Thus, if all singular values are mutually different, we
have $\ket{e_i'} = e^{ι φ} \ket{e_i}$ and the eigenvalue decomposition is of the
form $M = ∑_i ± s_i \proj{e_i}$. More generally, an eigenvalue decomposition
exists for all ☼{normal} $L ∈ \olin{ℋ}$, 
where normal means that $L†L = LL†$.}

The eigenvalue decomposition is particularly useful to define the action of functions on
operators. Note, for example, that $M^2 = MM = ∑_i {λ_i}^2\, \proj{e_i}$,
$\abs{M} = ∑_i \abs{λ_i}\, \proj{e_i}$ and $M\inv = ∑_i \frac{1}{λ_i}
\proj{e_i}$. We generalize this and define the action of arbitrary functions $f:
ℝ / \{ 0 \} → ℝ$ on Hermitian operators as
\begin{align}
  f: M = ∑_i λ_i\, \proj{e_i}\ ↦\ ∑_i f(λ_i)\, \proj{e_i} ¶.
\end{align}
Note that the function $f$ only acts on the support of the Hermitian operator,
leaving its kernel intact.

For any $M ∈ \oherm{ℋ}$, we denote by $\{ M \}_+$ and $\{ M \}_-$ the projections of $M$ onto its positive and negative eigenspace, respectively, that is 

\begin{align}
  \{ M \}_+ = \sum_{i : λ_i > 0} λ_i\, \proj{e_i} \tn*{and}
  \{ M \}_- = \sum_{i : λ_i < 0} λ_i\, \proj{e_i} ¶\,.
\end{align}

\subsubsection{Positive Semi-Definite Operators}

The set of ☼[positive semi-definite]{operator!positive semi-definite}
operators, $\opos{ℋ} \subset \oherm{ℋ}$ is the set of operators that have
positive inner products with all vectors in $ℋ$, namely
\begin{align}
  \opos{ℋ} :=&\, \big\{ M ∈ \oherm{ℋ} : \braket{ψ|M|ψ} ≥ 0\ 
    \tn{for all} \ket{ψ} ∈ ℋ \big\} ¶\\
    =&\, \{ M ∈ \oherm{ℋ} : λ_i(M) > 0\ \forall i \} ¶\,.
\end{align}
We write $M ≥ 0$ if and only if $M ∈ \opos{ℋ}$. In the following, we simply call
these operators ☼[positive]{operator!positive}.
Moreover, given two Hermitian operators $M, N ∈ \oherm{ℋ}$, we write $M ≥ N$ if and only if $M - N ≥ 0$ and we write $M ≤ N$ if and only if $N - M ≥ 0$. Note that this relation constitutes a partial order on $\oherm{ℋ}$.

\subsection{Operators on Tensor Spaces}

We will often use linear operators on ☼{tensor product} spaces, for example $\olin{ℋ ⨂ ℋ', \bar{ℋ} ⨂ \bar{ℋ}'} \iso \olin{ℋ, \bar{ℋ}} ⨂ \olin{ℋ', \bar{ℋ}'}$, where the second isomorphic space is constructed from the Hilbert spaces $\olin{ℋ, \bar{ℋ}}$ and $\olin{ℋ', \bar{ℋ}'}$ as in Section~\ref{sc:pre/lin/tensor}. Hence, we can write every element $L ∈ \olin{ℋ ⨂ ℋ', \bar{ℋ} ⨂ \bar{ℋ}'}$ as a linear combination of the form
\begin{align}
  L = ∑_{α,β} χ_{αβ}\, σ_α ⨂ σ'_β \,, \tn*{where} χ_{αβ} ∈ ℂ ¶[pre/lin/op-decomp]\,,
\end{align}
and $\{ σ_α \}$ and $\{ σ'_β \}$ are bases of $\olin{ℋ, \bar{ℋ}}$ and $\olin{ℋ', \bar{ℋ}'}$, respectively. If $L$ is Hermitian, then the $χ_{αβ}$ can be chosen real and the bases can be chosen Hermitian. (This follows directly from the fact that $\oherm{ℋ}$ is a real Hilbert space.) Furthermore, this allows us to introduce a trivial extension of operators $L ∈ \olin{ℋ, \bar{ℋ}}$ to operators $L ⨂ ⅈ$ in $\olin{ℋ ⨂ ℋ', \bar{ℋ} ⨂ ℋ'}$. In the following, this extension is assumed implicitly whenever an operator defined on a subspace is applied to a tensor space.

\subsection{Completely Positive Maps}

\subsubsection{Super-Operators}

A ☼{super-operator} is a linear map from linear operators on one Hilbert space to
linear operators on another Hilbert space. For example, if the two Hilbert
spaces are $ℋ$ and $ℋ'$, we denote the set of super-operators from $\olin{ℋ}$ to
$\olin{ℋ'}$ by $\olin{\olin{ℋ}, \olin{ℋ'}}$. The super-operators form a vector
space and, since every operator in $\olin{ℋ}$ can be decomposed according to~§[pre/lin/op-decomp], the action of a super-operator on an operator in a (larger) tensor space is well-defined via linearity. We use the symbol $\circ$ to denote concatenations of super-operators, for example $(\sE \circ \sF) [·] = \sE [\sF [ · ]]$.

For any super-operator $ℰ ∈ \olin{\olin{ℋ}, \olin{ℋ'}}$, its adjoint
super-operator, $ℰ† ∈ \olin{\olin{ℋ'}, \olin{ℋ}}$, is defined via the
Hilbert-Schmidt inner product as the unique operator satisfying
\begin{align}
  \ip<b>{ ℰ[A], B } = \ip<b>{ A, ℰ†[B] } \tn*{for all} A ∈ \olin{ℋ}, B ∈
\olin{ℋ'} ¶\,.
\end{align}

\subsubsection{Completely Positive Maps}

Super-operators that (consistently) map positive operators onto positive
operators are called ☼[completely positive maps]{CPM}
(CPMs).
\begin{definition}[Completely Positive Map]
  Let $ℋ$ and $ℋ'$ be Hilbert spaces. A super-operator $ℰ ∈
\olin{\olin{ℋ}, \olin{ℋ'}}$ is called completely  positive, if, for any
auxiliary Hilbert space $ℋ''$, it holds that
	\begin{align}
	  ℰ[M] ≥ 0  \tn*{for all} M ∈ \opos{ℋ ⨂ ℋ''}  ¶\,.
	\end{align}
\end{definition}

An example of such a completely positive super-operator is the ☼{conjugation} with an
operator $L ∈ \olin{ℋ, ℋ'}$, that is the map $\sL: M ↦ L M L†$.
We will often use the following basic property of completely positive maps. Let
$ℰ ∈ \olin{\olin{ℋ}, \olin{ℋ'}}$ be completely positive, then
\begin{align}
  A ≥ B ⇒ ℰ[A] ≥ ℰ[B] \tn*{for all} A, B ∈ \oherm{ℋ} ¶[pre/lin/basic-step]\,.
\end{align}

Furthermore, we take note of the following property of positive semi-definite operators. 
For any $M, N ∈ \opos{ℋ}$, we have 
\begin{align}
  \tr{MN} = \tr{√{M} N √{M}} ≥ 0 ¶\,, 
\end{align}
where the last inequality follows from the fact that the conjugation with $√{M}$ is a completely 
positive map. In particular, if $X, Y ∈ \oherm{ℋ}$ satisfy $X ≥ Y$, we find $\tr{M X} 
≥ \tr{M Y}$.

A completely positive map $ℰ$ is called ☼[trace preserving]{trace!preserving} if
$\tr{ℰ[M]} = \tr{M}$ for all $M ∈ \olin{ℋ}$. We denote the set of all trace preserving
completely positive maps (☼[TP-CPMs]{TP-CPM}) from $\olin{ℋ}$ to $\olin{ℋ'}$ by
$\tpcpm{ℋ, ℋ'}$. If $ℰ$ is a TP-CPM, then its adjoint $ℰ†$ is completely positive and 
☼{unital}. Since
\begin{align}
  \tr{M} = \tr{ℰ[M]} = \tr<b>{ℰ†[ⅈ] M} \tn*{for all}\! M ∈ \olin{ℋ} ¶\,,
\end{align}
unital maps are defined by the property that they map the identity onto an identity, i.e.\
$ℰ†[ⅈ] = ⅈ$.

Finally, a completely positive map $ℰ$ is called ☼[trace
non-increasing]{trace!non-increasing} if $\tr{ℰ[M]} ≤ \tr{M}$ for all $M ∈ \opos{ℋ}$.
It is easy to verify (using the same argument as above) that its adjoint is completely positive and ☼{sub-unital}, i.e.\ it satisfies $ℰ†[ⅈ] ≤ ⅈ$.

\subsubsection{Partial Trace}

Given a bipartite Hilbert space $ℋ ⨂ ℋ'$, we are interested in the ☼{partial
trace} super-operator, denoted $\trace¬{ℋ'} ∈ \tpcpm{ℋ ⨂ ℋ', ℋ}$.
The partial trace is defined as the adjoint super-operator to $L ↦ L ⨂ ⅈ$, which
maps operators $L ∈ \olin{ℋ}$ to $\olin{ℋ ⨂ ℋ'}$. It is easy to verify,
using~§[pre/lin/hilbert-schmidt], that the
☼{trace} is the special case of the partial trace where $ℋ \iso ℂ$.

To justify this definition, let us investigate the action of the partial 
trace $\trace¬{ℋ'}$ of a product state 
$K ⨂ L$, where $K ∈ \olin{ℋ}$ and $L ∈ \olin{ℋ'}$, on an arbitrary state $Q ∈ \olin{ℋ}$.
We have
\begin{align}
  \ip{Q, \tr[ℋ']{K ⨂ L}} = \ip{K ⨂ L, Q ⨂ ⅈ¬{ℋ'}} = \ip{K, Q} \tr{L} ¶\,.
\end{align}
Since this holds for all $Q$, we have $\tr[ℋ']{K ⨂ L} = \tr{L} K$. 

The partial trace is cyclic
in operators on the same Hilbert space and commutes with operators on other Hilbert spaces.
Specifically, let $L ∈ \olin{ℋ ⨂ ℋ', ℋ ⨂ ℋ''}$ and $K ∈ \olin{ℋ'', ℋ'}$. 
We use the decomposition $L = ∑_{α,β} χ_{αβ}\, σ_α ⨂ σ'_β$ of~§[pre/lin/op-decomp]. Since
the partial trace is linear and $K$ only acts on $ℋ'$, we have, using the cyclicity of the
trace,
\begin{align}
  \tr[ℋ']{K L} = ∑_{α,β} \tr{K σ'_β}\, σ_α = ∑_{α,β} \tr{σ'_β K}\, σ_α = \tr[ℋ'']{L K} ¶\,.
\end{align}
Moreover, using this decomposition, it is easy to verify that 
$\tr[ℋ]{K L} = K \tr[ℋ]{L}$ and $\tr[ℋ]{L K} = \tr[ℋ]{L} K$.

\subsubsection{Choi-Jamiolkowski Isomorphism}
\label{sc:choi-jami}

In the same spirit as the $\veciso$-isomorphism between
operators $\olin{ℋ, ℋ'}$ and tensors $ℋ ⨂ ℋ'$, we now define the
☼[Choi-Jamiolkowski isomorphism]{isomorphism!Choi-Jamiolkowski}~\cite{jamiliolkowski72}, $Ω$, 
between super-operators in $\olin{\olin{ℋ}, \olin{ℋ'}}$ and operators in 
$\olin{ℋ' ⨂ ℋ}$.
\begin{align}
  Ω : ℰ ↦ ω^{ℰ} = ℰ \big[ \proj{Γ} \big] , \tn{where} 
    \ket{Γ} = ∑_i \ket{e_i} ⨂ \ket{e_i} ∈ \opos{ℋ ⨂ ℋ} ¶%[choi-jami]
\end{align}
and $\{ \ket{e_i} \}$ is an orthonormal basis of $ℋ$. The state $ω^{ℰ}$ is called the Choi-Jamiolkowski state of $ℰ$. The inverse operation $Ω\inv$ maps Choi-Jamiolkowski states $ω^{ℰ} ∈ \olin{ℋ' ⨂ ℋ}$ to super-operators
\begin{align}
  ℰ : \olin{ℋ} → \olin{ℋ'},\ X ↦ \tr<b>[ℋ]{ω^{ℰ} (ⅈ¬{ℋ'} ⨂ Xᵀ)} ¶%[choi-jami-inv],
\end{align}
where the transpose is taken with regards to the basis $\{ \ket{e_i} \}$.

There are various relations between properties of super-operators and properties of the corresponding Choi-Jamiolkowski states. The following can be verified by inspection.
\begin{compact}
  ☛ $ℰ$ is completely positive $\iff$ $ω^{ℰ}$ is positive semi-definite.
  ☛ $ℰ$ is trace-preserving $\iff$ $\tr[ℋ']{ω^{ε}} = ⅈ¬{ℋ}$.  
  ☛ $ℰ$ is unital $\iff$ $\tr[ℋ]{ω^{ℰ}} = ⅈ¬{ℋ'}$.
\end{compact}

\subsubsection{Kraus Operators and Stinespring Dilation}

The following two lemmas are of crucial importance in quantum information theory. They describe two alternative representations of completely positive\footnote{An extension of this to arbitrary super-operators is possible (see, e.g.~\cite{watrous-ln08}).} maps, especially trace non-increasing and trace preserving maps. 

Every completely positive super-operator can be represented as a sum of conjugations of the input with ☼[Kraus operators]{operator!Kraus}.~\cite{kraus69,kraus70}.
☼*{representation!Kraus}

\begin{lemma}[Kraus Representation]
  \label{lm:kraus}
  A super-operator $ℰ ∈ \olin{\olin{ℋ}, \olin{ℋ'}}$ is completely positive if and only if
  there exists a finite set of linear operators $\{ E_k \}$, $E_k ∈ \olin{ℋ, ℋ'}$ such that
  \begin{align}
    ℰ[A] = \sum_k E_k A\, E_k†\ \tn*{for all} A ∈ \olin{ℋ} ¶\,.
  \end{align}
  Furthermore, a completely positive $ℰ$ is trace non-increasing if 
  and only if $\sum_k E_k† E_k ≤ ⅈ$ and it is trace preserving if and 
  only if $\sum_k E_k† E_k = ⅈ$.
\end{lemma}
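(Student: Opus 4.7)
The plan is to prove the two directions of the equivalence separately, then address the trace conditions as a direct calculation. For the easier direction, assume $\cE[A] = \sum_k E_k A\, E_k†$. Each summand is a conjugation map, which was already noted to be completely positive. Since sums of completely positive maps are plainly completely positive (positivity is preserved termwise on every extension $ℋ ⨂ ℋ''$), the map $\cE$ is completely positive as well.

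For the nontrivial direction, I will pass through the Choi-Jamiolkowski isomorphism of Section~\ref{sc:choi-jami}. If $\cE$ is completely positive, then in particular $\omega^{\cE} = \cE[\proj{Γ}] \in \opos{ℋ' ⨂ ℋ}$ as stated in the characterization following the definition of $Ω$. Being positive semi-definite, $\omega^{\cE}$ admits an eigenvalue decomposition $\omega^{\cE} = ∑_k \proj{ψ_k}$, where the eigenvalues are absorbed into the (unnormalized) vectors $\ket{ψ_k} ∈ ℋ' ⨂ ℋ$. Using the $\veciso$-isomorphism in the form adapted to $\olin{ℋ, ℋ'}$, each $\ket{ψ_k}$ can be written uniquely as $(\mathbb{I} ⨂ E_k) \ket{Γ}$ for some $E_k ∈ \olin{ℋ, ℋ'}$.

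To conclude, I would substitute this decomposition into the inverse Choi-Jamiolkowski map and verify that the result is exactly the Kraus form. Concretely, for every $X ∈ \olin{ℋ}$,
\begin{align}
\cE[X] = \tr[ℋ]{\omega^{\cE}\,(ⅈ¬{ℋ'} ⨂ Xᵀ)} = ∑_k \tr[ℋ]{(ⅈ ⨂ E_k)\proj{Γ}(ⅈ ⨂ E_k†)(ⅈ ⨂ Xᵀ)}\,,
\end{align}
and then invoke the Mirror Lemma (Lemma~\ref{lm:mirror}), namely $(ⅈ ⨂ Xᵀ) \ket{Γ} = (X ⨂ ⅈ) \ket{Γ}$ applied on the $ℋ ⨂ ℋ$ factor, to move $X$ through the maximally entangled vector. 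Tracing out $ℋ$ against $\ket{Γ}\!\bra{Γ}$ then produces precisely the sum $∑_k E_k X\, E_k†$. The step I expect to require the most care is keeping the tensor factor ordering straight: $\omega^{\cE}$ lives in $ℋ' ⨂ ℋ$ while $\veciso$ as defined earlier produces vectors in $ℋ ⨂ ℋ'$, so one has to be explicit about which factor the transpose acts on and which factor is being traced out.

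Finally, the trace conditions are a short computation from the Kraus form. For any $A ∈ \olin{ℋ}$, cyclicity gives $\tr{\cE[A]} = \tr<b>{\big(∑_k E_k† E_k\big) A}$. Requiring this to equal $\tr{A}$ for every $A$ forces $∑_k E_k† E_k = ⅈ$, since the Hilbert-Schmidt inner product in~§[pre/lin/hilbert-schmidt] is non-degenerate. For the trace non-increasing case, the same identity combined with the remark that $\tr{M N} \ge 0$ whenever $M, N \ge 0$ shows that $\tr{\cE[A]} \le \tr{A}$ for all $A ∈ \opos{ℋ}$ is equivalent to $∑_k E_k† E_k \le ⅈ$, completing the proof.
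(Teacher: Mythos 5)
Your proof is correct. There is, however, no proof in the paper to compare it against: the thesis explicitly defers the Kraus representation to textbooks and only proves the (less standard) trace non-increasing case of the Stinespring dilation. Your route through the Choi-Jamiolkowski isomorphism is the standard one, and it is a natural choice here because it relies only on machinery the paper has already assembled in Section~\ref{sc:choi-jami} and Lemma~\ref{lm:mirror}: positivity of $ω^{ℰ}$, an eigendecomposition with the eigenvalues absorbed into unnormalized vectors, the inverse map $X ↦ \tr[ℋ]{ω^{ℰ}\,(ⅈ ⨂ Xᵀ)}$, and the Mirror Lemma to carry $Xᵀ$ through $\ket{Γ}$ (justified by the cyclicity of the partial trace in operators acting on the traced factor). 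Finiteness of the Kraus set is automatic from the finite dimension of $ℋ' ⨂ ℋ$.

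Two details to tighten. First, the ordering issue you flag yourself is real: with the paper's convention $ω^{ℰ} ∈ \olin{ℋ' ⨂ ℋ}$, the map $ℰ$ acts on the first tensor factor of $\proj{Γ}$, so the eigenvectors should be written $\ket{ψ_k} = (E_k ⨂ ⅈ)\ket{Γ}$ rather than $(ⅈ ⨂ E_k)\ket{Γ}$; as written, the latter does not even map $ℋ ⨂ ℋ$ into $ℋ' ⨂ ℋ$. Second, in the trace non-increasing criterion, the observation that $\tr{MN} ≥ 0$ for positive $M, N$ only gives the implication from $\sum_k E_k† E_k ≤ ⅈ$ to $\tr{ℰ[A]} ≤ \tr{A}$; the converse direction needs a separate (one-line) argument, e.g.\ evaluating $\tr<b>{(ⅈ - \sum_k E_k† E_k)\, \proj{ψ}} ≥ 0$ on rank-one projectors to conclude the operator inequality. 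Neither point affects the soundness of the overall plan.
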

The operators $\{ E_k \}$ are called Kraus operators. Note that the adjoint $ℰ†$ of
$ℰ$ is completely positive and has Kraus operators $\{ E_k† \}$ since
\begin{align}
  \tr{ℰ†[B] A} = \tr{B\, ℰ[A]} = \tr<B>{ ∑_k E_k† B E_k\, A} \tn*{for all} A ∈ \olin{ℋ}¶\,.
\end{align}

Moreover, every CPM can be decomposed into its ☼{Stinespring dilation} form~\cite{stinespring54} as follows.

\begin{lemma}[Stinespring Dilation]
  \label{lm:stinespring}
  A super-operator $ℰ ∈ \olin{\olin{ℋ}, \olin{ℋ'}}$ is completely positive if and only if
  there exists a Hilbert space $ℋ''$ and an operator $L ∈ \olin{ℋ, ℋ' ⨂ ℋ''}$ such that
  \begin{align}
    ℰ(A) = \tr[ℋ'']{L A L†} \tn*{for all} A ∈ \olin{ℋ} ¶\,.
  \end{align}
  Moreover, if $ℰ$ is trace preserving then $L$ is an 
  isometry. If $ℰ$ is trace non-increasing, then $L$ is an isometry followed by a 
  projection in $\opos{ℋ''}$.
\end{lemma}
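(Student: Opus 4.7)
The plan is to reduce both directions to the Kraus representation of Lemma~\ref{lm:kraus}, bundling the Kraus operators into a single operator $L$ that places them into orthogonal ``branches'' of an auxiliary Hilbert space $ℋ''$.

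The easy direction ($\Leftarrow$) is immediate: if $ℰ[A] = \tr[ℋ'']{L A L†}$, then $ℰ$ factors as the conjugation $A ↦ L A L†$, which the text already identifies as a completely positive map, followed by the partial trace, also completely positive; hence $ℰ$ is completely positive.

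For the forward direction ($\Rightarrow$), I would fix a Kraus decomposition $ℰ[A] = ∑_{k=1}^n E_k A\, E_k†$ from Lemma~\ref{lm:kraus}, take $ℋ''$ with orthonormal basis $\{ \ket{k} \}_{k=1}^n$, and set $L := ∑_k E_k ⨂ \ket{k}$. A short computation using $\braket{k|k'} = δ_{kk'}$ should then yield $\tr[ℋ'']{L A L†} = ∑_k E_k A E_k† = ℰ[A]$, giving the dilation. In the trace-preserving case the same orthogonality gives $L† L = ∑_k E_k† E_k = ⅈ$ directly, so $L$ is an isometry.

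The main obstacle is the trace non-increasing case, where $L† L ≤ ⅈ$ but $L$ is no longer an isometry; one must exhibit an isometry $V$ and a projector $P ∈ \opos{ℋ''}$ with $L = (ⅈ ⨂ P) V$. My plan is to spectrally decompose the defect $D := ⅈ - L† L = ∑_l λ_l \proj{v_l} ≥ 0$, adjoin additional orthonormal vectors $\{ \ket{\tilde l} \}$ to $ℋ''$, pick an arbitrary unit vector $\ket{w} ∈ ℋ'$, and set $F_l := √{λ_l}\, \ket{w}\!\bra{v_l}$. Then $V := L + ∑_l F_l ⨂ \ket{\tilde l}$ satisfies $V† V = L† L + D = ⅈ$, while the projector $P$ onto $\vecspan{\ket{k} : 1 ≤ k ≤ n}$ inside $ℋ''$ annihilates the correction terms, giving $(ⅈ ⨂ P) V = L$. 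The one subtle point is dimensional: taking the range of each $F_l$ to be the one-dimensional subspace spanned by $\ket{w}$ makes the construction work regardless of $\dim ℋ'$.
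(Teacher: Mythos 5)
Your proof is correct and follows essentially the same route as the paper's: both reduce to the Kraus representation, bundle the Kraus operators into $L = ∑_k E_k ⨂ \ket{k}$, and in the trace non-increasing case extend $ℰ$ to a trace-preserving map whose dilation is an isometry, recovering $ℰ$ by projecting onto the original branches of $ℋ''$. Your only deviation is to split the defect $ⅈ - L†L$ into rank-one pieces $√{λ_l}\,\ket{w}\!\bra{v_l}$ so that the added correction operators genuinely map $ℋ$ into $ℋ'$, which is in fact slightly more careful than the paper's single added Kraus operator $\big(ⅈ - ∑_k E_k† E_k\big)^{1/2}$, an element of $\olin{ℋ}$ rather than $\olin{ℋ, ℋ'}$.
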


Proofs of these lemmas can be found in most quantum information textbooks (see, e.g.~\cite{nielsen00}). We will only prove the result for trace non-increasing CPMs, which is a bit less standard.

\begin{proof}
  Let $ℰ$ be a trace non-increasing CPM. First, note that its Kraus representation has to satisfy,
  for all $A ∈ \opos{ℋ}$,
  \begin{align}
    \tr{ℰ(A)} = \tr<B>{\sum_k E_k A\, E_k†} = \tr<B>{\sum_k E_k† E_k A} 
      ≤ \tr{A}  ¶\,.
  \end{align}
  This is equivalent to the condition $\sum_{k=1}^n E_k† E_k ≤ ⅈ$. Hence, the map can be extended 
  to a trace preserving CPM by adding another Kraus operator into the sum, e.g.
  \begin{align}
    E_{n+1} =  \Big( ⅈ - \sum_{k=1}^n E_k† E_k \Big)^{\frac{1}{2}} ¶.
  \end{align}
  We now construct a Stinespring dilation for this extended operation. In fact, a possible dilation is $L = \sum_{k=1}^{n+1} E_k ⨂ \ket{e_k}$, where $\{ \ket{e_k} \}$ is an orthonormal basis of $ℋ''$. This operation is an isometry and a Stinespring dilation of $ℰ$ can be recovered by projecting onto $\sum_{k=1}^n \proj{e_k}$ after applying $L$.
\end{proof}

\section{Quantum Mechanics}
\label{se:pre/qm}

☼*{quantum!mechanics} 

In this thesis, we will use a mathematical model for ☼[quantum
mechanics]{quantum!mechanics}\,|\,the density operator formalism on finite-dimensional
Hilbert spaces\,|\, that is restricted to physical systems with a finite 
dimensional configuration 
space. This means that continuous
observables such as position and momentum need to be considered
discretized and bounded. While it is unknown whether such a framework is
sufficient to describe
all possible correlations between observations of physical systems~\cite{scholz08}\footnote{More precisely, it is unclear whether all correlations can be approximated to arbitrary precision using the density operator formalism.}, 
it offers the opportunity to
focus on the main physical concepts without getting lost in delicate
mathematical arguments. Moreover, the main results of this thesis are
independent of the actual ☼{dimension} of the physical
systems under consideration and some of the results have already been re-derived
in a more general model of quantum mechanics that is based on infinite dimensional 
Hilbert spaces~\cite{furrer10} or (von Neumann) operator algebras of 
observables~\cite{furrer11}.

This section is partly inspired by Nielsen and Chuang~\cite{nielsen00} as well as Hardy~\cite{hardy01} and introduces quantum mechanics from a strictly information theoretic
perspective.\footnote{Quantum mechanics is often formulated in a way that highlights its relation to classical mechanics. In quantum information science, due
to its interdisciplinary nature between physics, computer science and information theory, a different approach to quantum mechanics has proven fruitful.}
The most important deviations from the standard treatment of quantum mechanics
are pointed out.

\subsection{Systems and States}

We use a very abstract notion of ☼[physical systems]{system}, describing them as general
purpose information carriers without specifying their actual physical
realization. In fact, the only system parameter we consider is its dimension, 
which, as we will see, corresponds to the dimension of the ☼{Hilbert space} used 
to describe the system.

The ☼{dimension} of an isolated system is given by the number
of mutually perfectly distinguishable states that can be prepared on the system. This assumes
perfect (idealized) preparation and measurement equipment. Alternatively, the dimension can be 
interpreted as the information storage capacity of the system. Very 
abstractly, an encoder is a map $ℰ$ from a set $\cX$ to states of the system and a decoder $\sD$ is a map from states of the system to $\cX$. A pair $\{ ℰ, \sD \}$ of encoder and decoder is perfect if $x = \sD[ℰ[x]]$ for all $x ∈ \cX$. The dimension of a system is then given by the maximum cardinality of a set $\cX$ such that there exist perfect encoders and decoders between $\cX$ and states of the system.

The simplest such system, a ☼{qubit} or two-level system, has dimension two and may
have different physical realizations. For example, the information could be
encoded as the spin degree of freedom of an electron or the polarization of a
photon.

\begin{quote}
  \vspace{-0.5cm}
  \begin{postulate}[State Space]
    \label{po:system}
    Isolated quantum ☼[systems]{system} are modeled as Hilbert spaces with 
    the dimension of the system.
    The system's ☼[state]{quantum state} is represented by a positive semi-definite operator with
    unit trace acting on this space.
  \end{postulate}
\end{quote}

The state of an isolated quantum system $A$ is thus 
fully characterized by all linear functionals on the 
state space, i.e.\ the functionals $\tr{ρ¬A L¬A}$ where $L¬A ∈ \olin{ℋ¬A}$ and $ρ¬A$ is the state 
of the system.

We denote isolated physical systems with capital letters, i.e.\ $A$, $B$, $C$,
and their associated Hilbert spaces with $ℋ¬A, ℋ¬B, ℋ¬C$. The states themselves
are denoted using lowercase greek letters, i.e.\ $ρ¬A$, $σ¬B$, $τ¬C$,
where the subscripts indicate which system is meant when necessary.
The dimension of a system $A$ is denoted by $d¬A := \dim{ℋ¬A}$.

We use $\onorm{ℋ¬A}$ to denote the set of quantum states on $A$, i.e.\ the
set $\onorm{ℋ¬A} := \{ ρ ∈ \opos{ℋ¬A} : \tr{ρ} =  1\}$.
Hence, a state of a system $A$ is represented as an element $ρ¬A ∈ \onorm{ℋ¬A}$.
In addition, we sometimes employ a larger set of states on $ℋ¬A$, the sub-normalized states
$\osub{ℋ¬A} := \{ ρ ∈ \opos{ℋ¬A} : 0 < \tr{ρ} ≤ 1 \}$. These states by themselves do not have a physical interpretation and usually only appear in technical statements
However, they can be seen as normalized quantum states on a Hilbert space $ℋ¬{A'}$ that are projected onto a subspace $ℋ¬A$ of $ℋ¬{A'}$.

We call a positive semi-definite operator $ρ¬A ∈ \opos{ℋ¬A}$ ☼[pure]{state!pure} if it has rank one, i.e.\ if $\rank{ρ¬A} = 1$. Pure operators can be represented as kets $\ket{ρ}¬A ∈ ℋ¬A$, where $\ket{ρ}[A]$ is determined by the relation $ρ¬A = \proj{ρ}[A]$ up to a phase factor. An operator that is not pure is called ☼[mixed]{state!mixed}. We often employ the completely mixed state on a system $A$, which is given by $π¬A := ⅈ¬A / d¬A$.

In most introductions to quantum mechanics\,|\,supposedly for historical reasons\,|\,the theory is first formulated in terms of pure quantum states and then later extended to arbitrary mixed states. From an information theoretic perspective, however, mixed states are more fundamental as they are generalizations of classical ☼[random variables]{random variable}.

\subsubsection{Classical Registers}

Discrete ☼[probability distributions]{probability} can be conveniently represented as states 
of a quantum system. We call these systems ☼[registers]{register} and typically denote them by the letters $X$, $Y$ or $Z$. To a register $X$ and its respective Hilbert space $ℋ¬X$, we associate an orthonormal basis $\{ \ket{x}[X] \}$ where $x ∈ \cX$ and $\cX$ is a set with cardinality $\abs{\cX} = d¬X$. 

A ☼[probability distribution]{probability} on $\cX$ is a map $P¬X: \cX → [0, 1]$ such that $∑_x P¬X(x) = 1$. It can be represented as a quantum state $ρ¬X$ on $X$, i.e.\
\begin{align} 
  ρ¬X = ∑_{x} P¬X(x)\, \proj{x}[X] ¶\,.
\end{align}  
It follows from the properties of $P¬X$ that this state is positive semi-definite and has 
unit trace.
In classical probability theory, a register corresponds to a discrete ☼{random variable} and the 
state of a register to the probability distribution over the random variable.
In this sense, quantum mechanics can be seen as a generalization of classical probability theory.
Moreover, registers will become important to describe the outcomes of measurements, 
as we will see in the following.

\subsection{Separated Systems}

Consider two separate quantum systems, $A$ and $B$, modeled by a Hilbert space $ℋ¬A$ and $ℋ¬B$, 
respectively. We can prepare $d¬A$ different perfectly distinguishable states on system $A$ and 
$d¬B$ different perfectly distinguishable states on system $B$. On the joint system $AB$, we can 
thus prepare $d¬A d¬B$ different perfectly distinguishable states.
Hence, according to Postulate~\ref{po:system}, the dimension of the ☼[joint system]{system!joint} 
is $d¬{AB} = d¬A d¬B$ and it can be modeled using the Hilbert space 
$ℋ¬{AB} \equiv ℋ¬A ⨂ ℋ¬B$ of dimension $d¬A d¬B$. ☼[Joint states]{state!joint} of the two systems are then described by normalized positive semi-definite operators 
$ρ¬{AB} ∈ \onorm{ℋ¬{AB}}$.

Given a state $ρ¬{AB}$ on the joint system, what are the states of the individual systems $A$ and 
$B$? As we have seen before, the state of the system $A$ is fully characterized by the linear 
functionals acting on it, i.e.\ the functionals $\tr{ρ¬{AB} L¬A} = \tr{ρ¬{AB} (L¬A ⨂ ⅈ¬B)} = \tr
{ \tr[B]{\rho¬{AB}} L¬A }$. Here, we introduced the notation $\trace¬B \equiv \trace¬{ℋ¬B}$ for
the partial trace over the subsystem $B$. Since the state on $A$ is thus fully characterized by linear functionals on the operator $\tr[B]{ρ¬{AB}}$, we define the marginal state or ☼{marginal} of $ρ¬{AB}$ on $A$ as $ρ¬A := \tr[B]{ρ¬{AB}}$. Similarly, we define the marginal of $ρ¬{AB}$ on 
$B$ as $ρ¬B := \tr[A]{ρ¬{AB}}$. In the following, the marginals are introduced 
implicitly with the joint state.\footnote{Whenever we introduce a state (e.g.\ $ρ¬{ABC}$) on a joint system, we also 
implicitly introduce all its marginals (e.g.\ $ρ¬{AB} = \tr[C]{ρ¬{ABC}}$ and $ρ¬C = \tr[AB]{ρ¬{ABC}}$).}

\subsubsection{Entanglement}

We call a state $ρ¬{AB}$ of a joint quantum system $AB$ ☼{separable} if it can be 
written in the form
\begin{align}
  ρ¬{AB} = ∑_k σ¬A^k ⨂ τ¬B^k\,, \tn*{where} σ¬A^k ∈ \opos{ℋ¬A} \tn{and} 
  τ¬B^k ∈ \opos{ℋ¬B} ¶[sep-state]\,.
\end{align}
Otherwise, it is ☼{entangled}. The occurrence of entangled states is one of the
most intriguing properties of the formalism of quantum mechanics. 

The prime example of an entangled state is the ☼[maximally
entangled]{entangled} state 
\begin{align}
  \ket{γ}[AA'] = \frac{1}{\sqrt{d¬A}} ∑_i \ket{e_i}[A] ⨂ \ket{e_i}[A'] ¶\,,
\end{align} 
where $\{ \ket{e_i} \}$ is an orthonormal basis of $ℋ¬A \iso ℋ¬{A'}$. This state
cannot be written in the form~§[sep-state] as the following argument, due to Peres and 
Horodecki~\cite{peres96,horodecki96}, shows.
Consider the super-operator $\sT : ρ¬{A'} ↦ ρ¬{A'}ᵀ$, where the transpose is taken with regards to the basis $\{ \ket{e_i} \}$ of $A'$. This super-operator is called the ☼{partial 
transpose} (on $A'$) and it is positive but not completely positive. Applied to separable 
states~§[sep-state], it always results in positive semi-definite states 
$∑_k σ¬A^k ⨂ \big(τ¬B^k\big)ᵀ$.
Applied to $γ¬{AA'}$, however, we get
\begin{align}
  \sT[γ¬{AA'}] = \frac{1}{d¬A} ∑_{i,j} \ket{e_i}\!\bra{e_j} ⨂ \sT \big[\ket{e_i}\!\bra{e_j} \big]
    = \frac{1}{d¬A} ∑_{i,j} \ket{e_i}\!\bra{e_j} ⨂ \ket{e_j}\!\bra{e_i} ¶\,.
\end{align}
This operator is not positive semi-definite. For example, we have
\begin{align}
  \braket{φ|\sT[γ¬{AA'}]|φ} = -\frac{2}{d¬A}, \tn*{where} 
    \ket{φ} = \ket{e_1} ⨂ \ket{e_2} - \ket{e_2} ⨂ \ket{e_1} ¶\,.
\end{align}
Generally, we have seen that a bipartite state is separable only if it remains positive semi-definite under the partial transpose. 
The converse is not true in general.

\subsubsection{Classical-Quantum Systems}

Joint systems where one (or more) subsystems are classical registers are of particular importance.
Consider, for example, the case where a classical register $X$ is described jointly with
a quantum system $A$. The possible ☼[joint states]{state!joint} $ρ¬{XA}$ can be written as\footnote{This describes the full set of states that have a classical marginal on $X$.}
\begin{align}
  ρ¬{XA} = ∑_x P¬X(x)\, \proj{x}[X] ⨂ τ¬A^x\,, \tn*{where} τ¬A^x ∈ \onorm{ℋ¬B} ¶[cq-state]
\end{align}
and $P¬X$ is a probability distribution on $\cX$. These states are called ☼[classical-quantum]{state!CQ} (CQ) states. They are of the form~§[sep-state] and, thus, separable.

A special case occurs when the two classical registers, $X$ and $Y$, are considered jointly. 
In this case, the 
states $ρ¬{XY}$ are of the form
\begin{align}
  ρ¬{XY} &= ∑_{x,y} P¬{XY}(x,y)\, \proj{x}[X] ⨂ \proj{y}[Y] ¶\\
    &= ∑_x P¬X(x)\, \proj{x}[X] ⨂ ∑_y P¬Y^x(y)\, \proj{y}[Y] ¶,
\end{align}
where $P¬{XY}$ is a probability distribution on $\cX \times \cY$ and, for each $x$, $P¬Y^x(y) = P¬{XY}(x,y)/P¬X(x)$ is
the conditional probability distribution on $\cY$ given a fixed $x ∈ \cX$. Such states allow the description of
arbitrarily correlated classical random variables.

\subsubsection{Purifications and Extensions}
☼*{state!pure}

For any state $ρ¬A$ of a system $A$, we can find a ☼{purification} on an auxiliary
system $A'$ with $ℋ¬{A} \iso ℋ¬{A'}$. A purification is a pure state $\ket{ρ} ∈ ℋ¬{AA'}$ 
of the joint system $AA'$ with the property that $ρ¬A = \tr[A']{ρ¬{AA'}}$. More specifically, if $ρ¬A = ∑_i λ_i\, \proj{e_i}[A]$ is the eigenvalue decomposition of $ρ¬A$, then a possible purification is given by $\ket{ρ}[AA'] = ∑_i √{λ_i}\, \ket{e_i}[A] ⨂ \ket{e_i}[A']$.
Purifications of $ρ¬A$ are separable if and only if $ρ¬A$ is pure.
More generally, we call
a (not necessarily pure) state $ρ¬{AA'}$ that satisfies $\tr[A']{ρ¬{AA'}} = ρ¬A$ an ☼{extension} of the
state $ρ¬A$. 

Purifications of CQ states of the form~§[cq-state] can be constructed as
\begin{align}
  \ket{ρ}[XX'AA'] = ∑_x \sqrt{P¬X(x)}\, \ket{x}[X] ⨂ \ket{x}[X'] ⨂ \ket{τ^x}[AA'] ¶\,,
\end{align}
where $\ket{τ^x}$ is a purification of $τ^x$ on $AA'$. We call the two registers
$X$ and $X'$ ☼{coherent classical}. In particular, the marginal states $ρ¬{XAB}$ and
$ρ¬{X'AB}$ are classical on $X$ and $X'$, respectively, and they are equal with regards to the isomorphism $\ket{x}[X] ↦ \ket{x}[X']$.

\subsection{Evolutions and Measurements}

\subsubsection{Evolution}

The ☼{evolution} of a separate quantum system is most generally described by a ☼{quantum channel}. A quantum channel is a linear map (i.e.\ a super-operator) from quantum states on a system $A$ to quantum states 
on a system $B$. Since such super-operators map quantum states onto quantum states, 
they must necessarily be positive and trace-preserving. Moreover, since quantum channels describe the evolution of quantum systems that may be 
part of a larger joint system, they are required to map positive semi-definite states
of any joint system to positive semi-definite states. This implies that they
are completely positive and, thus, ☼[TP-CPMs]{TP-CPM}.

\begin{postulate}[Evolution]
  The ☼{evolution} of quantum systems is described by 
  trace-preserving completely positive maps.
\end{postulate}

An important example of such a map is a time evolution. Here, system $A$ is any system at time $t_0$ and system $B$ the same system at a later time $t_1 > t_0$.
In the traditional treatment of quantum mechanics, the time evolution of a system is 
described by a ☼{unitary} evolution that is induced by the ☼{Hamiltonian} of the system. A unitary evolution is a special case of a TP-CPM and describes the evolution of a closed system, i.e.\ a system that does not interact with any other system.

\subsubsection{Measurement}

A ☼{measurement} of a quantum systems can be described in the above framework of general evolutions.

\begin{postulate}[Measurement]
  A quantum ☼{measurement} is a trace-pre\-ser\-ving completely positive map from
  a system to a classical register containing the measurement outcome and a 
  system that contains the state of the system after measurement.
\end{postulate}

Let $A$ be a quantum system, $X$ the classical ☼{register} 
containing the measurement result and $A'$
the system modeling the quantum system after measurement.
The corresponding measurement map, $ℳ ∈ \tpcpm{ℋ¬A, ℋ¬{XA'}}$, with $ℋ¬{A'} \iso ℋ¬A$, has a Kraus decomposition
\begin{align}
  ℳ : ρ ↦ ∑_k E_k ρ\, E_k†, \tn*{where} E_k ∈ \olin{ℋ¬A, ℋ¬{XA'}} 
  \tn{and} ∑_k E_k† E_k = ⅈ¬A ¶.
\end{align}
Since the resulting state is required to be classical on the register $X$, we further know that
\begin{align}
  ℳ[ρ¬A] = ∑_x P¬X(x)\, \proj{x} ⨂ τ¬{A'}^x 
    \tn*{where} τ¬{A'}^x ∈ \onorm{ℋ¬{A'}} ¶[m/tpcpm].
\end{align}
Here, $P¬X(x)$ is the ☼{probability} that the outcome ``$x$'' is measured and $τ¬{A'}^x$ is the
state of the system conditioned on the event that the outcome ``$x$'' has been measured. 
Due to~§[m/tpcpm], the Kraus operators $E_k$ necessarily have the form $E_k = \ket{x_k} ⨂ F_k$,
where $F_k ∈ \olin{ℋ¬A, ℋ¬{A'}} \iso \olin{ℋ¬A}$ and, thus,
\begin{align}
  τ¬{A'}^x = \frac{1}{P(x)} ∑_{k: x_k = x} F_k ρ¬A F_k† \tn*{and}
    P¬X(x) = \tr<B>{∑_{k: x_k = x} F_k† F_k\, ρ} ¶.
\end{align}
The second equality follows from the fact that $τ¬{A'}^x$ is normalized.

Hence, a measurement is fully specified by the operators $\{ F_k \}$ and the partitioning 
$\{ k : x_k = x \}$. Note that this viewpoint is consistent; specifically, $P¬X$ is a probability
distribution since 
\begin{align}
  0 ≤ ∑_{k: x_k = x} F_k† F_k ≤ ⅈ¬{A} \tn*{and} 
    ∑_x ∑_{k: x_k = x} F_k† F_k = ∑_k F_k† F_k = ⅈ¬A ¶.
\end{align}

Often we are not interested in the specific state after measurement but only the probability 
distribution the measurement induces on the register $X$. In this case, the
measurement is fully characterized by the operators $M_x = ∑_{k: x_k = x} F_k F_k†$, which define the probability $P¬X(x) = \tr{M_x ρ}$. The set $\{ M_x \}$ is called a ☼[positive operator-valued 
measure]{POVM} (POVM).
\begin{definition}[POVM]
  \label{df:POVM}
  A positive operator-valued measure on a quantum system $A$ is a set 
  $\{ M_x \}$ with $M_x ∈ \opos{ℋ¬A}$ and $∑_x M_x = ⅈ¬A$. The corresponding
  measurement TP-CPM $ℳ : ℋ¬A → ℋ¬X$ is
  given by $ρ ↦ ∑_x \tr<b>[A']{M_x\, ρ}\,\proj{x}$
  and the operators $M_x$ are called ☼[POVM elements]{POVM!element}.
\end{definition}
☼*{POVM!element}
We often use the following ☼{Stinespring dilation} of POVM measurements. The isometry
$U ∈ \ouni{ℋ¬A, ℋ¬{A'XX'}}$ maps the state to the classical register $X$ containing the measurement outcome, a ☼{coherent classical} copy of it, $X'$, 
and a possible post-measurement state on $A'$. The post-measurement states that result 
from a POVM are not unique; however, conventionally, one takes $x_k = k$ where $k$ is chosen 
from the same set as $x$ and $F_k = √{M_x}$. Thus,
\begin{align}
  ℳ[ρ] = \tr[X'A']{U ρ\, U†} \tn*{where} U = ∑_x \ket{x}[X] ⨂ \ket{x}[X'] ⨂ √{M_x} ¶\,.
\end{align}

A special case of a POVM occurs when the ☼[POVM elements]{POVM!element} are projectors, i.e.\ $M_x = M_x^2$. We call such a measurement a ☼[projective measurement]{measurement!projective}.
If these projections are of rank $1$, e.g.~$M_x = \proj{x}[X]$, the corresponding measurement TP-CPM takes on the simple
form $ℳ: ρ¬A ↦ ∑_x \braket{x|ρ¬A|x}\, \proj{x}[X]$.

Comparing this to the traditional treatment of quantum measurements, we note that it is not necessary to introduce a separate formalism for measurements and that we can treat measurements as a special case of an evolution. 
This is possible since
we always consider the outcome of a measurement as a ☼{random variable} that is correlated with the state of the system after measurement and do not condition the resulting quantum state on a particular measurement outcome.

\section{Mathematical Toolkit}
\label{se:pre/math}

This section covers the most important mathematical tools used throughout this thesis.
The (smooth) min- and max-entropy introduced in Chapter~\ref{ch:entropies} can be formulated as semi-definite programs. Operator monotone functions are used to explore properties of another class of entropies, generalizations of Rényi entropies, in Chapter~\ref{ch:aep}.

\subsection{Semi-Definite Programs}
\label{sc:premath/sdp}
☼*{semi-definite program|see{SDP}}

This overview is based on the Watrous Lecture Notes~\cite{watrous-ln08} and all proofs can be found there. A ☼[semi-definite program]{SDP} (SDP) is a triple $\{A, B, Ψ \}$, where $A ∈ \oherm{ℋ}$, $B ∈ \oherm{ℋ'}$ and $Ψ ∈ \olin{\olin{ℋ}, \olin{ℋ'}}$ is a super-operator from $ℋ$ to $ℋ'$ that preserves Hermiticity. The following two optimization problems are associated with the semi-definite program:
\begin{align}
  \begin{array}{rlcrl}
    \multicolumn{2}{c}{\underline{\tn{primal problem}}} & \qquad & 
      \multicolumn{2}{c}{\underline{\tn{dual problem}}} \vspace{0.2cm} \\
    \texttt{minimize}: & \ip{A, X} & & \texttt{maximize}: & \ip{B, Y} \\
    \texttt{subject to}: & Ψ[X] ≥ B & & \texttt{subject to}: & Ψ†[Y] ≤ A \\
    & X ∈ \opos{ℋ} &&& Y ∈ \opos{ℋ'}
  \end{array} ¶
\end{align}
We call an operator $X ∈ \opos{ℋ}$ primal feasible if it satisfies $Ψ[X] ≥ B$. Similarly, we say that $Y ∈ \opos{ℋ'}$ is dual feasible if $Ψ†[Y] ≤ A$. Moreover, we denote the optimal solution of the primal problem with $α$ and the optimal solution of the dual problem with $β$. Formally,
\begin{align}
  α &:= \inf \big\{ \ip{A, X} : X ∈ \opos{ℋ}, Ψ[X] ≥ B \big\} ¶\\
  β &:= \sup \big\{ \ip{B, Y} : Y ∈ \opos{ℋ'}, Ψ†[Y] ≤ A \big\} ¶[sdp/alphabeta].  
\end{align}

The following two theorems provide a relation between the primal and dual problems of an SDP.
\begin{theorem}[Weak Duality]
  \label{th:sdp/weak}
  Let $\{A, B, Ψ\}$ be a SDP and $α$, $β$ defined as in~§[sdp/alphabeta]. Then, $α ≥ β$.
\end{theorem}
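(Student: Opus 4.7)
The plan is to show the stronger pointwise statement that $\ip{A,X} \geq \ip{B,Y}$ for every pair of a primal feasible $X$ and dual feasible $Y$, and then take infimum over $X$ and supremum over $Y$ to conclude $\alpha \geq \beta$. The key tool is the adjoint relation $\ip{\Psi[X], Y} = \ip{X, \Psi^\dagger[Y]}$ (definition of the adjoint super-operator with respect to the Hilbert-Schmidt inner product) together with the positivity fact noted in the Preliminaries: if $M, N \in \opos{\cH}$ then $\ip{M, N} = \tr{MN} \geq 0$.

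Concretely, fix any primal feasible $X \in \opos{\cH}$, so that $\Psi[X] - B \geq 0$, and any dual feasible $Y \in \opos{\cH'}$, so that $A - \Psi^\dagger[Y] \geq 0$. I would then compute
\begin{align}
\ip{A, X} - \ip{B, Y} &= \ip<b>{A - \Psi^\dagger[Y], X} + \ip<b>{\Psi^\dagger[Y], X} - \ip{B, Y} \\
&= \ip<b>{A - \Psi^\dagger[Y], X} + \ip<b>{Y, \Psi[X] - B},
\end{align}
using the adjoint identity in the second line. Both terms on the right-hand side are inner products of pairs of positive semi-definite operators (by feasibility of $X$ and $Y$), hence non-negative, giving $\ip{A, X} \geq \ip{B, Y}$.

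Taking the infimum over primal feasible $X$ on the left and the supremum over dual feasible $Y$ on the right then yields $\alpha \geq \beta$, with the usual conventions $\alpha = +\infty$ if the primal is infeasible and $\beta = -\infty$ if the dual is infeasible, so the inequality holds trivially in the infeasible cases. No real obstacle is anticipated: the proof is essentially a one-line manipulation once one has the adjoint relation and the fact that $\ip{M,N} \geq 0$ for $M, N \geq 0$, both of which are already recorded in Section~\ref{se:pre/lin}.
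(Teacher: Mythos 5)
Your proof is correct and is exactly the standard weak-duality argument: the paper does not reproduce a proof of this theorem (it defers to the Watrous lecture notes it cites), and the argument given there is the same decomposition of $\ip{A,X}-\ip{B,Y}$ into two inner products of positive semi-definite operators via the adjoint relation. Nothing to add.
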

This implies that every dual feasible operator $Y$ provides a lower bound of 
$\ip{B, Y}$ on $α$ and every primal feasible operator $X$ provides an upper bound of 
$\ip{A, X}$ on $β$.

\begin{theorem}[Strong Duality] 
  \label{th:sdp/strong}
  Let $\{A, B, Ψ\}$ be a SDP and $α$, $β$ 
  defined as in~§[sdp/alphabeta]. Then the following holds:
  \begin{itemize}
    ☛ If $α$ is finite and there exists an operator $Y > 0$ such that $Ψ†[Y] < A$, then $α = β$ 
      and there exists a primal feasible $X$ such that $\ip{A, X} = α$.
    ☛ If $β$ is finite and there exists an operator $X > 0$ such that $Ψ[X] > B$, then $α = β$
      and there exists a dual feasible $Y$ such that $\ip{B, Y} = β$.
  \end{itemize}
\end{theorem}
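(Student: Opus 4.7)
The plan is to prove the first bullet; the second will follow from it by symmetry, after identifying the dual as the primal of the SDP with triple $\{-B, -A, -Ψ†\}$ and noting that the "primal attainment" conclusion transfers to "dual attainment" under this identification. Weak duality (Theorem~\ref{th:sdp/weak}) already gives $α \geq β$, so two things remain: (a)~produce dual feasible operators with objective approaching $α$, and (b)~exhibit a primal feasible $X$ attaining $α$.

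For (a), my plan is to apply the hyperplane separation theorem in the finite-dimensional real Hilbert space $\oherm{ℋ'} \oplus ℝ$ to the convex set
\begin{align}
  K := \big\{ \big(Ψ[X] - B - W,\ \ip{A, X}\big) : X \in \opos{ℋ},\ W \in \opos{ℋ'} \big\},
\end{align}
designed so that $(0, r) \in K$ precisely when some primal feasible $X$ attains value $r$. By the definition of $α$, the point $(0, α - ε)$ lies outside $\overline{K}$ for every $ε > 0$, so separation yields a nonzero pair $(Y_ε, λ_ε) \in \oherm{ℋ'} \times ℝ$ and a constant $c_ε$ with $\ip{Y_ε, M} + λ_ε r \geq c_ε > λ_ε(α - ε)$ for all $(M, r) \in K$. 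Recession-cone analysis (the direction $(-W, 0)$ lies in $K$'s recession cone for every $W \geq 0$) forces $-Y_ε =: Y_ε' \in \opos{ℋ'}$, and varying $X$ with $W = 0$ gives $\ip{λ_ε A - Ψ†[Y_ε'], X} \geq 0$ for all $X \geq 0$, whence $Ψ†[Y_ε'] \leq λ_ε A$. Evaluating the separating inequality at $(M, r) = (-B, 0)$ gives $\ip{B, Y_ε'} \geq λ_ε(α - ε)$. To exclude the degenerate case $λ_ε = 0$: then $Y_ε' \neq 0$ would be positive with $Ψ†[Y_ε'] \leq 0$ and $\ip{B, Y_ε'} > 0$; pairing with any primal feasible $X$ (which exists since $α$ is finite) gives $0 < \ip{B, Y_ε'} \leq \ip{Ψ[X], Y_ε'} = \ip{X, Ψ†[Y_ε']} \leq 0$, a contradiction. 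Thus $λ_ε > 0$ and $Y_ε'/λ_ε$ is dual feasible with value at least $α - ε$, establishing $β \geq α$.

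The main obstacle is (b), and this is where the dual strict feasibility hypothesis ($\tilde Y > 0$, $Ψ†[\tilde Y] < A$) becomes essential: without it, the infimum in the primal need not be attained even when $α = β$. The plan is to show the sublevel set $S := \{ X \in \opos{ℋ} : Ψ[X] \geq B,\ \ip{A, X} \leq α + 1 \}$ is compact. Using the identity
\begin{align}
  \ip{A, X} = \ip{A - Ψ†[\tilde Y], X} + \ip{\tilde Y, Ψ[X]}
\end{align}
together with $\ip{\tilde Y, Ψ[X]} \geq \ip{\tilde Y, B}$ for feasible $X$, the strict positivity of both $A - Ψ†[\tilde Y]$ and $\tilde Y$ produces a uniform bound on $\tr{X}$ in terms of $α$ and $\ip{\tilde Y, B}$. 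Hence $S$ is bounded and, being closed in $\opos{ℋ}$, is compact. Any minimizing sequence in $S$ then has a convergent subsequence whose limit is primal feasible with objective exactly $α$, completing the proof.
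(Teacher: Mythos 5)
The paper itself does not prove this theorem---it defers to Watrous's lecture notes---and your overall architecture (hyperplane separation for the duality gap, a compactness argument for attainment) is the standard one; your part (b) is correct. The problem is in part (a). The definition of $α$ only tells you that $(0, α-ε) \notin K$; it does \emph{not} tell you that $(0, α-ε) \notin \overline{K}$. The set $K$ is the image of the closed cone $\opos{ℋ} \times \opos{ℋ'}$ under an affine map, and such images are in general not closed --- this is precisely the mechanism behind SDPs with a positive duality gap, where $\overline{K}$ contains points $(0,r)$ with $r < α$. Without $(0,α-ε)\notin\overline{K}$ you only get non-strict separation, $c_ε \ge λ_ε(α-ε)$ with possible equality, and your exclusion of the degenerate case $λ_ε = 0$ then collapses: you would only obtain $\ip{B, Y_ε'} \ge 0$, and the chain $0 \le \ip{B,Y_ε'} \le \ip{X, Ψ†[Y_ε']} \le 0$ is no longer a contradiction. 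Note that, as written, part (a) never uses the hypothesis that some $\tilde Y > 0$ satisfies $Ψ†[\tilde Y] < A$; it would therefore prove that \emph{every} primal-feasible SDP with finite value has zero duality gap, which is false.

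The repair is to import your part (b) estimate into part (a) to prove that $K$ is closed. For any sequence $\big(Ψ[X_n] - B - W_n,\ \ip{A,X_n}\big)$ converging in $\oherm{ℋ'}\oplus ℝ$, the identity $\ip{A,X_n} = \ip{A - Ψ†[\tilde Y], X_n} + \ip{\tilde Y,\, Ψ[X_n]}$ together with $A - Ψ†[\tilde Y] \ge δ\, ⅈ$ and $\tilde Y \ge δ'\, ⅈ$ (for some $δ, δ' > 0$) bounds both $\tr{X_n}$ and $\tr{W_n}$, so a subsequence of $(X_n, W_n)$ converges and the limit point of the sequence lies in $K$. With $K$ closed, your strict separation of $(0,α-ε)$ becomes legitimate and the rest of part (a) goes through; moreover attainment of $α$ drops out immediately, since $\{ r : (0,r) \in K\}$ is then a closed subset of $ℝ$ with infimum $α$, making a separate part (b) unnecessary.
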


Optimization problems that can be formulated as semi-definite programs can be efficiently solved numerically.\footnote{For example, using the SeDuMi solver~\cite{SEDUMI} and YALMIP front-end~\cite{YALMIP}.}

\subsection{Operator Monotone Functions}
\label{sc:opmono}

Here, we discuss some useful properties of operator ☼{monotone}, ☼{concave} and ☼{convex} functions. This section is largely based on Chapter V of Bhatia~\cite{bhatia97} and we simply repeat the results here in the form we need in later chapters.

Operator monotone functions preserve the partial order on operators induced by `$≥$' and are, thus, necessarily monotone.

\begin{definition}
  Let $Ω \subseteq ℝ$. A function $f: Ω → ℝ$ is operator monotone on $Ω$ 
  if $A ≥ B$ implies $f(A) ≥ f(B)$ for any Hermitian operators $A, B$ with eigenvalues in $Ω$.
\end{definition}

Similarly, operator concave and convex functions generalize the concept of concavity and convexity to operators.

\begin{definition}
  Let $Ω$ be an interval on $ℝ$. 
  A function $f: Ω → ℝ$ is operator concave on $Ω$ if 
  $f \big( µ A + (1-µ) B) \big) ≥ µ f(A) + (1-µ) f(B)$ for all 
  Hermitian operators $A, B$ with eigenvalues in $Ω$ and all $µ ∈ [0, 1]$.
  The function $f$ is operator convex on $Ω$ if $-f$ is operator concave on $Ω$.
\end{definition}

Prominent examples of such functions include (cf.~Chapter V in~\cite{bhatia97})
\begin{itemize}
  ☛ The logarithm function, which is operator monotone on $ℝ^+$.
  ☛ The function $h : t ↦ - t \log t$ with its extension to $h(0) = \lim_{t → 0} h(t) = 0$, 
  which is operator concave on $ℝ_0^+$.
  ☛ The family of functions $g_α : t ↦ t^α$. These functions are operator concave and operator
  monotone on $ℝ_0^+$ for $α ∈ (0,1]$ and operator convex on $ℝ_0^+$ for $α ∈ [1, 2]$.
  (Note that these functions are convex but not operator convex if $α > 2$.)
\end{itemize}

We start with a straightforward application of Jensen's inequality:
\begin{lemma}
  \label{lm:jensen}
  Let $Ω$ be an interval on $ℝ$ and let $f: Ω → ℝ$ be concave on $Ω$. Then, for any $\ket{φ} 
  ∈ ℋ$ with $\norm{φ} = 1$ and $A ∈ \oherm{ℋ}$ with eigenvalues in $Ω$, we have 
    $\braket{φ | f(A) | φ} ≤ f ( \braket{φ|A|φ} )$.
\end{lemma}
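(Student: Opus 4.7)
The plan is to reduce this to the classical (scalar) Jensen's inequality by passing through the spectral decomposition of $A$. This is the natural move because $f(A)$ was defined earlier precisely via functional calculus on the spectrum, so the quadratic form $\braket{\varphi|f(A)|\varphi}$ immediately decomposes along the eigenbasis.

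First I would write $A = \sum_i \lambda_i \proj{e_i}$ in an eigenbasis of $A$, so that by the definition of functions applied to Hermitian operators we have $f(A) = \sum_i f(\lambda_i) \proj{e_i}$, where all $\lambda_i \in \Omega$ by hypothesis. Define $p_i := |\braket{e_i|\varphi}|^2$. The key observation is that the $p_i$ form a probability distribution: they are nonnegative, and since $\{\ket{e_i}\}$ is an orthonormal basis and $\norm{\ket{\varphi}} = 1$, we get $\sum_i p_i = \braket{\varphi|\varphi} = 1$.

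Then the two quadratic forms of interest become convex combinations of scalars:
\begin{align}
  \braket{\varphi|A|\varphi} = \sum_i p_i\, \lambda_i, \qquad
  \braket{\varphi|f(A)|\varphi} = \sum_i p_i\, f(\lambda_i).
\end{align}
Since $\Omega$ is an interval and each $\lambda_i \in \Omega$, the convex combination $\sum_i p_i \lambda_i$ also lies in $\Omega$, so $f$ is defined there. Applying the ordinary Jensen inequality for the concave function $f$ to the distribution $(p_i)$ and the values $(\lambda_i)$ yields
\begin{align}
  \sum_i p_i\, f(\lambda_i) \le f\!\Big( \sum_i p_i\, \lambda_i \Big),
\end{align}
which is precisely $\braket{\varphi|f(A)|\varphi} \le f(\braket{\varphi|A|\varphi})$.

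There is no real obstacle here; the only subtlety worth flagging is that $\Omega$ being an interval (rather than an arbitrary subset of $\mathbb{R}$) is essential to ensure the argument $\braket{\varphi|A|\varphi}$ lies in the domain of $f$, so that the right-hand side even makes sense. Everything else is bookkeeping via the spectral decomposition.
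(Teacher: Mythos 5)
Your proof is correct and follows essentially the same route as the paper's: expand in the eigenbasis of $A$, observe that the weights $\abs{\braket{e_i|φ}}^2$ form a probability distribution, and apply the scalar Jensen inequality to the concave function $f$. The remark that $Ω$ being an interval guarantees $\braket{φ|A|φ} ∈ Ω$ is a sensible extra observation that the paper leaves implicit.
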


\begin{proof}
  Using the eigenvalue decomposition $A = ∑_i λ_i \proj{e_i}$, we get
  \begin{align}
    \braket{φ | f(A) | φ} 
      = ∑_i f(λ_i)\, \abs{ \braket{φ | e_i} }^2 
        ≤ f \Big( ∑_i λ_i\, \abs{ \braket{φ | e_i} }^2 \Big)  
        = f \big( \braket{φ | A | φ} \big) ¶,
  \end{align}
  where we used that $∑_i \abs{\braket{φ | e_i}}^2 = 1$.
\end{proof}

A comprehensive generalization of Jensen's inequality to operator convex functions is given in~\cite{hansen03}. We state a specialized version here for completeness.
\begin{lemma}[Operator Jensen's Inequality]
  \label{lm:opjensen}
  Let $Ω$ be an interval on $ℝ$ and let $f: Ω → ℝ$ be continuous and operator concave on $Ω$. Then, 
  for any isometry $U: ℋ → ℋ'$ and $A ∈ \oherm{ℋ}$ with eigenvalues in $Ω$, 
  we have
  \begin{align}     
    U f(A) U† ≤ f (U A U†) ¶\,.
  \end{align}
\end{lemma}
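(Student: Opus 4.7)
The plan is to reduce the lemma to the unitary case, where it becomes an identity of the functional calculus. Since $U\colon \mathcal{H} \to \mathcal{H}'$ is an isometry, $\mathcal{H}'$ decomposes orthogonally as $\mathrm{Im}(U) \oplus \mathrm{Im}(U)^\perp$. I would pick an auxiliary Hilbert space $\mathcal{K}$ with an inclusion isometry $V\colon \mathcal{K} \to \mathcal{H}'$ onto $\mathrm{Im}(U)^\perp$, so that the block operator $\tilde{U} := U \oplus V \colon \mathcal{H} \oplus \mathcal{K} \to \mathcal{H}'$ is a unitary extension of $U$, satisfying $V V^\dagger = \mathbb{1}_{\mathcal{H}'} - UU^\dagger$.

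Setting $\tilde{A} := A \oplus 0_{\mathcal{K}}$, a direct block computation gives $\tilde{U} \tilde{A} \tilde{U}^\dagger = UAU^\dagger$. Because $\tilde{U}$ is unitary, the functional calculus is equivariant, hence
\[
  f(UAU^\dagger) = \tilde{U} f(\tilde{A}) \tilde{U}^\dagger = \tilde{U}\bigl(f(A) \oplus f(0)\mathbb{1}_{\mathcal{K}}\bigr)\tilde{U}^\dagger = U f(A) U^\dagger + f(0)\bigl(\mathbb{1}_{\mathcal{H}'} - UU^\dagger\bigr).
\]
Since $\mathbb{1} - UU^\dagger \geq 0$, the claimed inequality $U f(A) U^\dagger \leq f(UAU^\dagger)$ follows provided $f(0) \geq 0$.

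The main obstacle is conceptual rather than computational: operator concavity does not appear visibly in the derivation above, which rests solely on unitary equivariance of the functional calculus and on the sign of $f$ at zero. Within the conventions fixed earlier in this chapter the issue is harmless, since the functional calculus is defined to act only on the support of its argument, so that the residual term $f(0)(\mathbb{1} - UU^\dagger)$ drops out and the statement holds as an equality; moreover, every operator concave function used later (such as $t \mapsto t^\alpha$ for $\alpha \in (0,1]$ and $t \mapsto -t \log t$) satisfies $f(0) = 0$ in any case. The genuine content of operator concavity surfaces only in the more general Hansen--Pedersen operator Jensen inequality for contractions, of which the isometry case stated here is a convenient specialisation obtained by taking the contraction to be the coisometry $U^\dagger$.
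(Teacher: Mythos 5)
Your dilation argument is correct, and it necessarily differs from the thesis, because the thesis gives no proof of this lemma at all: it simply cites the Hansen--Pedersen operator Jensen inequality~\cite{hansen03} and records the isometry case ``for completeness''. Your computation makes explicit why that special case is cheap. Extending $U$ to a unitary $\tilde{U} = U ⨁ V$ and padding $A$ with a zero block gives the exact identity $f(U A U†) = U f(A) U† + f(0)\,(ⅈ - UU†)$, so the claimed inequality is equivalent to $f(0) ≥ 0$ and never invokes operator concavity; under the functional-calculus convention of Section~\ref{se:pre/lin} (functions act only on the support, leaving the kernel intact) the remainder term is absent and the statement holds with equality for arbitrary $f$. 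Your diagnosis that the concavity hypothesis is inert in the stated form is accurate, and so is your observation that the statement would actually be \emph{false} under the ordinary functional calculus for an operator concave $f$ with $f(0) < 0$ and a non-surjective $U$, so either the support convention or the hypothesis $0 ∈ Ω$, $f(0) ≥ 0$ must be read into the lemma.

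One caution, which your closing remark gestures at but does not drive home: the inequality that is actually invoked later, in the proof of Lemma~\ref{lm:quasi-mono}, is $V† f(X)\, V ≤ f(V† X V)$ for an isometry $V$ and an operator $X$ on the \emph{larger} space. That is the unital (Davis--Choi) direction, i.e.\ Hansen--Pedersen applied to the contraction $V$ itself (for which $V†V = ⅈ$, so no condition at $0$ is needed), and it is \emph{not} a consequence of the statement you have proved: taking $U = V$ and $A = V† X V$ in the lemma only controls $f(VV†\, X\, VV†)$, not $f(X)$, and conjugating back produces the inequality in the wrong direction. So you have given a complete and correct proof of the lemma as written, but your own analysis shows that the lemma as written is essentially vacuous and too weak for its downstream application; a self-contained treatment would have to prove the contraction/unital version, where operator concavity genuinely enters (for instance via the standard $2 \times 2$ unitary dilation of a contraction, which expresses $a^{†} x a$ as a corner of an average of two unitary conjugations).
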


%............................

\chapter{The Purified Distance}
\label{ch:pd}

%............................

% short notation for fidelity (expressed as a 1-norm)
\newcommand{\fid}[2]{\norm<?>{√{#1}√{#2}}[1]}
\newcommand{\fidn}[2]{\norm<n>{√{#1}√{#2}}[1]}
\newcommand{\fidb}[2]{\norm<b>{√{#1}√{#2}}[1]}
% complementary projector
\newcommand{\Pip}{Π^{\textnormal{\tiny $\perp$}}}

This chapter is based on~\cite{tomamichel09}, where the purified distance 
was first proposed as a metric on the space of sub-normalized quantum states. 
The usefulness of the purified distance will become apparent when it is applied 
to define the smooth min- and max-entropies in Chapter~\ref{ch:entropies},
providing them with natural properties such as invariance under local isometries
and various data processing inequalities.

\section{Introduction and Related Work}

Smooth ☼[entropies]{entropy}, evaluated for a quantum state $ρ$, are
defined indirectly via an optimization (either a maximization or a minimization)
of an underlying unsmoothed entropy over a set of states that are $ε$-close to
$ρ$, where $ε$ is a small ☼{smoothing parameter}. The resulting quantities are
called ☼{$ε$-smooth} entropies. (See, for example, Chapter~\ref{ch:entropies}, 
where the $ε$-smooth min- and max-entropies are defined in this way.)

Consequently, various definitions of such sets of close
states\,|\,subse\-quently called ☼[$ε$-balls]{$ε$-ball}\,|\,have appeared in the
literature.
However, to the best of our knowledge, none of the existing definitions
simultaneously exhibit the following two properties that are of particular
importance 
in the context of smooth entropies in the quantum regime.

\begin{itemize}
☛ Firstly, the definition of the $ε$-smooth entropies should 
be independent of the Hilbert spaces used to
represent the quantum state $ρ$. 

In particular, embedding $ρ$ into a larger Hilbert space prior to smoothing should leave the $ε$-smooth entropies unchanged. Note that, in general, embedding $ρ$ into a larger Hilbert space offers more flexibility for smoothing as more dimensions orthogonal to the support of $ρ$ become available for the optimization. Indeed, for some $ε$-balls that contain only normalized states, smoothing outside the support becomes advantageous and the smooth entropy thus depends on the Hilbert space representation of $ρ$.

We can avoid this problem by including sub-normalized quantum states in the $ε$-balls. 

☛ Secondly, it will be important that we can define a
ball of pure states that contains purifications of all the states in
a given $ε$-ball. 
This will allow us to establish duality relations
between smooth entropies and is achieved by using a
fidelity-based metric to determine $ε$-closeness.
\end{itemize}

\subsection{Main Contributions}

The following sections introduce a new metric on sub-normalized quantum states, the
☼{purified distance}. We call two quantum states
$ρ$ and $τ$ $ε$-close, denoted $ρ \ecl τ$, if and only if the purified distance between
them is at most $ε$. The purified distance has various interesting properties,
among them are the following.
\begin{result}[Purified Distance]
  The ☼{purified distance} is a metric on sub-normalized states and has the
  following properties:
  \begin{compact}
    ☛ If at least one of the states is normalized, it can be expressed in terms
      of the fidelity as $P(ρ,τ) = √{1 - F(ρ,τ)^2}$.
    ☛ It is an upper bound to the ☼{trace distance}.
    ☛ For any trace non-increasing CPM $ℰ$ and any states $ρ$ and $σ$, we have 
      $ρ \ecl τ ⇒ ℰ[ρ] \ecl ℰ[τ]$.      
    ☛ If $ρ¬{AB}$ is a state and $σ¬A \ecl ρ¬A$ is close to its marginal,
      then there always exists an extension $σ¬{AB}$ of $σ¬A$ with $σ¬{AB} \ecl
      ρ¬{AB}$.
  \end{compact}
\end{result}

\subsection{Outline}

Section~\ref{sc:pd/math} introduces two metrics on the set of sub-normalized states, the
generalized trace distance and the purified distance.
In Section~\ref{sc:pd/prop}, we discuss various properties of the purified distance, including
its relation to the generalized trace distance, its monotonicity under trace non-increasing
CPMs and an adaption of Uhlmann's theorem to the purified distance.
Section~\ref{sc:pd/not} then explains some notational conventions, made possible by the use of the purified distance as a metric, that should help the reader
through the remaining chapters.

\section{Two Metrics for Quantum States}
\label{sc:pd/math}

For the remainder of this chapter, let $ℋ$ be an arbitrary
finite-dimensional Hilbert space.\footnote{Note also that most results of this
chapter have recently been generalized to the framework of general von Neumann
algebras~\cite{furrer11}.}
The two most common measures of distance between normalized quantum states
are the ☼{trace distance} and the ☼{fidelity}. 

\subsection{Generalized Trace Distance}

We start by 
introducing a straight-forward generalization of the trace distance to sub-normalized
quantum states.

\begin{definition}
  \label{df:gtd}
  For $ρ, τ ∈ \osub{ℋ}$, we define the ☼[generalized
  trace distance]{trace distance} between $ρ$ and $τ$ as
  \begin{align}
    D(ρ, τ) := \max \big\{ \trace \{ ρ - τ \}_+ , 
      \trace \{ τ - ρ \}_+ \big\} ¶\,.
  \end{align}
\end{definition}

The generalized trace distance can be expressed alternatively in terms of the
Schatten $1$-norm as
\begin{align} 
  D(ρ, τ) = \frac{1}{2} \norm{\rho - \tau}[1] + 
    \frac{1}{2} \abs<b>{ \trace\rho - \trace\tau } ¶\,
\end{align}
and it is easy to verify that it is a ☼{metric} on $\olin{ℋ}$. 
In the case where both $\rho$ and $\tau$ are normalized states, we recover the
usual definition of the trace distance, $D(ρ, τ) := \frac{1}{2} \norm{ρ -
τ}[1]$. Furthermore, the trace distance has a physical interpretation as the
☼{distinguishing advantage} between the two states. In other words, the
probability $p_{\textrm{dist}}(ρ, τ)$ of correctly distinguishing between two
equiprobable states $ρ$ and $τ$ by any measurement is upper bounded
by~\cite{nielsen00}
\begin{align}
  p_\textrm{dist}(ρ, τ) ≤ \frac{1}{2} \big( 1 + D(ρ, τ) 
    \big) ¶[dist-adv].
\end{align}

\subsection{Generalized Fidelity}
☼*{generalized fidelity|see{fidelity}}

On the other hand, various metrics are derived from the ☼{fidelity}, which is 
given as $F(ρ, τ) = \fidn{ρ}{τ}$ for two normalized states $ρ$ and $τ$. We will
not use the letter $F$ to denote the fidelity hereafter, instead reserving it
for the generalized fidelity defined below. (We will also see that 
the two quantities agree if at least one state is normalized.)
The fidelity has many interesting properties, some of which we will list here
for further reference. The most important properties of the fidelity are summarized in
Table~\ref{tb:fid}.

\begin{table}[!h]
  \rule{\textwidth}{1pt}
  \begin{minipage}{0.98\textwidth}
    \vspace{5pt}
    The following holds for arbitrary positive operators $ρ, τ, σ ∈ \opos{ℋ}$.
    \begin{enum}
☛[i.] The fidelity is ☼{symmetric} in its arguments, $\fidn{ρ}{τ} =
		\fidn{τ}{ρ}$.
☛[ii.] The fidelity is monotonically increasing under the application of TP-CPMs
    (cf.~e.g.~\cite{nielsen00}, Theorem 9.6). This
    implies that, for any TP-CPM $ℰ$, we have
    \begin{align}
      √{\trace\rho}√{\trace\tau} ≥ \fid{ℰ[ρ]}{ℰ[τ]} ≥ \fid{ρ}{τ} ¶[fid/mon].
    \end{align}
    To get the first inequality, we used that $\trace$ is a TP-CPM.
☛[iii.] ☼{Uhlmann's theorem}~\cite{uhlmann85} states that, for any purification $φ$ of
    $ρ$,
    \begin{align}
      \fidn{ρ}{τ} = \max_{\ket{θ}} \abs{\braket{φ|θ}} ¶[fid/uhl],
    \end{align}
    where the maximum is taken over all ☼[purifications]{purification} $θ$ of $τ$. 
☛[iv.] For any ☼{projector} $Π ∈ \opos{ℋ}$, we have
		\begin{align}
			\fidb{Π ρ Π}{τ} = \fidb{Π ρ Π}{Π τ Π} = \fidb{ρ}{Π τ Π} ¶[fid/proj].
		\end{align}
☛[v.] For any $σ ≥ ρ$, we have $\fidn{σ}{τ} ≥ \fidn{ρ}{τ}$.
☛[vi.] For states $ρ = ρ_1 ⨁ ρ_2$ and $τ = τ_1 ⨁ τ_2$, where $ρ_1, τ_1 ∈ \opos{ℋ_1}$ and
    $ρ_2, τ_2 ∈ \opos{ℋ_2}$, we have
    \begin{align}
      \fidb{ρ}{τ} = \fidb{ρ_1}{τ_1} + \fidb{ρ_2}{τ_2} ¶.
    \end{align}
  \end{enum}
  \end{minipage} 
  \vspace{5pt} \\
  \rule{\textwidth}{1pt}
  \vspace{-20pt}
  \caption[Properties of the Fidelity.]{\emph{Properties of the Fidelity.}}
  \label{tb:fid}
\end{table}

Here, we propose a generalization of the fidelity to sub-normalized
states. The generalization is motivated by the observation that sub-nor\-ma\-li\-zed
states can be thought of as normalized states on a larger space which
are projected onto a subspace. Hence, we define the generalized fidelity as the
supremum of the fidelity between such normalized states.
\begin{definition}
  \label{df:gfid}
  For $ρ, τ ∈ \osub{ℋ}$, we define the ☼[generalized
  fidelity]{fidelity!generalized} between $ρ$ and $τ$ as
  \begin{align}
    F(ρ, τ) := \, \sup_{ℋ'} \sup_{\rhob,\taub ∈ \onorm{ℋ'}} 
      \fid{\rhob}{\taub} ¶[gfid]\,,
  \end{align}
  where the supremum is taken over all embeddings $V$ of $ℋ$ into $ℋ'$
  and all states $\rhob, \taub ∈ \onorm{ℋ'}$, such that $ρ$ and $τ$ are
  images of $\rhob$ and $\taub$ under $V†$. (Namely, the states
  satisfy $V† \bar{\rho}\, V = ρ$ and $V† \bar{\tau}\, V = τ$.)
\end{definition}

This expression reduces to the fidelity when at least one state is normalized.
To see this, consider the following alternative expression for the generalized
fidelity.
\begin{lemma}
  \label{lm:alt-gfid} 
  Let $ρ, τ ∈ \osub{ℋ}$. Then,
  \begin{align}
    F(ρ, τ) = F(\rhoh,\tauh) = \fidb{ρ}{τ} + 
      √{(1 - \trace ρ)(1 - \trace τ)} ¶[alt-gfid]\,,
  \end{align}
  where $\rhoh := ρ ⨁ (1 \!-\! \trace ρ)$ and $\tauh := τ ⨁ (1 \!-\!
\trace τ)$.
\end{lemma}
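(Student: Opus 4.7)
The plan is to prove the lemma by establishing the two inequalities $F(\rho,\tau)\ge F(\hat\rho,\hat\tau)$ and $F(\rho,\tau)\le F(\hat\rho,\hat\tau)$, together with the closed-form evaluation $F(\hat\rho,\hat\tau)=\|\sqrt{\rho}\sqrt{\tau}\|_1+\sqrt{(1-\trace\rho)(1-\trace\tau)}$; the three facts combined give the stated chain of equalities.

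For the lower bound I would exhibit a feasible pair in the supremum defining $F(\rho,\tau)$: taking $\mathcal{H}':=\mathcal{H}\oplus\mathbb{C}$ and $V$ the canonical inclusion, the states $\hat\rho,\hat\tau\in\onorm{\mathcal{H}'}$ satisfy $V^\dagger\hat\rho V=\rho$ and $V^\dagger\hat\tau V=\tau$, which is an admissible triple and yields $F(\rho,\tau)\ge\|\sqrt{\hat\rho}\sqrt{\hat\tau}\|_1$. For the closed form, the block-diagonal structure of $\hat\rho$ and $\hat\tau$ with respect to $\mathcal{H}\oplus\mathbb{C}$ passes to their square roots and to the product, so property~(vi) of the fidelity from Table~\ref{tb:fid} immediately gives
\[
  \bigl\|\sqrt{\hat\rho}\sqrt{\hat\tau}\bigr\|_1 = \bigl\|\sqrt{\rho}\sqrt{\tau}\bigr\|_1 + \sqrt{(1-\trace\rho)(1-\trace\tau)}.
\]
The same lower-bound argument applied with $\hat\rho,\hat\tau$ in the role of $\rho,\tau$ shows $F(\hat\rho,\hat\tau)$ agrees with this quantity as well (the second summand then vanishes because $\trace\hat\rho=\trace\hat\tau=1$).

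The main step is the matching upper bound $F(\rho,\tau)\le F(\hat\rho,\hat\tau)$. Given an arbitrary admissible triple $(V,\bar\rho,\bar\tau)$ with $V:\mathcal{H}\hookrightarrow\mathcal{H}'$ and $V^\dagger\bar\rho V=\rho$, $V^\dagger\bar\tau V=\tau$, I would construct a TP-CPM $\mathcal{E}:\olin{\mathcal{H}'}\to\olin{\mathcal{H}\oplus\mathbb{C}}$ sending $\bar\rho\mapsto\hat\rho$ and $\bar\tau\mapsto\hat\tau$, and then invoke monotonicity of the fidelity under TP-CPMs (property~(ii)). Writing $P:=VV^\dagger$ for the projector onto the image of $V$, the natural candidate is
\[
  \mathcal{E}(X) := V^\dagger X V \,\oplus\, \trace\!\bigl((\mathbb{1}-P)X\bigr),
\]
whose first summand is CP as a conjugation by $V^\dagger$, whose second summand is the positive linear functional realized by the Kraus operator $\sqrt{\mathbb{1}-P}$ followed by the trace, and whose two component traces sum to $\trace X$, so $\mathcal{E}$ is a valid TP-CPM. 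By construction $\mathcal{E}(\bar\rho)=\rho\oplus(1-\trace\rho)=\hat\rho$ and similarly $\mathcal{E}(\bar\tau)=\hat\tau$; monotonicity then yields $\|\sqrt{\bar\rho}\sqrt{\bar\tau}\|_1\le\|\sqrt{\hat\rho}\sqrt{\hat\tau}\|_1$, and taking the supremum over all admissible triples closes the inequality. The main obstacle is just the bookkeeping needed to confirm that $\mathcal{E}$, written as a direct sum of maps with different codomains, is genuinely a single CP map into $\mathcal{H}\oplus\mathbb{C}$; after that, the lemma reduces to invoking the listed properties of the fidelity.
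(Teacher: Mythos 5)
Your proof is correct and follows essentially the same route as the paper's: achievability via the canonical one-dimensional extensions $\rhoh,\tauh$ on $ℋ ⨁ ℂ$, the direct-sum property of the fidelity to evaluate the closed form, and monotonicity under a TP-CPM to show no embedding can do better. The only cosmetic difference is that you package the upper bound into a single explicitly constructed channel $X ↦ V† X V ⨁ \tr{(ⅈ - P)X}$, whereas the paper first applies the pinching $X ↦ Π X Π + \Pip X \Pip$ and then bounds the off-image block by $√{(1-\trace ρ)(1-\trace τ)}$; these are two phrasings of the same argument.
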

\begin{proof}
  Let $U: ℋ → ℋ'$, $\rhob$ and $\taub$ be an arbitrary candidate for the
  supremum in~§[gfid]. Moreover, let $ℰ$ be the pinching $ρ ↦ Π ρ Π + \Pip ρ
  \Pip$, where $\Pi := UU†$ projects onto the image of $U$ and $\Pip := ⅈ
  - Π$ is its orthogonal complement on $ℋ'$.
  Then, due to the monotonicity property~§[fid/mon], we find
  \begin{align}
    \fid{\rhob}{\taub} &≤ \fid{ℰ[\rhob]}{ℰ[\taub]} ¶\\
    &= \fid{Π \rhob Π}{Π \taub Π} + \fid{\Pip \rhob
			\Pip}{\Pip \taub \Pip} ¶\\
    &≤ \fid{\rho}{\tau} + \sqrt{(1-\trace\rho)(1-\trace\tau)} ¶\,.
  \end{align}
  In particular, the r.h.s.~is an upper bound on $F(ρ, τ)$.
  Finally, it is easy to verify that this upper bound is achieved with 
  the choice $ℋ' \iso ℋ ⨁ ℂ$ as well as $\rhoh$ and $\tauh$.
\end{proof}

\subsection{Purified Distance}

Next, we define a distance measure based on the fidelity, analogous to the one
proposed in~\cite{nielsen04, rastegin06}\footnote{The quantity
  $C(ρ, τ) = \sqrt{1 - F^2(ρ, τ)}$ is introduced
  in~\cite{nielsen04}, where the authors also show that it is a metric on
  $\onorm{ℋ}$. In~\cite{rastegin06}, the same quantity is called sine
  distance and some of its properties are explored.}.
\begin{definition}[Purified Distance]
  \label{df:pd}
  For $ρ, τ ∈ \osub{ℋ}$, we define the
  ☼{purified distance} between $ρ$ and $τ$ as
  \begin{align}
    P(ρ, τ) := √{1 - F(ρ, τ)^2} ¶.
  \end{align}
\end{definition}

Other distance measures based on the fidelity have been investigated in the literature. In particular, the ☼{Bures metric}~\cite{bures69}, 
$B(ρ, τ)^2 := 2 (1 - F(ρ, τ))$, and the ☼{angular distance}, $A(ρ, τ) := \arccos F(ρ, τ)$~\cite{nielsen00}. We prefer the purified distance because it constitutes an upper bound on the trace distance as we will see below and thus inherits its operational interpretation as an upper bound on the ☼{distinguishing advantage}.

The name ``Purified Distance'' is motivated by the fact that, for normalized states $ρ,
τ \in \onorm{ℋ}$, we can write $P(ρ, τ)$ as the minimum
trace distance between purifications $\ket{φ}$ of $ρ$ and
$\ket{θ}$ of $τ$.  More precisely, using Uhlmann's
theorem~§[fid/uhl], we have
\begin{align*}
  P(ρ, τ) &= \sqrt{1 - {F(ρ, τ)}^2}
  = √{1 - \max_{φ, θ} \abs{\braket{φ|θ}}^2} \\
  &= \min_{φ, θ} √{1 - \abs{\braket{φ|θ}}^2}
  = \min_{φ, θ} D(φ, θ) \, .
\end{align*}

The purified distance is a ☼{metric} on the set of sub-normalized states 
according to Definition~\ref{df:metric}.

\begin{proposition}
  \label{pr:pd-metric}
  The purified distance $P(·,·)$ is a metric on $\osub{ℋ}$.
\end{proposition}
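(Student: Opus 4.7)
The plan is to verify the three axioms of Definition~\ref{df:metric} in turn, with most of the effort going into the triangle inequality. Throughout, I will exploit Lemma~\ref{lm:alt-gfid}, which lets me replace $F(\rho,\tau)$ by the ordinary fidelity $F(\hat\rho,\hat\tau)$ of the normalized extensions $\hat\rho = \rho \oplus (1-\tr\rho)$ and $\hat\tau = \tau \oplus (1-\tr\tau)$ on $\mathcal{H} \oplus \mathbb{C}$. This reduction is the conceptual key: it transports every statement about sub-normalized states on $\mathcal{H}$ to a statement about normalized states on a slightly larger space, where standard tools (Uhlmann's theorem, known fidelity identities) apply directly.

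Symmetry is immediate since $F(\hat\rho,\hat\tau) = F(\hat\tau,\hat\rho)$ by property (i) in Table~\ref{tb:fid}. For positive-definiteness I would first note $F(\rho,\tau) \in [0,1]$, so $P(\rho,\tau) \geq 0$ is well-defined and real. If $\rho = \tau$ then $\hat\rho = \hat\tau$ and $F(\hat\rho,\hat\tau) = \tr\hat\rho = 1$, giving $P(\rho,\tau)=0$. For the converse, $P(\rho,\tau)=0$ forces $F(\hat\rho,\hat\tau)=1$; since the fidelity of two normalized states equals $1$ if and only if they coincide, $\hat\rho = \hat\tau$, which by construction yields $\tr\rho=\tr\tau$ and $\rho=\tau$.

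The triangle inequality is the main obstacle. I would introduce the angular distance $A(\mu,\nu) := \arccos F(\mu,\nu)$ on normalized states and note the trigonometric identity $P(\rho,\tau) = \sin A(\hat\rho,\hat\tau)$. The first substantive step is to establish that $A$ is a metric on $\onorm{\mathcal{H} \oplus \mathbb{C}}$: positivity and symmetry are clear, and the triangle inequality $A(\hat\rho,\hat\sigma) \leq A(\hat\rho,\hat\tau) + A(\hat\tau,\hat\sigma)$ follows from Uhlmann's theorem~§[fid/uhl] by choosing purifications $\ket\phi,\ket\theta,\ket\eta$ of $\hat\rho,\hat\tau,\hat\sigma$ with $|\braket{\phi|\theta}| = F(\hat\rho,\hat\tau)$ and then (in a second step, using Uhlmann on a fixed purification of $\hat\tau$) $|\braket{\theta|\eta}| = F(\hat\tau,\hat\sigma)$, and invoking the triangle inequality for the Fubini–Study angle between unit vectors, $\arccos|\braket{\phi|\eta}| \leq \arccos|\braket{\phi|\theta}| + \arccos|\braket{\theta|\eta}|$.

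Given that $A$ is a metric, it remains to push the triangle inequality through the nonlinear map $\sin$. Writing $a = A(\hat\rho,\hat\tau)$, $b = A(\hat\tau,\hat\sigma)$, $c = A(\hat\rho,\hat\sigma)$, all values lie in $[0,\pi/2]$ since fidelities are in $[0,1]$, and $c \leq a+b$. I split into cases: if $a+b \leq \pi/2$, monotonicity of $\sin$ on $[0,\pi/2]$ together with the expansion $\sin(a+b) = \sin a\cos b + \cos a \sin b \leq \sin a + \sin b$ (using $\cos a,\cos b \in [0,1]$) gives $\sin c \leq \sin a + \sin b$; if $a+b > \pi/2$, then $\sin a + \sin b \geq \sin a + \cos a \geq 1 \geq \sin c$, where the middle inequality uses $\sin x + \cos x = \sqrt{2}\sin(x+\pi/4) \geq 1$ for $x \in [0,\pi/2]$. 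Combining the cases yields $P(\rho,\sigma) = \sin c \leq \sin a + \sin b = P(\rho,\tau) + P(\tau,\sigma)$, completing the proof. The only delicate point is coordinating the two applications of Uhlmann's theorem around a common purification of $\hat\tau$ so that a single triple of purifications witnesses both fidelities simultaneously; this is standard but worth writing out carefully.
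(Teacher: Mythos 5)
Your proof follows essentially the same route as the paper's: reduce to the normalized extensions $\hat\rho,\hat\tau$ via Lemma~\ref{lm:alt-gfid}, pass to the angular distance $\arccos F$, and convert its triangle inequality back through $\sin$. The only differences are that you derive the angular triangle inequality from Uhlmann's theorem rather than citing it, and that your explicit case split on $a+b\le\pi/2$ versus $a+b>\pi/2$ carefully justifies a monotonicity step that the paper leaves implicit.
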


\begin{proof}
  Let $ρ, τ, σ ∈ \osub{ℋ}$.
  The condition $P(ρ, τ) = 0 \iff ρ = τ$ can be verified
  by inspection, and symmetry $P(ρ, τ) = P(τ, ρ)$ follows
  from the symmetry of the fidelity.

  It remains to show the triangle inequality, $P(ρ, τ) ≤
  P(ρ, σ) + P(σ, τ)$. Using Lemma~\ref{lm:alt-gfid}, the
  generalized fidelities between $ρ$, $τ$ and $σ$ can be
  expressed as fidelities between the corresponding extensions
  $\rhoh$, $\tauh$ and $\sigmah$. We employ the triangle inequality of the 
  ☼{angular distance}, which can be expressed in terms of the purified 
  distance as $A(ρ, τ) = \arccos F(ρ, τ) = \arcsin P(ρ, τ)$.\footnote{A proof that $A$ 
  is a metric for normalized states is given in~\cite{nielsen00}.}
  This leads to
  \begin{align}
    P(ρ, τ) &= \sin A(\rhoh, \tauh) ¶\\
      &≤ \sin \big( A(\rhoh, \sigmah) + A(\sigmah, \tauh) \big) ¶\\
      &= \sin A(\rhoh, \sigmah) \cos A(\sigmah, \tauh) + 
        \sin A(\sigmah, \tauh) \cos A(\rhoh, \sigmah) ¶[pd/sin-add]\\
      &= P(ρ, σ) F(σ, τ) + P(σ, τ) F(ρ, σ) ¶[pd/tight-triangle] \\
      &≤ P(ρ, σ) + P(σ, τ) ¶\,,
  \end{align}
  where we employed the trigonometric addition formula to get~§[pd/sin-add].
\end{proof}

Note that the purified distance is not an intrinsic metric, i.e.\ given two states $ρ$, $τ$ with $P(ρ, τ) ≤ ε$ it is in general not possible to find intermediate states $σ^λ$ with $P(ρ, σ^λ) = λε$ and $P(σ^λ, τ) = (1-λ) ε$. In this sense, the above triangle inequality is not tight. It is thus sometimes
useful to employ Eq.~§[pd/tight-triangle] instead. For example, given three states $ρ, τ, σ ∈ \osub{ℋ}$ and $0 ≤ ε, \bar{ε} ≤ 1$, we find that 
$P(ρ, σ) ≤ ε$ and $P(σ, τ) ≤ \bar{ε}$ implies
\begin{align}
  P(ρ, τ) ≤ ε √{1 - \bar{ε}^2} + \bar{ε} √{1 - ε^2}  
      ¶[pd/triangle-eps]
\end{align}
if $\arcsin ε + \arcsin \bar{ε} ≤ \frac{π}{2}$. 
This bound is plotted in Figure~\ref{fg:tight}.

\begin{figure}
  \centering \includegraphics[scale=1.0]{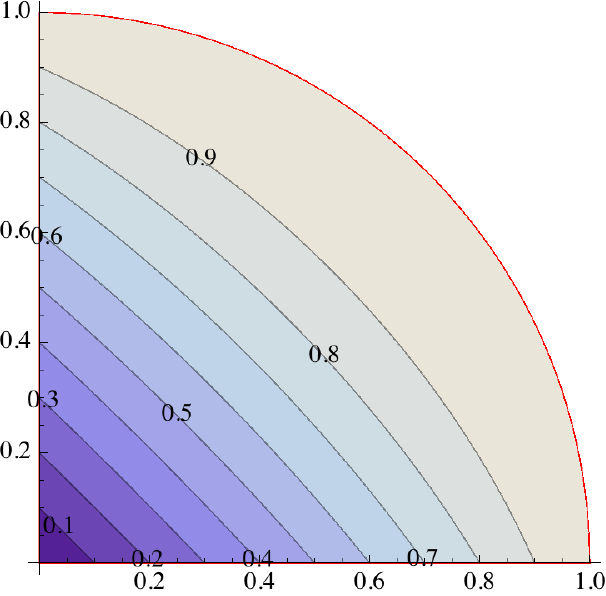}
  \caption[Improved Triangle Inequality of the Purified Distance.]{\emph{Improved Triangle Inequality.} Contour plot of the expression in~Eq.~§[pd/triangle-eps] 
  with axes $ε$ and $\bar{ε}$. The boundary contains all pairs $\{ ε, \bar{ε} \}$ that
  lead to the trivial bound $P(ρ, τ) ≤ 1$.
  This shows that Eq.~§[pd/tight-triangle] gives significantly tighter results than the triangle 
  inequality for the purified distance if $ε$ and $\bar{ε}$ get large. For example, if
  $P(ρ, σ) = P(σ, τ) = 0.5$, we find $P(ρ, τ) ≤ 0.87$ instead of the trivial $P(ρ, τ) ≤ 1$.}
  \label{fg:tight}
\end{figure}

\section{Properties of the Purified Distance}
\label{sc:pd/prop}

The purified distance has simple upper and lower bounds in terms of the generalized trace distance.
\begin{proposition}
  \label{pr:pd-gtd-bounds}
  Let $ρ, τ ∈ \osub{ℋ}$. Then 
  \begin{align}
    D(ρ, τ) ≤ P(ρ, τ) ≤ √{2 D(ρ, τ) - D(ρ, τ)^2} ≤ √{2 D(ρ, τ)} ¶\,.
  \end{align}
\end{proposition}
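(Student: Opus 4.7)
The plan is to reduce both non-trivial inequalities to the classical Fuchs--van de Graaf bounds for normalized states, exploiting the canonical extensions $\hat{\rho} = \rho \oplus (1 - \tr{\rho})$ and $\hat{\tau} = \tau \oplus (1 - \tr{\tau})$ introduced in Lemma~\ref{lm:alt-gfid}. A direct block-diagonal computation gives
\begin{align}
D(\hat{\rho}, \hat{\tau}) \,=\, \tfrac{1}{2}\|\hat{\rho} - \hat{\tau}\|_1 \,=\, \tfrac{1}{2}\|\rho - \tau\|_1 + \tfrac{1}{2}\bigl|\tr{\rho} - \tr{\tau}\bigr| \,=\, D(\rho, \tau),
\end{align}
while Lemma~\ref{lm:alt-gfid} already supplies $F(\hat{\rho},\hat{\tau}) = F(\rho,\tau)$, whence $P(\hat{\rho},\hat{\tau}) = P(\rho,\tau)$. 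The rightmost inequality $\sqrt{2D - D^2} \le \sqrt{2D}$ is immediate from $D \le 1$, so it suffices to establish the two middle inequalities for the normalized pair $\hat{\rho},\hat{\tau}$.

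For the lower bound $D \le P$, I would apply the two-outcome POVM $\{\Pi, \Pi^\perp\}$, where $\Pi$ projects onto the positive part of $\hat{\rho} - \hat{\tau}$. The resulting classical distributions $p,q$ on two points satisfy $D(p,q) = D(\hat{\rho},\hat{\tau})$, and monotonicity of the fidelity under TP-CPMs (property ii in Table~\ref{tb:fid}) gives $F(\hat{\rho},\hat{\tau}) \le F(p,q)$. It then suffices to invoke the elementary classical bound $D(p,q) \le \sqrt{1 - F(p,q)^2}$, which one proves by writing $|p_i - q_i| = |\sqrt{p_i} - \sqrt{q_i}|\,(\sqrt{p_i} + \sqrt{q_i})$ and applying Cauchy--Schwarz to $\sum_i |p_i - q_i|$; chaining the two estimates yields $D(\hat{\rho},\hat{\tau}) \le P(\hat{\rho},\hat{\tau})$.

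For the upper bound, which is equivalent to $F(\hat{\rho},\hat{\tau}) \ge 1 - D(\hat{\rho},\hat{\tau})$, I would use a Powers--Stormer argument. Combining $F(\hat{\rho},\hat{\tau}) = \|\sqrt{\hat{\rho}}\sqrt{\hat{\tau}}\|_1 \ge \tr{\sqrt{\hat{\rho}}\sqrt{\hat{\tau}}}$ with the identity $\|\sqrt{\hat{\rho}} - \sqrt{\hat{\tau}}\|_2^2 = 2 - 2\tr{\sqrt{\hat{\rho}}\sqrt{\hat{\tau}}}$ yields
\begin{align}
F(\hat{\rho},\hat{\tau}) \,\ge\, 1 - \tfrac{1}{2}\|\sqrt{\hat{\rho}} - \sqrt{\hat{\tau}}\|_2^2 \,\ge\, 1 - \tfrac{1}{2}\|\hat{\rho} - \hat{\tau}\|_1 \,=\, 1 - D(\hat{\rho},\hat{\tau}),
\end{align}
where the middle step is the Powers--Stormer inequality $\|\sqrt{A} - \sqrt{B}\|_2^2 \le \|A - B\|_1$ for $A,B \ge 0$. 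Squaring then produces $P(\rho,\tau)^2 \le 2D(\rho,\tau) - D(\rho,\tau)^2$.

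The main technical step is the operator-theoretic Powers--Stormer bound, where the non-commutativity of $\sqrt{\hat{\rho}}$ and $\sqrt{\hat{\tau}}$ rules out a naive scalar reduction and one must instead argue more carefully (e.g.\ via a variational formulation of $\|A-B\|_1$); everything else is routine bookkeeping around how the block extensions $\hat{\rho},\hat{\tau}$ interact with the two metrics.
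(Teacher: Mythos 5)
Your proof is correct and follows essentially the same route as the paper: both reduce to the normalized block extensions $\hat{\rho} = \rho \oplus (1-\trace\rho)$, $\hat{\tau} = \tau \oplus (1-\trace\tau)$ of Lemma~\ref{lm:alt-gfid}, observe that $D$ and $P$ are preserved under this extension, and then invoke the Fuchs--van de Graaf inequalities $1 - F \le D \le \sqrt{1-F^2}$ for the normalized pair. The only difference is that the paper cites Fuchs--van de Graaf directly, whereas you reprove both halves (via the optimal two-outcome measurement with Cauchy--Schwarz, and via Powers--Stormer); both of your sub-arguments are standard and sound.
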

\begin{proof}
  We express the quantities using the normalized extensions $\rhoh$
  and $\tauh$ of Lemma~\ref{lm:alt-gfid} to get
  \begin{align}
    P(ρ, τ) &= √{1 - F(\rhoh, \tauh)^2} ≥ D(\rhoh, \tauh) = D(ρ, τ) \quad \tn{and} ¶\\
    P(ρ, τ)^2 &= 1 - F(\rhoh, \tauh)^2 ≤ 1 - \big(1 - D(\rhoh, \tauh) \big)^2 ¶\\
      &= 2D(ρ, τ) - D(ρ, τ)^2 ≤ 2 D(ρ, τ) ¶,
  \end{align}
  where we employed the Fuchs -- van de Graaf inequalities 
  $1 - F(\rhoh, \tauh) ≤ D(\rhoh, \tauh) ≤ \sqrt{1 - F(\rhoh,
    \tauh)^2}$~(cf.~\cite{fuchs96} and~\cite{nielsen00}, Section 9.2.3).
\end{proof}

One very useful property of the purified distance is that it does not
increase under simultaneous application of a quantum operation on both
states.  We consider the class of ☼{trace non-increasing} CPMs, which
includes ☼[TP-CPMs]{TP-CPM}, ☼[projections]{projector} and 
☼[partial isometries]{isometry!partial}.
\begin{theorem}[Monotonicity of Purified Distance]
  \label{th:pd-mono}
  Let $ρ, τ ∈ \osub{ℋ}$ and $ℰ: ℋ → ℋ'$ a trace non-increasing CPM. Then,
  \begin{align}
    F\big(ℰ[ρ], ℰ[τ]\big) ≥ F(ρ, τ) \tn*{and} 
    P\big(ℰ[ρ], ℰ[τ]\big) ≤ P(ρ, τ) ¶.
  \end{align}
\end{theorem}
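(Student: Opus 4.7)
The plan is to reduce the claim to standard fidelity monotonicity for normalized states by simultaneously normalizing the states via the extension $\hat{\rho} = \rho \oplus (1 - \tr\rho)$ from Lemma~\ref{lm:alt-gfid} and completing the trace non-increasing map $\mathcal{E}$ to a TP-CPM on the enlarged spaces. The statement for $P$ then comes for free: since $P(\rho, \tau) = \sqrt{1 - F(\rho, \tau)^2}$ and $F \in [0,1]$, $P$ is strictly decreasing in $F$, so $F(\mathcal{E}[\rho], \mathcal{E}[\tau]) \geq F(\rho, \tau)$ immediately yields $P(\mathcal{E}[\rho], \mathcal{E}[\tau]) \leq P(\rho, \tau)$.

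The core step is constructing a TP-CPM $\tilde{\mathcal{E}} \in \tpcpm{\mathcal{H} \oplus \mathbb{C}, \mathcal{H}' \oplus \mathbb{C}}$ that satisfies the key identity $\tilde{\mathcal{E}}[\hat{\rho}] = \widehat{\mathcal{E}[\rho]}$ for every $\rho \in \osub{\mathcal{H}}$. Concretely, I would take a Kraus representation $\{E_k\}$ of $\mathcal{E}$ (Lemma~\ref{lm:kraus}), and use the defect operator $M := \mathbb{I} - \sum_k E_k^\dagger E_k \in \opos{\mathcal{H}}$, which is positive semi-definite precisely because $\mathcal{E}$ is trace non-increasing. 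I then define $\tilde{\mathcal{E}}$ via the Kraus operators
\begin{align}
  \tilde{E}_k = \begin{pmatrix} E_k & 0 \\ 0 & 0 \end{pmatrix}, \quad
  \tilde{E}_0 = \begin{pmatrix} 0 & 0 \\ \sqrt{M} & 0 \end{pmatrix}, \quad
  \tilde{E}_{-1} = \begin{pmatrix} 0 & 0 \\ 0 & 1 \end{pmatrix},
\end{align}
which a direct calculation shows satisfy $\sum_k \tilde{E}_k^\dagger \tilde{E}_k = \mathbb{I}_{\mathcal{H} \oplus \mathbb{C}}$, making $\tilde{\mathcal{E}}$ trace preserving and completely positive. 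Applying $\tilde{\mathcal{E}}$ to $\hat{\rho} = \rho \oplus (1 - \tr\rho)$, the Kraus terms produce $\mathcal{E}[\rho]$ on the first block and contribute $\tr(M\rho) + (1 - \tr\rho) = 1 - \tr\mathcal{E}[\rho]$ on the second, yielding the desired identity $\tilde{\mathcal{E}}[\hat{\rho}] = \widehat{\mathcal{E}[\rho]}$, and similarly for $\tau$.

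With this in hand, the proof closes in one line by combining Lemma~\ref{lm:alt-gfid} with standard fidelity monotonicity under TP-CPMs for normalized states (property ii of Table~\ref{tb:fid}):
\begin{align}
  F(\mathcal{E}[\rho], \mathcal{E}[\tau]) = F\bigl(\widehat{\mathcal{E}[\rho]}, \widehat{\mathcal{E}[\tau]}\bigr) = F\bigl(\tilde{\mathcal{E}}[\hat{\rho}], \tilde{\mathcal{E}}[\hat{\tau}]\bigr) \geq F(\hat{\rho}, \hat{\tau}) = F(\rho, \tau).
\end{align}
The main obstacle is really the construction of $\tilde{\mathcal{E}}$ and the verification that the extra mass routed into the ancillary $\mathbb{C}$ factor matches exactly the trace loss prescribed by $\hat{\rho} \mapsto \widehat{\mathcal{E}[\rho]}$; every other step either cites an earlier result or is immediate from the definition of the purified distance.
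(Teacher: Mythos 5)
Your proof is correct in substance, and it takes a genuinely different route from the paper's. The paper invokes the Stinespring dilation of a trace non-increasing CPM (Lemma~\ref{lm:stinespring}) to split $\mathcal{E}$ into an isometry, a projection and a partial trace, and then treats the two nontrivial cases separately: for the TP-CPM part it uses Lemma~\ref{lm:alt-gfid} together with fidelity monotonicity exactly as you do, while for the projection it argues directly from the supremum characterization of the generalized fidelity in Definition~\ref{df:gfid} (every normalized extension of $\{\mathcal{H},\rho,\tau\}$ is also a normalized extension of $\{\operatorname{supp}\Pi,\Pi\rho\Pi,\Pi\tau\Pi\}$). Your dilation of $\mathcal{E}$ itself to a TP-CPM $\tilde{\mathcal{E}}$ on $\mathcal{H}\oplus\mathbb{C}$ that intertwines with the hat-extension, $\tilde{\mathcal{E}}[\hat\rho]=\widehat{\mathcal{E}[\rho]}$, absorbs both cases into a single argument and never needs the supremum form of Definition~\ref{df:gfid}; the price is the explicit Kraus construction. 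Both proofs are sound, and yours is arguably the more unified one.

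One small repair is needed in your construction: as written, $\tilde{E}_0$ does not typecheck. Its lower-left block must be a map from $\mathcal{H}$ to the one-dimensional ancilla, i.e.\ a bra, whereas $\sqrt{M}$ is an operator on $\mathcal{H}$. Replace the single operator $\tilde{E}_0$ by the family $\tilde{E}_{0,i}$ whose lower-left blocks are $\bra{i}\sqrt{M}$ for an orthonormal basis $\{\ket{i}\}$ of $\mathcal{H}$, all other blocks zero. Then $\sum_i \tilde{E}_{0,i}^\dagger\tilde{E}_{0,i}$ contributes exactly $M$ to the upper-left block, so the trace-preservation check goes through unchanged, and $\sum_i \tilde{E}_{0,i}\hat\rho\,\tilde{E}_{0,i}^\dagger$ deposits exactly $\operatorname{tr}(M\rho)$ in the ancilla, as your computation requires. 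With that adjustment the identity $\tilde{\mathcal{E}}[\hat\rho]=\widehat{\mathcal{E}[\rho]}$ and the concluding chain of (in)equalities are all correct.
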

\begin{proof}
  Remember that a trace non-increasing CPM $ℰ : ℋ → ℋ'$ can be decomposed into an isometry   
  $U: ℋ → ℋ' ⨂ ℋ''$ followed by a projection $Π ∈ \opos{ℋ' ⨂ ℋ''}$ and a 
  partial trace over $ℋ''$ (cf.~Lemma~\ref{lm:stinespring}).  
  Isometries and the partial trace are TP-CPMs and, hence, it suffices to show that $F(ρ,
  τ) ≤ F\big(ℰ[ρ], ℰ[τ]\big)$ for TP-CPMs and projections.

  First, let $ℰ$ be a TP-CPM. Using~§[alt-gfid] and the
  monotonicity under TP-CPMs of the fidelity~§[fid/mon], we see that
  \begin{align}
    F(ρ, τ) &= \fid{ρ}{τ} + √{(1-\trace ρ)(1-\trace τ)} ¶\\
    &≤ \fid{ℰ[ρ]}{ℰ[τ]} + √{(1-\trace ρ)(1-\trace τ)} ¶\\
    &= F\big(ℰ[ρ], ℰ[τ]\big) ¶.
  \end{align}

  Next, consider a projection $Π ∈ \opos{ℋ}$ and the CPM
  $ℰ: ρ ↦ Π ρ Π$. Following the definition of the generalized fidelity in~§[gfid], 
  we write $F(ρ, τ) = \sup\, \fid{\rhob}{\taub}$, where the supremum is taken over
  all normalized extensions $\{ ℋ', \rhob, \taub \}$ of $\{ \h, \rho, \tau
  \}$. Since all normalized extensions of $\{ \h, \rho, \tau \}$ are obviously also
  normalized extensions of $\big\{\!\supp{Π}, Π ρ Π, Π τ Π \big\}$,
  we find $F\big(Π ρ Π, Π τ Π\big) ≥ F(ρ, τ)$.
    
  Finally, the second statement trivially follows from the first one 
  by definition of the purified distance.
\end{proof}

The main advantage of the purified distance (and other metrics based on the fidelity) 
over the trace distance is that we can always find extensions and purifications without
increasing the distance. This is captured in the following two results. ☼*{Uhlmann's theorem}
\begin{theorem}[Uhlmann's Theorem for Purified Distance]
  \label{th:pd-uhl}
  Let $ρ, τ ∈ \osub{ℋ}$, $ℋ' \iso ℋ$ and $\ket{φ} ∈ ℋ ⨂ ℋ'$ be a purification 
  of $ρ$. Then, there exists a purification $\ket{θ} ∈ ℋ ⨂ ℋ'$ of $τ$ that satisifies 
  $P(ρ, τ) = P(φ, θ)$.
\end{theorem}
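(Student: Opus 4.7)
The plan is to reduce the statement to the classical version of Uhlmann's theorem (property~(iii) of Table~\ref{tb:fid}) by exploiting the closed-form expression for the generalized fidelity on sub-normalized states from Lemma~\ref{lm:alt-gfid}. The key observation is that if $\ket{φ} \in ℋ ⨂ ℋ'$ purifies $ρ$, then $\braket{φ|φ} = \trace ρ$, so $\proj{φ}$ is itself a sub-normalized pure state with trace $\trace ρ$, and likewise for any candidate $\ket{θ}$ purifying $τ$.

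First, I would apply Lemma~\ref{lm:alt-gfid} to rewrite
\begin{align}
  F(\proj{φ}, \proj{θ}) = \fidb{\proj{φ}}{\proj{θ}} + √{(1-\trace ρ)(1-\trace τ)} = \abs{\braket{φ|θ}} + √{(1-\trace ρ)(1-\trace τ)},
\end{align}
using that $\fidb{\proj{φ}}{\proj{θ}} = \abs{\braket{φ|θ}}$ for pure operators. The same lemma yields
\begin{align}
  F(ρ, τ) = \fidb{ρ}{τ} + √{(1-\trace ρ)(1-\trace τ)}.
\end{align}
Subtracting these two identities, maximizing $F(\proj{φ}, \proj{θ})$ over purifications $\ket{θ}$ of $τ$ is equivalent to maximizing $\abs{\braket{φ|θ}}$.

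Next, I would invoke the standard form of Uhlmann's theorem (property~(iii) in Table~\ref{tb:fid}), which asserts that there exists a purification $\ket{θ} \in ℋ ⨂ ℋ'$ of $τ$ with $\abs{\braket{φ|θ}} = \fidb{ρ}{τ}$; this requires only that $\dim ℋ'$ be at least the rank of $τ$, which holds by the assumption $ℋ' \iso ℋ$. Plugging this choice of $\ket{θ}$ into the display above gives $F(\proj{φ}, \proj{θ}) = F(ρ, τ)$, hence $P(φ, θ) = \sqrt{1 - F(ρ, τ)^2} = P(ρ, τ)$.

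There is no serious technical obstacle; the only subtlety is the bookkeeping around the fact that $\ket{φ}$ and $\ket{θ}$ have norms $\sqrt{\trace ρ}$ and $\sqrt{\trace τ}$ rather than unity, so the deficit term $\sqrt{(1-\trace ρ)(1-\trace τ)}$ appears identically on both sides and cancels. The opposite inequality $P(ρ, τ) \leq P(φ, θ)$ is automatic from the monotonicity of the purified distance under the partial trace (Theorem~\ref{th:pd-mono}), so matching the lower bound via Uhlmann closes the equality.
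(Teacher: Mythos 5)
Your proof is correct and follows essentially the same route as the paper: invoke the standard Uhlmann theorem (property~(iii) of Table~\ref{tb:fid}) to pick $\ket{θ}$ with $\abs{\braket{φ|θ}} = \fidb{ρ}{τ}$, then use Lemma~\ref{lm:alt-gfid} to transfer the equality to the generalized fidelity and hence to the purified distance. The extra bookkeeping you supply (that the deficit term $\sqrt{(1-\trace ρ)(1-\trace τ)}$ matches on both sides because purifications preserve the trace) is exactly the content the paper's terser proof leaves implicit.
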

\begin{proof}
  We use Uhlmann's theorem for the fidelity~§[fid/uhl] to choose $\ket{θ} ∈ ℋ ⨂ ℋ'$ such
  that $\fid{ρ}{τ} = \abs{\braket{φ|θ}}$ holds. Then, due to~§[alt-gfid], we have 
  $F(ρ, τ) = F(φ, θ)$ as well as $P(ρ, τ) = P(φ, θ)$.
\end{proof}
\begin{corollary}
  \label{co:pd-ext}
  Let $ρ, τ ∈ \osub{ℋ}$ and $\rhob ∈ \osub{ℋ ⨂ ℋ'}$ be an extension of $\rho$. Then, 
  there exists an extension $\taub ∈ \osub{ℋ ⨂ ℋ'}$ of $τ$ with $P(ρ, τ) = P(\rhob,
  \taub)$.
\end{corollary}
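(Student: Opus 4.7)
The plan is to lift to purifications, apply the Uhlmann-type result for the purified distance (Theorem~\ref{th:pd-uhl}) to get an optimal purification of $\tau$, descend by tracing out the purifying system, and finally pin down equality via monotonicity of $P$ under the partial trace in both directions (Theorem~\ref{th:pd-mono}).

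Concretely, first take a purification $\ket{\varphi} \in \mathcal{H} \otimes \mathcal{H}' \otimes \mathcal{H}''$ of $\bar{\rho}$, with $\mathcal{H}''$ chosen sufficiently large. Since $\bar{\rho}$ extends $\rho$, the same vector $\ket{\varphi}$ is also a purification of $\rho$ with combined purifying system $\mathcal{H}' \otimes \mathcal{H}''$. Applying Theorem~\ref{th:pd-uhl} on $\rho, \tau \in \osub{\mathcal{H}}$ with this purifying system yields a purification $\ket{\theta}$ of $\tau$ in the same total Hilbert space satisfying $P(\rho, \tau) = P(\varphi, \theta)$. Now define
\begin{align}
  \bar{\tau} := \tr[\mathcal{H}'']{\proj{\theta}} \in \osub{\mathcal{H} \otimes \mathcal{H}'},
\end{align}
which is by construction an extension of $\tau$ since $\tr[\mathcal{H}' \mathcal{H}'']{\proj{\theta}} = \tau$.

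Two applications of monotonicity then close the argument. Tracing out $\mathcal{H}''$ is a TP-CPM, so Theorem~\ref{th:pd-mono} gives
\begin{align}
  P(\bar{\rho}, \bar{\tau}) \leq P(\varphi, \theta) = P(\rho, \tau).
\end{align}
For the reverse inequality, tracing out $\mathcal{H}'$ sends $\bar{\rho} \mapsto \rho$ and $\bar{\tau} \mapsto \tau$, so again by Theorem~\ref{th:pd-mono},
\begin{align}
  P(\rho, \tau) \leq P(\bar{\rho}, \bar{\tau}).
\end{align}
Combining the two bounds yields $P(\bar{\rho}, \bar{\tau}) = P(\rho, \tau)$.

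The only technical point requiring care is matching the hypothesis $\mathcal{H}' \iso \mathcal{H}$ in Theorem~\ref{th:pd-uhl} to the possibly larger purifying system $\mathcal{H}' \otimes \mathcal{H}''$ appearing above. This is routine: any purification on a larger space can be written as an isometric embedding of a purification on a reference space of dimension $\dim \mathcal{H}$, and Uhlmann's statement is invariant under such embeddings (the optimal $\ket{\theta}$ on the reference space pulls back to an optimal purification on the enlarged space with unchanged fidelity). Thus no genuine obstacle arises, and the corollary follows directly from the two tools already in hand.
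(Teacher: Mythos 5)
Your proof is correct and follows essentially the same route as the paper: purify $\bar{\rho}$, invoke Uhlmann's theorem for the purified distance to obtain a purification $\theta$ of $\tau$ with $P(\varphi,\theta)=P(\rho,\tau)$, set $\bar{\tau}=\tr[\mathcal{H}'']{\theta}$, and squeeze $P(\bar{\rho},\bar{\tau})$ between $P(\varphi,\theta)$ and $P(\rho,\tau)$ via monotonicity under the partial trace. Your explicit remark about enlarging the purifying system beyond $\mathcal{H}'\iso\mathcal{H}$ is a point the paper handles implicitly by choosing $\mathcal{H}''\iso\mathcal{H}\otimes\mathcal{H}'$, and is handled adequately.
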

\begin{proof}
  Let $ℋ'' \iso ℋ ⨂ ℋ'$ be an auxiliary Hilbert space and
  $φ ∈ ℋ ⨂ ℋ' ⨂ ℋ''$ be a purification of
  $\rhob$. We introduce a purification $θ ∈ ℋ ⨂ ℋ' ⨂ ℋ''$
  of $τ$ with $P(φ, θ) = P(ρ, τ)$ using Uhlmann's theorem for the purified distance above 
  and define $\taub = \tr[ℋ'']{θ}$. However, due to monotonicity
  (cf.\ Theorem~\ref{th:pd-mono}), we have $P(φ, θ) \geq P(\rhob, \taub) \geq P(ρ, τ)$, which implies
  that all three quantities must be equal.
\end{proof}

The next lemma offers an explicit construction that satisfies Corollary~\ref{co:pd-ext}.
This was shown in~\cite{dupuis10} and we provide the proof here for completeness.

\begin{lemma}
  \label{lm:pd-ext-constr}
  Let $ρ, τ ∈ \osub{ℋ}$ and let $\rhob ∈ \osub{ℋ ⨂ ℋ'}$ be an extension of $ρ$. Then, 
  there exists an operator $X ∈ \olin{ℋ}$ such that $\taub = X \rhob X† ∈ 
  \osub{ℋ ⨂ ℋ'}$ is an extension of $τ$ with $P(ρ, τ) = P(\rhob, \taub)$.
\end{lemma}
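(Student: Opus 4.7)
The strategy is to exhibit $X$ directly from the polar decomposition underlying the operator form of Uhlmann's theorem, and then to establish the distance equality by passing to a canonical purification and using monotonicity of the fidelity twice, in opposite directions. First I would write the polar decomposition $\sqrt{\rho}\sqrt{\tau} = W \abs{\sqrt{\rho}\sqrt{\tau}}$, so that $W$ is a partial isometry with range in $\supp(\rho)$ and $\tr{W† \sqrt{\rho}\sqrt{\tau}} = \fid{\rho}{\tau}$. Setting $X := \sqrt{\tau}\, W†\, \sqrt{\rho}\inv$ (with $\sqrt{\rho}\inv$ the generalized inverse), a direct calculation yields
\begin{align*}
  X \rho X† = \sqrt{\tau}\, W†\, Π^{\rho}\, W\, \sqrt{\tau} = \sqrt{\tau}\, W† W\, \sqrt{\tau} = \tau \,,
\end{align*}
where the last equality uses $W† W = Π^{\supp(\sqrt{\tau} \rho \sqrt{\tau})} = Π^{\tau}$ under the support condition $\supp(\tau) \subseteq \supp(\rho)$. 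Taking the partial trace over $ℋ'$ gives $\tr[ℋ']{X \bar\rho X†} = X \rho X† = \tau$, so $\bar\tau := X \bar\rho X†$ is a valid extension of $\tau$ on $ℋ ⨂ ℋ'$.

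For the distance equality, I would fix an auxiliary space $ℋ'' \iso ℋ ⨂ ℋ'$ and let $\ket{\bar\rho} ∈ ℋ ⨂ ℋ' ⨂ ℋ''$ be the canonical purification of $\bar\rho$; then $\ket{\theta} := (X ⨂ ⅈ¬{ℋ'ℋ''})\, \ket{\bar\rho}$ is a purification of $\bar\tau$ in the same space. A short computation using the defining property of the canonical purification gives $\braket{\bar\rho|\theta} = \tr{X \rho} = \tr{W† \sqrt{\rho}\sqrt{\tau}} = \fid{\rho}{\tau}$. Combined with $\braket{\bar\rho|\bar\rho} = \tr{\rho}$, $\braket{\theta|\theta} = \tr{\tau}$, and Lemma~\ref{lm:alt-gfid}, this identifies the generalized fidelity between the two pure states as $F(\proj{\bar\rho}, \proj{\theta}) = \fid{\rho}{\tau} + \sqrt{(1 - \tr{\rho})(1 - \tr{\tau})} = F(\rho, \tau)$.

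Two applications of Theorem~\ref{th:pd-mono} then close a sandwich: tracing out $ℋ''$ gives $F(\bar\rho, \bar\tau) \geq F(\proj{\bar\rho}, \proj{\theta}) = F(\rho, \tau)$, while tracing out $ℋ'$ gives $F(\rho, \tau) \geq F(\bar\rho, \bar\tau)$, hence $F(\bar\rho, \bar\tau) = F(\rho, \tau)$ and $P(\bar\rho, \bar\tau) = P(\rho, \tau)$. The hardest part of the argument is the careful bookkeeping for the support identity $W† Π^{\rho} W = Π^{\tau}$ that makes $X \rho X† = \tau$ hold exactly: this rests on $\supp \tau \subseteq \supp \rho$, a condition that may need an additional reduction or approximation argument in the general case, after which the remaining computations are essentially mechanical applications of the trigonometric/fidelity identities already collected in this chapter.
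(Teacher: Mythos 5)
Your construction is the same as the paper's: the paper likewise takes the polar decomposition of $\sqrt{\rho}\sqrt{\tau}$, conjugates the resulting partial isometry by $\sqrt{\tau}\,(\cdot)\,\sqrt{\rho}^{-1}$ to build $X$, identifies $\tr{X\rho}$ with $\norm{\sqrt{\rho}\sqrt{\tau}}[1]$, and disposes of general (mixed) $\bar\rho$ by exactly the purification-plus-two-sided-monotonicity sandwich you describe, which it invokes by reference to the argument of Corollary~\ref{co:pd-ext} after treating pure $\bar\rho$. If anything, your bookkeeping is the more careful of the two: with the convention $L = W\abs{L}$, the dagger belongs on the partial isometry exactly where you put it, so that $\tr{X\rho} = \tr{W^\dagger \sqrt{\rho}\sqrt{\tau}} = \tr{\abs{\sqrt{\rho}\sqrt{\tau}}}$ comes out as the trace norm.

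On the point you flag as the hardest part: you are right that $X\rho X^\dagger = \tau$ requires a support condition such as $\supp{\tau} \subseteq \supp{\rho}$, and the paper's proof passes over this silently. Do not expect an approximation argument to remove it, because the statement itself fails without some such restriction: if $\rho$ is pure and $\tau$ has rank two, every extension $\bar\rho$ of $\rho$ is a product state $\rho \otimes \sigma'$, so $X\bar\rho X^\dagger$ has a rank-one marginal on the first factor for every $X \in \olin{ℋ}$ and can never be an extension of $\tau$. The lemma is thus implicitly restricted to the regime in which it is later used (smoothing within the support of the original state), and on that domain your proof is complete --- to the same extent, and by the same mechanism, as the paper's.
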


\begin{proof}
  We show the theorem for the case when $\rhob$ is pure. The general statement then follows 
  from the same arguments outlined in the proof of Corollary~\ref{co:pd-ext}.
  Let $U \abs{√{ρ}√{τ}}$ be the ☼[polar decomposition]{decomposition!polar} of $√{ρ}√{τ}$. 
  Then, we define $X := τ^{½} U ρ^{-½}$ using the ☼[generalized inverse]{inverse!generalized}. 
  Clearly. $\taub = X \rhob X†$ is an extension of $τ$ as $\tr[ℋ']{\taub} = 
  X \rho X† = τ$. Furthermore, we find
  \begin{align}
    \fidb{\rhob}{\taub} = \abs<b>{\braket{\rhob|\taub}} = \abs<b>{\braket{\rhob|X|\rhob}} 
    = \abs<b>{\tr{X ρ}} = \abs<b>{\tr{U √{ρ}√{τ}}} = \fidb{ρ}{τ} ¶.
  \end{align}
  The equality of the purified distance then follows by~Lemma~\ref{lm:alt-gfid} and 
  the definition of the purified distance.
\end{proof}

The following lemma (see also~\cite{berta10,tomamichel11}) gives a bound on the distance between 
a state and the ☼[projection]{projector} of that state onto a subspace.
\begin{lemma}
  \label{lm:pd-proj}
  Let $ρ ∈ \osub{ℋ}$ and let $Π ∈ \opos{ℋ}$ be a projector, then
  \begin{align}
  P \big( ρ, Π ρ Π \big) ≤ √{2\, \tr{\Pip ρ} - \tr{\Pip ρ}^2} ¶\,,
  \end{align}
  where $\Pip := ⅈ - Π$ is the complement of $Π$ on $ℋ$.
\end{lemma}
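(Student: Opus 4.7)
The plan is to reduce the claim to a lower bound on the generalized fidelity via the identity $P(\rho,\tau)=\sqrt{1-F(\rho,\tau)^2}$, compute the fidelity explicitly, and then verify an elementary scalar inequality. Set $\tau := \Pi\rho\Pi$, $p := \tr{\Pi^{\perp}\rho}$ and $t := \tr\rho$, so $\tr\tau = t-p$ and both $\rho,\tau$ lie in $\osub{\mathcal{H}}$.

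First I would compute $\fidb{\rho}{\tau}$ directly. Since $B := \sqrt{\rho}\sqrt{\tau}$ satisfies
\begin{align}
BB^{\dagger} = \sqrt{\rho}\,\Pi\rho\Pi\,\sqrt{\rho} = \bigl(\sqrt{\rho}\,\Pi\,\sqrt{\rho}\bigr)^{2},
\end{align}
and $\sqrt{\rho}\Pi\sqrt{\rho}$ is positive semi-definite, we get $\sqrt{BB^{\dagger}} = \sqrt{\rho}\,\Pi\,\sqrt{\rho}$ and therefore $\fidb{\rho}{\tau} = \|B\|_{1} = \tr{\sqrt{\rho}\Pi\sqrt{\rho}} = \tr{\Pi\rho} = t-p$. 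Plugging this into the formula from Lemma~\ref{lm:alt-gfid} yields
\begin{align}
F(\rho,\Pi\rho\Pi) = (t-p) + \sqrt{(1-t)(1-t+p)}\,.
\end{align}

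Next I would show the scalar bound $F(\rho,\Pi\rho\Pi)\ge 1-p$. Writing $s:=1-t\ge 0$, this is equivalent to $\sqrt{s(s+p)}\ge s$, which holds because $p\ge 0$. Consequently
\begin{align}
P(\rho,\Pi\rho\Pi)^{2} = 1 - F(\rho,\Pi\rho\Pi)^{2} \le 1-(1-p)^{2} = 2p-p^{2},
\end{align}
and taking square roots gives exactly the claimed inequality.

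The only mildly delicate step is the fidelity computation, where one must be careful that the square-root identity $\sqrt{\rho}\Pi\rho\Pi\sqrt{\rho}=(\sqrt{\rho}\Pi\sqrt{\rho})^{2}$ requires using $\rho=\sqrt{\rho}\sqrt{\rho}$ to commute a factor past $\Pi$; everything else is arithmetic. Note also that the argument handles sub-normalized $\rho$ automatically through the $\sqrt{(1-t)(1-t+p)}$ term, and the bound is tight in the normalized case $t=1$ where it reduces to $P\le \sqrt{2p-p^{2}}$ attained when $\rho$ has a component entirely in $\mathrm{supp}(\Pi)$ and an orthogonal component of weight $p$ in $\mathrm{supp}(\Pi^{\perp})$.
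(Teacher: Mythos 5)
Your proof is correct and follows essentially the same route as the paper's: both establish $F(\rho,\Pi\rho\Pi)\ge 1-\tr{\Pi^{\perp}\rho}$ via the identity $\bigl\|\sqrt{\rho}\sqrt{\Pi\rho\Pi}\bigr\|_{1}=\tr{\Pi\rho}$ together with the formula for the generalized fidelity of sub-normalized states, and then convert to the purified distance. The only difference is that you verify the fidelity identity by an explicit computation of $\sqrt{BB^{\dagger}}$, whereas the paper simply invokes its stated projection property of the fidelity (property iv in its table).
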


\begin{proof}
  The generalized fidelity between the two states can be bounded using
  $\tr{Π ρ} ≤ \tr{ρ}$ and $\fid{ρ}{Π ρ Π} = \tr{Π ρ}$, which follows from~§[fid/proj]. 
  We have
  \begin{align}
    F(ρ, Π ρ Π) &≥ \tr{Π ρ} + 1 - \tr{ρ} = 1 - \tr{\Pip ρ} ¶\,.
  \end{align}
  The desired bound on the purified distance follows from its definition.
\end{proof}

%..........................

\section{Notational Conventions}
\label{sc:pd/not}
☼*{notation}

We will often use the notation $ρ \ecl τ$ (in words, ``$ρ$ is ☼{$ε$-close} to $τ$'') to say that $P(ρ, τ) ≤ ε$, where $ε \ll 1$ is a small parameter and $ρ, τ ∈ \osub{ℋ}$. 

The following intuitive notational conventions will be used widely in the following chapters and shortens many proofs.
Let $ℋ¬{AB}$ be a bipartite Hilbert space. If $ρ¬{AB} ∈ \osub{ℋ¬{AB}}$ and $\rhot¬{A} ∈ \osub
{ℋ¬A}$ are defined, we implicitly define $\rhot¬{AB}$ as the extension of $\rhot¬A$ that has 
minimum purified distance to $ρ¬{AB}$ according to Corollary~\ref{co:pd-ext} and 
Lemma~\ref{lm:pd-ext-constr}. This implies that $P(\rhot¬{AB}, ρ¬{AB}) = P(\rhot¬A, ρ¬B)$.

Generally, states with the same Greek letter will be treated in this 
way, where the distance of the modified letter states (e.g.\ $\rhot, \sigmah, \taub$) is always 
measured with regards to the unmodified letter state (e.g.\ $ρ, σ, τ$). Consistent with that, if 
an extension is introduced that is not of minimum distance, we will always use another greek letter to denote it. 
Due to the arguments above, it is usually sufficient to write $P(\rhot, ρ)$ without mentioning the subspaces on which the states are compared.

These minimum distance extensions often inherit properties of the original state. For the example discussed above, if $ρ$ is classical on $B$ then $\rhot$ is also classical on $B$. This follows directly from the monotonicity (cf.\ Theorem~\ref{th:pd-mono}) of the purified distance under trace non-increasing maps, since $P(ℳ[\rhot], ρ) = P(ℳ[\rhot], ℳ[ρ]) ≤ P(\rhot, ρ)$, where $ℳ$ measures in the classical basis of $B$ and thus leaves $ρ$ invariant. Using the same argument, one can show that the minimum distance extension $\rhot¬{AB}$ lives in the subspace 
$ℋ¬A ⨂ \supp{ρ¬B}$ of $ℋ¬{AB}$.

Furthermore, if $ℰ : ℋ¬A → ℋ¬B$ is a trace non-increasing map and $τ¬B = ℰ[ρ¬A]$, then, the definition of a state $\rhot \ecl ρ$ also implicitly defines a state $\taut = ℰ[\rhot¬A] \ecl τ$.
This will often be used when $ℰ$ is an isometry. In this case, we often give the states the same letter, for example if $U = ∑_x \ket{x}[X] ⨂ \ket{x}[X']\bra{x}[A]$ is the isometry that purifies a projective measurement of the system $A$ in the basis $\{ \ket{x} \}$, we use $ρ¬{XX'} = U ρ¬A U†$ to denote the mapped state.

%............................

\chapter{Min- and Max-Entropies}
\label{ch:entropies}

This chapter formally introduces the min- and max-entropies for quantum states and
discusses some of their properties. We provide a plethora of different expressions
for the min- and max-entropy and introduce the interpretation of the min-entropy as a guessing
probability. Moreover, we explore the classical limits of the min- and max-entropy and 
investigate their continuity and the concavity of the max-entropy.

\section{Introduction and Related Work}

We have seen in the introduction that generalizations of the classical ☼[Rényi 
$α$-entropies]{entropy!Rényi}~\cite{renyi61}
can be used to characterize different information theoretic tasks in the ☼{one-shot}
setting. For a discrete probability distribution $P¬X$ over a set $\cX$, the Rényi $α$-entropies are defined as
\begin{align}
  \hh{α}{X}[\tinyP] := \frac{1}{1 - α} \log ∑_x P¬X(x)^α, \tn*{where} α ∈ (0, 1) \cup (1, ∞) ¶. 
\end{align}
These entropies have a trivial generalization to the quantum setting, 
which, for a state
$ρ ∈ \onorm{ℋ¬{A}}$, is given as
\begin{align}
  \hh{α}{A}[ρ] := \frac{1}{1 - α} \log \tr<b>{ρ¬A^α} ¶[renyi-quant].
\end{align}
The range of allowed $α$ can be extended to include $0$ and $∞$ by taking the respective limits
of~§[renyi-quant]. This leads to the expressions
\begin{align}
  \hh{∞}{A}[ρ] := - \log \norm{A}[∞] \tn*{and} \hh{0}{A}[ρ] := \log \rank{ρ¬A} ¶.
\end{align}
Furthermore, taking the limit to $α = 1$ from both sides 
recovers the ☼[von Neumann entropy]{entropy!von Neumann}; hence, we set
$H_1 \equiv H$ and have now defined a spectrum of entropies for $α ∈ [0, ∞]$.
These entropies are monotonically decreasing in the parameter $α$, i.e.
\begin{align}
  α ≥ β \implies \hh{α}{A}[ρ] ≤ \hh{β}{A}[ρ] \quad \tn{for all} ρ ∈ \onorm{ℋ¬A} ¶.
\end{align}

One of the first questions to answer is now whether we need to consider the whole spectrum of Rényi $α$-entropies for our framework for non-asymptotic information theory. This was
answered in the negative by Renner and Wolf~\cite{renner04,rennerwolf05}.
%(see also earlier work by Cachin~\cite{cachin97}). 
They show that if we allow
a small variation of the state of the system\,|\,in the following called ☼{smoothing}\,|\,these 
entropies can be separated into three classes, the elements of each being approximately equal.
The three classes are: the von Neumann entropy, the $α$-entropies with $α < 1$ and the $α$-entropies with $α > 1$.

To follow their argument, we define ☼{$ε$-smooth} Rényi $α$-Entropies,
\begin{align}
  \hh*{α}{A}[ρ] = \begin{cases} 
    \min_{\rhot}\, \hh{α}{A}[\rhot] & \tn{if} α < 1 \\
    \max_{\rhot}\, \hh{α}{A}[\rhot] & \tn{if} α > 1
   \end{cases} ¶, \tn*{where} 0 ≤ ε < 1
\end{align}
and the optimization in each case is over an ☼{$ε$-ball} of close states, $\rhot \ecl ρ$. Note that we smooth in the direction of the von Neumann entropy in both ranges, $α < 1$ and $α > 1$.

These entropies now satisfy the following inequalities. 
(The proof of these statements can be adapted from results in~\cite{renner04,rennerwolf05}.)
\begin{align}
  \hh{½}[2ε+ε']{A}[ρ] - \frac{1}{1\! -\! α} \log \frac{2}{ε^2} &≤
    \hh{α}[ε+ε']{A}[ρ] ≤ \hh{½}[ε']{A}[ρ] + 2 \log \frac{2}{ε^2}
    \ \ (α < 1) ¶,\\
  \hh{∞}[ε+ε']{A}[ρ] + \frac{1}{α\! -\! 1} \log \frac{2}{ε^2} &≥ 
    \hh{α}[ε']{A}[ρ] ≥ \hh{∞}[ε']{A}[ρ] 
    \ \ (α > 1) ¶[alpha-equal].
\end{align}

Note that the deviation terms in $α$ and $ε$ do not depend on properties of the state. Hence, if the entropies are large enough, these terms will be negligible in comparison.

On one hand, the second statement of Eq.~§[alpha-equal] thus shows that smooth Rényi entropies of order $α > 1$ are well approximated by the smooth Rényi entorpy of order $∞$. 
We choose the ☼[min-entropy]{entropy!min-entropy}, $\hmin{A}[ρ] = \hh{∞}{A}[ρ]$, as the representative of this class of Rényi entropies. The choice of the Rényi $∞$-entropy is motivated by its operational interpretation as a guessing probability~(see~\cite{koenig08} and 
Section~\ref{se:min/guessing}) as well as the fact that its quantum generalization has a simple
form that can be expressed as a ☼{semi-definite program}.

On the other hand, the first statement of Eq.~§[alpha-equal] 
implies that the smooth Rényi entropies 
of order $α < 1$ are well approximated by the smooth Rényi entropy of order $\frac{1}{2}$. 
We call this entropy the ☼[max-entropy]{entropy!max-entropy}, $\hmax{A}[ρ] = \hh{½}{A}[ρ]$.
The choice of this entropy as a representative of the class may seem 
arbitrary at this point. Indeed, it could be argued that the Rényi entropy of order $0$ is also a natural choice, as it characterizes such tasks as the amount of memory needed to store the output of a source perfectly in the ☼{one-shot} setting.\footnote{In fact, the initial extension to the fully quantum setting was done for the Rényi entropy of order $0$~\cite{renner05}.} However, the choice of $\frac{1}{2}$ is motivated by the duality relation of the conditional min- and max-entropies, which holds for this choice of the max-entropy (see~\cite{koenig08} and Lemma~\ref{lm:min-max/dual}).

Conditional Rényi entropies can be defined in various ways. In analogy with the von Neumann entropy, we might be tempted to define $\hh{α}{A|B} = \hh{α}{AB} - \hh{α}{B}$. However,
to the best of our knowledge, this definition is not very useful to characterize information theoretic tasks
in the ☼{one-shot} setting.
In this chapter, we propose a natural generalization of the min-entropy that is motivated
by its operational interpretation as a guessing probability. 
Our generalization of the max-entropy then 
follows immediately from the duality of the min- and max-entropies.

Quantum generalizations of the min- and max-entropies were first considered by 
Renner and König~\cite{rennerkoenig05,renner05} in order to investigate security in quantum cryptography and related
tasks, e.g.\ information reconciliation. 
They considered a generalization of the Rényi-entropy of order $0$ and two
variations of the conditional min-entropy as well as a different method of smoothing than
the one proposed in this thesis. The smooth entropy framework has been consolidated since 
and we attempt to summarize the most important results in this and the following chapter.

\subsection{Main Results}

The main result of this chapter is a collection of expressions for the quantum conditional min- and max-entropies.

\begin{result}[Expressions for the Min- and Max-Entropy]
  Let $ρ¬{ABC}$ be a pure quantum state, then
  \begin{align}
    \hmax{A|C}[ρ] &= \log \min \big\{ \norm<b>{Z¬C}[∞] \!: Z¬{AC} ∈ \opos{ℋ¬{AC}} 
        \wedge ρ¬{ABC} ≤ Z¬{AC} ⨂ ⅈ¬B \big\} ¶\\
      &= \log d¬A \max_{σ}\,  F^2 \big( ρ¬{AC}, π¬A ⨂ σ¬C \big) ¶\\
      &= \log d¬A \!\max¬{B → B'B''}\!\! \max_{τ} \, F^2 \big( ρ¬{AB'B''} , 
        γ¬{AB'} ⨂ τ¬{B''} \big) ¶\\
      &= \log \min \big\{ \tr{σ} : σ ∈ \opos{ℋ¬B} \wedge ρ¬{AB} ≤ ⅈ¬A ⨂ σ¬B \big\} ¶\\
      &= \min_{σ} \log \norm<b>{σ¬B^{-½} ρ¬{AB}\, σ¬B^{-½}}[∞] ¶\\
      &= \min_{σ}\, \inf \big\{ λ ∈ ℝ : ρ¬{AB} ≤ 2^{λ} ⅈ¬A ⨂ σ¬B \big\} ¶\\
      &= - \hmin{A|B}[ρ] ¶,
  \end{align}
  where $σ, τ$ are quantum states and $B → B'B''$ is an embedding.
\end{result}

In particular, this result can be interpreted as follows. The expression $\max¬{B → B'B''} \max_{τ} \, F \big( ρ¬{AB'B''}, γ¬{AB'} ⨂ τ¬{B''} \big)$ measures the fidelity with a state that corresponds to an omniscient observer $B = B'B''$ of the system $A$. Any such observer necessarily controls a system $B'$ that is fully entangled with the system $A$ and may, in addition, control a system $B''$ that is uncorrelated with $A$. Moreover, the expression $\max_{σ} F \big( ρ¬{AC}, π¬A ⨂ σ¬C \big)$ measures the fidelity with a state that corresponds to an ignorant observer of the system $A$. Such an observer $C$ necessarily holds a state $σ¬C$ that is product with the system $A$.
Since these two quantitates are equal, we find the following: For any pure state $ρ¬{ABC}$, the marginal $ρ¬{AB}$ is close to an omniscient observer $B$ of the system $A$ if and only if $ρ¬{AC}$ is close to an ignorant observer $C$ of system $A$.

\subsection{Outline}

In Section~\ref{se:min/minmax} we formally introduce the conditional min- and max-entropies and show how they can be expressed as ☼[semi-definite programs]{SDP}.
We also formally introduce the von Neumann entropy.
In Section~\ref{se:min/class} we evaluate the conditional min- and max-entropies for classical probability distributions, and in Section~\ref{se:min/guessing}, we explore the interpretation of the min-entropy as a guessing probability.
Section~\ref{se:min/prop} then discusses various properties of the min- and max-entropy. Most importantly, we give first bounds on the min- and max-entropies and show that these entropies are continuous functions of the state.

\section{Min- and Max-Entropies}
\label{se:min/minmax}

\subsection{The Min-Entropy}
☼*{entropy!min-entropy} ☼*{min-entropy|see{entropy}}

Here, we start with specific definitions of the min- and max-entropy and then develop a variety of alternative expressions.

\begin{definition}
  \label{df:min-entropy}
  Let $ρ¬{AB} ∈ \osub{ℋ¬{AB}}$. The min-entropy of $A$ conditioned on $B$ of the 
  state $ρ¬{AB}$ is
  \begin{align}
    \hmin{A|B}[ρ] :=
      \max_{σ}\, \sup \big\{ λ ∈ ℝ : ρ¬{AB} ≤ 2^{-λ} ⅈ¬A ⨂ σ¬B \big\} ¶[min/def],
  \end{align}
  where the maximum is taken over all states $σ ∈ \osub{ℋ¬B}$.
\end{definition}

Note that there exists a feasible $λ$ only if $\supp{σ¬B} \supseteq \supp{ρ¬B}$. However, if this 
condition on the support is satisfied, there exists a feasible $λ_* = -\log \norm<b>{σ¬B^{-½} 
ρ¬{AB} σ¬B^{-½}}[∞]$ which achieves the supremum. The min-entropy can thus alternatively be 
written as
\begin{align}
  \hmin{A|B}[ρ] = \max_{σ} - \log \norm<b>{σ¬B^{-½} ρ¬{AB} σ¬B^{-½}}[∞] ¶[min/inv],
\end{align}
where we use the generalized inverse and the maximum is taken over all $σ¬B ∈ \osub{ℋ¬B}$ 
with $\supp{σ¬B} \supseteq \supp{ρ¬B}$. 

We can also reformulate~§[min/def] as a ☼[semi-definite program]{SDP} (SDP). (Semi-Definite Programs are introduced in Section~\ref{sc:premath/sdp}.)
For this purpose, we include $2^{-λ}$ in $σ¬B$ and allow the new $σ¬B$ to be an arbitrary positive semi-definite operator. The min-entropy is then given by
\begin{align}
  \hmin{A|B}[ρ] = -\log \min \big\{ \tr{σ} : 
    σ ∈ \opos{ℋ¬B} \wedge ρ¬{AB} ≤ ⅈ¬A ⨂ σ¬B \big\} ¶
\end{align}
and the optimization problem thus has an efficient numerical solver.
In particular, we consider the SDP for the expression $2^{-\hmin{A|B}[ρ]}$.
\begin{align}
  \begin{array}{rlcrl}
    \multicolumn{2}{c}{\underline{\tn{primal problem}}} & \quad & 
      \multicolumn{2}{c}{\underline{\tn{dual problem}}} \vspace{0.2cm} \\
    \texttt{minimize}: & \ip{ⅈ¬B, σ¬B} & & \texttt{maximize}: & \ip{ρ¬{AB}, X¬{AB}} \\
    \texttt{subject to}: & ⅈ¬A ⨂ σ¬B ≥ ρ¬{AB} & & \texttt{subject to}: & \tr[A]{X¬{AB}} ≤ ⅈ¬B \\
    & σ¬B ∈ \opos{ℋ¬B} &&& X¬{AB} ∈ \opos{ℋ¬{AB}}
  \end{array} ¶[min/sdp]
\end{align} 

Clearly, the dual problem has a finite solution; in fact, it is easy to verify that 
$\ip{ρ¬{AB}, X¬{AB}} ≤ \tr{X¬{AB}} ≤ d¬B$. Furthermore, there exists a 
$σ¬B > 0$ with $ⅈ¬A ⨂ σ¬B > ρ¬{AB}$. Hence, strong duality (Theorem~\ref{th:sdp/strong}) 
applies and the primal and dual solution are equivalent.

Let us now investigate the dual problem more closely. We can replace the inequality in the condition $X¬{B} ≤ ⅈ¬B$ by an equality since adding a positive part to $X¬{AB}$ only increases $\ip{ρ¬{AB}, X¬{AB}}$. Hence, $X¬{AB}$ can be interpreted as a ☼[Choi-Jamiolkowski]{isomorphism!Choi-Jamiolkowski} state of a completely positive ☼{unital} map~(cf.\ Section~\ref{sc:choi-jami}) from $ℋ¬{B'} \iso ℋ¬A$ to $ℋ¬B$. Let $ℰ†$ be that map, then
\begin{align}
  2^{-\hmin{A|B}[ρ]} = \max_{ℰ†} \ip<b>{ρ¬{AB}, ℰ†[Γ¬{AB'}]} = d¬A \max_{ℰ} 
    \ip<b>{ℰ[ρ¬{AB}], γ¬{AB'}} ¶\,,
\end{align}
where the second maximization is over all ☼[TP-CPMs]{TP-CPM} $ℰ$ from $B$ to $B'$, i.e.\ all super-operators whose adjoint is completely positive and unital from $B'$ to $B$. 
The fully entangled state $γ = Γ/d¬A$ is pure and normalized, hence, we can write~\cite{koenig08}
\begin{align}
  \hmin{A|B}[ρ] = - \log d¬A \max_{ℰ} F^2 \big( ℰ[ρ¬{AB}], γ¬{AB'} \big) ¶[min/koenig],
\end{align}
where the maximum is taken over all TP-CPMs from $B$ to $B'$. (Note that $γ$ is defined as the fully entangled in an arbitrary but fixed basis of $ℋ¬A \iso ℋ¬{B'}$. The expression is invariant under the choice of basis, since the fully entangled states can be converted into each other by a unitary appended to $ℰ$.) We write $F$ for the generalized fidelity, which corresponds to the fidelity in this case as $γ$ is normalized.

Alternatively, we can interpret $X¬{AB}$ as the Choi-Jamiolkowski state of a TP-CPM map from $ℋ¬{A'} \iso ℋ¬{B}$ to $ℋ¬{A}$. This immediately leads to the relation
\begin{align}
  \hmin{A|B}[ρ] &= - \log d¬B \max_{ℰ} \ip<b>{ρ¬{AB}, ℰ[γ¬{A'B}]} ¶\,,
%   &= - \log d¬B \max_{ℰ} F^2 \big( ρ¬{ABC} , ℰ[γ¬{A'B}] ⨂ ⅈ¬C \big) ¶\,.
\end{align}
where the maximization is over all TP-CPMs from $A'$ to $A$.

We may now decompose the TP-CPMs of~§[min/koenig] into their Stinespring dilation: an isometry $U : ℋ¬B → ℋ¬{B'B''}$ followed by a partial trace over $ℋ¬{B'}$. Uhlmann's theorem now implies that there exists an extension 
of $γ¬{AB'}$ to $B''$ such that $F( U ρ¬{AB} U† ,\, γ¬{AB'B''} ) = F( ℰ[ρ¬{AB}] ,\, γ¬{AB'} )$. Since such extensions of a pure state are necessarily of the form $γ¬{AB'B''} = γ¬{AB'} ⨂ τ¬{B''}$, we recover the following expression for the min-entropy
\begin{align}
  \hmin{A|B}[ρ] = - \log\, d¬A\! \max¬{B → B'B''} \max_{τ} \, F^2 \big( ρ¬{AB'B''} ,\, 
    γ¬{AB'} ⨂ τ¬{B''}  \big) ¶[min-omni],
\end{align}
where the maximization is over all isometries from $B$ to $B'B''$ and 
states $τ ∈ \onorm{ℋ¬{B''}}$.

Using the expression in~§[min-omni], the min-entropy can be interpreted as a 
measure of distance to a state describing an observer $B$ that is ☼[omniscient]{observer!omniscient} about $A$.
Such an observer must necessarily hold a state $γ$ that is fully entangled with $A$ and
may, in addition, hold an arbitrary state $τ$ that is uncorrelated with $A$. The
min-entropy now evaluates the distance (in terms of the fidelity) of $ρ$ to the closest such 
state.

Finally, we introduce the quantity $\hmin+{A|B}$, which is a trivial lower bound on
$\hmin{A|B}$ and is sometimes used instead of $\hmin{A|B}$~\cite{renner05}.
\begin{align}
  \hmin+{A|B}[ρ] := - \log \norm<b>{ρ¬B^{-½} ρ¬{AB} ρ¬B^{-½}}[∞] ≤ \hmin{A|B}[ρ] ¶\,.
\end{align}
The inequality follows by the choice $σ¬B = ρ¬B$ in~§[min/inv]. It has been shown that
the smooth versions of $\hmin+{A|B}$ and $\hmin{A|B}$ are equivalent up to terms in the
smoothing paramter~\cite{tomamichel10}.

\subsection{The Max-Entropy}
☼*{entropy!max-entropy} ☼*{max-entropy|see{entropy}}

We use the following definition of the max-entropy.

\begin{definition}
  \label{df:max-entropy}
  Let $ρ¬{AB} ∈ \osub{ℋ¬{AB}}$. The max-entropy of $A$ conditioned on $B$ of the 
  state $ρ¬{AB}$ is
  \begin{align}
    \hmax{A|B}[ρ] := \max_{σ}\, \log\, \fidb{ ρ¬{AB} }{ ⅈ¬A ⨂ σ¬B }^2 ¶[max/def],
  \end{align}
  where the maximum is taken over all states $σ ∈ \osub{ℋ¬B}$.
\end{definition}

Since the maximum is taken for normalized states $σ ∈ \onorm{ℋ¬B}$, 
we may rewrite this as
\begin{align}
  \hmax{A|B}[ρ] = \max_{σ}\, \log\, d¬A F^2 \big( ρ¬{AB}, π¬A ⨂ σ¬B \big) ¶.
\end{align}
Contrasting this to the min-entropy in~§[min-omni], the max-entropy can be seen as a measure
of proximity of $ρ$ to a state describing an observer $B$ that is ignorant about $A$. Such an observer necessarily holds a state that is product with the state on $A$ and the max-entropy
evaluates the fidelity with the closest such state.

Introducing an arbitrary ☼{purification} $ρ¬{ABC}$ of $ρ¬{AB}$ and applying ☼{Uhlmann's theorem}, we rewrite this as the following optimization problem.
\begin{align}
  2^{\hmax{A|B}[ρ]} = d¬A \max_{τ}\ \braket{ ρ¬{ABC} | τ¬{ABC} | ρ¬{ABC} } ¶\,,
\end{align}
where $τ$ has the marginal $τ¬{AB} = π¬A ⨂ σ¬B$ for some $σ ∈ \osub{ℋ¬B}$. This is the dual problem of the following SDP:
\begin{align}
  \begin{array}{rlcrl}
    \multicolumn{2}{c}{\underline{\tn{primal problem}}} && 
      \multicolumn{2}{c}{\underline{\tn{dual problem}}} \vspace{0.2cm} \\
    \texttt{minimize}: & µ & & 
      \texttt{maximize}: & \ip{ρ¬{ABC}, Y¬{ABC}} \\
    \texttt{subject to}: & µ ⅈ¬B ≥ \tr[A]{Z¬{AB}} & & 
      \texttt{subject to}: & \tr[C]{Y¬{ABC}} ≤ ⅈ¬A ⨂ σ¬B \\
    & Z¬{AB} ⨂ ⅈ¬C ≥ ρ¬{ABC} &&& \tr{σ¬B} ≤ 1 \\
    & Z¬{AB} ∈ \opos{ℋ¬{AB}} &&& Y¬{ABC} ∈ \opos{ℋ¬{ABC}} \\
    & µ ≥ 0 &&& σ¬{B} ∈ \opos{ℋ¬{B}}
  \end{array} ¶
\end{align} 

Again, it is easy to verify that the dual problem has a finite solution. To see this, note that
$\tr{Y} ≤ d¬A$ due to the constraints in the dual problem, hence, the maximum cannot exceed $d¬A$ for normalized states. Moreover, we can easily construct a primal feasible solution with $Z¬{AB} ⨂ ⅈ¬C > ρ¬{ABC}$ and $µ ⅈ¬B > Z¬B$. Hence, strong duality (Theorem~\ref{th:sdp/strong}) applies and the primal and dual solution are equivalent.

The primal problem can be rewritten by noting that the optimization over $µ$ corresponds to evaluating the $∞$-norm of $Z¬B$.
\begin{align}
  \hmax{A|B}[ρ] = \log \min \Big\{ \norm<b>{Z¬B}[∞] \!: Z¬{AB} ⨂ ⅈ¬C
    ≥ ρ¬{ABC},\, Z¬{AB} ∈ \opos{ℋ¬{AB}} \Big\} ¶[max/minimize]\,.
\end{align}

This can be used to prove upper bounds on the max-entropy. For example, the quantity $\hmax+{A|B}$\,|\,which is sometimes used instead of the max-entropy~\cite{renner05}\,|\,is an upper bound on $\hmax{A|B}$.
\begin{align}
  \hmax+{A|B}[ρ] := \log \max_{σ} \tr{ Π^{ρ¬{AB}} ⅈ¬A ⨂ σ¬B } ≥ \hmax{A|B}[ρ] ¶\,.
\end{align}
This follows from~§[max/minimize] by the choice $Z¬{AB} = Π^{ρ¬{AB}}$, which is the projector onto the support of $ρ¬{AB}$. Note also that smooth versions of $\hmax+{A|B}$ and $\hmax{A|B}$ are equivalent up to terms in the smoothing parameter~\cite{tomamichel10}.

Furthermore, König et al.~\cite{koenig08} showed that the max-entropy can be expressed as a min-entropy of the purified state.
\begin{lemma}
  \label{lm:min-max/dual}
  Let $ρ ∈ \osub{ℋ¬{ABC}}$ be pure. Then,
  \begin{align}
    \hmax{A|B}[ρ] = - \hmin{A|C}[ρ] ¶.
  \end{align}
\end{lemma}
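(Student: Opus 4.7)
\emph{Plan.} The strategy is to reduce both sides to a common optimization problem by applying Uhlmann's theorem (Theorem~\ref{th:pd-uhl}). Starting from the fidelity representations derived earlier in the chapter,
\begin{align*}
  2^{\hmax{A|B}[ρ]} &= d¬A \max_{σ¬B} F^2\!\bigl(ρ¬{AB},\, π¬A ⨂ σ¬B\bigr), \\
  2^{-\hmin{A|C}[ρ]} &= d¬A \max_{V : C → C'C''} \max_{τ¬{C''}} F^2\!\bigl(V ρ¬{AC} V†,\, γ¬{AC'} ⨂ τ¬{C''}\bigr),
\end{align*}
with $C' \iso A$, I would show that both right-hand sides reduce to the common expression
$d¬A \sup \abs<b>{\braket{η | γ¬{AC'} ⨂ τ¬{BC''}}}^2$, where the supremum is taken over purifications $\ket{η}[ABC'C'']$ of $ρ¬{AB}$ and unit vectors $\ket{τ}[BC'']$, with $C''$ chosen of sufficient dimension.

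For the max-entropy side, Uhlmann's theorem gives $F(ρ¬{AB}, π¬A ⨂ σ¬B) = \max_{\ket{η}} \abs<b>{\braket{η | θ}}$, where $\ket{η}$ ranges over purifications of $ρ¬{AB}$ on $ABC'C''$ and $\ket{θ} = \ket{γ}[AC'] ⨂ \ket{σ}[BC'']$ is the canonical purification of $π¬A ⨂ σ¬B$ built from a purification $\ket{σ}[BC'']$ of $σ¬B$. As $σ¬B$ and the choice of $\ket{σ}[BC'']$ vary, the vector $\ket{σ}[BC'']$ (relabelled $\ket{τ}$) sweeps over every unit vector on $BC''$, yielding exactly the target expression. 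For the min-entropy side, $V \ket{ρ}[ABC]$ is a purification of $V ρ¬{AC} V†$ on $B$, and as the isometry $V$ varies (with $\dim(C'C'') ≥ \dim C$) it sweeps over every purification of $ρ¬{AB}$ on $ABC'C''$. The key structural point is that because $γ¬{AC'}$ is pure, every purification on $B$ of $γ¬{AC'} ⨂ τ¬{C''}$ factorizes as $\ket{γ}[AC'] ⨂ \ket{τ'}[BC'']$ with $\ket{τ'}[BC'']$ a purification of $τ¬{C''}$: this follows by Schmidt-decomposing such a purification across the $B$-versus-$AC'C''$ cut and observing that the eigenvectors of $γ¬{AC'} ⨂ τ¬{C''}$ must be of the form $\ket{γ}[AC'] ⨂ \ket{κ_i}[C'']$ for eigenvectors $\ket{κ_i}$ of $τ¬{C''}$. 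Applying Uhlmann therefore yields the same optimization as above.

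The main obstacle will be the bookkeeping with the auxiliary dimensions: one must take $C'C''$ large enough both to carry purifications of $σ¬B$ on the max-entropy side and to allow $V$ to sweep over every purification of $ρ¬{AB}$ on the min-entropy side. This is handled by choosing $\dim C''$ sufficient (both suprema are attained at finite dimension) and by invoking the invariance of the generalized fidelity and of both the min- and max-entropies under isometric embeddings of the purifying systems, so that enlarging $C''$ does not affect either optimum.
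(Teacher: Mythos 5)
Your proposal is correct and follows essentially the same route as the paper's proof: both start from the fidelity expression for the max-entropy and the characterization §[min-omni] of the min-entropy, and both hinge on Uhlmann's theorem together with the observation that purifications of $γ¬{AC'} ⨂ τ¬{C''}$ onto $B$ factorize because $γ¬{AC'}$ is pure. The only cosmetic difference is that the paper closes each direction with monotonicity of the fidelity under partial trace, whereas you identify the two optimization sets exactly via a common purification-overlap expression; the dimension bookkeeping you flag is handled in the paper by the same invariance-under-embedding argument.
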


\begin{proof}
  We show that $2^{\hmax{A|B}}/d¬A = 2^{-\hmin{A|C}} / d¬A$ using the 
  expression~§[min-omni] for the min-entropy and Def.~\ref{df:max-entropy} for the 
  max-entropy, i.e.\ we will show that
  \begin{align}
    \max¬{σ} \fidb{ρ¬{AB}}{π¬A ⨂ σ¬B} = \max¬{C → C'C''} \max_{τ}\ 
      \fidb{ρ¬{AC'C''}}{γ¬{AC'} ⨂ τ¬{C''}} ¶,
  \end{align}
  where the maximization is over quantum states $σ ∈ \osub{ℋ¬B}$ and 
  $τ¬{C''} ∈ \osub{ℋ¬{C''}}$ as well as embeddings $C → C'C''$.

  Due to Uhlmann's theorem~§[fid/uhl] and the fact that the fidelity cannot decrease 
  under partial trace, we have
  \begin{align}
    \fidb{ρ¬{AB}}{π¬A ⨂ σ¬B} 
      &= \max¬{C → C'C''}\ \fidb{ρ¬{ABC'C''}}{γ¬{AC'} ⨂ φ¬{BC''}} ¶\\
      &≤ \max¬{C → C'C''} \max_{τ}\ \fidb{ρ¬{AC'C''}}{γ¬{AC'} ⨂ τ¬{C''}} ¶,
  \end{align}
  where $\ket{φ}$ is any purification of $σ$ and the fully entangled state $\ket{γ}$ 
  purifies the fully mixed state $π$
  by definition. Since this holds for every $σ$, it particularly holds when $σ$
  maximizes the fidelity in the definition of the max-entropy.
  On the other hand,
  \begin{align}
    \fidb{ρ¬{AC'C''}}{γ¬{AC'} ⨂ τ¬{C''}}
      &= \max_{\ket{φ}}\ \fidb{ρ¬{ABC'C''}}{γ¬{AC'} ⨂ φ¬{BC''}} ¶\\
      &≤ \max_{σ}\ \fidb{ρ¬{AB}}{π¬{A} ⨂ σ¬{B}} ¶.
  \end{align}
  Since this holds for all embeddings $C → C'C''$ and all states $τ$, it
  particularly holds for the tuple that achieves the maximum fidelity in~§[min-omni]. 
  Thus, we established the equality of the two expressions.
\end{proof}

\subsection{The von Neumann Entropy}
☼*{von Neumann entropy|see{entropy}}

For completeness, we also define the ☼[von Neumann entropy]{entropy!von Neumann} for sub-normalized states.

\begin{definition}
  \label{df:vn-entropy}
  Let $ρ¬{AB} ∈ \osub{ℋ¬{AB}}$. Then, the von Neumann entropy is
  \begin{align}
    \hvn{A|B}[ρ] := H(ρ¬{AB}) - H(ρ¬B)\,, \tn*{where} H(ρ) := -\tr{ρ \log ρ} ¶\,.
  \end{align}
\end{definition}

We may rewrite this as an optimization problem. For any $σ ∈ \osub{ℋ¬B}$ 
with $\tr{σ} ≤ \tr{ρ}$, we find
\begin{align}
  \hvn{A|B}[ρ] &= \tr<b>{ρ¬{AB} (ⅈ¬A ⨂ \log σ¬B - \log ρ¬{AB})} - 
    \tr<b>{ρ¬B (\log σ¬B - \log ρ¬B )} ¶\\
  &≥ \tr<b>{ρ¬{AB} (ⅈ¬A ⨂ \log σ¬B - \log ρ¬{AB})} ¶.
\end{align}
We used that the second term, the relative entropy 
$\tr{ρ¬B (\log σ¬B - \log ρ¬B)}$, is non-negative thanks to Klein's 
inequality~\cite{klein31}. Furthermore, the term vanishes for the choice $σ¬B = ρ¬B$. 
If we optimize the expression over all $σ ∈ \osub{ℋ¬B}$ with $\tr{σ} ≤ \tr{ρ}$, we therefore get
\begin{align}
    \hvn{A|B}[ρ] = \max_{σ} \tr<b>{ρ¬{AB} (ⅈ¬A ⨂ \log σ¬B - \log ρ¬{AB})} ¶[vn/klein].
\end{align}

We have seen that the min- and max-entropy satisfy a ☼{duality} relation in 
Lemma~\ref{lm:min-max/dual}.
Here, we show that the von Neumann entropy also satisfies a duality relation. Note that 
$H(ρ) = ∑_i λ_i \log \frac{1}{λ_i}$ is a function of the ☼[eigenvalues]{eigenvalue} $\{ λ_i \}$
of $ρ$ only. Due to the ☼[Schmidt decomposition]{decomposition!Schmidt} of any pure state
$ρ¬{ABC}$, it holds that
\begin{align}
  \hvn{A|B}[ρ] = H(ρ¬{AB}) - H(ρ¬B) = H(ρ¬C) - H(ρ¬{AC}) = -\hvn{A|C} ¶[vn/dual]\,.
\end{align}

\section{Relation to Classical Entropies}
\label{se:min/class}

Here, we evaluate the min- and max-entropy for states on two classical ☼[registers]{register}, $X$ and $Y$. Their content is described by a joint probability distribution $P¬{XY}$ encoded in the state
\begin{align}
  ρ¬{XY} = ∑_{x,y} P¬{XY}(x,y)\, \proj{x}[X] ⨂ \proj{y}[Y] ¶\,.
\end{align}

\subsection{Classical Min-Entropy}

In order to evaluate the min-entropy $\hmin{X|Y}[\tinyP] = \hmin{X|Y}[ρ]$ 
for this state, we consider the SDP for the min-entropy.
We use the operators
\begin{align}
  σ¬Y = ∑_y P¬{XY}(x_{*}^y, y)\, \proj{y} \tn*{and} X¬{AB} = ∑_y \proj{x_{*}^y} ⨂ \proj{y} ¶\,,
\end{align} 
where $x_{*}^y$ is the $x$ that maximizes $P¬{XY}(x, y)$ for a given $y$. It is easy to verify that $σ¬Y$ is primal feasible and $X¬{AB}$ is dual feasible. Since they both give the same (upper and lower, respectively) bound on the min-entropy, we get the equality
\begin{align}
  \hmin{X|Y}[\tinyP] = - \log ∑_y P¬{XY}(x_{*}^y, y) 
    = - \log ∑_y P¬{Y}(y)\, 2^{-\hmin{X}[\tinyP^y]} ¶[min/class],
\end{align}
where $\hmin{X}[\tinyP^y] = \log \max_x P¬X^y(x)$ is the min-entropy of $X$ evaluated for the conditional probability distribution $P¬X^y(x) = P¬{XY}(x, y)/P¬{Y}(y)$.

\subsection{Classical Max-Entropy}

For the max-entropy $\hmax{X|Y}[\tinyP] = \hmax{X|Y}[ρ]$ the calculation is a bit more involved. First, note that we can assume that $σ¬Y$ in the optimization of Def.~\ref{df:max-entropy} is classical in the same basis as $ρ¬Y$. If the fidelity $F(ρ¬{XY}, ⅈ¬X ⨂ σ¬Y)$ is maximal for a given $σ¬Y$, we can always measure both $ρ¬Y$ and $σ¬Y$ in the classical basis of $ρ¬Y$. This operation cannot decrease the fidelity due to~§[fid/mon] and the measured state $σ¬Y^*$ thus achieves the optimum and is of the form
\begin{align}
  σ¬Y^* = ∑_y Q¬Y(y)\, \proj{y} \,, \tn*{where $Q¬Y$ is a probability distribution.} ¶
\end{align}
Hence, we maximize over all probability distributions $Q¬Y$ so that
\begin{align}
  \hmax{X|Y}[\tinyP] = \max_{Q¬Y}\, \log \Big( ∑_{x,y} √{P¬{XY}(x, y)} √{Q¬Y(y)} \Big)^2 ¶.
\end{align}
It is straightforward to verify\,|\,for example using the method of Lagrange multipliers\,|\,that the optimal $Q¬Y$ is proportional to the map 
$y ↦ \big( ∑_x √{P¬{XY}(x, y)} \big)^2$. Hence,
if we insert the (normalized) optimal $Q¬Y$ into the above equation, we get
\begin{align}
  \hmax{X|Y}[\tinyP] = \log \Big( ∑_y \Big( ∑_x √{P¬{XY}(x,y)} \Big)^2 \Big) 
    = \log ∑_y P¬Y(y)\, 2^{\hmax{X}[P^y]} ¶,
\end{align}
where $\hmax{X}[\tinyP^y] = 2 \log ∑_x √{P¬X^y(x)} = \hh{½}{X}[\tinyP^y]$ is the Rényi entropy of order $½$.

\section{Guessing Probability}
\label{se:min/guessing}

The classical min-entropy $\hmin{X|Y}[\tinyP]$ in~§[min/class] can be interpreted as a ☼{guessing probability}. Consider an observer with access to $Y$. What is the probability that this observer guesses $X$ correctly, using his optimal strategy? The optimal strategy of the observer is clearly to guess the $x$ with the highest probability conditioned on his observation $y$. As before, we denote the probability distribution of $x$ conditioned on a fixed $y$ by $P¬{X}^y$.
Then, the (average) guessing probability is given by 
\begin{align}
  ∑_y P¬{Y}(y)\, \max_x P¬{X}^y(x) = 2^{-\hmin{X|Y}[\tinyP]}
\end{align}

It was noted by König \emph{et.\ al.}~\cite{koenig08} that this interpretation of the min-entropy extends to the case where $Y$ is replaced by a quantum system $B$ and the allowed strategies include any measurement of $B$.

Consider a ☼[CQ-state]{state!CQ} $ρ¬{XB} = ∑_x P¬X(x) \proj{x} ⨂ ω¬B^x$ with 
$ω^x ∈ \onorm{ℋ¬B}$. For states of this form, the expression in~§[min/koenig] 
simplifies to
\begin{align}
  2^{-\hmin{X|B}[ρ]} &= \max_{ℰ} \braket<b>{Γ¬{XB'} | 
      \sum_x P¬X(x)\, \proj{x} ⨂ ℰ[ω¬B^x] | Γ¬{XB'}} ¶\\
    &= \max_{ℰ} ∑_x P¬X(x)\, \braket<b>{x | ℰ[ω¬B^x] | x} ¶\,.
\end{align}
This expression reaches its minimum when $ℰ$ is a measurement, i.e.\ a map from $B$ to a register $ℋ¬{B'} \iso ℋ¬X$. Moreover, the expression optimizes the probability that $ℰ[ω¬B^x]$ is mapped to $\ket{x}$ over all such measurement. We can interpret $\ket{x}¬{B'}$ as the observer's guess of the value $x$ and thus $2^{-\hmin{X|B}[ρ]}$ is the guessing probability in the sense described above.\footnote{Note that the special case when $B$ is classical is recovered by setting $ℰ : \ket{y} ↦ \ket{x_*^y}$.}

\section{Properties of the Min- and Max-Entropy}
\label{se:min/prop}
☼*{entropy!min-entropy} ☼*{entropy!max-entropy}

Many properties of the min- and max-entropies (and their smooth variants, introduced in 
Section~\ref{sc:smooth}) can be derived from properties of the SDP for the min-entropy, 
Eq.~§[min/sdp].
Let us consider the functional $Φ¬{A|B} : ρ¬{AB} ↦ 2^{-\hmin{A|B}[ρ]}$, given by the SDP, which we extend to arbitrary Hermitian arguments $K ∈ \oherm{ℋ¬{AB}}$. We use the shorthand notation $Φ \equiv Φ¬{A|B}$ in the following. We restate the SDP here for the 
convenience of the reader.
\begin{align}
  \begin{array}{rlcrl}
    \multicolumn{2}{c}{\underline{\tn{primal problem}}} & \quad & 
      \multicolumn{2}{c}{\underline{\tn{dual problem}}} \vspace{0.2cm} \\
    \texttt{minimize}: & \ip{ⅈ¬B, σ¬B} & & \texttt{maximize}: & \ip{K¬{AB}, X¬{AB}} \\
    \texttt{subject to}: & ⅈ¬A ⨂ σ¬B ≥ K¬{AB} & & \texttt{subject to}: & \tr[A]{X¬{AB}} ≤ ⅈ¬B \\
    & σ¬B ∈ \opos{ℋ¬B} &&& X¬{AB} ∈ \opos{ℋ¬{AB}}
  \end{array} ¶
\end{align} 

We may express the min- and max-entropies of $A$ conditioned on $B$ of a pure state 
$ρ ∈ \osub{ℋ¬{ABC}}$ as follows. 
The min-entropy is given as $\hmin{A|B}[ρ] = - \log Φ¬{A|B}(ρ¬{AB})$ and the max-entropy as $\hmax{A|B}[ρ] = \log Φ¬{A|C}(ρ¬{AC})$, by duality. 
The most important properties of the functional $Φ$ are listed in Table~\ref{tb:Phi}. 

\begin{table}[!ht]
  \rule{\textwidth}{1pt}
  \begin{minipage}{0.98\textwidth}
    \vspace{5pt}
    Let $K, L ∈ \oherm{ℋ¬{AB}}$ and $M ∈ \opos{ℋ¬{AB}}$.
    \begin{enum}

☛[i.] Multiplication with scalar: Let $λ ≥ 0$. Then, $Φ(λ K) = λ\,Φ(K)$.

☛[ii.] Monotonicity: $K ≥ L ⇒ Φ(K) ≥ Φ(L)$.

☛[iii.] Sub-Additivity: $Φ(K + L) ≤ Φ(K) + Φ(L)$. Furthermore, equality 
  holds if $K¬B$ and $L¬B$ have orthogonal support.
  
☛[iv.] Convexity: 
  $Φ(λ K + (1 - λ) L) ≤ λ Φ(K) + (1-λ) Φ(L)$ for $λ ∈ [0, 1]$.

☛[v.] Invariance under Isometries: 
  $Φ$ is invariant under local isometries on the $A$ or $B$ system. Namely,
  $Φ¬{A'|B'} \big( (U ⨂ V) K (U† ⨂ V†) \big) = Φ(K)$, where $U : ℋ¬A → ℋ¬{A'}$ 
  and $V: ℋ¬B → ℋ¬{B'}$ are isometries.
  
☛[vi.] Data Processing: 
  Let $ℰ$ be a trace non-increasing CPM from 
  $B$ to $B'$ and let $\sF$ be a sub-unital 
  CPM from $A$ to $A'$ ($\sF[ⅈ¬A] ≤ ⅈ¬{A'}$).
  Then, we have
  $Φ¬{A|B'} \big( ℰ[M] \big) ≤ Φ(M)$ and $Φ¬{A'|B} \big ( \sF[M] \big) ≤ Φ(M)$.
  
☛[vii.] Data Processing on Extension: Let $ℰ$ be a trace 
  non-increasing CPM from $C$ to $C'$. Then, $Φ(ℰ[M]) ≤ Φ(M)$.
  
☛[viii.] Tensor Product: 
  $Φ¬{A_1A_2|B_1B_2}(L ⨂ K) = Φ¬{A_1|B_1}(L)\, Φ¬{A_2|B_2}(K)$.

    \end{enum}
  \end{minipage}
  \vspace{5pt} \\
  \rule{\textwidth}{1pt}
  \vspace{-20pt}
  \caption[Properties of $Φ$.]{\emph{Properties of $Φ$.}}
  \label{tb:Phi}
\end{table}

Some properties can be readily verified 
by close inspection of the SDP and we will only provide their proof when necessary.

\begin{proof}[Proof of Property iii.]
  The inequality follows from the fact that dual feasibility 
  of $X¬{AB}$ is independent of the argument of
  $Φ$. Hence, if $X¬{AB}$ is optimal for $K + L$, we have 
  $Φ(K + L) = \ip{K, X} + \ip{L, X} ≤ Φ(K) + Φ(L)$.

  For equality, consider the optimal primal for $K + L$, which we denote $σ¬B$. Then, 
  $Π^{L¬B} σ¬B Π^{L¬B}$ is primal feasible for $L$ and $Π^{K¬B} σ¬B Π^{K¬B}$
  is primal feasible for $K$. Hence, if these projectors are orthogonal, we have 
  \begin{align}
    Φ(K) + Φ(L) &= \tr{(Π^{K¬B} + Π^{L¬B}) σ¬B} ≤ \tr{σ} = Φ(K + L) ¶. \qedhere
  \end{align}
\end{proof}

\begin{proof}[Proof of Property vi.]
  Let $σ¬B$ be the optimal primal for $M$. Then, $ℰ[σ]$ is primal feasible for $ℰ[M]$ since
  \begin{align}
    ⅈ¬A ⨂ σ¬B ≥ M¬{AB} \implies ⅈ¬A ⨂ ℰ[σ¬B] ≥ ℰ[M¬{AB}] ¶\,.
  \end{align}
  Moreover, $\tr{ℰ[σ]} ≤ \tr{σ}$, concluding the proof of the first statement.   
  To prove the second statement, note that $σ$ is primal feasible for $\sF[M]$ since
  \begin{align}
     ⅈ¬A ⨂ σ¬B ≥ M¬{AB} \implies ⅈ¬{A'} ⨂ σ¬B ≥ \sF[ⅈ¬A] ⨂ σ¬B ≥ \sF[M¬{AB}] ¶\,,
  \end{align}
  where we applied the defining property of sub-unital maps.
\end{proof}

\begin{proof}[Proof of Property vii.]
  Lemma~\ref{lm:pt-norm-bound} establishes that $\tr[C]{ℰ[M]} ≤ \tr[C]{M}$. The property then
  follows from monotonicity.
\end{proof}

\begin{proof}[Proof of Property viii.]
  If $σ¬{B_1}$ is primal optimal for $L$ and $σ¬{B_2}$ is primal optimal for $K$, 
  then $σ¬{B_1} ⨂ σ¬{B_2}$ is primal feasible for $L ⨂ K$. Moreover, if $X¬{A_1B_1}$ is dual
  optimal for $L$ and $X¬{A_2B_2}$ is dual optimal for $K$, then $X¬{A_1B_1} ⨂ X_{A_2B_2}$ is
  dual feasible for $L ⨂ K$. Hence, the equality follows from properties of the inner product.
\end{proof}

\subsection{First Bounds and Order}

We first show the following technical lemma.

\begin{lemma}
  \label{lm:min/func-bounds}
  Let $K ∈ \oherm{ℋ¬{AB}}$, $d¬A^* = \rank{ \trace¬B \{ K \}_+ } ≤ d¬A $ and $d¬B^* = 
  \rank{ \trace¬A \{ K \}_+} ≤ d¬B$. Then, 
  \begin{align}
    \frac{1}{d¬A^*} \trace\{K\}_+ ≤ Φ(K) ≤ \min \{d¬A^*, d¬B^*\} \trace\{K\}_+ ¶.
  \end{align}
\end{lemma}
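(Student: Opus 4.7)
The plan is to certify both inequalities by exhibiting feasible points of the SDP that defines $Φ$. Let $Π := Π^{\{K\}_+}$ denote the projector onto the positive eigenspace of $K$, and record the two support inclusions $\supp\{K\}_+ ⊆ \supp(\trace¬B \{K\}_+) ⨂ ℋ¬B$ and $\supp\{K\}_+ ⊆ ℋ¬A ⨂ \supp(\trace¬A \{K\}_+)$ together with the identity $Π K Π = \{K\}_+$ (which holds because $K$ and $\{K\}_+$ agree on the positive eigenspace of $K$, while $\{K\}_-$ lives in its orthogonal complement).

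For the lower bound I would take the dual candidate $X¬{AB} := \frac{1}{d¬A^*}\, Π$. Writing $Π¬A^\star$ for the rank-$d¬A^*$ projector onto $\supp(\trace¬B \{K\}_+)$, the first support inclusion gives $Π ≤ Π¬A^\star ⨂ ⅈ¬B$, hence $\tr[A]{X} ≤ ⅈ¬B$ and $X$ is dual feasible. Its objective value equals $\ip{K, X} = \frac{1}{d¬A^*} \tr{Π K Π} = \frac{1}{d¬A^*} \tr{\{K\}_+}$, and weak duality (Theorem~\ref{th:sdp/weak}) delivers the lower bound.

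For the upper bound, monotonicity of $Φ$ (property~ii in Table~\ref{tb:Phi}) combined with $\{K\}_+ ≥ K$ reduces the task to bounding $Φ(M)$ for $M := \{K\}_+ ≥ 0$, and I would then exhibit two primal feasible $σ¬B$ for the two halves of the minimum. Setting $σ¬B := \norm{M}[∞]\, Π¬B^\star$ with $Π¬B^\star := Π^{\trace¬A M}$ of rank $d¬B^*$, the second support inclusion gives $M ≤ \norm{M}[∞]\, ⅈ¬A ⨂ Π¬B^\star$, so $σ¬B$ is feasible with $\tr{σ¬B} = d¬B^* \norm{M}[∞] ≤ d¬B^* \tr{M}$. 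Setting instead $σ¬B := d¬A^* \tr[A]{M}$ leaves us to verify the operator inequality $M ≤ d¬A^*\, ⅈ¬A ⨂ \tr[A]{M}$, which, using that $M$ is supported on $Π¬A^\star ℋ¬A ⨂ ℋ¬B$, reduces to the general fact that any positive operator $ρ$ on a bipartite space with $A$-dimension $d$ satisfies $ρ ≤ d\, ⅈ¬A ⨂ \tr[A]{ρ}$.

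The main obstacle is this last operator inequality. I would verify it by eigendecomposing $ρ = ∑_k p_k \proj{k'}$ and handling each pure term separately: writing $\ket{k'} = ∑_i \ket{i} ⨂ \ket{φ_i}$ and an arbitrary $\ket{ψ} = ∑_i \ket{i} ⨂ \ket{ψ_i}$, the Cauchy--Schwarz inequality yields $\abs{\braket{k'|ψ}}^2 = \abs<b>{∑_i \braket{φ_i|ψ_i}}^2 ≤ d ∑_i \abs{\braket{φ_i|ψ_i}}^2 ≤ d ∑_{i,j} \abs{\braket{φ_i|ψ_j}}^2 = d\, \braket{ψ | ⅈ¬A ⨂ \tr[A]{\proj{k'}} | ψ}$, and averaging over $k$ with weights $p_k$ completes the argument. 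Everything else is routine bookkeeping with the support inclusions.
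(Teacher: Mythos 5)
Your proof is correct, and while your lower bound is essentially the paper's (both certify it with the dual feasible operator $\frac{1}{d_A^*}\,\Pi^{\{K\}_+}$, feasibility following from the support inclusion of $\{K\}_+$ in $\supp(\trace_B\{K\}_+)\otimes\mathcal{H}_B$), your upper bound takes a genuinely different route. The paper eigendecomposes $K$, invokes sub-additivity and positive homogeneity of $\Phi$ to reduce to the pure summands $\varphi^i$ of $\{K\}_+$, and bounds each $\Phi(\varphi^i)$ by its Schmidt rank via the primal candidate $\sigma_B=\Pi^{\varphi^i_B}$; the dimension bound $\min\{d_A^*,d_B^*\}$ then enters through the supports of the individual Schmidt vectors. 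You instead use monotonicity to pass to $M=\{K\}_+$ once and for all and exhibit two global primal certificates: $\Vert M\Vert_\infty\,\Pi^{\trace_A M}$ for the $d_B^*$ half, and $d_A^*\,\trace_A M$ for the $d_A^*$ half, the latter resting on the operator inequality $M\le d_A^*\,\iota_A\otimes\trace_A M$, which you prove by Cauchy--Schwarz on pure states. That inequality is in fact Lemma~\ref{lm:op-bound} of the appendix (proved there via the generalized inverse and the Schmidt rank), so your argument could be shortened by citing it; conversely, your Cauchy--Schwarz derivation is a self-contained alternative proof of that lemma. What each approach buys: the paper's pure-state decomposition exercises the sub-additivity property of $\Phi$ that is reused elsewhere in the chapter, whereas your version avoids sub-additivity entirely and makes the two halves of the minimum visibly symmetric as two independent feasibility certificates. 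All the feasibility checks and the bound $\Vert M\Vert_\infty\le\trace M$ go through as you state them.
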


\begin{proof}
  Using the eigenvalue decomposition $K = ∑_i λ_i\, \proj{φ^i}$ and Properties i.\ 
  to iii., we find
  \begin{align}     
    Φ(K) &≤ Φ \bigg(\, ∑_{i: λ_i > 0} λ_i\, φ^i \bigg)
        ≤ ∑_{i: λ_i > 0} λ_i\, Φ \big( φ^i \big) ¶\,.
  \end{align}
  Since the $φ^i$ are normalized and pure, we have, $φ^i ≤ ⅈ¬A ⨂ Π^{φ¬B^i}$ and, 
  thus, $σ¬B = Π^{φ¬B^i}$ is primal feasible. This means that
  $Φ(φ^i) ≤ \tr{Π^{φ¬B^i}} = \tr{Π^{φ¬A^i}} ≤ \min\{d¬A^*, d¬B^*\}$, where the equality 
  follows from the Schmidt decomposition and the latter inequality from 
  \begin{align}
    \supp{φ¬B^i} \subseteq \supp{ \trace¬A\{K\}_+ } \!\tn*{and}\!
      \supp{φ¬A^i} \subseteq \supp{ \trace¬B\{K\}_+ } ¶.
  \end{align}
  This concludes the proof of the upper bound. 
    
  On the other hand, we use that 
  \begin{align}
    X¬{AB} = \frac{1}{d¬A^*}\, Π^{\trace¬B\{K\}_+} 
      ⨂ Π^{\trace¬A\{K\}_+} ≥ \frac{1}{d¬A^*}\, Π^{\{K\}_+} ¶.
  \end{align}
  is dual feasible. Hence, $Φ(K) ≥ \tr{K¬{AB} X¬{AB}} ≥ \frac{1}{d¬A^*} \trace\{K\}_+$.
\end{proof}

Using these properties, we now establish bounds on the conditional min- and max-entropies in terms of the support of the marginals of the state. This also establishes that\,|\,for 
normalized states\,|\,the min-entropy is always smaller than the von Neumann entropy, 
which is in turn smaller than the max-entropy.
\begin{proposition}
  \label{pr:min-max-bounds}
  Let $ρ¬{AB} ∈ \osub{ℋ¬{AB}}$, $d¬A^* = \rank{ρ¬A}$, $d¬B^* = \rank{ρ¬B}$ and 
  $d¬{AB}^* = \rank{ρ¬{AB}}$. Then, using $t = 1/\tr{ρ}$, 
  the following bounds hold:
  \begin{align}
    -\log \min\{d¬A^*, d¬B^*\} &≤ \hmin{A|B}[ρ] - \log t ¶\\
      &≤ t\, \hvn{A|B}[ρ] ¶\\
      &≤ \hmax{A|B}[ρ] + \log t ≤ \log \min\{d¬A^*, d¬{AB}^*\} ¶.
  \end{align}
\end{proposition}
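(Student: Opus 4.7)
The chain consists of four inequalities, which I would prove separately. The outer two rank bounds will follow in one line each from Lemma~\ref{lm:min/func-bounds}, while the two middle bounds involving $\hvn{A|B}$ will be linked by the duality Lemma~\ref{lm:min-max/dual} together with the Schmidt-decomposition duality of the von Neumann entropy, so only one of them needs genuine work.

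For the first inequality I would apply Lemma~\ref{lm:min/func-bounds} directly to $K = ρ¬{AB} ∈ \opos{ℋ¬{AB}}$: since $\{K\}_+ = K$, the ranks of its marginals are $d¬A^*$ and $d¬B^*$, so $Φ(ρ¬{AB}) ≤ \min\{d¬A^*, d¬B^*\}/t$ and taking $-\log$ yields the bound. For the last inequality, fix a pure purification $ρ¬{ABC}$ and use Lemma~\ref{lm:min-max/dual} to write $\hmax{A|B}[ρ] = \log Φ¬{A|C}(ρ¬{AC})$; applying the same lemma to $ρ¬{AC}$ gives $Φ¬{A|C}(ρ¬{AC}) ≤ \min\{d¬A^*, d¬{AB}^*\}/t$, where I used $\rank{ρ¬C} = \rank{ρ¬{AB}} = d¬{AB}^*$ for the pure tripartite state.

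The second inequality $\hmin{A|B}[ρ] - \log t ≤ t\, \hvn{A|B}[ρ]$ is where the real work is. First I would reduce to the normalized case by setting $ω := t ρ$ and checking that $\hmin{A|B}[ω] = \hmin{A|B}[ρ] - \log t$ (scale the SDP constraint by $t$) and $\hvn{A|B}[ω] = t\, \hvn{A|B}[ρ]$ (the $-t \log t$ terms in $H(ω¬{AB})$ and $H(ω¬B)$ cancel because $\tr{ρ¬{AB}} = \tr{ρ¬B}$). It therefore suffices to prove $\hmin{A|B}[ω] ≤ \hvn{A|B}[ω]$ for the normalized state $ω$. Now pick $σ¬B$ optimal in the min-entropy SDP, rescaled so that $\tr{σ¬B} = 1$ and $ω¬{AB} ≤ 2^{-\hmin{A|B}[ω]}\, ⅈ¬A ⨂ σ¬B$. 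Since $\supp{ω¬B} \subseteq \supp{σ¬B}$, operator monotonicity of $\log$ yields $\log ω¬{AB} ≤ -\hmin{A|B}[ω]\, ⅈ + ⅈ¬A ⨂ \log σ¬B$ on $\supp{ω¬{AB}}$; taking the trace against $ω¬{AB}$ produces
\[
  -H(ω¬{AB}) ≤ -\hmin{A|B}[ω] + \tr{ω¬B \log σ¬B}.
\]
Klein's inequality $\tr{ω¬B \log σ¬B} ≤ \tr{ω¬B \log ω¬B} = -H(ω¬B)$ then rearranges to $\hmin{A|B}[ω] ≤ H(ω¬{AB}) - H(ω¬B) = \hvn{A|B}[ω]$, as desired.

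Finally, for the third inequality $t\, \hvn{A|B}[ρ] ≤ \hmax{A|B}[ρ] + \log t$, I would use duality: for a pure purification $ρ¬{ABC}$, Lemma~\ref{lm:min-max/dual} gives $\hmax{A|B}[ρ] = -\hmin{A|C}[ρ]$, and the Schmidt-decomposition identity §[vn/dual] extends verbatim to sub-normalized pure states (the marginals on $AB$ and $C$ share the same nonzero spectrum, and similarly for $B$ and $AC$), so $\hvn{A|B}[ρ] = -\hvn{A|C}[ρ]$. Substituting these rewrites the third inequality as the second applied to the marginal $ρ¬{AC}$. The only mildly delicate point in the whole argument is handling the logarithm on possibly degenerate supports, but with the convention $0 \log 0 := 0$ and restricting to $\supp{ω¬{AB}}$ this is completely standard.
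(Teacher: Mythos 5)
Your proposal is correct and follows essentially the same route as the paper: the outer bounds from Lemma~\ref{lm:min/func-bounds}, the middle bound via operator monotonicity of the logarithm plus Klein's inequality (the paper packages this as the variational formula~§[vn/klein]), and the third and fourth inequalities by the duality relations for a purification. Your preliminary rescaling to the normalized state $ω = t\,ρ$ is a harmless cosmetic difference from the paper, which instead carries the factor $\tr{ρ}$ through the estimate directly.
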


\begin{proof}
  The lower bound on the min-entropy follows directly from Lem\-ma~\ref{lm:min/func-bounds}.
  To get the second inequality, we consider the tuple $\{ σ¬B, λ \}$ that optimizes the
  min-entropy in Def.~\ref{df:min-entropy}. Hence, $ⅈ¬A ⨂ \tr{ρ} σ¬B ≥ \tr{ρ} 2^λ ρ¬{AB}$.
  We now plug $\tr{ρ} σ¬B$ into~§[vn/klein], providing us with a lower bound on the von
  Neumann entropy.
  \begin{align}
    \hvn{A|B}[ρ] &≥ \tr<b>{ρ¬{AB} \big( ⅈ¬A ⨂ \log \big( \tr{ρ} σ¬B \big) - 
        \log ρ¬{AB} \big)} ¶\\
%      &= \tr<b>{ρ¬{AB} \big( \log \big( ⅈ¬A ⨂ \tr{ρ} σ¬B \big) - 
%        \log ρ¬{AB} \big)} ¶\\
      &≥ \tr<b>{ρ¬{AB} \big(\log \big(\tr{ρ} 2^λ ρ¬{AB}\big) - \log ρ¬{AB}\big)} ¶\\
      &= \tr{ρ} \big(λ + \log \tr{ρ} \big)¶,
  \end{align}
  where we used the operator monotonicity of the $\log$ function (cf.~Section~\ref{sc:opmono}) 
  to get the second inequality. This establishes the second inequality of the statement of the 
  lemma. 
  
  The third and fourth inequality follow symmetrically from the first two inequalities 
  applied to the marginal state
  $ρ¬{AC}$ of any purification $ρ ∈ \osub{ℋ¬{ABC}}$ of $ρ$ together with the
  duality relations $\hmax{A|B} = -\hmin{A|C}$ and $\hvn{A|B} = -\hvn{A|C}$. Note also that 
  $d¬C^* = d¬{AB}^*$ due to the Schmidt decomposition.
\end{proof}

For example, the above lemma implies that the min- and max-entropy of 
normalized pure states is at most zero and that the min-entropy cannot
exceed the max-entropy for normalized states.

\subsection{Continuity}

For classical-quantum states, the operational interpretation of the conditional min-entropy as a guessing probability (cf.\ \cite{koenig08}) already implies its
continuity in the state. To see this, note that a discontinuity in
the guessing probability could be detected experimentally using a
fixed number of trials (the number depending only on the required
precision), hence giving us the means to distinguish between
arbitrarily close states for a cost (in terms of the number of
trials) independent of their distance. For sufficiently close
states, this would contradict the upper bound on the distinguishing
advantage, Eq.~§[dist-adv]. Here, we make this statement more
precise.

\begin{proposition}
  \label{pr:cont} 
  Let $ρ, τ ∈ \osub{ℋ¬{AB}}$ and $δ := D(ρ, τ)$, then
  \begin{align}
    \abs<b>{\hmin{A|B}[ρ] - \hmin{A|B}[τ]} ≤ δ\, \frac{ d¬A \min\{d¬A, d¬B\}}
      {\ln 2 \cdot \min\{ \tr{ρ}, \tr{σ} \}} ¶\,.
  \end{align}
\end{proposition}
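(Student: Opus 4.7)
The plan is to work directly with the functional $\Phi = \Phi_{A|B}$ introduced just before Table~\ref{tb:Phi}, and to leverage its sub-additivity and monotonicity together with the bounds of Lemma~\ref{lm:min/func-bounds}. Since $H_{\min}(A|B)_\rho = -\log \Phi(\rho_{AB})$, it suffices to first control $|\Phi(\rho) - \Phi(\tau)|$ and then pass to the logarithm.

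First I would write the Jordan decomposition $\rho - \tau = \Delta_+ - \Delta_-$ with $\Delta_+ := \{\rho-\tau\}_+$ and $\Delta_- := \{\tau-\rho\}_+$, so that $\rho \le \tau + \Delta_+$ and $\tau \le \rho + \Delta_-$. By monotonicity (Property~ii.\ of~$\Phi$) and sub-additivity (Property~iii.), these give
\begin{align}
\Phi(\rho) \le \Phi(\tau) + \Phi(\Delta_+) \quad\text{and}\quad \Phi(\tau) \le \Phi(\rho) + \Phi(\Delta_-),
\end{align}
hence $|\Phi(\rho) - \Phi(\tau)| \le \max\{\Phi(\Delta_+), \Phi(\Delta_-)\}$. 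Next, Lemma~\ref{lm:min/func-bounds} applied to the Hermitian operators $\Delta_\pm$ (for which $\{\Delta_\pm\}_+ = \Delta_\pm$) yields
\begin{align}
\Phi(\Delta_\pm) \le \min\{d_A, d_B\} \cdot \tr{\Delta_\pm} \le \min\{d_A, d_B\} \cdot \delta,
\end{align}
where the last step uses the definition $D(\rho,\tau) = \max\{\tr{\Delta_+}, \tr{\Delta_-}\} = \delta$.

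Second, I would convert this absolute bound on $\Phi$ into a bound on the logarithms. Using the mean-value estimate $|\log x - \log y| \le |x-y| / (\ln 2 \cdot \min\{x,y\})$ valid for $x,y > 0$, I need a lower bound on $\Phi(\rho)$ and $\Phi(\tau)$. Lemma~\ref{lm:min/func-bounds} again supplies this: since $\{\rho\}_+ = \rho$ and $d_A^* \le d_A$, we have $\Phi(\rho) \ge \tr{\rho}/d_A$, and similarly for $\tau$. Combining the two estimates,
\begin{align}
\bigl|H_{\min}(A|B)_\rho - H_{\min}(A|B)_\tau\bigr| \le \frac{\min\{d_A,d_B\}\, \delta}{\ln 2 \cdot \min\{\Phi(\rho),\Phi(\tau)\}} \le \delta\cdot\frac{d_A \min\{d_A,d_B\}}{\ln 2 \cdot \min\{\tr{\rho},\tr{\tau}\}},
\end{align}
which is exactly the claimed inequality (reading the ``$\sigma$'' in the statement as a typo for $\tau$).

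No step is genuinely difficult here: the whole argument is a streamlined application of the SDP-level properties of $\Phi$ collected in Table~\ref{tb:Phi} together with Lemma~\ref{lm:min/func-bounds}. If anything, the only delicate point is checking that the sub-additivity and monotonicity properties, originally stated for positive semi-definite arguments, extend cleanly to the Hermitian differences $\Delta_\pm$ via the convention $\Phi(K) = \Phi(\{K\}_+)$ used in the preceding discussion, so that the telescoping $\Phi(\rho) - \Phi(\tau) \le \Phi(\Delta_+)$ is rigorously justified.
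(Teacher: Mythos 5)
Your proof is correct and follows essentially the same route as the paper's: both bound the difference of $Φ$-values via sub-additivity of $Φ$ together with Lemma~\ref{lm:min/func-bounds}, and then pass to the logarithm using the lower bound $Φ(ρ) ≥ \tr{ρ}/d¬A$ (the paper applies sub-additivity directly to the Hermitian difference $τ - ρ$ rather than first splitting into Jordan components, but this is only a cosmetic difference, and the properties in Table~\ref{tb:Phi} are already stated for Hermitian arguments, so your final worry is moot). You are also right that the ``$σ$'' in the statement is a typo for $τ$.
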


\begin{proof}
  We use continuity of the functional $Φ$ to obtain
  \begin{align}
    Φ(τ) &= Φ(ρ + (τ - ρ)) ≤ Φ(ρ) + Φ(τ - ρ) ¶\\
      &≤ Φ(ρ) + \min\{d¬A, d¬B\} \trace \{τ - ρ\}_+ ≤ Φ(ρ) + \min\{d¬A, d¬B\}\, δ\, .
  \end{align}
  Note that $Φ > 0$ for all states in $\osub{ℋ¬{AB}}$.
  Taking the logarithm and using the bound $\ln (a + x) ≤ \ln a + \frac{x}{a}$, we find
  \begin{align}
    \log Φ(τ) - \log Φ(ρ) ≤ δ\, \frac{\min\{d¬A, d¬B\}}{\ln 2 · Φ(ρ)}
      ≤ δ\, \frac{d¬A \min\{d¬A, d¬B\}}{\ln 2 · \tr{ρ}} ¶\,.
  \end{align}
  The same argument also applies on exchange of $ρ$ and $τ$
  and we obtain the statement of the lemma by substituting $\hmin{A|B}[ρ] = -\log Φ(ρ)$.
\end{proof}

\begin{remark}
  The above result is tight in the following sense: Consider a system
  with Hilbert spaces $ℋ¬A$ and $ℋ¬B = ℋ¬{A'} ⨁ ℋ¬{B'}$, where 
  $ℋ¬{A'} \iso ℋ¬A$. Let $γ¬{AB}$ be the normalized fully entangled state on
  $ℋ¬A ⨂ ℋ¬{A'}$ and $σ¬B ∈ \osub{ℋ¬{B'}}$ be orthogonal
  to $γ¬B$. The choice $ρ = π¬A ⨂ σ¬B$ and $τ = ρ + δ γ$ for some small
  $δ > 0$ leads to $D(ρ, τ) = δ$,
  \begin{align}
    Φ(ρ) = \frac{\tr{ρ}}{d¬A} \tn*{and} \quad Φ(τ) = Φ(ρ) + \delta\, \min\{d¬A, d¬B\} ¶.
  \end{align}
  Taking the logarithm (for small $δ$) leads to
  \begin{align}  
    \log Φ(τ) - \log Φ(ρ) ≈ \delta\, \frac{\min\{d¬A, d¬B\}}{\ln 2 · Φ(ρ)} = 
      \delta\, \frac{d¬A \min\{d¬A, d¬B\}}{ \ln 2 · \tr{ρ}} ¶.
  \end{align}
\end{remark}

Proposition~\ref{pr:cont} implies that the conditional min-entropy
is uniformly (Lipschitz) continuous on the set of normalized states
and in any $ε$-ball. Since $D(ρ, τ) ≤ P(ρ, τ)$,
the bound also holds for $δ = P(ρ, τ)$. Due to the duality between min-
and max-entropy, Lipschitz continuity in $P(ρ, τ)$ also follows for 
the max-entropy.

\subsection{Conditioning on Classical Information}

Let us, more generally, consider a general state between two quantum systems, $A$ and $B$, and a classical register, $K$. Such a state has the form 
\begin{align}
  ρ¬{ABK} = ∑_k p_k\, \proj{k} ⨂ τ¬{AB}^k , \tn*{where} τ^k ∈ \osub{ℋ¬{AB}} ¶[qqc-state] 
\end{align}
and $\{ p_k \}$ is a probability distribution.

\begin{proposition}
  \label{pr:classical-side-info}
  Let $ρ ∈ \osub{ℋ¬{ABK}}$ be of the form~§[qqc-state]. Then,
  \begin{align}
    \hmin{A|BK}[ρ] &= -\log \big(  ∑_k p_k\, 2^{-\hmin{A|B}[τ^k]} \big) \tn*{and} ¶\\
    \hmax{A|BK}[ρ] &= \log \big(  ∑_k p_k\, 2^{\hmax{A|B}[τ^k]} \big)  ¶.
  \end{align}
\end{proposition}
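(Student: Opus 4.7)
The plan is to establish the min-entropy identity directly from the SDP formulation in~§[min/sdp] and then obtain the max-entropy identity by invoking the duality relation of Lemma~\ref{lm:min-max/dual} applied to a suitable purification of $\rho_{ABK}$.

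For the min-entropy statement, I would work with the primal form
\begin{align}
  2^{-\hmin{A|BK}[\rho]} = \min \big\{ \tr{\sigma_{BK}} : \sigma_{BK} \in \opos{\mathcal{H}_{BK}},\ \mathbb{1}_A \otimes \sigma_{BK} \geq \rho_{ABK} \big\}.
\end{align}
The first step is to argue that the optimizer can be taken block-diagonal in the classical basis of $K$, i.e.\ of the form $\sigma_{BK} = \sum_k \proj{k}_K \otimes \sigma_B^k$. This follows by applying the pinching map $\mathcal{P}: X \mapsto \sum_k \proj{k} X \proj{k}$ on $K$: since $\rho_{ABK}$ is invariant under $\mathcal{P}$ and $\mathcal{P}$ is unital and CP, any feasible $\sigma_{BK}$ yields a feasible $\mathcal{P}[\sigma_{BK}]$ with identical trace. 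With this block structure the constraint separates into $\mathbb{1}_A \otimes \sigma_B^k \geq p_k\, \tau^k_{AB}$ for each $k$, and the objective becomes $\sum_k \tr{\sigma_B^k}$. Each subproblem is, after rescaling by $1/p_k$, exactly the SDP computing $2^{-\hmin{A|B}[\tau^k]}$, so the overall minimum equals $\sum_k p_k\, 2^{-\hmin{A|B}[\tau^k]}$; taking $-\log$ gives the claimed formula.

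For the max-entropy statement, I would construct a purification of $\rho_{ABK}$ of the form
\begin{align}
  \ket{\rho}_{ABKK'C} = \sum_k \sqrt{p_k}\, \ket{k}_K \otimes \ket{k}_{K'} \otimes \ket{\tau^k}_{ABC},
\end{align}
where $\ket{\tau^k}_{ABC}$ is a purification of $\tau^k_{AB}$ on an auxiliary system $C$ (which may be chosen large enough to accommodate all $k$). Applying Lemma~\ref{lm:min-max/dual} gives $\hmax{A|BK}[\rho] = -\hmin{A|K'C}[\rho]$, and the reduced state on $AK'C$ is exactly of the classical--quantum form $\sum_k p_k \proj{k}_{K'} \otimes \tau^k_{AC}$, with $\tau^k_{AC}$ the marginal of $\proj{\tau^k}_{ABC}$. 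Applying the already-proven min-entropy formula to this state yields
\begin{align}
  \hmin{A|K'C}[\rho] = -\log \sum_k p_k\, 2^{-\hmin{A|C}[\tau^k]} = -\log \sum_k p_k\, 2^{\hmax{A|B}[\tau^k]},
\end{align}
where the second equality is again Lemma~\ref{lm:min-max/dual} applied to each pure $\tau^k_{ABC}$. Negating recovers the claimed max-entropy expression.

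The only real obstacle is justifying the pinching reduction at the start: one must verify that the partial order is preserved by $\mathcal{P}$ applied only to $\sigma_{BK}$ (which is immediate since $\mathcal{P}[\mathbb{1}_A \otimes \sigma_{BK}] = \mathbb{1}_A \otimes \mathcal{P}[\sigma_{BK}]$ and $\mathcal{P}[\rho_{ABK}] = \rho_{ABK}$), and that the trace is preserved (immediate from unitality of $\mathcal{P}^\dagger$). Everything else is bookkeeping: the separation of the SDP into independent $k$-indexed subproblems, and the compatibility of the purification construction with the duality lemma.
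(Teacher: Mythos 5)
Your proof is correct and follows essentially the same route as the paper: the max-entropy half uses the identical coherent-classical purification $\sum_k \sqrt{p_k}\,\ket{k}\ket{k}\ket{\tau^k}$ together with a double application of the duality lemma, and your direct SDP/pinching argument for the min-entropy half is an inline version of the paper's appeal to the sub-additivity of $\Phi$ with equality for orthogonal conditioning marginals (Property iii.\ of Table~\ref{tb:Phi}), whose equality case is proven by the same block-decomposition of the primal problem. There are no gaps.
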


\begin{proof}
  The first statement follows directly from the sub-additivity of $Φ$, Property ii., and
  the fact that the marginal states on the conditioning system, 
  $τ¬B^k ⨂ \proj{k}$, are orthogonal. Thus, 
  \begin{align}
    Φ¬{A|BK}(ρ) = ∑_k p_k\, Φ¬{A|BK}(τ^k ⨂ \proj{k}) = ∑_k p_k\, Φ¬{A|B}(τ^k) ¶.
  \end{align} 
  Note that we can remove the trivial register $K$ in the last expression since $Φ$ is invariant
  under the isometry $τ ↦ τ ⨂ \proj{k}$.

  The second statement follows by the duality relation of the min- and max-entropy 
  (Lemma~\ref{lm:min-max/dual}). We consider purifications of $τ^k$ on $C$ and introduce
  a purification of $ρ$ as
  \begin{align}
    \ket{ρ}[ABCKK'] = ∑_k √{p_k}\, \ket{τ^k}[ABC] ⨂ \ket{k}[K] ⨂ \ket{k}[K'] ¶[conc/rho]\,,
  \end{align}
  where $ℋ¬{K'} \iso ℋ¬K$. Using this state and the fact that its marginal $ρ¬{ACK'}$ is of the 
  form~§[qqc-state], we find
  \begin{align}
    \hmax{A|BK}[ρ] &= - \hmin{A|CK'}[ρ] = \log \Big( ∑_k p_k\, 2^{-\hmin{A|C}[τ^k]} \Big) ¶\\
       &= \log \Big( ∑_k p_k\, 2^{\hmax{A|B}[τ^k]} \Big) ¶. \qedhere
  \end{align} 
\end{proof}

\subsection{Concavity of the Max-Entropy}

The following bounds follow directly from the above lemma and the ☼[concavity]{concave} of 
the $\log$ function. For a state of the form~§[qqc-state], we have
\begin{align}
   \hmin{A|BK}[ρ] &≤ ∑_k p_k\, \hmin{A|B}[τ^k] \tn*{and} ¶\\
   \hmax{A|BK}[ρ] &≥ ∑_k p_k\, \hmax{A|B}[τ^k]  ¶.
\end{align}
Moreover, $\hmax{A|B}[ρ] = \log Φ¬{A|CKK'}(ρ) ≥ \log Φ¬{A|CK'}(ρ) = \hmax{A|BK}[ρ]$ due to 
☼{data processing} of $Φ$ applied to the purification of $ρ$ given in~§[conc/rho].
Together with the fact that $ρ¬{AB} = \sum_k p_k\, τ¬{AB}^k$, this implies that the 
max-entropy is a concave function of the state.

Furthermore, we have seen that $Φ¬{A|B}$ (which corresponds to the guessing probability if
$A$ is classical) is convex in the state. However, numerical evidence suggests
that the min-entropy itself is neither ☼{concave} nor ☼{convex} in the state.

%----------------------

%............................ 

\chapter{Smooth Entropies}
\label{ch:smooth}

Smooth entropies are defined as optimizations of the min- and max-entropy over a set of close states. In this chapter, based on~\cite{tomamichel09}, we propose that this closeness should be measured in terms of the purified distance. This endows the smooth entropies with many
useful properties, for example invariance under isometries and a duality relation. Among many other properties, we also show that the smooth entropies satisfy a ☼{data processing} inequality 
and chain rules. 
☼*{entropy!smooth}

\section{Introduction and Related Work}

Generally, smooth entropies are defined as optimizations of the underlying 
entropy over a ball of states close to the state under
consideration. In particular, the ☼[smooth min- and max-entropies]{entropy!smooth} 
are defined as
\begin{align}
  \hmin*{A|B}[ρ] = \max_{\rhot} \, \hmin{A|B}[\rhot] \!\tn*{and}\!
  \hmax*{A|C}[ρ] = \min_{\rhot} \, \hmax{A|C}[\rhot] ¶,
\end{align}
where the optimization is over an ☼{$ε$-ball} of states $\rhot$ close to $ρ$.
The conditional smooth entropies for quantum states are first introduced by Renner~\cite{renner05}. In this work, various properties of the smooth entropies are shown, among them chain rules and various data processing inequalities.

It is important to note that the metric used in~\cite{renner05} 
to define the $ε$-ball was a generalization of the trace distance to sub-normalized states.
Subsequently, we found that a metric based on the fidelity
would be more suitable~\cite{wullschleger08pc}. Using such a metric, we can always find 
extensions, due to Uhlmann's theorem, that are as close as 
their marginals. This property does not hold for the trace distance metric 
used in~\cite{renner05}. Note also that Renner used a max-entropy based on the Rényi entropy of order $0$ as a basis for the smooth max-entropy and was thus unable to harness the duality relation, which simplifies many proofs.

\subsection{Main Contributions}

Using the ☼{purified distance} as a metric, we are able to show many properties of the smooth
entropies in a much more direct way as compared to~\cite{renner05}. 
In particular, this smoothing allows us extend the ☼{duality} relation of the 
min- and max-entropy 
(cf.\ Lemma~\ref{lm:min-max/dual}) to $ε$-smooth entropies.

\begin{result}[Duality Relation]
  \label{res:duality}
  For any $ε ≥ 0$ and $ρ¬{ABC}$ pure, we have
  \begin{align}
    \hmin*{A|B}[ρ] = -\hmax*{A|C}[ρ] ¶.
  \end{align}
\end{result}

Another result we want to highlight is a ☼{data processing} inequality. 
It states
that the uncertainty about the $A$ system gets at most larger when we 
condition on less side information. 

\begin{result}[Data Processing Inequality]
  \label{res:dataproc}
  For any $ε ≥ 0$, we have
  \begin{align}
    \hmin*{A|BC}[ρ] ≤ \hmin*{A|B}[ρ] \tn*{and} 
      \hmax*{A|BC}[ρ] ≤ \hmax*{A|B}[ρ]¶.
  \end{align}
\end{result}

This is a special case of a more general 
data processing inequality that holds for any
☼{TP-CPM} applied to the system we condition on.

This section also includes a variety of other results. For example, various properties of
smooth entropies of classical ☼[registers]{register} are discussed and should be very 
helpful in many applications. For example, we show that the smoothing on registers can be restricted to valid classical states without loss of generality. This implies that
the smooth entropies are also well-defined in their classical limit.

\subsection{Outline}

In Section~\ref{sc:smooth}, we introduce the smooth min- and max-entropies and show several properties of the smoothing. 
Section~\ref{sc:smooth/prop} discusses some properties of the smooth entropies, including their invariance under isometries and a duality relation.
One of the most striking properties, a generalized data processing inequality, is shown in Section~\ref{sc:smooth/data-proc}.
Then, Section~\ref{sc:smooth/class} explores the properties of the smooth entropies of classical registers.
Finally, Section~\ref{sc:chain} summarizes some chain rules that were recently proven for the
smooth entropies.

\section{Smooth Min- and Max-Entropies}
\label{sc:smooth}

The smooth entropies of a state $ρ$ are defined as optimizations over the min- and max-entropies
of states $\rhot \ecl ρ$, i.e.\ over states that are close to $ρ$.
Here, we define the smooth min- and max-entropies and explore some properties 
of the smoothing. In particular, we show that our definition of smoothing 
allows us to extend the duality relation of the
min- and max-entropy to the smooth entropies.

\subsection{The Smooth Entropies}
☼*{entropy!smooth}
☼*{smoothing}

We introduce sets of $ε$-close states that will then later be used to define the smooth entropies.
\begin{definition}[$ε$-Ball]
  \label{df:ball}
  Let $ρ ∈ \osub{ℋ}$ and $0 ≤ ε < √{\tr{ρ}}$. We define the 
  $ε$-ball of operators on $ℋ$ around $ρ$ as
  \begin{align}
    \ball{ℋ; ρ} := \{ τ ∈ \osub{ℋ} : P(τ, ρ) ≤ ε \} ¶.
  \end{align}
  Furthermore, we define the $ε$-ball of pure states around $ρ$ as $\ball*{ℋ; ρ} := \{ τ ∈ \ball{ℋ ; ρ} : \rank{τ} = 1 \}$.
\end{definition}

For the remainder of this chapter, we will assume that $ε$ is sufficiently small so that 
$ε < √{\tr{ρ}}$ is always satisfied. Furthermore, if the Hilbert space used is obvious from 
context, we will omit it and simply use the notation $\ball{ρ}$. We now list some properties of 
this $ε$-ball that come in addition to the properties of the underlying purified distance metric.

\begin{enum}

☛[i.] The set $\ball{ℋ; ρ}$ is compact and convex.

\begin{proof}
  The set is closed and bounded, hence compact.  For convexity, we
  require that, for any $λ ∈ [0, 1]$ and $σ, τ ∈ \ball{ρ}$, 
  the state $ω := λ σ + (1 - λ) τ$ is also in $\ball{ρ}$. We define
  $\omegah = ω ⨁ (1\!-\!\trace ω)$ and analogously
  $\rhoh$, $\sigmah$ and $\tauh$. Thus, $\omegah = λ \sigmah +
  (1 - λ) \tauh$ by linearity.  By assumption, we have
  $F(\sigmah, \rhoh) ≥ √{1 - ε^2}$ and $F(\tauh, \rhoh) ≥
  \sqrt{1 - ε^2}$.  We use the concavity of the 
  fidelity~(cf.~\cite{nielsen00}, Section 9.2.2) to find
  \begin{align}
    P(ω, ρ) &= √{1 - F(\omegah, \rhoh)^2} ¶\\
    &= √{1 - F(λ \sigmah + (1 \!-\! λ) \tauh, \rhoh)^2} ¶\\
    &≤ √{1 - \big( λ F(\sigmah, \rhoh) + (1 \!-\! λ)
      F(\tauh, \rhoh) \big)^2} ≤ ε ¶\,.
  \end{align}
  Therefore, $ω ∈ \ball{ρ}$, as required.
\end{proof}

%%%%%%%%%%%%%%%

☛[ii.] Normalized states in $\ball{ℋ; ρ}$ are not
  distinguishable from a normalized state $ρ$ with probability more
  than $\frac{1}{2}(1 + \eps)$.
  
\begin{proof}
  By Proposition~\ref{pr:pd-gtd-bounds}, $τ ∈ \ball{ρ}$ implies $D(τ, ρ) ≤ P(τ, ρ) ≤ ε$. The
  statement then follows from~§[dist-adv].
\end{proof}

%%%%%%%%%%%%%%%

☛[iii.] The ball grows monotonically in the smoothing parameter
  $ε$, namely $ε < ε' ⇒ \ball{ℋ; ρ} \subset \ball[ε']{ℋ; ρ}$. 
  Furthermore, $\ball[0]{ℋ; ρ} = \{ ρ \}$.

\end{enum}

The ☼[smooth entropies]{entropy!smooth} are now defined as follows. 
\begin{definition}[Smooth Entropies]
  \label{df:smooth}
  Let $ρ ∈ \osub{ℋ¬{AB}}$ and $ε ≥ 0$. Then, we define the $ε$-smooth min- and 
  max-entropies of $A$ conditioned on $B$ of the state $ρ$ as
  \begin{align}
    \hmin*{A|B}[ρ] &:= \max_{\rhot} \hmin*{A|B}[\rhot] \tn*{and} ¶\\
      \hmax*{A|B}[ρ] &:= \min_{\rhot} \hmax*{A|B}[\rhot] ¶,
  \end{align}
  where the optimization is over all states $\rhot ∈ \ball{ρ¬{AB}}$ in both cases.
\end{definition}

Note that the extrema can be achieved due to compactness of the $ε$-ball (cf. Property~i.).
We usually use $\rhot¬{AB}$ to denote the state that achieves the extremum, e.g.,
for the min-entropy, there exists a state $\rhot¬{AB} ∈ \ball{ρ¬{AB}}$ such that 
$\hmin*{A|B}[ρ] = \hmin{A|B}[\rhot]$.
The state $\rhot$ is $ε$-indistinguishable from $ρ$ in the sense described
in Property~ii.
Moreover, the smooth min-entropy is monotonically increasing in $ε$ due to
Property~iii.\ of the $ε$-ball and we recover the
non-smooth min-entropy in the limit $ε = 0$, i.e.\ 
$\hmin[0]{A|B}[ρ] = \hmin{A|B}[ρ]$. Similarly, the 
smooth max-entropy is monotonically decreasing in $ε$ 
and $\hmax[0]{A|B}[ρ] = \hmax{A|B}[ρ]$.

If $ρ¬{ABC}$ is normalized, the optimization problems defining the smooth min- and max-entropies can be formulated as SDPs. To see this, note that the restrictions
on the smoothed state $\rhot$ are linear in the purification $\rho¬{ABC}$ of $ρ$. In 
particular, consider the condition $P(ρ, \rhot) ≤ ε$ on $\rhot$, or, equivalently, $F^2(ρ, \rhot) ≥ 1 - ε^2$. If $ρ¬{ABC}$ is normalized, then the squared fidelity can be expressed 
as $F^2(ρ, \rhot) = \tr{ρ¬{ABC}\,\rhot¬{ABC}}$. 

We give the primal of the SDP for $2^{-\hmin*{A|B}[ρ]}$ as an example. This SDP is parametrized by 
an (arbitrary) purification $ρ ∈ \onorm{ℋ¬{ABC}}$.
\begin{align}
  \begin{array}{rlcrl}
    \multicolumn{2}{c}{\underline{\tn{primal problem}}} \vspace{0.2cm} \\
    \texttt{minimize}: & \ip{ⅈ¬B, σ¬B} \\
    \texttt{subject to}: & ⅈ¬A ⨂ σ¬B ≥ \tr[C]{\rhot¬{ABC}} \\
    & \tr{\rhot¬{ABC}} ≤ 1 \\
    & \tr{\rhot¬{ABC} \rho¬{ABC}} ≥ 1 - ε^2 \\
    & \rhot¬{ABC} ∈ \opos{ℋ¬{ABC}} \\
    &  σ¬B ∈ \opos{ℋ¬B}
  \end{array} ¶
\end{align}
This program allows us to efficiently compute the smooth min-entropy.

\subsection{Remarks on Smoothing}
☼*{smoothing}

For both the smooth min- and max-entropy, 
we can restrict the optimization in Def.~\ref{df:smooth} 
to states in the support of $ρ¬A ⨂ ρ¬B$.
\begin{proposition}
  \label{pr:smoothing/support}
  For any state $\rho ∈ \osub{ℋ¬{AB}}$, there exist optimal states 
  $\rhot, \rhob ∈ \ball{ρ¬{AB}}$ in $\supp{ρ¬A} ⨂ \supp{ρ¬B}$ such that
  $\hmin*{A|B}[ρ] = \hmin{A|B}[\rhot]$ and $\hmax*{A|B}[ρ] = \hmax{A|B}[\rhob]$.
\end{proposition}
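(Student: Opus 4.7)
The plan is to construct the required optimal states by projecting arbitrary optimizers onto the subspace $\supp{\rho_A} \otimes \supp{\rho_B}$ and verifying that (a) the projection stays inside the $\varepsilon$-ball, and (b) the min-entropy cannot decrease while the max-entropy cannot increase under this projection.

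First, I would establish the basic fact that $\rho_{AB}$ itself is already supported on $\supp{\rho_A} \otimes \supp{\rho_B}$. Let $\Pi := \Pi^{\rho_A} \otimes \Pi^{\rho_B}$. For any eigenvector $\ket{\phi}_{AB}$ of $\rho_{AB}$ with positive eigenvalue, its reduced states on $A$ and $B$ are supported on $\supp{\rho_A}$ and $\supp{\rho_B}$ respectively; its Schmidt decomposition then shows $\ket{\phi} \in \supp{\rho_A} \otimes \supp{\rho_B}$. Consequently $\Pi \rho_{AB} \Pi = \rho_{AB}$. Now let $\tilde{\rho}$ and $\bar{\rho}$ be optimizers for the smooth min- and max-entropies respectively, and define $\tilde{\rho}' := \Pi \tilde{\rho} \Pi$ and $\bar{\rho}' := \Pi \bar{\rho} \Pi$. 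Since $\tau \mapsto \Pi \tau \Pi$ is a trace non-increasing CPM, monotonicity of the purified distance (Theorem~\ref{th:pd-mono}) gives
\begin{align}
P(\tilde{\rho}', \rho) = P(\Pi \tilde{\rho} \Pi, \Pi \rho \Pi) \leq P(\tilde{\rho}, \rho) \leq \varepsilon,
\end{align}
and similarly for $\bar{\rho}'$; so both projected states lie in $\ball{\rho_{AB}}$.

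Next I would show the min-entropy does not decrease under the projection. The map $\tau \mapsto \Pi^{\rho_B} \tau \Pi^{\rho_B}$ is a trace non-increasing CPM on $B$, and the map $\tau \mapsto \Pi^{\rho_A} \tau \Pi^{\rho_A}$ is completely positive and sub-unital on $A$ (its adjoint sends $\openone_A \mapsto \Pi^{\rho_A} \leq \openone_A$). Hence by Property vi of $\Phi$ in Table~\ref{tb:Phi}, applying both projections cannot increase $\Phi_{A|B}$, which gives $\hmin{A|B}[\tilde{\rho}'] \geq \hmin{A|B}[\tilde{\rho}] = \hmin*{A|B}[\rho]$. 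By optimality, equality must hold.

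Finally, for the max-entropy I would work directly from Definition~\ref{df:max-entropy}. For any candidate $\sigma_B \in \osub{\mathcal{H}_B}$, properties (iv) and (v) of the fidelity (Table~\ref{tb:fid}) give
\begin{align}
\fidb{\bar{\rho}'}{\openone_A \otimes \sigma_B}
= \fidb{\bar{\rho}}{\Pi(\openone_A \otimes \sigma_B)\Pi}
= \fidb{\bar{\rho}}{\Pi^{\rho_A} \otimes \sigma'_B}
\leq \fidb{\bar{\rho}}{\openone_A \otimes \sigma'_B},
\end{align}
where $\sigma'_B := \Pi^{\rho_B} \sigma_B \Pi^{\rho_B} \in \osub{\mathcal{H}_B}$. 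Taking the maximum over $\sigma_B$ on the left and noting that the right-hand side is bounded by the corresponding maximum over $\sigma'_B$ yields $\hmax{A|B}[\bar{\rho}'] \leq \hmax{A|B}[\bar{\rho}] = \hmax*{A|B}[\rho]$, and again optimality forces equality. The only subtle point is the preliminary lemma that $\rho_{AB}$ sits in $\supp{\rho_A} \otimes \supp{\rho_B}$; everything else is a direct application of the properties of $P$, $F$ and $\Phi$ already collected in Chapters~\ref{ch:pd} and~\ref{ch:entropies}.
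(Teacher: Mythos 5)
Your proof is correct, and it splits cleanly into a half that matches the paper and a half that does not. For the min-entropy you do exactly what the paper does: project the optimizer with $Π = Π^{ρ¬A} ⨂ Π^{ρ¬B}$, use monotonicity of the purified distance under trace non-increasing maps (Theorem~\ref{th:pd-mono}) together with $Π ρ Π = ρ$ to stay inside $\ball{ρ¬{AB}}$, and invoke Property vi.\ of $Φ$ (trace non-increasing on $B$, sub-unital on $A$) to see that $Φ$ cannot increase. For the max-entropy you genuinely diverge: the paper passes to a purification $ρ¬{ABC}$, takes an optimizer among pure states in the ball around it, and uses the dual representation $\hmax{A|B} = \log Φ¬{A|C}$ together with data processing of $Φ$ on the $A$ system and on the purifying extension (Properties vi.\ and vii.), whereas you stay with Definition~\ref{df:max-entropy} and manipulate the fidelity directly — property iv.\ of Table~\ref{tb:fid} moves the projector onto the second argument, turning $ⅈ¬A ⨂ σ¬B$ into $Π^{ρ¬A} ⨂ σ¬B'$, and property v.\ then replaces $Π^{ρ¬A} ⨂ σ¬B'$ by the larger operator $ⅈ¬A ⨂ σ¬B'$, giving $\hmax{A|B}[Π \rhob\, Π] ≤ \hmax{A|B}[\rhob]$ with no purification and no duality. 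Your route is more elementary and self-contained; the paper's is more uniform, since both halves reduce to properties of the single functional $Φ$ that it needs elsewhere anyway. You are also more careful than the paper on one point: you actually prove that $ρ¬{AB}$ lies in $\supp{ρ¬A} ⨂ \supp{ρ¬B}$ (via the Schmidt decomposition of its eigenvectors), a fact the paper uses without comment. The only cosmetic gap is the degenerate case $σ¬B' = 0$, where $σ¬B'$ is not a valid element of $\osub{ℋ¬B}$; but there the left-hand fidelity vanishes as well, so the chain of inequalities survives.
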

\begin{proof}
  Let $ρ¬{ABC}$ be any purification of $ρ$ on $C$, where $ℋ¬C \iso ℋ¬{AB}$. Moreover, let 
  $Π = Π^{ρ¬A} ⨂ Π^{ρ¬B}$ be the projector 
  onto the support of $ρ¬A ⨂ ρ¬B$. Recall that we 
  can express the min- and max-entropies in terms of the functional $Φ$, i.e.\
  \begin{align}
    \hmin{A|B}[ρ] = -\log Φ(ρ) \tn*{and}
      \hmax{A|B}[ρ] = \log Φ¬{A|C}(ρ) ¶[smooth/phi-def].
  \end{align}
  
  For the min-entropy, first consider any state $\rhot' ∈ \ball{ℋ¬{AB}, ρ¬{AB}}$ 
  that achieves the maximum in Def.~\ref{df:smooth}, i.e.\ $\hmin*{A|B}[ρ] = - \log Φ(\rhot')$. 
  We have $Φ(\rhot') ≥ Φ(Π \rhot' Π)$ due to the data processing property of $Φ$
  (Property vi.). This implies
  that projecting onto $Π$ will not decrease the min-entropy.
  Moreover, since $ρ =  Π ρ Π$, we find
  $P(Π \rhot' Π, ρ) ≤ P(\rhot', ρ) ≤ ε$ due to the monotonicity 
  of the purified distance under trace non-increasing maps. Hence, 
  $\rhot = Π \rhot' Π ∈ \ball{ρ¬{AB}}$ and, thus, necessarily
   $\hmin{A|B}[\rhot] = \hmin*{A|B}[ρ]$.

  For the max-entropy, consider a state $\rhob' ∈ \ball{ρ¬{ABC}}$
  whose marginal achieves the minimum in Def.~\ref{df:smooth}. (Note that every
  $\rhob¬{AB} ∈ \osub{ℋ¬{AB}}$ has a purification in this $ε$-ball around $ρ¬{ABC}$ due
  to Theorem~\ref{th:pd-uhl}.) Thus, 
  \begin{align}
    \hmax*{A|B}[ρ] = \log Φ¬{A|C}(\rhob') ≥ \log Φ¬{A|C}(Π^{ρ¬A} \rhob' Π^{ρ¬A}) ≥ 
      \log Φ¬{A|C}(Π \rhob' Π) ¶.
   \end{align}
   Here, the first inequality follows from the data processing property of $Φ$ (property v.) 
   and the second one from data processing on the purifying system (Property vii.).
   Since $\rhob = Π \rhob' Π ∈ \ball{ρ¬{ABC}}$, we get
   $\hmax{A|B}[\rhob] = \hmax*{A|B}[ρ]$.
\end{proof}
Note that these optimal states are not necessarily normalized. In fact, it is in general not possible to find a normalized state in the support of $ρ¬A ⨂ ρ¬B$ that achieves the optimum. However, if $ρ$ is normalized, we can always find normalized optimal states 
if we embed the systems $A$ and $B$
into large enough Hilbert spaces that allow 
smoothing outside the support of $ρ¬A ⨂ ρ¬B$.
For the min-entropy, this is intuitively true since adding weight in a space orthogonal to $A$, 
if sufficiently diluted, will neither affect the min-entropy nor the purified distance. For the max-entropy, the result follows from the duality of the entropies.
\begin{lemma}
  \label{lm:smoothing-normalized}
  Let $ρ ∈ \onorm{ℋ¬{AB}}$. Then, there exists an embedding from $ℋ¬A$ to $ℋ¬{A'}$ 
  and a normalized state $\rhoh¬{A'B} ∈ \ball{ρ¬{A'B}}$ such that 
  $\hmin{A'|B}[\rhoh] = \hmin*{A|B}[ρ]$.
  Moreover, there exist embeddings $ℋ¬A$ to $ℋ¬{A'}$ and $ℋ¬B$ to
  $ℋ¬{B'}$ and a state $\rhoh¬{A'B'} ∈ \ball{ρ¬{A'B'}}$ such that 
  $\hmax{A'|B'}[\rhoh] = \hmax*{A|B}[ρ]$.
\end{lemma}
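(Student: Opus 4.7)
My plan is to prove the two claims in sequence: establish the min-entropy case by an explicit "dilution" construction, then deduce the max-entropy case from it via the pure-state duality relation of Lemma~\ref{lm:min-max/dual}.

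For the min-entropy, let $\tilde\rho_{AB} \in \ball{\rho_{AB}}$ be a (possibly sub-normalized) optimizer with $\hmin{A|B}[\tilde\rho] = h := \hmin*{A|B}[\rho]$, and let $\tilde\sigma_B$ be an optimal dual variable of the SDP~§[min/sdp], so that $\tilde\rho_{AB} \leq \mathbb{1}_A \otimes \tilde\sigma_B$ with $\tr{\tilde\sigma_B} = 2^{-h}$. I would choose the embedding $\mathcal{H}_A \hookrightarrow \mathcal{H}_{A'} = \mathcal{H}_A \oplus \mathcal{H}_{\tilde A}$ with $\dim\tilde{A}$ large enough that $c := (1-\tr{\tilde\rho})\,2^h / \dim\tilde A \leq 1$, and define
\begin{align}
  \hat\rho_{A'B} := \tilde\rho_{AB} \,\oplus\, c\, \mathbb{1}_{\tilde A} \otimes \tilde\sigma_B \,.
\end{align}
Then a direct computation gives $\tr{\hat\rho} = \tr{\tilde\rho} + c\,\dim\tilde A\,\tr{\tilde\sigma_B} = 1$, and the bound $\hat\rho_{A'B} \leq \mathbb{1}_{A'} \otimes \tilde\sigma_B$ yields $\hmin{A'|B}[\hat\rho] \geq -\log\tr{\tilde\sigma_B} = h$.

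The key step is verifying that $\hat\rho \in \ball{\rho_{A'B}}$, where $\rho_{A'B}$ denotes the embedded state. Since $\rho_{A'B}$ is normalized and supported on $\mathcal{H}_A \otimes \mathcal{H}_B$, Lemma~\ref{lm:alt-gfid} reduces the generalized fidelity to $F(\hat\rho,\rho_{A'B}) = \normbig{\sqrt{\hat\rho}\sqrt{\rho_{A'B}}}_1$. The block-diagonal structure of $\hat\rho$ in $\mathcal{H}_A \oplus \mathcal{H}_{\tilde A}$ implies $\sqrt{\hat\rho}\sqrt{\rho_{A'B}} = \sqrt{\tilde\rho}\sqrt{\rho_{AB}} \oplus 0$, so $F(\hat\rho,\rho_{A'B}) = F(\tilde\rho,\rho_{AB})$ and therefore $P(\hat\rho,\rho_{A'B}) = P(\tilde\rho,\rho_{AB}) \leq \varepsilon$. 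The matching upper bound $\hmin{A'|B}[\hat\rho] \leq h$ follows because $\hat\rho \in \ball{\rho_{A'B}}$ implies $\hmin{A'|B}[\hat\rho] \leq \hmin*{A'|B}[\rho_{A'B}]$, and the latter equals $\hmin*{A|B}[\rho]$ by invariance of $\Phi$ under isometries (Property v.\ of Table~\ref{tb:Phi}).

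For the max-entropy, I would proceed via duality. Pick any purification $\rho_{ABC}$ of $\rho_{AB}$. Applying the min-entropy statement just proved to the bipartition $A|C$ yields an embedding $\mathcal{H}_A \hookrightarrow \mathcal{H}_{A'}$ and a normalized state $\hat\rho_{A'C} \in \ball{\rho_{A'C}}$ with $\hmin{A'|C}[\hat\rho] = \hmin*{A|C}[\rho]$. By the anticipated smooth duality (Result~\ref{res:duality}), the right-hand side equals $-\hmax*{A|B}[\rho]$. Now take $\mathcal{H}_{B'}$ large enough both to embed $\mathcal{H}_B$ and to accommodate a purification of $\hat\rho_{A'C}$. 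The embedding turns $\rho_{ABC}$ into a pure purification $\rho_{A'B'C}$ of $\rho_{A'C}$, and Uhlmann's theorem for the purified distance (Theorem~\ref{th:pd-uhl}) provides a purification $\hat\rho_{A'B'C}$ of $\hat\rho_{A'C}$ in the same space with $P(\hat\rho_{A'B'C},\rho_{A'B'C}) = P(\hat\rho_{A'C},\rho_{A'C}) \leq \varepsilon$. Monotonicity (Theorem~\ref{th:pd-mono}) under partial trace gives $\hat\rho_{A'B'} \in \ball{\rho_{A'B'}}$, and since $\hat\rho_{A'B'C}$ is pure, Lemma~\ref{lm:min-max/dual} yields $\hmax{A'|B'}[\hat\rho] = -\hmin{A'|C}[\hat\rho] = \hmax*{A|B}[\rho]$, as desired.

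The main obstacle is the exact (not merely approximate) preservation of the distance and entropy in the min-entropy construction; this forces the careful choice of $c$ and the use of a block-orthogonal extension so that $F(\hat\rho,\rho_{A'B})$ collapses to $F(\tilde\rho,\rho_{AB})$ via Lemma~\ref{lm:alt-gfid}. For the max-entropy, the delicate point is arranging Uhlmann's theorem and the embedding of $\mathcal{H}_B$ into a common space $\mathcal{H}_{B'}$ large enough to host both purifications simultaneously, but this imposes only a lower bound on $\dim B'$ and causes no difficulty.
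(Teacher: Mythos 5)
Your proof is correct and follows essentially the same route as the paper: the same block-diagonal dilution $\tilde\rho_{AB}\oplus(1-\operatorname{tr}\tilde\rho)\,\pi_{\tilde A}\otimes\sigma_B$ for the min-entropy (you merely absorb the factor $2^{-h}$ into $\tilde\sigma_B$, which is the \emph{primal}, not the dual, variable of the min-entropy SDP), and the same duality-plus-Uhlmann argument for the max-entropy. Your explicit verification of the reverse inequality $H_{\min}(A'|B)_{\hat\rho}\le h$ via invariance under embeddings, and of $F(\hat\rho,\rho_{A'B})=F(\tilde\rho,\rho_{AB})$ via the block structure, only spells out details the paper leaves implicit.
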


\begin{proof}
  To prove the property for the min-entropy, let $\{ \rhot¬{AB}, σ¬B \}$ be such that 
  they maximize the smooth min-entropy $λ = \hmin*{A|B}[ρ]$, i.e.\ we have $\rhot¬{AB} ≤ 
  2^{-λ} ⅈ¬A ⨂ σ¬B$. Then we introduce an auxiliary Hilbert space $ℋ¬{\bar{A}}$ with
  dimension $d¬{\bar{A}}$ to be defined and embed $ℋ¬A$ into $ℋ¬{A'} \iso ℋ¬A ⨁ ℋ¬{\bar{A}}$.
  The state $\rhoh¬{A'B} = \rhot¬{AB} ⨁ (1-\trace{\rhot})\, π¬{\bar{A}} ⨂ σ¬B$, where 
  $π¬{\bar{A}} = ⅈ¬{\bar{A}}/d¬{\bar{A}}$,
  satisfies the required conditions. We have
  \begin{align}
    \rhoh¬{A'B} = \rhot¬{AB} ⨁ (1-\trace{\rhot})\, π¬{\bar{A}} 
      ⨂ σ¬B ≤ 2^{-λ} (ⅈ¬{A} ⨁ ⅈ¬{\bar{A}}) ⨂ σ¬B ¶ 
  \end{align}
  if $2^{λ} (1 - \trace{\rhot}) ≤ 2^{λ} ε ≤ d¬{\bar{A}}$, i.e.\ if $d¬{\bar{A}}$ is chosen 
  large enough. This implies that the $\hmin{A'|B}[\rhoh]$ 
  of $\rhoh$ is at least $λ$. Moreover, $F(\rhoh, ρ) = F(\rhot, ρ)$ is not affected by adding
  weight into an orthogonal subspace.
  
  The equivalent statement for the max-entropy follows by duality. 
  For any purification $ρ¬{ABC}$, we have $\hmax*{A|B}[ρ] =
  -\hmin*{A'|C}[\rhoh]$. The state $\rhoh$ has a purification on $B'$ (in fact, this defines
  the embedding) that is $ε$-close $ρ¬{A'B'C}$. Hence, the result follows.
\end{proof}

\section{Properties of the Smooth Entropies}
\label{sc:smooth/prop}

\subsection{Invariance under Isometries}
☼*{unitarily invariant}

The $ε$-smooth min- and max-entropies are independent of the Hilbert spaces used to
represent the density operator locally, as the following lemma
shows.

\begin{proposition}
  \label{pr:smooth-iso}
  Let $ε ≥ 0$ and $ρ¬{AB} ∈ \osub{ℋ¬{AB}}$. Then, for all embeddings $ℋ¬A → ℋ¬{A'}$ and 
  $ℋ¬B → ℋ¬{B'}$, the embedded state $ρ¬{A'B'}$ satisfies
  \begin{align}
    \hmin*{A|B}[ρ] = \hmin*{A'|B'}[ρ] \tn*{and} \hmax*{A|B}[ρ] = \hmax*{A'|B'}[ρ] ¶\,.
  \end{align}
\end{proposition}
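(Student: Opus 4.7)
The plan is to establish the min-entropy identity first, via two matching inequalities based on the properties of $\Phi_{A|B}$ collected in Table~\ref{tb:Phi}, and then to deduce the max-entropy identity from the duality relation (Result~\ref{res:duality}). Writing $V = V_A \otimes V_B$ for the combined local isometry, one has $\rho_{A'B'} = V \rho_{AB} V^\dagger$.

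For the direction $\hmin*{A'|B'}[\rho] \geq \hmin*{A|B}[\rho]$, I would push the optimal smoothed state $\tilde\rho_{AB}$ forward through $V$. By monotonicity of the purified distance under the TP-CPM $X \mapsto V X V^\dagger$ (Theorem~\ref{th:pd-mono}), the image $V \tilde\rho_{AB} V^\dagger$ lies in $\ball{\rho_{A'B'}}$, and by invariance of $\Phi$ under local isometries (Property v.\ of Table~\ref{tb:Phi}) its conditional min-entropy coincides with $\hmin{A|B}[\tilde\rho] = \hmin*{A|B}[\rho]$.

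The reverse inequality is the substantive step, and the part I expect to require the most care. Given an optimizer $\hat\rho_{A'B'}$ for $\hmin*{A'|B'}[\rho]$, my plan is to first project onto the image of $V$: set $\Pi = V_A V_A^\dagger \otimes V_B V_B^\dagger$ and $\hat\rho' := \Pi \hat\rho_{A'B'} \Pi$. Two observations would then do the work. First, because $\rho_{A'B'} = \Pi \rho_{A'B'} \Pi$, monotonicity under the projector gives $P(\hat\rho', \rho_{A'B'}) \leq \epsilon$. Second, the projector is sub-unital on the $A'$ factor and trace non-increasing on the $B'$ factor, so the data processing properties (Property vi.) yield $\Phi(\hat\rho') \leq \Phi(\hat\rho_{A'B'})$, i.e.\ the min-entropy only grows under the projection. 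Defining $\tilde\rho_{AB} := (V_A^\dagger \otimes V_B^\dagger)\, \hat\rho'\, (V_A \otimes V_B)$ gives the unique pre-image under $V$, so $V \tilde\rho_{AB} V^\dagger = \hat\rho'$, and Property v.\ transfers the min-entropy. A second application of Theorem~\ref{th:pd-mono}, now to the trace non-increasing map $X \mapsto V^\dagger X V$, confirms $\tilde\rho_{AB} \in \ball{\rho_{AB}}$. The chain $\hmin{A|B}[\tilde\rho] = \hmin{A'|B'}[\hat\rho'] \geq \hmin{A'|B'}[\hat\rho] = \hmin*{A'|B'}[\rho]$ then closes the argument.

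For the max-entropy, I would lift the problem to any purification $\rho_{ABC}$ of $\rho_{AB}$; then $\rho_{A'B'C} := (V_A \otimes V_B \otimes \mathbb{1}_C)\, \rho_{ABC}\, (V_A \otimes V_B \otimes \mathbb{1}_C)^\dagger$ is a purification of $\rho_{A'B'}$ whose $AC$-marginal equals $(V_A \otimes \mathbb{1}_C)\, \rho_{AC}\, (V_A^\dagger \otimes \mathbb{1}_C)$, the embedding of $\rho_{AC}$ through $V_A$ on $A$ with a trivial embedding on $C$. Invoking Result~\ref{res:duality} on both pairs $(A|B,A|C)$ and $(A'|B',A'|C)$ and then applying the already-established min-entropy invariance to this reduced situation yields $\hmax*{A|B}[\rho] = -\hmin*{A|C}[\rho] = -\hmin*{A'|C}[\rho] = \hmax*{A'|B'}[\rho]$, finishing the proof.
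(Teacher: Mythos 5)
Your treatment of the min-entropy is correct and is essentially the paper's argument: the push-forward direction is identical, and for the pull-back you make explicit, via the projector $\Pi = V_A V_A^\dagger \otimes V_B V_B^\dagger$ together with Property vi.\ and Theorem~\ref{th:pd-mono}, what the paper obtains by citing Proposition~\ref{pr:smoothing/support} (namely that the optimizer on the large space may be taken inside the image of the embedding and hence pulled back). Both routes are sound.

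The max-entropy part, however, is circular as written. You invoke Result~\ref{res:duality}, the duality for \emph{smooth} entropies (Theorem~\ref{th:smooth-dual}), but in this paper that theorem is proved \emph{from} Proposition~\ref{pr:smooth-iso}: its proof opens by using invariance under embeddings to enlarge the purifying system $C$ to $C'$, and in its second half replaces $B$ by $B'$, so it requires the invariance of both the smooth min- \emph{and} max-entropy before it can be stated. You therefore may not assume it here. The repair is to stay at the non-smooth level: use Lemma~\ref{lm:min-max/dual} to write $\hmax{A|B}[\tau] = \log \Phi_{A|C}(\tau_{AC})$ for a purification $\tau_{ABC}$, observe that a local embedding of $AB$ lifts to an isometry on purifications acting trivially on $C$, and then run the same push-forward/projection/pull-back argument on the purified optimizers, using Properties v.--vii.\ of $\Phi$ (in particular data processing on the purifying system) and the monotonicity of the purified distance. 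This is what the paper does implicitly when it says the same argument applies separately to the max-entropy; your chain $\hmax*{A|B}[\rho] = -\hmin*{A|C}[\rho] = -\hmin*{A'|C}[\rho] = \hmax*{A'|B'}[\rho]$ only becomes legitimate once the smooth duality has been established, which in this paper happens after the present proposition.
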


\begin{proof}
  Let $U: ℋ¬A → ℋ¬{A'}$ and $V: ℋ¬B → ℋ¬{B'}$ be above-mentioned embeddings and let $ρ¬{ABC}$ 
  be a purification of $ρ$. We also introduce the state $ρ¬{A'B'C} = (U ⨂ V) ρ¬{ABC} (U† ⨂ V†)$, 
  which purifies $ρ¬{A'B'}$.
  
  We first consider the special case $ε = 0$. 
  The property for the min-entropy then follows 
  directly from the invariance of $Φ$, i.e.\ $Φ¬{A|B}(ρ¬{AB}) = Φ¬{A'|B'}(ρ¬{A'B'})$.
  For the max-entropy, we use $Φ¬{A|C}(ρ¬{AC}) = Φ¬{A'|C}(ρ¬{A'C})$, which establishes the 
  equivalence of the max-entropies expressed in terms of the purifications $ρ¬{ABC}$ 
  and $ρ¬{A'B'C}$, respectively.
  
  To extend this to $ε > 0$, we apply the following argument separately to the 
  statement for the smooth
  min-entropy and the statement for the smooth max-entropy.
  The argument for the min-entropy goes as follows. We first introduce states
  $\rhoh¬{AB} ∈ \ball{\supp{ρ¬A ⨂ ρ¬B}; ρ¬{AB}}$ and $\rhob¬{A'B'} ∈ 
  \ball{\supp{ρ¬{A'} ⨂ ρ¬{B'}}; ρ¬{A'B'}}$
  that maximize $\hmin*{A|B}$ and $\hmin*{A'|B'}$, respectively (cf.\ 
  Proposition~\ref{pr:smoothing/support}).
  Since the purified distance is non-increasing under trace 
  non-increasing CPMs (cf.\ Theorem~\ref{th:pd-mono}), we immediately find that 
  \begin{align}
    \rhoh¬{A'B'} = (U ⨂ V) \rhoh¬{AB} (U† ⨂ V†) ∈ \ball{ℋ¬{A'B'}; ρ¬{A'B'}} ¶
  \end{align}  
  is a candidate for the optimization of $\hmin*{A'|B'}$. 
  Hence, 
  \begin{align}
    \hmin*{A'|B'}[ρ] ≥ \hmin{A'|B'}[\rhoh] = \hmin{A|B}[\rhoh] = \hmin*{A|B}[ρ] ¶\,.
  \end{align}
  The same argument applies in the converse direction. There, we define $\rhob¬{AB}$ 
  as the pre-image of $\rhob¬{A'B'}$ under the isometry $U ⨂ V$. (This is possible
  since $\rhob¬{A'B'}$ lies in the support of $ρ¬{A'} ⨂ ρ¬{B'}$ and, thus, in the
  image of $U ⨂ V$.)
  This establishes
  \begin{align}
    \hmin*{A|B}[ρ] ≥ \hmin{A|B}[\rhob] = \hmin{A'|B'}[\rhob] = \hmin*{A|B}[ρ] ¶\,.
  \end{align}
  Therefore, equality holds. 
\end{proof}

Note that the above extension of a non-smooth argument to a smooth argument 
is quite generic and only relies on the monotonicity of the purified distance under trace 
non-increasing maps. We will often use variations of the above technique to lift proofs 
for $ε = 0$ to $ε > 0$.

\subsection{Duality of Smooth Entropies}

The ☼{duality} relation in Lemma~\ref{lm:min-max/dual} extends to smooth entropies.

\begin{theorem}
  \label{th:smooth-dual}
  Let $ε ≥ 0$ and let $ρ ∈ \osub{ℋ¬{ABC}}$ be pure. Then,
  \begin{align}
    \hmax*{A|B}[ρ] = -\hmin*{A|C}[ρ] ¶\,.
  \end{align}
\end{theorem}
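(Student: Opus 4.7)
The strategy is to lift the non-smooth duality $\hmax{A|B}[\sigma] = -\hmin{A|C}[\sigma]$ of Lemma~\ref{lm:min-max/dual} to the smooth case by using Uhlmann's theorem for the purified distance (Theorem~\ref{th:pd-uhl}) together with the monotonicity of $P$ under the partial trace. Crucially, because the smoothing ball is defined via the purified distance, any optimizer on one marginal lifts to a purification with the same distance from $\rho_{ABC}$, whose other marginal automatically lies in the corresponding smoothing ball. This is precisely the feature the purified distance was engineered to provide.

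Concretely, I would prove the inequality $\hmax*{A|B}[\rho] \leq -\hmin*{A|C}[\rho]$ first. Let $\tilde\rho_{AC} \in \ball{\rho_{AC}}$ attain the smooth min-entropy, so that $\hmin*{A|C}[\rho] = \hmin{A|C}[\tilde\rho]$. Since $\rho_{ABC}$ is a purification of $\rho_{AC}$, Theorem~\ref{th:pd-uhl} yields a purification $\tilde\rho_{ABC}$ of $\tilde\rho_{AC}$ with $P(\tilde\rho_{ABC}, \rho_{ABC}) = P(\tilde\rho_{AC}, \rho_{AC}) \leq \varepsilon$. Monotonicity of $P$ under the partial trace over $C$ (Theorem~\ref{th:pd-mono}) then gives $\tilde\rho_{AB} \in \ball{\rho_{AB}}$, so that $\tilde\rho_{AB}$ is a valid candidate for the smooth max-entropy. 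Applying the non-smooth duality of Lemma~\ref{lm:min-max/dual} to the pure state $\tilde\rho_{ABC}$ gives
\begin{align}
  \hmax*{A|B}[\rho] \leq \hmax{A|B}[\tilde\rho] = -\hmin{A|C}[\tilde\rho] = -\hmin*{A|C}[\rho].
\end{align}

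The reverse inequality is entirely symmetric. Start with $\tilde\rho_{AB} \in \ball{\rho_{AB}}$ attaining $\hmax*{A|B}[\rho]$, purify it on $C$ via Uhlmann so that $P(\tilde\rho_{ABC}, \rho_{ABC}) \leq \varepsilon$, and use that $\tilde\rho_{AC}$ then sits in $\ball{\rho_{AC}}$. The non-smooth duality applied to the pure $\tilde\rho_{ABC}$ yields $-\hmin*{A|C}[\rho] \leq -\hmin{A|C}[\tilde\rho] = \hmax{A|B}[\tilde\rho] = \hmax*{A|B}[\rho]$. Combining both bounds proves equality.

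The only subtle point\,|\,and the main ``obstacle''\,|\,is making sure the purification step is legitimate for sub-normalized optimizers: one must invoke the sub-normalized version of Uhlmann's theorem from Theorem~\ref{th:pd-uhl}, together with Corollary~\ref{co:pd-ext} if the optimizer is introduced as an extension rather than as a purification. Once that machinery is in place the proof is essentially a one-line application of Lemma~\ref{lm:min-max/dual} to the purified optimizer, which is why the duality extends so cleanly to smooth entropies defined via the purified distance (and would fail for the trace-distance-based smoothing used in earlier work).
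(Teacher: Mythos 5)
Your overall strategy is the right one and matches the paper's: smooth on one marginal, lift the optimizer to a purification at the same purified distance via Uhlmann's theorem, push it down to the other marginal by monotonicity of $P$, and apply the non-smooth duality of Lemma~\ref{lm:min-max/dual} to the purified optimizer. The inequalities you derive in each direction are correctly oriented.

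However, there is a gap in the purification step, and it is not the one you flag. You write that the only subtlety is handling sub-normalized optimizers, but the real obstruction is dimensional: to purify an optimizer $\tilde{\rho}_{AC} \in \mathcal{B}^{\varepsilon}(\rho_{AC})$ ``on $B$'' you need $\dim \mathcal{H}_B \geq \mathrm{rank}(\tilde{\rho}_{AC})$, and Theorem~\ref{th:pd-uhl} is only stated for a purifying space isomorphic to (hence at least as large as) the space being purified. While $\rho_{AC}$ itself has rank at most $d_B$ because $\rho_{ABC}$ is pure, the smoothed state $\tilde{\rho}_{AC}$ in general does not; in the extreme case where $B$ is trivial, $\rho_{AC}$ is pure but the ball around it contains mixed states that admit no purification on $\mathcal{H}_{ABC}$ at all, so the state $\tilde{\rho}_{ABC}$ your argument requires simply does not exist. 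The same problem afflicts the reverse direction with $C$ in place of $B$. This is precisely why the paper's proof opens by invoking Proposition~\ref{pr:smooth-iso}: one first embeds $C$ into $C'$ with $\dim \mathcal{H}_{C'} = \max\{d_C, d_{AB}\}$ (and symmetrically $B$ into $B'$ with $\dim \mathcal{H}_{B'} = \max\{d_B, d_{AC}\}$), uses invariance of the smooth entropies under embeddings, and only then runs the Uhlmann argument. With that preamble added, your proof goes through and is essentially the paper's.
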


\begin{proof}
  According to Proposition~\ref{pr:smooth-iso}, the smooth entropies are invariant under embeddings.
  Let $ρ¬{ABC'}$ be an embedding of $ρ¬{ABC}$ into a space $ℋ¬{AB} ⨂ ℋ¬{C'}$ with $\dim{ℋ¬{C'}} = 
  \max \{ d¬{C} , d¬{AB} \}$. Then,
  \begin{align}
    \hmax*{A|B}[ρ] &= \min_{\rhot\,∈\,\ball{ρ¬{AB}}} \hmax{A|B}[\rhot] 
        \,= \min_{\rhot\,∈\,\ball*{ρ¬{ABC'}}} \hmax{A|B}[\rhot] ¶\\
      &= -\!\! \max_{\rhot\,∈\,\ball*{ρ¬{ABC'}}} \hmin{A|C'}[\rhot] 
        \,≥ -\!\! \max_{\rhot\,∈\,\ball{ρ¬{AC'}}} \hmin{A|C'}[\rhot] ¶\\
      &= - \hmin*{A|C'}[ρ] = - \hmin*{A|C}[ρ] ¶\,.
  \end{align}
  We used that $ℋ¬{C'}$ is chosen large enough to accommodate all purifications
  of states $\rhot ∈ \ball{ρ¬{AB}}$ (cf.\ Theorem~\ref{th:pd-uhl}). However, this also
  means that not all purifications of states $\rhot¬{AC'} ∈ \ball{ρ¬{AC'}}$ can be found
  in $\ball*{ρ¬{ABC'}}$, which leads to the inequality.
  
  On the other hand, we may consider the embedding $ρ¬{AB'C}$ into a space $ℋ¬{AC} ⨂ ℋ¬{B'}$ 
  with $\dim{ℋ¬{B'}} = \max\{ d¬B, d¬{AC} \}$. Then, using the same arguments
  as above,
  \begin{align}
    \hmax*{A|B}[ρ] &= \hmax*{A|B'}[ρ] 
        \,= \!\!\min_{\rhot\,∈\,\ball{ρ¬{AB'}}} \hmax{A|B'}[\rhot] ¶\\
      &≤ \!\min_{\rhot\,∈\,\ball*{ρ¬{AB'C}}} \hmax{A|B'}[\rhot]
        \,= -\!\!\max_{\rhot\,∈\,\ball*{ρ¬{AB'C}}} \hmin{A|C}[\rhot] ¶\\
      &= -\!\!\max_{\rhot\,∈\,\ball{ρ¬{AC}}} \hmin{A|C}[\rhot]
        \,= - \hmin*{A|C}[ρ] ¶\,.\qedhere
  \end{align}
\end{proof}

\subsection{Relation between Smooth Entropies}

We have established in Proposition~\ref{pr:min-max-bounds} that the min-entropy cannot exceed
the max-entropy for normalized states. Here, this result is extended to smooth entropies.
(See also~\cite{vitanov12} for an alternative proof, which inspired the one provided
here.)
\begin{proposition}
  \label{pr:min-max-smooth}
  Let $ρ ∈ \onorm{ℋ¬{AB}}$ and $ε, ε' ≥ 0$ s.t.~$ε + ε' < 1$. 
  Then,
  \begin{align}
    \hmin*{A|B}[ρ] ≤ \hmax[ε']{A|B}[ρ] + \log \frac{1}{1 - (ε + ε')^2} ¶\,.
  \end{align}
\end{proposition}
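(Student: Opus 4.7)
The plan is to establish the equivalent multiplicative form
\[
2^{\hmax[ε']{A|B}[ρ]} \,\geq\, \bigl(1-(ε+ε')^2\bigr)\, 2^{\hmin*{A|B}[ρ]},
\]
by using the min-entropy witness as a candidate in the fidelity-based characterization of the max-entropy.

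First, I would invoke Lemma~\ref{lm:smoothing-normalized}, together with the invariance of the smooth entropies under local isometries (Proposition~\ref{pr:smooth-iso}), to embed $ρ$ into sufficiently large Hilbert spaces so that there exist \emph{normalized} optimizers $\rhot ∈ \ball{ρ}$ and $\rhob ∈ \ball[ε']{ρ}$ satisfying $\hmin{A|B}[\rhot] = \hmin*{A|B}[ρ]$ and $\hmax{A|B}[\rhob] = \hmax[ε']{A|B}[ρ]$. From Definition~\ref{df:min-entropy} I would then extract a normalized witness $σ¬B$ such that $\rhot ≤ 2^{-λ}\,ⅈ¬A ⨂ σ¬B$, with $λ := \hmin*{A|B}[ρ]$.

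The heart of the argument is to lower-bound $\fidb{\rhob}{ⅈ¬A ⨂ σ¬B}$ by interposing $\rhot$. Since $2^{λ}\rhot ≤ ⅈ¬A ⨂ σ¬B$, monotonicity of the fidelity in its first argument (Property~v.\ of Table~\ref{tb:fid}) yields
\[
\fidb{\rhob}{ⅈ¬A ⨂ σ¬B} \,\geq\, \fidb{\rhob}{2^{λ}\rhot} \,=\, 2^{λ/2}\, \fidb{\rhob}{\rhot}.
\]
Normalization of both $\rhot$ and $\rhob$ ensures, via Lemma~\ref{lm:alt-gfid}, that $\fidb{\rhob}{\rhot}$ coincides with the ordinary fidelity $F(\rhob,\rhot)$. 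The triangle inequality of the purified distance (Proposition~\ref{pr:pd-metric}) gives $P(\rhob,\rhot) \leq ε + ε'$, hence $F(\rhob,\rhot)^2 \geq 1-(ε+ε')^2$.

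Finally, I would evaluate the max-entropy expression $2^{\hmax{A|B}[\rhob]} = d¬A \max_{σ ∈ \onorm{ℋ¬B}} \fidb{\rhob}{π¬A ⨂ σ}^2$ at the candidate $σ = σ¬B$. Using $π¬A ⨂ σ¬B = d¬A^{-1}\,ⅈ¬A ⨂ σ¬B$ to cancel the prefactor yields
\[
2^{\hmax[ε']{A|B}[ρ]} \,\geq\, \fidb{\rhob}{ⅈ¬A ⨂ σ¬B}^2 \,\geq\, 2^{λ}\bigl(1-(ε+ε')^2\bigr),
\]
which upon taking logarithms and rearranging gives the claim. The main technical subtlety is the normalization bookkeeping: without the initial embedding, $\rhot$ and $\rhob$ would only be sub-normalized, so $\fidb{\rhob}{\rhot}$ could be strictly smaller than $F(\rhob,\rhot)$ and the resulting bound would be weaker.
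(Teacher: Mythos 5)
Your proposal is correct and follows essentially the same route as the paper's own proof: normalize the optimizers via Lemma~\ref{lm:smoothing-normalized}, take the min-entropy witness $σ¬B$ as the candidate in the max-entropy's fidelity optimization, use Property v.\ of the fidelity to interpose $2^{λ}\rhot$, and finish with the triangle inequality for the purified distance. The normalization caveat you flag at the end is exactly the point the paper handles with the same embedding argument.
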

\begin{proof}
  First, note that we can always embed $ℋ¬A$ into a larger space $ℋ¬{A'}$ such that 
  there exists a normalized state $\rhot¬{A'B} ∈ \ball{ρ¬{A'B}}$ 
  that maximizes $\hmin*{A'|B} = \hmin*{A|B}$ (cf.\ 
  Lemma~\ref{lm:smoothing-normalized}).
  Similarly, by duality, there exists a normalized state $\rhob¬{A'B'} ∈ \ball[ε']{ρ¬{A'B'}}$ that 
  minimizes $\hmax[ε']{A'|B'} = \hmax[ε']{A|B}$. The system $B'$ contains $B$ and the 
  purification of the 
  additional weight introduced to normalize the dual smooth min-entropy.
  
  Hence, by definition of the smooth min-entropy, there exists a state 
  $σ ∈ \onorm{ℋ¬{B'}}$ such that 
  $\rhot¬{A'B'} ≤ 2^{-λ} ⅈ¬{A'} ⨂ σ¬{B'}$ with $λ = \hmin*{A'|B'}[ρ]$. Thus,
  \begin{align}
    \hmax[ε']{A|B}[ρ] &= \hmax[ε']{A'|B'}[ρ] ≥ \log \fidb{\rhob¬{A'B'}}{ⅈ¬{A'} ⨂ σ¬{B'}}^2 ¶\\
      &≥ λ + \log \fidb{\rhob¬{A'B'}}{\rhot¬{A'B'}}^2 
      = λ + \log \big (1 - P^2(\rhob¬{A'B'}, \rhot¬{A'B'}) \big) ¶\\
      &≥ \hmin*{A|B}[ρ] - \log \frac{1}{ 1 - (ε + ε')^2 } ¶\,.
  \end{align}
  The first inequality follows from the definition of the smooth max-entropy, together 
  with the fact that we took a particular $σ$ instead of optimizing over all $σ$. The second
  inequality is a simple application of the operator inequality that defines $\hmin*{A'|B}$.
  Finally, the triangle inequality for the purified distance establishes 
  $P(\rhob, \rhot) ≤ ε + ε'$.
\end{proof}

\begin{remark}
  \label{rm:min-max-smooth}  
  The term $-\log \big( 1 - (ε + ε')^2 \big)$ can be 
  reduced by the use of Eq.~§[pd/tight-triangle] instead of the triangle inequality for the 
  purified distance. Hence,
  \begin{align}
    \hmin*{A|B}[ρ] ≤ \hmax[ε']{A|B}[ρ] + 
      \log \frac{1}{1 - \left( ε √{1-ε'^2} + ε' √{1-ε^2} \right)^2} ¶.
  \end{align}
  The range of allowed pairs $\{ε, ε'\}$ is extended to those satisfying
    $\arcsin(ε) + \arcsin(ε') < \frac{π}{2}$.
  In particular, this means that the term is finite if 
  we choose $ε' = 1 - ε$, for any $0 < ε < 1$. (See also Figure~\ref{fg:tight}.)
\end{remark}

Proposition~\ref{pr:min-max-smooth} implies that smoothing states that have similar min- and max-entropies has almost 
no effect. More precisely, let $ρ ∈ \onorm{ℋ¬{AB}}$ be such that $\hmin{A|B}[ρ] = \hmax{A|B}[ρ] - δ$. Then, 
\begin{align}
  \hmin*{A|B}[ρ] ≤ \hmax{A|B}[ρ] - \log ( 1 - ε^2 ) = \hmin{A|B}[ρ] + δ - \log ( 1 - ε^2 ) ¶.
\end{align}
For $δ = 0$, this inequality is tight. Note that the smoothed state
$\rhot = ρ (1-ε^2) \ecl ρ$ reaches equality in this case.
A corresponding inequality can be derived for the
smooth max-entropy.

\section{Data Processing Inequalities}
\label{sc:smooth/data-proc}
☼*{data processing}

We expect measures of uncertainty about the system $A$ given side
information $B$ to be non-decreasing under local physical operations 
(e.g.\ measurements or unitary evolutions) applied to the $B$ system. Such operations can
be described most generally by ☼[TP-CPMs]{TP-CPM}. Here, we show that the smooth entropies, 
$\hmin*{A|B}$ and $\hmax*{A|B}$, have this property.

Another data processing inequality concerns rank-$1$
☼[projective measurements]{measurement!projective} of the 
system $A$. Such measurements can be
described in terms of an orthonormal basis $\{ \ket{x} \}$ of
$ℋ¬A$ and a measurement TP-CPM $ℳ ∈ \tpcpm{ℋ¬A, ℋ¬X}$ from $ℋ¬A$ to $ℋ¬X
\iso ℋ¬A$, which maps $ρ$ to $∑_x \braket{x|ρ|x}\,\proj{x}$. We expect 
that the uncertainty about the system $A$ as well as the smooth entropies, 
$\hmin*{A|B}$ and $\hmax*{A|B}$, do not decrease when such a measurement 
is executed on the $A$ system.

In fact, we show a more general theorem that encompasses the two example data processing
properties above. (Note, in particular, that $ℳ$ is a ☼{unital} map.)

\begin{theorem}[Generalized Data Processing]
  \label{th:data-proc} 
  Let $ε ≥ 0$ and $ρ ∈ \osub{ℋ¬{ABC}}$. 
  Moreover, let $ℰ$ be a sub-unital and trace non-increasing 
  CPM from $ℋ¬A$ to $ℋ¬A'$, let $\sF$ be a trace non-increasing CPM from
  $ℋ¬B$ to $ℋ¬{B'}$ and let $\sG$ be a trace non-increasing CPM from $ℋ¬C$ to $ℋ¬{C'}$. 
  Then, the state $τ¬{A'B'C'} = (\sG \circ \sF \circ ℰ)[ρ¬{ABC}]$ satisfies
  \begin{align}
    \hmin*{A|B}[ρ] ≤ \hmin*{A'|B'}[τ] ¶\,.
  \end{align}
  Furthermore, if $ℰ$, $\sF$ and $\sG$ are also trace preserving, then
  \begin{align}
    \hmax*{A|B}[ρ] ≤ \hmax*{A'|B'}[τ] ¶\,.
  \end{align}
\end{theorem}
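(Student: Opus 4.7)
The proof decomposes into a smooth min-entropy part and a smooth max-entropy part, the latter obtained from the former via the duality relation of Theorem~\ref{th:smooth-dual}.

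For the min-entropy part, I would first establish the non-smooth inequality $\hmin{A|B}[σ] ≤ \hmin{A'|B'}[(\sG ∘ \sF ∘ ℰ)[σ]]$ for every $σ ∈ \osub{ℋ¬{ABC}}$ by chaining three applications of the properties of $Φ$ from Table~\ref{tb:Phi}: Property~vi for $ℰ$ on $A$ (sub-unital and trace non-increasing) and for $\sF$ on $B$ (trace non-increasing), and Property~vii for $\sG$ on the extension system $C$ (trace non-increasing). Each step can only decrease $Φ$, hence increase $\hmin$. To upgrade to the smooth version, I would pick $\rhot¬{AB} ∈ \ball{ρ¬{AB}}$ attaining $\hmin*{A|B}[ρ]$, use Corollary~\ref{co:pd-ext} to extend it to $\rhot¬{ABC} ∈ \ball{ρ¬{ABC}}$, and define $\taut := (\sG ∘ \sF ∘ ℰ)[\rhot]$. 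Monotonicity of the purified distance under trace non-increasing CPMs (Theorem~\ref{th:pd-mono}) places $\taut$ in $\ball{τ}$, and the non-smooth inequality then chains to $\hmin*{A|B}[ρ] = \hmin{A|B}[\rhot] ≤ \hmin{A'|B'}[\taut] ≤ \hmin*{A'|B'}[τ]$.

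For the max-entropy part, assuming all three channels are trace preserving, the strategy is to translate the target inequality into a dual smooth min-entropy statement via Theorem~\ref{th:smooth-dual} and apply the smooth min-entropy data processing just obtained. Concretely, let $ρ¬{ABCR}$ purify $ρ¬{AB}$ on an auxiliary register $R$, and let $U_ℰ$, $V_\sF$, $W_\sG$ denote the isometric Stinespring dilations (Lemma~\ref{lm:stinespring}) with environment systems $E_1$, $E_2$, $E_3$. Applying the three dilations to $ρ¬{ABCR}$ produces a pure state $\psi$ whose marginal on $A'B'$ equals $τ¬{A'B'}$, so $\psi$ purifies $τ¬{A'B'}$ on $C'RE_1E_2E_3$. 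Duality then recasts $\hmax*{A|B}[ρ] ≤ \hmax*{A'|B'}[τ]$ as the dual inequality $\hmin*{A|CR}[ρ] ≥ \hmin*{A'|C'RE_1E_2E_3}[\psi]$. This is established by applying the smooth min-entropy data processing from the first part in the direction going from $\psi$ back toward $ρ$, using the adjoint $ℰ†$ on the $A'$ side (which is sub-unital and trace non-increasing because $ℰ$ is sub-unital and TP) composed with partial traces over the environments $E_1, E_2, E_3$ on the conditioning side.

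The main obstacle is the max-entropy case. The minimization in the definition of $\hmax*{A|B}$ blocks the direct ``optimize, then push through the channel'' argument that works for $\hmin*{A|B}$, which is why the duality detour through pure purifications seems essentially forced. The delicate step is keeping careful track of the dilation environments $E_1, E_2, E_3$ appearing on the conditioning side of the dual min-entropy, and the sub-unital assumption on $ℰ$ is essential precisely because it ensures that $ℰ†$ itself is sub-unital and trace non-increasing, so that it constitutes a valid input to the min-entropy data processing inequality in the reverse direction imposed by duality.
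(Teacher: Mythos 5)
Your treatment of the smooth min-entropy half is correct and is essentially the paper's own argument: take the optimizer $\rhot ∈ \ball{ρ¬{AB}}$, push its minimum-distance extension through $\sG \circ \sF \circ ℰ$ using Properties vi.\ and vii.\ of $Φ$, and use Theorem~\ref{th:pd-mono} to certify that the image lies in the $ε$-ball around $τ¬{A'B'}$.

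The max-entropy half, however, has a genuine gap. After dualizing, you must show $\hmin*{A'|C'RE_1E_2E_3}[ψ] ≤ \hmin*{A|CR}[ρ]$, where $E_1$ is the Stinespring environment of $ℰ$. Since local isometries leave the smooth min-entropy invariant and $\hmin*{A|CR}[ρ] = \hmin*{A'E_1|C'RE_2E_3}[ψ]$ (keeping $A'E_1$ together on the first argument), your task is precisely to move $E_1$ from the conditioning side to the first argument. This is not a data-processing step, and the maps you propose do not accomplish it: applying $ℰ†$ on $A'$ produces $ℰ† \circ ℰ$ acting on the $A$ marginal, which is not the identity, while tracing $E_1$ out of the conditioning system gives $\hmin*{A'|E_1C'R\cdots}[ψ] ≤ \hmin*{A'|C'R\cdots}[ψ]$, whose right-hand side is then a min-entropy of a state obtained from $ρ$ by \emph{forward} processing and hence is lower-bounded, not upper-bounded, by $\hmin*{A|CR}[ρ]$. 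Worse, the inequality $\hmin{A'|E_1X} ≤ \hmin{A'E_1|X}$ that your reduction implicitly requires is false for general states: take $A'$ trivial and $E_1X$ maximally entangled, so the left side is $0$ while the right side is $-\log d_{E_1}$. It can therefore only hold here by exploiting the special structure of $ψ$ and the sub-unitality of $ℰ$, which is exactly the content of the statement being proved — the reduction is circular. The paper avoids duality for this half entirely: it dilates $ℰ$ and $\sF$ to isometries $U: ℋ¬A → ℋ¬{A'A''}$ and $V: ℋ¬B → ℋ¬{B'B''}$, takes the optimal $\taut ∈ \ball{τ¬{A'A''B'B''}}$, and lower-bounds the fidelity $F(\taut¬{A'B'}, ⅈ¬{A'} ⨂ σ¬{B'})$ by $F(\taut¬{A'B'}, \tr[A'']{UU†} ⨂ σ¬{B'})$ using $\tr[A'']{UU†} = ℰ[ⅈ¬A] ≤ ⅈ¬{A'}$ — this is where sub-unitality actually enters — then applies monotonicity of the fidelity under partial trace and property iv.\ of Table~\ref{tb:fid} to arrive at $\hmax{A'A''|B'B''}$ of the projected state $UU† ⨂ VV†\,\taut\, UU† ⨂ VV†$, which is at least $\hmax*{A|B}[ρ]$ by invariance under local isometries. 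You would need to import this fidelity argument (or an equivalent direct one) to close your proof.
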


\begin{proof}
  We first prove the result for the smooth min-entropy.
  Let $\rhot ∈ \ball{ρ¬{AB}}$ be the
  state that maximizes the smooth min-entropy. Then, data processing 
  of $Φ$ (Properties vi. and vii.) implies
  \begin{align}
    \hmin*{A|B}[ρ] &= \hmin{A|B}[\rhot] ¶\\
      &= -\log Φ¬{A|B}(\rhot) ¶\\
%      &≤ -\log Φ¬{A|B'}\big(ℰ[\rhot]\big) ¶\\
      &≤ -\log Φ¬{A'|B'}\big((\sF \circ ℰ)[\rhot]\big) ¶\\
      &≤ -\log Φ¬{A'|B'}\big((\sG \circ \sF \circ ℰ)[\rhot]\big) ¶\,.
 \end{align}
  We now introduce the state $\taut = (\sG \circ \sF \circ ℰ)[\rhot]$,
  which is $ε$-close to $\tau$ due to the monotonicity of the purified distance under
  trace non-increasing maps (cf.\ Theorem~\ref{th:pd-mono}). Thus, we conclude
  \begin{align}
    \hmin*{A|B}[ρ] &≤ -\log Φ¬{A'|B'}(\taut) 
      = \hmin{A'|B'}[\taut] ≤ \hmin*{A'|B'}[τ] ¶\,.
  \end{align}
  
  To prove the result for the max-entropy,  we take advantage of the ☼{Stinespring dilation}
  of two TP-CPM maps $ℰ$ and $\sF$. Namely, we introduce the isometries $U: A → A'A''$ and
  $V: B → B'B''$ and the state $τ¬{A'A'B'B''} = (U ⨂ V) ρ¬{AB} (U† ⨂ V†)$ of 
  which $τ¬{A'B'}$ is a marginal. Let $\taut ∈ \ball{τ¬{A'A''B'B''}}$ be the state that minimizes
  the smooth max-entropy $\hmax*{A'|B'}[τ]$. Then,
  \begin{align}
    \hmax*{A'|B'}[τ] &= \max_{σ} \log \fidb{\taut¬{A'B'}}{ⅈ¬{A'} ⨂ σ¬{B'}}^2 ¶\\
      &≥ \max_{σ} \log \fidb{\taut¬{A'B'}}{\tr[A'']{Π¬{A'A''}} ⨂ σ¬{B'}}^2 ¶[data-proc1]\,.
  \end{align}
  We introduced the projector $Π¬{A'A''} = U U†$ onto the image of $U$, which exhibits
  the following property  due to the fact that $ℰ$ is sub-unital:
  \begin{align}
    \tr[A'']{Π} = \tr[A'']{U ⅈ¬A U†} = ℰ[ⅈ¬A] ≤ ⅈ¬{A'} ¶.
  \end{align}
  The inequality in~§[data-proc1] is then a result of the fact that the fidelity
  is non-increasing when an argument $A$ is replaced by a smaller argument $B ≤ A$.
  (Property v.\ of the fidelity). Next, we use the monotonicity of the fidelity under
  partial trace~§[fid/mon] to bound~§[data-proc1] further.
  \begin{align}
    \hmax*{A'|B'}[τ] &≥ \max_{σ} \log \fidb{\taut¬{A'A''B'B''}}{Π¬{A'A''} 
        ⨂ σ¬{B'B''}}^2 ¶\\
      &= \max_{σ} \log \fidb{Π¬{A'A''} \taut¬{A'A''B'B''} Π¬{A'A''}}{ⅈ¬{A'A''} 
        ⨂ σ¬{B'B''}}^2 ¶\\
      &= \hmax{A'A''|B'B''}[\tauh] ¶\,.
  \end{align}
  Finally, we note that  the projector $\tauh = Π \taut Π ∈ \ball{ρ¬{A'A''B'B''}}$ due
  to the monotonicity of the purified distance under trace non-increasing maps. Hence,
  we established $\hmax*{A'|B'}[τ] ≥ \hmax*{A'A''|B'B''}[τ] = \hmax{A|B}[ρ]$, where the
  last equality follows due to the invariance of the max-entropy under embeddings 
  (Proposition~\ref{pr:smooth-iso}).
\end{proof}

Note that a generalization of the data processing results to trace non-increasing maps is not possible for the max-entropy. For example, the max-entropy can decrease when a projection is applied to the $B$ system.

\section{Classical Information}
\label{sc:smooth/class}

Here we discuss smooth entropies of quantum states that encode partially classical information. To maintain full generality in the following arguments, we consider states on a four-partite system composed of $X$, $Y$, $A$ and $B$, where $X$ and $Y$ are classical ☼[registers]{register}. The smooth entropies are evaluated for the classical register $X$ and the quantum register $A$ conditioned on classical side information $Y$ and quantum side information $B$.

In the following, we say that the state $ρ ∈ \osub{ℋ¬{XYAB}}$ is ☼[classical-classical-quantum-quantum (CCQQ)]{state!CQ}☼*{CQ|see{state}} if it can written in the form
\begin{align}
	ρ¬{XYAB} = ∑_{x,y} p_{xy}\, \proj{x}[X] ⨂ \proj{y}[Y] ⨂ ω¬{AB}^{xy}\,, 
	\ \tn{where}\ ω¬{AB}^{xy} ∈ \osub{ℋ¬{AB}},
	¶[CCQQ]
\end{align}
the kets $\{ \ket{x} \}$, $\{ \ket{y} \}$ are orthonormal bases of $ℋ¬X$ and $ℋ¬Y$, respectively, and $\{ p_{xy} \}$ is a probability distribution. This state has the property that it remains invariant under a projective measurement of $X$ in the basis $\{ \ket{x} \}$ and under a projective measurement of $Y$ in the basis $\{ \ket{y} \}$. Formally, we define  measurement ☼[TP-CPMs]{TP-CPM}
\begin{align}
  \sM¬X : ρ ↦ ∑_x \braket{x|ρ|x}[X] \proj{x}[X] 
  \ \tn{and}\
  \sM¬Y : ρ ↦ ∑_y \braket{y|ρ|y}[Y] \proj{y}[Y]  ¶[smooth/class/measure]
\end{align}
and note that $\sM¬X [ ρ ] = \sM¬Y [ ρ ] = ρ$. Furthermore, given an arbitrary state $τ ∈ \osub{ℋ¬{XYAB}}$, it is easy to verify that the measured state $(\sM¬Y \circ \sM_X) [τ]$ is of CCQQ form~§[CCQQ]. 

Partially classical states of the form~§[CCQQ] can be purified in a way that preserves their structure. For this, we introduce purifying Hilbert spaces $ℋ¬{X'} \iso ℋ¬X$, $ℋ¬{Y'} \iso ℋ¬Y$ and $ℋ¬C \iso ℋ¬{AB}$. Then, one may purify the states $ω¬{AB}^{xy}$ individually on $\osub{ℋ¬{ABC}}$, which allows us to write down a purification $\ket{ρ} ∈ ℋ¬{XX'YY'ABC}$ of the form
\begin{align}
  \ket{ρ}[XX'YY'ABC] = ∑_{x,y} √{p_{xy}}\, \ket{x}[X] ⨂ \ket{x}[X'] ⨂ \ket{y}[Y] ⨂ 
  \ket{y}[Y'] ⨂ \ket{ω^{xy}}[ABC] ¶[CCQQ-pure]\, .
\end{align}

It is easy to verify that purifications of this form commute with the following projectors,
\begin{align}
  Π¬{XX'} := ∑_{x} \proj{x}[X]\! ⨂\! \proj{x}[X']\ \tn{and}\ 
  Π¬{YY'} := ∑_{y} \proj{y}[Y]\! ⨂\! \proj{y}[Y'] ¶[smooth/class/proj]\,.
\end{align}
In the converse, any state $τ ∈ \osub{ℋ¬{XX'YY'AB}}$ that commutes with both $Π¬{XX'}$ and $Π¬{YY'}$ has a CCQQ marginal $τ¬{XYAB}$. We say that the two systems $X$ and $X'$ as well as $Y$ and $Y'$ are ☼{coherent classical} pairs. 

Some important properties of min- and max-entropies of coherent classical states are discussed in Appendix~\ref{ap:lemmas}, Section~\ref{se:lemmas/cc} and will be used in the following.

\subsection{Smoothing of Classical States}

First, we show that\,|\,in order to smooth min- and max-entropies\,|\,it is sufficient to consider a ball of close states that are classical on the same subsystems as the original state. This is formalized in the following proposition.
\begin{proposition}[Classical Smoothing]
  \label{pr:class-smooth}
  Let $ρ ∈ \osub{ℋ¬{XYAB}}$ be classical on $X$ and $Y$ and let $\eps ≥ 0$. Then,
  there exist states $\rhob, \rhot ∈ \ball{ρ}$ that are classical on $X$ and $Y$ such that
  \begin{align}
    &\hmin*{XA|YB}[ρ] = \hmin{XA|YB}[\rhob] \tn*{and} ¶\\
    &\hmax*{XA|YB}[ρ] = \hmax{XA|YB}[\rhot] ¶\,.
  \end{align}
\end{proposition}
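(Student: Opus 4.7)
The proof splits naturally into the min-entropy and max-entropy parts, and only the first admits a direct ``measure and apply data processing'' argument.

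For the min-entropy, I would take any optimal $\rhob_0 \in \ball{\rho¬{XYAB}}$ with $\hmin{XA|YB}[\rhob_0] = \hmin*{XA|YB}[\rho]$ and set $\rhob := (\sM¬X \otimes \sM¬Y)[\rhob_0]$, with $\sM¬X$, $\sM¬Y$ the projective measurement TP-CPMs of~§[smooth/class/measure], so that $\rhob$ is manifestly classical on $X$ and $Y$. Because $\sM¬X$ is unital (hence sub-unital) and trace-preserving on the numerator $XA$, while $\sM¬Y$ is trace-preserving on the conditioning $YB$, the generalized data processing inequality (Theorem~\ref{th:data-proc}) yields $\hmin{XA|YB}[\rhob] \ge \hmin{XA|YB}[\rhob_0]$. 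Since $\rho$ is fixed by both $\sM¬X$ and $\sM¬Y$, monotonicity of the purified distance under TP-CPMs (Theorem~\ref{th:pd-mono}) gives $P(\rhob,\rho) \le P(\rhob_0,\rho) \le \eps$, so $\rhob \in \ball{\rho}$ and the defining maximum forces equality.

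For the max-entropy the direct analog fails, because the same measurement applied to an optimal smoothing can only \emph{increase} the max-entropy (Theorem~\ref{th:data-proc}), in the wrong direction for a minimization. The plan is instead to route through smooth duality (Theorem~\ref{th:smooth-dual}):
\begin{align*}
  \hmax*{XA|YB}[\rho] \;=\; -\hmin*{XA|X'Y'C}[\rho']\,,
\end{align*}
where $\rho'_{XX'YY'ABC}$ is the CCQQ-like pure purification in~§[CCQQ-pure], and then reduce to the min-entropy problem. By Lemma~\ref{lm:min-max/dual} it suffices to exhibit a \emph{pure} smoothing $\tilde\rho^{*}\!\in\ball*{\rho'}$ of the dual smooth min-entropy whose marginal $\tilde\rho^{*}_{XYAB}$ is classical on $X$ and $Y$, for then $\rhot := \tilde\rho^{*}_{XYAB}$ satisfies $\hmax{XA|YB}[\rhot] = -\hmin{XA|X'Y'C}[\tilde\rho^{*}] = -\hmin*{XA|X'Y'C}[\rho'] = \hmax*{XA|YB}[\rho]$.

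Two structural facts drive the construction. First, a direct calculation shows that both marginals $\rho'_{XA}$ and $\rho'_{X'Y'C}$ of the purification are themselves classical (on $X$ and on $X'Y'$ respectively), so Proposition~\ref{pr:smoothing/support} lets the optimal smoothing of $\hmin*{XA|X'Y'C}[\rho']$ be chosen with classical structure on $X$, $X'$, and $Y'$. Second, the Heisenberg--Weyl operators $U_Y^l \otimes \bar U_{Y'}^l$ (with $U_Y^l = \sum_y e^{2\pi i ly/d_Y}\proj{y}_Y$) each fix $\ket\rho$, so the random-unitary TP-CPM $\sN^{YY'}[\omega] = d_Y^{-1}\sum_l (U_Y^l \otimes \bar U_{Y'}^l)\,\omega\,(U_Y^l \otimes \bar U_{Y'}^l)^\dagger$ fixes $\rho'$ and, when applied to any state that is already classical on $Y'$, makes the marginal on $XYAB$ classical on $Y$ as well. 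Sub-additivity of $\Phi$ (Property~iii of Table~\ref{tb:Phi}) together with invariance of $\Phi$ under each unitary in the mixture shows that $\sN^{YY'}[\tilde\rho^{*}]$ still saturates the smooth min-entropy.

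The remaining step --- and the real technical obstacle --- is to convert the resulting \emph{mixed} optimal $\sN^{YY'}[\tilde\rho^{*}]$ into a \emph{pure} optimal smoothing with the same classical marginal, so that non-smooth duality applies. The plan is to purify the classical marginal $\rhot := \sN^{YY'}[\tilde\rho^{*}]_{XYAB}$ inside $X'Y'C$ and use Uhlmann's theorem (Theorem~\ref{th:pd-uhl}) to place this purification inside $\ball*{\rho'}$; the hard part is then to argue that this purification still attains $\hmin*{XA|X'Y'C}[\rho']$, which is equivalent to the desired inequality $\hmax{XA|YB}[\rhot] \le \hmax*{XA|YB}[\rho]$. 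This closure relies on carefully combining the classical support structure given by Proposition~\ref{pr:smoothing/support}, the duality identity, and the fact that each term in the random-unitary mixture defining $\sN^{YY'}$ is itself an optimal pure smoothing whose $Y$-marginal differs from $\rhot$ only by a $U_Y^l$-rotation.
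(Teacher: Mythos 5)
Your min-entropy half is exactly the paper's argument: measure $X$ and $Y$ on an optimal smoothing, apply generalized data processing for the min-entropy, and use monotonicity of the purified distance under the measurement maps (which fix $\rho$). The max-entropy half, however, has a genuine gap, and you have located it yourself: you never close the step that converts the twirled, mixed, optimal smoothing $\mathcal{N}^{YY'}[\tilde\rho^{*}]$ back into a \emph{pure} state in $\ball*{\rho'}$ that still attains $\hmin*{XA|X'Y'C}[\rho']$ and has a classical $XYAB$-marginal. Purifying its marginal via Uhlmann keeps you in the $\varepsilon$-ball, but nothing you write shows that this new purification attains the smooth min-entropy --- that inequality is equivalent to the statement you are trying to prove, so as it stands the argument is circular. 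There is also a smaller but real error earlier: Proposition~\ref{pr:smoothing/support} only restricts the optimal smoothing to $\supp{\rho'_{XA}} \otimes \supp{\rho'_{X'Y'C}}$, and containment in the support of a classical state does not force block-diagonality in the classical basis (if $\rho_X$ is uniform, every state on $X$ lies in its support), so it does not deliver the ``classical structure on $X$, $X'$ and $Y'$'' you claim.

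The paper closes the loop differently and more economically. It applies the \emph{already proven} min-entropy half of this very proposition to the dual problem $\hmin*{XA|X'Y'C}[\rho]$, which produces an optimal smoothing $\hat\rho$ that is genuinely classical on $X$, $X'$ and $Y'$ (measurement plus data processing; no support argument needed). It then sets $\tilde\rho = \Pi_{XX'}\,\hat\rho\,\Pi_{XX'}$ to restore the coherence between $X$ and $X'$: Lemma~\ref{lm:smooth/class/class-entangled-min} guarantees that this projection does not decrease $\hmin{XA|X'Y'C}$, and Theorem~\ref{th:pd-mono} keeps it in the ball because $\rho$ commutes with $\Pi_{XX'}$. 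Note that no twirl on $YY'$ is needed at all: classicality on the conditioning register $Y'$ alone suffices, because the purification onto $YB$ can then be chosen with $Y$ a coherent classical copy of $Y'$, whose $Y$-marginal is automatically classical. If you want to rescue your route you would need a $YY'$-analogue of Lemma~\ref{lm:smooth/class/class-entangled-min} to justify the final projection/purification step; the cleaner fix is to drop the twirl and reuse part one of the proposition as the paper does.
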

\begin{proof}
  We first prove the statement for the min-entropy.
  By definition of the smooth min-entropy, there exists a state 
  $\rhoh ∈ \ball{ρ}$ such that $\hmin*{XA|YB}[ρ] = \hmin{XA|YB}[\rhoh]$. 
  We propose the CCQQ state $\rhob = (\sM¬X \circ \sM¬Y) [\rhoh]$ 
  using the measurement in~§[smooth/class/measure] as a candidate. Due to the data processing 
  inequalities in Theorem~\ref{th:data-proc}, we have
  \begin{align}
    \hmin{XA|YB}[\rhoh] ≤ \hmin{XA|YB}[\rhob] ¶\,.
  \end{align} 
  Furthermore, $\rhob \ecl ρ$ 
  due to Theorem~\ref{th:pd-mono} and, hence, $\rhob$ satisfies 
  all conditions of the proposition.

  To prove the statement for the max-entropy, we consider the dual problem 
  for the min-entropy using the purification $\ket{ρ}$ in~§[CCQQ-pure].
  Namely, we need to show that there exists a state $\rhot ∈ \ball{ρ}$ 
  that satisfies
  \begin{align}
    \hmin*{XA|X'Y'C}[ρ] = \hmin{XA|X'Y'C}[\rhot] ¶\,,
  \end{align}  
  commutes with $Π¬{XX'}$, and is classical on $Y'$. 
  Due to the first part of this lemma, there 
  exists a state~$\rhoh ∈ \ball{ρ}$ such that $\hmin*{XA|X'Y'C}[\rho] = 
  \hmin{XA|X'Y'C}[\rhoh]$ and $\rhoh$ is classical on $Y'$. We propose the
  state $\rhot = Π¬{XX'} \rhoh\, Π¬{XX'}$ as a candidate. Then,
  Lemma~\ref{lm:smooth/class/class-entangled-min} and 
  Theorem~\ref{th:pd-mono} establish that $\rhot$ satisfies all 
  conditions, which conclude the proof.
\end{proof}

\subsection{Entropy of Classical Information}

Here, we bound the smooth entropies of a classical system $X$. For full generality, we consider the state $ρ¬{XAB}$, which is simply the state~§[CCQQ] with trivial classical subsystem $Y$.
\begin{proposition}
  \label{pr:class/bounds-1}
  Let $ρ ∈ \osub{ℋ¬{XAB}}$ be classical on $X$ and let $0 ≤ ε < 1$. Then,
  \begin{align}
    \hmin*{A|B}[ρ] &≤ \hmin*{XA|B}[ρ] ≤ \hmin*{A|B}[ρ] + \log d¬{X} \tn*{and} 
      ¶[class/bounds-1/min]\\
    \hmax*{A|B}[ρ] &≤ \hmax*{XA|B}[ρ] ≤ \hmax*{A|B}[ρ] + \log d¬{X}
      ¶[class/bounds-1/max].
  \end{align}
\end{proposition}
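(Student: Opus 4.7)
The plan is to first prove both sets of inequalities in the unsmoothed case ($ε = 0$) and then lift them to smooth entropies using the minimum-distance extension (Corollary~\ref{co:pd-ext}) together with the classical smoothing of Proposition~\ref{pr:class-smooth}.

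For the unsmoothed min-entropy bounds, I would work directly with the operator inequality defining $\hmin$. Writing the classical-on-$X$ state as $ρ¬{XAB} = \sum_x p_x \proj{x}[X] ⨂ ω¬{AB}^x$, each block satisfies $p_x ω¬{AB}^x \le ρ¬{AB}$, hence $ρ¬{XAB} \le ⅈ¬X ⨂ ρ¬{AB}$. Any $σ¬B$ witnessing $ρ¬{AB} \le 2^{-λ}\,ⅈ¬A ⨂ σ¬B$ then also witnesses $ρ¬{XAB} \le 2^{-λ}\,ⅈ¬{XA} ⨂ σ¬B$, giving $\hmin{A|B}[ρ] \le \hmin{XA|B}[ρ]$. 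Conversely, taking the partial trace over $X$ on both sides of $ρ¬{XAB} \le 2^{-μ}\,ⅈ¬{XA} ⨂ σ¬B$ produces $ρ¬{AB} \le 2^{-μ}\, d¬X\, ⅈ¬A ⨂ σ¬B$, which yields $\hmin{XA|B}[ρ] \le \hmin{A|B}[ρ] + \log d¬X$.

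For the unsmoothed max-entropy bounds I would use the fidelity characterization of Definition~\ref{df:max-entropy}. The upper bound $\hmax{XA|B}[ρ] \le \hmax{A|B}[ρ] + \log d¬X$ is immediate from monotonicity of the fidelity under the partial trace over $X$ (a TP-CPM): for every $σ$, $F(ρ¬{XAB}, ⅈ¬{XA} ⨂ σ¬B) \le F(ρ¬{AB}, d¬X\, ⅈ¬A ⨂ σ¬B) = \sqrt{d¬X}\, F(ρ¬{AB}, ⅈ¬A ⨂ σ¬B)$. The reverse inequality $\hmax{A|B}[ρ] \le \hmax{XA|B}[ρ]$ is the main obstacle, since data processing does not apply here---the partial trace over $X$ is trace preserving but not sub-unital. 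To overcome this, I would evaluate the fidelity on the classical-on-$X$ state via the block-diagonal formula (Property vi.\ of Table~\ref{tb:fid}), obtaining $F(ρ¬{XAB}, ⅈ¬{XA} ⨂ σ¬B) = \sum_x \sqrt{p_x}\, F(ω^x, ⅈ¬A ⨂ σ¬B)$, and then prove the auxiliary inequality $F\!\left(\sum_x p_x ω^x, τ\right) \le \sum_x \sqrt{p_x}\, F(ω^x, τ)$ by a polar-decomposition argument: the operator $W = \sum_x \sqrt{p_x}\, \ket{x}[X'] ⨂ \sqrt{ω^x}$ satisfies $W† W = \sum_x p_x ω^x$, so $\sqrt{\sum_x p_x ω^x} = V† W$ for a partial isometry $V$, and the trace-norm triangle inequality together with $\norm{V†}[∞] \le 1$ gives $\norm{\sqrt{\sum_x p_x ω^x}\sqrt{τ}}[1] \le \norm{W\sqrt{τ}}[1] \le \sum_x \sqrt{p_x}\,\norm{\sqrt{ω^x}\sqrt{τ}}[1]$. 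Substituting the optimizer $σ$ of $\hmax{A|B}[ρ]$ on the right-hand side of $\hmax{XA|B}[ρ]$ then gives the inequality.

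To lift to the smooth setting, I pass via the notational convention of Section~\ref{sc:pd/not} and Proposition~\ref{pr:class-smooth}. For a lower bound (e.g.\ $\hmin*{A|B}[ρ] \le \hmin*{XA|B}[ρ]$), take an optimizer $\rhot¬{AB} \in \ball{ρ¬{AB}}$; the minimum-distance extension $\rhot¬{XAB} \in \ball{ρ¬{XAB}}$ inherits classicality on $X$ from $ρ$ (applying the dephasing map $ℳ¬X$ to $\rhot¬{XAB}$ cannot increase the purified distance to $ρ¬{XAB}$ since $ρ$ is $ℳ¬X$-invariant), and the unsmoothed inequality applied to $\rhot$ closes the argument. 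For an upper bound (e.g.\ $\hmin*{XA|B}[ρ] \le \hmin*{A|B}[ρ] + \log d¬X$), use Proposition~\ref{pr:class-smooth} to pick an optimizer $\rhot¬{XAB} \in \ball{ρ¬{XAB}}$ that is itself classical on $X$; its partial trace $\rhot¬{AB} = \tr[X]{\rhot¬{XAB}}$ lies in $\ball{ρ¬{AB}}$ by monotonicity of the purified distance under the partial trace (Theorem~\ref{th:pd-mono}), and the unsmoothed inequality again finishes the proof. The two max-entropy bounds lift by the same recipe with the roles of the two sides swapped: the classical-smoothing argument is used for the lower bound and the extension argument for the upper bound.
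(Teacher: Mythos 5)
Your proof is correct, and for half of the proposition it takes a genuinely different route from mine. The two ``$+\log d¬X$'' bounds are handled essentially as in my proof: block-diagonality in $X$ (equivalently, tracing out $X$ from the operator inequality) for the min-entropy, and monotonicity of the fidelity under $\trace¬X$ together with the scaling $\fidb{ρ}{d¬X τ} = √{d¬X}\fidb{ρ}{τ}$ for the max-entropy. The divergence is in the two monotonicity bounds $\hmin*{A|B} ≤ \hmin*{XA|B}$ and $\hmax*{A|B} ≤ \hmax*{XA|B}$. I pass to the dual entropies evaluated on the coherent-classical purification and invoke Lemmas~\ref{lm:smooth/class/class-entangled-min} and~\ref{lm:smmoth/class/class-entangled-max}, whereas you stay entirely on the primal side: for the min-entropy you use the operator inequality $ρ¬{XAB} ≤ ⅈ¬X ⨂ ρ¬{AB}$, which is where classicality on $X$ enters (it fails for general states without a factor $d¬X$, cf.\ Lemma~\ref{lm:op-bound}), and for the max-entropy you combine the block-diagonal fidelity formula (Property vi.\ of Table~\ref{tb:fid}) with the pseudo-concavity inequality $\fidb{∑_x p_x ω^x}{τ} ≤ ∑_x √{p_x}\,\fidb{ω^x}{τ}$, whose polar-decomposition proof is sound ($W†W = ∑_x p_x ω^x$, then $\norm{V†W√{τ}}[1] ≤ \norm{W√{τ}}[1]$ and the trace-norm triangle inequality). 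Your version is more self-contained\,|\,it needs neither the duality relation nor the appendix lemmas on coherent classical states\,|\,while my version reuses machinery that is needed elsewhere (e.g.\ in Proposition~\ref{pr:class/bounds-2}) and keeps the min- and max-entropy arguments symmetric under duality. Your smoothing lift is also organized differently (prove everything at $ε=0$, then lift uniformly via Corollary~\ref{co:pd-ext}, Proposition~\ref{pr:class-smooth} and Theorem~\ref{th:pd-mono}) but rests on exactly the ingredients I use, and the bookkeeping of which side needs the classical optimizer and which side needs the minimum-distance extension is handled correctly.
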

\begin{proof}
  To show the first inequality of §[class/bounds-1/min], 
  we consider the dual problem for the max-entropy.
  Namely, we need to show that 
  \begin{align}
    \hmax*{XA|X'C}[ρ] ≤ \hmax*{A|CXX'}[ρ] ¶
  \end{align} 
  for the state $ρ¬{XX'AC}$ that is coherent classical between $X$ and $X'$. 
  To do this, note that there exists a $\rhot ∈ \ball{ρ}$ with 
  $\hmax*{A|CXX'}[ρ] = \hmax{A|CXX'}[\rhot]$ that has support on 
  $\supp{\rhoA} ⨂ \supp{\ii{ρ}{XX'C}}$ (cf.\ Proposition~\ref{pr:smoothing/support}) and, thus,
  commutes with $Π¬{XX'}$. The first inequality now follows after we apply 
  Lemma~\ref{lm:smmoth/class/class-entangled-max} to $\rhot$ and realize that 
  \begin{align}
    \hmax*{XA|X'C}[ρ] ≤ \hmax{XA|X'C}[\rhot] ≤ \hmax{A|XX'C}[\rhot] ¶\,.
  \end{align}

  Since smoothing can be restricted to states that are classical on $X$ 
  (cf.~Proposition~\ref{pr:class-smooth}), there exist states 
  $\rhob ∈ \ball{ρ}$ and $σ ∈ \onorm{ℋ¬B}$ such that
  \begin{align}
    \rhob¬{XAB} = ∑_x \bar{p}_x\, \proj{x}[X] ⨂ \taub¬{AB}^x ≤ 
      2^{-\hmin*{XA|B}[ρ]}\, ⅈ¬{XA} ⨂ σ¬B ¶\,.
  \end{align}
  Since both sides of the inequality are block diagonal in $X$, this implies that  
 \begin{align}
    &\forall_x : \bar{p}_x\, \taub¬{AB}^x ≤ 2^{-\hmin*{XA|B}[ρ]}\, ⅈ¬{A} ⨂ σ¬B ¶\\
    ⇒\ &\rhob¬{AB} = ∑_x \bar{p}_x\, \taub¬{AB}^x ≤ d¬{X}\, 
      2^{-\hmin*{XA|B}[ρ]}\, ⅈ¬{A} ⨂ σ¬B ¶\,.
  \end{align}
  This implies the second inequality of~§[class/bounds-1/min].

  The first inequality of~§[class/bounds-1/max] is equivalent to the statement
  \begin{align*}
    \hmin*{A|XX'C}[ρ] ≥ \hmin*{XA|X'C}[ρ] 
  \end{align*}
  for a state $ρ$ that is coherent classical between $X$ and $X'$. The inequality then 
  follows directly
  from Lemma~\ref{lm:smooth/class/class-entangled-min} applied to the state $\taub ∈ \ball{ρ}$
  that maximizes $\hmin*{XA|X'C}[ρ]$ and Theorem~\ref{th:pd-mono}, which 
  establishes that $\rhob = Π¬{XX'} \taub\, Π¬{XX'}$ is a candidate 
  for the maximization of $\hmin*{A|XX'C}[ρ]$.
  
  The second inequality of~§[class/bounds-1/max] is shown as follows.
  By the definition of the smooth max-entropy, there exists a state 
  $\rhot ∈ \ball{ρ}$ and a state $σ ∈ \onorm{ℋ¬{B}}$ such that
  \begin{align}
    \hmax*{A|B}[ρ] &≥ 2 \log F \big( \rhot¬{AB} , ⅈ¬A ⨂ σ¬B \big) ¶\\
    &≥ 2 \log F \big( \rhot¬{XAB} , \frac{1}{d¬{X}} ⅈ¬{XA} ⨂ σ¬B \big) 
    ≥ \hmax*{XA|B}[ρ] - \log d¬{X} ¶\,,
  \end{align}
  where $σ$ maximizes $\hmax{XA|B}[\rhot]$ and $\rhot$ minimizes $\hmax*{A|B}$.
\end{proof}

\subsection{Conditioning on Classical Information}

We investigate the maximum amount of information a classical register $X$ can contain about
a quantum state $A$. (Various relations of this type have appeared in the literature, see, 
e.g.~\cite{renesrenner10, winkler11}. The following lemma generalizes these results.)

\begin{proposition}
  \label{pr:class/bounds-2}
  Let $ρ ∈ \osub{ℋ¬{XAB}}$ be classical on $Y$ and let $0 ≤ ε < 1$. Then,
  \begin{align}
    \hmin*{A|B}[ρ] &≥ \hmin*{A|BY}[ρ] ≥ \hmin*{A|B}[ρ] - \log d¬{Y} \tn*{and} 
      ¶[class/bounds-2/min]\\
    \hmax*{A|B}[ρ] &≥ \hmax*{A|BY}[ρ] ≥ \hmax*{A|B}[ρ] - \log d¬{Y} 
      ¶[class/bounds-2/max].
  \end{align}
\end{proposition}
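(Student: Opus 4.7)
The plan is to treat the four inequalities with two templates. The two left inequalities are essentially one line: applying the generalised data processing inequality (Theorem~\ref{th:data-proc}) with the trace preserving CPM that traces out $Y$ on the conditioning side gives $\hmin*{A|YB}[ρ] ≤ \hmin*{A|B}[ρ]$ and $\hmax*{A|YB}[ρ] ≤ \hmax*{A|B}[ρ]$ simultaneously. The two right inequalities will both be proved by a smooth‑to‑non‑smooth reduction: pick an extremiser on the side without $Y$, transport it to a classical‑on‑$Y$ state in $\ball{ρ¬{YAB}}$ of the same $AB$-marginal, and then pay a factor of $d¬Y$ using the classical structure.

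For the min‑entropy right inequality, I would take $\rhob¬{AB} ∈ \ball{ρ¬{AB}}$ and $\sigmab¬B ∈ \osub{ℋ¬B}$ realising $\hmin*{A|B}[ρ]$, so that $\rhob¬{AB} ≤ 2^{-\hmin*{A|B}[ρ]}\, ⅈ¬A ⨂ \sigmab¬B$. Using Corollary~\ref{co:pd-ext} I extend $\rhob¬{AB}$ to a state in $\ball{ρ¬{YAB}}$; since $ρ¬{YAB}$ is invariant under the measurement $\sM¬Y$ in the classical basis of $Y$, applying $\sM¬Y$ to this extension and invoking monotonicity of the purified distance (Theorem~\ref{th:pd-mono}) produces an $\rhob¬{YAB} = ∑_y q_y \proj{y}¬Y ⨂ \rhob^y¬{AB} ∈ \ball{ρ¬{YAB}}$ with unchanged $AB$-marginal $\rhob¬{AB}$. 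The block‑diagonal structure now gives $\rhob¬{YAB} ≤ ⅈ¬Y ⨂ \rhob¬{AB}$, since each diagonal block $q_y \rhob^y$ is dominated by $\rhob¬{AB} = ∑_y q_y \rhob^y$. Combining the two operator inequalities yields $\rhob¬{YAB} ≤ 2^{-\hmin*{A|B}[ρ]}\, d¬Y\, ⅈ¬A ⨂ (π¬Y ⨂ \sigmab¬B)$, so $\hmin*{A|YB}[ρ] ≥ \hmin{A|YB}[\rhob] ≥ \hmin*{A|B}[ρ] - \log d¬Y$.

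For the max‑entropy right inequality I would argue on the fidelity side. Let $\rhot¬{YAB} ∈ \ball{ρ¬{YAB}}$ achieve $\hmax*{A|YB}[ρ]$; by Classical Smoothing (Proposition~\ref{pr:class-smooth}) I may take it classical on $Y$, writing $\rhot = ∑_y \tilde p_y \proj{y}¬Y ⨂ \rhot^y¬{AB}$. Its $AB$-marginal $\rhob¬{AB} := \rhot¬{AB}$ lies in $\ball{ρ¬{AB}}$ by monotonicity of $P$ under the partial trace. For any sub‑normalised $σ¬B$, the Rotfel'd–type subadditivity $\tr\sqrt{∑_y N^y} ≤ ∑_y \tr\sqrt{N^y}$ for $N^y ≥ 0$ (a direct consequence of the operator concavity of $t ↦ √t$ discussed in Section~\ref{sc:opmono}, together with $\sqrt 0 = 0$) applied to $N^y = \tilde p_y \sqrt{ⅈ¬A ⨂ σ¬B}\, \rhot^y \sqrt{ⅈ¬A ⨂ σ¬B}$ yields $F(\rhob¬{AB}, ⅈ¬A ⨂ σ¬B) ≤ ∑_y \sqrt{\tilde p_y}\, F(\rhot^y, ⅈ¬A ⨂ σ¬B)$. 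Squaring and applying Cauchy–Schwarz bounds this by $d¬Y ∑_y \tilde p_y F^2(\rhot^y, ⅈ¬A ⨂ σ¬B) ≤ d¬Y ∑_y \tilde p_y 2^{\hmax{A|B}[\rhot^y]}$; Proposition~\ref{pr:classical-side-info} identifies the last sum with $d¬Y\, 2^{\hmax{A|YB}[\rhot]}$. Maximising over $σ$ and taking the logarithm gives $\hmax{A|B}[\rhob] ≤ \hmax{A|YB}[\rhot] + \log d¬Y$, and hence $\hmax*{A|B}[ρ] ≤ \hmax*{A|YB}[ρ] + \log d¬Y$.

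The real work sits in the max‑entropy direction: the one‑line operator‑order trick that settles the min‑entropy has no direct fidelity analogue, which is why the argument must pass through a subadditivity estimate for $\tr\sqrt{\cdot}$ combined with Cauchy–Schwarz. The classical‑on‑$Y$ hypothesis cannot be dropped either — already at the non‑smooth level the inequality fails when we take $ρ¬{YA} = γ¬{YA}$ maximally entangled with $B$ trivial, where $\hmax{A}[π¬A] - \hmax{A|Y}[γ] = 2\log d¬A$ exceeds $\log d¬Y$ whenever $d¬A = d¬Y ≥ 2$. Classical Smoothing (Proposition~\ref{pr:class-smooth}) is precisely what allows us to import this hypothesis into the smoothed setting for free.
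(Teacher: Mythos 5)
Your proof is correct, but both substantive (right-hand) inequalities are obtained by a route genuinely different from the paper's. For the min-entropy, the paper first shows $\hmin*{A|BY}[ρ] ≥ \hmin*{AY|B}[ρ] - \log d¬Y$ by rewriting $ⅈ¬{AY} ⨂ σ¬B$ as $d¬Y\, ⅈ¬A ⨂ π¬Y ⨂ σ¬B$, and then invokes $\hmin*{AY|B}[ρ] ≥ \hmin*{A|B}[ρ]$ from Proposition~\ref{pr:class/bounds-1}; your block-domination $\rhob¬{YAB} ≤ ⅈ¬Y ⨂ \rhob¬{AB}$ reaches the same operator inequality in one step and is more self-contained, since it does not lean on Proposition~\ref{pr:class/bounds-1}, whose own proof already runs through duality and the coherent-classical lemmas. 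For the max-entropy, the paper dualizes: it works with the coherent-classical purification $ρ¬{ABCYY'}$, bounds an arbitrary extension using Lemma~\ref{lm:op-bound}, and then applies Lemma~\ref{lm:smooth/class/class-entangled-min} and Theorem~\ref{th:pd-mono}. You instead stay on the primal fidelity side, which is arguably more transparent and avoids the appendix machinery entirely; the price is the trace inequality $\trace √{∑_y N^y} ≤ ∑_y \trace √{N^y}$. That inequality is true, but your justification is too quick: it is \emph{not} a direct consequence of operator concavity of the square root (indeed $√{A+B} ≤ √{A} + √{B}$ fails as an operator inequality), and it is not established anywhere in the thesis. It does follow from a short Rotfel'd-type argument — split $\trace√{A+B}$ as $\tr<b>{A\,(A+B)^{-½}} + \tr<b>{B\,(A+B)^{-½}}$ and use the operator anti-monotonicity of $t ↦ t^{-½}$ — or from a citation to the Rotfel'd/Bourin--Uchiyama trace inequality; you should supply one or the other. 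With that step in place, your Cauchy--Schwarz estimate together with Proposition~\ref{pr:classical-side-info} closes the argument correctly, and your side remark that classicality of $Y$ is necessary (the maximally entangled counterexample) is also accurate.
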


\begin{proof}
  The first inequalities of both~§[class/bounds-2/min] and~§[class/bounds-2/max] directly follow
  from data processing (cf.\ Theorem~\ref{th:data-proc}) applied to the TP-CPM $\trace¬{Y}$.
  
  Let $\rhot ∈ \ball{ρ¬{ABY}}$ be a state that achieves the maximal min-entropy for
  $\hmin*{AY|B}[ρ] = \hmin{AY|B}[\rhot] = λ$. 
  %This state can be chosen classical on $Y$ 
  %due to Theorem~\ref{th:class-smooth}. 
  In particular, there exists a state $σ ∈ \osub{ℋ¬B}$ s.t.\
  \begin{align}
    %\rhot¬{ABY} = \sum_y \proj{y} ⨂ \taut¬{AB}^y ≤ 2^{-λ} ⅈ¬{AY} ⨂ σ¬B ¶.
    \rhot¬{ABY} ≤ 2^{-λ} ⅈ¬{AY} ⨂ σ¬B = d¬Y\, 2^{-λ} ⅈ¬A ⨂ π¬Y ⨂ σ¬B¶.
  \end{align}
  This implies that $\hmin*{A|BY}[ρ] ≥ \hmin*{AY|B}[ρ] - \log d¬Y$. The second
  inequality in~§[class/bounds-2/min] then directly follows from 
  Proposition~\ref{pr:class/bounds-1}.
  
  To establish the second inequality in~§[class/bounds-2/max], we prove the dual property, i.e.\
  for the purification $ρ¬{ABCYY'}$ that is coherent classical between $Y$ and $Y'$, we show
  that $\hmin*{A|CY'}[ρ] ≤ \hmin*{A|CYY'}[ρ] + \log d¬Y$. Let $\rhot ∈ \ball{ρ¬{ACY'}}$ 
  and $σ ∈ \osub{ℋ¬{CY'}}$ be such that
  \begin{align}
    \rhot¬{ACY'} ≤ 2^{-λ} ⅈ¬A ⨂ σ¬{CY'}, \tn*{where} λ = \hmin*{A|CY'}[ρ] ¶.
  \end{align}
  Lemma~\ref{lm:op-bound} implies that any extension $\rhot¬{ACYY'} ∈ \ball{ρ}$ satisfies 
  \begin{align}
    \rhot¬{ACYY'} ≤ d¬Y\, \rhot¬{ACY'} ⨂ ⅈ¬{Y} ≤ 2^{-λ} d¬Y ⅈ¬{YA} ⨂ σ¬{CY'} ¶.
  \end{align}
  Hence, we find the following inequality
  \begin{align}
    \hmin{AY|CY'}[\rhot] ≥ \hmin*{A|CY'}[ρ] - \log d¬Y ¶.
  \end{align}
  An application of Lemma~\ref{lm:smooth/class/class-entangled-min} to the min-entropy 
  on the lhs.\ establishes
  \begin{align}
    \hmin{A|CYY'}[\taut] ≥ \hmin*{A|CY'}[ρ] - \log d¬Y, \tn*{where} 
      \taut = Π¬{YY'} ρ Π¬{YY'} ¶
  \end{align}
  is coherent classical between $Y$ and $Y'$. Since $ρ$ has the same property, 
  Theorem~\ref{th:pd-mono} implies that $\taut ∈ \ball{ρ}$, which concludes the proof.
\end{proof}

\subsection{Functions on Classical Registers}

Let us again consider the state $ρ¬{AXBY}$ of~§[CCQQ]. On the one hand, 
applying a function $f: Y → Z$ on the register $Y$ is a special case of data processing 
(cf.\ Theorem~\ref{th:data-proc}) and we immediately find
\begin{align}
  \hmin*{A|BY}[ρ] ≤ \hmin*{A|BZ}[ρ] \tn*{and} \hmax*{A|BY}[ρ] ≤ \hmax*{A|BZ}[ρ] ¶.
\end{align}

On the other hand, Proposition~\ref{pr:class/bounds-2} can be used to show that applying
a function on the register $X$ cannot increase the smooth entropies. (Note that for
the min-entropy, using its interpretation as a ☼{guessing probability}, 
this corresponds to the intuitive statement that it is
always at least as hard to guess the input of a function than to guess its output.)

\begin{proposition}
  \label{pr:func}
  Let $ρ¬{XAB} = ∑ p_x\, \proj{x} ⨂ ω¬{AB}^x$ with $ω¬{AB}^x ∈ \osub{ℋ¬{AB}}$ 
  be classical on $X$. Furthermore, let $0 ≤ ε < 1$ and let $f: X → Z$ be
  a function. Then, the state $τ¬{ZAB} = ∑ p_x\, \proj{f(x)}[Z] ⨂ ω¬{AB}^x$ satisfies
  \begin{align}
    \hmin*{ZA|B}[τ] ≤ \hmin*{XA|B}[ρ] \tn*{and} \hmax*{ZA|B}[τ] ≤ \hmax*{XA|B}[ρ] ¶.
  \end{align}
\end{proposition}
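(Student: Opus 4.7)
The plan is to reduce Proposition~\ref{pr:func} to two results already established: invariance of the smooth entropies under isometries (Proposition~\ref{pr:smooth-iso}) and the bound showing that adjoining a classical register cannot decrease the smooth entropies (Proposition~\ref{pr:class/bounds-1}). The naive route\,|\,applying the generalized data processing inequality (Theorem~\ref{th:data-proc}) to the TP-CPM $\cE(\cdot) = \sum_x \ket{f(x)}\!\bra{x}\cdot\ket{x}\!\bra{f(x)}$ on the $X$ system\,|\,fails in general, because $\cE$ is not sub-unital when $f$ is not injective ($\cE[\iden_X] = \sum_x \proj{f(x)}$ can exceed $\iden_Z$). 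The trick is to introduce the joint state that carries both $X$ and $Z = f(X)$ coherently, and to process it using tools that do not require sub-unitality.

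Concretely, I would introduce the auxiliary state
\begin{align}
  σ¬{XZAB} := ∑_x p_x\, \proj{x}[X] ⨂ \proj{f(x)}[Z] ⨂ ω¬{AB}^x ¶,
\end{align}
which is classical on both $X$ and $Z$. By inspection its marginals satisfy $σ¬{XAB} = ρ¬{XAB}$ and $σ¬{ZAB} = τ¬{ZAB}$. The crucial observation is that $σ¬{XZAB}$ is the image of $ρ¬{XAB}$ under the isometry $V\!:\cH_X → \cH_X ⨂ \cH_Z$, $V\ket{x} = \ket{x} ⨂ \ket{f(x)}$, acting as $(V ⨂ \iden¬{AB})$. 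Since $V$ embeds $\cH_{XA}$ into $\cH_{XZA}$ on the ``$A$-side'' of the conditional entropy, Proposition~\ref{pr:smooth-iso} gives
\begin{align}
  \hmin*{XA|B}[ρ] = \hmin*{XZA|B}[σ] \tn*{and} \hmax*{XA|B}[ρ] = \hmax*{XZA|B}[σ] ¶.
\end{align}

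Then I would apply Proposition~\ref{pr:class/bounds-1} to $σ$, treating $X$ as the classical register and $ZA$ as the quantum part (this is legitimate because $σ$ is classical on $X$). The first inequalities there yield
\begin{align}
  \hmin*{ZA|B}[σ] ≤ \hmin*{XZA|B}[σ] \tn*{and} \hmax*{ZA|B}[σ] ≤ \hmax*{XZA|B}[σ] ¶.
\end{align}
Chaining these with the isometric identifications above and with $σ¬{ZAB} = τ¬{ZAB}$ finishes the proof:
\begin{align}
  \hmin*{ZA|B}[τ] = \hmin*{ZA|B}[σ] ≤ \hmin*{XZA|B}[σ] = \hmin*{XA|B}[ρ] ¶,
\end{align}
and analogously for the max-entropy.

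The only real subtlety\,|\,and the step I expect to be the main obstacle if anything is\,|\,is noticing that direct data processing on the $X$ system does not apply, and that the correct route is to ``copy first, then forget'' rather than ``apply $f$ directly''. Once the auxiliary state $σ¬{XZAB}$ is in play, each step is a one-line appeal to a previously established property, and no estimate depends on whether $f$ is injective.
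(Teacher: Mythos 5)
Your proof is correct and follows essentially the same route as the paper: the paper's proof uses the isometry $U\colon \ket{x} \mapsto \ket{x}\otimes\ket{f(x)}$ (a Stinespring dilation of $f$), invokes invariance under isometries (Proposition~\ref{pr:smooth-iso}) to identify $\hmin*{XA|B}[\rho]$ with $\hmin*{XZA|B}$ of the extended state, and then drops the original register using the inequality $\hmin*{ZA|B} \le \hmin*{XZA|B}$. The only cosmetic difference is the citation for that last step -- the paper points to Proposition~\ref{pr:class/bounds-2}, while the inequality actually used is the first inequality of Proposition~\ref{pr:class/bounds-1}, which is what you cite -- and your remark on why direct data processing via a non-sub-unital map fails is a correct observation that the paper leaves implicit.
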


\begin{proof}
  A possible ☼{Stinespring dilation} of $f$ is given by the isometry 
  $U: \ket{x}[X] ↦ \ket{x}[X'] ⨂ \ket{f(x)}[Z]$ followed by a partial trace over $X'$.
  Applying $U$ on $ρ¬{XAB}$, we get
  \begin{align}
    τ¬{XZAB} := U ρ¬{XAB} U† = ∑ p_x\, \proj{x}[X'] ⨂ \proj{f(x)}[Y] ⨂ ω¬{AB}^x ¶
  \end{align}
  which is classical on $X'$ and $Z$ and an extension of $τ¬{ZAB}$. Hence, the invariance under
  isometries of the smooth entropies (cf.\ Proposition~\ref{pr:smooth-iso}) in conjunction with 
  Proposition~\ref{pr:class/bounds-2} implies
  \begin{align}
    \hmin*{XA|B}[ρ] = \hmin*{XZA|B}[τ] ≥ \hmin*{ZA|B}[τ] ¶.
  \end{align}
  An equal argument for the smooth max-entropy concludes the proof.
\end{proof}

\section{Chain Rules}
\label{sc:chain}

The ☼{chain rule} for the von Neumann (or Shannon) entropy is the relation $\hvn{AB|C} = \hvn{A|BC} + \hvn{B|C}$. To see that the von Neumann entropy satisfies this relation, simply substitute the definition of the conditional entropy, i.e.\ $\hvn{A|B} = \hvn{AB} - \hvn{B}$.
For the smooth min- and max-entropies, this rule does not hold with equality. 

Instead, we give a collection of inequalities that replace the chain rule for the smooth min- and max-entropies. These were recently derived in~\cite{vitanov11,vitanov12}. (See 
also~\cite{tomamichel10,berta10} for preliminary results.)
These chain rules generalize some of the above results for classical registers to general quantum systems; however, they introduce an additional, necessary smoothing parameter, $ε$, and correction terms in $\log (2/ε^2)$ that do not appear in the results of the previous section. This is due to the fact that, previously, we gave bounds in terms of the dimension of the classical register instead of its max-entropy. This simplifies the analysis and is sufficient in many applications involving classical information.

The chain rules provided here should thus be considered complementary 
the other results of this chapter.
\begin{theorem}
  \label{th:chain-rules}
  Let $ε > 0$, $ε' ≥ 0$, $ε'' ≥ 0$ and $ρ ∈ \osub{ℋ¬{ABC}}$. Then,
  the following chain rules hold
  \begin{align}
    \hmin[ε+2ε'+ε'']{AB|C}[ρ] &≥ \hmin[ε']{A|BC}[ρ] + \hmin[ε'']{B|C}[ρ] - \log \frac{2}{ε^2} ¶[cr/min-ab],\\
    \hmax[ε+ε'+2ε'']{AB|C}[ρ] &≤ \hmax[ε']{A|BC}[ρ] + \hmax[ε'']{B|C}[ρ] + \log \frac{2}{ε^2} ¶[cr/max-ab],\\
    \hmin[ε+3ε'+2ε'']{A|BC}[ρ] &≥ \hmin[ε']{AB|C}[ρ] - \hmax[ε'']{B|C}[ρ] - 2 \log \frac{2}{ε^2} ¶[cr/min-a],\\
    \hmax[2ε+ε'+2ε'']{A|BC}[ρ] &≤ \hmax[ε']{AB|C}[ρ] - \hmin[ε'']{B|C}[ρ] + 3 \log \frac{2}{ε^2} ¶[cr/max-a],\\
    \hmin[2ε+ε'+2ε'']{B|C}[ρ] &≥ \hmin[ε']{AB|C}[ρ] - \hmax[ε'']{A|BC}[ρ] - 3 \log \frac{2}{ε^2} ¶[cr/min-b],\\
    \hmax[ε+3ε'+2ε'']{B|C}[ρ] &≤ \hmax[ε']{AB|C}[ρ] - \hmin[ε'']{A|BC}[ρ] + 2 \log \frac{2}{ε^2} ¶[cr/max-b].
  \end{align}
\end{theorem}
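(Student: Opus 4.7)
The plan is to first reduce the six inequalities to three via the duality relation of Theorem~\ref{th:smooth-dual}. Purifying $\rho$ on a fourth system $D$ and using $\hmin*{X|Y}=-\hmax*{X|Z}$ (with $Z$ the complement of $XY$ in the pure state), a direct bookkeeping shows that (cr/min-ab), (cr/min-a), (cr/min-b) are equivalent, after the symmetry $A\leftrightarrow B$ and the exchange $\epsilon'\leftrightarrow\epsilon''$, to (cr/max-ab), (cr/max-b), (cr/max-a) respectively. For instance, dualising (cr/min-ab) gives $\hmax[\epsilon+2\epsilon'+\epsilon'']{AB|C}\leq\hmax[\epsilon']{A|C}+\hmax[\epsilon'']{B|AC}+\log(2/\epsilon^2)$, which matches (cr/max-ab) after relabelling $A\leftrightarrow B$ and $\epsilon'\leftrightarrow\epsilon''$. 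Since every state admits a purification and the smooth entropies are invariant under isometries (Proposition~\ref{pr:smooth-iso}), it suffices to prove the three min-entropy inequalities.

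For (cr/min-ab), I would start from the operator characterisation of the smooth min-entropy. Setting $\lambda=\hmin[\epsilon']{A|BC}[\rho]$ and $\mu=\hmin[\epsilon'']{B|C}[\rho]$, there exist $\tilde{\rho}\in\ball[\epsilon']{\rho_{ABC}}$ and $\sigma_{BC}$ with $\tilde{\rho}_{ABC}\leq 2^{-\lambda}\,ⅈ_A\otimes\sigma_{BC}$, as well as $\hat{\rho}\in\ball[\epsilon'']{\rho_{BC}}$ and $\tau_C$ with $\hat{\rho}_{BC}\leq 2^{-\mu}\,ⅈ_B\otimes\tau_C$. If the reference state $\sigma_{BC}$ in the first inequality coincided with (a multiple of) $\hat{\rho}_{BC}$, composition would yield $\tilde{\rho}_{ABC}\leq 2^{-\lambda-\mu}\,ⅈ_{AB}\otimes\tau_C$ at once. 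The bridging step replaces the a priori unrelated $\sigma_{BC}$ by a scaled version of $\hat{\rho}_{BC}$: on the spectral subspace of $\hat{\rho}_{BC}^{-1/2}\sigma_{BC}\hat{\rho}_{BC}^{-1/2}$ where this operator is bounded by $2/\epsilon^2$, one has $\sigma_{BC}\leq(2/\epsilon^2)\,\hat{\rho}_{BC}$; by Markov's inequality the complementary subspace carries $\hat{\rho}_{BC}$-weight at most $\epsilon^2/2$, hence $\tilde{\rho}_{ABC}$-weight at most $\epsilon^2/2$, and projecting it out changes the purified distance by at most $\epsilon$ via Lemma~\ref{lm:pd-proj}. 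Combined with the triangle inequality, this explains the extra smoothing $\epsilon$ and the correction $\log(2/\epsilon^2)$ appearing on the LHS.

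For (cr/min-a) and (cr/min-b) the same bridging idea is applied but with different initial decompositions. For (cr/min-a), $\hmin{A|BC}\geq\hmin{AB|C}-\hmax{B|C}$, I would convert the max-entropy bound $\hmax[\epsilon'']{B|C}[\rho]\leq\nu$ into an operator inequality. Definition~\ref{df:max-entropy} together with Uhlmann's theorem (Theorem~\ref{th:pd-uhl}) provides $\bar{\rho}\in\ball[\epsilon'']{\rho_{BC}}$ and $\omega_C$ with $F^2(\bar{\rho}_{BC},ⅈ_B\otimes\omega_C)\geq 2^{-\nu}$; sandwiching by $\omega_C^{-1/2}$ and again restricting to the large-weight spectral subspace where the resulting operator is bounded yields $ⅈ_B\otimes\omega_C\leq 2^{\nu}\,K\otimes\bar{\rho}_{BC}$ (up to a substate projection) for some positive $K_B$ that can be absorbed into a re-scaling of the min-entropy bound. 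Inserting this into $\tilde{\rho}_{ABC}\leq 2^{-\hmin[\epsilon']{AB|C}}\,ⅈ_{AB}\otimes\sigma_C$ (after matching $\sigma_C$ to $\omega_C$ via a second bridging, which costs another $\log(2/\epsilon^2)$) produces a bound of the form $\hat{\rho}_{ABC}\leq 2^{-\hmin{AB|C}+\hmax{B|C}+2\log(2/\epsilon^2)}\,ⅈ_A\otimes\bar{\rho}_{BC}$, which is exactly the desired min-entropy statement for $A|BC$. Two bridging steps and triangle inequalities applied to $\rho,\tilde{\rho},\hat{\rho},\bar{\rho}$ account for the $2\log(2/\epsilon^2)$ correction and the $\epsilon+3\epsilon'+2\epsilon''$ smoothing on the LHS.

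The main obstacle throughout is the bridging step itself: the reference states appearing in the two separate optimisations defining the two entropies on the RHS of each chain rule are a priori uncoupled, and no purely operator-algebraic manipulation can chain them losslessly. The cleanest coupling is a substate/gentle-measurement argument whose natural cost is a $\log(2/\epsilon^2)$ additive term in the entropies together with an $O(\epsilon)$ shift in purified distance; this is precisely what dictates the form of the smoothing parameters and the logarithmic corrections in each of the six chain rules. Verifying that the precise combinatorics of the coefficients $1,2,3$ on the smoothing parameters and the factors $1,2,3$ on the $\log(2/\epsilon^2)$ corrections match the statement is a careful but routine bookkeeping exercise once the bridging is in place.
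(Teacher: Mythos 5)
The paper does not actually prove this theorem: its ``proof'' consists of the single observation that the three max-entropy inequalities follow from the three min-entropy ones by the duality of Theorem~\ref{th:smooth-dual}, together with a citation to the literature for the min-entropy inequalities themselves. Your duality reduction reproduces that observation correctly (the pairing min-ab $\to$ max-ab, min-a $\to$ max-b, min-b $\to$ max-a and the relabelling of systems and smoothing parameters all check out), so on that part you match the paper. Everything beyond it is your own attempt at the hard content the paper outsources, and that attempt has genuine gaps.

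For the first min-entropy chain rule your bridging idea (restrict to the spectral subspace where $\hat{\rho}_{BC}^{-1/2}\sigma_{BC}\hat{\rho}_{BC}^{-1/2}\le 2/\epsilon^2$, bound the complementary weight by Markov, project) is the right one, but the step ``the complementary subspace carries $\hat{\rho}_{BC}$-weight at most $\epsilon^2/2$, hence $\tilde{\rho}_{ABC}$-weight at most $\epsilon^2/2$'' does not follow: $\tilde{\rho}$ and $\hat{\rho}$ are two \emph{different} smoothed states, each close to $\rho$ but not to one another on the nose, so a weight bound for $\hat{\rho}_{BC}$ on a fixed subspace transfers to $\tilde{\rho}_{BC}$ only after a trace-distance detour through $\rho_{BC}$, which introduces additional contributions in $\epsilon'$ and $\epsilon''$ that your accounting omits. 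This is repairable, but it is precisely the bookkeeping you defer, and it is not automatic that the constants of the theorem survive the repair.

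The sketches for the two chain rules with a max-entropy on the right-hand side are where the argument genuinely breaks down. First, a sign slip: by Definition~\ref{df:max-entropy} the optimal $\omega_C$ satisfies $F^2(\bar{\rho}_{BC}, 1_B\otimes\omega_C)=2^{+\hmax{B|C}[\bar{\rho}]}$, not $2^{-\nu}$. More importantly, the asserted conversion of this fidelity bound into an operator inequality of the form ``$1_B\otimes\omega_C\le 2^{\nu}\,K\otimes\bar{\rho}_{BC}$ up to a substate projection, for some positive $K_B$ that can be absorbed'' is not a proof step: as written the inequality is vacuous (any sufficiently large $K$ makes it true) unless one controls the size of $K$ and the weight lost to the projection, and establishing that control is the central technical difficulty of these two chain rules --- it is the reason the published proofs are long and the reason the correction terms grow to $2\log(2/\epsilon^2)$ and $3\log(2/\epsilon^2)$. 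As it stands, your proposal establishes the duality reduction the paper already records, but not the three min-entropy inequalities that everything reduces to.
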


\noindent For a proof, we refer to~\cite{vitanov11}. Note that Eqs.~§[cr/max-ab], §[cr/max-a] and~§[cr/max-b] follow by duality of the smooth entropies from Eqs.~§[cr/min-ab], §[cr/min-b] and~§[cr/min-a], respectively.

%............................

\chapter{The Quantum Asymptotic Equipartition Property}
\label{ch:aep}

The classical asymptotic equipartition property (AEP) is the statement that, in the limit of a large number of identical repetitions of a random experiment, the output sequence is virtually certain to come from the typical set, each member of which is almost equally likely. In this chapter, expanding on previous results in~\cite{tomamichel08}, a fully quantum generalization of this property is shown, where both the output of the experiment and side information are quantum. We give an explicit bound on the convergence, which is independent of the dimensionality of the side information. 
☼*{AEP}
☼*{asymptotic equipartition|see{AEP}}

%...........................

\section{Introduction and Related Work}
\label{se:aep/res}

One of the pivotal results of classical information theory is the ☼[asymptotic
equipartition property]{AEP} (AEP). This result justifies
the use of the Shannon entropy to characterize many information theoretic tasks.
It concerns a sequence of ☼[independent and identically distributed]{i.i.d.}
(i.i.d.) random variables $X^n = (X_1, X_2, \dots, X_n)$ and states
that~\cite{cover91}
\begin{align}
   - \frac{1}{n} \log P¬{X^n}(x^n) → \hvn{X}[\tinyP] \quad \tn{in probability} ¶.
\end{align}
Here, $P¬{X^n}(x^n) = P¬{X^n}(x_1, x_2, \dots, x_n) = P¬X(x_1) P¬X(x_2) \dots P¬X(x_n)$ denotes
the probability of the independent events $x_1, x_2, \dots, x_n$.
More precisely, the AEP states that, for any $0 < ε < 1$, any $μ > 0$ and for large enough $n$, a randomly
chosen sequence $(x_1, x_2, \dots, x_n)$ is with probability more than $1 - ε$
in a ☼{typical set} of sequences that satisfy
\begin{align}
  H(X) - μ < - \frac{1}{n} \log P¬{X^n}(x^n) < H(X) + μ ¶[AEP/it]\,.
\end{align}
This typical set of sequences of length $n$ is denoted $A^{n}_{μ}$.

The AEP can be stated equivalently in terms of entropies. Recall the
min- and max-entropies of a classical random variable $X^n$, which
satisfy
\begin{align}
  \hmin{X^n}[\tinyP] &= \hh{∞}{X^n}[\tinyP] = \min_{x^n} - \log P¬{X^n}(x^n) \tn*{and} ¶\\
  \hmax{X^n}[\tinyP] &= \hh{½}{X^n}[\tinyP] ≤ \max_{x^n\!,\, P(x^n) ≠ 0} 
     \!- \log P¬{X^n}(x^n) ¶[AEP/min-max-bounds]\,.
\end{align}

The bounds in~§[AEP/it] can now be expressed in terms of the smooth
min- and max-entropy. For this purpose, we consider the probability
distribution $Q$ which restricts $P$ to the typical set. It is
formally defined as
\begin{align}
  Q¬{X^n}(x^n) = \begin{cases}
             P¬{X^n}(x^n) / c & \tn{if} x^n ∈ A^{n}_{μ} \\ 
             0 & \tn{else}
           \end{cases}, \tn*{where}\! c = ∑_{x^n ∈ A^{n}_{μ}} P¬{X^n}(x^n) ¶\,.
\end{align}

The AEP now states that for large enough $n$, the fidelity between
the distributions $P$ and $Q$ satisfies
\begin{align}
  F(P, Q)^2 = \Big( ∑_{x^n} √{P¬{X^n}(x^n) Q¬{X^n}(x^n)} \Big)^2 = ∑_{x^n ∈ A^{n}_{μ}}
P¬{X^n}(x^n) > 1 - ε ¶\,, 
\end{align}
and, thus, the purified distance is upper bounded by $√{ε}$. Finally, the AEP in~§[AEP/it], in conjunction with~§[AEP/min-max-bounds],
implies that, for any $0 < ε < 1$, $μ > 0$ and large enough $n$,
\begin{align}
  \frac{1}{n} \hmin[√{ε}]{X^n}[\tinyP] &> \frac{1}{n} \hmin{X^n}[\tinyQ] > \hvn{X}[\tinyP] - μ
    \tn*{and} ¶\\
  \frac{1}{n} \hmax[√{ε}]{X^n}[\tinyP] &< \frac{1}{n} \hmax{X^n}[\tinyQ] < \hvn{X}[\tinyP] + μ
    ¶[AEP/ent]\,.
\end{align}

Conversely, its entropic form~§[AEP/ent] implies an AEP of the
form~§[AEP/it]. We only roughly sketch this argument here. First note that the
fact that the smooth min-entropy converges to the von Neumann entropy implies
the existence of a set $\tilde{A}^{n}$ with the property $\max \{P(x^n) : x^n ∈
\tilde{A}^{n} \} ≈ 2^{-n \hvn{X}}$ and $x^n ∈ \tilde{A}^{n}$ with probability almost
$1$. Similarly, the convergence of the max-entropy (together with the fact that
the smooth max-entropy is related to smooth Rényi entropy of order $0$) implies
the existence of another set $\bar{A}^{n}$ with $|\bar{A}^{n}| ≈ 2^{n \hvn{X}}$
and $x^n ∈ \bar{A}^{n}$ with probability almost $1$.
Since both these sets occur with probability almost $1$, it is easy to see
from the union bound that their intersection, $\tilde{A}^{n} \cap \bar{A}^{n}$, constitutes a typical
set.

\begin{figure}
  \centering
  \includegraphics{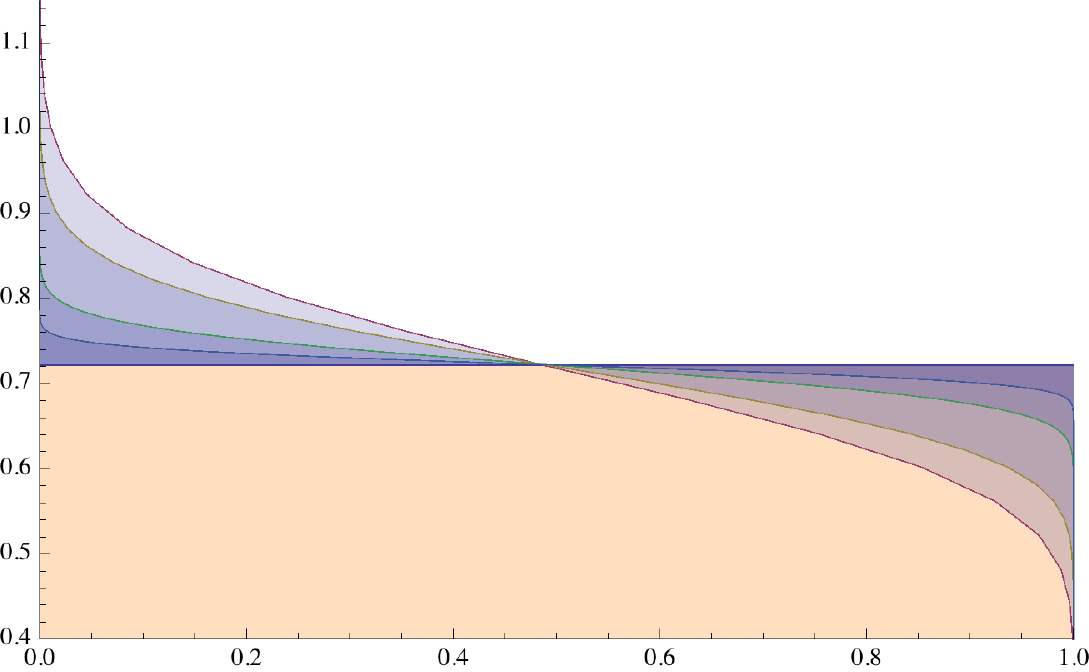}
  \caption[Emergence of the Typical Set.]{\emph{Emergence of Typical Set.}
  We consider $n$ independent Bernoulli trials with $p = 0.2$ and denote the probability that an 
  event $x^n$ (a bit string of length $n$) occurs by $P¬{X^n}(x^n)$.
  The plot shows the surprisal per round, $- \frac{1}{n} \log P¬{X^n}(x^n)$, over the cumulated 
  probability of the events. 
  The curves for $n = \{ 50, 100, 500, 2500 \}$ converge to the von Neumann entropy, 
  $\hvn{X} \approx 0.72$. This indicates that, for large $n$, most (in probability) 
  events are close to typical and have surprisal $\hvn{X}$ per round.
  
  The min-entropy, $\hmin{X} \approx 0.32$, constitutes the minimum of the curves while the max-entropy, $\hmax{X} \approx 0.85$, is upper bounded by their maximum. Moreover, the respective $ε$-smooth entropies, $\frac{1}{n} \hmin*{X^n}$ and $\frac{1}{n} \hmax*{X^n}$, can be approximately obtained by cutting off a probability $ε$ from each side of the $x$-axis and taking the minima or maxima of the remaining curve. Clearly, the $ε$-smooth entropies converge to the von Neumann entropy. ☼*{typical set}} 
  \label{fg:aep}
\end{figure}

The entropic form of the AEP explains the crucial role of the ☼[von Neumann
entropy]{entropy!von Neumann} to describe information theoretic tasks. 
While operational quantities
in information theory (such as the amount of extractable randomness, the minimal
length of compressed data and channel capacities) can naturally be expressed in
terms of smooth entropies in the ☼{one-shot} setting, the von Neumann entropy is 
recovered if we consider a large number of independent repetitions of the
task.

Moreover, the entropic approach to asymptotic equipartition lends itself to a
generalization to the quantum setting. Note that the traditional approach, 
which considers the AEP 
as a statement about (conditional) probabilities, does not have a natural quantum 
generalization due to the fact that
we do not know a suitable generalization of conditional probabilities to quantum
side information.

Here, we want to show that the
smooth conditional min- and max-entropies converge to the von Neumann entropy in 
the ☼{i.i.d.\ limit}.
Recall that the conditional von Neumann entropy of a state $ρ$ is defined as 
\begin{align}
  \hvn{A|B}[ρ] = H(ρ¬{AB}) - H(ρ¬{B}), \tn*{where} H(ρ) = - \tr{ρ \log ρ} ¶[vn].
\end{align}
This convergence can be shown in several ways, for example through the use of
chain rules for smooth entropies~\cite{beaudry11}, which reduce the problem
to the classical AEP, or through the use of typical subspaces
(cf.\ e.g.~\cite{nielsen00}). While these techniques achieve the desired
asymptotic limit, they fail to give good bounds on the convergence for finite
$n$. More precisely, we are interested in the difference between the smooth
entropies and the von Neumann entropy for finite $n$. This distance is in
general a function of $n$, $ε$ and some properties of the quantum state under
consideration.

We call such a relation a fully quantum AEP because both the $A$ and $B$ systems
are general quantum systems.

For the smooth conditional min- and max-entropy, the convergence for finite $n$ was first
analyzed by Holenstein and Renner~\cite{holensteinrenner06} for classical probability
distributions. Renner also generalized these arguments to the
quantum setting~\cite{renner05}. He shows that\footnote{Note that the
smoothing of the min-entropy was defined differently in~\cite{renner05}.},
for any $ε > 0$ and for tensor product states $ρ¬{AB}^{⨂n} = ρ¬{AB} ⨂ ρ¬{AB} ⨂
\dots ⨂ ρ¬{AB}$, it holds that
\begin{align}
  \frac{1}{n} \hmin*{A^n|B^n}[ρ] ≥ \hvn{A|B}[ρ] - \frac{δ}{√{n}} ¶[qaep/renato],
\end{align}
where $δ$ scales with the product of $\hh{0}{A}[ρ]$ and $√{\log(1/ε)}$.\footnote{Note that
$\hh{0}{A}[ρ] = d¬A$ if the state $ρ$ has full support on $ℋ¬A$; thus, this bound
depends on the dimension of the $A$ system.}

On a technical level, our results are related to recent findings in quantum hypothesis testing~\cite{audenaert07,audenaert07-3,ogawa00}.

\subsection{Main Contributions}

This chapter is based on~\cite{tomamichel08}; however, the results presented here are strictly more general as we extended certain proofs to relative entropies and a more general class of
operators. This allows us to plug the inequalities into other arguments, where (e.g.~smoothed) states are not necessarily normalized. 

The main result of this chapter
establishes that $δ$ in~§[qaep/renato] 
scales with $\hmax{A|B}$
and $\hmax{A|R}$ instead of $\hh{0}{A}$, where $R$ is a reference system
purifying $ρ¬{AB}$. These conditional entropies measure the correlations
between the subsystems $A$, $B$ and $R$ and are often much smaller than
$\hh{0}{A}$. In particular, conditional entropies do not depend on the Hilbert
space dimension of any subsystem, which is useful in the context of quantum
cryptography, where these dimensions are generally unknown.\footnote{Note also that
the appearance of the reference system in the convergence rate is not entirely
unexpected. This can be seen from the fact that~§[qaep/renato] directly implies
a bound on the max-entropy, i.e.\
\begin{align}
  \frac{1}{n} \hmax*{A^n|R^n} ≤ \hvn{A|R} + \frac{δ}{√{n}} ¶,
\end{align}
by duality of the entropies. Hence, it is no surprise that this duality also appears in
the convergence rate.}

\begin{result}
  \label{re:aep-bound}
  For any $ε > 0$ and $n$ large enough, we have
  \begin{align}
    \frac{1}{n} \hmin*{A^n|B^n} &≥ \hvn{A|B} - \frac{δ(ε,υ)}{√{n}} \tn*{and} ¶\\
    \frac{1}{n} \hmax*{A^n|B^n} &≤ \hvn{A|B} + \frac{δ(ε,υ)}{√{n}} ¶,
  \end{align}
  where $δ(ε,υ) = 4 \log υ √{\log (2/ε^2)}$ and 
  $υ = √{2^{\hmax{A|B}}} + √{2^{\hmax{A|R}}} + 1$.
\end{result}

Together with converse bounds, this implies the fully quantum asymptotic
equipartition property.

\begin{result}
  \label{re:aep}
  The smooth entropies converge to the von Neumann entropy in the asymptotic limit of many copies. 
  For any $0 < ε < 1$, we have
  \begin{align}
    \lim_{n → ∞} \frac{1}{n} \hmin*{A^n|B^n} 
      = \lim_{n → ∞} \frac{1}{n} \hmax*{A^n|B^n} = \hvn{A|B} ¶\,.
  \end{align}
\end{result}

Note, in particular, that our statement holds for any $0 < ε < 1$. This improves on
 the previous results in~\cite{tomamichel08}, where the converse bound was only
shown in the limit $ε → 0$. Additionally, we improve many bounds for the case where the smoothing parameter is close to $1$. 
This extension is important to show strong converse statements, as we will
see in Section~\ref{se:strong}.

\subsection{Outline}

The remainder of the chapter is organized as follows.
In Section~\ref{se:aep/class}, we explore the special case of classical registers without side
information and introduce the proof techniques, based on relations between the smooth min-entropy 
and ☼[Rényi entropies]{entropy!Rényi}, that will also lead to the fully 
quantum AEP. In Section~\ref{se:aep/rel}, we introduce quantum generalizations of the 
classical Rényi $α$-entropies. (See also Appendix~\ref{ap:renyi}, where some
of their properties are discussed.)
Then, in Section~\ref{se:aep/min-renyi}, we show, analogous to the classical proof, that the
smooth min-entropy can be bounded in terms of these entropies. Section~\ref{se:aep/renyi-vn}
then bounds the difference between Rényi $α$-entropies and the von Neumann entropies for
$α$ close to $1$. Finally, these results are combined in Section~\ref{se:aep/aep} to
prove the fully quantum AEP.

\section{Sketch of a Classical Proof}
\label{se:aep/class}

The relation between the traditional formulation of the AEP in terms of probabilities~\cite{cover91} and its entropic formulation was already explored in the preceding section.
To give an intuitive idea of the techniques used in the following, we first prove
a special case of the AEP for classical registers and take an additional limit 
$ε → 0$. More precisely, we show that
\begin{align}
  &\lim_{ε → 0} \lim_{n → ∞} \frac{1}{n} \hmin*{X^n} = \hvn{X} ¶\,.
\end{align}

We employ the Rényi $α$-entropies~\cite{renyi61}
\begin{align}
  \hh{α}{X} := \frac{1}{1 - α} \log ∑_{x ∈ \cX} P(x)^α, \quad α ∈ (0, 1) \cup (1, ∞) 
    ¶[aep/renyi] \,,
\end{align}
for which $H_∞$ ($α → ∞$), $H_0$ ($α → 0$) and the Shannon entropy ($α → 1$) are defined as limits. Furthermore, the entropies $H_α$ are monotonically decreasing in $α$ and,
as shown in~\cite{renner04}, the Rényi entropies with $α > 1$ are close to the smooth
min-entropy in the sense that
\begin{align}
  \hmin*{X} ≥ \hh{α}{X} - \frac{1}{α - 1} \log \frac{1}{ε}, \qquad α > 1 ¶
\end{align}
while those with $α < 1$ are close to the smooth max-entropy.
Note that the error term $\frac{1}{α - 1} \log 1/ε$ above diverges 
when we try to recover the Shannon entropy. However, in the case of an i.i.d.\ sequence we find
\begin{align}
  \frac{1}{n} \hmin*{X^n} ≥ \hh{α}{X} - \frac{1}{n(α - 1)} \log \frac{1}{ε} 
    ¶[aep/class/min-renyi],
\end{align}
where we have used $\hh{α}{X^n} = n \hh{α}{X}$. We proceed by bounding the entropy $\frac{1}{n} \hmin*{X^n}$ in the limit $n → ∞, ε → 0$ from above and below. To get the lower bound, we choose $α = 1 + 1/√{n}$ and take the limit $n → ∞$ in~§[aep/class/min-renyi].
An upper bound in this limit follows directly from the fact that $\hmin{X} ≤ \hvn{X}$ (cf.\ Proposition~\ref{pr:min-max-bounds} or~\cite{renyi61})
and the continuity of the Shannon entropy.\footnote{See, e.g.\ Fannes~\cite{fannes73,alicki03}, where the relevant continuity property is shown for (conditional) von Neumann entropies.}

A similar argument shows that the smooth max-entropy converges to the 
von Neumann entropy in the asymptotic limit.

%...........................

\section{Quantum Relative Entropies}
\label{se:aep/rel}

We prove the asymptotic equipartition property for relative entropies, which
are introduced here. Conditional entropies can be seen as 
special cases of relative entropies.

\subsection{Quasi-Entropies}

A very general class of classical relative entropies are the $f$-divergences,
originally introduced by Csiszár~\cite{csiszar72}. They have been generalized to
the quantum setting by Petz~\cite{petz86}, who calls them quasi-entropies. (See
also~\cite{ohya93,petz09,hiai10} for an overview of recent results on
these entropies.) The crucial observation is that some of the most interesting
mathematical properties of the ☼[von Neumann]{entropy!von Neumann} and ☼[Shannon]{entropy!Shannon} entropies are a direct
consequence of the (operator) concavity of the function $h: t ↦ - t \log t$ that
defines the functional $H(ρ) = \tr{h(ρ)}$. Hence, the following
generalization of this functional was investigated.

\begin{definition}[Quasi-Entropy]
  \label{df:quasi}
  Let $A, B ∈ \opos{ℋ}$ and let $f: ℝ_0^+ → ℝ$ be continuous. 
  Then, the $f$-quasi-entropy of $A$ relative to $B$ is
  \begin{align}
    S_f(A \,\|\, B) := \lim_{ξ → 0}\, \braket<B>{Γ|√{B + ξ ⅈ} ⨂ ⅈ\, f \Big( (B + ξ ⅈ)\inv 
      ⨂ Aᵀ \Big) √{B + ξ ⅈ} ⨂ ⅈ\,|Γ} ¶,
  \end{align}
  where $\ket{Γ} = \sum_i \ket{i} ⨂ \ket{i}$ and $\{ \ket{i} \}$ is an
orthonormal basis of $ℋ$, 
  with respect to which the transpose is defined.
\end{definition}
Unconditional quasi-entropies are recovered when we substitute $B = ⅈ$.
Let us consider a few examples of such functions. For this purpose, let $A, B ∈ \opos{ℋ}$ 
be two positive operators whose supports satisfy $\supp{A} \subseteq
\supp{B}$, such that the limit in Def.~\ref{df:quasi} is always finite and $B\inv$ can be interpreted as a ☼{generalized inverse}. First, using $h: t ↦ - t
\log t$ extended to $ℝ_0^+$ using $h(0) = \lim_{t → 0} h(t) = 0$, the
von Neumann ☼{relative entropy} is given as
\begin{align}
  S(A \| B) := S_h(A \| B) = \tr<b>{A (\log B - \log A)} ¶[vn/rel]\,.
\end{align}
Note that we omit a minus sign here that is present in the conventional definition of the relative entropy.
To derive this well-known expression from Def.~\ref{df:quasi}, we used that
$\braket{Γ| X ⨂ ⅈ |Γ} = \tr{X}$ and $(X ⨂ ⅈ) \ket{Γ}
= (ⅈ ⨂ Xᵀ) \ket{Γ}$ for any operator $X ∈ \olin{ℋ}$.

\subsection{Relative Rényi entropies}

As another example, we recover generalizations of the Rényi entropies of
order $α ∈ (0, 1) \cup (1, ∞)$. This is achieved using $g_α : t ↦ t^α$ and
\begin{align}
  S_α(A \| B) := \frac{1}{1-α} \log S_{g_α}(A \| B) 
    = \frac{1}{1-α} \log \tr{A^α B^{1-α}} ¶[alpha/rel]\,,
\end{align}
where, for $α > 1$, we use the generalized inverse of $B$. We may continuously extend the range of $α$ to the limits $α → 0$ and $α → ∞$. Moreover, the von Neumann relative entropy emerges in the limit $α → 1^{±}$ if $\tr{A} = 1$. Hence, we can continuously extend the range of valid parameters to $α ∈ ℝ_0^+$ by setting $S_1 \equiv S$ in this case.
Some properties of these entropies are discussed in Appendix~\ref{ap:renyi}.

\subsection{Relative Min- and Max-Entropy}

In addition to this, we will also need a relative entropy version of the min- and max-entropies. The relative min-entropy was introduced by Datta~\cite{datta08}\footnote{There, it appears under the name $D_{\max} \equiv -S_{\min}$.} and is directly related to the conditional min-entropy defined in~\cite{renner05}.
The relative min- and max-entropies of $A ∈ \opos{ℋ}$ relative to $B ∈ \opos{ℋ}$ are given by
\begin{align}
  S_{\min}(A \,\| B) &:= \sup \big\{ λ : A ≤ 2^{-λ} B \big\} \tn*{and} ¶[min/rel]\\
  S_{\max}(A \,\| B) &:= \log \fidb{A}{B}^2 ¶.
\end{align}

We also need a smoothed version of the relative min-entropy, which we define for any $ρ ∈ \osub{ℋ}$ and $σ ∈ \opos{ℋ}$.
\begin{align}
  S_{\min}^{ε}(ρ \,\| σ) := \max_{\tilde{ρ} ∈ \ball{ρ}} S_{\min}(\tilde{ρ} \,\| σ) 
    ¶[smooth/rel]\,.
\end{align}

Clearly, the (smooth) min- and the max-entropy (as defined in Chapter~\ref{ch:entropies}) of $A$ conditioned on $B$, $\hmin*{A|B}[ρ]$ and $\hmax{A|B}[ρ]$, can be recovered by the substitution $A = ρ¬{AB}$ and $B = ⅈ¬A ⨂ σ¬{B}$, where $σ¬{B}$ is maximized over $\onorm{ℋ¬B}$.

%...........................

\section{Lower Bounds on the Smooth Min-Entropy}
\label{se:aep/min-renyi}
☼*{entropy!smooth}

The following lemma gives a first lower bound on the smooth relative entropy~§[smooth/rel]. (A similar, less general lemma can be found in~\cite{dattarenner08}.)

\begin{lemma}
  \label{lm:aep/smooth-bound}
  Let $ρ ∈ \osub{ℋ}, σ ∈ \opos{ℋ}$ and $λ ≥ S_{\min}(ρ \,\| σ)$. Then,
  \begin{align}
    S_{\min}^{ε}( ρ \,\| σ) ≥ λ \,, \!\!\tn*{where} \!\! ε = √{2 \tr{Δ} -\tr{Δ}^2} \
      \tn{and}\ Δ = \{ ρ - 2^{-λ} σ \}_+ ¶.
  \end{align}
\end{lemma}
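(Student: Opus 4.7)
The plan is to exhibit an explicit $\tilde ρ ∈ \osub{ℋ}$ satisfying both $\tilde ρ ≤ 2^{-λ}σ$ and $P(ρ,\tilde ρ) ≤ ε$. The operator inequality forces $S_{\min}(\tilde ρ \,\|\, σ) ≥ λ$ by~§[min/rel], and then $\tilde ρ$ is a feasible point in the optimisation defining $S_{\min}^ε(ρ\,\|\,σ)$ in~§[smooth/rel], which immediately gives $S_{\min}^ε(ρ\,\|\,σ) ≥ λ$. The first step is to rewrite $ε^2 = 2\tr{Δ} - \tr{Δ}^2 = 1 - (1-\tr{Δ})^2$; via Lemma~\ref{lm:alt-gfid} this turns the purified-distance requirement into the generalised-fidelity bound $F(ρ,\tilde ρ) ≥ 1 - \tr{Δ}$, which is the only inequality I actually have to produce.

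The natural candidate is $\tilde ρ := ρ - Δ$. Writing $M := ρ - 2^{-λ}σ$ and decomposing $M = \{M\}_+ + \{M\}_-$, one has $\tilde ρ = 2^{-λ}σ + \{M\}_-$; since $\{M\}_- ≤ 0$, the operator bound $\tilde ρ ≤ 2^{-λ}σ$ is immediate. Moreover $\tilde ρ ≤ ρ$ because $Δ ≥ 0$, so $\tr{\tilde ρ} = \tr{ρ} - \tr{Δ}$. Assuming $\tilde ρ$ is positive semi-definite, operator monotonicity of the square root applied to $\sqrt{\tilde ρ}\,ρ\,\sqrt{\tilde ρ} ≥ \sqrt{\tilde ρ}\,\tilde ρ\,\sqrt{\tilde ρ} = \tilde ρ^2$ yields $F(ρ,\tilde ρ) = \tr{\sqrt{\sqrt{\tilde ρ}\,ρ\,\sqrt{\tilde ρ}}} ≥ \tr{\tilde ρ} = \tr{ρ} - \tr{Δ}$; feeding this back into Lemma~\ref{lm:alt-gfid} gives the required $F(ρ,\tilde ρ) ≥ 1 - \tr{Δ}$.

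The main obstacle is the positivity of $\tilde ρ = ρ - Δ$. In the commuting (classical) case this is automatic, since the construction collapses to the pointwise clipping $\tilde ρ_i = \min(ρ_i, 2^{-λ}σ_i) ≥ 0$. In general the spectral subspaces of $M$ need not align with those of $ρ$, and $ρ - Δ$ can fail to be positive semi-definite. To repair this I would replace $\tilde ρ$ by a manifestly positive filtered state, for instance $\tilde ρ' := \sqrt{ρ}\, R\, \sqrt{ρ}$ with $R$ a suitably clipped operator-monotone function of $ρ^{-1/2}(2^{-λ}σ)ρ^{-1/2}$ chosen so that $0 ≤ R ≤ ⅈ$ and $\tilde ρ' ≤ 2^{-λ}σ$. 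Then $\tilde ρ' ≤ ρ$ and $\tilde ρ' ≥ 0$ hold automatically, the same square-root argument produces the fidelity bound, and the orthogonal supports of $\{M\}_+$ and $\{M\}_-$ are what ensure $\tr{\tilde ρ'}$ is still large enough to reach $1 - \tr{Δ}$. Alternatively, Theorem~\ref{th:pd-uhl} lets one pass to purifications and construct $\ket{\tilde ρ} = (X ⨂ ⅈ)\ket{ρ}$ directly, with $X ρ X† ≤ 2^{-λ}σ$ and $|\tr{X ρ}| ≥ 1 - \tr{Δ}$, sidestepping positivity entirely. Either route closes the argument, and $P(ρ,\tilde ρ)^2 ≤ 1 - (1-\tr{Δ})^2 = ε^2$ then completes the proof.
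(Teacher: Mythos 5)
You have the right reduction---exhibit a single $\rhot$ with $\rhot ≤ 2^{-λ}σ$ and generalized fidelity $F(ρ,\rhot) ≥ 1 - \tr{Δ}$, so that $P(ρ,\rhot) ≤ √{1-(1-\tr{Δ})^2} = ε$---and you correctly diagnose why the naive candidate $ρ - Δ$ fails: $Δ$ is the positive part of $ρ - 2^{-λ}σ$ taken in the eigenbasis of that difference, which need not be compatible with $ρ$, so $ρ - Δ$ can acquire negative eigenvalues. But the repair is where the entire technical content of the lemma lives, and neither of your sketches supplies it. The first ($\rhot' = √{ρ}\,R\,√{ρ}$ with an unspecified ``clipped'' $R$) leaves both the choice of $R$ and the crucial trace bound $\tr{\rhot'} ≥ \tr{ρ} - \tr{Δ}$ unproved; the remark that the orthogonal supports of $\{M\}_+$ and $\{M\}_-$ ``ensure'' the trace is large enough is not an argument. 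The second sketch simply asserts the existence of an $X$ with $XρX† ≤ 2^{-λ}σ$ and $\abs{\tr{Xρ}} ≥ 1 - \tr{Δ}$---but producing such an $X$ is exactly the problem to be solved.

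The missing idea is the explicit filter $G := Λ^{½}(Λ+Δ)^{-½}$ with $Λ := 2^{-λ}σ$ (generalized inverses throughout) and $\rhot := GρG†$. Both required properties then follow from the single operator inequality $ρ ≤ Λ + Δ$, which holds by definition of $Δ$. Conjugating it by $G$ gives $GρG† ≤ G(Λ+Δ)G† = Λ$, so $S_{\min}(\rhot\,\|σ) ≥ λ$. For the closeness, multiplying $Λ ≤ Λ+Δ$ by $(Λ+Δ)^{-½}$ on both sides shows $G†G ≤ ⅈ$, i.e.\ $G$ is a contraction; Uhlmann's theorem applied to the purification $(G⨂ⅈ)\ket{ψ}$ of $\rhot$ then yields $F(\rhot,ρ) ≥ \Re\big\{\tr{Gρ}\big\} + 1 - \tr{ρ} = 1 - \tr{(ⅈ-\bar{G})ρ}$ with $\bar{G} = \frac{1}{2}(G+G†) ≤ ⅈ$, and finally $\tr{(ⅈ-\bar{G})ρ} ≤ \tr{(ⅈ-\bar{G})(Λ+Δ)} = \tr{Λ+Δ} - \tr{(Λ+Δ)^{½}Λ^{½}} ≤ \tr{Δ}$, the last step using $√{Λ+Δ} ≥ √{Λ}$ by operator monotonicity of the square root. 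Your second alternative is the right frame, but without this construction (or an equivalent one) the proof does not close.
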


\begin{proof}
  We first choose $\rhot$, bound $S_{\min}^{ε}(\rhot \,\| σ)$, and then show that 
  $\rhot ∈ \ball{ρ}$. We use the abbreviated notation $Λ := 2^{-λ} σ$ and set
  \begin{align}
    \rhot := G ρ G†, \tn*{where} G := Λ^{\frac{1}{2}} (Λ + Δ)^{-\frac{1}{2}} ¶\,,
  \end{align}
  where we use the ☼{generalized inverse}. From the definition of $Δ$, we have $ρ ≤ Λ + Δ$;
  hence, $\rhot ≤ Λ$ and $S_{\min}(\rhot \,\| σ) ≥ λ$.

  Let $\ket{ψ}$ be a purification of $ρ$, then $(G ⨂ ⅈ) \ket{ψ}$ is a purification of $\rhot$ 
  and, using Uhlmann's theorem~§[fid/uhl], we find a bound on the (generalized) fidelity:
  \begin{align}
    F(\rhot, ρ) &≥ \abs{\braket{ψ|G|ψ}} + √{(1 - \tr{ρ})(1 - \tr{\rhot})} ¶\\
      &≥ \Re \big\{ \tr{G ρ} \big\} + 1 - \tr{ρ} = 1 - \tr<b>{(ⅈ - \bar{G}) ρ} ¶\,,
  \end{align}
  where we introduced $\bar{G} := \frac{1}{2}(G + G†)$. This can be simplified further 
  after we note that $G$ is a contraction.\footnote{A ☼{contraction} $G$ is an operator with 
  operator norm $\norm{G}{} \leq 1$.} To see this, we multiply $Λ ≤ Λ + Δ$ with 
  $(Λ + Δ)^{-\frac{1}{2}}$ from left and right to get
  \begin{align}
    G† G = (Λ + Δ)^{-\frac{1}{2}} Λ (Λ + Δ)^{-\frac{1}{2}} ≤ ⅈ.
  \end{align}
  Furthermore, $\bar{G} ≤ ⅈ$, since $\norm{\bar{G}} ≤ 1$ by the triangle inequality and 
  $\norm{G} = \norm{G†} ≤ 1$. 
  %Thus, $\tr{\bar{G} ρ} ≤ 1$.  % it apperas we don't need this anymore
  Moreover,
  \begin{align}
    \tr<b>{(ⅈ - \bar{G}) ρ} &≤ \tr{Λ + Δ} - \tr<b>{\bar{G} (Λ + Δ)} ¶\\
      &= \tr{Λ + Δ} - \tr{(Λ + Δ)^{\nicefrac{1}{2}} {Λ}^{\nicefrac{1}{2}} }
      ≤ \tr{Δ} ¶\,,
  \end{align}
  where we used $ρ ≤ Λ + Δ$ and $√{Λ + Δ} ≥ √{Λ}$. The latter inequality follows from the 
  operator monotonicity of the square root function (cf.~Section~\ref{sc:opmono}). Finally, 
  using the above bounds, the purified distance between $\rhot$ and $ρ$ is bounded by
  \begin{align}
    P(\rhot, ρ) = √{1 - F^2(\rhot, ρ) \big)} ≤ √{ 1 - \big( 1 - \tr{Δ} \big)^2 } = 
      √{2 \tr{Δ} - \tr{Δ}^2} ¶\,.
  \end{align}
  Hence, we verified that $\rhot ∈ \ball{ρ}$, which concludes the proof.
\end{proof}

In particular, this means that for a fixed $ε ∈ [0, 1)$ and $\supp{ρ} \subseteq \supp{σ}$, 
we can always find a finite $\lambda$ s.t.~Lemma~\ref{lm:aep/smooth-bound} holds. 
To see this, note that $ε(λ) = √{ 2 \tr{Δ} - \tr{Δ}^2 }$
is continuous in $λ$ with $ε(S_{\min}(ρ \,\| σ)) = 0$ and $\lim_{λ → ∞} ε(λ) = 1$.

\subsubsection{Rényi Entropies and Smooth Min-Entropy}
☼*{entropy!Rényi}

Our main tool for proving the fully quantum AEP is a family of inequalities that 
relate the smooth relative min-entropy to relative Rényi entropies for $α ∈ (1, 2]$.
This family contains the von Neumann relative entropy in the limit $α → 1$.
This can be seen as a quantum generalization of the classical inequality in~\cite{renner04}.
\begin{proposition}
  \label{pr:min-renyi}
  Let $ρ ∈ \osub{ℋ}, σ ∈ \opos{ℋ}$, $0 < ε < 1$ and $α ∈ (1, 2]$. Then, 
  \begin{align}
    S_{\min}^{ε}(ρ \,\| σ) ≥ S_{α}(ρ \,\| σ) - \frac{g(ε)}{α - 1}, \!\!\tn*{where}\!\! 
      g(ε) = \log \frac{1}{1 - √{1 - ε^2}} ¶[min-renyi-bound].
  \end{align}
\end{proposition}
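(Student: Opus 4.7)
The strategy is to apply Lemma~\ref{lm:aep/smooth-bound} with a carefully chosen value of $\lambda$, then control the resulting smoothing parameter using a Rényi-type trace inequality.

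First I would set
\begin{align*}
  \lambda := S_{\alpha}(\rho \,\| \sigma) - \frac{g(\varepsilon)}{\alpha - 1}, \quad \mu := 2^{-\lambda}.
\end{align*}
If $\lambda < S_{\min}(\rho \,\| \sigma)$ the claim is immediate, so assume $\lambda \geq S_{\min}(\rho \,\| \sigma)$. Then Lemma~\ref{lm:aep/smooth-bound} yields $S_{\min}^{\varepsilon'}(\rho \,\| \sigma) \geq \lambda$ with $\varepsilon' = \sqrt{2\tr\{\Delta\} - \tr\{\Delta\}^2}$ and $\Delta = \{\rho - \mu\sigma\}_+$. Since the smooth min-entropy is monotone in the smoothing parameter, it suffices to show $\varepsilon' \leq \varepsilon$, which (by solving the resulting quadratic inequality and choosing the smaller branch) is equivalent to
\begin{align*}
  \tr\{\Delta\} \leq 1 - \sqrt{1 - \varepsilon^2} = 2^{-g(\varepsilon)}.
\end{align*}

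The main technical step is the trace inequality
\begin{align*}
  \tr\bigl\{\rho - \mu\sigma\bigr\}_+ \leq \mu^{1-\alpha}\, \tr\bigl\{\rho^{\alpha} \sigma^{1-\alpha}\bigr\} \qquad (\alpha \in (1, 2]).
\end{align*}
If we grant this, then using $(1-\alpha) S_{\alpha}(\rho \,\| \sigma) = \log \tr\{\rho^{\alpha}\sigma^{1-\alpha}\}$ together with our choice of $\lambda$, the right-hand side collapses to
\begin{align*}
  \mu^{1-\alpha}\tr\{\rho^{\alpha}\sigma^{1-\alpha}\} = 2^{(1-\alpha)(S_{\alpha}-\lambda)} = 2^{(1-\alpha)\cdot \frac{g(\varepsilon)}{\alpha-1}} = 2^{-g(\varepsilon)},
\end{align*}
which is exactly the bound on $\tr\{\Delta\}$ we need.

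The hard part is establishing the trace inequality, which is non-trivial because $\rho$ and $\sigma$ need not commute. I would approach it by letting $P$ denote the projector onto the positive eigenspace of $\rho - \mu\sigma$ (so $\tr\{\Delta\} = \tr\{P\rho\} - \mu\tr\{P\sigma\} \leq \tr\{P\rho\}$) and then combining the operator inequality $P\rho P \geq \mu\, P\sigma P$ with operator monotonicity of $t \mapsto t^{\alpha-1}$ on $\mathbb{R}^+_0$ (valid precisely because $\alpha - 1 \in (0, 1]$, which explains the upper restriction $\alpha \leq 2$), together with an Araki--Lieb--Thirring or Hölder-type rearrangement to pass from $P\rho P$, $P\sigma P$ to the unprojected operators $\rho^{\alpha}\sigma^{1-\alpha}$. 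The scalar analogue $\{a - \mu b\}_+ \leq \mu^{1-\alpha} a^{\alpha} b^{1-\alpha}$ is elementary and motivates the bound; the non-commutative lift is where the operator monotonicity of $t^{\alpha-1}$ is essential and is the step where the proof must be executed with care.
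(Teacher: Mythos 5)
Your top-level architecture is the same as the paper's: both proofs run through Lemma~\ref{lm:aep/smooth-bound}, and your bookkeeping is correct throughout --- the choice of $\lambda$, the trivial case $\lambda < S_{\min}(\rho\,\|\sigma)$, the equivalence of $\varepsilon' \le \varepsilon$ with $\trace\{\rho-\mu\sigma\}_+ \le 1-\sqrt{1-\varepsilon^2} = 2^{-g(\varepsilon)}$, and the algebra showing that the trace inequality $\trace\{\rho-\mu\sigma\}_+ \le \mu^{1-\alpha}\,\trace(\rho^{\alpha}\sigma^{1-\alpha})$ closes the argument.

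The gap is that this trace inequality is essentially the entire content of the proposition, and your sketch of it does not go through as stated. Following your outline, operator (anti)monotonicity of $t\mapsto t^{1-\alpha}$ applied to $P\rho P \ge \mu\, P\sigma P$ gets you as far as $\trace(P\rho P) \le \mu^{1-\alpha}\,\trace\bigl((P\rho P)^{\alpha}(P\sigma P)^{1-\alpha}\bigr)$, but the remaining step, $\trace\bigl((P\rho P)^{\alpha}(P\sigma P)^{1-\alpha}\bigr) \le \trace(\rho^{\alpha}\sigma^{1-\alpha})$, is not an Araki--Lieb--Thirring or H\"older rearrangement: it is a monotonicity statement for $Q_{\alpha}(A\|B)=\trace(A^{\alpha}B^{1-\alpha})$ under a compression, and that is precisely the hard non-commutative part you have deferred. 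The paper avoids ever forming matrix powers of the projected operators. It diagonalizes $X=\rho-2^{-\lambda}\sigma$, applies the scalar inequality $r_i \le r_i\,(r_i 2^{\lambda}/s_i)^{\alpha-1}$ to the diagonal entries $r_i=\braket{e_i|\rho|e_i}$ and $s_i=\braket{e_i|\sigma|e_i}$ on the positive eigenspace, and then converts $\sum_i r_i^{\alpha}s_i^{1-\alpha}$ into $\trace(\rho^{\alpha}\sigma^{1-\alpha})$ via the monotonicity of $S_{\alpha}$ under the pinching TP-CPM $\sM : X\mapsto \sum_i \proj{e_i}X\proj{e_i}$ (Lemma~\ref{lm:renyi-mono}); this is also where the restriction $\alpha\le 2$ actually enters, consistent with your diagnosis of why $\alpha-1\le 1$ matters. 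If you want to keep your formulation, the same tool closes your gap: the two-outcome pinching $X\mapsto PXP+P^{\perp}XP^{\perp}$ is trace-preserving, its $Q_{\alpha}$ splits into two nonnegative block contributions of which yours is one, and Lemma~\ref{lm:renyi-mono} bounds the sum by $\trace(\rho^{\alpha}\sigma^{1-\alpha})$. You should also dispose of the degenerate case $\supp{\rho}\not\subseteq\supp{\sigma}$ (where $S_{\alpha}$ diverges to $-\infty$ and the claim is vacuous) before manipulating generalized inverses, as the paper does in its opening case distinction.
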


\begin{proof}
  We consider two cases: (1) The $α$ entropy diverges (to $-∞$) and the inequality holds 
  trivially. (2) We have $\supp{ρ} \subseteq \supp{σ}$. In this case, we can find an isometry 
  $ℋ' → ℋ$ that maps a $σ'$ to $σ$ and $ρ'$ to $ρ$ s.t.\ $σ'$ has full support. The min- and 
  $α$-entropies are invariant under this isometry due to Proposition~\ref{pr:smooth-iso} and 
  Lemma~\ref{lm:quasi-iso}, thus, we henceforth assume that $σ$ is invertible in this proof.

  We use Lemma~\ref{lm:aep/smooth-bound} to get a first bound on $S_{\min}^{ε}$; in 
  particular, we choose $λ$ s.t.\ Lemma~\ref{lm:aep/smooth-bound} holds for $ε$.
  Next, we introduce the operator $X := ρ - 2^{-λ} σ$ with eigenbasis 
  $\{ \ket{e_i} \}_{i \in S}$. The set $S^+ \subseteq S$ contains the indices $i$ 
  corresponding to positive eigenvalues of $X$. Hence, $P^+ := ∑_{i ∈ S^+}
  \proj{e_i}$ is the projector on the positive eigenspace of $X$ and $P^+ X P^+ = Δ$ 
  as defined in Lemma~\ref{lm:aep/smooth-bound}. 
  Furthermore, let $r_i := \braket{e_i | ρ | e_i} ≥ 0$ and $s_i := \braket{e_i | σ | e_i} > 0$. 
  It follows that
  \begin{align}
    \forall\, i ∈ S^+ :\ r_i - 2^{-λ} s_i ≥ 0 \tn*{and, thus,} \frac{r_i}{s_i} 2^{λ} ≥ 1 ¶.
  \end{align}
  For any $α ∈ (1, 2]$, we bound $\tr{Δ} = 1 - √{1-ε^2}$ as follows:
  \begin{align}
    1 - √{1-ε^2} &= \tr{Δ} = ∑_{i ∈ S^+} r_i - 2^{-λ} s_i ≤ ∑_{i ∈ S^+} r_i ¶\\
    &≤ ∑_{i ∈ S^+} r_i \left( \frac{r_i}{s_i} 2^{λ} \right)^{α - 1} 
    ≤ 2^{λ (α - 1)} ∑_{i ∈ S} r_i^α\, s_i^{1 - α} ¶\,. 
  \end{align}
  Hence, taking the logarithm and dividing by $α - 1 > 0$, we get
  \begin{align}
    λ ≥ \frac{1}{1 - α} \log ∑_{i ∈ S} r_i^α\, s_i^{1 - α} - \frac{1}{α - 1} 
      \log \frac{1}{1 - √{1 - ε^2}} ¶[aep/bound1]\,.
  \end{align}

  Next, we use the monotonicity of the Rényi entropies (cf.\ Lemma~\ref{lm:renyi-mono}).
  We use the measurement TP-CPM $\sM : X ↦ ∑_{i ∈ S} \proj{e_i} X \proj{e_i}$ to obtain
  \begin{align}
     S_{α}(ρ \,\| σ) ≤ S_{α}\big( \sM(ρ) \| \sM(σ) \big) 
      = \frac{1}{1-α} \log ∑_{i ∈ S} r_i^α s_i^{1 - α} ¶.
  \end{align}
  We conclude the proof by substituting this into~§[aep/bound1] and applying the upper 
  bound on $λ$ in Lemma~\ref{lm:aep/smooth-bound}.
\end{proof}

We also note here that the term $1 - √{1 - ε^2}$ in $g(ε)$ can be bounded by simpler expressions (cf.\ Fig.~\ref{fg:smoothing-bounds}). We find $1 - √{1 - ε^2} ≥ \frac{ε^2}{2}$ using a second order Taylor expansion of the expression
around $ε = 0$ and the fact that the third derivative is non-negative. This is a very good
approximation for small $ε$.
Hence, Eq.~§[min-renyi-bound] can be simplified to
\begin{align}
    S_{\min}^{ε}(ρ \,\| σ) ≥ S_{α}(ρ \,\| σ) - \frac{1}{α - 1} 
      \log \frac{2}{ε^2} ¶[min-renyi-bound-untight]\,.
\end{align}
This form of the inequality has been reported previously~\cite{tomamichel08}. 
%However,
%for large $ε$, we prefer the bound $1 - √{1 - ε^2} ≥ 1 - √{2 (1-ε)}$, and, thus,
%\begin{align}
%    S_{\min}^{1-ε}(ρ \,\| σ) ≥ S_{α}(ρ \,\| σ) - \frac{1}{α - 1} 
%      \log \frac{1}{1 - √{2 ε}} ¶\,.
%\end{align}

\begin{figure}
  \centering %\includegraphics{gnuplot/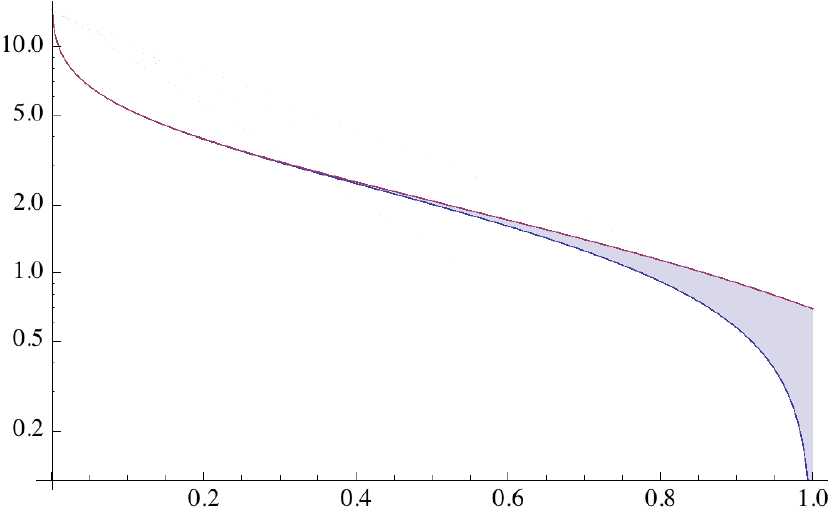}
   \includegraphics{smoothing-bounds.pdf}
  \caption[Bound on the Error Term $g(ε)$.]{\emph{Bound on $g(ε)$.} 
    Plot of the function $g(ε) = -\log \big( 1 - √{1 - ε^2} \big)$ and its upper bound,
    $\log \frac{2}{ε^2}$. The horizontal axis denotes $ε$. }
  \label{fg:smoothing-bounds}
\end{figure}

\subsubsection{Conditional Entropies}

Proposition~\ref{pr:min-renyi} is of particular interest when applied to the smooth conditional min-entropy.
In this case, let $ρ ∈ \osub{ℋ¬{AB}}$ and $σ$ be of the form $ⅈ¬A ⨂ ρ¬B$. Then,
for any $α ∈ (1, 2]$, we have
\begin{align}
  \hmin*{A|B}[ρ] &≥ S_{\min}^ε(ρ¬{AB} \| ⅈ¬A ⨂ ρ¬B) ¶\\
    &≥ S_{α}(ρ¬{AB} \| ⅈ¬A ⨂ ρ¬B) - \frac{g(ε)}{α-1} = \hh{α}{A|B}[ρ] - \frac{g(ε)}{α-1}
      ¶[aep/smooth-alpha-bound/min].
\end{align}
The duality relation for the smooth min- and max-entropies (cf.\ Theorem~\ref{th:smooth-dual}) and the
Rényi $α$-Entropies (cf.\ Lemma~\ref{lm:renyi-dual}) imply a corresponding dual relation 
for the max-entropy. 
For any $α ∈ [0, 1)$, we have
\begin{align}
  \hmax*{A|B}[ρ] &≤ \hh{α}{A|B}[ρ] - \frac{g(ε)}{1-α} ¶[aep/smooth-alpha-bound/max].
\end{align}

%...........................

\section{From Rényi to von Neumann Entropies}
\label{se:aep/renyi-vn}

We will use Proposition~\ref{pr:min-renyi} to get a lower bound on the smooth min-entropy in terms of $α$-entropies and, hence, it remains to find a lower bound on the $α$-entropies in terms of the von Neumann entropy. In turn, the bound on the convergence will depend on the smoothing parameter $ε$ and a contribution $Υ(ρ\,\|σ)$ that describes how fast the $α$-entropies converge to the von Neumann entropy.
\begin{definition}
  Let $ρ, σ ∈ \opos{ℋ}$, then we define the $α$-entropy ☼{convergence parameter},
  \begin{align}
    Υ(ρ\,\|σ) := 2^{-\frac{1}{2} S_{\nicefrac{3}{2}}( ρ \,\| σ )} + 
      2^{\frac{1}{2} S_{\nicefrac{1}{2}}( ρ \,\| σ )} + 1 ¶\,.
  \end{align}
\end{definition}

We can now state a bound on the $α$-entropies for $α$ close to $1$.
\begin{lemma}
  \label{lm:renyi-vn}
  Let $ρ ∈ \onorm{ℋ}, σ ∈ \opos{ℋ}$ and $1 < α < 1 + \frac{\log 3}{4 \log υ}$, where 
  $υ = Υ(ρ \,\| σ)$. Then, 
  \begin{align}
    S_{α}(ρ \,\| σ) > S(ρ \,\| σ) - 4 (α - 1) \big( \log υ \big)^2 ¶.
  \end{align}
\end{lemma}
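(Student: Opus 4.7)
The plan is to perform a second-order Taylor expansion of the function
\[
  f(\alpha) := \log\tr{\rho^\alpha \sigma^{1-\alpha}}
\]
about the point $\alpha = 1$, and then estimate the remainder in terms of $\upsilon$. Observe first that $S_\alpha(\rho\|\sigma) = -f(\alpha)/(\alpha-1)$ for $\alpha \ne 1$ and that $\tr{\rho} = 1$ forces $f(1) = 0$. Differentiating the operator powers $\rho^\alpha$ and $\sigma^{1-\alpha}$ under the trace and using cyclicity gives
\[
  f'(1) \;=\; \tr<b>{\rho\,(\log\rho - \log\sigma)} \;=\; -S(\rho \,\|\, \sigma),
\]
so that $S(\rho\|\sigma) = -f'(1)$. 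Taylor's theorem with Lagrange remainder on $[1,\alpha]$ then yields $f(\alpha) = -(\alpha-1) S(\rho\|\sigma) + \tfrac{1}{2}(\alpha-1)^2 f''(\xi)$ for some $\xi \in (1,\alpha)$, and dividing by $-(\alpha-1)$ rearranges into
\[
  S(\rho\|\sigma) - S_\alpha(\rho\|\sigma) \;=\; \tfrac{1}{2}(\alpha - 1)\, f''(\xi).
\]
As a sanity check, convexity of $f$ (a consequence of the Hölder-type inequality $\tr{\rho^\alpha \sigma^{1-\alpha}} \le \tr{\rho^{\alpha_0}\sigma^{1-\alpha_0}}^t \tr{\rho^{\alpha_1}\sigma^{1-\alpha_1}}^{1-t}$) gives $f''(\xi)\ge 0$, reproducing the familiar monotonicity bound $S_\alpha \le S$.

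With this reduction in hand, the lemma follows at once from the uniform estimate
\[
  f''(\xi) \;<\; 8\,(\log \upsilon)^2 \qquad \text{for every } \xi \in (1, \alpha),
\]
since then $S - S_\alpha < 4(\alpha-1)(\log\upsilon)^2$ as claimed. To prove this second-derivative bound, I would first pass to the commuting case, where $\mu_\xi(i) \propto p_i^\xi q_i^{1-\xi}$ is a tilted probability distribution and $f''(\xi) = \mathrm{Var}_{\mu_\xi}\bigl[\log(p_i/q_i)\bigr]$. The crucial point is that the two moments defining $\upsilon$ are exactly the exponential moments of $X_i = \log(p_i/q_i)$: recalling $\tr{\rho^{3/2}\sigma^{-1/2}} = 2^{-S_{3/2}/2} = E_p[e^{X/2}]$ and $\tr{\rho^{1/2}\sigma^{1/2}} = 2^{S_{1/2}/2} = E_p[e^{-X/2}]$, one has $\upsilon - 1 = E_p[e^{X/2}] + E_p[e^{-X/2}]$. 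A Chernoff argument then controls the tails under $p$ by $\Pr_p[\pm X \ge t] \le (\upsilon - 1)\, e^{-t/2}$, and the hypothesis $4(\alpha-1)\log\upsilon < \log 3$, rewritten as $\upsilon^{4(\alpha-1)} < 3$, is exactly what is needed so that passing from $p$ to the tilted measure $\mu_\xi$ distorts these tail bounds by at most a constant factor. Splitting the second moment $E_{\mu_\xi}[X^2]$ at the threshold $|X| \sim 2\log\upsilon$ and bounding the two pieces separately then delivers the quadratic dependence $(\log\upsilon)^2$ with the required constant~$8$.

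The non-commutative case would be handled along the same lines, substituting for the tilted expectation the operator-valued quantity obtained from the Duhamel representation of $\partial_\alpha \rho^\alpha = \int_0^1 \rho^{\alpha s}(\log\rho) \rho^{\alpha(1-s)}\,ds$. After two differentiations this expresses $f''(\xi)$ as a doubly-integrated trace, which Hölder's inequality for Schatten norms bounds in terms of the same two moments $\tr{\rho^{3/2}\sigma^{-1/2}}$ and $\tr{\rho^{1/2}\sigma^{1/2}}$ that appear in $\upsilon$. The \emph{main obstacle} is precisely this step: the commuting-case variance interpretation makes the $(\log\upsilon)^2$ scaling transparent via Bernstein-type reasoning, but the non-commuting analogue forces one to manage intricate operator-valued derivatives of $\rho^\alpha$ and to track the constant carefully enough that the clean ``$8$'' (and the threshold involving $\log 3$) survives.
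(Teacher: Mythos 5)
Your reduction is set up correctly: with $f(\alpha)=\log\mathrm{Tr}\,\rho^\alpha\sigma^{1-\alpha}$ one indeed has $f(1)=0$, $f'(1)=-S(\rho\,\|\,\sigma)$ in the paper's sign convention, and the Lagrange form of Taylor's theorem turns the lemma into the uniform bound $f''(\xi)<8(\log\upsilon)^2$ on $(1,\alpha)$. The identification of $\upsilon-1$ with the two exponential moments of the log-likelihood ratio is also correct. The problem is that everything after that point is a plan rather than a proof, and the part you flag as the ``main obstacle'' is exactly the part that carries the content of the lemma. In the commuting case you do not actually carry out the Chernoff/truncation argument, so it is not established that the constant $8$ (rather than some larger constant) survives, nor that the hypothesis $\alpha<1+\frac{\log 3}{4\log\upsilon}$ enters in the way you guess (controlling the tilt); and the non-commutative case, where $f''(\xi)$ involves traces of products of $\rho^\xi$, $\sigma^{1-\xi}$, $\log\rho$ and $\log\sigma$ in various orders, is left entirely open. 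A Schatten--H\"older attack on those mixed terms is not obviously going to close, because the logarithms are unbounded on the spectrum and do not factor through the two moments $\mathrm{Tr}\,\rho^{3/2}\sigma^{-1/2}$ and $\mathrm{Tr}\,\rho^{1/2}\sigma^{1/2}$ in any direct way.

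The paper sidesteps the non-commutative difficulty entirely with one device you are missing: writing $\varphi=\sqrt{\rho}\,\gamma$ for the standard purification and $X=\rho\otimes(\sigma^{-1})^{T}$, one has $\mathrm{Tr}\,\rho^{\alpha}\sigma^{1-\alpha}=\langle\varphi|X^{\alpha-1}|\varphi\rangle$, i.e.\ $f$ is the log-moment functional of a \emph{single} positive operator in a fixed vector state. The spectral measure of $X$ in $\varphi$ is a genuine classical probability distribution (with atoms $\lambda_i|\langle e_i|f_j\rangle|^2$ at the points $\lambda_i/\mu_j$), so the problem is classical from the outset and your ``commuting case'' is the general case. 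The remaining quantitative work in the paper is then done not by a variance/Chernoff computation but by elementary pointwise bounds: the remainder $r_\beta(t)=t^\beta-\beta\ln t-1$ is dominated by $s_\beta(t)=2(\cosh(\beta\ln t)-1)$, the identity $s_\beta(t)\le s_{2\beta}(\sqrt{t}+1/\sqrt{t}+1)$ converts the argument into an operator with spectrum in $[3,\infty)$, and concavity of $s_{2\beta}$ there (which is precisely where the conditions $2\beta\le\tfrac12$ and the threshold $\log 3$ are used) lets Jensen's inequality pull the expectation inside, producing exactly $\upsilon=\langle\varphi|\sqrt{X}+1/\sqrt{X}+\mathbb{1}|\varphi\rangle$ and, after a final Taylor bound on $s_{2\beta}(\upsilon)$, the stated constant. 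To complete your proof you would either need to adopt this vectorization (after which your tilted-measure analysis becomes a legitimate, if still-to-be-quantified, alternative to the paper's Jensen step) or supply the operator-valued second-derivative estimate you currently only conjecture.
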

\begin{proof}
  We assume that $σ$ is invertible in this proof. The general result then follows by the 
  arguments outlined at the beginning of the proof of Proposition~\ref{pr:min-renyi}.

  Let $\{ \ket{i} \}$ be an orthonormal basis of $ℋ$. The state $\ket{γ} := ∑_i \ket{i} 
  ⨂ \ket{i}$ is the (unnormalized) fully entangled state on $ℋ ⨂ ℋ$. We introduce a
  purification $\ket{φ} := √{ρ}\, \ket{γ}$ of $ρ$. To simplify notation, we use $β := α - 1$ 
  as well as $X := ρ ⨂ (σ\inv)^T$. 

  Let us first approximate $S_{α}$ for small $β > 0$.
  \begin{align}
    S_{α}(ρ \,\| σ) = - \frac{1}{β} \log\, \braket{φ | X^β | φ} ≥ \frac{1}{β \ln 2} 
      \big(1 - \braket{φ | X^β | φ} \big) ¶\,,
  \end{align}
  where we used $\ln x ≤ x - 1$ for all $x > 0$. We now expand the exponential 
  $t^β$ for each eigenvalue $t > 0$ of $X$ as follows: $t^β = 1 + β \ln t + r_β(t)$,
  where $r_β(t) := t^β - β \ln t - 1$. This leads to
  \begin{align}
    S_α(ρ \,\| σ) &≥ \frac{1}{β \ln 2} \big(1 - \tr{ρ} - \beta \braket{φ | \ln X | φ} - 
      \braket{φ | r_β(X) | φ} \big) ¶\\
      &≥ S(ρ \,\| σ) - \frac{1}{β \ln 2} \braket{φ | r_β(X) | φ} ¶[aep/lbound]\,.
  \end{align}
  
  To simplify this further, we note that
  \begin{align}
    r_β(t) ≤ 2 (\cosh(β \ln t) - 1) =: s_β(t) ¶\,.
  \end{align}
  It is easy to verify that $s_β$ is monotonically increasing for $t ≥ 1$ and concave in 
  $t$ for $β ≤ 1/2$ and $t ∈ [3, ∞)$. Furthermore, we have $s_β(t) = s_β(\frac{1}{t})$ and 
  $s_β(t^2) = s_{2 β}(t)$. We use this to bound\footnote{Adaptions of this step lead to 
  different bounds. Here, we are interested in a bound that can be expressed in terms of 
  $S_{\nicefrac{1}{2}}$ and $S_{\nicefrac{3}{2}}$.}
  \begin{align}
    s_β(t) &≤ s_β \Big( t + \frac{1}{t} + 2 \Big) = s_{2 β} \Big(√{t} + \frac{1}{√{t}} \Big)
     ≤ s_{2 β} \Big(√{t} + \frac{1}{√{t}} + 1 \Big) ¶[aep/s2beta]\,.
  \end{align}
  Next, we apply~§[aep/s2beta] to the matrix element in~§[aep/lbound] and use 
  the fact that the operator $√{X} + 1/√{X} + ⅈ$ has its eigenvalues in $[3, ∞)$ and $2 β 
  < \frac{\log 3}{2 \log υ} ≤ \frac{1}{2}$ together with Lemma~\ref{lm:jensen} to get
  \begin{align}
    \braket{φ | s_β(X) | φ} ≤ \braket<b>{φ | s_{2 β} \Big( √{X} + \frac{1}{√{X}} + ⅈ \Big) | φ} 
    ≤ s_{2β}(υ) ¶[aep/rbsb]\,,
  \end{align}
  where we substituted $υ = \braket{φ | √{X} + 1/√{X} + ⅈ | φ}$.

  Taylor's theorem and an expansion around $β = 0$ gives an upper bound on $s_β(t)$:
  $s_β(t) ≤ β^2 (\ln t)^2 \cosh (β \ln t)$. Hence,
  \begin{align}
    \frac{1}{β \ln 2} s_{2 β}(υ) ≤ 4 β (\log υ)^2 \ln 2 \cosh(2 β \ln υ)
      < 4 β (\log υ)^2 ¶[aep/s2bapprox]\,,
  \end{align}
  where we simplified the expression (for convenience of exposition) using 
  $\ln 2 \cosh(\ln 3 / 2) < 1$. The lemma now follows after we 
  substitute~§[aep/s2bapprox] and~§[aep/rbsb] into~§[aep/lbound].
\end{proof}

This Lemma can be extended to include sub-normalized states, $ρ ∈ \osub{ℋ}$. Let $\hat{ρ} 
= ρ/\tr{ρ}$ and $\hat{υ} = Υ(\hat{ρ} \,\| σ)$, then
\begin{align}
  S_α(ρ \,\| σ) &= S_α(\hat{ρ} \,\| σ) + \frac{α}{α - 1} \log \frac{1}{\tr{ρ}} ¶\\ 
    &> S(\hat{ρ} \,\| σ) + \frac{α}{α - 1} \log \frac{1}{\tr{ρ}} - 4 (α - 1) 
      ( \log \hat{υ} )^2 ¶\\
    &= \frac{1}{\tr{ρ}} S(ρ \,\| σ) + \frac{1}{α - 1} \log \frac{1}{\tr{ρ}} - 4 (α - 1)
      ( \log \hat{υ} )^2 ¶.
\end{align}

Now we combine Proposition~\ref{pr:min-renyi} and~\ref{lm:renyi-vn} to get the desired bound for i.i.d.\ operators. (Note that we restrict $ρ$ to normalized states because, if $\tr{ρ} < 1$ then the trace of $ρ^{⨂ n}$ drops exponentially to zero. An extension of this to sub-normalized states, while possible, thus seems uninteresting.)

\begin{theorem}
  \label{th:min-vn}
  Let $ρ ∈ \onorm{ℋ}, σ ∈ \opos{ℋ}$, $0 < ε < 1$ and $υ = Υ(ρ \,\| σ)$. Then, for any 
  $n ≥ \frac{8}{5} g(ε)$, the i.i.d.\ operators $ρ^{⨂ n}$ and $σ^{⨂ n}$ 
  satisfy
  \begin{align}
    \frac{1}{n} S_{\min}^ε(ρ^{⨂ n} \| σ^{⨂ n} ) ≥ S(ρ \,\| σ) 
      - \frac{δ(ε, υ)}{√{n}} \tn*{where} δ(ε,υ) = 4 \log υ √{g(ε)} ¶
  \end{align}
  and $g(ε) = -\log \big( 1 - √{1-ε^2} \big)$.
\end{theorem}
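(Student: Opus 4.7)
The proof will chain together the two main preparatory results of the chapter and finish with a one-parameter optimization. Concretely, I would first invoke Proposition~\ref{pr:min-renyi} with the i.i.d.\ operators $\rho^{\otimes n}$ and $\sigma^{\otimes n}$ to obtain, for any $\alpha \in (1,2]$,
\begin{align*}
  S_{\min}^{\varepsilon}(\rho^{\otimes n}\,\|\,\sigma^{\otimes n}) \;\ge\; S_{\alpha}(\rho^{\otimes n}\,\|\,\sigma^{\otimes n}) - \frac{g(\varepsilon)}{\alpha-1}.
\end{align*}
Next I would use the additivity of the Rényi relative entropy on tensor products (Lemma on Rényi entropies in Appendix~\ref{ap:renyi}, or by direct computation from Eq.~\S[alpha/rel] using $\tr{(\rho^{\otimes n})^\alpha (\sigma^{\otimes n})^{1-\alpha}} = \tr{\rho^\alpha \sigma^{1-\alpha}}^n$), giving $S_\alpha(\rho^{\otimes n}\|\sigma^{\otimes n}) = n\, S_\alpha(\rho\|\sigma)$. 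Then Lemma~\ref{lm:renyi-vn} applied to the single-copy state yields, whenever $\alpha-1 < \tfrac{\log 3}{4 \log\upsilon}$,
\begin{align*}
  S_{\alpha}(\rho\,\|\,\sigma) \;>\; S(\rho\,\|\,\sigma) - 4(\alpha-1)(\log\upsilon)^2.
\end{align*}
Dividing by $n$ and combining, I get the one-parameter family of lower bounds
\begin{align*}
  \tfrac{1}{n} S_{\min}^\varepsilon(\rho^{\otimes n}\,\|\,\sigma^{\otimes n}) \;\ge\; S(\rho\,\|\,\sigma) - 4(\alpha-1)(\log\upsilon)^2 - \frac{g(\varepsilon)}{n(\alpha-1)}.
\end{align*}

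The last step is to choose $\alpha$ to minimize the deviation term. Differentiating the two $\alpha$-dependent contributions and setting the derivative to zero gives $(\alpha-1)^2 = g(\varepsilon)/(4n (\log\upsilon)^2)$, i.e.
\begin{align*}
  \alpha_* \;=\; 1 + \frac{1}{2\log\upsilon}\sqrt{\frac{g(\varepsilon)}{n}},
\end{align*}
and plugging this back in, both contributions become $2\log\upsilon\sqrt{g(\varepsilon)/n}$, summing to exactly $\delta(\varepsilon,\upsilon)/\sqrt{n}$ with $\delta(\varepsilon,\upsilon)=4\log\upsilon\sqrt{g(\varepsilon)}$, as required.

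What remains is to verify that this choice of $\alpha_*$ actually lies in the admissible range. The condition $\alpha_* \le 2$ is automatic for $n$ at least as large as the stated threshold, and the more delicate constraint $\alpha_* - 1 < \tfrac{\log 3}{4\log \upsilon}$ from Lemma~\ref{lm:renyi-vn} rearranges to $\sqrt{g(\varepsilon)/n} < \tfrac{\log 3}{2}$, i.e.\ $n > \tfrac{4 g(\varepsilon)}{(\log 3)^2}$. Since $(\log 3)^2 > 5/2$, this is implied by the hypothesis $n \ge \tfrac{8}{5} g(\varepsilon)$. The main obstacle is really bookkeeping around this admissibility window: one has to check that the optimal $\alpha_*$ stays inside the narrow band around $1$ where Lemma~\ref{lm:renyi-vn} is valid, which is exactly why the explicit threshold on $n$ enters the statement. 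Once this is verified, the assembly of the three ingredients is routine.
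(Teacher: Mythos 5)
Your proof is correct and takes essentially the same route as the paper's: Proposition~\ref{pr:min-renyi}, additivity of $S_\alpha$ under tensor products, Lemma~\ref{lm:renyi-vn}, optimization of the remaining free parameter, and the same admissibility check that $n \ge \tfrac{8}{5}g(\varepsilon)$ suffices because $(\log 3)^2 > 5/2$. The only cosmetic difference is that the paper parametrizes $\alpha = 1 + (2\mu\sqrt{n})^{-1}$ and optimizes over $\mu$, whereas you optimize over $\alpha$ directly; the resulting optimum and error term are identical.
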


\begin{proof}
  We use Proposition~\ref{pr:min-renyi}, the additivity property of the 
  Rényi entropy~§[aep/renyi-add]
  and Lemma~\ref{lm:renyi-vn} to get a bound on the smooth min-entropy. Let $α := 1 + 
  \frac{1}{2 µ √{n}}$ for a parameter $µ$ (to be optimized over), then
  \begin{align}
    \frac{1}{n} S_{\min}^ε(ρ^{⨂ n} \| σ^{⨂ n} ) &≥ \frac{1}{n} S_α( ρ^{⨂ n} \| σ^{⨂ n}) 
      -\frac{1}{n (α - 1)} g(ε) ¶\\
    &= S_α(ρ \,\| σ) - \frac{2 µ}{√{n}} g(ε) ¶\\
    &≥ S(ρ \,\| σ) - \frac{2}{√{n}} \Big( µ\, g(ε) + \frac{1}{µ} 
      (\log υ)^2 \Big) ¶[aep/mubound]\,.
  \end{align}
  Clearly, we want to choose $µ$ such that it minimizes the expression 
  $µ\, g(ε) + µ\inv (\log υ)^2$.
  However, the requirement $α < 1 + \frac{\log 3}{4 \log υ}$ 
  in Lemma~\ref{lm:renyi-vn} restricts the choice of $µ$ for any fixed $n$, hence, the 
  error term is in general also a function of $n$. 
  Nonetheless, for large enough $n$ the optimum, $µ_*$, can be reached\footnote{To verify 
  this, evaluate an upper bound to $α = 1 + (2 µ_* √{n})\inv$ using the expression for $n$ 
  in~§[aep/mustar] and note that $√{5/2} < \log 3$.} and we get
  \begin{align}
    µ_* = √{\frac{(\log υ)^2}{g(ε)}} \tn*{for} n ≥ \frac{8}{5} \frac{(\log υ)^2}
    {\mu_*^{\,2}} = \frac{8}{5} g(ε) ¶[aep/mustar]\,.
  \end{align}
  Substitution of this expression into~§[aep/mubound] concludes the proof.
\end{proof}

%...........................

\section{The Asymptotic Equipartition Property}
\label{se:aep/aep}

\subsection{Direct Part}
☼*{direct bound}

In this section, we are mostly interested in the application of Theorem~\ref{th:min-vn} to
conditional min- and max-entropies. Here, for any state $ρ ∈ \onorm{ℋ¬{AB}}$, we choose
$σ¬{AB} = ⅈ¬A ⨂ ρ¬B$ and apply Theorem~\ref{th:min-vn}. This implies that
\begin{align}
    \frac{1}{n} \hmin*{A^n|B^n}[ρ] &≥ \frac{1}{n} S_{\min}^ε \big(ρ¬{AB}^{⨂n} \| 
      σ¬{AB}^{⨂n} \big) ¶\\
      &≥ S(ρ¬{AB} \| σ¬{AB}) - \frac{δ(ε, υ)}{√{n}} = \hvn{A|B}[ρ]
      - \frac{δ(ε, υ)}{√{n}} ¶.
\end{align}

This (and the dual of this relation) leads to the following corollary.
\begin{corollary}
  \label{co:aep/cond}
  Let $ρ ∈ \onorm{ℋ¬{AB}}$ and $0 < ε < 1$. Then, the smooth entropies of the
   i.i.d.\ product state $ρ¬{A^n B^n} = ρ¬{AB}^{⨂ n}$ satisfy
\begin{align}
    \frac{1}{n} \hmin*{A^n|B^n}[ρ] &≥ \hvn{A|B}[ρ]
      - \frac{δ(ε, υ)}{√{n}} \tn*{and} ¶[aep/min-vn]\\
    \frac{1}{n} \hmax*{A^n|B^n}[ρ] &≤ \hvn{A|B}[ρ]
      + \frac{δ(ε, υ)}{√{n}} ¶[aep/max-vn]\,,
\end{align}
  where $δ(ε, ν)$ is defined in Theorem~\ref{th:min-vn} and $υ = Υ(ρ¬{AB} \,\| ⅈ¬A ⨂ ρ¬B)$.
\end{corollary}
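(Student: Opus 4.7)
My plan is to derive both inequalities as essentially immediate consequences of the relative-entropy AEP of Theorem~\ref{th:min-vn}, the definition of the conditional min-entropy, and the duality relations already established for both the smooth entropies (Theorem~\ref{th:smooth-dual}) and the von Neumann entropy (Eq.~§[vn/dual]).

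For the min-entropy bound~§[aep/min-vn], I would start from the observation that the choice $σ¬B = ρ¬B$ in Definition~\ref{df:min-entropy} gives the one-sided bound $\hmin*{A|B}[ρ] ≥ S_{\min}^{ε}(ρ¬{AB} \,\| ⅈ¬A ⨂ ρ¬B)$, which extends trivially to the smoothed versions since any state close to $ρ¬{AB}$ is a valid candidate in the maximization defining $\hmin*{A|B}$. Applied to the i.i.d.\ state $ρ¬{AB}^{⨂n}$, I would use $(ⅈ¬A ⨂ ρ¬B)^{⨂n} = ⅈ¬{A^n} ⨂ ρ¬{B^n}$ to obtain $\hmin*{A^n|B^n}[ρ^{⨂n}] ≥ S_{\min}^{ε}(ρ¬{AB}^{⨂n} \| (ⅈ¬A ⨂ ρ¬B)^{⨂n})$. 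Now Theorem~\ref{th:min-vn} directly supplies the desired lower bound, once I note that $S(ρ¬{AB} \,\| ⅈ¬A ⨂ ρ¬B) = \hvn{A|B}[ρ]$ by expanding the definition~§[vn/rel] of the relative entropy.

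For the max-entropy bound~§[aep/max-vn], the natural route is through duality. Let $ρ¬{ABR}$ be a purification of $ρ¬{AB}$; then $ρ¬{ABR}^{⨂n}$ is a purification of $ρ¬{AB}^{⨂n}$, and Theorem~\ref{th:smooth-dual} gives $\hmax*{A^n|B^n}[ρ^{⨂n}] = -\hmin*{A^n|R^n}[ρ^{⨂n}]$. Applying the already-established min-entropy bound to the pair $(A,R)$ yields $\frac{1}{n}\hmin*{A^n|R^n}[ρ^{⨂n}] ≥ \hvn{A|R}[ρ] - δ(ε,υ')/√{n}$, and using the duality $\hvn{A|R}[ρ] = -\hvn{A|B}[ρ]$ for pure states from~§[vn/dual] concludes the argument after negation.

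The only genuine subtlety I anticipate is bookkeeping of the convergence parameter $υ$: the min-entropy bound naturally produces $υ = Υ(ρ¬{AB} \,\| ⅈ¬A ⨂ ρ¬B)$, while the max-entropy bound obtained via duality involves $Υ(ρ¬{AR} \,\| ⅈ¬A ⨂ ρ¬R)$. I would unpack both in terms of $2^{±\frac{1}{2} \hh{α}{A|B}}$ and $2^{±\frac{1}{2} \hh{α}{A|R}}$ using the identity $S_α(ρ¬{AB} \,\| ⅈ¬A ⨂ ρ¬B) = \hh{α}{A|B}[ρ]$ (analogous to the $α = 1$ case used above), and either state the two bounds with their respective $υ$'s or absorb both into a single $υ = √{2^{\hmax{A|B}}} + √{2^{\hmax{A|R}}} + 1$ as in Result~\ref{re:aep-bound}; the latter is a legitimate common upper bound since each $Υ$ factor is dominated by the corresponding max-entropy term. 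Modulo this careful identification of $υ$, the proof is essentially a two-line invocation of Theorem~\ref{th:min-vn} combined with the duality framework developed in Chapter~\ref{ch:smooth}.
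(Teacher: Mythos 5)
Your proposal is correct and is essentially the paper's own argument: the min-entropy bound is Theorem~\ref{th:min-vn} applied with $σ¬{AB} = ⅈ¬A ⨂ ρ¬B$, combined with the observation that $\hmin*{A^n|B^n}[ρ] ≥ S_{\min}^{ε}(ρ¬{AB}^{⨂ n} \,\| σ¬{AB}^{⨂ n})$ and $S(ρ¬{AB} \,\| ⅈ¬A ⨂ ρ¬B) = \hvn{A|B}[ρ]$, while the max-entropy bound is obtained as the dual of this relation via Theorem~\ref{th:smooth-dual} and §[vn/dual], exactly as you describe. The one subtlety you flag resolves even more cleanly than your fallback: for the pure state $ρ¬{ABR}$, Lemma~\ref{lm:renyi-dual} gives $\hh{\nicefrac{3}{2}}{A|B}[ρ] = -\hh{\nicefrac{1}{2}}{A|R}[ρ]$ and $\hh{\nicefrac{1}{2}}{A|B}[ρ] = -\hh{\nicefrac{3}{2}}{A|R}[ρ]$, so $Υ(ρ¬{AR} \,\| ⅈ¬A ⨂ ρ¬R) = Υ(ρ¬{AB} \,\| ⅈ¬A ⨂ ρ¬B)$ holds with equality and both inequalities are valid with the single $υ$ stated in the corollary, no common upper bound required.
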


We may now use~§[aep/half-bound] and its dual relation,
$\hh{\nicefrac{3}{2}}{A|B}[ρ] ≥ \hmin{A|B}[ρ]$ (cf.\ Theorem~\ref{th:smooth-dual} and
Lemma~\ref{lm:renyi-dual}), to bound
\begin{align}
  Υ(ρ¬{AB}|ⅈ¬A ⨂ ρ¬B) ≤ √{2^{-\hmin{A|B}[ρ]}} + √{2^{\hmax{A|B}[ρ]}} + 1 ¶\,.
\end{align}
This is Result~\ref{re:aep-bound} of Section~\ref{se:aep/res}.

The following is a trivial corollary from Theorem~\ref{th:min-vn} and the above 
arguments, in particular~§[aep/min-vn] and its dual relation~§[aep/max-vn].
\begin{corollary}[AEP, direct]
  \label{co:aep-direct}
  Let $ρ ∈ \onorm{ℋ¬{AB}}$ and $0 < ε < 1$. Then,
  \begin{align}
    \lim_{n → ∞} \frac{1}{n} \hmin*{A^n|B^n}[ρ] ≥ \hvn{A|B}[ρ] ≥ 
    \lim_{n → ∞} \frac{1}{n} \hmax*{A^n|B^n}[ρ] ¶\,.
  \end{align}
\end{corollary}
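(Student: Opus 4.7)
The plan is to derive this as an immediate consequence of the finite-$n$ bounds established in Corollary~\ref{co:aep/cond}, by observing that the correction term vanishes in the limit. Specifically, for the i.i.d.\ product state $\rho_{AB}^{\otimes n}$ we already have the two-sided bound
\begin{align*}
\frac{1}{n} H_{\min}^{\varepsilon}(A^n|B^n)_{\rho} \,\geq\, H(A|B)_{\rho} - \frac{\delta(\varepsilon,\upsilon)}{\sqrt{n}}, \quad \frac{1}{n} H_{\max}^{\varepsilon}(A^n|B^n)_{\rho} \,\leq\, H(A|B)_{\rho} + \frac{\delta(\varepsilon,\upsilon)}{\sqrt{n}},
\end{align*}
where $\delta(\varepsilon,\upsilon) = 4\log\upsilon\,\sqrt{g(\varepsilon)}$ with $\upsilon = \Upsilon(\rho_{AB}\,\|\,\mathbb{1}_A\otimes\rho_B)$.

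The first step I would emphasize is that $\upsilon$ depends only on the single-copy marginal state $\rho_{AB}$ and is therefore a finite constant independent of $n$; concretely, $\upsilon$ can be bounded using the upper bound $\sqrt{2^{-H_{\min}(A|B)_\rho}} + \sqrt{2^{H_{\max}(A|B)_\rho}} + 1$ derived right after Corollary~\ref{co:aep/cond}, and both of these entropies are finite for any state on a finite-dimensional Hilbert space. Likewise, for the fixed smoothing parameter $\varepsilon \in (0,1)$, the quantity $g(\varepsilon) = -\log(1-\sqrt{1-\varepsilon^2})$ is a finite constant. Consequently $\delta(\varepsilon,\upsilon)$ is a fixed finite number and $\delta(\varepsilon,\upsilon)/\sqrt{n} \to 0$ as $n \to \infty$.

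Taking $\liminf_{n\to\infty}$ on both sides of the min-entropy bound and $\limsup_{n\to\infty}$ on both sides of the max-entropy bound then yields
\begin{align*}
\liminf_{n\to\infty} \tfrac{1}{n} H_{\min}^{\varepsilon}(A^n|B^n)_{\rho} \,\geq\, H(A|B)_{\rho} \,\geq\, \limsup_{n\to\infty} \tfrac{1}{n} H_{\max}^{\varepsilon}(A^n|B^n)_{\rho},
\end{align*}
which is the desired statement (with the understanding that the $\lim$ in the corollary refers to the $\liminf$ and $\limsup$, respectively; the existence of the limits themselves is the content of the converse part, not needed here). There is essentially no obstacle beyond bookkeeping, since the real work has already been done in Theorem~\ref{th:min-vn} and transferred to conditional entropies in Corollary~\ref{co:aep/cond}; the only thing to verify carefully is the $n$-independence of $\upsilon$, which is immediate from the fact that the Rényi relative entropies $S_{1/2}$ and $S_{3/2}$ defining $\Upsilon$ are evaluated on the single-copy state $\rho_{AB}$ and its marginal-induced reference $\mathbb{1}_A\otimes\rho_B$, both fixed inputs to the problem.
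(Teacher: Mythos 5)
Your argument is exactly the paper's: the corollary is obtained by taking $n\to\infty$ in the finite-$n$ bounds of Corollary~\ref{co:aep/cond}, noting that $\delta(\varepsilon,\upsilon)/\sqrt{n}\to 0$ since $\upsilon$ depends only on the single-copy state. Your explicit remarks on the $n$-independence and finiteness of $\upsilon$ and on the $\liminf$/$\limsup$ reading of the limits are careful additions to what the paper dismisses as a ``trivial corollary,'' but the route is the same.
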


\subsection{Converse Part}
☼*{converse bound}

To prove asymptotic convergence, we will also need converse bounds. For $ε = 0$,
the converse bounds are given by Proposition~\ref{pr:min-max-bounds}, i.e.\ $\hmin{A|B} ≤
\hvn{A|B} ≤ \hmax{A|B}$. For $ε > 0$, similar bounds can be derived from the
continuity of the conditional entropy in the state~\cite{alicki03}.
However, such bounds do not allow a statement of the form of
Corollary~\ref{co:aep-direct} as the deviation from the von Neumann entropy
scales as $n f(ε)$, where $f(ε) → 0$ only for $ε → 0$. (See, for
example,~\cite{tomamichel08} for such a weak converse bound.) This is not sufficient for some
applications of the asymptotic equipartition property.

Here, we prove a tighter bound, which relies on the
bound between smooth max-entropy and smooth min-entropy established in 
Proposition~\ref{pr:min-max-smooth}.
Applying this Proposition in conjunction with Eqs.~§[aep/min-vn]
and~§[aep/max-vn] establishes the converse AEP bounds. Let $0 < ε < 1$. Then,
using any smoothing parameter 
$0 < ε' < 1 - ε$, we bound
\begin{align}
  \frac{1}{n} \hmin*{A|B}[ρ] &≤ \frac{1}{n} \hmax[ε']{A|B}[ρ] + 
      \frac{1}{n} \log \frac{1}{1 - (ε + ε')^2} ¶\\
    &≤ \hvn{A|B}[ρ] + \frac{1}{n} \log \frac{1}{1 - (ε + ε')^2} + 
      \frac{δ(ε', η)}{√{n}}  ¶[aep/conv/1].
\end{align}
The corresponding statement for the smooth max-entropy follows by the dual argument.
We thus find
\begin{corollary}[AEP, converse]
  \label{co:aep-converse}
  Let $ρ ∈ \onorm{ℋ¬{AB}}$ and $0 < ε < 1$. Then,
  \begin{align}
    \lim_{n → ∞} \frac{1}{n} \hmin*{A^n|B^n}[ρ] ≤ \hvn{A|B}[ρ] ≤ 
    \lim_{n → ∞} \frac{1}{n} \hmax*{A^n|B^n}[ρ] ¶\,.
  \end{align}  
\end{corollary}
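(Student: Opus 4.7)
The plan is to combine the direct AEP bound (Corollary~\ref{co:aep/cond}) with the ordering inequality between smooth min- and max-entropies established in Proposition~\ref{pr:min-max-smooth}. The crucial point is that the $n$-dependence of all correction terms in Proposition~\ref{pr:min-max-smooth} is mild (at most $O(\log\,\tn{const})$) whereas the leading AEP term grows linearly in $n$, so the corrections wash out in the rate.

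Concretely, fix $0 < ε < 1$ and pick an auxiliary smoothing parameter $ε'$ with $0 < ε' < 1 - ε$. The product state $ρ¬{A^n B^n} = ρ¬{AB}^{⨂n}$ is normalized, so Proposition~\ref{pr:min-max-smooth} applies and gives
\begin{align*}
  \hmin*{A^n|B^n}[ρ^{⨂ n}] ≤ \hmax[ε']{A^n|B^n}[ρ^{⨂ n}] + \log \frac{1}{1 - (ε + ε')^2}.
\end{align*}
The smooth max-entropy on the right is in turn controlled by the direct part of the AEP, Eq.~§[aep/max-vn], yielding
\begin{align*}
  \hmax[ε']{A^n|B^n}[ρ^{⨂ n}] ≤ n\,\hvn{A|B}[ρ] + √{n}\, δ(ε', υ),
\end{align*}
where $υ$ depends only on $ρ$ and is therefore $n$-independent. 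Dividing by $n$ and letting $n → ∞$, the additive $\log \frac{1}{1-(ε+ε')^2}$ term contributes $O(1/n)$ and the AEP correction contributes $O(1/\sqrt{n})$, both vanishing. This establishes $\lim_{n → ∞} \tfrac{1}{n} \hmin*{A^n|B^n}[ρ^{⨂n}] ≤ \hvn{A|B}[ρ]$.

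For the companion bound on the smooth max-entropy, the cleanest route is through duality. Pick a purification $ρ ∈ \onorm{ℋ¬{ABC}}$; then $ρ¬{A^nB^nC^n} = ρ¬{ABC}^{⨂n}$ is a purification of $ρ¬{A^n B^n}$. Theorem~\ref{th:smooth-dual} gives $\hmax*{A^n|B^n}[ρ^{⨂n}] = -\hmin*{A^n|C^n}[ρ^{⨂n}]$, and the von Neumann duality~§[vn/dual] gives $\hvn{A|B}[ρ] = -\hvn{A|C}[ρ]$. Applying the min-entropy statement already proved to the $AC$ marginal and negating yields $\lim_{n → ∞} \tfrac{1}{n} \hmax*{A^n|B^n}[ρ^{⨂n}] ≥ \hvn{A|B}[ρ]$, as required.

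No genuine obstacle remains at this stage: the heavy lifting has been done in Proposition~\ref{pr:min-max-smooth} (which forbids the smooth min-entropy from drifting asymptotically above the smooth max-entropy by more than a constant) and in the direct AEP bounds of Corollary~\ref{co:aep/cond}. What is subtle, and what justifies having proved Proposition~\ref{pr:min-max-smooth} earlier for the full range $0 ≤ ε + ε' < 1$ rather than only for $ε + ε' \to 0$, is precisely that we need the correction $\log \frac{1}{1-(ε+ε')^2}$ to be finite for fixed $ε$ and some admissible $ε'$; this is what allows the converse to hold for every $0 < ε < 1$ instead of merely in the limit $ε → 0$, and hence produces a full-strength asymptotic equipartition statement rather than the weaker converse of~\cite{tomamichel08}.
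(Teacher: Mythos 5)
Your argument is correct and follows essentially the same route as the paper: the paper likewise combines Proposition~\ref{pr:min-max-smooth} (with an auxiliary parameter $ε'<1-ε$) with the direct bound~\eqref{eq:aep/max-vn} and lets the $O(1/n)$ and $O(1/\sqrt{n})$ corrections vanish, obtaining the max-entropy half ``by the dual argument,'' which you simply spell out explicitly via Theorem~\ref{th:smooth-dual} and~\eqref{eq:vn/dual}. Your closing remark about why the full-range version of Proposition~\ref{pr:min-max-smooth} is the essential ingredient also matches the paper's own discussion.
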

%
%\noindent Corollaries~\ref{co:aep-direct} and~\ref{co:aep-converse} together constitute
%Result~\ref{re:aep}.

These converse bounds are particularly important to bound the smooth entropies for large smoothing parameters. In this form, the AEP implies strong converse statements for many information theoretic tasks that can be characterized by smooth entropies in the one-shot setting (see, for example, Chapter~\ref{ch:app}).

%............................

\chapter{Uncertainty Relations for Smooth Entropies}
\label{ch:uc}

☼[Entropic uncertainty relations]{uncertainty!relation} (UCRs) provide lower bounds on the uncertainty of the outcomes of two incompatible measurements given side information in terms of conditional entropies. In this chapter, which is based on~\cite{tomamichel11} and~\cite{haenggi11}, we prove several uncertainty relations using ☼[smooth min- and max-entropies]{entropy!smooth} as well as the ☼[von Neumann entropy]{entropy!von Neumann} 
as measures of uncertainty. 

\section{Introduction and Related Work}
\label{sc:ucr/intro}

Uncertainty relations have inspired physicists since the early days of quantum mechanics, when Heisenberg~\cite{heisenberg27} formulated his famous uncertainty principle. In its Robertson~\cite{robertson29} form, it states that the product of the standard deviations of the outcomes of two ☼[incompatible measurements]{measurement!incompatible} on a pure state $\ket{ψ}$ is lower bounded in terms of the commutator of the ☼[observables]{observable} ($\hat{X}$ and $\hat{Y}$) of these measurements, 
\begin{align}
  Σ¬X · Σ¬Y ≥ \frac{1}{2} \abs<B>{\braket<b>{ψ | [\hat{X}, \hat{Y}] | ψ}} ¶[robertson].
\end{align}
Here, the variance of the measurement outcomes is given as $Σ¬Z^2 = \braket{ψ|\hat{Z}^2|ψ} - \braket{ψ|\hat{Z}|ψ}^2$ where $Z$ is either $X$ or $Y$. 
The commutator, $[\hat{X}, \hat{Y}] = \hat{X}\hat{Y} 
- \hat{Y}\hat{X}$, quantifies the incompatibility of the two observables. 

Note that the uncertainty relation becomes trivial if the measured state is an eigenstate of either $X$ or $Y$. More generally, the lower bound in~§[robertson] depends on the state $\ket{ψ}$ before measurement, which is often undesirable. For example, the state might be unknown or, in a cryptographic setting, prepared by an adversary. Furthermore, the standard deviation is not our preferred measure of uncertainty as it conflates two concepts: the value associated with different measurement outcomes and the uncertainty in the probability distribution of the outcomes. The latter is the uncertainty we are interested in and is often quantified by entropies.

Uncertainty relations in terms of entropies were first proposed by Hirsch\-man~\cite{hirschman57}, and Deutsch~\cite{deutsch83} for the case of a finite output alphabet.\footnote{See also~\cite{wehner09}, which offers a comprehensive review of uncertainty relations.} In the form proposed by Maassen and Uffink~\cite{maassen88}, it states that
\begin{align}
  \hvn{X} + \hvn{Y} ≥ \log \frac{1}{c}\,, \tn*{where} 
    c = \max_{x,y} \abs<b>{\braket{x|y}}^2 ¶[ucr/Maassen/proj].
\end{align}
for any state $ρ$ before measurement. Two ☼[registers]{register}, $X$ and $Y$, store the respective outcome of two different projective measurements, $\cX$ and $\cY$. 
The ☼{overlap}, $c$, is determined by these measurements and independent of the state before measurement. More specifically, the maximization in the definition of the overlap is taken over all eigenvectors $\ket{x}$ of $\hat{X}$ and $\ket{y}$ of $\hat{Y}$, where $\hat{X}$ and $\hat{Y}$ are observables corresponding to the projective measurements $\cX$ and $\cY$. 

More generally, Krishna and Parthasarathy~\cite{krishna01} considered ☼[POVMs]{POVM}, 
given by sets $\cX = \{ M_x \}$ and $\cY = \{ N_y \}$ of positive semi-definite operators.
For such measurements, the states of the registers containing the measurement outcomes 
are given by
\begin{align}
  \rho¬X = \sum_x \tr<B>{√{M_x}\, \rho √{M_x}} \proj{x} \tn*{and} 
    \rho¬Y = \sum_y \tr<B>{√{N_y}\, \rho √{N_y}} \proj{y} ¶.
\end{align}
The uncertainty relation~§[ucr/Maassen/proj] now holds for these states and
the overlap
\begin{align}
  c := \max_{x,y} \norm<B>{\sqrt{M_x} \sqrt{N_y}}[∞]^2 ¶[overlap]\,.
\end{align}
Note that the overlap reduces to~§[ucr/Maassen/proj] if the measurements are projective.

The uncertainty relation can be extended to one with classical side information (see also~\cite{hall95,cerf02}). Let the state between the system to be measured, $A$, and an observer, $O$, be of ☼[CQ]{state!CQ} form
$ρ¬{AO} = ∑_o p_o\,  \proj{o}[O] ⨂ ρ¬A^o$. 
This can be seen as the observer preparing the state $ρ¬A^o$ with probability $p_o$. What is the entropy of the observer $O$ about the measurement outcomes $X$ and $Y$? It is easy to see that
\begin{align}
  \hvn{X|O} + \hvn{Y|O} = ∑_o p_o\, \Big( \hvn{X}[ρ^o] + \hvn{Y}[ρ^o] \Big) 
    ≥ \log \frac{1}{c} ¶\,.
\end{align}
Thus, the uncertainty relation~§[ucr/Maassen/proj] still holds. Furthermore, we may model the random basis choice by a uniformly distributed bit $Θ ∈ \{\cX, \cY\}$ that is independent of the state $ρ¬{AO}$. 
Hence, we consider a joint state $ρ¬{AO Θ} = ρ¬{AO} ⨂ π¬{Θ}$ and a measurement on the $A$
system that depends on the classical register $Θ$.\footnote{This can be seen a measurement
of the joint system consisting of $A$ and $Θ$.}
The measurement outcome is denoted $Z$, which replaces $X$ and $Y$. Thus,
\begin{align}
  \hvn{Z|O Θ} = \frac{1}{2} \hvn{X|O} + \frac{1}{2} \hvn{Y|O} ≥ 
    \frac{1}{2} \log \frac{1}{c} ¶[ucr/game-cl]\,.
\end{align}
This uncertainty relation with classical side information is also visualized in 
Figure~\ref{fig:ucr/cl}.

A statement of the uncertainty relation in this form naturally leads to the question of what happens if the observer is allowed to store quantum information about the system, i.e.\ if the state $ρ¬{AO}$ is entangled. For the case of~§[ucr/game-cl], only the trivial bound
$\hvn{Z|O Θ}[ρ] ≥ 0$ holds. To see this, consider a fully entangled state between $A$ and $O$. In this case, the observer can predict the value of $Z$ perfectly using an appropriate measurement (depending on $Θ$) on his system. This is explained in Figure~\ref{fig:ucr/q}.

However, due to the monogamy of entanglement\footnote{Or, depending on the reader's preference, due to no-cloning.}, it is unclear what happens if we introduce a second observer. Note, for example, that the entropy $\hvn{A|O}$ before measurement can be negative for entangled states while the sum $\hvn{A|O_1} + \hvn{A|O_2}$ is always non-negative.

In fact, Renes and Boileau~\cite{renes09} (see also~\cite{christandl05}) conjectured the following uncertainty relation, which was later shown by Berta et al.~\cite{berta10}. For a tripartite quantum state shared between the system $A$ which is measured in the basis determined by $Θ$ and two observers, $O_1$ and $O_2$, it holds that
\begin{align}
  \hvn{Z|O_1 Θ} + \hvn{Z|O_2 Θ} ≥ \log \frac{1}{c} ¶[ucr/intro/vn]\,,
\end{align}
with $c$ defined as in~§[overlap] but only for rank-$1$ projective measurements. This proof was later extended to POVMs and simplified by Coles et al.~\cite{coles10,coles11}. Concurrently, our results also imply a simplified proof of the uncertainty relation for von Neumann entropies 
and POVMs. 

\begin{figure}
  
  \subfigure[A quantum system in the state $ρ$ is measured in one out of two bases, determined 
    by a uniformly random bit, $Θ$. Then the basis choice $Θ$ is sent to an observer, $O$, who 
    is asked to determine the measurement outcome, $X$. The observer is allowed to use classical 
    information about the state $ρ$ before measurement, including a full characterization of the 
    density matrix. One may alternatively think of the observer preparing the state $ρ$ before 
    the game. The uncertainty relation~\eqref{eq:ucr/game-cl}~ensures that the entropy the 
    observer has about the measurement outcome satisfies the lower bound $\hvn{X|O Θ} ≥ 
    \frac{1}{2} \log \frac{1}{c}$. Hence, if the overlap is nontrivial, the observer cannot 
    predict $X$ with certainty.]
  {\includegraphics[width=0.95\textwidth]{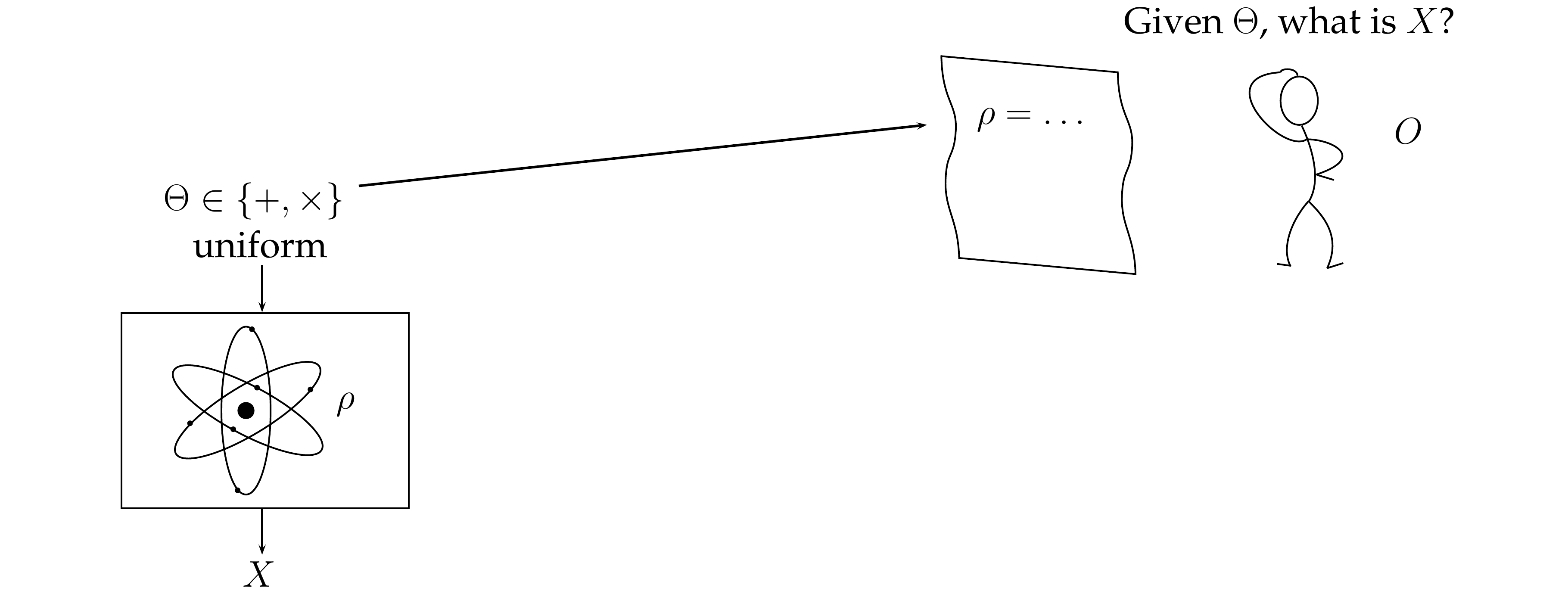}}
  
  \subfigure[This game can be trivially extended to two observers, $O_1$ and $O_2$. In this case, 
    the sum of their entropies satisfies $\hvn{X|O_1 Θ} + \hvn{X|O_2 Θ} ≥ \log \frac{1}{c}$. This scenario is mainly interesting in the quantum case.]
  {\includegraphics[width=0.95\textwidth]{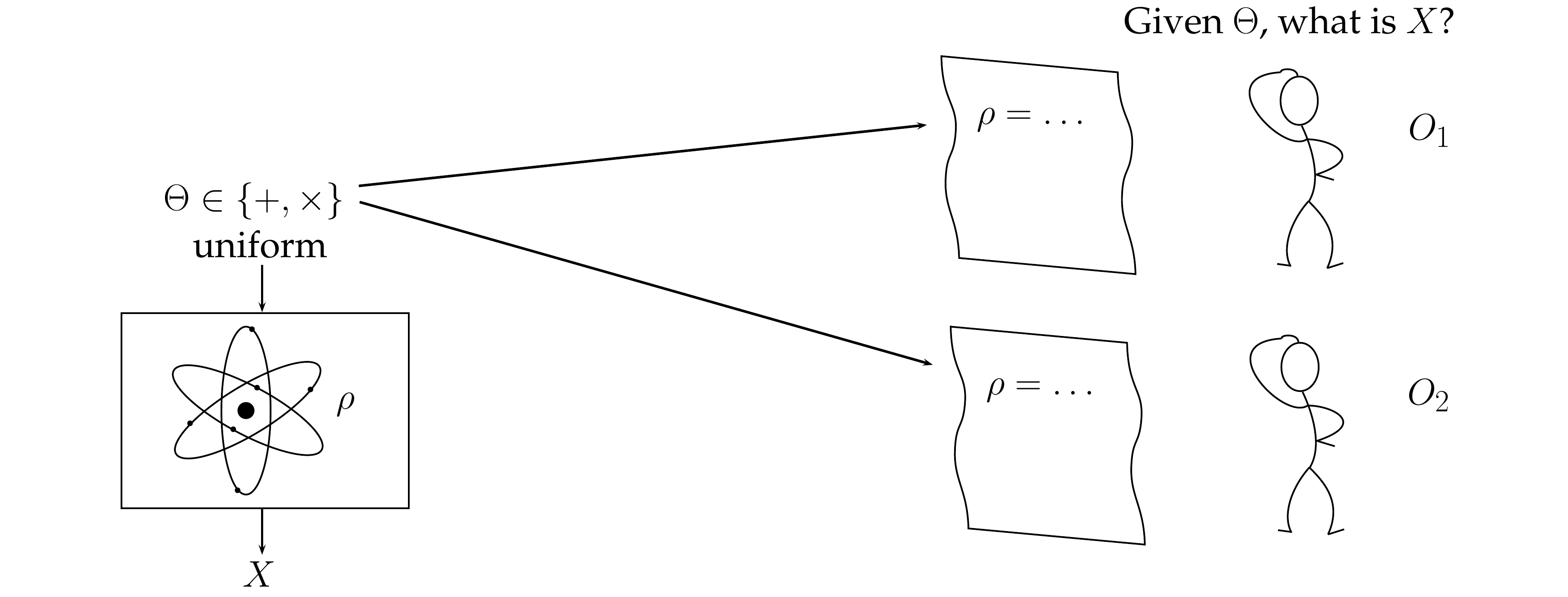}}
  
  \caption[Uncertainty Relations with Classical Side Information.]{\emph{Uncertainty Relations with Classical Side Information.}}
  \label{fig:ucr/cl}
\end{figure}

\begin{figure}
  
  \subfigure[A quantum system, $A$, is measured in one out of two bases, determined 
    by a uniformly random bit, $Θ$. Then the basis choice $Θ$ is sent to an observer, $O$, who 
    is asked to determine the measurement outcome, $X$. The observer holds a quantum system which 
    might be entangled with the measured system. One may alternatively think of the observer 
    preparing the bipartite state $\ii{ρ}{AO}$, of which the $A$ part is measured 
    and the $O$ part is stored in a quantum memory.
    For this setup, no uncertainty relation of the type~\eqref{eq:ucr/game-cl} exists. As a 
    counter-example, consider the case where $\ii{ρ}{AO}$ is fully entangled. In this case, the 
    observer can always choose a measurement (depending on $Θ$) whose outcome is perfectly 
    correlated with $X$. For example, if the state is a singlet, measuring both the $A$ and $O$ 
    parts in the same, arbitrary basis will lead to perfectly anti-correlated 
    binary variables.]
  {\includegraphics[width=0.95\textwidth]{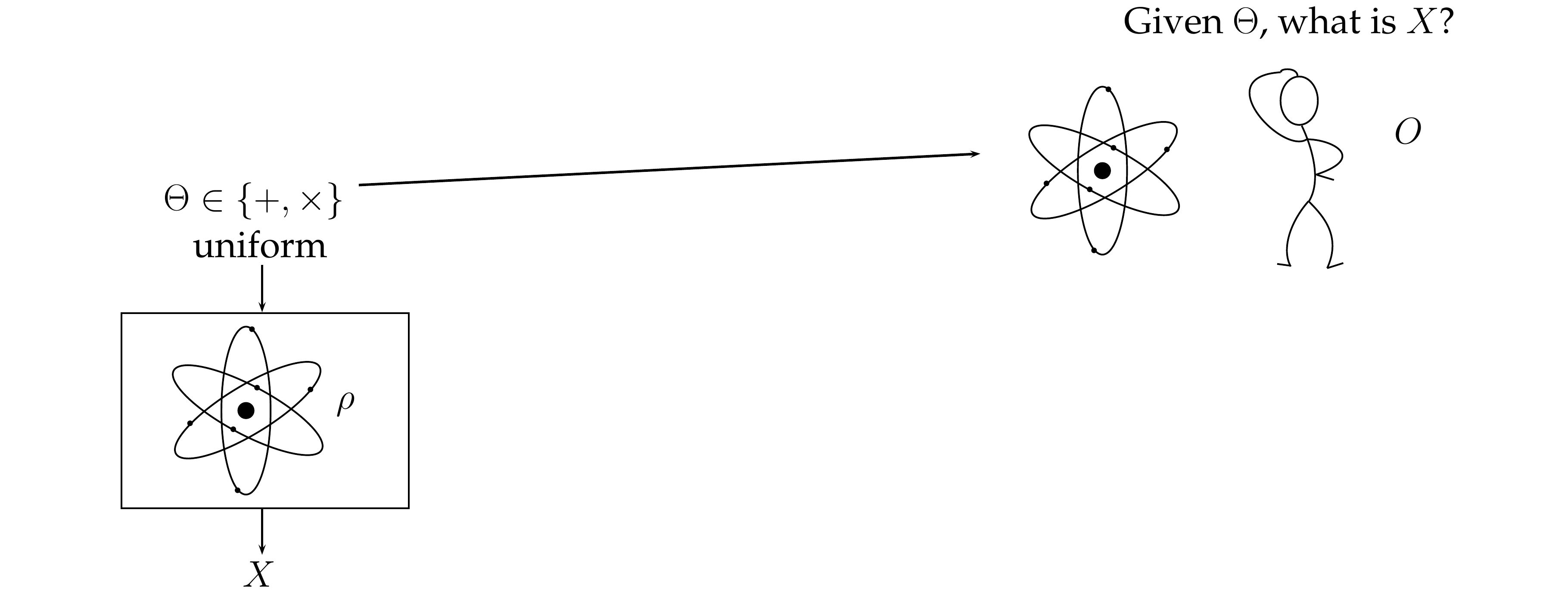}}
  
  \subfigure[However, if we instead consider two observers, $O_1$ and $O_2$, an 
    uncertainty relation is possible. Note, in particular, that the counterexample provided 
    for the case of one observer fails due to the monogamy of entanglement. As in the classical
    case, the sum of the entropies of the two observers satisfies $\hvn{X|O_1 Θ} + \hvn{X|O_2 Θ} 
    ≥ \log \frac{1}{c}$. However, in the quantum case it is possible that one of these entropies 
    is zero, which implies that the other entropy is large. This trade-off between 
    the two entropies does not exist in the classical case and can be seen as an effect of 
    no-cloning\,|\,it is generally not possible that the two observers hold copies of each 
    other's quantum information.]
  {\includegraphics[width=0.95\textwidth]{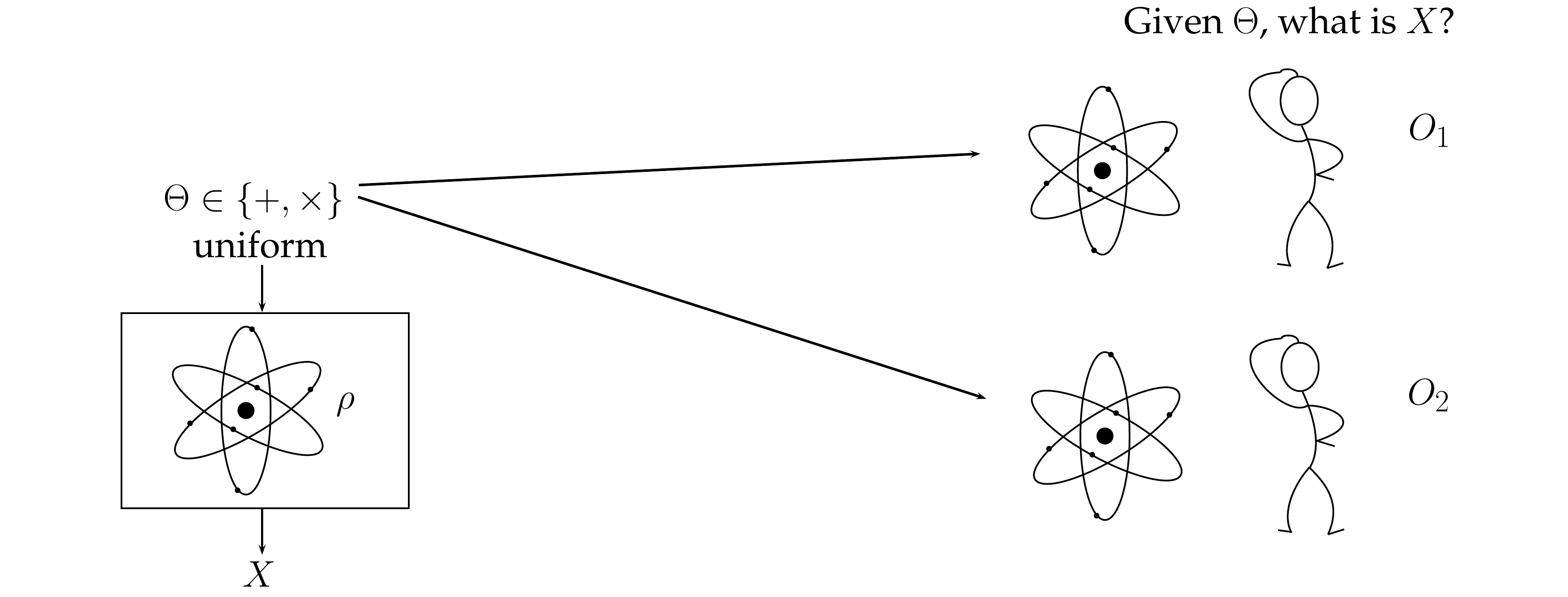}}
  
  \caption[Uncertainty Relations with Quantum Side Information.]{\emph{Uncertainty Relations with Quantum Side Information.}}
  \label{fig:ucr/q}
\end{figure}

\subsection{Main Contributions}

We have already seen in Chapter~\ref{ch:smooth} that the smooth entropies, for any tripartite state $ρ¬{ABC}$ and $ε ≥ 0$, satisfy the duality relation
\begin{align}
  	\hmin*{A|B}[ρ] + \hmax*{A|C}[ρ] ≥ 0 ¶\,.
\end{align}
Our result now shows that this lower bound increases (as is the case for von Neumann entropies) when we apply incompatible measurements on the $A$ system. This extends the entropic uncertainty relation of~§[ucr/intro/vn] to smooth entropies.

\begin{result}
  \label{res:uc}
  For $ε ≥ 0$, any tripartite state $ρ¬{ABC}$ as well as two POVMs $\{ M_x \}$ 
  and $\{ N_y \}$ on $A$, it holds that
  \begin{align}
    \hmin*{X|B} + \hmax*{Y|C} ≥ \log \frac{1}{c} , \tn*{where} 
      c = \max_{x,y} \norm<B>{√{M_x} √{N_y}}[∞]^2 ¶.
  \end{align}
\end{result}

This uncertainty relation has been used to prove security in quantum cryptography. 
In~\cite{tomamichellim11}, it was used to derive tighter key rates for finite block lengths in quantum key distribution. In~\cite{lesheridan11}, Le \emph{et al.} prove security of a reference frame independent quantum 
key distribution protocol using the above uncertainty relation.\footnote{Applications of the uncertainty relation to quantum cryptography will also be the topic of Section~\ref{se:qkd} of the next chapter.}
Furthermore, the result has been shown in the framework of general von Neumann algebras 
in~\cite{furrer11}, indicating that the UCR for min- and max-entropies
is a fundamental property of quantum physics and not a relict of the density operator formalism.

Furthermore, we explore a tighter version of the uncertainty relation with a lower bound in terms of an ☼[effective overlap]{overlap!effective}. The effective overlap\,|\,in contrast to the overlap\,|\,is a function of the marginal state prior to measurement as well as the two measurements. A preliminary version of this result appeared in~\cite{haenggi11}, where the relation is shown for von Neumann entropies. We extend these results here and show that a generalized uncertainty relation also holds for smooth min- and max-entropies, enabling its substitution into existing cryptographic security proofs~\cite{tomamichellim11,lim12}.

\begin{result}
  \label{res:ucg}
  For $ε > 0$, any tripartite state $ρ¬{ABC}$ as well as two POVMs $\cX = \{ M_x \}$ 
  and $\cY = \{ N_y \}$ on $A$, it holds that
  \begin{align}
    \hmin[3ε]{X|B} + \hmax*{Y|C} ≥ \log \frac{1}{c^*} - \log \frac{2}{ε^2} ¶,
  \end{align}
  where $c^* = \sum_k \tr{P^k ρ¬A} c_k$. Here, $\{ P^k \}$ is any projective measurement
  that commutes with both $\cX$ and $\cY$ and $c_k$ is the overlap of $\cX$ and $\cY$
  on the subspace $P^k$.
\end{result}

\subsection{Outline}

The remainder of this chapter is structured as follows. In Section~\ref{sc:ucr/trad}, 
we introduce the notion of ☼{overlap} and show the uncertainty relation for the min-
and max-entropy. The proof is very instructive and will also guide the proof of the generalized
uncertainty relation, which is given in Section~\ref{sc:ucr/g}. There, we will also
formally define the effective overlap.
In Section~\ref{sc:ucr/coll}, we discuss a variety of corollaries of the generalized
uncertainty relation. We consider an application of the uncertainty relation
to quantum key distribution. Finally, in Section~\ref{sc:ucr/bi}, we consider a bipartite uncertainty relation which might have applications in cryptographic settings where only two parties are involved.

%................................
\section{Traditional Formulation}
\label{sc:ucr/trad}

\subsection{Overlap}

Let $ρ ∈ \osub{ℋ¬{ABC}}$ be an arbitrary, tripartite quantum state. We want to bound the entropy about the result of a measurement on the $A$ subsystem given side information stored in either $B$ or $C$. Without loss of generality, such a measurement on the $A$ system can be described by a POVM.\footnote{This is true since we do not use the post-measurement state on the $A$ system and, thus, the freedom to choose a phase in the decomposition of the POVM elements $M = E†E$ is meaningless.} Here we consider two POVMs, $\cX = \{ M_x \}$ and $\cY = \{ N_y \}$, on $ℋ¬A$. They measure the state on the $A$ subsystem and store the measurement outcome in classical registers $X$ and 
$Y$, respectively. These two registers can be seen as classical random variables that are correlated with quantum side information on the $B$ and $C$ systems.
Note that, in our analysis here, we are not concerned with the state of the $A$ system after the measurement.

The relevant post-measurement states are thus given as
\begin{align}
  ρ¬{XBC} &= ∑_{x} \proj{x} ⨂ \tr[A]{√{M_x} ρ¬{ABC} √{M_x}}
    \tn*{and} ¶[ucr/pm-x]\\
  ρ¬{YBC} &= ∑_{y} \proj{y} ⨂ \tr[A]{√{N_y} ρ¬{ABC} √{N_y}} 
    ¶[ucr/pm-y]\,.
\end{align}

We also define the ☼{overlap} of these two measurements.
\begin{definition}[Overlap]
  \label{df:overlap}
  Let $\cX = \{ M_x \}$ and $\cY = \{ N_y \}$ be two POVMs. Then, we define the 
  overlap of $\cX$ and $\cY$ as
  \begin{align}
    c(\cX, \cY) := \max_{x,y} \norm<b>{√{M_x}√{N_y}}[∞]^2 ¶\,.
  \end{align}
\end{definition}
This is in accordance with~\cite{krishna01}, where an uncertainty relation (without side information) was first shown for von Neumann entropies.
If the two measurements are projective, the expression for the overlap reduces to
\begin{align}
   c = \max_{x,y} \abs<b>{\braket{x|y}}^2 ¶\,,
\end{align}
where the maximization is over all eigenvectors $\ket{x}$ of $\cX$ and $\ket{y}$ of $\cY$. The name ☼{overlap} is clearly motivated by this expression.

\subsection{Uncertainty Relation for Min- and Max-Entropies}
\label{sc:ucr/proof}

The following inequality (cf.~\cite{tomamichel11}) bounds the sum of the min- and max-entropies of the post-measurement states in terms of the overlap. In the next sections, we will formulate a generalization of this statement to smooth entropies and mixed states. Here, we simply consider the special case~§[ucr/noeps], as its proof highlights the basic concepts and techniques used for the proof of the subsequent generalized uncertainty relation.

\begin{theorem}[UCR]
  \label{th:ucr}
  Let $ρ ∈ \osub{ℋ¬{ABC}}$ be pure and $\cX$ as well as $\cY$ two POVMs on $ℋ¬A$. Then, the post   
  measurement states~§[ucr/pm-x] and~§[ucr/pm-y] satisfy
  \begin{align}
    \hmin{X|B}[ρ] + \hmax{Y|C}[ρ] ≥ \log \frac{1}{c(\cX, \cY)} ¶[ucr/noeps]\,.
  \end{align}
\end{theorem}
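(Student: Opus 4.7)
The plan is to pass from max- to min-entropy via duality and then to compare the two min-entropies through their primal SDP characterizations.

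First, since $\rho \in \osub{\mathcal{H}_{ABC}}$ is pure, I would consider the Stinespring dilation of $\cY$, namely the isometry $V_Y \colon \mathcal{H}_A \to \mathcal{H}_Y \otimes \mathcal{H}_A$ given by $V_Y = \sum_y \ket{y}_Y \otimes \sqrt{N_y}$. The pure state $\ket{\psi}_{YABC} := (V_Y \otimes ⅈ_{BC}) \ket{\rho}_{ABC}$ has $YC$-marginal equal to $\rho_{YC}$, so Lemma~\ref{lm:min-max/dual} applied to this purification gives $\hmax{Y|C}[\rho] = -\hmin{Y|AB}[\psi]$. The desired bound is thereby equivalent to $\hmin{X|B}[\rho] - \hmin{Y|AB}[\psi] \geq \log(1/c)$, which in terms of the SDP functional $\Phi$ from Section~\ref{se:min/prop} reads
\begin{equation*}
  \Phi_{Y|AB}(\psi_{YAB}) \leq c \cdot \Phi_{X|B}(\rho_{XB}).
\end{equation*}

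Next, using the primal SDP of Definition~\ref{df:min-entropy}, I would extract an optimizer $\sigma_B^{\star} \in \opos{\mathcal{H}_B}$ for the right-hand side, giving $\tr \sigma_B^{\star} = \Phi_{X|B}(\rho_{XB})$ together with the marginal inequalities $\tr_A[(M_x \otimes ⅈ_B) \rho_{AB}] \leq \sigma_B^{\star}$ for every $x$. From $\sigma_B^{\star}$ I would construct a feasible $\tau_{AB} \in \opos{\mathcal{H}_{AB}}$ for $\Phi_{Y|AB}(\psi_{YAB})$, i.e., a positive operator satisfying $\psi_{YAB} \leq ⅈ_Y \otimes \tau_{AB}$ with $\tr \tau_{AB} \leq c \cdot \tr \sigma_B^{\star}$. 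The overlap bound $\|\sqrt{M_x}\sqrt{N_y}\|_\infty^2 \leq c$, equivalently $\sqrt{N_y} M_x \sqrt{N_y} \leq c\, N_y$, combined with the POVM completeness $\sum_x M_x = ⅈ$, would translate the $\cX$-side inequalities on $\sigma_B^{\star}$ into the required $\cY$-side operator bound on $\tau_{AB}$.

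The main obstacle is this construction step. The subtlety is that $\psi_{YAB}$ is \emph{not} block-diagonal in $Y$: the Stinespring dilation introduces off-diagonal blocks $(\sqrt{N_y} \otimes ⅈ_B) \rho_{AB} (\sqrt{N_{y'}} \otimes ⅈ_B)$ for $y \neq y'$, all of which must be simultaneously dominated by $ⅈ_Y \otimes \tau_{AB}$. An equivalent but often more tractable route is to work directly with the fidelity form of the max-entropy: since $\rho_{YC}$ is classical on $Y$, one has $F(\rho_{YC}, ⅈ_Y \otimes \sigma_C) = \sum_y F(\rho_C^y, \sigma_C)$ where $\rho_C^y := \tr_A[(N_y \otimes ⅈ_C) \rho_{AC}]$, hence $2^{\hmax{Y|C}[\rho]} = \max_{\sigma_C} \big(\sum_y F(\rho_C^y, \sigma_C)\big)^2$; one would then exhibit an explicit $\sigma_C$ via Uhlmann's theorem (Theorem~\ref{th:pd-uhl}) applied to the purifications $(\sqrt{N_y} \otimes ⅈ_{BC}) \ket{\rho}$, with the overlap entering through amplitudes of the form $\bra{\rho}(\sqrt{M_x}\sqrt{N_y} \otimes ⅈ_{BC})\ket{\rho}$.
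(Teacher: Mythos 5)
Your overall plan---dualize $\hmax{Y|C}$ into a min-entropy on the complementary systems and then compare the two min-entropies through the SDP functional $Φ$---is indeed the paper's strategy, but two of your concrete steps go wrong, and the first is fatal. Your $Φ$-inequality is reversed: with $\hmin{X|B}[ρ] = -\log Φ¬{X|B}(ρ¬{XB})$ and $\hmin{Y|AB}[ψ] = -\log Φ¬{Y|AB}(ψ¬{YAB})$, the target $\hmin{X|B}[ρ] - \hmin{Y|AB}[ψ] ≥ \log(1/c)$ unpacks to $Φ¬{X|B}(ρ¬{XB}) ≤ c\, Φ¬{Y|AB}(ψ¬{YAB})$, not to $Φ¬{Y|AB}(ψ¬{YAB}) ≤ c\, Φ¬{X|B}(ρ¬{XB})$ as you state. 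Consequently the feasibility transfer must run the other way: one takes a primal feasible point $ρ¬{AYY'B} ≤ 2^{-λ} ⅈ¬Y ⨂ σ¬{AY'B}$ of the $Y$-side problem and turns it into a feasible point of the $X$-side problem, by conjugating with the partial isometry $W = UV†$ and tracing out $X'$ and $A$; Lemma~\ref{lm:pt-norm-bound} bounds each block $\tr<b>[A]{√{M_x} √{N_y}\, σ¬{AY'B} √{N_y} √{M_x}}$ by $c\cdot\braket{y|σ¬{Y'B}|y}$, and the sum over $y$ collapses to $c\,σ¬B$. The ``obstacle'' you flag---having to dominate all off-diagonal $Y$-blocks of $ψ¬{YAB}$ simultaneously---is a symptom of going the wrong way: in the correct direction you start from a single operator inequality that already dominates the whole of $ρ¬{AYY'B}$, and conjugation only weakens it. Had your construction succeeded as stated, it would have proved $\hmin{X|B}[ρ] + \hmax{Y|C}[ρ] ≤ -\log(1/c)$, the opposite of the theorem.

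The second problem is the dilation $V_Y = ∑_y \ket{y} ⨂ √{N_y}$ without the coherent copy $Y'$. The $YC$-marginal of $ψ¬{YABC}$ carries off-diagonal blocks $\tr<b>[AB]{√{N_{y'}} √{N_y}\, ρ¬{ABC}}$, which do not vanish for non-projective POVMs, so $ψ¬{YC} \neq ρ¬{YC}$ and Lemma~\ref{lm:min-max/dual} yields $-\hmin{Y|AB}[ψ] = \hmax{Y|C}[ψ]$, not $\hmax{Y|C}[ρ]$. (Dephasing $Y$ together with Theorem~\ref{th:data-proc} does give $\hmax{Y|C}[ρ] ≥ \hmax{Y|C}[ψ]$, so the inequality can be salvaged in the needed direction, but the equality you assert is false.) The extra register also matters quantitatively: conditioning on $AY'B$ lets the sum over $y$ be absorbed as $∑_y \braket{y|σ¬{Y'B}|y} = σ¬B$, whereas conditioning only on $AB$ leaves a bare sum over the outcomes of $\cY$ and degrades the bound to $\log(1/c) - \log d¬Y$. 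Your fidelity-based fallback at least aims at a lower bound on $\hmax{Y|C}[ρ]$, which is the correct direction, but it contains no actual construction of the candidate $σ¬C$ and so cannot be assessed as a proof.
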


\begin{proof}
  It will be helpful to describe the two measurements in the Stinespring
  dilation picture~(cf.~Lemma~\ref{lm:stinespring}) 
  as isometries followed by a partial trace.  Let $U$
  be the isometry from $A$ to $A$, $X$ and $X'$ given by $U := ∑_x √{M_x} ⨂ \ket{x}
  ⨂ \ket{x}$. The isometry stores two copies of the
  measurement outcome in the registers $X$ and $X'$ and the measured 
  state in $A$. Here, $\{ \ket{x} \}$ is an orthonormal basis 
  of $ℋ¬X \iso ℋ¬{X'}$.
  Analogously, $V := ∑_y  √{N_y} ⨂ \ket{y} ⨂ \ket{y}$.  
  Furthermore, we introduce the states $ρ¬{AXX'BC} = U ρ¬{ABC} U†$ and $ρ¬{AYY'BC} =
  V ρ¬{ABC} V†$, of which the post-measurement states appearing in 
  Eq.~§[ucr/noeps], $ρ¬{XB}$ and $ρ¬{YC}$, are marginals.
  
  The duality relation (cf.~Lemma~\ref{lm:min-max/dual}) applied to $\rhot¬{AYY'BC}$ gives
  \begin{align}
    \hmax{Y|C}[ρ] + \hmin{Y|AY'B}[ρ] = 0 ¶[ucr/mo1]\,.
  \end{align}
  Comparing~§[ucr/mo1] with the statement of the theorem, it
  remains to show that $\hmin{Y|AY'B}[ρ] ≤ \hmin{X|B}[ρ] - \log \frac{1}{c}$ holds.
  More precisely, we will show that
  \begin{align}
   \hmin{Y|Y'AB}[ρ] &= \max_{σ}\, \sup \{ λ ∈ ℝ : ρ¬{AYY'B} ≤ 2^{-λ} ⅈ¬Y ⨂ σ¬{AY'B} \} ¶\\
    &≤ \sup \{ λ ∈ ℝ : ρ¬{XB} ≤ 2^{-λ} c\, ⅈ¬X ⨂ σ¬B \} ¶[ucr/mo3]\\
    &= \hmin{X|B}[ρ] - \log \frac{1}{c} ¶\,.
	\end{align}
  In order to arrive at~§[ucr/mo3], we thus need to show 
  that, for any $σ ∈ \onorm{ℋ¬{AY'B}}$, the following implication holds
  \begin{align}
    ρ¬{AYY'B} ≤ 2^{-λ} ⅈ¬Y ⨂ σ¬{AY'B} \implies 
    ρ¬{XB} ≤ 2^{-λ} c\, ⅈ¬X ⨂ σ¬B ¶[ucr/mo6]\,.
  \end{align}

  To show this, we apply the partial isometry $W := U V†$ followed by a partial
  trace over $X'$ and $A$ on both sides of the inequality on the left-hand
  side. This implies
  \begin{align}
    2^{λ} ρ¬{XB} ≤ \tr<b>[X'A]{ W ( ⅈ¬Y ⨂ σ¬{AY'B} ) W† } ¶[ucr/mo10] \, .
  \end{align}
  Moreover, substituting the definition of $W$, we find that the trace term on 
  the rhs.\ evaluates to
  \begin{align}
    \tn-{rhs.} = ∑_{x,y} \proj{x} ⨂ \braket<b>{y| \tr<b>[A]{√{M_x} √{N_y}\, 
      σ¬{AY'B} √{N_y} √{M_x} } |y} ¶[ucr/mo9]\,.
  \end{align}
  Lemma~\ref{lm:pt-norm-bound}, in particular Eq.~§[pt-norm-bound], now establishes that
  \begin{align}
    \tr<b>[A]{√{M_x} √{N_y}\, σ¬{AY'B} √{N_y} √{M_x} } 
       &≤ \norm<b>{ √{M_x} √{N_y} }[∞]^2\, σ_{Y'B} ≤ c\cdot σ¬{Y'B}¶\,.
  \end{align}
  Combining this with~§[ucr/mo9] and~§[ucr/mo10] results in the inequality
  \begin{align}
    2^{λ} ρ¬{XB} ≤ 2^{-λ} c\, ∑_{x,y} \proj{x} ⨂ \braket{y|σ¬{BY'}|y} 
      = 2^{-λ} c\, ⅈ¬X ⨂ σ¬B ¶\,.
  \end{align}
   This establishes~§[ucr/mo6] and concludes the proof.
\end{proof}

\subsection{Modeling the Measurement Basis Explicitly}
\label{sc:ucr/model}

An alternative formulation of the uncertainty principle requires an additonal random variable, $Θ$, which determines the choice of measurement on the $A$ system. Consider, for example, the setup of the previous section, where the choice is between two incompatible measurements, $\cX$ and $\cY$, which we assume both have the same number of different outcomes. In this case the random experiment of picking $θ ∈ Θ = \{ 0, 1 \}$ determines the binary choice of measurement. More specifically, say that $θ = 0$ leads to a measurement of $\cX$ and $θ = 1$ leads to a measurement of $\cY$. The measurement outcome, in either case, is stored in a classical register $Z$. 

If $Θ$ is uniform and independent of the state $ρ$ before measurement, we find
\begin{align}
  \hmin{Z|B Θ} + \hmax{Z|C Θ} ≥ \log \frac{1}{c(\cX, \cY)} ¶[ucr/alt-nonsmooth]\,.
\end{align}
To see this, note that~§[ucr/noeps] implies
\begin{align}
  2^{-\hmin{X|B}} ≤ c\, 2^{\hmax{Y|B}} \tn*{and} 2^{-\hmin{Y|B}} ≤ c\, 2^{\hmax{X|B}} ¶\,.
\end{align}
Taking the convex sum with equal weight $\frac{1}{2}$ of these two inequalities leads to~§[ucr/alt-nonsmooth], where we used the fact that the min- and max-entropies with classical side information can be expressed as averages (cf.\ Proposition~\ref{pr:classical-side-info}). 

A generalization of this type of uncertainty relation to smooth entropies will be 
discussed below in Corollary~\ref{co:ucr-basis}. 

%................................

\section{The Generalized Uncertainty Relation}
\label{sc:ucr/g}

\subsection{Effective Overlap}
☼*{effective overlap|see{overlap}}

The overlap $c(\cX, \cY)$ used in Theorem~\ref{th:ucr} is a function of the POVM elements 
of the two measurements under consideration and independent of the state prior to measurement.
This is often desirable because this state might be unknown, or, in a cryptographic setting, prepared by an adversary.
However, we will see that in some situations partial knowledge about the state before measurement can be used to improve the bound on the uncertainty.
What follows is thus a generalized uncertainty relation of the form of Theorem~\ref{th:ucr} that introduces a trade-off between information about the marginal state before measurement and tightness of the uncertainty relation. Specifically, we consider the ☼[effective overlap]{overlap!effective} of a
☼{measurement setup}, denoted $c^*$, which describes the overlap of
two measurements on a given ☼{marginal} state.

\begin{definition}[Effective Overlap]
  \label{df:overlap-eff}
  Let $ρ¬A ∈ \osub{ℋ¬A}$ be a state and let $\cX = \{ M_x \}$, $\cY = \{ N_y \}$ be two POVMs 
  on $ℋ¬A$. Then, we call the triple $\{ ρ¬A, \cX, \cY \}$ a measurement setup.
  The effective overlap of this measurement setup is defined as
  \begin{align}
    c^*(ρ¬A, \cX, \cY) := \min_{\cK} \bigg\{ \sum_k \tr{P^k ρ} 
      \max_x \norm<B>{P^k \sum_y N_y M_x N_y}[∞] \bigg\} ¶\,
  \end{align}
  where the minimum is taken over all projective measurements $\cK = \{ P^k \}$ on $ℋ¬A$ that
  commute with both $\cX$ and $\cY$.\footnote{The property that two measurements 
  $\cX$ and $\cK$ commute is equivalent to the condition $M_x P^k = P^k M_x$ for all $x$ and $k$.}
\end{definition}

In the following sections, we will show that an UCR also holds for this definition
of effective overlap.

As a first example of the usefulness of such a generalized 
UCR, consider the scenario where we apply one of two projective measurements, either in the basis 
$\{ \ket{0}, \ket{1}, \ket{\perp} \}$ or $\{
\ket{+}, \ket{-}, \ket{\perp} \}$ on a state $ρ$ which has the property that
`$\perp$' is measured with probability at most $η$.\footnote{The diagonal states $\ket{±}$ are defined as $\ket{±} := (\ket{0} ± \ket{1})/√{2}$.} 
A direct application of the state-independent uncertainty relation~(Theorem~\ref{th:ucr}) to this setup will not lead to the desired results as the overlap of the two bases is trivially $c = 1$. Still, our intuitive understanding of this situation tells us that the uncertainty about
the measurement outcome is high as long as $η$ is small. 

In fact, the effective overlap of this setup satisfies $c_* ≤ (1 - η) \frac{1}{2} + η$. This formula can be interpreted as follows: with probability $1-η$ we are in the subspace spanned by $\ket{0}$ and $\ket{1}$, where the overlap is $\frac{1}{2}$, and with probability $η$ we measure $\perp$ and have full overlap.
To prove this upper bound, simply choose the projective measurement
$\cK = \{ \proj{0} + \proj{1}, \proj{\perp} \}$ in Definition~\ref{df:overlap-eff}.
Hence, while Theorem~\ref{th:ucr} thus only provides a trivial bound for this example, an uncertainty relation in terms of the effective overlap would give the expected bound.

We get a state-independent bound on the effective 
overlap with the choice
$\cK = \{ ⅈ¬A \}$, namely
\begin{align}
  c^*(ρ¬A, \cX, \cY) ≤ \max_x \norm<B>{∑_y N_y M_x N_y}[∞] ¶\,.
\end{align}
Furthermore, note that for projective measurements $\cX$ and
$\cY$, the rhs.\ can be simplified to $\max_{x, y} \abs<b>{\braket{x|y}}^2$, in 
agreement with the the usual overlap in Definiton~\ref{df:overlap}. In general, it is conjectured that $c^* ≤ c$, i.e.
\begin{align}
  \max_x \norm<B>{ ∑_y N_y M_x N_y }[∞]\! ≤\, \max_{x,y} \norm<B>{ √{N_y} M_x √{N_y} }[∞]\!
    =\, \max_{x,y} \norm<B>{ √{M_x} √{N_y} }[∞]^2  ¶[conj/c-ceff].
\end{align}

Finally, if the two measurements on the $A$ system have binary outcomes, it is possible to
upper bound the effective overlap (and, thus, lower bound the uncertainty) 
by the maximal CHSH Bell violation~\cite{CHSH} 
that can be observed using this measurement setup on $A$ with an arbitrary second party 
(i.e.\ on an arbitrary extension of the state and using arbitrary measurements by the other party.) 
This establishes an analytic
relation between two fundamental concepts in quantum theory, Bell non-locality and uncertainty relations. We refer the interested reader to~\cite{haenggi11}, where this relation is
discussed in detail.

\subsection{The Generalized Uncertainty Relation}

We now consider a theorem that gives a very general formulation of the uncertainty principle for smooth entropies. It gives a lower bound on the uncertainty\,|\,in terms of smooth min- and max-entropies\,|\,about the outcome of two (incompatible) measurements, $X$ and $Z$, conditioned on quantum side information and the result of an additional, projective measurement, $K$, that was done on the state before measuring $X$ and $Z$.

More formally, we consider an arbitrary tripartite quantum state, $ρ¬{ABC}$, two POVMs on $A$, $\cX = \{ M_x \}$ and $\cY = \{ N_y \}$, as well as a projective measurement, $\cK = \{ P_k \}$. The post-measurement states when $X$ and $Y$ are measured after $K$ are
\begin{align}
  ρ¬{XKB} &= ∑_{x,k} \proj{x} ⨂ \proj{k} ⨂ \tr[AC]{√{M_x} P^k ρ¬{ABC} P^k √{M_x}}
    \tn*{and} ¶[ucr/pm-xk]\\
  ρ¬{YKC} &= ∑_{y,k} \proj{y} ⨂ \proj{k} ⨂ \tr[AB]{√{N_y} P^k ρ¬{ABC} P^k √{N_y}} 
    ¶[ucr/pm-yk]\,.
\end{align}

The following theorem generalizes previously known uncertainty relations for the smooth min-
and max-entropies in a tripartite setting.\footnote{The author is tempted\,|\,but will resist\,|\,to call it the mother uncertainty relation.} The UCRs 
discussed in Section~\ref{sc:ucr/coll} are corollaries of this relation.

\begin{theorem}[Generalized UCR]
  \label{th:ucr-gen}
  Let $ε ≥ 0$, $\bar{ε} > 0$ and $ρ ∈ \osub{ℋ¬{ABC}}$. Moreover, let $\cX = \{ M_x \}$, 
  $\cY = \{ N_y \}$ be POVMs on $ℋ¬A$ and $\cK = \{ P^k \}$ a projective measurement on $ℋ¬A$. 
  Then, the post-measurement states~§[ucr/pm-xk] and~§[ucr/pm-yk] satisfy
  \begin{align}
    \hmin*{X|KB}[ρ] + \hmax*{Y|KC}[ρ] &≥ \log \frac{1}{c_{\cK}}
      \tn*{and} ¶[ucr/thm1]\\
    \hmin[2ε+\bar{ε}]{X|KB}[ρ] + \hmax*{Y|KC}[ρ] &≥ \log \frac{1}{c_{\cK}^*} - 
      \log \frac{2}{\bar{ε}^2} ¶[ucr/thm2]\,.      
  \end{align}
  where the $\cK$-overlap, $c_{\cK}$, and the effective $\cK$-overlap, $c_{\cK}^*$, are given as
  \begin{align}
    c_{\cK} &:= \max_{k,x,y} \norm<b>{√{M_x} P^k √{N_y}}[∞] \tn*{and} ¶\\
    c_{\cK}^* &:= ∑_k \tr{P^k ρ} \max_{x} \norm<B>{∑_y P^k N_y P^k M_x P^k N_y P^k}[∞] ¶.
  \end{align}
\end{theorem}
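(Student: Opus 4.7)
The plan is to prove the two inequalities separately, with the first being a direct generalization of Theorem~\ref{th:ucr} and the second requiring a Rényi-entropy detour to obtain the state-dependent effective overlap.

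For the first inequality~§[ucr/thm1], I would mimic the Stinespring-dilation proof of Theorem~\ref{th:ucr}, but fold the pre-measurement $\cK$ into each isometry. Concretely, I would introduce isometries from $A$ to $AXX'KK_U'$ and from $A$ to $AYY'KK_V'$ defined by
\begin{align}
  U := \sum_{x,k} √{M_x}\, P^k ⨂ \ket{k}¬{K} ⨂ \ket{k}¬{K_U'} ⨂ \ket{x}¬X ⨂ \ket{x}¬{X'} \tn*{and}
  V := \sum_{y,k} √{N_y}\, P^k ⨂ \ket{k}¬{K} ⨂ \ket{k}¬{K_V'} ⨂ \ket{y}¬Y ⨂ \ket{y}¬{Y'},
\end{align}
which are genuine isometries because $\sum_{x,k} P^k M_x P^k = ⅈ$, and whose marginals on $XKB$ and $YKC$ reproduce the post-measurement states~§[ucr/pm-xk] and~§[ucr/pm-yk]. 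After invoking the duality relation $\hmax{Y|KC}[ρ] = -\hmin{Y|KK_V'AY'B}[ρ]$ on the $V$-dilated state, it suffices to show $\hmin{Y|KK_V'AY'B}[ρ] ≤ \hmin{X|KB}[ρ] - \log(1/c_\cK)$. Applying the partial isometry $W := UV†$ to an operator inequality of the form $ρ¬{AYY'KK_V'B} ≤ 2^{-λ}\, ⅈ¬Y ⨂ σ¬{AY'KK_V'B}$ and tracing out $AX'K_U'$, the identity $P^k P^{k'} = δ_{k,k'} P^k$ collapses the $k$-sum, and Lemma~\ref{lm:pt-norm-bound} applied to each $\trace¬A \bigl\{ √{M_x} P^k √{N_y}\, σ¬{AY'B}\, √{N_y} P^k √{M_x} \bigr\}$ introduces the factor $\norm{√{M_x} P^k √{N_y}}[∞]^2 ≤ c_\cK$. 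The smooth version then follows by applying this non-smooth bound to the pure purification of a minimizer $\rhob ∈ \ball{ρ¬{ABC}}$ of $\hmax*{Y|KC}[ρ]$; monotonicity of the purified distance under the dilating isometry $U$ guarantees that the resulting post-measurement state on $XKB$ is $ε$-close to $ρ¬{XKB}$, so the smooth min-entropy lower bound survives.

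For the second inequality~§[ucr/thm2], the state-dependent factor $c_\cK^*$ and the additive $\log(2/\bar{ε}^2)$ correction strongly suggest routing the argument through a Rényi-$2$ (collision) entropy and then invoking Proposition~\ref{pr:min-renyi} at $α = 2$ to smooth. My plan is to first establish a non-smooth inequality of the shape
\begin{align}
  S_2\bigl(ρ¬{XKB} \,\|\, ⅈ¬X ⨂ σ¬{KB}\bigr) + \hmax{Y|KC}[ρ] \,≥\, \log \frac{1}{c_\cK^*}
\end{align}
for a suitably chosen $σ¬{KB}$. This is motivated by the fact that applying $\cY$ twice in succession leaves the post-measurement distribution on $Y$ invariant while producing precisely the sandwiched operator $\sum_y N_y\, \cdot\, N_y$ on the $A$ system. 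Averaging the block-wise version of this identity against the weights $\tr{P^k ρ}$ yields the convex combination that defines $c_\cK^*$. Once this Rényi-2 uncertainty relation is in place, Proposition~\ref{pr:min-renyi} with $α = 2$ converts it, at cost $\log(2/\bar{ε}^2)$, into a bound on $S_{\min}^{\bar{ε}}(\rhot¬{XKB} \,\|\, ⅈ¬X ⨂ σ¬{KB}) ≤ \hmin[\bar{ε}]{X|KB}[ρ]$, and one last layer of $ε$-smoothing on $\hmax*{Y|KC}$ (handled exactly as in the first part, propagating a $2ε$ inflation through the triangle inequality for the purified distance on the purification) delivers~§[ucr/thm2].

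The main obstacle I expect is the collision-entropy step: identifying the right ancillary state $σ¬{KB}$ so that the operator $∑_y P^k N_y P^k M_x P^k N_y P^k$ appears as an operator norm bound rather than a looser trace bound, and handling the non-commutativity of $\cK$, $\cX$, $\cY$ (the theorem does not assume these commute) carefully enough that the blocks labeled by $k$ truly factor out of the Rényi-$2$ calculation. If this direct route proves recalcitrant, the fall-back plan is to first establish~§[ucr/thm2] under the commutation assumption of Definition~\ref{df:overlap-eff}\,|\,which is the regime minimising $c^*$ anyway\,|\,by using $\cK$ to block-diagonalize both $\cX$ and $\cY$, applying the first inequality~§[ucr/thm1] within each block with overlap $c_k = \max_x \norm{∑_y P^k N_y P^k M_x P^k N_y P^k}[∞]$, and assembling the $k$-blocks using Proposition~\ref{pr:classical-side-info} together with concavity of $\log$.
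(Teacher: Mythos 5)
Your treatment of the first inequality follows the paper's route (Stinespring dilations that absorb $\cK$, duality, conjugation by the partial isometry $W = U V†$, and Lemma~\ref{lm:pt-norm-bound}), but the smoothing step is set up with the ball in the wrong place: the minimizer of $\hmax*{Y|KC}[ρ]$ lives in an $ε$-ball around the \emph{post-measurement} state $ρ¬{YKC}$, not around $ρ¬{ABC}$. If you smooth over pre-measurement states instead, you only control the max-entropy of a state that upper-bounds $\hmax*{Y|KC}[ρ]$, and the final inequality comes out with the wrong sign. The paper avoids this by dualizing first and smoothing the min-entropy $\hmin*{Y|AY'K'B}$ on the dilated space; the resulting operator inequality is pushed through $W$, and a separate argument (agreement of $\taub¬{XKB}$ and $\rhob¬{XKB}$ on the diagonal blocks of $KK'$) is needed to certify that the output state is still $ε$-close to $ρ¬{XKB}$. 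This ordering matters, and your sketch does not reproduce it; also note that the dual of $\hmax{Y|KC}$ conditions on the complement $AY'K'B$, so $K$ should not reappear in the conditioning system of the min-entropy.

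The second inequality is where the real gap lies. The $\log \frac{2}{\bar{ε}^2}$ term does not come from a Rényi-$2$ smoothing step, and the $N_y (\cdot) N_y$ sandwich does not come from applying $\cY$ twice. The paper's mechanism is Lemma~\ref{lm:rel-smooth-bound}: at cost $\log \frac{2}{\bar{ε}^2}$ one may replace the optimal conditioning operator in the dualized min-entropy by the actual marginal $ρ¬{AY'K'B} = ∑_{y,k}\! √{N_y} P^k ρ¬{AB} P^{k}\! √{N_y} ⨂ \proj{k} ⨂ \proj{y}$; conjugating $ⅈ¬{YK} ⨂ ρ¬{AY'K'B}$ with $W$ then combines the $√{N_y}$ coming from this marginal with the $√{N_y}$ coming from $W$ into $N_y$, and taking the trace of the resulting block operators produces the weights $\tr{P^k ρ}$. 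That is the entire source of $c_{\cK}^*$; no collision entropy is involved, and your proposed Rényi-$2$ uncertainty relation with constant $c_{\cK}^*$ is unsupported. The fallback also fails on two counts: first, the inequality~§[ucr/thm1] applied within a block $k$ yields the ordinary block overlap $\max_{x,y} \norm<b>{√{M_x} P^k √{N_y}}[∞]^2$ rather than the effective one with the sum over $y$ inside the norm (that the latter is smaller is only conjectured in~§[conj/c-ceff]); second, the reassembly step requires $∑_k p_k c_k 2^{h_k} ≤ \big(∑_k p_k c_k\big)\big(∑_k p_k 2^{h_k}\big)$ to convert per-block bounds combined via Proposition~\ref{pr:classical-side-info} into the weighted-average overlap, and this reverse Chebyshev inequality is false in general (take two blocks with the larger overlap paired with the larger max-entropy).
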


\begin{proof}
  We will in the following prove the statement for pure $ρ¬{ABC}$. Its generalization to mixed states then trivially 
  follows from data processing inequalities for the smooth entropies 
  (cf.\ Theorem~\ref{th:data-proc}). More specifically, we consider a purification $ρ¬{ABCD}$ of 
  $ρ¬{ABC}$, for which~§[ucr/thm1] and~§[ucr/thm2] hold with the substitution $C → CD$ and then 
  take the partial trace over $D$. As this cannot decrease the smooth max-entropy, the 
  generalization follows.

  We consider the Stinespring dilation (cf.~Lemma~\ref{lm:stinespring}) of the joint 
  measurement of $\cX$ and $\cK$, denoted $U$, which coherently stores the measurement
  outcome of $\cX$ in registers $X$ and $X'$ and the measurement outcome of 
  $\cK$ in $K$ and $K'$. The isometry $U: ℋ¬A → ℋ¬{AXX'KK'}$ is given by
  \begin{align}
    U := ∑_{x,k} \ket{x}[X] ⨂ \ket{x}[X'] ⨂ \ket{k}[K] ⨂ \ket{k}[K'] ⨂ √{M_x}\, P^k ¶\,.
  \end{align}
  Similarly, we introduce the Stinespring dilation of the joint measurement of 
  $\cY$ and $\cK$ as
  \begin{align}
    V := ∑_{y,k} \ket{y}[Y] ⨂ \ket{y}[Y'] ⨂ \ket{k}[K] ⨂ \ket{k}[K'] ⨂ √{N_y}\, P^k ¶\,.
  \end{align}
  We will also need the partial isometry $W := U V†$, which, using 
  $P^k P^{k'} = δ_{kk'} P^k$, evaluates to
  \begin{align}
    W = ∑_{x,y,k} \ket{x}\!\bra{y} ⨂ \ket{x}\!\bra{y} ⨂ \proj{k} ⨂ \proj{k} 
      ⨂ √{M_x} P^k √{N_y} ¶[ucr/gp0]\,.
  \end{align}
  
  These isometries allow us to introduce the states $ρ¬{AXX'KK'BC} = U ρ¬{ABC} U†$
  and $ρ¬{AYY'KK'BC} = V ρ¬{ABC} V†$, whose marginals are the 
  post-measurement states $ρ¬{XKB}$ and $ρ¬{YKC}$ of~§[ucr/pm-xk] and~§[ucr/pm-yk],
  respectively.
  
  The proof now proceeds in several steps. First, we reformulate the statement of the theorem 
  in terms of smooth min-entropies using the duality relation. 
  Then, in order to show~§[ucr/thm1], we use use 
  techniques similar to the ones discussed in Section~\ref{sc:ucr/proof} 
  to extract the $\cK$-overlap.
  To show~§[ucr/thm2], we first use Lemma~\ref{lm:rel-smooth-bound} to find an upper bound 
  on one of the entropies in terms of a relative entropy that is conditioned on $ρ$. The
  properties of $ρ$ can then be used to extract the effective $\cK$-overlap.
  
% (proof of first statement)

  Due to the duality between smooth min- and max-entropy (cf.~Theorem~\ref{th:smooth-dual}), 
  the first statement of the theorem is equivalent to
  \begin{align}
    \hmin*{X|KB}[ρ] ≥ \hmin*{Y|AY'K'B}[ρ] + \log \frac{1}{c_{\cK}} ¶[ucr/gq1]\,.
  \end{align}
  Let $\rhob$ be a state that maximizes the smooth min-entropy on the rhs. Then, there 
  exists a $σ$ such that
  \begin{align}
    λ = S_{\min}(\rhob¬{AYY'K'B} \| ⅈ¬{Y} ⨂ σ¬{AY'K'B}) = \hmin*{Y|AY'K'B}[ρ] ¶[ucr/gq2]\,.
  \end{align}
  
  The state $\rhob$ can be chosen classical on $K'$ due to Proposition~\ref{pr:class-smooth}.
  Furthermore, note that the minimum distance purification of $\rhob$, since its marginal is 
  classical on $K'$, also inherits the coherence between $K$ and $K'$ from $ρ$. This follows from
  \begin{align}
    P( Π \rhob Π, ρ) = P( Π \rhob Π, Π ρ Π) ≤ P( \rhob, ρ) ¶\,,
  \end{align}
  where $Π = \sum_k \proj{k}[K] ⨂ \proj{k}[K']$ is the projector onto coherent superpositions 
  of $K$ and $K'$. We also used that $\tr[K]{Π \rhob Π} = ℳ[\rhob] = \rhob$, 
  where $ℳ$ measures the $K'$ system; thus, the projection does not change the marginal state. 
  We also need another extension of $\rhob$ to $K$, which we define via the TP-CPM $ℰ: ρ ↦ ∑_k 
  \proj{k}[K']\, ρ\, \proj{k}[K'] ⨂ \proj{k}[K]$. This map measures $K'$ and creates a
  (classical) copy of it in $K$. Since $\rhob¬{AYY'K'B}$ is classical on $K$, 
  it is easy to see that $\taub¬{AYY'KK'B} = ℰ[\rhob¬{AYY'K'B}]$ is an extension of 
  $\rhob¬{AYY'K'B}$. Furthermore, we can equivalently write 
  $\taub¬{AYY'KK'B} = Π (\rhob¬{AYY'K'B} ⨂ ⅈ¬{K}) Π$ and note that the two extensions 
  agree on the diagonals:
  \begin{align}
    \forall k, k': \braket{kk'|\taub¬{AYY'KK'B}|kk'} = \braket{kk'|\rhob¬{AYY'KK'B}|kk'} ¶[ucr/gp7]\,.
  \end{align}

  From the definition of $S_{\min}(\rhob¬{AYY'K'B} \| ⅈ¬{Y} ⨂ σ¬{AY'K'B})$, we get
  \begin{align}
    \rhob¬{AYY'K'B} ≤ 2^{-λ}\, ⅈ¬Y ⨂ σ¬{AY'K'B} \,. ¶[ucr/gq3] 
  \end{align}
  Taking the tensor product with $ⅈ¬{K}$ on both sides of~§[ucr/gq3] leads to
  \begin{align}
    \rhob¬{AYY'K'B} ⨂ ⅈ¬{K} ≤ 2^{-λ}\, ⅈ¬{YK} ⨂ σ¬{AY'K'B} ¶\,.
  \end{align}
  We conjugate this inequality with $W$ and take the partial trace over 
  $A$, $Y'$ and $K'$ to get
  \begin{align}
    \taub¬{XKB} &= \tr<b>[AX'K']{W (\rhob¬{AYY'K'B} ⨂ ⅈ¬{K}) W†} ¶\\
      &≤ 2^{-λ}\, \tr<b>[AX'K]{W (ⅈ¬{YK} ⨂ σ¬{AY'K'B}) W†} ¶[ucr/gq5]\,.
  \end{align}

  We first evaluate the trace term on the rhs.\ of~§[ucr/gq5] using the expression for 
  $W$ in~§[ucr/gp0] and Lemma~\ref{lm:pt-norm-bound} as in Eq.~§[pt-norm-bound]. We get
  \begin{align}
    &\tr<b>[AX'K']{W (ⅈ¬{YK} ⨂ σ¬{AY'K'B}) W†} ¶\\
    &\quad\!\! = ∑_{x,y,k} \proj{x} ⨂ \proj{k} ⨂ 
      \braket{yk|\tr<b>[A]{√{M_x} P^k\! √{N_y} σ¬{AY'K'B} √{N_y} P^k\! √{M_x}}|yk} ¶\\
    &\quad\!\! ≤ ⅈ¬X ⨂ ∑_k \proj{k} ⨂ \max_{x,y} \norm<B>{√{N_y} P^k\! √{M_x}}[∞]^2 
      \braket{k|σ¬{KB}|k} ¶\\
    &\quad\!\! = ⅈ¬X ⨂ \omegat¬{KB} ¶\,.
  \end{align}
  The last equality defines the operator $\omegat$.
  Note that $\tr{\omegat¬{KB}} ≤ c_{\cK}$; hence, using $ω¬{KB} = \omegat¬{KB}/c_{\cK} ∈ 
  \osub{ℋ¬{KB}}$ and~§[ucr/gq5], we find a lower bound on 
  $S_{\min}(\taub¬{XKB} \| ⅈ¬X ⨂ ω¬{KB})$ in
  terms of $λ$ and $c_{\cK}$.
  \begin{align}
    S_{\min}(\taub¬{XKB} \| ⅈ¬X ⨂ ω¬{KB}) ≥ λ + \log \frac{1}{c_{\cK}} = \hmin*{Y|AY'K'B}[ρ] + 
      \log \frac{1}{c_{\cK}} ¶[ucr/gq6],
  \end{align}
  where we substituted~§[ucr/gq2] for $λ$.
  The state $\taub¬{XKB}$ is sub-normalized.  %
  Furthermore, its purified distance from $ρ¬{XKB}$ is bounded by
  \begin{align}
    P(\taub¬{XKB}, ρ¬{XKB}) = P(\rhob¬{XKB}, ρ¬{XKB}) ≤ ε ¶\,,
  \end{align}
  where we used that $\taub¬{XKB} = \rhob¬{XKB}$. This follows from the property that the two
  states agree on the diagonals of $\ket{kk}$\,|\,in the sense of~§[ucr/gp7]\,|\,together 
  with the expression
  \begin{align}
    \taub¬{XKB} &= \tr<b>[AX'K']{W \taub¬{AYY'KK'B} W†} = ∑_{k,x,y,y'} \proj{x} ⨂ \proj{k} ⨂ ¶\\
    &\qquad \ \ \tr<B>[A]{√{M_x} P^k √{N_y} \braket<b>{yykk|\taub¬{AYY'KK'B}|y'y'kk} √{N_{y'}} 
       P^k √{M_x}} ¶,
  \end{align}
  and the respective expression for $\rhob¬{XKB}$.
  Hence, using the definition of the smooth min-entropy, we get
  \begin{align}
    \hmin*{X|KB}[ρ] ≥ S_{\min}(\taub¬{XKB} \| ⅈ¬X ⨂ ω¬{KB})¶\,,
  \end{align}
  which, substituted into~§[ucr/gq6], concludes the proof of~§[ucr/thm1].

% (proof of second statement)

  It remains to show the second statement of the theorem. Again, due to the duality between 
  smooth min- and max-entropy this is equivalent to
  \begin{align}
    \hmin[2ε+\bar{ε}]{X|KB}[ρ] ≥ \hmin*{Y|AY'K'B}[ρ] + 
      \log \frac{1}{c_{\cK}^*} - 
      \log \frac{2}{\bar{ε}^2} ¶[ucr/gp1]\,.
  \end{align}
  
  Applying Lemma~\ref{lm:rel-smooth-bound}, we define the state $\rhot ∈ 
  \ball[2ε+\bar{ε}]{ρ}$ such that the following holds:
  \begin{align}
    S_{\min}(\rhot¬{AYY'K'B} \| ⅈ¬Y ⨂ ρ¬{AY'K'B}) 
      ≥ \hmin*{Y|AY'K'B}[ρ] - \log \frac{2}{\bar{ε}^2} ¶\,.
  \end{align}
  We use the monotonicity of $S_{\min}$ under TP-CPMs to measure the 
  $K$ system, i.e.\ we apply the map $ℳ: τ ↦ ∑_k \proj{k}[K']\, τ\, \proj{k}[K']$ to both states 
  in $S_{\min}$ above. 
  This will have no effect on $ρ¬{AY'K'B}$, which is classical on $K'$ by definition. Using the 
  state $\rhob¬{AYY'K'B} = ℳ[\rhot¬{AYY'K'B}]$, we have
  \begin{align}
    λ = S_{\min}(\rhob¬{AYY'K'B}\|ⅈ¬Y ⨂ ρ¬{AY'K'B}) 
      &≥ \hmin*{Y|AY'K'B}[ρ] - \log \frac{2}{\bar{ε}^2} 
      ¶[ucr/gp2]\,.
  \end{align}
  Moreover, the purified distance satisfies 
  \begin{align}
    P(\rhob, ρ) &= P(ℳ[\rhot], ℳ[ρ]) ≤ P(\rhot, ρ) ≤ 2ε + \bar{ε} ¶\,.
  \end{align}
  From the definition of 
  $S_{\min}(\rhob¬{AYY'K'B}\|ⅈ¬Y ⨂ ρ¬{AY'K'B})$, we get
  \begin{align}
    \rhob¬{AYY'K'B} ≤ 2^{-λ}\, ⅈ¬Y ⨂ ρ¬{AY'K'B} \,, ¶[ucr/gp3] 
  \end{align}
  where we employed the marginal state
  \begin{align}
    ρ¬{AY'K'B} = \tr[YK']{V ρ¬{AB} V†} = ∑_{y,k}\! √{N_y} P^k ρ¬{AB} P^{k}\! √{N_y} ⨂ 
      \proj{k} ⨂ \proj{y} ¶[ucr/gp4].
  \end{align}
  The remainder follows the proof of the first statement closely. However,
  we will take advantage of the form of the marginal state in~§[ucr/gp4]
  to extract the effective $\cK$-overlap.
  Taking the tensor product with $ⅈ¬{K}$ on both sides of~§[ucr/gp3] leads to
  \begin{align}
    \rhob¬{AYY'K'B} ⨂ ⅈ¬{K} ≤ 2^{-λ}\, ⅈ¬{YK} ⨂ ρ¬{AY'K'B} ¶\,.
  \end{align}
  We conjugate this inequality with $W$ and take the partial trace over 
  $A$, $Y'$ and $K'$ to get
  \begin{align}
    \taub¬{XKB} &= \tr<b>[AX'K']{W (\rhob¬{AYY'K'B} ⨂ ⅈ¬{K}) W†} ¶\\
      &≤ 2^{-λ}\, \tr<b>[AX'K]{W (ⅈ¬{YK} ⨂ ρ¬{AY'K'B}) W†} ¶[ucr/gp5]\,.
  \end{align}

  We again evaluate the trace term on the rhs.\ of~§[ucr/gp5] using the expressions for 
  $W$ and $ρ¬{AY'K'B}$ in~§[ucr/gp0] and~§[ucr/gp4]. We get
  \begin{align}
    &\tr<b>[AX'K']{W (ⅈ¬{YK} ⨂ ρ¬{AY'K'B}) W†} ¶\\
    &\quad\!\! = ∑_{x,y,k} \proj{x} ⨂ \proj{k} ⨂ 
      \braket{yk|\tr<b>[A]{√{M_x} P^k\! √{N_y} ρ¬{AY'K'B} √{N_y} P^k\! √{M_x}}|yk} ¶\\
    &\quad\!\! = ∑_{x} \proj{x} ⨂ ∑_k \proj{k} ⨂ 
      \tr<B>[A]{∑_y √{M_x} P^k\! N_y P^k ρ¬{AB} P^k N_y P^k\! √{M_x}} ¶\\
    &\quad\!\! ≤ ⅈ¬X ⨂ ∑¬k \proj{k} ⨂ 
      \max_x \norm<B>{∑_y P^k N_y P^k M_x P^k N_y P^k}[∞] 
      \tr[A]{P^k ρ¬{AB}} ¶[ucr/gp25]\\
    &\quad\!\! = ⅈ¬X ⨂ \omegat¬{KB} ¶\,.
  \end{align}
  We used Lemma~\ref{lm:pt-norm-bound} to arrive at~§[ucr/gp25].
  The last equality defines the operator $\omegat$.
  Note that $\tr{\omegat¬{KB}} = c_{\cK}^*$; hence, we choose $ω¬{KB} = \omegat¬{KB}/c_{\cK}^* ∈ 
  \onorm{ℋ¬{KB}}$ and~§[ucr/gp5] and find a lower bound on $S_{\min}(\taub¬{XKB}\|ω¬{KB})$ in
  terms of $λ$ and $c_{\cK}^*$.
  \begin{align}
    S_{\min}(\taub¬{XKB}\|ω¬{KB}) ≥ λ + \log \frac{1}{c_{\cK}^*} ≥ \hmin*{Y|AY'K'B}[ρ] + 
      \log \frac{1}{c_{\cK}^*} - \log \frac{2}{\bar{ε}^2} ¶[ucr/gp6]\,,
  \end{align}
  where we substituted~§[ucr/gp2] for $λ$.
  We have $P(\taub¬{XKB}, ρ¬{XKB}) = P(\rhob¬{XKB}, ρ¬{XKB})\\ ≤ 2ε + \bar{ε}$.
  Hence, using the definition of the smooth min-entropy, we get
  \begin{align}
    \hmin[2ε+\bar{ε}]{X|KB}[ρ] ≥ S_{\min}(\taub¬{XKB}\|ⅈ¬X ⨂ σ¬{KB})¶\,,
  \end{align}
  which, substituted into~§[ucr/gp6], concludes the proof of the second statement
  and the theorem.
\end{proof}

%-------------------

\section{Miscellaneous Uncertainty Relations}
\label{sc:ucr/coll}

This section contains a collection of useful corollaries of Theorem~\ref{th:ucr-gen},
including the results discussed in the introduction of this chapter.

\subsection{Commuting Measurements}

A specialization of the generalized UCR which is of particular interest concerns the case when the two measurements~$\cX$ and $\cY$ both commute with $\cK$.
In this case, the marginal states of~§[ucr/pm-xk] and~§[ucr/pm-yk] when $K$ is traced out correspond to the post-measurement states when only $\cX$ and $\cY$ are measured, 
Eqs.~§[ucr/pm-x] and~§[ucr/pm-y]. Formally,
\begin{align}
  \tr[K]{ρ¬{XKB}} &= ∑_{x} \proj{x} ⨂ \tr[AC]{√{M_x} ρ¬{ABC} √{M_x}} = ρ¬{XB} \tn*{and} ¶\\
  \tr[K]{ρ¬{YKC}} &= ∑_{y} \proj{y} ⨂ \tr[AB]{√{N_y} ρ¬{ABC} √{N_y}} = ρ¬{YC} ¶\,.
\end{align}

They satisfy the following inequality.

\begin{corollary}
  \label{co:ucr-eff-nok}
  Let $ε > 0$, $ε' ≥ 0$, $ρ ∈ \osub{ℋ¬{ABC}}$ and $\cX = \{ M_x \}$, $\cY = \{ N_y \}$ 
  two POVMs on $ℋ¬A$. 
  Then, the post-measurement states~§[ucr/pm-x] and~§[ucr/pm-y] satisfy
  \begin{align}
    \hmin[2ε'+ε]{X|B}[ρ] + \hmax[ε']{Y|C}[ρ] ≥ \log \frac{1}{c^*(ρ¬A, \cX, \cY)} 
      -\log \frac{2}{ε^2} ¶\,,
  \end{align}
\end{corollary}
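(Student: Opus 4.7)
The plan is to obtain the corollary as a direct specialization of Theorem~\ref{th:ucr-gen}: choose $\cK = \{P^k\}$ to be a projective measurement attaining the minimum in Definition~\ref{df:overlap-eff}, so that by hypothesis $\cK$ commutes with both $\cX$ and $\cY$, and let the commutation do the work of eliminating the auxiliary register $K$.

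First I would apply Theorem~\ref{th:ucr-gen} with the triple $(\cX, \cY, \cK)$ to produce
\begin{align}
  \hmin[2ε'+ε]{X|KB}[ρ] + \hmax[ε']{Y|KC}[ρ] \geq \log \frac{1}{c_{\cK}^{*}} - \log \frac{2}{ε^{2}},
\end{align}
matching the theorem's $\bar{ε}$ with $ε$ and its $ε$ with $ε'$ (the condition $\bar{ε}>0$ becomes the hypothesis $ε>0$).

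Second, I would argue that the $K$ register can be dropped on both sides. Since $P^k$ commutes with every $\sqrt{M_x}$, cyclicity of the partial trace in Eq.~§[ucr/pm-xk] combined with $(P^k)^2 = P^k$ and $\sum_k P^k = ⅈ$ yields $\trace¬K \rho¬{XKB} = \rho¬{XB}$, the state of Eq.~§[ucr/pm-x]; an identical manipulation with $\sqrt{N_y}$ gives $\trace¬K \rho¬{YKC} = \rho¬{YC}$. The data processing inequality (Theorem~\ref{th:data-proc}) applied to the partial trace over $K$, which is trace preserving so that both the min- and max-entropy versions apply, then yields
\begin{align}
  \hmin[2ε'+ε]{X|KB}[ρ] \leq \hmin[2ε'+ε]{X|B}[ρ], \qquad \hmax[ε']{Y|KC}[ρ] \leq \hmax[ε']{Y|C}[ρ].
\end{align}

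Third, I would reduce $c_{\cK}^{*}$ to $c^{*}(\rho¬A, \cX, \cY)$. Using that $P^k$ commutes with every $N_y$ and $M_x$ together with $(P^k)^2 = P^k$, the inner operator simplifies as
\begin{align}
  \sum_y P^k N_y P^k M_x P^k N_y P^k = P^k \sum_y N_y M_x N_y,
\end{align}
so the expression for $c_{\cK}^{*}$ collapses to the very quantity being minimized in Definition~\ref{df:overlap-eff}, and the optimality of $\cK$ gives $c_{\cK}^{*} = c^{*}(\rho¬A, \cX, \cY)$. Chaining the three ingredients finishes the proof.

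The argument is essentially assembly of existing results, so I do not anticipate a genuine obstacle. The only point demanding care is the identity $\trace¬K \rho¬{XKB} = \rho¬{XB}$, where the assumption that $\cK$ commutes with $\cX$ and $\cY$ enters nontrivially through the cyclic manipulation of $P^k$ inside the trace; everything else is a bookkeeping exercise in smoothing parameters and algebraic simplification of norms.
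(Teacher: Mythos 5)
Your proposal is correct and follows essentially the same route as the paper's own proof: choose the minimizing commuting $\cK$ from Definition~\ref{df:overlap-eff}, apply Theorem~\ref{th:ucr-gen}, and remove the $K$ register via data processing (Theorem~\ref{th:data-proc}) for the partial trace. The paper states this in two sentences; you have merely filled in the (correct) details of why $\trace¬K ρ¬{XKB} = ρ¬{XB}$ and why $c_{\cK}^*$ collapses to $c^*(ρ¬A,\cX,\cY)$ under the commutation assumption.
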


\begin{proof}
  Let $\cK$ be the measurement\,|\,commuting with $\cX$ and $\cY$\,|\,that minimizes the effective 
  overlap, $c^*(ρ¬A, \cX, \cY)$, in Definition~\ref{df:overlap-eff}. The corollary now follows 
  from Theorem~\ref{th:ucr-gen} applied to $\cK$ and the data processing inequality of the smooth 
  min- and max-entropy (cf.~Theorem~\ref{th:data-proc}) applied for the partial trace over $K$. 
\end{proof}

\subsection{Modeling the Measurement Basis Explicitly}
\label{sc:ucr/basis}

The full power of Theorem~\ref{th:ucr-gen} comes to bear when we consider the following scenario.
Let $Θ$ be a classical register storing the choice of measurement that will be performed on the system $A$. For this purpose, we decompose $ℋ¬{A'} \iso ℋ¬{Θ} ⨂ ℋ¬{A}$ and consider a family of POVMs $\{ \cZ^{θ} \}$ on $ℋ¬{A}$ that share the same output alphabet. For all $θ$, let $\cZ^{θ} = \{ L_z^θ \}$ be the POVM on $ℋ¬{A}$ that is performed if $θ$ is measured on $Θ$. This process can be equivalently modeled as a POVM $\cZ$ on $ℋ¬{Θ A}$, i.e.\
\begin{align}
  \cZ := \Big\{ ∑_{θ} \proj{θ} ⨂ L_z^θ \Big\}_z ¶[ucr/povm-z]\,.
\end{align}
If $ρ¬{Θ ABC} = ∑_θ \proj{θ}[Θ] ⨂ ρ¬{ABC}^θ$ is an arbitrary state classical on $Θ$,
the post-measurement state is
\begin{align}
  ρ¬{ΘZBC} = ∑_{θ,z} \proj{θ} ⨂ \proj{z} ⨂ \tr<B>[A]{√{L_z^θ}\, ρ¬{ABC}^{θ} √{L_z^θ}} 
    ¶[ucr/pm-theta]\,.
\end{align}

This leads to the following result, which, in contrast to the uncertainty relations discussed above, relates the smooth min- and max-entropies of the same post-measurement state.

\begin{corollary}
  \label{co:ucr-basis}
  Let $ε ≥ 0$ and $\rho¬{Θ ABC} = ∑_θ \proj{θ}[Θ] ⨂ ρ¬{ABC}^θ ∈ \osub{ℋ¬{Θ ABC}}$ be 
  classical on $Θ$ and let $f: Θ → Θ$ be a 
  bijective function such that $ρ¬{ABC}^{θ} = ρ¬{ABC}^{f(θ)}$.
  Then, the post-measurement state~§[ucr/pm-theta] after measuring the 
  POVM~§[ucr/povm-z] satisfies
  \begin{align}
    \hmin*{Z|Θ B}[ρ] + \hmax*{Z|Θ C}[ρ] ≥ \log \frac{1}{c_f} ¶\,,
  \end{align}
  where $c_f = \max_{θ} c( \cZ^θ, \cZ^{f(θ)} )$.
\end{corollary}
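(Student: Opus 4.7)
The idea is to manufacture two POVMs on a common measured system from the single family $\{\cZ^\theta\}$, using the bijection $f$ together with the symmetry $\rho^\theta_{ABC}=\rho^{f(\theta)}_{ABC}$, and then invoke the already-established smooth uncertainty relation (Result~\ref{res:uc}, i.e.\ Eq.~§[ucr/thm1] with trivial $\cK$).

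First, I would triplicate the classical register $\Theta$ via the isometry $\ket{\theta}\mapsto \ket{\theta}\ket{\theta}\ket{\theta}$ to build
\begin{align*}
    \sigma¬{\Theta_1\Theta_2\Theta_3 ABC}
    := \sum_\theta \proj{\theta}¬{\Theta_1}\otimes \proj{\theta}¬{\Theta_2}
    \otimes \proj{\theta}¬{\Theta_3}\otimes \rho¬{ABC}^\theta ,
\end{align*}
which by Proposition~\ref{pr:smooth-iso} carries the same smooth conditional entropies as $\rho$. On the measured system $\Theta_1 A$ I define the original POVM $\cZ$ of~§[ucr/povm-z] and the ``$f$-twisted'' partner $\cZ_f:=\{\sum_\theta \proj{\theta}¬{\Theta_1}\otimes L_z^{f(\theta)}\}_z$. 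Both POVMs are block diagonal in the $\Theta_1$-basis, so
\begin{align*}
  \sqrt{M_z}\sqrt{N_{z'}}
  = \sum_\theta \proj{\theta}¬{\Theta_1}\otimes \sqrt{L_z^\theta}\sqrt{L_{z'}^{f(\theta)}},
\end{align*}
whence the overlap of $\cZ$ and $\cZ_f$ computes to
\begin{align*}
    c(\cZ,\cZ_f)=\max_{z,z',\theta}\norm<b>{\sqrt{L_z^\theta}\sqrt{L_{z'}^{f(\theta)}}}[\infty]^2
    =\max_\theta c(\cZ^\theta,\cZ^{f(\theta)})=c_f.
\end{align*}

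Now I apply Result~\ref{res:uc} to $\sigma$ with the tripartition $A_{\textsc{ucr}}=\Theta_1 A$, $B_{\textsc{ucr}}=\Theta_2 B$, $C_{\textsc{ucr}}=\Theta_3 C$, using $\cZ$ to obtain the min-entropy term and $\cZ_f$ to obtain the max-entropy term. The post-measurement state of $\sigma$ under $\cZ$ with side information $\Theta_2 B$ is exactly $\rho¬{Z\Theta B}$ after renaming $\Theta_2\to\Theta$. The post-measurement state under $\cZ_f$ with side information $\Theta_3 C$ is
\begin{align*}
  \sum_{\theta,z}\proj{z}\otimes\proj{\theta}¬{\Theta_3}\otimes
  \tr¬A\!\left[\sqrt{L_z^{f(\theta)}}\rho¬{AC}^\theta\sqrt{L_z^{f(\theta)}}\right];
\end{align*}
substituting $\rho^\theta=\rho^{f(\theta)}$ and relabeling $\theta\mapsto f^{-1}(\theta)$ (permissible because $f$ is a bijection) shows this equals $\rho¬{Z\Theta C}$ up to the unitary permutation $\ket{\theta}¬{\Theta_3}\mapsto\ket{f^{-1}(\theta)}¬{\Theta_3}$ on the classical register. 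By invariance of the smooth entropies under isometries (Proposition~\ref{pr:smooth-iso}), the max-entropy of this state coincides with $\hmax*{Z|\Theta C}[\rho]$, and the min-entropy term coincides with $\hmin*{Z|\Theta B}[\rho]$. Combining these identifications with the uncertainty inequality yields the claim.

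The only genuinely non-routine step is verifying that the $f$-twisted measurement produces a state related to the original post-measurement state by a mere relabeling of $\Theta$; this is precisely where the symmetry hypothesis $\rho^\theta=\rho^{f(\theta)}$ (and the implied invariance of the weights $\tr\rho^\theta$ under $f$) is used. Everything else is block-diagonal bookkeeping and a direct appeal to the tripartite smooth UCR.
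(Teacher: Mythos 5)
Your argument is correct and rests on the same essential mechanism as the paper's proof: both pair the POVM $\mathcal{Z}$ with its $f$-twisted partner $\{\sum_\theta \vert\theta\rangle\!\langle\theta\vert\otimes L_z^{f(\theta)}\}_z$, use block-diagonality in $\theta$ to collapse the overlap to $c_f$, and finish by observing that the symmetry $\rho^{\theta}_{ABC}=\rho^{f(\theta)}_{ABC}$ turns the twisted post-measurement state into $\rho_{Z\Theta C}$ up to a permutation of the $\Theta$ labels, which is a local isometry and hence entropy-preserving (Proposition~\ref{pr:smooth-iso}). The one structural difference is how $\Theta$ gets into the conditioning systems: the paper stays inside Theorem~\ref{th:ucr-gen} and chooses the commuting projective measurement $\mathcal{K}=\{\vert\theta\rangle\!\langle\theta\vert\}$, so the $K$ register produced by the measurement itself supplies the $\Theta$-conditioning on both sides; you instead triplicate the classical register up front and invoke only the basic tripartite relation (the first statement of Theorem~\ref{th:ucr-gen} with trivial $\mathcal{K}$, in its mixed-state form). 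Your route avoids the $\mathcal{K}$-machinery entirely, at the cost of the copying isometry and the extra bookkeeping verifying that the marginals of the triplicated post-measurement state reproduce the states of the corollary; the paper's route is a one-line specialization of the theorem it has already proved. Your overlap computation (operator norm of a block-diagonal operator is the maximum over blocks) and the relabeling step, where bijectivity of $f$ is genuinely used, are both sound.
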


\begin{proof}
  We consider the POVMs $\cX = \cZ$ and $\cY = \big\{ ∑_θ \proj{θ} ⨂ L_z^{f(θ)} \big\}_z$
  as well as the projective measurement $\cK = \{ \proj{θ} \}$, $k = θ$. It is easy to verify 
  that $\cK$ commutes with $\cX$ and $\cY$ and that the $\cK$-overlap of Theorem~\ref{th:ucr-gen} 
  evaluates to $c_{\cK} = c_f$. Thus,
  \begin{align}
    \hmin*{X|Θ B}[ρ] + \hmax*{Y| Θ C}[ρ] ≥ \log \frac{1}{c_f} ¶\,.
  \end{align}
  Now, the corollary follows from the observation that the post-measurement states are equivalent 
  up to local isometries, $ρ¬{Θ XB} = ρ¬{Θ ZB}$ and $ρ¬{Θ YC} = F† ρ¬{Θ ZC} F$, where
  $F = ∑_θ \ket{f(θ)}\!\bra{θ}$ and we used the natural isometries $\ket{i} ↦ \ket{i}$ 
  between the Hilbert spaces $ℋ¬Z \iso ℋ¬X \iso ℋ¬Y$. Clearly, the smooth entropies are invariant under local 
  isometries (cf.\ Proposition~\ref{pr:smooth-iso}).
\end{proof}

Note that a similar result can be derived based on the effective overlap formulation of the uncertainty relation, resulting in an overlap of
$c_f^* = \sum_θ \tr{ρ^θ}\, c( \cZ^θ, \cZ^{f(θ)} )$.

Finally, the uncertainty relation~§[ucr/alt-nonsmooth] of Section~\ref{sc:ucr/model} 
is a special case of this corollary where 
$ρ¬{Θ ABC} = \frac{1}{2} ⅈ¬{Θ} ⨂ ρ¬{ABC}$ and $f$ is the bit flip.

\subsection{The von Neumann Limit}
\label{sc:ucr/aep}

Using the asymptotic equipartition property, we directly get an uncertainty relation for the 
von Neumann entropy as well.

\begin{corollary}
  \label{co:ucr-eff-vn}
  Let $ρ ∈ \onorm{ℋ¬{ABC}}$ and $\cX = \{ M_x \}$, $\cY = \{ N_y \}$ 
  two POVMs on $ℋ¬A$ and $\cK = \{ P^k \}$ a projective measurement on $ℋ¬A$. 
  Then, the post-measurement states~§[ucr/pm-xk] and~§[ucr/pm-yk] satisfy
  \begin{align}
    \hvn{X|KB}[ρ] + \hvn{Y|KC}[ρ] ≥ \log \frac{1}{c_{\cK}^*} ¶\,,
  \end{align}
  where $c_{\cK}$ is defined in Theorem~\ref{th:ucr-gen}.
  Furthermore, it holds that
  \begin{align}
    \hvn{X|B}[ρ] + \hvn{Y|C}[ρ] ≥ \log \frac{1}{c^*(ρ¬A, \cX, \cY)} ¶\,.
  \end{align}
\end{corollary}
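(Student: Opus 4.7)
The plan is to derive both statements by applying the smooth-entropy uncertainty relation of Theorem~\ref{th:ucr-gen} to $n$ independent copies of the state and measurements, and then passing to the asymptotic limit using the AEP of Chapter~\ref{ch:aep}. Concretely, I would instantiate the second inequality of that theorem with $\rho^{\otimes n}$ on $A^n B^n C^n$, the product POVMs $\cX^n$ and $\cY^n$ whose elements are $n$-fold tensor products of elements of $\cX$ and $\cY$, and the analogous product projective measurement $\cK^n$. The post-measurement states that arise on $X^n K^n B^n$ and $Y^n K^n C^n$ are then i.i.d.\ copies of $\rho_{XKB}$ and $\rho_{YKC}$, which is exactly the setting required by the AEP.

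The key algebraic observation is that the effective $\cK$-overlap is multiplicative under this tensor-product construction:
\[
  c^{*}_{\cK^n}\bigl((\rho_A)^{\otimes n},\,\cX^n,\,\cY^n\bigr) \;=\; \bigl(c^{*}_{\cK}(\rho_A,\,\cX,\,\cY)\bigr)^n.
\]
This holds because the weights factorise into products of $\mathrm{tr}(P^{k_i}\rho_A)$, the inner maxima over $(x_1,\ldots,x_n)$ decouple into a product of $n$ independent maxima, and the operator norm is multiplicative on tensor products of positive operators. Applying Theorem~\ref{th:ucr-gen} to the $n$-fold setup and dividing by $n$ yields, for any fixed $\varepsilon,\bar\varepsilon>0$,
\[
  \tfrac{1}{n}\hmin[2\varepsilon+\bar\varepsilon]{X^n|K^n B^n}[\rho^{n}] \,+\, \tfrac{1}{n}\hmax[\varepsilon]{Y^n|K^n C^n}[\rho^{n}] \;\geq\; \log\tfrac{1}{c^{*}_{\cK}} \;-\; \tfrac{1}{n}\log\tfrac{2}{\bar\varepsilon^{\,2}}.
\]

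Letting $n\to\infty$ and invoking both the direct (Corollary~\ref{co:aep-direct}) and converse (Corollary~\ref{co:aep-converse}) AEP statements for the i.i.d.\ post-measurement states, the normalised smooth min- and max-entropies converge to $\hvn{X|KB}[\rho]$ and $\hvn{Y|KC}[\rho]$ respectively, while the additive correction $\tfrac{1}{n}\log(2/\bar\varepsilon^{\,2})$ vanishes. This establishes the first inequality of the corollary. For the second inequality, I would fix any projective $\cK$ that commutes with both $\cX$ and $\cY$; by this commutation the $XB$- and $YC$-marginals of the post-measurement states on $XKB$ and $YKC$ coincide with the states $\rho_{XB}$ and $\rho_{YC}$ obtained by measuring $\cX$ or $\cY$ alone, so the standard data-processing inequality for the conditional von Neumann entropy gives $\hvn{X|B}[\rho]\geq\hvn{X|KB}[\rho]$ and $\hvn{Y|C}[\rho]\geq\hvn{Y|KC}[\rho]$. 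Combining with the first inequality and minimising over all such $\cK$ produces $\log(1/c^{*}(\rho_A,\cX,\cY))$ on the right-hand side. The main technical hurdle is verifying the multiplicativity of $c^{*}_{\cK}$ rigorously; once this is in hand, the AEP and data processing complete the argument.
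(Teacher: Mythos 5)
Your proposal is correct and follows essentially the same route as the paper: apply Theorem~\ref{th:ucr-gen} to the $n$-fold i.i.d.\ setup, verify that the effective $\cK$-overlap is multiplicative so that $c^{*}_{\cK^n}=(c^{*}_{\cK})^n$, divide by $n$, and invoke both the direct and converse AEP to recover the von Neumann entropies; the second inequality is then obtained exactly as in the paper, by choosing the minimizing commuting $\cK$ and tracing out $K$ via data processing.
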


\begin{proof}
  We prove the first statement, from which the second follows by the same considerations 
  that led to Corollary~\ref{co:ucr-eff-nok}.

  Consider an $n$-fold tensor product Hilbert space $ℋ¬{A^n B^n C^n} \iso ℋ¬{ABC}^{⨂ n}$ for
  an arbitrary $n ∈ ℕ$. We consider i.i.d.\ product states of $ρ¬{ABC}$ on this space, 
  i.e.\ the states $ρ¬{A^n B^n C^n} = ρ¬{ABC}^{⨂ n}$.
  Furthermore, we define i.i.d.\ product measurements $\cX^n = \cX^{⨂ n} := 
  \{ \bigotimes_i M_{x_i} \}_{x^n}$, where 
  $x^n = ( x_1, x_2, \dots, x_n )$ is a string of $n$ measurement outcomes.
  Similarly, $\cY^n = \cY^{⨂ n}$ and $\cK^n = \cK^{⨂ n}$. Clearly, 
  the post-measurement states of the $n$-fold measurement setup also have
  i.i.d.\ product form, i.e.~$ρ¬{X^n K^n B^n} = ρ¬{XKB}^{⨂ n}$ and $ρ¬{Y^n K^n C^n} = 
  ρ¬{YKC}^{⨂ n}$.
  
  The effective overlap of the $n$-fold setup can now be calculated as
  \begin{align}
    c_{\cK^n}^* &= ∑_{k^n} \tr<B>{\bigotimes_i P^{k_i} ρ} \max_{x^n} \norm<g>{\sum_{y^n} 
      \bigotimes_i P^{k^i} N_{y_i} P^{k_i} M_{x_i} P^{k_i} N_{y_i} P^{k_i} }[∞] ¶\\
      &= ∑_{k^n} \prod_i \tr{P^{k^i} ρ} \max_{x_i} \norm<B>{\sum_{y_i} 
      P^{k^i} N_{y_i} P^{k_i} M_{x_i} P^{k_i} N_{y_i} P^{k_i}}[∞] ¶\\
      &= \prod_i ∑_{k_i} \tr{P^{k^i} ρ} \max_{x_i} \norm<B>{\sum_{y_i} 
      P^{k^i} N_{y_i} P^{k_i} M_{x_i} P^{k_i} N_{y_i} P^{k_i}}[∞] ¶\\
      &= (c_{\cK}^*)^n ¶[ucr/n-fold-overlap]
  \end{align}
  
  The uncertainty relation for 
  smooth min- and max-entropies, Theorem~\ref{th:ucr-gen}, now states that
  \begin{align}
    \hmin[ε+2ε']{X^n|K^n B^n}[ρ] + \hmax[ε']{Z^n|K^n B^n}[ρ] ≥ \log \frac{1}{c_{\cK^n}} 
      - \log \frac{2}{ε^2} ¶\,.
  \end{align}
  for any $ε, ε' > 0$ such that $ε + 2ε' < 1$. We divide this inequality by $n$ and use 
  expression~§[ucr/n-fold-overlap] for the effective overlap above to get
  \begin{align}
    \frac{1}{n} \hmin[ε+2ε']{X^n|K^n B^n}[ρ] + \frac{1}{n} \hmax[ε']{Z^n|K^n B^n}[ρ] ≥ 
      \log \frac{1}{c_{\cK}} - \frac{1}{n} \log \frac{2}{ε^2} ¶\,.
  \end{align}
  Taking the limit $n → ∞$ using the asymptotic equipartition property 
  (Corollaries~\ref{co:aep-direct} and~\ref{co:aep-converse}), 
  we find the uncertainty relation for von Neumann entropies.
\end{proof}

\subsection{The Quantum Key Distribution Setup}

A specific example of the setup of Section~\ref{sc:ucr/basis} is very relevant in the 
application of the uncertainty relation to quantum key distribution. (See, for example, Section~\ref{se:qkd}.)

We use $Θ^n ∈ \{ 0, 1 \}^n$ as a basis choice for a measurement on an $n$-partite system $A^n$. The measurement we consider is a product measurement, where, on 
the $i$-th part of the system, the measurement $\cZ^{θ_i} = \{ L_z^{θ_i} \}_z$ is executed depending on the corresponding bit $θ_i$ in $θ^n$. The full POVM corresponding to this measurement process is
\begin{align}
  \cZ = \Big\{ ∑_{θ^n} \proj{θ^n} ⨂ \bigotimes_{i=1}^n L_{z^n}^{θ_i} \Big\}_{z^n} ¶[ucr/m-qkd]\,.
\end{align}
The result is an $n$-bit string $z^n$, which is stored in a classical register $Z^n$.
Hence, the post-measurement state\,|\,when $Θ^n$ is uniform and independent of $ρ$\,|\,is 
given as
\begin{align}
  ρ¬{Θ^n Z^n BC} = ∑_{θ^n} \frac{1}{2^n} \proj{θ^n} ⨂ \tr<B>[A^n]{ 
    \bigotimes_i √{L_{z_i}^{θ_i}} \, ρ¬{A^n BC}  \bigotimes_i √{L_{z_i}^{θ_i}}} ¶[ucr/pm-qkd]\,.
\end{align}

\begin{corollary}
  \label{co:ucr/qkd}
  Let $ε ≥ 0$, $\bar{ε} > 0$ and $ρ ∈ \osub{ℋ¬{A}^{⨂ n} ⨂ ℋ¬{BE}}$. Moreover, 
  let $\cZ^0$ and $\cZ^1$ be two 
  POVMs on $ℋ¬A$. Then, the post-measurement 
  state~§[ucr/pm-qkd] that is produced by measuring $\cZ$ in~§[ucr/m-qkd] in a uniform 
  and independent basis, $Θ^n$, satisfies
  \begin{align}
    \hmin*{Z^n|Θ^n B}[ρ] + \hmax*{Z^n|Θ^n C}[ρ] ≥ n \log \frac{1}{c(\cZ^0, \cZ^1)} ¶. 
  \end{align}
  Furthermore, if the marginal $ρ¬{A^n} = ρ¬A^{⨂ n}$ is of i.i.d.\ product form, we have
  \begin{align}
    \hmin[2ε+\bar{ε}]{Z^n|Θ^n B}[ρ] + \hmax*{Z^n|Θ^n C}[ρ] ≥ 
      n \log \frac{1}{c^*(ρ¬A, \cZ^0, \cZ^1)} - \log \frac{2}{\bar{ε}^2} ¶.
  \end{align}
\end{corollary}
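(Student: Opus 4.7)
The plan is to reduce both bounds to applications of the generalized uncertainty relation Theorem~\ref{th:ucr-gen}, treating the composite system $Θ^n A^n$ (with $Θ^n$ holding the uniform, independent basis string) as the ``$A$'' register of the theorem. Let $f$ denote the componentwise bit flip on $\{0,1\}^n$ and identify the joint measurement~§[ucr/m-qkd] with the product POVM $\cZ^{θ^n} = \bigotimes_i \cZ^{θ_i}$ after conditioning on $θ^n$. Take $\cX = \cZ$ and $\cY$ to be its bit-flipped analogue.

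For the first inequality I would invoke Corollary~\ref{co:ucr-basis} directly, with $Θ = Θ^n$, the family $\{\cZ^{θ^n}\}$, and $f$ as above. Because $Θ^n$ is uniform and independent of $ρ$, the joint state factorizes as $π¬{Θ^n} ⨂ ρ$, so the hypothesis $ρ^{θ^n} = ρ^{f(θ^n)}$ is trivially satisfied (both equal $ρ¬{ABC}$). It remains to verify that the overlap of tensor-product POVMs factorizes, $c(\cZ^{θ^n}, \cZ^{f(θ^n)}) = \prod_i c(\cZ^{θ_i}, \cZ^{f(θ_i)}) = c(\cZ^0, \cZ^1)^n$, which follows from multiplicativity of the operator norm under tensor products together with Definition~\ref{df:overlap}. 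Substituting $c_f = c(\cZ^0, \cZ^1)^n$ into Corollary~\ref{co:ucr-basis} yields the first bound.

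For the second inequality, the i.i.d.\ product structure $π¬{Θ^n} ⨂ ρ¬A^{⨂ n}$ on $Θ^n A^n$ enables an effective-overlap argument via Theorem~\ref{th:ucr-gen}. I would take the product projective measurement $\cK = \{\proj{θ^n}[Θ^n] ⨂ \bigotimes_i P^{k_i}\}_{(θ^n, k^n)}$, where $\{P^k\}$ is an optimal projective measurement on $ℋ¬A$ that commutes with both $\cZ^0$ and $\cZ^1$ and achieves the minimum $c^*(ρ¬A, \cZ^0, \cZ^1)$ of Definition~\ref{df:overlap-eff}. By construction, $\cK$ commutes with $\cX$ and $\cY$. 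Mimicking the tensor-factor calculation~§[ucr/n-fold-overlap] from the proof of Corollary~\ref{co:ucr-eff-vn}, the resulting effective $\cK$-overlap decomposes as $c_{\cK}^* = (c^*(ρ¬A, \cZ^0, \cZ^1))^n$. Plugging into Theorem~\ref{th:ucr-gen} gives the desired lower bound on $\hmin[2ε+\bar{ε}]{X|KB}[ρ] + \hmax*{Y|KC}[ρ]$. Finally, as in the proof of Corollary~\ref{co:ucr-basis}, I would use local relabeling isometries (Proposition~\ref{pr:smooth-iso}) to identify $X$ and $Y$ with $(Θ^n, Z^n)$, absorb the redundant $θ^n$-component of $K$ into the explicit $Θ^n$ register, and then trace out the remaining $k^n$ component; this last step is a TP-CPM on the conditioning side and so, by the data-processing inequality (Theorem~\ref{th:data-proc}), can only increase both the smooth min- and smooth max-entropies, leaving the bound conditioned on $Θ^n B$ and $Θ^n C$ as claimed.

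The main obstacle is the bookkeeping in the effective-overlap computation: one must show that the interplay between the orthogonal projectors $\proj{θ^n}$ (from the basis register) and the product structure $\bigotimes_i P^{k_i}$ on $A^n$ produces precisely an $n$-th power of the single-copy $c^*$. Because $\proj{θ^n}^2 = \proj{θ^n}$, the $θ^n$-dependence cleanly drops out of the inner operator norm and the maximum over measurement outcomes factorizes across tensor factors; to actually lower-bound the entropy sum we only need the single-copy minimizer $\{P^k\}$ to furnish an \emph{upper} bound on $c^*_{\cK}$, which the construction above supplies by design.
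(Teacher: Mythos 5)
Your proposal is correct and follows essentially the same route as the paper: the first bound via Corollary~\ref{co:ucr-basis} with the bit-flip $f$ and multiplicativity of the overlap under tensor products, and the second via Theorem~\ref{th:ucr-gen} with the product measurement $\cK$ built from the single-copy minimizer, the factorization argument of~§[ucr/n-fold-overlap], and a final partial trace over $K^n$ justified by data processing. The only difference is presentational: you make explicit the relabeling isometries and the verification that the $θ^n$-projectors drop out of the inner operator norm, which the paper leaves implicit.
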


\begin{proof}
  To prove the first statement, we start with Corollary~\ref{co:ucr-basis} and the 
  measurements described above. The function $f$ is in this case the bit-flip on the whole 
  string $Θ$ and the condition $ρ¬{A^n BC}^θ = ρ¬{A^n BC}^{f(θ)}$ is trivially satisfied since
  $Θ$ is independent of $ρ$ by assumption. Finally, it is easy to verify that the overlap $c_f$
  evaluates to $c(\cZ^0, \cZ^1)^n$.

  The second statement requires an additional, projective measurement $\cK$ on $ℋ¬A$ that
  commutes with both $\cZ^0$ and $\cZ^1$. Let $\cK$ be the measurement that minimizes the
  effective overlap $c^*(ρ¬A, \cZ^0, \cZ^1)$ in Definition~\ref{df:overlap-eff}. We measure
  $\cK$ on all $n$ subsystems together with $Θ^n$ resulting in an additional classical register
  $K^n$. Using Theorem~\ref{th:ucr-gen}, this leads to the relation
  \begin{align}
    & \hmin[2ε+\bar{ε}]{Z^n|Θ^n K^n B}[ρ] + \hmax*{Z^n|Θ^n K^n C}[ρ] ¶\\
    & \qquad ≥ n \log \frac{1}{c_{\cK}^*(ρ¬A, \cZ^0, \cZ^1)} - \log \frac{2}{\bar{ε}^2} ¶,
  \end{align}
  where we used the arguments in the proof of Corollary~\ref{co:ucr-eff-vn} to simplify 
  the expression for the effective overlap. The statement then follows after a partial trace over
  $K^n$ on both entropies.
\end{proof}

Note that the above results can be extended to the case where measurements and the marginal state are of general product but not i.i.d.\ form. In this case, the logarithm of the overlap on the rhs.\ is replaced by an average over the logarithmic overlap on all subsystems.

\section{Bipartite Uncertainty Relations}
\label{sc:ucr/bi}

We have argued that the natural generalizations of uncertainty relation without side information to uncertainty relations with quantum side information introduces two distinct observers. Here, applying the chain rules for the smooth entropies introduced in Section~\ref{sc:chain}, we show
a bipartite uncertainty relation. Note that such an uncertainty relation necessarily needs
to have a term on the rhs.\ that characterizes the entanglement between the observer and
the system prior to measurement. If entanglement is present, the bound has
to be reduced accordingly. In particular, if two projective measurements are executed on a system that is fully entangled with the observers, there is no uncertainty on the measurement outcome.\footnote{See, for example~\cite{berta08}, where this is discussed for an uncertainty relation in the von Neumann limit and for rank-$1$ projective measurements.} Hence, we see a trade-off between entanglement and uncertainty in bipartite uncertainty relations\,|\,the more entanglement there is prior to measurement, the less uncertainty is created.

Here, we show that this picture is still not complete. In order to get a tight result for arbitrary projective measurements and POVMs, we also need to consider the entanglement that is
left after the measurement. This is due to the fact that general measurements (in contrast to rank-$1$ projective measurements) do not necessarily destroy all entanglement between the two parties.

This result might have applications in two party quantum cryptography, where uncertainty relations
are used to bound the knowledge of one party about the measurement results of the other party. An example of such an application of bipartite UCRs can be found in~\cite{berta11}.

We consider a bipartite system shared between $A$ and $B$, where, as before, the system $A$ is measured using one of two measurements $\cX$ or $\cY$. The entanglement before measurement is characterized by the smooth min-entropy of $A$ given $B$ while the entanglement after measurement is given by the smooth max-entropy of the system after measurement, denoted $A'$  given 
$B$ and the measurement outcome. More precisely, we start with an arbitrary state $ρ¬{AB}$ and 
consider post-measurement states 
\begin{align}
  ρ¬{A'XB} &= ∑_{x} \proj{x} ⨂ √{M_x}\, ρ¬{AB} √{M_x}
    \tn*{and} ¶[ucr/pm-bix]\\
  ρ¬{A'YB} &= ∑_{y} \proj{y} ⨂ √{N_y}\, ρ¬{AB} √{N_y} 
    ¶[ucr/pm-biy]\,.
\end{align}

Note that, in contrast to the entropic uncertainty relations discussed earlier, the system $A'$
will appear explicitly in our statements. Since the state of this system is not unique for a given 
POVM, we simply chose the simplest measurement TP-CPM consistent with the POVM, which leads to Eqs.~§[ucr/pm-bix] and~§[ucr/pm-biy]. 
This implies the following uncertainty relation.\footnote{Other uncertainty relations of the types discussed above can be made bipartite too. Here, we are interested in a simple example.}

\begin{theorem}
  Let $ε ≥ 0$, $\bar{ε} > 0$, $\tilde{ε} ≥ 0$, $\tilde{ε}' ≥ 0$ and 
  $ρ¬{AB} ∈ \osub{ℋ¬{AB}}$.
  Moreover, let $\cX$ and $\cY$ be two 
  POVMs on $ℋ¬A$. Then, the post-measurement states~§[ucr/pm-bix] and~§[ucr/pm-biy]
  satisfy
  \begin{align}
    &\hmin[\hat{ε}]{X|B}[ρ] + \hmax[ε]{Y|B}[ρ] ≥ 
      \hmin[\tilde{ε}]{A|B}[ρ] - \hmax[\tilde{ε}']{A'|YB}[ρ] ¶\\
      & \qquad \qquad + \log \frac{1}{c^*} - 4 \log \frac{2}{\bar{ε}^2} ¶\,,
  \end{align}
  where $\hat{ε} = 7\bar{ε} + 6\tilde{ε} + 4\tilde{ε}' + 8ε$ and $c^* = c^*(ρ¬A, \cX, \cY)$. 
\end{theorem}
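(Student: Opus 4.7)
The plan is to lift the tripartite uncertainty relation for smooth entropies to the bipartite setting by purifying $ρ¬{AB}$ and then trading the purifying system for the $B$ system via the chain rules of Theorem~\ref{th:chain-rules}. The two correction terms on the right-hand side, $\hmin[\tilde{ε}]{A|B}[ρ]$ and $-\hmax[\tilde{ε}']{A'|YB}[ρ]$, are exactly the price one pays in this conversion: they quantify respectively the entanglement between $A$ and $B$ before the $Y$-measurement and the entanglement that remains between the post-measurement system $A'$ and $B$ afterwards.

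The first step is to purify $ρ¬{AB}$ to a pure tripartite state $ρ¬{ABR}$ and apply the effective-overlap uncertainty relation (Corollary~\ref{co:ucr-eff-nok}) with the measurements $\cX$ and $\cY$. For a parameter $\bar{ε}_Y > 0$ to be fixed later, this yields
\begin{align*}
  \hmin[2\bar{ε}_Y + \bar{ε}]{X|B}[ρ] + \hmax[\bar{ε}_Y]{Y|R}[ρ] ≥ \log\frac{1}{c^*(ρ¬A,\cX,\cY)} - \log\frac{2}{\bar{ε}^2}.
\end{align*}
To exchange $R$ for $B$, I would consider the Stinespring dilation $V := ∑_y √{N_y} ⨂ \ket{y}[Y] ⨂ \ket{y}[Y']$ of the $\cY$-measurement and the pure state $ρ¬{A'YY'BR} := V ρ¬{ABR} V†$, in which $Y$ and $Y'$ form a coherent-classical pair. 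The duality relation Theorem~\ref{th:smooth-dual} then gives
\begin{align*}
  \hmax[\bar{ε}_Y]{Y|R}[ρ] = -\hmin[\bar{ε}_Y]{Y|A'Y'B}[ρ].
\end{align*}

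Next I would apply two chain rules from Theorem~\ref{th:chain-rules}. The inequality §[cr/min-a], applied with first system $Y$ conditioned on $A'Y'B$, gives
\begin{align*}
  \hmin[\bar{ε}_Y]{Y|A'Y'B}[ρ] ≥ \hmin[\tilde{ε}]{YA'Y'|B}[ρ] - \hmax[\tilde{ε}'_1]{A'Y'|B}[ρ] - 2\log\frac{2}{\bar{ε}^2},
\end{align*}
provided $\bar{ε}_Y = \bar{ε} + 3\tilde{ε} + 2\tilde{ε}'_1$. Since $V$ is an isometry on the $A$ system, Proposition~\ref{pr:smooth-iso} identifies $\hmin[\tilde{ε}]{YA'Y'|B}[ρ]$ with $\hmin[\tilde{ε}]{A|B}[ρ]$, producing the first correction term. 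A further application of §[cr/max-ab] splits the remaining max-entropy:
\begin{align*}
  \hmax[\tilde{ε}'_1]{A'Y'|B}[ρ] ≤ \hmax[\tilde{ε}']{A'|Y'B}[ρ] + \hmax[ε]{Y'|B}[ρ] + \log\frac{2}{\bar{ε}^2},
\end{align*}
with $\tilde{ε}'_1 = \bar{ε} + \tilde{ε}' + 2ε$. Because $V\ket{a}$ is manifestly symmetric in $Y$ and $Y'$, the pure state $ρ¬{A'YY'BR}$ is invariant under swapping these two registers, so the marginals $ρ¬{A'Y'B}$ and $ρ¬{A'YB}$ coincide under the natural isomorphism $\ket{y}[Y'] \leftrightarrow \ket{y}[Y]$. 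Invariance of the smooth entropies under local isometries then lets me replace $Y'$ by $Y$ in both max-entropy terms; moving $\hmax[ε]{Y|B}[ρ]$ to the left-hand side yields the desired form.

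Collecting the smoothing parameters gives the total left-hand side smoothing $\hat{ε} = 2\bar{ε}_Y + \bar{ε} = 2(\bar{ε} + 3\tilde{ε} + 2(\bar{ε} + \tilde{ε}' + 2ε)) + \bar{ε} = 7\bar{ε} + 6\tilde{ε} + 4\tilde{ε}' + 8ε$ and the total logarithmic penalty $(1+2+1)\log(2/\bar{ε}^2) = 4\log(2/\bar{ε}^2)$, matching the theorem statement exactly. The main technical obstacle, beyond this bookkeeping, will be making the $Y \leftrightarrow Y'$ identification rigorous at the level of \emph{smoothed} states: one must check that the optimal smoothing state for $\hmax[\tilde{ε}']{A'|YB}[ρ]$ lifts to a state on $A'YY'B$ preserving both the coherent-classical structure on $YY'$ and the purified-distance bound. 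This is precisely the kind of argument developed in Section~\ref{sc:smooth/class} via Proposition~\ref{pr:class-smooth} and Lemma~\ref{lm:smmoth/class/class-entangled-max}; once it is in place, everything else reduces to careful application of the triangle inequality for the purified distance.
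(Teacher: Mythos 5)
Your proposal is correct and follows essentially the same route as the paper's proof: start from Corollary~\ref{co:ucr-eff-nok} (dualized via the Stinespring purification of the $\cY$-measurement), apply chain rule~§[cr/min-a] with the substitutions $ε \leftarrow \bar{ε}$, $ε' \leftarrow \tilde{ε}$, $ε'' \leftarrow \bar{ε}+\tilde{ε}'+2ε$, then chain rule~§[cr/max-ab], and identify $\hmin[\tilde{ε}]{A'YY'|B}$ with $\hmin[\tilde{ε}]{A|B}$ by isometry invariance; your parameter bookkeeping matches exactly. The only remark is that your closing worry about the $Y\leftrightarrow Y'$ identification at the level of smoothed states is unnecessary: the smooth max-entropy $\hmax[\tilde{ε}']{A'|Y'B}$ depends only on the marginal $ρ¬{A'Y'B}$, which coincides with $ρ¬{A'YB}$ under the swap because the dilation $V$ is manifestly symmetric in $Y$ and $Y'$.
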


\begin{proof}
  We start from Corollary~\ref{co:ucr-eff-nok}, which implies
  \begin{align}
    \hmin[\hat{ε}]{X|B}[ρ] ≥ 
      \hmin[3\bar{ε} + 3\tilde{ε} + 2\tilde{ε}' + 4ε]{Y|A'Y'B}[ρ] + \log \frac{1}{c^*}       
      -\log \frac{2}{\bar{ε}^2} ¶\,.
  \end{align}
  Now, we apply the chain rule~§[cr/min-a] to the min-entropy on the rhs.\ with
  the substitutions $ε \leftarrow \bar{ε}$, $ε' \leftarrow \tilde{ε}$ 
  and $ε'' \leftarrow \bar{ε} + \tilde{ε}' + 2 ε$ as well as the systems $A \leftarrow Y$,
  $B \leftarrow A'Y'$ and $C \leftarrow B$.
  This leads to
  \begin{align}
    \hmin[\hat{ε}]{X|B}[ρ] ≥ \hmin[\tilde{ε}]{A'YY'|B}[ρ] - 
      \hmax[\bar{ε} + \tilde{ε}' + 2 ε]{A'Y|B}[ρ] 
      + \log \frac{1}{c^*} - 3 \log \frac{2}{\bar{ε}^2} ¶\,,
  \end{align}
  where we used that $Y$ and $Y'$ are interchangeable.
  Then, we apply the chain rule~§[cr/max-ab] to the max-entropy with the substitutions 
  $ε \leftarrow \bar{ε}$, $ε' \leftarrow \tilde{ε}'$ and $ε'' \leftarrow ε$ as well
  as the systems $A \leftarrow Y'$, $B \leftarrow A'$ and $C \leftarrow B$.
  This directly leads to the statement of the theorem when we note that
  \begin{align}
    \hmin[\tilde{ε}]{A'YY'|B}[ρ] = \hmin[\tilde{ε}]{A|B}[ρ]
  \end{align}
  due to the invariance of the smooth entropies under isometries (cf.\ Proposition~\ref{pr:smooth-iso}).
\end{proof}

We are also interested in the von Neumann limit of this uncertainty relation. 
Using the AEP and the same techniques as in the proof of 
Corollary~\ref{co:ucr-eff-vn}, we find
\begin{align}
  \hvn{X|B}[ρ] + \hvn{Y|B}[ρ] ≥ \log \frac{1}{c^*} + \hvn{A|B}[ρ] - \hvn{A'|YB}[ρ] ¶\,.
\end{align}
Due to the symmetry of the lhs.\ of this expression, we can replace the $Y$ on the 
rhs.\ by an $X$; thus, the lower bound is effectively a function of the minimum of 
these two entropies. In the following, we prefer to model the basis explicitly and use
that $\hvn{Z|Θ B} = \frac{1}{2} \hvn{X|B} + \frac{1}{2} \hvn{Y|B}$ 
if the basis choice is uniform. 
We get\footnote{For the following arguments, we replaced the effective overlap by
the overlap.}
\begin{align}
  \hvn{Z|Θ B}[ρ] ≥ \frac{1}{2} \log \frac{1}{c} + \frac{1}{2} 
  \Big( \hvn{A|B}[ρ] - \hvn{A'|Z Θ B}[ρ] \Big) ¶.
\end{align}

Note also that if the measurement is projective and rank-$1$, the system $A'$ will
simply contain a copy of $Z$ and the second entropy thus vanishes. This leads to
the uncertainty relation of~\cite{berta10},
\begin{align}
  \hvn{Z|Θ B}[ρ] ≥ \frac{1}{2} \Big( \log \frac{1}{c} + \hvn{A|B}[ρ] \Big) ¶.
\end{align}

\subsection{Chained Uncertainty Relations}

Consider two consecutive applications of the uncertainty relation\,|\,first measure 
$Z_1$ of a system $A_1$ in the basis $Θ_1$ and then $Z_2$ on $A_2$ in the basis $Θ_2$. 
We can now derive two different
bounds on the total entropy produced by these operations. 
First, we consider the two measurements
together to get
\begin{align}
  \hvn{Z | Θ B }[ρ] ≥ \log \frac{1}{c} + \frac{1}{2} 
    \Big( \hvn{A|B}[ρ] - \hvn{A'|Z Θ B}[ρ] \Big) ¶[ucr-bi-bound-1],
\end{align}
where $Θ = Θ_1 Θ_2$, $Z = Z_1 Z_2$, $A = A_1 A_2$ and $A' = A_1' A_2'$ contains the systems $A_1$ and $A_2$ 
after measurement. On the other hand, we may write down the uncertainty of the first measurement separately.
\begin{align}
  \hvn{Z_1 | Θ_1 B}[ρ] ≥ \frac{1}{2} \log \frac{1}{c} + 
    \frac{1}{2} \Big( \hvn{A|B}[ρ] - \hvn{A_1' A_2 |Z_1 Θ_1 B}[ρ] \Big) ¶.
\end{align}
Now, the second system is measured. Namely, we consider a measurement of $Z_2$ in the basis $Θ_2$ on the second part of the joint system $A_1' A_2$. The observers hold the information gained in the first measurement, i.e.\ $Θ_1$ and $Z_1$ in addition to $B$.
\begin{align}
  \hvn{Z_2 | Z_1 Θ B}[ρ] ≥ \frac{1}{2} \log \frac{1}{c} +
    \frac{1}{2} \Big( \hvn{A_1' A_2| Z_1 Θ_1 B}[ρ] - \hvn{A'|Z Θ B}[ρ] \Big) ¶.
\end{align}

Adding the previous two inequalities leads to
\begin{align}
  \hvn{Z_1 | Θ_1 B}[ρ] + \hvn{Z_2 | Z_1 Θ B}[ρ] 
  %&≥ \hvn{Z_1 | Θ B}[ρ] + \hvn{Z_2 | Z_1 Θ B}[ρ] =  \hvn{Z | Θ B}[ρ] ¶\\
  &≥ \log \frac{1}{c} + \frac{1}{2} 
    \Big( \hvn{A|B}[ρ] - \hvn{A'|Z Θ B}[ρ] \Big) ¶[ucr-bi-bound-2].
\end{align}
The two bounds, Eqs.~§[ucr-bi-bound-1] and~§[ucr-bi-bound-2], are equivalent since $Z_1$ is independent of $Θ_2$ and, thus, $\hvn{Z_1|Θ_1 B} = \hvn{Z_1|Θ B}$.

This shows that we can split the process of producing uncertainty into two individual steps without loosening the bound on the uncertainty. In this sense, the bipartite uncertainty relation
can be considered tight.

%............................

\chapter{Applications}
\label{ch:app}

This chapter discusses three example applications of the smooth entropy framework and uses
results from Chapters~\ref{ch:pd}-\ref{ch:uc}. 

In Section~\ref{se:strong}, 
we consider ☼{source compression}, which
has been used as an example in the introductory remarks, and show how the characterization
of this task in the ☼{one-shot} setting allows us to retrieve direct and converse bounds
on coding for finite block-lengths as well as strong converse statements. 
In Section~\ref{se:ext}, we consider ☼{randomness extraction}, a task that is very important in
cryptography and is naturally considered in the one-shot setting.
Then, in Section~\ref{se:qkd}, we combine the above results and show how, in conjunction with
an uncertainty relation, they allow us to prove security of a ☼[quantum key distribution]{QKD}
protocol.

\section{Full Characterization of Source Compression}

\label{se:strong}

This section provides an example of how the ☼{one-shot} characterization of 
an information theoretic task is sufficient to derive bounds on the resource
usage for finite block lengths and in the ☼{i.i.d.\ limit}. 

We consider source compression with quantum ☼{side information},
or, equivalently, information reconciliation.
In particular, we give a ☼[strong converse]{converse bound} to information
reconciliation. We employ many results of this thesis, including
properties of the ☼{purified distance} from Chapter~\ref{ch:pd}, the ☼{data processing}
inequalities of Chapter~\ref{ch:entropies} and the ☼[asymptotic equipartition property]{AEP} of Chapter~\ref{ch:aep}.

The bounds derived here clearly also hold in case the side information is classical or 
non-existent. Thus, we provide bounds for classical source compression tasks as well. 
From this viewpoint, the following sections prove Shannon's source coding 
theorem~\cite{shannon48}, as well as its extension to side information (Slepian-Wolf~\cite{slepian73}) and quantum side information~\cite{devetak03}.\footnote{The Slepian-Wolf setting with two separate encoders can be viewed as source compression followed by source compression with side information.}
We also prove a strong
☼[converse]{converse bound} of source compression with side information. Related to this work, a strong converse for compression of quantum information~\cite{schumacher95} was shown by Winter~\cite{winterthesis}. (See also~\cite{oohama94}, where a the error exponent of the Slepian-Wolf strong converse is investigated.)

\subsection{One-Shot Characterization}

The one-shot results are adapted from a recent paper by Renes and Renner~\cite{renesrenner10}.
Given a classical-quantum state $ρ¬{ZB}$, shared between two parties, Alice and Bob, how much information needs to be transmitted from Alice to Bob such that Bob can reconstruct $Z$ with
probability of error at most $ε$? 
To investigate this question, we consider non-interactive (one-way) information reconciliation protocols from Alice to Bob. These consist of an 
encoding function, $e: \cZ → \cM$, that creates a message $M$ to be sent from Alice to Bob. Bob then uses a decoder, a POVM $\cD = \{ D_{z'} \}$ acting on the joint state of $B$ and $M$ that produces a 
classical estimate of $Z$ in the register $Z'$ ($ℋ¬Z \iso ℋ¬{Z'}$). A protocol is thus characterized by the tuple $\cP = \{e, \cD \}$. Note that this is the most general model for a non-interactive information reconciliation protocol. In particular, since we consider a fixed state $ρ¬{ZB}$, it is sufficient to consider deterministic encoding strategies.

The initial state $ρ¬{ZB}$ is of the form
\begin{align}
  ρ¬{ZB} = ∑_z P¬Z(z)\, \proj{z}[Z] ⨂ ρ¬B^z ,¶[strong/cq-state]
\end{align}
where $P¬Z(z)$ is the distribution over the input alphabet $Z$ and $ρ¬B^z ∈ \onorm{ℋ¬B}$ are
quantum states on $B$. Applying a protocol $\{e, \cD \}$
as described above to this state results in the final state
\begin{align}
  ρ¬{ZZ'} = ∑_{z,z'} P¬Z(z) \tr<B>{D_{z'} \big(ρ¬B^z ⨂ \proj{e(z)}[M] \big) } \proj{z}[Z] 
    ⨂ \proj{z'}[Z'] ¶,
\end{align}
or, equivalently, the joint probability distribution
$P¬{ZZ'}(z, z') = P¬Z(z) \cdot \tr<b>{D_{z'} \big(ρ¬B^z ⨂ \proj{e(z)}\big) }$.
The error probability of this protocol is
\begin{align}
  p_{\tn-{err}}(\cP, ρ¬{ZB}) := 1 - ∑_{z} P(z) \tr<B>{D_{z} \big(ρ¬B^z ⨂ \proj{e(z)}[M]\big) } ¶.
\end{align}
The error probability is thus 
equivalent to the trace distance between $ρ¬{ZZ'}$ and the state $χ¬{ZZ'} = ∑_z P(z) \proj{z}[Z] ⨂ \proj{z}[Z']$ describing perfect correlation between $Z$ and $Z'$.

We characterize information reconciliation of a state $ρ¬{ZB}$ from $Z$ to $B$ with the 
minimum message length
(in bits), $m^{ε}$, required to achieve an error probability of at most $ε$.
\begin{align}
  m^{ε}(Z|B)¬{ρ} := \min \{ m : \exists\, \cP \tn{s.t.} \!\log{d¬M} = m \tn{and} 
    p_{\tn-{err}}(\cP, ρ¬{ZB}) ≤ ε \} ¶.
\end{align}

We now slightly extend a result from~\cite{renesrenner10}. The converse bound found there is not strong enough for our purposes when $ε$ gets close to $1$. However, this can be fixed easily.
\begin{theorem}
  \label{th:data-reco}
  Let $ρ¬{ZB} ∈ \osub{ℋ¬{ZB}}$ be classical on $Z$ and let $0 < ε < 1$. Then,
  \begin{align}
    \hmax[√{2ε - ε^2}]{Z|B}[ρ] ≤ m^{ε}(Z|B)¬{ρ} ≤ \hmax[ε_1]{Z|B}[ρ] + 
      2 \log \frac{1}{ε_2} + 4 ¶,
  \end{align}
  for any $ε_1, ε_2$ s.t.\ $ε_1 + ε_2 = ε$.
\end{theorem}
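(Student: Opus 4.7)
The plan is to prove the two bounds separately, with the converse (lower) bound being the main contribution of this extension. For the converse, fix any protocol $\cP = \{e, \cD\}$ with error $\leq \varepsilon$ and $\log d_M = m$. After the encoder is applied, one obtains a tripartite state $\rho_{ZBM}$ that is classical on $Z$ and on $M$, and Bob's decoder is a TP-CPM $\cD : ℋ_{BM} \to ℋ_{Z'}$. The joint state after decoding, $\rho_{ZZ'}$, satisfies $D(\rho_{ZZ'}, \chi_{ZZ'}) \leq \varepsilon$ by the definition of $p_{\tn-{err}}$, where $\chi_{ZZ'} = \sum_z P_Z(z)\proj{z}[Z]\otimes\proj{z}[Z']$ is the perfectly-correlated state. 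Converting trace distance to purified distance via Proposition~\ref{pr:pd-gtd-bounds} yields $P(\rho_{ZZ'}, \chi_{ZZ'}) \leq \sqrt{2\varepsilon-\varepsilon^2}$, so $\chi_{ZZ'} \in \ball[\sqrt{2\varepsilon-\varepsilon^2}]{\rho_{ZZ'}}$. Since $Z = Z'$ under $\chi$, a direct computation of the fidelity shows $\hmax{Z|Z'}[\chi] = 0$, hence $\hmax[\sqrt{2\varepsilon-\varepsilon^2}]{Z|Z'}[\rho] \leq 0$.

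Next I chain the smooth max-entropies. By the data-processing inequality of Theorem~\ref{th:data-proc} applied to the TP-CPM $\cD$ acting on the conditioning system $BM$ (noting $\cD$ is trace-preserving), we get
\begin{align}
  \hmax[\sqrt{2\varepsilon-\varepsilon^2}]{Z|BM}[\rho] \;\leq\; \hmax[\sqrt{2\varepsilon-\varepsilon^2}]{Z|Z'}[\rho] \;\leq\; 0. \nonumber
\end{align}
Since $M$ is a classical register of dimension $d_M = 2^m$, Proposition~\ref{pr:class/bounds-2} yields
\begin{align}
  \hmax[\sqrt{2\varepsilon-\varepsilon^2}]{Z|B}[\rho] \;\leq\; \hmax[\sqrt{2\varepsilon-\varepsilon^2}]{Z|BM}[\rho] + \log d_M \;\leq\; m, \nonumber
\end{align}
which is precisely the claimed converse. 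Note this argument works uniformly for all $0 < \varepsilon < 1$, including $\varepsilon$ close to $1$ where the previous bound of~\cite{renesrenner10} degenerates; the crucial ingredient is the sharp conversion from $D$ to $P$ in Proposition~\ref{pr:pd-gtd-bounds} rather than the looser $P \leq \sqrt{2D}$.

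For the direct (upper) bound I would follow the Renes--Renner construction. The idea is to smooth first: let $\tilde\rho_{ZB} \in \ball[\varepsilon_1]{\rho_{ZB}}$ achieve $\hmax[\varepsilon_1]{Z|B}[\rho]$; by Proposition~\ref{pr:class-smooth} we may take $\tilde\rho$ classical on $Z$. Design the protocol for $\tilde\rho$ by choosing the encoder $e$ to be a uniformly random member of a two-universal family of hash functions from $\cZ$ to $\{0,1\}^m$, and let the decoder be a pretty-good measurement conditioned on the observed hash value $M$. A one-shot decoding/leftover-hashing analysis (cf.\ \cite{renesrenner10}) shows that, on average over the hash choice, the decoding error on $\tilde\rho$ is at most $\varepsilon_2$ provided $m \geq \hmax{Z|B}[\tilde\rho] + 2\log(1/\varepsilon_2) + 4$. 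Derandomizing by picking a particular hash achieving at most the average, and using monotonicity of the trace distance under the protocol to transfer the guarantee from $\tilde\rho$ back to $\rho$ (incurring the extra $\varepsilon_1$ via $D \leq P$), yields total error $\leq \varepsilon_1 + \varepsilon_2 = \varepsilon$.

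The main obstacle is the direct bound: producing the $+4$ constant and the exact dependence $2\log(1/\varepsilon_2)$ requires a careful one-shot decoder analysis (e.g., Hayashi--Nagaoka operator inequality or the Renes--Renner pretty-good-measurement estimate), whereas the converse argument is conceptually straightforward and relies only on data processing, classical-register dimension bounds, and a sharpened trace-to-purified-distance conversion. Thus my writeup would devote only a paragraph to the direct bound (citing~\cite{renesrenner10}) and would emphasize the clean converse argument above, which is the novel element improving the theorem to all $\varepsilon \in (0,1)$.
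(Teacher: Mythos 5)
Your proof is correct and follows essentially the same route as the paper: the converse is obtained by converting the trace-distance error criterion to purified distance via Proposition~\ref{pr:pd-gtd-bounds}, bounding $\hmax[\sqrt{2ε-ε^2}]{Z|Z'}$ by $\hmax{Z|Z'}[χ]=0$, applying data processing (Theorem~\ref{th:data-proc}) for the decoder and Proposition~\ref{pr:class/bounds-2} for the classical message $M$, while the direct bound is deferred to~\cite{renesrenner10} exactly as the paper does.
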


\begin{proof}
  The direct bound, $m^{ε}(Z|B)¬{ρ} ≤ \hmax[ε_1]{Z|B}[ρ] + 2 \log 
  \frac{1}{ε_2} + 4$, is 
  shown in~\cite{renesrenner10}.
  To get a converse bound, we also follow their argument and 
  note that $p_{\tn-{err}}(\cP, ρ¬{ZB}) ≤ ε$ is equivalent to the 
  condition $D(ρ¬{ZZ'}, χ¬{ZZ'}) ≤ ε$. This implies $P(ρ¬{ZZ'}, χ¬{ZZ'}) ≤ √{2ε - ε^2} = ε'$ 
  according to
  Proposition~\ref{pr:pd-gtd-bounds}. Hence,
  \begin{align}
    \hmax[ε']{Z|Z'}[ρ] ≤ \hmax{Z|Z'}[χ] = 0
  \end{align}  
  by definition of the smooth max-entropy.
  The data-processing inequality (Theorem~\ref{th:data-proc}) then states that
  $\hmax[ε']{Z|MB}[ρ] ≤ 0$ before measurement. And finally we use the fact that conditioning
  on classical information $M$ can at most reduce the smooth max-entropy by $\log d¬M$ (cf.\
  Proposition~\ref{pr:class/bounds-2}). This leads to the following inequality:
  \begin{align}
    0 ≥ \hmax[ε']{Z|MB}[ρ] ≥ \hmax[ε']{Z|B}[ρ] - \log d¬M ¶.
  \end{align}
  Hence, all protocols with $p_{\tn-{err}}(\cP, ρ¬{ZB}) ≤ ε$ have to satisfy the constraint
  $\log d¬M ≥ \hmax[ε']{Z|B}[ρ]$. This results in the improved converse bound of the theorem.
\end{proof}

\subsection{Finite Block Lengths and Asymptotic Limits}

The one-shot result in principle characterizes all data-reconciliation tasks. Comparing the one-shot analysis with asymptotic results, where $m^ε$ converges to the ☼[von Neumann 
entropy]{entropy!von Neumann}~\cite{devetak03}, we might be interested to know how fast the one-shot result in terms of the smooth max-entropy convergences to the von Neumann entropy.

For this purpose, we consider the task of encoding a block of length $n$ of ☼{i.i.d.}~random 
variables $Z$ about each of which Bob independently has quantum side information $B$. This is data
reconciliation for the state $ρ¬{Z^n B^n} = ρ¬{ZB}^{⨂ n}$ and Theorem~\ref{th:data-reco} applies.
The minimum message length (per round) for this task, 
$\bar{m}^ε(Z|B) = \frac{1}{n} m^ε(Z^n|B^n)$, 
thus satisfies
\begin{align}
  \frac{1}{n} \hmax[√{2ε - ε^2}]{Z^n|B^n}[ρ] ≤ \bar{m}^ε(Z|B)¬{ρ} 
    ≤ \frac{1}{n} \Big( \hmax[ε/2]{Z^n|B^n}[ρ] + 2 \log \frac{1}{ε} + 6 \Big) ¶[data-reco/n],
\end{align}
where we chose $ε_1 = ε_2 = \frac{ε}{2}$ for convenience of exposition. Next, we bound the smooth max-entropies using
the asymptotic equipartition property in its entropic form. 

\subsubsection{Direct Bounds}
☼*{direct bound}

We start with the
upper bound in~§[data-reco/n].
For sufficiently large $n$, the AEP states that (cf.\ Theorem~\ref{th:min-vn} and Corollary~\ref{co:aep/cond})
\begin{align}
  \bar{m}^ε(Z|B)¬{ρ} ≤ \hvn{Z|B}[ρ] + \frac{4 \log υ √{g(\frac{ε}{2})}}{√{n}} + \frac{2 \log \frac{1}{ε} + 6}{n} ¶[data-reco/direct],
\end{align}
where $g(ε) = - \log \big( 1 - √{1-ε^2} \big)$ and $υ = √{2^{-\hmin{Z|B}[ρ]}} + √{2^{\hmax{Z|B}[ρ]}} + 1$. Hence, in the asymptotic limit of large $n$, we have
\begin{align}
  \lim_{n → ∞} \bar{m}^ε(Z|B)¬{ρ} ≤ \hvn{Z|B}[ρ] ¶.
\end{align}
Namely, there exists an encoding strategy that transmits at most $\hvn{Z|B}[ρ]$ bits per round to Bob with an arbitrarily small positive probability of error ($ε > 0$). The converse question is whether we can do better than this. 

\subsubsection{Converse Bounds}
☼*{converse bound}

To see that this is not possible, even for large error probabilities, we apply the AEP to the lower bound in~§[data-reco/n]. In the following, we use $ε$ to denote the success probability, i.e.\ $1-ε$ is the error probability.
Using Proposition~\ref{pr:min-max-smooth} and Remark~\ref{rm:min-max-smooth}, we find
\begin{align}
  &\bar{m}^{1-ε}(Z|B)¬{ρ} ≥ \frac{1}{n} \hmax[√{1-ε^2}]{Z^n|B^n}[ρ] ¶\\
    &\quad≥ \frac{1}{n} \Bigg( \hmin[\tilde{ε}]{Z^n|B^n}[ρ] - 2 \log \frac{1}{1 - 
      \Big( ε\, \tilde{ε} + √{1-ε^2}√{1 - \tilde{ε}^2} \Big)^2} \Bigg)¶.
\end{align}
Here, the smoothing parameter of the min-entropy, $\tilde{ε}$, is restricted by the inequality $\arcsin(\tilde{ε}) + \arccos(ε) < \frac{π}{2}$. This holds if and only if $\tilde{ε} < ε$ and, for convenience of exposition, 
we choose $\tilde{ε} = \frac{ε}{2}$.
This ensures that the logarithmic
correction term is finite for any $0 < ε < 1$. For the fun of introducing a symmetry with Eq.~§[data-reco/direct] where there is none (at least not in the form of the explicit deviation terms), we further bound this as follows.\footnote{We hope the reader does not mind us skipping the details of this simple procedure, which we have
tested numerically and analytically.}
\begin{align}
  \bar{m}^{1-ε}(Z|B)¬{ρ} &≥ 
    \frac{1}{n} \Bigg( \hmin[ε/2]{Z^n|B^n}[ρ] - 2 \log \frac{1}{ε} - 6 \Bigg)¶,
\end{align}
Using the AEP, this time for the min-entropy, leads to
\begin{align}
  \bar{m}^{1-ε}(Z|B)¬{ρ} &≥ \hvn{Z|B}[ρ] - \frac{4 \log υ √{g(\frac{ε}{2})}}{√{n}} - \frac{2 \log \frac{1}{ε} + 6}{n} ¶[data-reco/converse].
\end{align}
Hence, in the asymptotic limit, for all $0 < ε < 1$, we find
\begin{align}
  \lim_{n → ∞} \bar{m}^{1-ε}(Z|B)¬{ρ} ≥ \hvn{Z|B}[ρ] ¶.
\end{align}
This is the converse of data-reconciliation. It states that, even for an arbitrarily small success probability ($ε > 0$), there does not exist a protocol that allows us to decode using less than $\hvn{Z|B}[ρ]$ transmitted bits per round.

\subsection{Strong Converse}
☼*{converse bound}

Finally, the non-asymptotic converse bound, Eq.~§[data-reco/converse] gives us the means to make this statement more precise. For example, we can ask how the maximal success probability, $ε_{\max}$, scales if we use a protocol
that encodes only $\hvn{Z|B}[ρ] - µ$ bits per round for large $n$. For this analysis, we neglect
the term scaling reciprocally in $n$, i.e.\ we consider the approximate bound
\begin{align}
  \bar{m}^{1-ε}(Z|B)¬{ρ} \gtrapprox \hvn{Z|B}[ρ] - \frac{4 \log υ √{g(\frac{ε}{2})}}{√{n}} ¶.
\end{align}
This implies that
\begin{align}
  ε_{\max}(µ) &\lessapprox \sup \bigg\{ ε > 0 : \frac{4 \log υ √{g(\frac{ε}{2})}}{√{n}} ≥ µ \bigg\} ¶\\
   &≤ √{2} \cdot 2^{-\frac{µ^2 n}{2 (4 \log υ)^2}} ¶[data-reco/strong].
\end{align}
In order to find~§[data-reco/strong], we used the bound $g(ε/2) ≤ \log \frac{8}{ε^2}$ (cf.\ Figure~\ref{fg:smoothing-bounds}). This establishes that the maximal success probability drops exponentially in $n$ and $µ^2$ if we transmit less than the Shannon limit, $\hvn{Z|B}[ρ]$ bits per round. The exponential drop of the success probability in $n$ is called the strong converse.

\section{Randomness Extraction}

\label{se:ext}

This section discusses ☼{randomness extraction}, the art of extracting uniform randomness 
from a biased source. More precisely, we consider a source that outputs a register
$Z$ about which there exists ☼{side information} $E$\,|\,potentially quantum\,|\,and ask how
much uniform randomness $S$, independent of $E$, can be extracted from $Z$. This primitive
is of crucial importance in many cryptographic tasks, for example in quantum 
cryptography. There, we are interested to distill a secret key from a 
raw key that is partially correlated with a quantum eavesdropper.

The quality of the extracted randomness is measured using the trace distance; we consider the distance
\begin{align}
  Δ(S|E)_ρ := \min_{σ¬E} D( ρ¬{SE} , π¬S ⨂ σ¬E ), \tn*{where} π¬S = \frac{1}{d¬S} ⅈ¬S ¶
\end{align}
and the optimization is over all $σ¬E ∈ \onorm{ℋ¬E}$. Due to the operational interpretation
of the trace distance as a ☼{distinguishing advantage}~§[dist-adv], a small $Δ$ implies that
the extracted random variable cannot be distinguished from a uniform and independent random variable with probability more than $\frac{1}{2} (1 + Δ)$. This viewpoint is at the root
of the universally composable security framework~\cite{canetti01,pfitzmann01} in cryptography, 
which was extended to the quantum setting in~\cite{unruh10} based on 
earlier work, for example in~\cite{renner05}.

We allow probabilistic protocols to extract this uniform randomness. These can be modeled by introducing an additional independent random seed, stored in a register $F$ 
and then applying a (deterministic) function $f ∈ F$ on $X$ to get $S$. 
Namely, we consider a joint state $ρ¬{ZEF}$ of the form
\begin{align}
  ρ¬{ZEF} = ∑_{z} P¬Z(z)\, \proj{z}[Z] ⨂ ρ¬{E}^z ⨂ ∑_{f} P¬F(f)\, \proj{f}[F] , 
    \quad ρ¬E^z ∈ \osub{ℋ¬E} ¶.
\end{align}
and require that $∑ P¬F(f)\, Δ(S|E)_{ρ^f}$ is small, where $ρ¬{SE}^f$ is the
state produced when $f$ is applied to the $Z$ register of $ρ¬{ZE}$.\footnote{This
is satisfied if if $Δ(S|EF)_{ρ}$ is small, i.e.\ if the extracted
randomness is independent of the seed $F$ as well as $E$~\cite{tomamichel10}. 
This is also called the strong extractor regime in classical cryptography.}
A randomness extraction protocol, $\cP$,
thus consists of a ☼[probability distribution]{probability} $P¬F(f)$ on a ☼{register} 
$F$ and a family of functions $f ∈ F$ from $Z$ to $S$.

The maximal number of bits of uniform randomness, $\ell^ε$, that can be extracted from 
a state $ρ¬{ZE}$ is defined as
\begin{align}
  ℓ^{ε}(Z|E)_{ρ} := \max \big\{ ℓ : \exists\, \cP \tn{s.t.} 
    \log d¬S = ℓ \tn{and} ∑ P¬F(f)\, Δ(S|E)_{ρ^f} ≤ ε  \big\} ¶.
\end{align}

A protocol that satisfies $∑ P¬F(f)\, Δ(S|E)_{ρ^f} ≤ ε$ is called $ε$-good.

\subsection{Direct Bound}
☼*{direct bound}

A particular class of protocols that can be used to extract uniform randomness are based on two-universal hashing (see Carter and Wegmann~\cite{carter79}). The classical Leftover Hash
Lemma~\cite{mcinnes87,impagliazzo89,zuckerman89} states the 
amount of extractable randomness is at least the min-entropy of $Z$ given $E$. 
This construction was then extended to the quantum setting in~\cite{maurer05,rennerkoenig05,renner05}. In fact, since hashing is an entirely
classical process, one may expect that the physical nature
of the side information is irrelevant and that a purely classical
treatment is sufficient. This is, however, not necessarily the case. 
For example, the output of certain extractor functions may be
partially known if side information about their input is stored in a
quantum device of a certain size, while the same output is
almost uniform conditioned on any side information stored in a
classical system of the same size.\footnote{See~\cite{gavinsky07} for a
concrete example and~\cite{koenig07} for a more general
discussion).}
For protocols based on two-universal hashing, the following ☼{direct bound} holds~\cite{tomamichel10}.
\begin{align}
  ℓ^{ε}(Z|E)_{ρ} ≥ \hmin[ε_1]{Z|E}[ρ] - 2 \log \frac{1}{ε_2}  + 1, \tn*{where} ε = ε_1 + ε_2 
  ¶[ext/direct].
\end{align}
Other protocol families that extract the min-entropy against quantum adversaries\footnote{These families are considered mainly because they need a smaller seed or can be implemented more efficiently than two-universal hashing.} are based on almost two-universal hashing~\cite{tomamichel10} or Trevisan extractors~\cite{portmann09}.

Note that the protocol families discussed above work on any state $ρ¬{ZE}$ with sufficiently high min-entropy, i.e.\ they do not take into account other properties of the state.
Next, we will see that these protocols are essentially optimal.

\subsection{Converse Bound}
☼*{converse bound}

We prove a ☼{converse bound} by contradiction. This formalizes an intuitive argument 
given in~\cite{tomamichel10}. Assume for the sake of the argument that we have 
an $ε$-good protocol that extracts $ℓ > \hmin[ε']{Z|E}[ρ]$ bits of randomness, where 
$ε' = √{2ε - ε^2}$. Then, due to Proposition~\ref{pr:func} we know that applying a function on 
$Z$ cannot increase the smooth min-entropy, thus
\begin{align}
  \forall\ f ∈ F : \hmin[ε']{S|E}[ρ^f] < ℓ \tn*{and, thus,} Δ(S|E)_{ρ^f} > ε ¶[ext/converse-1].
\end{align}
The second statement of~§[ext/converse-1] follows from the following observation. The inequality 
$\hmin[ε']{S|E}[ρ^f] < ℓ$ implies that
all states $\rhot$ with $D(\rhot¬{SE}, ρ¬{SE}^f) ≤ ε$ (cf.\ Proposition~\ref{pr:pd-gtd-bounds}) must 
necessarily satisfy $\hmin{S|E}[\rhot] < ℓ$. In particular, these close states can thus 
not be of the form $π¬S ⨂ σ¬E$, because 
such states have min-entropy $ℓ$. Hence, $Δ(S|E)_{ρ^f} > ε$.

However, §[ext/converse-1] implies that $∑ P¬F(f)\, Δ(S|E)¬{ρ^f} > ε$, contradicting our
initial assumption that the protocol was $ε$-good. This implies the following
converse bound
\begin{align}
  ℓ^{ε}(Z|E)_{ρ} ≤ \hmin[√{2ε - ε^2}]{Z|E}[ρ] ¶[ext/converse].
\end{align}

Collecting §[ext/direct] and~§[ext/converse], we have the following theorem
\begin{theorem}
  \label{th:rand-ext}
  Let $ρ¬{ZE} ∈ \osub{ℋ¬{ZE}}$ be classical on $Z$ and let $0 < ε < 1$. Then,
  \begin{align}
     \hmin[ε_1]{Z|E}[ρ] - 2 \log \frac{1}{ε_2}  + 1 ≤ ℓ^{ε}(Z|E)_{ρ} 
       ≤ \hmin[√{2ε - ε^2}]{Z|E}[ρ] ¶,
  \end{align}
  for any $ε_1, ε_2$ s.t.\ $ε_1 + ε_2 = ε$.
\end{theorem}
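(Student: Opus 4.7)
The proof naturally splits into the direct bound and the converse bound, both of which are essentially assembled from results already stated in the excerpt, so the work is mostly in stitching things together carefully.

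For the direct bound, the plan is to invoke the quantum Leftover Hash Lemma for two-universal hashing. Concretely, I would let $F$ index a two-universal family of hash functions $f: \cZ \to \cS$ with $|\cS| = 2^{\ell}$, choose $P_F$ to be uniform, and set $S = f(Z)$. The Leftover Hash Lemma with quantum side information (the form cited via Eq.~§[ext/direct]) then gives $\sum_f P_F(f)\, Δ(S|E)_{ρ^f} \le ε_1 + ε_2 = ε$ provided $\ell \le \hmin[ε_1]{Z|E}[ρ] - 2\log \tfrac{1}{ε_2} + 1$. The standard two-step argument is: first smooth $ρ$ to a nearby $\tilde ρ$ with $P(\tilde ρ, ρ) \le ε_1$ attaining $\hmin[ε_1]{Z|E}[ρ]$ (so $D(\tilde ρ, ρ) \le ε_1$ by Proposition~\ref{pr:pd-gtd-bounds}), then bound the expected $L_2$-distance of $f(\tilde Z)$ from uniform using two-universality together with the operator inequality defining $\hmin{Z|E}[\tilde ρ]$, and finally convert back to trace distance via Jensen's inequality, picking up the $2 \log (1/ε_2)$ loss. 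I will simply cite~\cite{tomamichel10} for this.

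For the converse bound, I would argue by contradiction, exactly along the lines sketched in the text before the theorem. Suppose there exists an $ε$-good protocol $\cP = \{P_F, \{f\}_{f \in F}\}$ extracting $\ell > \hmin[ε']{Z|E}[ρ]$ bits, where $ε' = \sqrt{2ε - ε^2}$. By Proposition~\ref{pr:func}, applying any function $f: \cZ \to \cS$ to the classical register cannot increase the smooth min-entropy, so $\hmin[ε']{S|E}[ρ^f] \le \hmin[ε']{Z|E}[ρ] < \ell$ for every $f$. The key step is then to show that this strict inequality forces $Δ(S|E)_{ρ^f} > ε$ for every $f$. For this, observe that if $Δ(S|E)_{ρ^f} \le ε$ then there exists $σ¬E$ with $D(ρ¬{SE}^f, π¬S \otimes σ¬E) \le ε$, hence by Proposition~\ref{pr:pd-gtd-bounds} also $P(ρ¬{SE}^f, π¬S \otimes σ¬E) \le \sqrt{2ε - ε^2} = ε'$; but $π¬S \otimes σ¬E$ has min-entropy $\ell$, so it witnesses $\hmin[ε']{S|E}[ρ^f] \ge \ell$, a contradiction. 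Therefore $Δ(S|E)_{ρ^f} > ε$ for every $f$, which gives $\sum_f P_F(f)\, Δ(S|E)_{ρ^f} > ε$, contradicting that $\cP$ is $ε$-good.

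The only subtle point is the conversion $D \le ε \Rightarrow P \le ε'$ used in the converse, since the smooth entropies in this thesis are defined via the purified distance while randomness extraction is quantified by the trace distance; Proposition~\ref{pr:pd-gtd-bounds} handles this cleanly. The direct bound is purely a citation of the quantum Leftover Hash Lemma, so it requires no new work. No major obstacle is expected\,|\,both inequalities follow directly once the right smoothing parameter is matched between the trace-distance security definition and the purified-distance smoothing in the min-entropy.
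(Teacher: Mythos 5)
Your proposal is correct and follows essentially the same route as the thesis: the direct bound is obtained by citing the quantum Leftover Hash Lemma for two-universal hashing (Eq.~§[ext/direct]), and the converse is the same proof by contradiction, combining Proposition~\ref{pr:func} with the observation that a state $π¬S ⨂ σ¬E$ within trace distance $ε$ of $ρ¬{SE}^f$ would lie in the $\sqrt{2ε-ε^2}$-purified-distance ball (Proposition~\ref{pr:pd-gtd-bounds}) and have min-entropy $ℓ$, contradicting $\hmin[ε']{S|E}[ρ^f]<ℓ$. No gaps.
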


We have established that the extractable uniform and independent randomness is characterized by the smooth min-entropy. A similar analysis for finite block lengths and the ☼{i.i.d.\ limit} as in Section~\ref{se:strong} is thus possible. However, we omit it here since most applications consider the task of randomness extraction only
in the ☼{one-shot} setting. An example of such an application, quantum key distribution, follows in the next section.

\section{Security in Quantum Key Distribution}

\label{se:qkd}

The smooth entropy formalism was first introduced in the quantum setting by Renner~\cite{renner05} 
in order to prove information theoretic security of ☼[quantum key distribution]{QKD} (QKD) 
protocols~\cite{bb84,ekert91} against adversaries restricted only by the laws of physics.
Prior to this work, the security of QKD protocols was mostly analyzed in the
limit of long keys and using questionable security definitions~\cite{koenigrenner07}.
☼*{quantum!cryptography}

In this section, we show that the uncertainty relations for smooth entropies in Chapter~\ref{ch:uc} can be employed to provide a very concise and intuitive security proof for QKD. This is based on two recent papers~\cite{tomamichel11} and~\cite{tomamichellim11}. The latter paper
contains a thorough analysis of the statistical tests required to assert security of the final key that goes beyond the discussion here.

\subsection{The Protocol}

We consider the original BB84 quantum key distribution protocol~\cite{bb84}. Here, two legitimate parties, Alice and Bob, try to distill a shared secret key that is independent of all wiretapped information. To do this, Alice and Bob share a public quantum channel and a public, authenticated classical channel. In the entanglement-based version of the BB84 protocol~\cite{bennett92}, 
Alice now prepares $n$ fully entangled qubit pairs of the form
\begin{align}
  \ket{ψ}[A_i B_i] = \frac{1}{√{2}} \Big( \ket{0}[A_i] ⨂ \ket{0}[B_i] + 
    \ket{1}[A_i] ⨂ \ket{1}[B_i] \Big) ,
\end{align}
where $i ∈ \{ 1, \dots, n \}$. She then sends the systems $B^n = B_1 \dots B_n$ 
over the public quantum channel to
Bob. 

Next, Alice chooses a basis for each qubit, $Θ_i ∈ \{ 0, 1 \}$, uniformly at random. She communicates the basis choices, $Θ^n = Θ_1 \dots Θ_n$,
to Bob over the classical channel and then measures her
qubits $A_i$ in the basis $\cX = \{ \ket{0}, \ket{1} \}$ if $Θ_i = 0$ or 
$\cY = \{ \ket{+}, \ket{-} \}$ if $Θ_i = 1$.\footnote{The diagonal basis is given by $\ket{±} = 1/\sqrt{2} \big( \ket{0} ± \ket{1} \big)$.} The collected measurement outcomes, 
called raw key, are stored
in a string $Z^n = Z_1 \dots Z_n$. Bob does a measurement on each of his systems
depending on the value of $Θ_i$ to produce a raw key $\bar{Z}^n$. This already concludes
the quantum part of the protocol.

Next, Alice and Bob employ a one-way data reconciliation protocol 
that allows Bob to create an estimate $\hat{Z}^n$ of $Z^n$ 
using his raw key $\bar{Z}^n$. 
Finally, Alice invokes a randomness extraction protocol on her string $Z^n$ to distill
a key $S$ and broadcasts the seed, $F$, over the classical channel in order 
to allow Bob to extract an estimate $\hat{S}$ of $S$. The latter step is usually called
privacy amplification~\cite{bennett88,bennett95}.

We note that, if the eavesdropper does not interfere and Bob simply measures his part of the entangled pair using the same bases as Alice, the resulting raw keys, $Z^n$ and $\bar{Z}^n$, 
will be perfectly correlated. In this case, security of the key simply follows from the monogamy of entanglement (i.e.\ Bob is the preferred observer of Alice's system) and no information reconciliation is necessary. In the next section, we analyze this protocol in the presence of 
an eavesdropper and noise. (The effects of noise and wiretapping can in general not be distinguished; hence, we consequently assume that correlations are degraded due to wiretapping
on the quantum channel.)

\subsection{Security in the Finite Key Regime}

Under what conditions will the key extracted by the above protocol be both secret (i.e.\ uniform and independent of the eavesdropper's information) and correct (i.e.\ $S = \hat{S}$)? We will see
that quantum mechanics allows us to ascertain both secrecy and correctness from the correlations between Alice's measurement outcomes and Bob's side information about them. This is in contrast to
classical theory, where, without further assumptions, we could only hope to determine correctness from such correlations.
Quantum mechanics enables this due to the asymmetry between different quantum mechanical observers, as discussed in the introduction of this thesis. If it can be established that one observer, Bob, is a preferred observer of a quantum system Alice holds, then this ensures that an eavesdropper's knowledge about the outcomes of measurements performed on that system is limited. Here, we show
how this asymmetry can be verified using the entropic uncertainty relation of Chapter~\ref{ch:uc}.

Since the eavesdropper is allowed to interact arbitrarily with the quantum communication sent from Alice to Bob, we do not make assumptions about the form of the resulting state $ρ¬{A^n B^n E}$. In 
particular, the systems $B_i$ that arrive at Bob's lab do not need to be qubits and $E$ is an 
arbitrarily large system, held by the eavesdropper, that may be correlated with the systems $A^n = A_1 \dots A_n$ and $B^n$.

The uncertainty relation for smooth min- and max-entropies, in particular Corollary~\ref{co:ucr/qkd}, can be applied to this setup. It states that
\begin{align}
  \hmin*{Z^n|Θ^n E} + \hmax*{Z^n|Θ^n B} ≥ n \log \frac{1}{c} ¶[qkd/ucr], 
\end{align}
where $c$ is the overlap of Alice's measurements (on one qubit). In the case of 
entanglement-based BB84 as described above, we have $c = \frac{1}{2}$ and, thus, 
the left-hand side 
of~§[qkd/ucr] evaluates to $n$. This means that the smooth min-entropy of the eavesdropper's information about $Z^n$ right after Alice's measurements is lower bounded by $n$ minus the max-entropy of Bob's correlations with $Z^n$. After Bob's measurement, using the data-processing inequality (cf.\ Theorem~\ref{th:data-proc}), we thus have
\begin{align}
  \hmin*{Z^n|Θ^n B} ≥ n - \hmax*{Z^n|\bar{Z}^n} ¶.
\end{align}

During information reconciliation, using a protocol as discussed in 
the direct part of Theorem~\ref{th:data-proc}, Alice sends
a message $M$ to Bob which satisfies
\begin{align}
  \log d¬M ≤ \hmax[ε_c/2]{Z^n|\bar{Z}^n} + 2 \log \frac{1}{ε_c} + 6 ¶,
\end{align}
where $ε_c$ is the required correctness. This ensures that the probability that Bob cannot correctly estimate Alice's string is at most $ε_c$.
Note that this message might be wiretapped by the eavesdropper and stored in a 
register $M'$. However, using Proposition~\ref{pr:class/bounds-2}, we bound
\begin{align}
  &\hmin*{Z^n|Θ^n E M} ≥ \hmin*{Z^n|Θ^n E} - \log d¬M ¶\\
    &\quad ≥ n - \hmax*{Z^n|\bar{Z}^n} - \hmax[ε_c/2]{Z^n|\bar{Z}^n} 
    - 2 \log \frac{1}{ε_c} - 6 ¶[qkd/b1].
\end{align}
The direct part of Theorem~\ref{th:rand-ext} now states 
that we can extract at least $ℓ$ bits of $ε_s$-secure key, where
\begin{align}
  ℓ ≥ \hmin[ε_s/2]{Z^n|Θ^n E M} - 2 \log \frac{1}{ε_s} - 1 ¶.
\end{align}
By the term $ε_s$-secure we mean that $Δ(S|Θ^n E M F)_{ρ} ≤ ε_s$. This implies that
the secret key is in particular independent of the seed $F$, which Alice needs to
send to Bob and which could be wiretapped by the eavesdropper.

Using~§[qkd/b1], we can express this bound entirely in terms of correlations between
Alice and Bob, i.e.\
\begin{align}
  ℓ ≥ n - \hmax[ε_s/2]{Z^n|\bar{Z}^n} - \hmax[ε_c/2]{Z^n|\bar{Z}^n}
    - 2 \log \frac{1}{ε_c} - 2 \log \frac{1}{ε_s} - 7 ¶.
\end{align}
This means that if the classical correlations between Alice's and Bob's measurement outcomes are sufficiently good\,|\,namely if the corresponding smooth max-entropies are small\,|\,we can safely extract a secret key using this protocol. 

It thus remains to find a statistical test to determine upper bounds on the smooth max-entropies. This is usually done in the following way. Alice and Bob, after measuring, compare a random subset of their raw keys using the classical channel. Then, if the frequency of errors found in this sample is smaller than an agreed threshold value, 
they will proceed with the protocol and extract a key. 
In this case, it can be shown that the smooth max-entropy is indeed small, and the key thus secure. Otherwise, they abort and do not produce a key.
(A detailed analysis of the statistical tests and the precise security statements that follow is beyond the scope of this thesis. Such an analysis can be found in~\cite{tomamichellim11}, for example.)

%\subsubsection{Devetak-Winter Bound}
%
%If we assume that the marginal state $ρ¬{A^n B^n}$ is of i.d.d.\ 
%form, i.e.\ $ρ¬{A^n B^n} = τ¬{AB}^{⨂ n}$, we can apply the AEP to this problem. (Note that this 
%assumption corresponds to the restriction to collective attacks by the eavesdropper.)

%............................

\chapter{Conclusions and Outlook}
\label{ch:outlook}

The goal of this thesis was to consolidate the smooth entropy framework for non-asymptotic information theory and to introduce important additions to the framework, including the entropic
asymptotic equipartition property and various uncertainty relations. 
My sincere hope is that this work will provide a reference for researchers interested in non-asymptotic quantum information theory and quantum cryptography. 

\section{Applications of the Framework}

The smooth entropy framework has already found a wide range of applications since its inception.

In cryptography, particularly quantum key distribution, the smooth entropy framework has become a standard tool to analyze security for finite keys~\cite{renner05,scarani08p}. This analysis
was simplified for some protocols thanks to the entropic uncertainty relation~\cite{tomamichel10,tomamichellim11}. The smooth entropy formalism allows to investigate entropically secure encryption~\cite{fehr08,dupuis07}, and, together with results from randomness extraction, we use it to show that bit commitments cannot be expanded~\cite{winkler11} in a quantum world.
Furthermore, composable security in the bounded storage~\cite{wehner07} and noisy storage~\cite{koenig09} models is analyzed using smooth entropies.

Decoupling of quantum systems~\cite{dupuis10} can be viewed as a fully quantum generalization of randomness extraction and is characterized by smooth entropies. The decoupling approach\,|\,a quantum generalization of random coding\,|\,leads to direct bounds for many information theoretic tasks. (Some of them are listed in~\cite{dupuis09} and~\cite{dupuis10}.)
We have also shown that decoupling is possible using 
approximate two-designs~\cite{szehr11}, which suggests that decoupling can be 
achieved efficiently in nature.

The smooth entropy formalism has also been used to investigate various quantum channel 
capacities and converses~\cite{datta09,datta11a,datta11b}. In particular, it leads to a conceptually simple proof of the quantum Reverse Shannon Theorem~\cite{bertachristandl11}.

In thermodynamics, smooth entropies have been used to quantify work extraction~\cite{dahlsten11} and the work cost of erasure~\cite{delrio11} in small systems. The smooth entropies have also been
used to investigate thermalization~\cite{hutter11}.

Finally, it has been shown~\cite{dattarenner08} that the smooth entropy framework entails the information spectrum method by Han and Vérdu~\cite{verdu93,han02} and its quantum generalization due to Hayashi, Nagaoka, and Ogawa~\cite{hayashi03,nagaoka07,ogawa00}.

\section{Outlook and Open Questions}

Some of the technical results in this thesis are new and applications remain unexplored. 
The entropic asymptotic equipartition property presented here has been improved from earlier 
work~\cite{tomamichel08} and now provides a converse bound for finite smoothing. This may
help in the quest to prove strong converse statements for various tasks in quantum information 
theory, including channel capacities. These arguments apply to classical theory as well
and it remains to be seen how these converse bounds compare
to the literature.

In Chapter~\ref{ch:uc}, Eq.~§[conj/c-ceff], it was conjectured that the effective overlap is always smaller than the overlap. It remains an open question to show that this is true for 
general POVMs.

The entropic uncertainty relation has been generalized from previous work~\cite{tomamichel11}.
The lower bound on the uncertainty is now expressed using an effective overlap. In turn,
this effective overlap can be bounded in terms of the maximal CHSH value that can be 
reached\,|\,using the same measurement setup\,|\,with an arbitrary second party. The CHSH value
is measure of the non-locality of classical correlations produced by two parties. We thus believe that the generalized uncertainty relation
provides a new avenue into device-independent quantum cryptography~\cite{haenggi11,lim12}.
More generally, the applications of the uncertainty relation for smooth entropies in quantum cryptography are not fully explored yet.
Another interesting extension is the bipartite uncertainty relation. It has potential 
applications in two-party quantum cryptography, which are unexplored.

Finally, the recent addition of a complete set of chain rules for the smooth entropies provides
an important missing link and will lead to an even larger range of applicability of the framework.

% bibliography
\cleardoublepage
\phantomsection
\addcontentsline{toc}{chapter}{Bibliography}

\bibliographystyle{alphaarxiv}
\bibliography{library}

\newcommand{\etalchar}[1]{$^{#1}$}
\begin{thebibliography}{dRAR{\etalchar{+}}11}

\bibitem[AF04]{alicki03}
R.~Alicki and M.~Fannes.
\newblock {Continuity of Quantum Conditional Information}.
\newblock {\em J. Phys. A: Math. Gen.}, 37(5): L55--L57, 2004.
\newblock
  \texttt{DOI:\,\href{http://dx.doi.org/10.1088/0305-4470/37/5/L01}{10.1088/0305-4470/37/5/L01}}.

\bibitem[AMAV07]{audenaert07}
K.~M.~R. Audenaert, L.~Masanes, A.~Ac\'{\i}n, and F.~Verstraete.
\newblock {Discriminating States: The Quantum Chernoff Bound}.
\newblock {\em Phys. Rev. Lett.}, 98(16), 2007.
\newblock
  \texttt{DOI:\,\href{http://dx.doi.org/10.1103/PhysRevLett.98.160501}{10.1103/PhysRevLett.98.160501}}.

\bibitem[ANSV08]{audenaert07-3}
K.~M.~R. Audenaert, M.~Nussbaum, A.~Szkoła, and F.~Verstraete.
\newblock {Asymptotic Error Rates in Quantum Hypothesis Testing}.
\newblock {\em Comm. Math. Phys.}, 279(1): 251--283, 2008.
\newblock
  \texttt{DOI:\,\href{http://dx.doi.org/10.1007/s00220-008-0417-5}{10.1007/s00220-008-0417-5}}.

\bibitem[BB84]{bb84}
C.~H. Bennett and G.~Brassard.
\newblock {Quantum Cryptography: Public Key Distribution and Coin Tossing}.
\newblock In {\em Proc. IEEE Int. Conf. on Comp., Sys. and Signal Process.},
  pages 175--179, Bangalore, 1984. IEEE.

\bibitem[BBCM95]{bennett95}
C.~H. Bennett, G.~Brassard, C.~Crepeau, and U.~M. Maurer.
\newblock {Generalized Privacy Amplification}.
\newblock {\em IEEE Trans. on Inf. Theory}, 41(6): 1915--1923, 1995.
\newblock
  \texttt{DOI:\,\href{http://dx.doi.org/10.1109/18.476316}{10.1109/18.476316}}.

\bibitem[BBM92]{bennett92}
C.~Bennett, G.~Brassard, and N.~Mermin.
\newblock {Quantum Cryptography Without Bell’s Theorem}.
\newblock {\em Phys. Rev. Lett.}, 68(5): 557--559, 1992.
\newblock
  \texttt{DOI:\,\href{http://dx.doi.org/10.1103/PhysRevLett.68.557}{10.1103/PhysRevLett.68.557}}.

\bibitem[BBR88]{bennett88}
C.~H. Bennett, G.~Brassard, and J.-M. Robert.
\newblock {Privacy Amplification by Public Discussion}.
\newblock {\em SIAM J. on Comput.}, 17(2): 210, 1988.
\newblock
  \texttt{DOI:\,\href{http://dx.doi.org/10.1137/0217014}{10.1137/0217014}}.

\bibitem[BCC{\etalchar{+}}10]{berta10}
M.~Berta, M.~Christandl, R.~Colbeck, J.~M. Renes, and R.~Renner.
\newblock {The Uncertainty Principle in the Presence of Quantum Memory}.
\newblock {\em Nat. Phys.}, 6(9): 659--662, 2010.
\newblock
  \texttt{DOI:\,\href{http://dx.doi.org/10.1038/nphys1734}{10.1038/nphys1734}}.

\bibitem[BCR11]{bertachristandl11}
M.~Berta, M.~Christandl, and R.~Renner.
\newblock {The Quantum Reverse Shannon Theorem Based on One-Shot Information
  Theory}.
\newblock {\em Commun. Math. Phys.}, 306(3): 579--615, 2011.
\newblock
  \texttt{DOI:\,\href{http://dx.doi.org/10.1007/s00220-011-1309-7}{10.1007/s00220-011-1309-7}}.

\bibitem[BD10]{datta09}
F.~Buscemi and N.~Datta.
\newblock {The Quantum Capacity of Channels With Arbitrarily Correlated Noise}.
\newblock {\em IEEE Trans. on Inf. Theory}, 56(3): 1447--1460, 2010.
\newblock
  \texttt{DOI:\,\href{http://dx.doi.org/10.1109/TIT.2009.2039166}{10.1109/TIT.2009.2039166}}.

\bibitem[Bel64]{bell64}
J.~S. Bell.
\newblock {On the Einstein-Podolsky-Rosen paradox}.
\newblock {\em Physics}, 1: 195--200, 1964.

\bibitem[Ber08]{berta08}
M.~Berta.
\newblock {Single-Shot Quantum State Merging}.
\newblock Master's thesis, ETH Zurich, 2008.
\newblock  \texttt{arXiv:\,\href{http://arxiv.org/abs/0912.4495}{0912.4495}}.

\bibitem[BFS11]{furrer11}
M.~Berta, F.~Furrer, and V.~B. Scholz.
\newblock {The Smooth Entropy Formalism on von Neumann Algebras}.
\newblock 2011.
\newblock  \texttt{arXiv:\,\href{http://arxiv.org/abs/1107.5460}{1107.5460}}.

\bibitem[BFW11]{berta11}
M.~Berta, O.~Fawzi, and S.~Wehner.
\newblock {Quantum to Classical Randomness Extractors}.
\newblock 2011.
\newblock  \texttt{arXiv:\,\href{http://arxiv.org/abs/1111.2026}{1111.2026}}.

\bibitem[Bha97]{bhatia97}
R.~Bhatia.
\newblock {\em {Matrix Analysis}}.
\newblock Graduate Texts in Mathematics. Springer, 1997.

\bibitem[Bha07]{bhatia07}
R.~Bhatia.
\newblock {\em {Positive Definite Matrices}}.
\newblock Princeton Series in Applied Mathematics, 2007.

\bibitem[Bol72]{boltzmann1872}
L.~Boltzmann.
\newblock {Weitere Studien \"{u}ber das W\"{a}rmegleichgewicht unter
  Gasmolek\"{u}len}.
\newblock In {\em Sitzungsberichte der Akademie der Wissenschaften zu Wien},
  volume~66, pages 275--370, 1872.

\bibitem[BR12]{beaudry11}
N.~J. Beaudry and R.~Renner.
\newblock {An Intuitive Proof of the Data Processing Inequality}.
\newblock {\em Quantum Inf. Comput.}, 12(5\&6): 0432--0441, 2012.
\newblock  \texttt{arXiv:\,\href{http://arxiv.org/abs/1107.0740}{1107.0740}}.

\bibitem[Bur69]{bures69}
D.~Bures.
\newblock {An extension of Kakutani's theorem on infinite product measures to
  the tensor product of semifinite $\omega$*-algebras}.
\newblock {\em Trans. Amer. Math. Soc.}, 135: 199--212, 1969.
\newblock
  \texttt{DOI:\,\href{http://dx.doi.org/10.1090/S0002-9947-1969-0236719-2}{10.1090/S0002-9947-1969-0236719-2}}.

\bibitem[Can01]{canetti01}
R.~Canetti.
\newblock {Universally composable security: a new paradigm for cryptographic
  protocols}.
\newblock In {\em Proc. FOCS}, pages 136--145. IEEE, 2001.
\newblock
  \texttt{DOI:\,\href{http://dx.doi.org/10.1109/SFCS.2001.959888}{10.1109/SFCS.2001.959888}}.

\bibitem[CBKG02]{cerf02}
N.~Cerf, M.~Bourennane, A.~Karlsson, and N.~Gisin.
\newblock {Security of Quantum Key Distribution Using d-Level Systems}.
\newblock {\em Phys. Rev. Lett.}, 88(12), 2002.
\newblock
  \texttt{DOI:\,\href{http://dx.doi.org/10.1103/PhysRevLett.88.127902}{10.1103/PhysRevLett.88.127902}}.

\bibitem[CHSH69]{CHSH}
J.~Clauser, M.~Horne, A.~Shimony, and R.~Holt.
\newblock {Proposed Experiment to Test Local Hidden-Variable Theories}.
\newblock {\em Phys. Rev. Lett.}, 23(15): 880--884, 1969.
\newblock
  \texttt{DOI:\,\href{http://dx.doi.org/10.1103/PhysRevLett.23.880}{10.1103/PhysRevLett.23.880}}.

\bibitem[CK11]{colbeck10}
R.~Colbeck and A.~Kent.
\newblock {Private Randomness Expansion with Untrusted Devices}.
\newblock {\em J. Phys. A: Math. Gen.}, 44(9): 095305, 2011.
\newblock
  \texttt{DOI:\,\href{http://dx.doi.org/10.1088/1751-8113/44/9/095305}{10.1088/1751-8113/44/9/095305}}.

\bibitem[CR08]{colbeck08}
R.~Colbeck and R.~Renner.
\newblock {Hidden Variable Models for Quantum Theory cannot have any Local
  Part}.
\newblock {\em Phys. Rev. Lett.}, 101(5), 2008.
\newblock
  \texttt{DOI:\,\href{http://dx.doi.org/10.1103/PhysRevLett.101.050403}{10.1103/PhysRevLett.101.050403}}.

\bibitem[Csi72]{csiszar72}
I.~Csisz\'{a}r.
\newblock {A class of measures of informativity of observation channels}.
\newblock {\em Periodica Math. Hungarica}, 2(1-4): 191--213, 1972.
\newblock
  \texttt{DOI:\,\href{http://dx.doi.org/10.1007/BF02018661}{10.1007/BF02018661}}.

\bibitem[CT91]{cover91}
T.~M. Cover and J.~A. Thomas.
\newblock {\em {Elements of Information Theory}}.
\newblock Wiley, 1991.
\newblock
  \texttt{DOI:\,\href{http://dx.doi.org/10.1002/047174882X}{10.1002/047174882X}}.

\bibitem[CW79]{carter79}
J.~L. Carter and M.~N. Wegman.
\newblock {Universal Classes of Hash Functions}.
\newblock {\em J. Comp. Syst. Sci.}, 18(2): 143--154, 1979.
\newblock
  \texttt{DOI:\,\href{http://dx.doi.org/10.1016/0022-0000(79)90044-8}{10.1016/0022-0000(79)90044-8}}.

\bibitem[CW05]{christandl05}
M.~Christandl and A.~Winter.
\newblock {Uncertainty, Monogamy, and Locking of Quantum Correlations}.
\newblock {\em IEEE Trans. on Inf. Theory}, 51(9): 3159--3165, 2005.
\newblock
  \texttt{DOI:\,\href{http://dx.doi.org/10.1109/TIT.2005.853338}{10.1109/TIT.2005.853338}}.

\bibitem[CYGG11]{coles10}
P.~Coles, L.~Yu, V.~Gheorghiu, and R.~Griffiths.
\newblock {Information-theoretic treatment of tripartite systems and quantum
  channels}.
\newblock {\em Phys. Rev. A}, 83(6), 2011.
\newblock
  \texttt{DOI:\,\href{http://dx.doi.org/10.1103/PhysRevA.83.062338}{10.1103/PhysRevA.83.062338}}.

\bibitem[CYZ11]{coles11}
P.~J. Coles, L.~Yu, and M.~Zwolak.
\newblock {Relative Entropy Derivation of the Uncertainty Principle with
  Quantum Side Information}.
\newblock 2011.
\newblock  \texttt{arXiv:\,\href{http://arxiv.org/abs/1105.4865}{1105.4865}}.

\bibitem[Dat09]{datta08}
N.~Datta.
\newblock {Min- and Max- Relative Entropies and a New Entanglement Monotone}.
\newblock {\em IEEE Trans. on Inf. Theory}, 55(6): 2816--2826, 2009.
\newblock
  \texttt{DOI:\,\href{http://dx.doi.org/10.1109/TIT.2009.2018325}{10.1109/TIT.2009.2018325}}.

\bibitem[DBWR10]{dupuis10}
F.~Dupuis, M.~Berta, J.~Wullschleger, and R.~Renner.
\newblock {The Decoupling Theorem}.
\newblock 2010.
\newblock  \texttt{arXiv:\,\href{http://arxiv.org/abs/1012.6044}{1012.6044}}.

\bibitem[DD10]{dupuis07}
S.~P. Desrosiers and F.~Dupuis.
\newblock {Quantum Entropic Security and Approximate Quantum Encryption}.
\newblock {\em IEEE Trans. on Inf. Theory}, 56(7): 3455--3464, 2010.
\newblock
  \texttt{DOI:\,\href{http://dx.doi.org/10.1109/TIT.2010.2048488}{10.1109/TIT.2010.2048488}}.

\bibitem[Deu83]{deutsch83}
D.~Deutsch.
\newblock {Uncertainty in Quantum Measurements}.
\newblock {\em Phys. Rev. Lett.}, 50(9): 631--633, 1983.
\newblock
  \texttt{DOI:\,\href{http://dx.doi.org/10.1103/PhysRevLett.50.631}{10.1103/PhysRevLett.50.631}}.

\bibitem[DH11]{datta11b}
N.~Datta and M.-H. Hsieh.
\newblock {One-shot entanglement-assisted quantum and classical communication}.
\newblock page~10, 2011.
\newblock  \texttt{arXiv:\,\href{http://arxiv.org/abs/1105.3321}{1105.3321}}.

\bibitem[DHB11]{datta11a}
N.~Datta, M.-H. Hsieh, and F.~G. S.~L. Brandao.
\newblock {Strong Converse Rates and an Example of Violation of the Strong
  Converse Property}.
\newblock 2011.
\newblock  \texttt{arXiv:\,\href{http://arxiv.org/abs/1106.3089}{1106.3089}}.

\bibitem[DPVR12]{portmann09}
A.~De, C.~Portmann, T.~Vidick, and R.~Renner.
\newblock {Trevisan's Extractor in the Presence of Quantum Side Information}.
\newblock {\em SIAM J. Comput.}, 41(4): 915--940, 2012.
\newblock
  \texttt{DOI:\,\href{http://dx.doi.org/10.1137/100813683}{10.1137/100813683}}.

\bibitem[DR09]{dattarenner08}
N.~Datta and R.~Renner.
\newblock {Smooth Entropies and the Quantum Information Spectrum}.
\newblock {\em IEEE Trans. on Inf. Theory}, 55(6): 2807--2815, 2009.
\newblock
  \texttt{DOI:\,\href{http://dx.doi.org/10.1109/TIT.2009.2018340}{10.1109/TIT.2009.2018340}}.

\bibitem[dRAR{\etalchar{+}}11]{delrio11}
L.~del Rio, J.~Aberg, R.~Renner, O.~Dahlsten, and V.~Vedral.
\newblock {The Thermodynamic Meaning of Negative Entropy.}
\newblock {\em Nature}, 474(7349): 61--3, 2011.
\newblock
  \texttt{DOI:\,\href{http://dx.doi.org/10.1038/nature10123}{10.1038/nature10123}}.

\bibitem[DRRV11]{dahlsten11}
O.~C.~O. Dahlsten, R.~Renner, E.~Rieper, and V.~Vedral.
\newblock {Inadequacy of von Neumann Entropy for Characterizing Extractable
  Work}.
\newblock {\em New J. Phys.}, 13(5): 053015, 2011.
\newblock
  \texttt{DOI:\,\href{http://dx.doi.org/10.1088/1367-2630/13/5/053015}{10.1088/1367-2630/13/5/053015}}.

\bibitem[Dup09]{dupuis09}
F.~Dupuis.
\newblock {\em {The Decoupling Approach to Quantum Information Theory}}.
\newblock PhD thesis, Universit\'{e} de Montr\'{e}al, 2009.
\newblock  \texttt{arXiv:\,\href{http://arxiv.org/abs/1004.1641}{1004.1641}}.

\bibitem[DW03]{devetak03}
I.~Devetak and A.~Winter.
\newblock {Classical Data Compression with Quantum Side Information}.
\newblock {\em Phys. Rev. A}, 68(4), 2003.
\newblock
  \texttt{DOI:\,\href{http://dx.doi.org/10.1103/PhysRevA.68.042301}{10.1103/PhysRevA.68.042301}}.

\bibitem[Eke91]{ekert91}
A.~K. Ekert.
\newblock {Quantum cryptography based on Bell’s theorem}.
\newblock {\em Phys. Rev. Lett.}, 67(6): 661--663, 1991.
\newblock
  \texttt{DOI:\,\href{http://dx.doi.org/10.1103/PhysRevLett.67.661}{10.1103/PhysRevLett.67.661}}.

\bibitem[EPR35]{epr35}
A.~Einstein, B.~Podolsky, and N.~Rosen.
\newblock {Can Quantum-Mechanical Description of Physical Reality Be Considered
  Complete?}
\newblock {\em Phys. Rev.}, 47(10): 777--780, 1935.
\newblock
  \texttt{DOI:\,\href{http://dx.doi.org/10.1103/PhysRev.47.777}{10.1103/PhysRev.47.777}}.

\bibitem[FAbR11]{furrer10}
F.~Furrer, J.~\AA~berg, and R.~Renner.
\newblock {Min- and Max-Entropy in Infinite Dimensions}.
\newblock {\em Commun. Math. Phys.}, 306(1): 165--186, 2011.
\newblock
  \texttt{DOI:\,\href{http://dx.doi.org/10.1007/s00220-011-1282-1}{10.1007/s00220-011-1282-1}}.

\bibitem[Fan73]{fannes73}
M.~Fannes.
\newblock {A Continuity Property of the Entropy Density for Spin Lattice
  Systems}, 1973.
\newblock
  \texttt{DOI:\,\href{http://dx.doi.org/10.1007/BF01646490}{10.1007/BF01646490}}.

\bibitem[FS08]{fehr08}
S.~Fehr and C.~Schaffner.
\newblock {Randomness Extraction Via Delta-Biased Masking in the Presence of a
  Quantum Attacker}.
\newblock In {\em Proc. TCC}, volume 4948 of {\em LNCS}, pages 465--481, New
  York, 2008. Springer.
\newblock
  \texttt{arXiv:\,\href{http://arxiv.org/abs/0706.2606v2}{0706.2606v2}}.

\bibitem[Fuc96]{fuchs96}
C.~A. Fuchs.
\newblock {\em {Distinguishability and Accessible Information in Quantum
  Theory}}.
\newblock Phd thesis, University of New Mexico, 1996.
\newblock
  \texttt{arXiv:\,\href{http://arxiv.org/abs/quant-ph/9601020v1}{quant-ph/9601020v1}}.

\bibitem[Gal79]{gallager79}
R.~G. Gallager.
\newblock {Source Coding with Side Information and Universal Coding}.
\newblock In {\em Proc. IEEE ISIT}, volume~21, Ronneby, Sweden,, 1979. IEEE.

\bibitem[Gib76]{gibbs1876}
J.~W. Gibbs.
\newblock {On the Equilibrium of Heterogeneous Substances}.
\newblock {\em Transactions of the Connecticut Academy}, III: 108--248, 1876.

\bibitem[GKK{\etalchar{+}}07]{gavinsky07}
D.~Gavinsky, J.~Kempe, I.~Kerenidis, R.~Raz, and R.~de~Wolf.
\newblock {Exponential Separation for one-way Quantum Communication Complexity,
  with Applications to Cryptography}.
\newblock In {\em Proc. ACM STOC}, pages 516--525. ACM Press, 2007.
\newblock
  \texttt{arXiv:\,\href{http://arxiv.org/abs/arXiv:quant-ph/0611209v3}{arXiv:quant-ph/0611209v3}}.

\bibitem[GLN04]{nielsen04}
A.~Gilchrist, N.~K. Langford, and M.~A. Nielsen.
\newblock {Distance Measures to Compare Real and Ideal Quantum Processes},
  2004.
\newblock  \texttt{arXiv:\,\href{http://arxiv.org/abs/0408063}{0408063}}.

\bibitem[Hal95]{hall95}
M.~Hall.
\newblock {Information Exclusion Principle for Complementary Observables}.
\newblock {\em Phys. Rev. Lett.}, 74(17): 3307--3311, 1995.
\newblock
  \texttt{DOI:\,\href{http://dx.doi.org/10.1103/PhysRevLett.74.3307}{10.1103/PhysRevLett.74.3307}}.

\bibitem[Han02]{han02}
T.~S. Han.
\newblock {\em {Information-Spectrum Methods in Information Theory}}.
\newblock Applications of Mathematics. Springer, 2002.

\bibitem[Har28]{hartley28}
R.~V.~L. Hartley.
\newblock {Transmission of Information}.
\newblock {\em Bell Syst. Tech. J.}, 7(3): 535--563, 1928.

\bibitem[Har01]{hardy01}
L.~Hardy.
\newblock {Quantum Theory From Five Reasonable Axioms}.
\newblock page~34, 2001.
\newblock
  \texttt{arXiv:\,\href{http://arxiv.org/abs/quant-ph/0101012}{quant-ph/0101012}}.

\bibitem[Hay06]{hayashi06}
M.~Hayashi.
\newblock {\em {Quantum Information --- An Introduction}}.
\newblock Springer, 2006.

\bibitem[Hei27]{heisenberg27}
W.~Heisenberg.
\newblock {\"{U}ber den anschaulichen Inhalt der quantentheoretischen Kinematik
  und Mechanik}.
\newblock {\em Z. Phys.}, 43(3-4): 172--198, 1927.

\bibitem[HHH96]{horodecki96}
M.~Horodecki, P.~Horodecki, and R.~Horodeck.
\newblock {Separability of mixed states: necessary and sufficient conditions}.
\newblock {\em Phys. Lett. A}, 223(1-2): 1--8, 1996.
\newblock
  \texttt{DOI:\,\href{http://dx.doi.org/10.1016/S0375-9601(96)00706-2}{10.1016/S0375-9601(96)00706-2}}.

\bibitem[Hir57]{hirschman57}
I.~I. Hirschman.
\newblock {A Note on Entropy}.
\newblock {\em Am. J. Math.}, 79(1): 152--156, 1957.

\bibitem[HK69]{kraus69}
K.~E. Hellwig and K.~Kraus.
\newblock {Pure Operations and Measurements}.
\newblock {\em Commun. Math. Phys.}, 11(3): 214--220, 1969.
\newblock
  \texttt{DOI:\,\href{http://dx.doi.org/10.1007/BF01645807}{10.1007/BF01645807}}.

\bibitem[HK70]{kraus70}
K.~E. Hellwig and K.~Kraus.
\newblock {Operations and Measurements II}.
\newblock {\em Commun. Math. Phys.}, 16(2): 142--147, 1970.
\newblock
  \texttt{DOI:\,\href{http://dx.doi.org/10.1007/BF01646620}{10.1007/BF01646620}}.

\bibitem[HMPB11]{hiai10}
F.~Hiai, M.~Mosonyi, D.~Petz, and C.~B\'{e}ny.
\newblock {Quantum f-Divergences and Error Correction}.
\newblock {\em Rev. Math. Phys.}, 23(07): 691--747, 2011.
\newblock
  \texttt{DOI:\,\href{http://dx.doi.org/10.1142/S0129055X11004412}{10.1142/S0129055X11004412}}.

\bibitem[HN03]{hayashi03}
M.~Hayashi and H.~Nagaoka.
\newblock {General Formulas for Capacity of Classical-Quantum Channels}.
\newblock {\em IEEE Trans. on Inf. Theory}, 49(7): 1753--1768, 2003.
\newblock
  \texttt{DOI:\,\href{http://dx.doi.org/10.1109/TIT.2003.813556}{10.1109/TIT.2003.813556}}.

\bibitem[Hol73]{holevo73}
A.~S. Holevo.
\newblock {Some Estimates for Information Quantity Transmitted by Quantum
  Communication Channel}.
\newblock {\em Probl. Inform. Transm.}, 9(3): 3--11, 1973.

\bibitem[HOW05]{horodecki05}
M.~Horodecki, J.~Oppenheim, and A.~Winter.
\newblock {Partial Quantum Information.}
\newblock {\em Nature}, 436(7051): 673--6, 2005.
\newblock
  \texttt{DOI:\,\href{http://dx.doi.org/10.1038/nature03909}{10.1038/nature03909}}.

\bibitem[HOW06]{horodecki06}
M.~Horodecki, J.~Oppenheim, and A.~Winter.
\newblock {Quantum State Merging and Negative Information}.
\newblock {\em Commun. Math. Phys.}, 269(1): 107--136, 2006.
\newblock
  \texttt{DOI:\,\href{http://dx.doi.org/10.1007/s00220-006-0118-x}{10.1007/s00220-006-0118-x}}.

\bibitem[HP03]{hansen03}
F.~Hansen and G.~K. Pedersen.
\newblock {Jensen's Operator Inequality}.
\newblock {\em B. Lond. Math. Soc.}, 35(4): 553--564, 2003.

\bibitem[HR11]{holensteinrenner06}
T.~Holenstein and R.~Renner.
\newblock {On the Randomness of Independent Experiments}, 2011.
\newblock
  \texttt{DOI:\,\href{http://dx.doi.org/10.1109/TIT.2011.2110230}{10.1109/TIT.2011.2110230}}.

\bibitem[HV93]{verdu93}
T.~S. Han and S.~Verdu.
\newblock {Approximation Theory of Output Statistics}.
\newblock In {\em Proc. IEEE ISIT}, pages 153--153. IEEE, 1993.
\newblock
  \texttt{DOI:\,\href{http://dx.doi.org/10.1109/ISIT.1993.748468}{10.1109/ISIT.1993.748468}}.

\bibitem[HW13]{hutter11}
A.~Hutter and S.~Wehner.
\newblock {Dependence of a quantum-mechanical system on its own initial state
  and the initial state of the environment it interacts with}.
\newblock {\em Phys. Rev. A}, 87(1): 012121, 2013.
\newblock
  \texttt{DOI:\,\href{http://dx.doi.org/10.1103/PhysRevA.87.012121}{10.1103/PhysRevA.87.012121}}.

\bibitem[ILL89]{impagliazzo89}
R.~Impagliazzo, L.~A. Levin, and M.~Luby.
\newblock {Pseudo-random generation from one-way functions}.
\newblock In {\em Proc. ACM STOC}, pages 12--24. ACM Press, 1989.
\newblock
  \texttt{DOI:\,\href{http://dx.doi.org/10.1145/73007.73009}{10.1145/73007.73009}}.

\bibitem[IZ89]{zuckerman89}
R.~Impagliazzo and D.~Zuckerman.
\newblock {How to Recycle Random Bits}.
\newblock In {\em Proc. IEEE Symp. on Found. of Comp. Sc.}, pages 248--253,
  1989.
\newblock
  \texttt{DOI:\,\href{http://dx.doi.org/10.1109/SFCS.1989.63486}{10.1109/SFCS.1989.63486}}.

\bibitem[Jam72]{jamiliolkowski72}
A.~Jamiołkowski.
\newblock {Linear transformations which preserve trace and positive
  semidefiniteness of operators.}
\newblock {\em Rep. Math. Phys.}, 3(4): 275--278, 1972.

\bibitem[Kle31]{klein31}
O.~Klein.
\newblock {Zur Quantenmechanischen Begr\"{u}ndung des zweiten Hauptsatzes der
  W\"{a}rmelehre}.
\newblock {\em Z. Phys.}, 72(11-12): 767--775, 1931.
\newblock
  \texttt{DOI:\,\href{http://dx.doi.org/10.1007/BF01341997}{10.1007/BF01341997}}.

\bibitem[KMR05]{maurer05}
R.~K\"{o}nig, U.~M. Maurer, and R.~Renner.
\newblock {On the Power of Quantum Memory}.
\newblock {\em IEEE Trans. on Inf. Theory}, 51(7): 2391--2401, 2005.
\newblock
  \texttt{DOI:\,\href{http://dx.doi.org/10.1109/TIT.2005.850087}{10.1109/TIT.2005.850087}}.

\bibitem[KP02]{krishna01}
M.~Krishna and K.~R. Parthasarathy.
\newblock {An Entropic Uncertainty Principle for Quantum Measurements}.
\newblock {\em Indian J. Stat.}, 64(3): 842--851, 2002.

\bibitem[KR11]{koenig07}
R.~K\"{o}nig and R.~Renner.
\newblock {Sampling of Min-Entropy Relative to Quantum Knowledge}.
\newblock {\em IEEE Trans. on Inf. Theory}, 57(7): 4760--4787, 2011.
\newblock
  \texttt{DOI:\,\href{http://dx.doi.org/10.1109/TIT.2011.2146730}{10.1109/TIT.2011.2146730}}.

\bibitem[KRBM07]{koenigrenner07}
R.~K\"{o}nig, R.~Renner, A.~Bariska, and U.~M. Maurer.
\newblock {Small Accessible Quantum Information Does Not Imply Security}.
\newblock {\em Phys. Rev. Lett.}, 98(14), 2007.
\newblock
  \texttt{DOI:\,\href{http://dx.doi.org/10.1103/PhysRevLett.98.140502}{10.1103/PhysRevLett.98.140502}}.

\bibitem[KRS09]{koenig08}
R.~K\"{o}nig, R.~Renner, and C.~Schaffner.
\newblock {The Operational Meaning of Min- and Max-Entropy}.
\newblock {\em IEEE Trans. on Inf. Theory}, 55(9): 4337--4347, 2009.
\newblock
  \texttt{DOI:\,\href{http://dx.doi.org/10.1109/TIT.2009.2025545}{10.1109/TIT.2009.2025545}}.

\bibitem[KS68]{kochen68}
S.~Kochen and E.~P. Specker.
\newblock {The Problem of Hidden Variables in Quantum Mechanics}.
\newblock {\em Indiana Univ. Math. J.}, 17(1): 59--87, 1968.

\bibitem[KWW12]{koenig09}
R.~K\"{o}nig, S.~Wehner, and J.~Wullschleger.
\newblock {Unconditional Security From Noisy Quantum Storage}.
\newblock {\em IEEE Trans. on Inf. Theory}, 58(3): 1962--1984, 2012.
\newblock
  \texttt{DOI:\,\href{http://dx.doi.org/10.1109/TIT.2011.2177772}{10.1109/TIT.2011.2177772}}.

\bibitem[L\"04]{YALMIP}
J.~L\"{o}fberg.
\newblock {YALMIP : A Toolbox for Modeling and Optimization in MATLAB}.
\newblock In {\em Proc. of the CACSD Conf.}, Taipei, Taiwan, 2004.
\newblock \\ Online: \url{http://users.isy.liu.se/johanl/yalmip}.

\bibitem[LPT{\etalchar{+}}12]{lim12}
C.~C.~W. Lim, C.~Portmann, M.~Tomamichel, R.~Renner, and N.~Gisin.
\newblock {Device-Independent Quantum Key Distribution with Local Bell Test}.
\newblock 2012.
\newblock  \texttt{arXiv:\,\href{http://arxiv.org/abs/1208.0023}{1208.0023}}.

\bibitem[LSS12]{lesheridan11}
T.~P. Le, L.~Sheridan, and V.~Scarani.
\newblock {Tomographic Quantum Cryptography Protocols are Reference Frame
  Independent}.
\newblock {\em Int. J. Quanum Inform.}, 10: 1250035, 2012.
\newblock
  \texttt{DOI:\,\href{http://dx.doi.org/10.1142/S0219749912500359}{10.1142/S0219749912500359}}.

\bibitem[Mcl87]{mcinnes87}
J.~Mclnnes.
\newblock {Cryptography Using Weak Sources of Randomness}.
\newblock {\em Tech. Report, U. of Toronto}, 1987.

\bibitem[MD09]{mosonyidatta08}
M.~Mosonyi and N.~Datta.
\newblock {Generalized Relative Entropies and the Capacity of Classical-Quantum
  Channels}.
\newblock {\em J. Math. Phys.}, 50(7): 072104, 2009.
\newblock
  \texttt{DOI:\,\href{http://dx.doi.org/10.1063/1.3167288}{10.1063/1.3167288}}.

\bibitem[MU88]{maassen88}
H.~Maassen and J.~Uffink.
\newblock {Generalized Entropic Uncertainty Relations}.
\newblock {\em Phys. Rev. Lett.}, 60(12): 1103--1106, 1988.
\newblock
  \texttt{DOI:\,\href{http://dx.doi.org/10.1103/PhysRevLett.60.1103}{10.1103/PhysRevLett.60.1103}}.

\bibitem[NC00]{nielsen00}
M.~A. Nielsen and I.~Chuang.
\newblock {\em {Quantum Computation and Quantum Information}}.
\newblock Cambridge University Press, 2000.

\bibitem[NH07]{nagaoka07}
H.~Nagaoka and M.~Hayashi.
\newblock {An Information-Spectrum Approach to Classical and Quantum Hypothesis
  Testing for Simple Hypotheses}.
\newblock {\em IEEE Trans. on Inf. Theory}, 53(2): 534--549, 2007.
\newblock
  \texttt{DOI:\,\href{http://dx.doi.org/10.1109/TIT.2006.889463}{10.1109/TIT.2006.889463}}.

\bibitem[NO00]{ogawa00}
H.~Nagaoka and T.~Ogawa.
\newblock {Strong converse and Stein's lemma in quantum hypothesis testing}.
\newblock {\em IEEE Trans. on Inf. Theory}, 46(7): 2428--2433, 2000.
\newblock
  \texttt{DOI:\,\href{http://dx.doi.org/10.1109/18.887855}{10.1109/18.887855}}.

\bibitem[NP05]{nielsenpetz04}
M.~A. Nielsen and D.~Petz.
\newblock {A Simple Proof of the Strong Subadditivity Inequality}.
\newblock {\em Quant. Inf. Comput.}, 5(6): 507--513, 2005.
\newblock
  \texttt{arXiv:\,\href{http://arxiv.org/abs/quant-ph/0408130}{quant-ph/0408130}}.

\bibitem[Nyq28]{nyquist28}
H.~Nyquist.
\newblock {Certain Topics in Telegraph Transmission Theory}.
\newblock {\em Trans. AIEE}, 47(2): 617--644, 1928.
\newblock
  \texttt{DOI:\,\href{http://dx.doi.org/10.1109/T-AIEE.1928.5055024}{10.1109/T-AIEE.1928.5055024}}.

\bibitem[Ooh94]{oohama94}
Y.~Oohama.
\newblock {Universal Coding for the Slepian-Wolf Data Compression system and
  the strong converse theorem}.
\newblock {\em IEEE Trans. on Inf. Theory}, 40(6): 1908--1919, 1994.
\newblock
  \texttt{DOI:\,\href{http://dx.doi.org/10.1109/18.340465}{10.1109/18.340465}}.

\bibitem[OP93]{ohya93}
M.~Ohya and D.~Petz.
\newblock {\em {Quantum Entropy and Its Use}}.
\newblock Springer, 1993.

\bibitem[Per96]{peres96}
A.~Peres.
\newblock {Separability Criterion for Density Matrices}.
\newblock {\em Phys. Rev. Lett.}, 77(8): 1413--1415, 1996.
\newblock
  \texttt{DOI:\,\href{http://dx.doi.org/10.1103/PhysRevLett.77.1413}{10.1103/PhysRevLett.77.1413}}.

\bibitem[Pet84]{petz86}
D.~Petz.
\newblock {Quasi-Entropies for Finite Quantum Systems}.
\newblock {\em Rep. Math. Phys.}, 23: 57--65, 1984.

\bibitem[Pet10]{petz09}
D.~Petz.
\newblock {From f-Divergence to Quantum Quasi-Entropies and Their Use}.
\newblock {\em Entropy}, 12(3): 304--325, 2010.
\newblock
  \texttt{DOI:\,\href{http://dx.doi.org/10.3390/e12030304}{10.3390/e12030304}}.

\bibitem[Pol10]{polyanskiythesis10}
Y.~Polyanskiy.
\newblock {\em {Channel Coding: Non-Asymptotic Fundamental Limits}}.
\newblock PhD thesis, Princeton University, 2010.

\bibitem[PPV10]{polyanskiy10}
Y.~Polyanskiy, H.~V. Poor, and S.~Verd\'{u}.
\newblock {Channel Coding Rate in the Finite Blocklength Regime}.
\newblock {\em IEEE Trans. on Inf. Theory}, 56(5): 2307--2359, 2010.
\newblock
  \texttt{DOI:\,\href{http://dx.doi.org/10.1109/TIT.2010.2043769}{10.1109/TIT.2010.2043769}}.

\bibitem[PW01]{pfitzmann01}
B.~Pfitzmann and M.~Waidner.
\newblock {A Model for Asynchronous Reactive Systems and its Application to
  Secure Message Transmission}.
\newblock In {\em Proc. IEEE Symp. on Security and Privacy}, pages 184--200.
  IEEE Comput. Soc, 2001.
\newblock
  \texttt{DOI:\,\href{http://dx.doi.org/10.1109/SECPRI.2001.924298}{10.1109/SECPRI.2001.924298}}.

\bibitem[R\'61]{renyi61}
A.~R\'{e}nyi.
\newblock {On Measures of Information and Entropy}.
\newblock In {\em Proc. Symp. on Math., Stat. and Probability}, pages 547--561,
  Berkeley, 1961. University of California Press.

\bibitem[Ras06]{rastegin06}
A.~E. Rastegin.
\newblock {Sine Distance for Quantum States}, 2006.
\newblock
  \texttt{arXiv:\,\href{http://arxiv.org/abs/quant-ph/0602112}{quant-ph/0602112}}.

\bibitem[RB09]{renes09}
J.~Renes and J.-C. Boileau.
\newblock {Conjectured Strong Complementary Information Tradeoff}.
\newblock {\em Phys. Rev. Lett.}, 103(2), 2009.
\newblock
  \texttt{DOI:\,\href{http://dx.doi.org/10.1103/PhysRevLett.103.020402}{10.1103/PhysRevLett.103.020402}}.

\bibitem[Ren05]{renner05}
R.~Renner.
\newblock {\em {Security of Quantum Key Distribution}}.
\newblock PhD thesis, ETH Zurich, 2005.
\newblock
  \texttt{arXiv:\,\href{http://arxiv.org/abs/quant-ph/0512258}{quant-ph/0512258}}.

\bibitem[RK05]{rennerkoenig05}
R.~Renner and R.~K\"{o}nig.
\newblock {Universally Composable Privacy Amplification Against Quantum
  Adversaries}.
\newblock In {\em Proc. TCC}, volume 3378 of {\em LNCS}, pages 407--425,
  Cambridge, USA, 2005.
\newblock
  \texttt{DOI:\,\href{http://dx.doi.org/10.1007/978-3-540-30576-7\_22}{10.1007/978-3-540-30576-7\_22}}.

\bibitem[Rob29]{robertson29}
H.~P. Robertson.
\newblock {The Uncertainty Principle}.
\newblock {\em Phys. Rev.}, 34(1): 163--164, 1929.
\newblock
  \texttt{DOI:\,\href{http://dx.doi.org/10.1103/PhysRev.34.163}{10.1103/PhysRev.34.163}}.

\bibitem[RR12]{renesrenner10}
J.~M. Renes and R.~Renner.
\newblock {One-Shot Classical Data Compression With Quantum Side Information
  and the Distillation of Common Randomness or Secret Keys}.
\newblock {\em IEEE Trans. on Inf. Theory}, 58(3): 1985--1991, 2012.
\newblock
  \texttt{DOI:\,\href{http://dx.doi.org/10.1109/TIT.2011.2177589}{10.1109/TIT.2011.2177589}}.

\bibitem[RW04]{renner04}
R.~Renner and S.~Wolf.
\newblock {Smooth R\'{e}nyi Entropy and Applications}.
\newblock In {\em Proc. ISIT}, pages 232--232. IEEE, 2004.
\newblock
  \texttt{DOI:\,\href{http://dx.doi.org/10.1109/ISIT.2004.1365269}{10.1109/ISIT.2004.1365269}}.

\bibitem[RW05]{rennerwolf05}
R.~Renner and S.~Wolf.
\newblock {Simple and Tight Bounds for Information Reconciliation and Privacy
  Amplification}.
\newblock In {\em Proc. ASIACRYPT}, volume 3788 of {\em LNCS}, pages 199--216.
  Springer, 2005.

\bibitem[Sch95]{schumacher95}
B.~Schumacher.
\newblock {Quantum Coding}.
\newblock {\em Phys. Rev. A}, 51(4): 2738--2747, 1995.
\newblock
  \texttt{DOI:\,\href{http://dx.doi.org/10.1103/PhysRevA.51.2738}{10.1103/PhysRevA.51.2738}}.

\bibitem[SDTR11]{szehr11}
O.~Szehr, F.~Dupuis, M.~Tomamichel, and R.~Renner.
\newblock {Decoupling with Unitary Almost Two-Designs}.
\newblock 2011.
\newblock  \texttt{arXiv:\,\href{http://arxiv.org/abs/1109.4348}{1109.4348}}.

\bibitem[SED]{SEDUMI}
{SeDuMi}.
\newblock \\ Online: \url{http://sedumi.ie.lehigh.edu/}.

\bibitem[Sha48]{shannon48}
C.~Shannon.
\newblock {A Mathematical Theory of Communication}.
\newblock {\em Bell Syst. Tech. J.}, 27: 379--423, 1948.

\bibitem[Sha51]{shannon51}
C.~Shannon.
\newblock {Prediction and Entropy of Printed English}.
\newblock {\em Bell Syst. Tech. J.}, 30: 50 -- 64, 1951.

\bibitem[SR08]{scarani08p}
V.~Scarani and R.~Renner.
\newblock {Security Bounds for Quantum Cryptography with Finite Resources}.
\newblock In {\em Proc. TQC}, volume 5106 of {\em LNCS}, pages 83--95.
  Springer, 2008.
\newblock  \texttt{arXiv:\,\href{http://arxiv.org/abs/0806.0120}{0806.0120}}.

\bibitem[Sti55]{stinespring54}
W.~F. Stinespring.
\newblock {Positive Functions On C*-Algebras}.
\newblock {\em Proc. Am. Math. Soc.}, 6: 211--216, 1955.
\newblock
  \texttt{DOI:\,\href{http://dx.doi.org/10.1090/S0002-9939-1955-0069403-4}{10.1090/S0002-9939-1955-0069403-4}}.

\bibitem[SW73]{slepian73}
D.~Slepian and J.~Wolf.
\newblock {Noiseless Coding of Correlated Information Sources}.
\newblock {\em IEEE Trans. on Inf. Theory}, 19(4): 471--480, 1973.
\newblock
  \texttt{DOI:\,\href{http://dx.doi.org/10.1109/TIT.1973.1055037}{10.1109/TIT.1973.1055037}}.

\bibitem[SW08]{scholz08}
V.~B. Scholz and R.~F. Werner.
\newblock {Tsirelson's Problem}.
\newblock 2008.
\newblock  \texttt{arXiv:\,\href{http://arxiv.org/abs/0812.4305}{0812.4305}}.

\bibitem[TCR09]{tomamichel08}
M.~Tomamichel, R.~Colbeck, and R.~Renner.
\newblock {A Fully Quantum Asymptotic Equipartition Property}.
\newblock {\em IEEE Trans. on Inf. Theory}, 55(12): 5840--5847, 2009.
\newblock
  \texttt{DOI:\,\href{http://dx.doi.org/10.1109/TIT.2009.2032797}{10.1109/TIT.2009.2032797}}.

\bibitem[TCR10]{tomamichel09}
M.~Tomamichel, R.~Colbeck, and R.~Renner.
\newblock {Duality Between Smooth Min- and Max-Entropies}.
\newblock {\em IEEE Trans. on Inf. Theory}, 56(9): 4674--4681, 2010.
\newblock
  \texttt{DOI:\,\href{http://dx.doi.org/10.1109/TIT.2010.2054130}{10.1109/TIT.2010.2054130}}.

\bibitem[TH13]{haenggi11}
M.~Tomamichel and E.~H\"{a}nggi.
\newblock {The link between entropic uncertainty and nonlocality}.
\newblock {\em J. Phys. A: Math. Gen.}, 46(5): 055301, 2013.
\newblock
  \texttt{DOI:\,\href{http://dx.doi.org/10.1088/1751-8113/46/5/055301}{10.1088/1751-8113/46/5/055301}}.

\bibitem[TLGR12]{tomamichellim11}
M.~Tomamichel, C.~C.~W. Lim, N.~Gisin, and R.~Renner.
\newblock {Tight Finite-Key Analysis for Quantum Cryptography}.
\newblock {\em Nat. Commun.}, 3: 634, 2012.
\newblock
  \texttt{DOI:\,\href{http://dx.doi.org/10.1038/ncomms1631}{10.1038/ncomms1631}}.

\bibitem[TR11]{tomamichel11}
M.~Tomamichel and R.~Renner.
\newblock {Uncertainty Relation for Smooth Entropies}.
\newblock {\em Phys. Rev. Lett.}, 106(11), 2011.
\newblock
  \texttt{DOI:\,\href{http://dx.doi.org/10.1103/PhysRevLett.106.110506}{10.1103/PhysRevLett.106.110506}}.

\bibitem[TSSR11]{tomamichel10}
M.~Tomamichel, C.~Schaffner, A.~Smith, and R.~Renner.
\newblock {Leftover Hashing Against Quantum Side Information}.
\newblock {\em IEEE Trans. on Inf. Theory}, 57(8): 5524--5535, 2011.
\newblock
  \texttt{DOI:\,\href{http://dx.doi.org/10.1109/TIT.2011.2158473}{10.1109/TIT.2011.2158473}}.

\bibitem[Uhl85]{uhlmann85}
A.~Uhlmann.
\newblock {The Transition Probability for States of Star-Algebras}.
\newblock {\em Ann. Phys.}, 497(4): 524--532, 1985.

\bibitem[Unr10]{unruh10}
D.~Unruh.
\newblock {Universally Composable Quantum Multi-party Computation}.
\newblock In {\em Proc. EUROCRYPT}, volume 6110 of {\em LNCS}, pages 486--505.
  Springer, 2010.
\newblock
  \texttt{DOI:\,\href{http://dx.doi.org/10.1007/978-3-642-13190-5}{10.1007/978-3-642-13190-5}}.

\bibitem[VDTR12]{vitanov12}
A.~Vitanov, F.~Dupuis, M.~Tomamichel, and R.~Renner.
\newblock {Chain Rules for Smooth Min- and Max-Entropies}.
\newblock 2012.
\newblock  \texttt{arXiv:\,\href{http://arxiv.org/abs/1205.5231}{1205.5231}}.

\bibitem[Vit11]{vitanov11}
A.~Vitanov.
\newblock {\em {Smooth Min- And Max-Entropy Calculus: Chain Rules}}.
\newblock Master's thesis, ETH Zurich, 2011.

\bibitem[vN32]{vonneumann32}
J.~von Neumann.
\newblock {\em {Mathematische Grundlagen der Quantenmechanik}}.
\newblock Springer, 1932.

\bibitem[Wat08]{watrous-ln08}
J.~Watrous.
\newblock {Theory of Quantum Information, Lecture Notes}, 2008.
\newblock \\ Online: \url{http://www.cs.uwaterloo.ca/~watrous/quant-info/}.

\bibitem[WCR09]{wang09}
L.~Wang, R.~Colbeck, and R.~Renner.
\newblock {Simple Channel Coding Bounds}.
\newblock In {\em Proc. IEEE ISIT}, pages 1804--1808. IEEE, 2009.
\newblock
  \texttt{DOI:\,\href{http://dx.doi.org/10.1109/ISIT.2009.5205312}{10.1109/ISIT.2009.5205312}}.

\bibitem[Win99]{winterthesis}
A.~Winter.
\newblock {\em {Coding Theorems of Quantum Information Theory}}.
\newblock Phd thesis, Universit\"{a}t Bielefeld, 1999.
\newblock
  \texttt{arXiv:\,\href{http://arxiv.org/abs/quant-ph/9907077}{quant-ph/9907077}}.

\bibitem[WTHR11]{winkler11}
S.~Winkler, M.~Tomamichel, S.~Hengl, and R.~Renner.
\newblock {Impossibility of Growing Quantum Bit Commitments}.
\newblock {\em Phys. Rev. Lett.}, 107(9), 2011.
\newblock
  \texttt{DOI:\,\href{http://dx.doi.org/10.1103/PhysRevLett.107.090502}{10.1103/PhysRevLett.107.090502}}.

\bibitem[Wul]{wullschleger08pc}
J.~Wullschleger.
\newblock {Private Communication}.

\bibitem[WW08]{wehner07}
S.~Wehner and J.~Wullschleger.
\newblock {Composable Security in the Bounded-Quantum-Storage Model}.
\newblock In {\em Proc. ICALP}, volume 5126 of {\em LNCS}, pages 604--615.
  Springer, 2008.
\newblock
  \texttt{DOI:\,\href{http://dx.doi.org/10.1007/978-3-540-70583-3}{10.1007/978-3-540-70583-3}}.

\bibitem[WW10]{wehner09}
S.~Wehner and A.~Winter.
\newblock {Entropic Uncertainty Relations---A Survey}.
\newblock {\em New J. Phys.}, 12(2): 025009, 2010.
\newblock
  \texttt{DOI:\,\href{http://dx.doi.org/10.1088/1367-2630/12/2/025009}{10.1088/1367-2630/12/2/025009}}.

\bibitem[WZ82]{wootters82}
W.~K. Wootters and W.~H. Zurek.
\newblock {A Single Quantum Cannot be Cloned}.
\newblock {\em Nature}, 299(5886): 802--803, 1982.
\newblock
  \texttt{DOI:\,\href{http://dx.doi.org/10.1038/299802a0}{10.1038/299802a0}}.

\end{thebibliography}

% list of figures
%\cleardoublepage
%\phantomsection
%\addcontentsline{toc}{chapter}{List of Figures}
%
%\listoffigures
%
%\cleardoublepage
%\phantomsection
%\addcontentsline{toc}{chapter}{List of Tables}
%
%\listoftables

%............................ 

\appendix

\chapter{Various Lemmas}
\label{ap:lemmas}

\section{Two Lemmas for Tensor Spaces}

The following Lemma gives an operator inequality
that relates the ☼{marginal} states on one Hilbert space before and after a
☼{CPM} is applied on the other space.
\begin{lemma}
  \label{lm:pt-norm-bound}
  Let $M ∈ \opos{ℋ ⨂ ℋ'}$ and let $ℰ$ be a CPM from $ℋ'$ to $ℋ''$. Then,
  \begin{align}
    \tr<b>[ℋ'']{ℰ[M]} ≤ \norm<b>{ℰ†[ⅈ]}[∞]\, \tr[ℋ']{M} ¶\,.
  \end{align}
\end{lemma}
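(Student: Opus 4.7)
The plan is to test the operator inequality against an arbitrary $Q \in \opos{\mathcal{H}}$, turn $Q$-averages of the left-hand side into $Q$-averages of the right-hand side via the adjoint of $\mathcal{E}$, and finish by dominating $\mathcal{E}^\dagger[\mathbb{1}]$ by its operator norm times the identity.

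More concretely, I would first pick any $Q \in \opos{\mathcal{H}}$ and use the definition of the partial trace as the adjoint of $L \mapsto L \otimes \mathbb{1}$, together with the adjoint $\mathcal{E}^\dagger$, to rewrite
\begin{align}
  \tr\bigl[Q\, \tr_{\mathcal{H}''}[\mathcal{E}[M]]\bigr]
    = \tr\bigl[(Q \otimes \mathbb{1}_{\mathcal{H}''})\, \mathcal{E}[M]\bigr]
    = \tr\bigl[\mathcal{E}^\dagger[Q \otimes \mathbb{1}_{\mathcal{H}''}]\, M\bigr].
\end{align}
The second step is to use the fact that $\mathcal{E}$ (and hence $\mathcal{E}^\dagger$) is extended trivially on $\mathcal{H}$, so it factors through the tensor decomposition and $\mathcal{E}^\dagger[Q \otimes \mathbb{1}_{\mathcal{H}''}] = Q \otimes \mathcal{E}^\dagger[\mathbb{1}_{\mathcal{H}''}]$; this is the place where I must be careful about which subsystem the CPM acts on, and I would justify it either from the operator decomposition in~\S\ref{sc:pre/lin/tensor} or directly from a Kraus representation $\{E_k\}$ with $E_k \in \olin{\mathcal{H}', \mathcal{H}''}$.

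Third, since $\mathcal{E}^\dagger$ is completely positive, $\mathcal{E}^\dagger[\mathbb{1}_{\mathcal{H}''}] \in \opos{\mathcal{H}'}$ and by the very definition of the operator norm,
\begin{align}
  \mathcal{E}^\dagger[\mathbb{1}_{\mathcal{H}''}] \leq \norm{\mathcal{E}^\dagger[\mathbb{1}]}[\infty]\, \mathbb{1}_{\mathcal{H}'}.
\end{align}
Tensoring with $Q \geq 0$ preserves the inequality, and taking $\tr[\,\cdot\, M]$ against the positive $M$ then yields
\begin{align}
  \tr\bigl[Q\, \tr_{\mathcal{H}''}[\mathcal{E}[M]]\bigr]
    \leq \norm{\mathcal{E}^\dagger[\mathbb{1}]}[\infty]\,
      \tr\bigl[(Q \otimes \mathbb{1}_{\mathcal{H}'})\, M\bigr]
    = \norm{\mathcal{E}^\dagger[\mathbb{1}]}[\infty]\, \tr\bigl[Q\, \tr_{\mathcal{H}'}[M]\bigr].
\end{align}
Since this holds for every $Q \in \opos{\mathcal{H}}$, the claimed operator inequality follows.

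The only real obstacle is bookkeeping: making sure the trivial extension of $\mathcal{E}$ to $\mathcal{H} \otimes \mathcal{H}'$ genuinely leaves the $\mathcal{H}$-factor untouched under the adjoint, and that the partial traces are taken over the correct subsystems. Everything else is a direct manipulation using the Hilbert--Schmidt inner product and the monotonicity $A \leq B \Rightarrow \tr[NA] \leq \tr[NB]$ for $N \geq 0$ from~\S\ref{pre/lin/basic-step}.
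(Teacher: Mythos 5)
Your proof is correct and takes essentially the same route as the paper's: both arguments reduce the claim to the identity $\mathrm{tr}_{\mathcal{H}''}\!\big[\mathcal{E}[M]\big] = \mathrm{tr}_{\mathcal{H}'}\!\big[\mathcal{E}^\dagger[\mathbb{1}]\, M\big]$ and then exploit the fact that $\mathcal{E}^\dagger[\mathbb{1}]$ is dominated by its operator norm times the identity. The only difference is presentational: the paper establishes the resulting operator inequality directly by introducing the positive remainder $R = \|\mathcal{E}^\dagger[\mathbb{1}]\|_\infty\, \mathbb{1} - \mathcal{E}^\dagger[\mathbb{1}]$ and using $\mathrm{tr}_{\mathcal{H}'}\!\big[\sqrt{R}\, M \sqrt{R}\big] \ge 0$, whereas you dualize against an arbitrary positive test operator $Q$ and conclude from the positivity of the resulting scalar traces.
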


\begin{proof}
  We write $ℰ$ in its ☼[Kraus representation]{representation!Kraus}, 
  i.e.\ $ℰ[M] = \sum_k E_k M\, E_k†$, where
  $E_k ∈ \olin{ℋ', ℋ''}$. Due to the linearity and cyclicity of the partial trace, we
  have
  \begin{align}
    \tr<b>[ℋ'']{ℰ[M]} = \tr<B>[ℋ']{ ∑_k E_k† E_k\, M} = \tr<b>[ℋ']{ℰ†[ⅈ]\, M} ¶
  \end{align}
  We introduce the operator 
  $R = ⅈ \norm{ℰ†[ⅈ]}[∞] - ℰ†[ⅈ] ≥ 0$ and note that $\tr[ℋ']{√{R} M √{R}} ≥ 0$ and, thus,
  \begin{align}
    \tr<b>[ℋ']{ℰ†[ⅈ]\, M} &≤ \tr<b>[ℋ']{(ℰ†[ⅈ]\, + R) M} = \norm<b>{ℰ†[ⅈ]}[∞]\, \tr[ℋ']{M} ¶.
      \qedhere
  \end{align}
\end{proof}

In particular, this lemma implies that ☼[TP-CPMs]{TP-CPM} do not affect the marginal state 
on another space. To see this, note that the bound evaluates to $\tr[ℋ'']{ℰ[M]} ≤ \tr[ℋ']{M}$. Hence, the operator
$\tr[ℋ']{M} - \tr[ℋ'']{ℰ[M]}$ is positive and has vanishing trace, implying that the two marginal states are in fact equal. Moreover, if the map $ℰ$ is trace non-increasing, we
still find $\tr[ℋ'']{ℰ[M]} ≤ \tr[ℋ']{M}$ as its adjoint is sub-unital.

As a further example, consider the CPM $\sL : M ↦ L M L†$ that conjugates $M$
with an arbitrary linear operator $L ∈ \olin{ℋ', ℋ''}$. In this case, $\sL†[ⅈ]$ evaluates
to $L†L = \abs{L}^2$ and we find 
\begin{align}
  \tr<b>[ℋ'']{\sL[M]} ≤ \norm<b>{ \abs{L}^2}[∞] \tr[ℋ']{M} 
    = \norm{L}[∞]^2\, \tr[ℋ']{M} ¶[pt-norm-bound].
\end{align}
To simplify the expression, we used that $\norm{N^2}[∞] = \norm{N}[∞]^2$ for $N ∈ \opos{H}$ and that $\norm{\abs{L}} = \norm{L}$ due to the ☼[unitary invariance]{unitarily invariant} of the norm applied to the ☼[polar decomposition]{decomposition!polar} of $L$.

Moreover, the following operator inequality holds.

\begin{lemma}
  \label{lm:op-bound}
  Let $M ∈ \opos{ℋ ⨂ ℋ'}$. Then, $M ≤ \dim{ℋ'}\, \tr[ℋ']{M} ⨂ ⅈ¬{ℋ'}$.
\end{lemma}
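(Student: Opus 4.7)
The approach is to exhibit a finite family of unitaries acting only on $ℋ'$ whose conjugation action averages $M$ to precisely the desired right-hand side; since $M$ then appears as one non-negative summand of this average, it must be dominated by the whole sum.

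Concretely, set $d' := \dim{ℋ'}$ and fix a unitary $1$-design on $ℋ'$, i.e.~a family $\{ U_k \}_{k=1}^{d'^2} \subset \ouni{ℋ'}$ containing $U_1 = ⅈ¬{ℋ'}$ and satisfying the twirling identity
\begin{align}
  \sum_{k=1}^{d'^2} U_k\, Y\, U_k† = d'\, \tr{Y}\, ⅈ¬{ℋ'} \qquad \forall\, Y ∈ \olin{ℋ'} ¶\,.
\end{align}
The Heisenberg--Weyl (generalised Pauli) operators are a standard explicit choice. Extending this identity to $\olin{ℋ ⨂ ℋ'}$ by linearity via the decomposition~§[pre/lin/op-decomp] yields
\begin{align}
  \sum_{k=1}^{d'^2} (ⅈ¬ℋ ⨂ U_k)\, M\, (ⅈ¬ℋ ⨂ U_k†) = d'\, \tr[ℋ']{M} ⨂ ⅈ¬{ℋ'} ¶\,.
\end{align}
Since $M \geq 0$ is conjugated by the unitary $ⅈ ⨂ U_k$ in each summand, every term on the left is positive semi-definite, and the $k=1$ summand is $M$ itself. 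Discarding the remaining non-negative summands gives $M \leq d'\, \tr[ℋ']{M} ⨂ ⅈ¬{ℋ'}$, which is the claim.

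The only non-elementary input is the $1$-design identity, and the only bookkeeping point to watch is that the family must contain $U_1 = ⅈ$ so that $M$ appears as an actual summand rather than merely as the normalised mean of the twirl---this is the one step where the argument could fail if carried out carelessly. A self-contained alternative that dispenses with designs entirely is to verify, via the Choi--Jamiolkowski correspondence of Section~\ref{sc:choi-jami}, that the linear map $\cF : Y ↦ d'\, \tr{Y}\, ⅈ¬{ℋ'} - Y$ on $\olin{ℋ'}$ is completely positive: a direct computation shows $ω^{\cF} = d'\, ⅈ¬{ℋ' ⨂ ℋ'} - \proj{Γ}$, which is positive semi-definite because $\proj{Γ}$ is rank one with sole non-zero eigenvalue $\braket{Γ|Γ} = d'$. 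Applying the CPM $\mathrm{id}_ℋ ⨂ \cF$ to $M \geq 0$ then produces $d'\, \tr[ℋ']{M} ⨂ ⅈ¬{ℋ'} - M \geq 0$ directly.
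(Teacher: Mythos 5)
Your proof is correct, and it takes a genuinely different route from the one in the thesis. The thesis first reduces to the rank-one case $M = \proj{φ}$ via the eigenvalue decomposition of $M$, then notes that $Λ := X^{-½}\, \proj{φ}\, X^{-½}$ (with $X = \tr[ℋ']{M}$ and the generalized inverse) is rank one with $\tr{Λ} = \rank{X} ≤ \dim{ℋ'}$ by the Schmidt decomposition, so that $Λ ≤ \dim{ℋ'}\,ⅈ$; conjugating back with $X^{½}$ and summing over eigenvectors gives the lemma. Your twirling argument works with $M$ directly and needs no reduction to pure states; the $1$-design identity you invoke is correct for the Heisenberg--Weyl operators (which do contain the identity, so the bookkeeping concern you flag is resolved), and your self-contained variant---verifying via Section~\ref{sc:choi-jami} that $Y ↦ \dim{ℋ'}\,\tr{Y}\,ⅈ - Y$ has the positive semi-definite Choi operator $\dim{ℋ'}\,ⅈ - \proj{Γ}$ and is hence completely positive---is arguably the cleanest of the three arguments. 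What the thesis's route buys is elementarity (no designs, no Choi criterion) plus the sharper observation that for a single pure state the constant $\dim{ℋ'}$ can be replaced by the Schmidt rank of $φ$; what yours buys is the stronger structural fact that the gap $\dim{ℋ'}\,\tr[ℋ']{M} ⨂ ⅈ¬{ℋ'} - M$ is itself the image of $M$ under a completely positive map acting only on $ℋ'$, which is occasionally useful in its own right.
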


\begin{proof}
  Since $M$ has an eigenvalue decomposition 
  with positive eigenvalues, it is sufficient
  to prove the property for an operator $M$ of the form $M = \proj{φ}$, where $φ ∈ ℋ ⨂ ℋ'$. 
  The general result then follows by linearity.
  
  Let $X = \tr[ℋ']{M}$ and $Λ = X^{-½} \proj{φ} X^{-½}$, where we take the 
  ☼[generalized inverse]{inverse!generalized} of $X$. 
  Since $\tr{Λ} = \rank{X} ≤ \dim{ℋ'}$ due 
  to the Schmidt decomposition, we have $Λ ≤ \dim{ℋ'}\, ⅈ$. The
  statement of the lemma than follows by conjugating this inequality 
  with $X^{½}$.
\end{proof}

\section{Entropies of Coherent Classical States}
\label{se:lemmas/cc}

The following two technical lemmas are useful when ☼{coherent classical} states are discussed.
\begin{lemma}
  \label{lm:smooth/class/class-entangled-min}
  Let $τ ∈ \osub{ℋ¬{XX'AB}}$. Then, the corresponding coherent classical 
  state $ρ = Π¬{XX'} τ\, Π¬{XX'}$ satisfies
  \begin{align}
    \hmin{XA|X'B}[τ] ≤ \hmin{XA|X'B}[ρ] ≤ \hmin{A|XX'B}[ρ] ¶\,.
  \end{align}
\end{lemma}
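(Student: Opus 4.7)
Plan of proof.

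The projector $\Pi := \Pi_{XX'} = \sum_x \proj{x}[X] \otimes \proj{x}[X']$ acts on both the conditioned system ($X$) and the conditioning system ($X'$), so neither inequality follows directly from the data-processing property of $\Phi$ (Properties vi.\ and vii.\ in Table~\ref{tb:Phi}). The strategy is instead to work with the SDP characterization of the min-entropy and explicitly conjugate the defining operator inequality by $\Pi \otimes \mathbf{1}_{AB}$. The key auxiliary computation is the identity
\begin{align}
  (\Pi \otimes \mathbf{1}_{AB})(\mathbf{1}_{XA} \otimes \sigma_{X'B})(\Pi \otimes \mathbf{1}_{AB})
  = \mathbf{1}_A \otimes \sum_x \proj{x}[X] \otimes \proj{x}[X'] \otimes \sigma_B^{x},
\end{align}
where $\sigma_B^{x} := \braket{x|\sigma_{X'B}|x}[X']$ is the $x$-th diagonal block of $\sigma_{X'B}$ in the classical basis of $X'$. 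This is a short direct calculation using $\proj{x}\proj{y}\proj{z} = \delta_{xy}\delta_{yz}\proj{y}$ on each of the two tensor factors.

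For the first inequality, I would let $\lambda = \hmin{XA|X'B}[\tau]$ and pick an optimal $\sigma_{X'B}$ so that $\tau \leq 2^{-\lambda} \mathbf{1}_{XA} \otimes \sigma_{X'B}$. Conjugating both sides by $\Pi \otimes \mathbf{1}_{AB}$, the left-hand side becomes $\rho$, and by the identity above the right-hand side becomes $2^{-\lambda} \mathbf{1}_A \otimes \sum_x \proj{x}[X] \otimes \proj{x}[X'] \otimes \sigma_B^{x}$. Dominating this by $2^{-\lambda}\, \mathbf{1}_{XA} \otimes \sigma'_{X'B}$ with $\sigma'_{X'B} := \sum_x \proj{x}[X'] \otimes \sigma_B^{x}$ (the pinching of $\sigma_{X'B}$ in the $X'$ basis), one obtains a feasible $\sigma'$ for the SDP of $\hmin{XA|X'B}[\rho]$. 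Since pinching preserves trace, $\tr{\sigma'} = \tr{\sigma}$, hence $\Phi_{XA|X'B}(\rho) \leq \Phi_{XA|X'B}(\tau)$ and the claimed inequality follows.

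For the second inequality, I would exploit the fact that $\rho$ already satisfies $\Pi \rho \Pi = \rho$. Starting from $\rho \leq 2^{-\lambda} \mathbf{1}_{XA} \otimes \sigma_{X'B}$ with $\lambda = \hmin{XA|X'B}[\rho]$ and conjugating once more by $\Pi \otimes \mathbf{1}_{AB}$, the left-hand side is unchanged whereas the right-hand side, by the same identity, becomes $2^{-\lambda}\, \mathbf{1}_A \otimes \sigma''_{XX'B}$ with $\sigma''_{XX'B} := \sum_x \proj{x}[X] \otimes \proj{x}[X'] \otimes \sigma_B^{x} \in \opos{\mathcal{H}_{XX'B}}$. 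Since $\tr{\sigma''} = \sum_x \tr{\sigma_B^{x}} = \tr{\sigma_{X'B}}$, the operator $\sigma''$ is a feasible candidate in the SDP for $\hmin{A|XX'B}[\rho]$ with the same trace, yielding $\hmin{A|XX'B}[\rho] \geq \lambda$.

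The only real obstacle is the projection identity for $(\Pi \otimes \mathbf{1})(\mathbf{1}_X \otimes \sigma_{X'B})(\Pi \otimes \mathbf{1})$; once that is in hand both inequalities reduce to writing down the right feasible $\sigma'$ or $\sigma''$ and reading off the trace. No duality or smoothing argument is needed.
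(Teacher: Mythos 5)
Your proposal is correct and follows essentially the same route as the paper's own proof: conjugate the defining operator inequality by $Π¬{XX'}$, use the block-diagonal identity to rewrite the right-hand side, and exhibit the pinched state $\sM¬{X'}[σ¬{X'B}]$ (respectively the operator $∑_x \proj{x}[X] ⨂ \proj{x}[X'] ⨂ \braket{x|σ¬{X'B}|x}[X']$) as a trace-preserving feasible candidate for the respective SDP. No gaps; your write-up just spells out the projection identity more explicitly than the paper does.
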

\begin{proof}
  By the definition of the min-entropy, there exists a 
  state $σ ∈ \onorm{ℋ¬{X'B}}$ such that
  \begin{align}
    τ¬{XX'AB} ≤ 2^{-\hmin{XA|X'B}[τ]} ⅈ¬{XA} ⨂ σ¬{X'B} ¶\,.
  \end{align}
  Therefore, by conjugation with $Π¬{XX'}$, we have
  \begin{align}
    ρ¬{XX'AB} &≤ 2^{-\hmin{XA|X'B}[τ]} ⅈ¬{A} ⨂ ∑_x \proj{x}[X] ⨂ 
      \proj{x}[X'] ⨂ \braket{x|σ¬{X'B}|x}[X'] ¶\\
    &≤ 2^{-\hmin{XA|X'B}[τ]} ⅈ¬{AX} ⨂ \sM¬{X'}[σ¬{X'B}] ¶\,.
  \end{align}
  Since the measurement~$\sM¬{X'}$ is trace-preserving, 
  $\sM¬{X'}[σ]$ is a candidate for the optimization of $\hmin{XA|X'B}[ρ]$ and the 
  first inequality of the lemma follows. The second inequality follows by a similar 
  argument, where $σ¬{X'B}$ is chosen to maximize $\hmin{XA|X'B}[ρ]$ and the state
  \begin{align}
    σ¬{XX'B} = ∑_x \proj{x}[X] ⨂ \proj{x}[X'] ⨂ \braket{x|σ¬{X'B}|x}[X'] ¶
  \end{align}
  is a candidate for the optimization of $\hmin{A|XX'B}$.
\end{proof}
\begin{lemma}
  \label{lm:smmoth/class/class-entangled-max}
  Let $ρ ∈ \osub{ℋ¬{XX'AB}}$ be coherent classical between $X$ and $X'$. Then,
  \begin{align}
    \hmax{XA|X'B}[ρ] ≤ \hmax{A|XX'B}[ρ] ¶\,.
  \end{align}
\end{lemma}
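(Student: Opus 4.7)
The plan is to reduce this max-entropy inequality to the min-entropy statement of Lemma~\ref{lm:smooth/class/class-entangled-min} via duality. First, I would pick any purification $\ket{\rho}[XX'ABC]$ of $\rho¬{XX'AB}$ on an auxiliary system $C$ and apply the duality relation (Lemma~\ref{lm:min-max/dual}) to the two bipartitions $(XA \mid X'B)$ and $(A \mid XX'B)$ of the resulting pure state, which yields
\begin{align*}
  \hmax{XA|X'B}[\rho] = -\hmin{XA|C}[\rho] \quad \text{and} \quad \hmax{A|XX'B}[\rho] = -\hmin{A|C}[\rho] .
\end{align*}
Negating, the statement of the lemma is then equivalent to $\hmin{A|C}[\rho] \le \hmin{XA|C}[\rho]$, and this is precisely the first inequality of Proposition~\ref{pr:class/bounds-1} applied to $\rho¬{XAC}$, \emph{provided} that this marginal is classical on $X$.

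The remaining task is therefore to argue that the purification may be chosen so that $\rho¬{XAC}$ is classical on $X$. Because $\rho¬{XX'AB}$ is coherent classical between $X$ and $X'$, it is supported on the image of $Π¬{XX'}$, so each eigenvector admits a decomposition of the form $\ket{\phi_k}[XX'AB] = \sum_x c^k_x \ket{x}[X] \otimes \ket{x}[X'] \otimes \ket{\psi^{k,x}}[AB]$. I would therefore take the spectral purification $\ket{\rho}[XX'ABC] = \sum_k \sqrt{\lambda_k}\, \ket{\phi_k}[XX'AB] \otimes \ket{k}[C]$; tracing out $X'$ kills all cross-terms through $\braket{x|x'}[X'] = \delta_{x,x'}$, leaving $\rho¬{XABC}$ (and hence its marginal $\rho¬{XAC}$) diagonal in $\{\ket{x}[X]\}$, i.e.\ classical on $X$. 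Invariance of the smooth entropies under local isometries (Proposition~\ref{pr:smooth-iso}) ensures that this specific choice of purification is irrelevant; equivalently, classicality on $X$ of $\rho¬{XAC}$ is preserved under any isometry acting only on $C$, so it is a purification-invariant property once established for one choice.

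The main obstacle is the middle step of exhibiting a purification with $\rho¬{XAC}$ classical on $X$; the spectral decomposition argument above is the cleanest route, though one could alternatively build the purification by applying the canonical CQ-purification of Chapter~\ref{ch:pre} to the classical marginal $\rho¬{X'AB}$ and then coherently copying $X'$ into $X$. Once this classicality is secured, the proof is completed in a single line by chaining duality with the first inequality of Proposition~\ref{pr:class/bounds-1}.
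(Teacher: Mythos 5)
Your strategy---dualize both sides onto a common purifying system $C$ and reduce the claim to a monotonicity statement for the min-entropy of a state classical on $X$---is genuinely different from the paper's proof, and most of it is sound: the duality bookkeeping is right, and your construction of a purification whose marginal $\rho_{XAC}$ is classical on $X$ (tracing out $X'$ kills the off-diagonal blocks in $X$ because the support of $\rho$ lies in the image of $\Pi_{XX'}$) is correct and purification-independent for the reason you give. The genuine problem is your final citation: the first inequality of Proposition~\ref{pr:class/bounds-1}, $\hmin*{A|B}[\rho] \leq \hmin*{XA|B}[\rho]$, is proved in the paper precisely by dualizing and invoking Lemma~\ref{lm:smmoth/class/class-entangled-max}---the statement you are trying to establish---so quoting it here makes the argument circular within the paper's logical structure. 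The gap is easy to close, because the non-smooth fact you actually need, namely $\hmin{A|C}[\rho] \leq \hmin{XA|C}[\rho]$ for $\rho_{XAC} = \sum_x \proj{x} \otimes \tilde{\omega}^x_{AC}$ classical on $X$, has a one-line direct proof: if $\sigma_C$ achieves $\mu = \hmin{A|C}[\rho]$, i.e.\ $\rho_{AC} \leq 2^{-\mu}\, ⅈ_A \otimes \sigma_C$, then each block satisfies $\tilde{\omega}^x_{AC} \leq \sum_{x'} \tilde{\omega}^{x'}_{AC} = \rho_{AC} \leq 2^{-\mu}\, ⅈ_A \otimes \sigma_C$, whence $\rho_{XAC} = \sum_x \proj{x} \otimes \tilde{\omega}^x_{AC} \leq 2^{-\mu}\, ⅈ_{XA} \otimes \sigma_C$ and $\hmin{XA|C}[\rho] \geq \mu$. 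With that substitution in place of the citation, your proof stands.

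For comparison, the paper argues in one step and needs no purification at all: take the optimal $\sigma_{X'B}$ in Definition~\ref{df:max-entropy} for $\hmax{XA|X'B}[\rho]$, use $\rho = \Pi_{XX'}\, \rho\, \Pi_{XX'}$ together with the projection property of the fidelity (property iv.\ in Table~\ref{tb:fid}) to replace $ⅈ_{XA} \otimes \sigma_{X'B}$ by $ⅈ_A \otimes \Pi_{XX'} (ⅈ_X \otimes \sigma_{X'B}) \Pi_{XX'}$, and observe that the latter operator has unit trace on $XX'B$ and is therefore a feasible candidate in the optimization defining $\hmax{A|XX'B}[\rho]$. That route is shorter and works directly at the level of the fidelity; yours, once repaired, buys a clean reduction to an intuitive min-entropy monotonicity statement at the cost of introducing and managing a purification.
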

\begin{proof}
  Since the coherent classical state $ρ$ commutes with $Π¬{XX'}$, there exists a state $σ ∈ \onorm{ℋ¬{X'B}}$ such that
  \begin{align}
    2^{\hmax{XA|X'B}} &= \fidb{ ρ¬{XX'AB} }{ ⅈ¬{XA} ⨂ σ¬{X'B} } ¶\\
      &= \fidb{ ρ¬{XX'AB} }{ ⅈ¬A ⨂ Π¬{XX'} (ⅈ¬{X} ⨂ σ¬{X'B}) Π¬{XX'} } ¶\\
      &≤ 2^{\hmax{A|XX'B}} ¶\,.
  \end{align}
  The last inequality follows from the fact that 
  $\tr<b>{Π¬{XX'} (ⅈ¬{X} ⨂ σ¬{X'B}) Π¬{XX'}} = \tr<b>{\sM¬{X'}[σ]} = 1$, 
  which makes this state a candidate for the optimization in $\hmax{A|XX'B}$.
\end{proof}

\section{Selected Relations between Entropies}
\label{se:lemmas/rel}

The first lemma appeared in~\cite{tomamichel10} and relates the ☼[min-entropy]{entropy!min-entropy}
and the relative entropy where the conditioning is done on $ρ$.
\begin{lemma}
  \label{lm:rel-min-bound}
  Let $ε > 0$ and $ρ ∈ \osub{ℋ¬{ABC}}$ be pure. Then, there exists a 
  projector $Π¬{AC}$ on $ℋ¬{AC}$ and a state $\rhot = Π ρ Π$ 
  such that $\rhot ∈ \ball*{ρ}$ and
  \begin{align}
    S_{\min}(\rhot¬{AB} \| ⅈ¬A ⨂ ρ¬B) ≥ \hmin{A|B}[ρ] - \log \frac{2}{ε^2} ¶\,.
  \end{align}
\end{lemma}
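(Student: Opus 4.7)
My plan is to construct the projector $Π¬{AC}$ by combining the SDP characterization of $\hmin{A|B}[ρ]$ with an operator Markov inequality and the Schmidt structure of the pure state $ρ¬{ABC}$. First, I would invoke the SDP for the min-entropy to obtain an optimizer $σ^{\ast}¬B ∈ \opos{ℋ¬B}$ satisfying $ρ¬{AB} ≤ 2^{-h}\, ⅈ¬A ⨂ σ^{\ast}¬B$ with $\tr{σ^{\ast}¬B} = 2^{-h}$, where $h := \hmin{A|B}[ρ]$. The goal is then to produce a projector whose smoothed state $\rhot¬{AB}$ satisfies the sharpened inequality $\rhot¬{AB} ≤ K\cdot 2^{-h}\, ⅈ¬A ⨂ ρ¬B$ with $K = 2/ε^2$, at the cost of purified distance at most $ε$.

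Next, I would exploit the Schmidt decomposition $\ket{ρ}[ABC] = ∑_i √{p_i}\, \ket{e_i}[B] ⨂ \ket{f_i}[AC]$ in the eigenbasis of $ρ¬B = ∑_i p_i \proj{e_i}[B]$. Any projector of the form $Π¬{AC} = ∑_{i ∈ S} \proj{f_i}[AC]$ acts on $\ket{ρ}[ABC]$ identically to the $ρ¬B$-commuting projector $P¬B = ∑_{i ∈ S} \proj{e_i}[B]$, yielding $\rhot¬{AB} = P¬B\, ρ¬{AB}\, P¬B$. This reduces the task to choosing an index set $S$ such that (i)~$∑_{i \notin S} p_i ≤ ε^2/2$, which would give $P(\rhot, ρ) ≤ ε$ via Lemma~\ref{lm:pd-proj}, and (ii)~$P¬B\, σ^{\ast}¬B\, P¬B ≤ K\, P¬B\, ρ¬B\, P¬B$ on the image of $P¬B$, which, combined with the optimizer inequality, yields $\rhot¬{AB} ≤ K \cdot 2^{-h}\, ⅈ¬A ⨂ ρ¬B$ and hence $S_{\min}(\rhot¬{AB} \| ⅈ¬A ⨂ ρ¬B) ≥ h - \log (2/ε^2)$.

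The main obstacle will be step~(ii): since $σ^{\ast}¬B$ need not commute with $ρ¬B$, the naive diagonal threshold $S = \{i : \braket{e_i|σ^{\ast}¬B|e_i} ≤ K p_i\}$ does not control the off-diagonal matrix elements of $P¬B\, σ^{\ast}¬B\, P¬B$ and so does not yield the required operator inequality. To resolve this I plan to apply an operator Markov inequality, in the spirit of the proof of Lemma~\ref{lm:aep/smooth-bound}, to the relative density $N¬B := ρ¬B^{-½}\, σ^{\ast}¬B\, ρ¬B^{-½}$ on $\supp{ρ¬B}$, which satisfies $\tr{ρ¬B\, N¬B} = \tr{σ^{\ast}¬B} = 2^{-h}$, and then to exploit the freedom of rotating inside $ρ¬B$'s degenerate eigenspaces to align the spectral data of $N¬B$ with the Schmidt basis of $|ρ⟩$. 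Reconciling these two constraints\,|\,commuting with $ρ¬B$ on the one hand and thresholding $N¬B$ on the other\,|\,is precisely where the logarithmic factor $\log (2/ε^2)$ in the bound arises.
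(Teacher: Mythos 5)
Your reduction of the problem to a projector on $B$ is sound as far as it goes, but the specific family of projectors you allow is too small, and the obstacle you correctly identify in step~(ii) is not one your proposed fixes can overcome. By taking $\Pi_{AC} = \sum_{i\in S}\lvert f_i\rangle\langle f_i\rvert$ in the Schmidt basis of $\lvert\rho\rangle$, you force the dual projector $P_B=\sum_{i\in S}\lvert e_i\rangle\langle e_i\rvert$ to commute with $\rho_B$. The projector that actually does the job is a spectral projector of $\Lambda_B := \rho_B^{-1/2}\sigma_B\rho_B^{-1/2}$ (cutting off its largest eigenvalues), and since $\sigma_B$ generically does not commute with $\rho_B$, neither does $\Lambda_B$ nor its spectral projectors. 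Your two escape routes both fail here: an ``operator Markov inequality'' applied to $N_B=\Lambda_B$ hands you exactly those non-commuting spectral projectors, and ``rotating inside degenerate eigenspaces of $\rho_B$'' buys you nothing when $\rho_B$ has non-degenerate spectrum, which is the generic case. Within the commuting class, a diagonal threshold $S=\{i:\langle e_i\rvert\Lambda_B\lvert e_i\rangle\le K\}$ gives the distance bound but only controls $\lVert P_B\Lambda_B P_B\rVert_\infty$ up to a factor of $\lvert S\rvert$ (off-diagonal entries of a positive matrix are bounded by the geometric mean of the diagonals, nothing better), so you would pick up a dimension-dependent term that the lemma does not have. (As a side remark, your normalization double-counts: if $\operatorname{tr}\sigma^*_B = 2^{-h}$ then the SDP gives $\rho_{AB}\le \mathbb{1}_A\otimes\sigma^*_B$ without the extra factor $2^{-h}$.)

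The missing idea is to drop the commutation requirement entirely. Writing $\lvert\Gamma\rangle = \rho_B^{-1/2}\lvert\rho\rangle$ (the unnormalized maximally entangled state between $\mathrm{supp}\,\rho_B$ and $\mathrm{supp}\,\rho_{AC}$), the mirror lemma (Lemma~\ref{lm:mirror}) associates to \emph{any} projector $\Pi_B$ on $\mathrm{supp}\,\rho_B$ a projector $\Pi_{AC}$ with $\Pi_{AC}\lvert\Gamma\rangle=\Pi_B\lvert\Gamma\rangle$; one then gets the clean identity $\rho_B^{-1/2}\tilde\rho_{AB}\,\rho_B^{-1/2}=\Pi_B\,\rho_B^{-1/2}\rho_{AB}\,\rho_B^{-1/2}\Pi_B$ and $\operatorname{tr}(\Pi_{AC}^{\perp}\rho_{AC})=\operatorname{tr}(\Pi_B^{\perp}\rho_B)$, so the distance control via Lemma~\ref{lm:pd-proj} survives even though $\Pi_B$ and $\rho_B$ do not commute. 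Choosing $\Pi_B$ to be the minimum-rank spectral projector of $\Lambda_B$ onto its smallest eigenvalues subject to $\operatorname{tr}(\Pi_B^{\perp}\rho_B)\le\varepsilon^2/2$, the bound $\lVert\Pi_B\Lambda_B\Pi_B\rVert_\infty\le 2/\varepsilon^2$ follows not from a diagonal Markov argument but from a variational one: the largest retained eigenvalue is a lower bound on every eigenvalue of $\Lambda_B$ in the discarded-plus-top subspace, hence is at most the $\rho_B$-weighted average of $\Lambda_B$ over that subspace, whose numerator is at most $\operatorname{tr}(\Lambda_B^{1/2}\rho_B\Lambda_B^{1/2})=\operatorname{tr}\sigma_B=1$ and whose denominator is at least $\varepsilon^2/2$. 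You should rebuild your argument around arbitrary (non-commuting) $\Pi_B$ and this averaging step.
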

\begin{proof}
  The proof is structured as follows: First, we give a lower bound on
  the entropy $S_{\min}(\rhot¬{AB}\| ⅈ¬A ⨂ ρ¬B)$ in terms of $\hmin{A|B}[ρ]$ 
  and a projector $Π¬B$ that is a dual projector of $Π¬{AC}$ in the sense described below. 
  We then find a lower bound on the purified distance between $ρ$ and $\rhot$ in terms of $Π¬B$ 
  and define $Π¬B$ (and, thus, $Π¬{AC}$) such that this distance does not
  exceed $ε$.

  Let $λ$ and $σ$ be the pair that optimizes the min-entropy
  $\hmin{A|B}$, i.e.\ $\hmin{A|B}[ρ] = S_{\min}(ρ¬{AB}\| ⅈ¬A ⨂ σ¬B) = λ$.
  We have $\rhot¬B ≤ ρ¬B$ by definition of $\rhot$. Hence,
  $S_{\min}(\rhot¬{AB} \| ⅈ¬A ⨂ ρ¬B)$ is finite and can be written as
  \begin{align}
    2^{-S_{\min}(\rhot¬{AB} \| ⅈ¬A ⨂ ρ¬B)} &= \norm<b>{
     ρ¬B^{-½} \rhot¬{AB} ρ¬B^{-½}}[∞] ¶[ucr/rel-min1]\,.
  \end{align}
  We bound this expression using the dual projector of $Π¬{AC}$, $Π¬B$, 
  i.e.\ the projector that satisfies (cf.\ Lemma~\ref{lm:mirror})
  \begin{align}
    Π¬B \ket{Γ} = Π¬{AC} \ket{Γ}, \tn*{where} 
      \ket{Γ} = ρ¬B^{-½} \ket{ρ} = ρ¬{AC}^{-½} \ket{ρ} = \veciso (Π^{ρ¬B}) ¶\,
  \end{align} 
  is the unnormalized fully entangled state between $\supp{ρ¬B}$ and $\supp{ρ¬{AC}}$.
  We also use that $ρ¬{AB} ≤ 2^{-λ} ⅈ¬A ⨂σ¬B$ by definition of the min-entropy.
  Thus, we bound the rhs.\ of~§[ucr/rel-min1] as
  \begin{align}
    \tn-{rhs.} &= \norm<b>{\tr<b>[C]{(Π¬{AC} ⨂ ρ¬B^{-½})\,
    ρ¬{ABC}\, (Π¬{AC} ⨂ ρ¬B^{-½})}}[∞] ¶\\
      &= \norm<b>{Π¬B\, ρ¬B^{-½} ρ¬{AB}\,ρ¬B^{-½} Π¬B}[∞] ¶\\
      &≤ 2^{-λ} \norm<b>{ⅈ¬A ⨂ (Π¬B\, ρ¬B^{-½} σ¬B\,ρ¬B^{-½} Π¬B)}[∞] ¶\\
      &= 2^{-λ} \norm{Π¬B Λ¬B Π¬B}[∞] ¶\,,
  \end{align}
  where, in the last step, we introduced the Hermitian operator
  $Λ¬B := ρ¬B^{-½} σ¬B\,ρ¬B^{-½}$. 
  Taking the logarithm on both sides leads to
  \begin{align}
    S_{\min}(\rhot¬{AB} \| ⅈ¬A ⨂ ρ¬B) ≥ \hmin{A|B}[ρ] - \log \norm{Π¬B Λ¬B Π¬B}[∞] 
      ¶[ucr/rel-min2]\,.
  \end{align}
  We use Lemma~\ref{lm:pd-proj} to bound the distance between
  $ρ¬{ABC}$ and $\rhot¬{ABC}$, namely
  \begin{align}
    P(ρ, \rhot) ≤ √{2 \tr{\Pip¬{AC} ρ¬{ABC}}} = √{2 \tr{\Pip¬B ρ¬B}} \,,
  \end{align}
  where the $\Pip¬{AC}$ and $\Pip¬B$ are the orthogonal complements of $Π¬{AC}$ and 
  $Π¬B$, respectively.
  
  Clearly, the optimal choice of $Π¬B$ will cut off the largest eigenvalues of 
  $Λ$ in~§[ucr/rel-min2] while keeping $ρ$ and $\rhot$ close. We thus define $Π¬B$ 
  to be the minimum rank projector onto the smallest eigenvalues of $Λ$ such that
  $\tr{Π¬B ρ¬B} ≥ \tr{ρ} - ε^2/2$ or, equivalently, $\tr{\Pip¬B ρ¬B} ≤ ε^2/2$. 
  This definition immediately implies that $ρ \ecl \rhot$ and it remains to find 
  an upper bound on $\norm{Π¬B Λ¬B Π¬B}[∞]$.
  
  Let $Π¬B'$ be the projector onto the largest remaining eigenvalue in the operator 
  $Π¬B Λ¬B Π¬B$ and note that $Π¬B'$ and $\Pip¬B$ commute with $Λ¬B$. Then,
  \begin{align}
    \norm{Π¬B Λ¬B Π¬B}[∞] = \tr{Π¬B' Λ¬B} = \min_{µ} \frac{\tr{µ¬B (\Pip¬B + Π¬B')
      Λ¬B}}{\tr{µ}} ¶\,,
  \end{align}
  where $µ$ is minimized over all positive operators in the support
  of $\Pip¬B + Π¬B'$. Fixing instead $µ¬B = (\Pip¬B + Π¬B') ρ¬B (\Pip¬B + Π¬B')$, 
  we find
  \begin{align}
    \norm{Π¬B Λ¬B Π¬B}[∞] ≤ \frac{\tr{Γ¬B^{½} ρ¬B Λ¬B^{½} (\Pip¬B + Π¬B')}}
      {\tr{(\Pip¬B + Π¬B') ρ¬B}} ≤ \frac{\tr{Γ¬B^{½} ρ¬B Λ¬B^{½}}}{\tr{(\Pip¬B + Π¬B') ρ¬B}} 
      ≤ \frac{2}{ε^2} ¶\,.
  \end{align}
  In the last step we used that 
  \begin{align}
    \tr{Λ¬B^{½} ρ¬B Λ¬B^{½}} = \tr{σ¬B} = 1 \tn*{and}
      \tr{(\Pip¬B + Π¬B') ρ¬B} ≥ \frac{ε^2}{2} ¶
  \end{align}
  by definition of $\Pip¬B$. This concludes the proof.
\end{proof}

In addition, we need the following extension of this result to the smooth min-entropy.

\begin{lemma}
  \label{lm:rel-smooth-bound}
  Let $ε > 0, ε' ≥ 0$ and $ρ ∈ \osub{ℋ¬{AB}}$. Then, there exists a 
  state $\rhob ∈ \ball[ε + 2ε']{ρ}$ such that
  \begin{align}
    S_{\min}(\rhob¬{AB} \| ⅈ¬A ⨂ ρ¬B) ≥ \hmin[ε']{A|B}[ρ] 
      - \log \frac{2}{ε^2} ¶\,.
  \end{align}  
\end{lemma}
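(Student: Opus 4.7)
The plan is to reduce this to Lemma~\ref{lm:rel-min-bound} via a two-step approach: first smooth, then purify and project.

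First I would let $\rhot \in \ball[ε']{ρ}$ be the state that achieves $\hmin{A|B}[\rhot] = \hmin[ε']{A|B}[ρ]$. By Proposition~\ref{pr:smoothing/support}, I can take $\rhot$ to be supported on $\supp{ρ¬A} ⨂ \supp{ρ¬B}$, so in particular $\supp{\rhot¬B} \subseteq \supp{ρ¬B}$. I would then pick a purification of $\rhot$ on a tripartite system $ℋ¬{ABC}$ and apply Lemma~\ref{lm:rel-min-bound} with smoothing parameter $ε$: this yields a state $\rhob \in \ball{\rhot}$ of the form $\rhob = Π\rhot Π$ (for some projector $Π¬{AC}$) satisfying
\begin{align}
S_{\min}(\rhob¬{AB} \| ⅈ¬A ⨂ \rhot¬B) \,\geq\, \hmin{A|B}[\rhot] - \log \tfrac{2}{ε^2} \,=\, \hmin[ε']{A|B}[ρ] - \log \tfrac{2}{ε^2}.
\end{align}

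The second step replaces $\rhot¬B$ by $ρ¬B$ in the second argument without changing the min-entropy bound, using the explicit construction from Lemma~\ref{lm:pd-ext-constr}. Since $\rhob¬B$ is an extension of $\rhot¬B$ (restricted to $\supp{ρ¬B}$), that lemma provides an operator $X = ρ¬B^{½} U \rhot¬B^{-½}$, with $U$ the partial isometry from the polar decomposition of $\sqrt{\rhot¬B}\sqrt{ρ¬B}$, such that $\rhob^* := (ⅈ¬A ⨂ X)\,\rhob\,(ⅈ¬A ⨂ X†)$ is an extension of $ρ¬B$ and satisfies $P(\rhob,\rhob^*) = P(\rhot¬B, ρ¬B) \leq P(\rhot, ρ) \leq ε'$ by monotonicity of the purified distance. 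Writing $λ$ for the min-entropy on the right-hand side above, the operator inequality $\rhob \leq 2^{-λ}\, ⅈ¬A ⨂ \rhot¬B$ transforms under conjugation by $ⅈ¬A ⨂ X$ into
\begin{align}
\rhob^* \,\leq\, 2^{-λ}\, ⅈ¬A ⨂ X\rhot¬B X† \,=\, 2^{-λ}\, ⅈ¬A ⨂ ρ¬B^{½} U\, Π^{\rhot¬B}\, U† ρ¬B^{½} \,\leq\, 2^{-λ}\, ⅈ¬A ⨂ ρ¬B,
\end{align}
where the last step uses $UU† \leq ⅈ$ (partial isometry), hence $S_{\min}(\rhob^* \| ⅈ¬A ⨂ ρ¬B) \geq λ$.

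Finally, I would collect the three distance bounds via the triangle inequality:
\begin{align}
P(\rhob^*, ρ) \,\leq\, P(\rhob^*, \rhob) + P(\rhob, \rhot) + P(\rhot, ρ) \,\leq\, ε' + ε + ε',
\end{align}
so $\rhob^* \in \ball[ε+2ε']{ρ}$, which is exactly the distance budget appearing in the statement. Setting $\rhob := \rhob^*$ completes the proof. The main obstacle is the middle step: one must verify carefully that conjugation by the Petz-type map $X$ simultaneously (i) produces the correct marginal $ρ¬B$, (ii) preserves the operator inequality defining the relative min-entropy, and (iii) does not inflate the purified distance beyond a single factor of $ε'$. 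All three hinge on the precise form of $X$ provided by Lemma~\ref{lm:pd-ext-constr} and on the support-containment property secured in the first step.
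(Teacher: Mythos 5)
Your proposal follows essentially the same route as the paper's own proof: smooth within $\varepsilon'$ to a state achieving $\hmin[ε']{A|B}[ρ]$, apply Lemma~\ref{lm:rel-min-bound} at cost $\varepsilon$, and then conjugate by the operator from Lemma~\ref{lm:pd-ext-constr} to replace the smoothed $B$-marginal by $\rho_B$ in the second argument, at a further cost of $\varepsilon'$; the budget $\varepsilon + 2\varepsilon'$ comes out identically. One step is stated imprecisely, though: you invoke Lemma~\ref{lm:pd-ext-constr} to get the \emph{equality} $P(\bar{\rho},\bar{\rho}^*) = P(\tilde{\rho}_B,\rho_B)$, but that lemma requires the conjugated state to be an extension of $\tilde{\rho}_B$, and your $\bar{\rho} = \Pi\,\tilde{\rho}\,\Pi$ is not one — the projector on $AC$ changes the $B$-marginal. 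The fix (which is what the paper does) is to define $X$ from the \emph{unprojected} state, note that $X$ (acting on $B$) commutes with $\Pi$ (acting on $AC$), and use monotonicity of the purified distance under the trace non-increasing projection: $P(X\bar{\rho}X^\dagger,\bar{\rho}) = P(\Pi\, X\tilde{\rho}X^\dagger\,\Pi,\,\Pi\tilde{\rho}\Pi) \le P(X\tilde{\rho}X^\dagger,\tilde{\rho}) = P(\rho_B,\tilde{\rho}_B)\le \varepsilon'$. With that repair (and a one-line check that $\operatorname{tr}\bar{\rho}^*\le 1$, which follows from $\bar{\rho}_B \le \tilde{\rho}_B$ after the projection), your argument is the paper's argument.
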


\begin{proof}
  Let $ρ ∈ \osub{ℋ¬{ABC}}$ and $\rhoh ∈ \ball[ε']{ρ¬{ABC}}$ be pure states such that 
  $\hmin[ε']{A|B}[ρ] = \hmin{A|B}[\rhoh]$. We apply 
  Lemma~\ref{lm:rel-min-bound} to this state to get
  \begin{align}
    S_{\min}(\rhot¬{AB} \| ⅈ¬A ⨂ \rhoh¬B) 
      ≥ \hmin[ε']{A|B}[ρ] - \log \frac{2}{ε^2} \,, \tn*{where} 
      \ket{\rhot} = Π¬{AC} \ket{\rhoh}, \rhot \ecl \rhoh ¶\,.
  \end{align}
  Using Lemma~\ref{lm:pd-ext-constr}, we define the operator $F¬B$ with the property 
  $F¬B \rhoh¬B F¬B† = \rho¬B$; hence $F¬B \rhoh¬{ABC} F¬B† \ecl[ε'] \rhoh¬{ABC}$. Applying this 
  to the defining operator inequality of the relative entropy above leads to
  \begin{align}
    \rhot¬{AB} ≤ 2^{-λ} ⅈ¬A ⨂ \rhoh¬B \implies \rhob¬{AB} := F¬B 
      \rhot¬{AB} F¬B† ≤ 2^{-λ} ⅈ¬A ⨂ ρ¬B ¶
  \end{align}
  and, thus, $S_{\min}(\rhot¬{AB} \| ⅈ¬A ⨂ \rhoh¬B) ≤ S_{\min}(\rhob¬{AB} \| ⅈ¬A ⨂ ρ¬B)$. Furthermore, $\rhob ∈ \osub{ℋ¬{ABC}}$ since $\tr{\rhob} = \tr{F¬B \rhot¬B F¬B†} ≤ \tr{F¬B \rhoh¬B F¬B†} = \tr{ρ¬B} ≤ 1$.
  Hence, it remains to bound $P(\rhob, ρ) ≤ P(\rhob, \rhot) + P(\rhot, \rhoh) 
  + P(\rhoh, ρ) ≤ P(\rhob, \rhot) + ε + ε'$ and
  \begin{align} 
    P(\rhob, \rhot) &= P \big( (F¬B ⨂ Π¬{AC}) \,\rhoh¬{ABC} (F¬B ⨂ Π¬{AC}), Π¬{AC} 
    \,\rhoh¬{ABC} 
      Π¬{AC} \big) ¶\\
      &≤ P(F¬B \rhoh¬{ABC} F¬B†, \rhoh¬{ABC}) ≤ ε' ¶,
  \end{align}
  where we used the monotonicity of the purified distance (cf.\ Theorem~\ref{th:pd-mono}) 
  under projections. This concludes the proof.
\end{proof}

\chapter{Properties of Quasi-Entropies}
\label{ap:renyi}

This appendix discusses properties of the quasi-entropies and 
relative Rényi Entropies introduced in Chapter~\ref{ch:aep}.

\section{Properties of Quasi-Entropies}

The quasi entropies (cf.\ Def.~\ref{df:quasi}) are well-defined in the sense that they are ☼{covariant} under isometries on $A$ and $B$.

\begin{lemma}
  \label{lm:quasi-iso}
  Let $U: ℋ → ℋ'$ be an isometry. Then, for all $f$-quasi entropies and 
  for all $A, B ∈ \opos{ℋ}$, we have
  \begin{align}
    S_f(A \,|\, B) = S_f(U A U† \,\|\, U B U†) ¶\,.
  \end{align}
\end{lemma}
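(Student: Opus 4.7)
The plan is to show that the quasi-entropy $S_f(A\|B)$ can be rewritten in a form that depends only on the spectral data of $A$ and $B$ (their eigenvalues and the moduli of inner products between their eigenvectors), and then to verify that this spectral data is preserved under conjugation by an isometry.

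First, I would fix an orthonormal basis $\{\ket{i}\}$ of $ℋ$ with respect to which the transpose and $\ket{Γ}$ are defined, and take the eigenvalue decompositions $B + ξ ⅈ = \sum_i (\lambda_i + ξ)\, \proj{e_i}$ and $A = \sum_k a_k\, \proj{f_k}$. A key observation is that $(B+ξⅈ)^{1/2} \otimes ⅈ$ and $(B+ξⅈ)^{-1} \otimes Aᵀ$ commute, so
\begin{align*}
  S_f(A \,\|\, B) = \lim_{ξ → 0} \braket{Γ | (B+ξⅈ) \otimes ⅈ \cdot f\big((B+ξⅈ)^{-1} \otimes Aᵀ\big) | Γ}.
\end{align*}
Using $\braket{Γ | X \otimes Y | Γ} = \tr{X Yᵀ}$ (an easy consequence of the mirror lemma, Lemma~\ref{lm:mirror}) together with the fact that $Aᵀ$ has eigenvalues $a_k$ and eigenvectors $\overline{\ket{f_k}}$ in the basis $\{\ket{i}\}$, I would expand this into the purely spectral formula
\begin{align*}
  S_f(A \,\|\, B) = \lim_{ξ → 0} \sum_{i,k} f\!\left(\frac{a_k}{\lambda_i + ξ}\right)(\lambda_i + ξ)\,\abs{\braket{e_i | f_k}}^2.
\end{align*}
This expression is manifestly independent of the particular basis chosen to define $\ket{Γ}$ and the transpose, as only the spectral data of $A$ and $B$ appear.

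Second, I would write down the analogous spectral expansion for $S_f(UAU† \,\|\, UBU†)$ using an orthonormal basis of $ℋ'$ that extends the image of $U$. Setting $P := UU†$, the nonzero eigenvalues of $UBU†$ are exactly $\lambda_i$ (with eigenvectors $U\ket{e_i}$), while $P^{\perp} ℋ'$ contributes $\dim ℋ' - \dim ℋ$ zero eigenvalues; similarly for $UAU†$. By the isometry property $U† U = ⅈ$, the overlaps between ``bulk'' eigenvectors reproduce $\braket{Ue_i | U f_k} = \braket{e_i | f_k}$, so the bulk sum exactly matches the formula above for $S_f(A \,\|\, B)$. The ``extra'' modes orthogonal to the image of $U$ carry zero eigenvalue for both $UAU†$ and $UBU†$ (and have vanishing overlap with the bulk eigenvectors), so their only contribution has the form $f(0)\, ξ\,(\dim ℋ' - \dim ℋ)$, which vanishes as $ξ \to 0$.

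The main subtle point I expect is justifying the passage from the abstract definition of $S_f$ to the spectral formula and controlling the limit $ξ \to 0$ uniformly; continuity of $f$ together with the fact that $(B+ξⅈ)^{-1} \otimes Aᵀ$ depends continuously on $ξ$ on the relevant bounded spectrum takes care of this. Once the spectral form is established, the covariance statement reduces to the elementary linear-algebraic fact that $U$ preserves the spectra and pairwise overlaps of the eigenbases, modulo an extra zero block that contributes nothing in the regularization limit.
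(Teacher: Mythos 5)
Your proposal is correct and follows essentially the same route as the paper: reduce $S_f$ to the spectral formula $\lim_{ξ→0}\sum_{i,j}(μ_j+ξ)f(λ_i/(μ_j+ξ))\abs{\braket{e_i|f_j}}^2$, note that an isometry preserves eigenvalues and overlaps, and observe that the extra zero modes orthogonal to the image of $U$ contribute only terms of the form $ξ f(0)$, which vanish in the limit. The only difference is that you derive the spectral formula explicitly (via the commutation and $\braket{Γ|X⨂Y|Γ}=\tr{XYᵀ}$) where the paper simply cites it.
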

\begin{proof}
  Let $A = ∑_i λ_i \proj{e_i}$ and $B = ∑_j μ_j \proj{f_j}$
  with eigenvalues $λ_i ≥ 0$, $μ_j ≥ 0$.
  Now, we write (see also~\cite{ogawa00})
  \begin{align}
    S_f(A \,\|\, B) = \lim_{ξ → 0} ∑_{i,j} \big( μ_j + ξ \big) \, f \left( \frac{λ_i}{μ_j +
      ξ} \right) \abs<b>{ \braket{e_i | f_j} }^2 ¶\, .
  \end{align}
  The isometry $U$ keeps eigenvalues and the scalar product
  $\braket{e_i|f_j}$ invariant. Furthermore, any zero
  eigenvalues introduced do not contribute since they lie in a space
  orthogonal to the image of $U$, where the summands vanish since $\lim_{ξ → 0}\, 
  ξ f(0) = 0$.\footnote{In Def.~\ref{df:quasi}, we require that $f(0) ∈ ℝ$ is finite.}
\end{proof}

Furthermore, it turns out that quasi-entropies have nice properties if the function $f$ is
chosen operator ☼{concave}. In particular, the following property
holds.\footnote{This was essentially already shown in~\cite{petz86} for the partial trace (see also~\cite{hayashi06}) up to the continuity arguments above that allow us to define the quasi-entropies for non-invertible $B$. This extension (cf.~Lemma~\ref{lm:quasi-iso}) is crucial, since even if $B$ is invertible, $ℰ(B)$ is generally not if $ℰ$ is an isometry. Hence, our contribution is to extend the monotonicity argument to general TP-CPMs.} 
\begin{lemma}
  \label{lm:quasi-mono}
  Let $ℰ$ be a TP-CPM. Then, for all $f$ quasi-entropies with $f$ operator concave and for all
  $A, B ∈ \opos{ℋ}$, we have
  \begin{align}
    S_f(A \,\| B) ≤ S_f\big( ℰ(A) \,\|\, ℰ(B) \big) ¶\,.
  \end{align}
\end{lemma}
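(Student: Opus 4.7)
The plan is to prove the data processing inequality by Stinespring reduction to the partial trace case, which itself is handled by the operator Jensen inequality. This reflects the hint in the footnote: the partial trace version is essentially due to Petz, while the novel ingredient (enabled by the $\xi \to 0$ regularization in Definition~\ref{df:quasi}) is the ability to apply Lemma~\ref{lm:quasi-iso} to a Stinespring isometry even when its image lies in a proper subspace.

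First I would invoke Stinespring (Lemma~\ref{lm:stinespring}) to write $\mathcal{E} = \trace_{\mathcal{H}''} \circ\, \mathcal{U}$, where $\mathcal{U}[\cdot] = U(\cdot)U^\dagger$ for an isometry $U\colon \mathcal{H} \to \mathcal{H}' \otimes \mathcal{H}''$. By Lemma~\ref{lm:quasi-iso}, $S_f(A\,\|\,B) = S_f(UAU^\dagger \,\|\, UBU^\dagger)$, so it suffices to prove $S_f(A'\,\|\,B') \le S_f(\trace_{\mathcal{H}''} A' \,\|\, \trace_{\mathcal{H}''} B')$ for $A', B' \in \opos{\mathcal{H}' \otimes \mathcal{H}''}$. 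Note that this reduction genuinely needs the covariance of $S_f$ under \emph{arbitrary} isometries (not just unitaries), because $UBU^\dagger$ typically fails to be invertible on $\mathcal{H}' \otimes \mathcal{H}''$; this is exactly what the $\xi \to 0$ limit in the definition buys us.

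Next I would prove the partial-trace case. Writing $\tilde{A} = \trace_{\mathcal{H}''} A'$ and $\tilde{B} = \trace_{\mathcal{H}''} B'$, I would express both quasi-entropies via \emph{relative modular operators}. Using the notation of Def.~\ref{df:quasi}, set $\Delta = (B'+\xi\,\mathds{1})^{-1} \otimes (A')^T$ on $\mathcal{H}'\mathcal{H}'' \otimes \mathcal{H}'\mathcal{H}''$, set $\tilde\Delta = (\tilde B+\xi\,\mathds{1})^{-1} \otimes \tilde A^T$ on $\mathcal{H}' \otimes \mathcal{H}'$, and write $\ket{\tilde\Omega} = \sqrt{\tilde B+\xi\,\mathds{1}} \otimes \mathds{1}\,\ket{\tilde\Gamma}$ so that $S_f(\tilde A\,\|\,\tilde B) = \lim_{\xi\to 0}\langle\tilde\Omega| f(\tilde\Delta)|\tilde\Omega\rangle$, and analogously for the joint quantities. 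The key step is to construct an isometry $V\colon \mathcal{H}' \otimes \mathcal{H}' \to \mathcal{H}'\mathcal{H}'' \otimes \mathcal{H}'\mathcal{H}''$ that satisfies $V\ket{\tilde\Omega} = \ket{\Omega}$ and $V^\dagger \Delta\, V = \tilde\Delta$; the natural choice is built from the mirror lemma (Lemma~\ref{lm:mirror}) applied to $\sqrt{B'}(\tilde B+\xi\,\mathds{1})^{-1/2}$ on the $B'$ side and a dual construction on the $A'$ side, so that $V$ intertwines the two modular actions. Once $V$ is in place, the operator Jensen inequality (Lemma~\ref{lm:opjensen}) for the operator concave function $f$ gives $V^\dagger f(\Delta) V \le f(V^\dagger \Delta V) = f(\tilde\Delta)$, and sandwiching between $\ket{\tilde\Omega}$ yields $S_f(A'\,\|\,B') \le S_f(\tilde A\,\|\,\tilde B)$ after the $\xi \to 0$ limit.

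The main obstacle is the construction of the intertwining isometry $V$ with the right interaction with the transpose and with the regularization $\xi$; one must check that $V^\dagger \Delta V = \tilde\Delta$ holds as an identity of Hermitian operators, not merely on a subspace, so that operator Jensen applies globally, and that the limit $\xi \to 0$ commutes with the sandwich $\langle\tilde\Omega|\cdot|\tilde\Omega\rangle$. A secondary subtlety is that $f$ is assumed operator concave on the interval $\Omega \subseteq \mathbb{R}$ containing the spectra of $\Delta$ and $\tilde\Delta$ for all $\xi>0$; one must verify that $V^\dagger \Delta V$ has spectrum in $\Omega$ so that Lemma~\ref{lm:opjensen} is legitimately applicable, which is immediate once one knows $V^\dagger \Delta V = \tilde \Delta$.
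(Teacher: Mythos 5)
Your proposal is correct and follows essentially the same route as the paper: Stinespring plus Lemma~\ref{lm:quasi-iso} to reduce to the partial trace, then an explicit intertwining isometry built from $\sqrt{B}\,\sqrt{B_1}^{-1}$ that maps $\tilde\Delta$ to $V^\dagger\Delta V$ and the vector $(\sqrt{B_1}\otimes\mathbb{1})\ket{\gamma}_1$ to $(\sqrt{B}\otimes\mathbb{1})\ket{\gamma}$, followed by the operator Jensen inequality. The only cosmetic difference is that you carry the $\xi$-regularization through the sandwich, whereas the paper assumes $B$ invertible and recovers the general case by the continuity built into Definition~\ref{df:quasi}.
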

\begin{proof}
  Any TP-CPM can be expressed as an isometry followed by a partial trace (cf.~Lemma~\ref
  {lm:stinespring}), hence, in conjunction with Lemma~\ref{lm:quasi-iso}, it remains to show the 
  property for the partial trace operation. We will show this under the assumption that $B$ is 
  invertible and the result for  general $B$ will follow from the continuity (by definition) of 
  $S_f$ when $ξ → 0$.

  To show monotonicity under partial trace, we let $ℋ = ℋ_1 ⨂ ℋ_2$ with bases $\{ \ket{i}_1 \}$   
  and $\{ \ket{j}_2 \}$, respectively. We use the (unnormalized) fully entangled state $\ket{γ} = 
  \sum_{i,j} \big( \ket{i}_1 ⨂ \ket{j}_2 \big) ⨂ \big( \ket{i}_1 ⨂ \ket{j}_2 \big)$ in the 
  product basis and its marginal $\ket{γ}_1 = \sum_i \ket{i}_1 ⨂ \ket{i}_1$. 
  It remains to show that $S_f(A \,\| B) ≤ S_f(A_1 \| B_1)$, where $A_1 = \tr[2]{A}$ and 
  $B_1 = \tr[2]{B}$. 

  Let us define a linear map $V: ℋ_1 ⨂ ℋ_1' → ℋ ⨂ ℋ'$ by
  \begin{align}
    V := ∑_i \left( √{B} \left( √{B_1}\inv ⨂ \ket{i}_2 \right) \right) ⨂ ⅈ_1 ⨂ \ket{i}_2 ¶\,.
  \end{align}
  The map $V$ is an isometry, i.e.\ $V† V = ⅈ_{11}$ and satisfies 
  \begin{align}
    V \big( √{B_1} ⨂ ⅈ_1 \ket{γ}_1 \big)  = √{B} ⨂ ⅈ_{12} \ket{γ} ¶[quasi-mono/iso-action]\,.
  \end{align}
  We have $\tr[2]{Aᵀ} = A_1ᵀ$, since the transpose is taken in the product basis. 
  Hence, $V† (B\inv ⨂ Aᵀ) V = B_1\inv ⨂ A_1ᵀ$.
  Next, we apply the Operator Jensen Inequality (Lemma~\ref{lm:opjensen}) to get
  \begin{align}
    V† f(B\inv ⨂ Aᵀ) V ≤ f \big( V† (B\inv ⨂ Aᵀ) V \big) = f(B_1\inv ⨂ A_1ᵀ) ¶\,. 
  \end{align}
  Finally, using §[quasi-mono/iso-action], we recover $S_f(A \,\| B) ≤ S_f(A_1 \| B_1)$
  by taking the matrix element for $(√{B_1} ⨂ ⅈ_1 ) \ket{γ}_1$ on both sides of the inequality.
\end{proof}

\section{Properties of the Rényi Entropy}

Here, we prove some general properties of the relative ☼[Rényi entropies]{entropy!Rényi}, 
introduced in Chapter~\ref{ch:aep} as
\begin{align}
  S_{α} (A \| B) = \frac{1}{1-α} \log \tr<b>{A^α B^{1-α}} ¶.
\end{align}
Note that a similar quantity appears in quantum hypothesis testing~\cite{ogawa00,audenaert07} and in~\cite{hayashi06,ohya93,mosonyidatta08}, where alternative proofs of some of the following properties can be found.

Unlike their classical counterparts, the quantum relative (and conditional) min-
and max-entropies cannot be recovered as special cases of
$α$-entropies. However, it can be shown~\cite{koenig08} that
\begin{align}
  S_{½}(A \| B) = 2 \log \tr{√{A} √{B}} 
    ≤ 2 \log \trace \big| √{A} √{B} \big| = S_{\max}(A \| B) ¶[aep/half-bound]\,.
\end{align}
Furthermore, using the eigenvalue decompositions $A = ∑_i λ_i \proj{e_i}$ and
$B = ∑_j µ_j \proj{f_j}$, we have
\begin{align}
  S_{∞}(A \,\| B) = \lim_{ξ → 0} -\log \max_{\genfrac{}{}{0pt}{}{i,j}{\braket{e_i|f_j} 
    \neq 0}} \ \frac{λ_i}{µ_j + ξ} ≤ S_{\min}(A \,\| B) ¶\,.
\end{align}

The entropies are additive, e.g.\ evaluation for an i.i.d.\ operator $A^{⨂ n}$
relative to another i.i.d.\ operator $B^{⨂ n}$ results in
\begin{align}
  S_{α} \big(A^{⨂ n} \| B^{⨂ n} \big) = n S_{α}(A \| B) ¶[aep/renyi-add]\, .
\end{align}

The relative Rényi entropies decrease monotonically in $α$.

\begin{lemma}
  Let $α ≥ β ≥ 0$ and let $ρ ∈ \onorm{ℋ}, σ ∈ \opos{ℋ}$ with $\supp{ρ} \subseteq
  \supp{σ}$. Then, $S_{α}(ρ \,\| σ) ≤ S_{β}(ρ \,\| σ)$.
\end{lemma}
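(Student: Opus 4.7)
The plan is to reduce the quantum statement to its classical counterpart via the Nussbaum--Szkoła distributions and then exploit a convexity argument. Writing spectral decompositions $\rho = \sum_i p_i \proj{e_i}$ and $\sigma = \sum_j q_j \proj{f_j}$, I would define probability distributions $P(i,j) := p_i |\braket{e_i|f_j}|^2$ and $Q(i,j) := q_j |\braket{e_i|f_j}|^2$ on the index set $\{(i,j)\}$. Both are normalized since $\{\ket{e_i}\}$ and $\{\ket{f_j}\}$ are orthonormal bases, and the hypothesis $\supp{\rho} \subseteq \supp{\sigma}$ transfers to $\supp{P} \subseteq \supp{Q}$. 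Expanding $\tr(\rho^\alpha \sigma^{1-\alpha})$ in these two eigenbases one finds
\begin{align*}
\tr(\rho^\alpha \sigma^{1-\alpha}) = \sum_{i,j} p_i^\alpha\, q_j^{1-\alpha}\, |\braket{e_i|f_j}|^2 = \sum_{i,j} P(i,j)^\alpha\, Q(i,j)^{1-\alpha},
\end{align*}
so that $S_\alpha(\rho \,\|\, \sigma) = S_\alpha(P \,\|\, Q)$. It thus suffices to prove the classical claim that $\alpha \mapsto S_\alpha(P \,\|\, Q)$ is non-increasing.

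For the classical step I would introduce the auxiliary function $h(\alpha) := \log \sum_x P(x)^\alpha Q(x)^{1-\alpha}$, so that $S_\alpha(P \,\|\, Q) = h(\alpha)/(1-\alpha)$. An application of Hölder's inequality with conjugate exponents $1/t$ and $1/(1-t)$ to the non-negative sequences $P(x)^{t\alpha} Q(x)^{t(1-\alpha)}$ and $P(x)^{(1-t)\beta} Q(x)^{(1-t)(1-\beta)}$ yields, writing $\gamma = t\alpha + (1-t)\beta$,
\begin{align*}
\sum_x P(x)^{\gamma} Q(x)^{1-\gamma} \;\leq\; \Big(\textstyle\sum_x P^\alpha Q^{1-\alpha}\Big)^{t} \Big(\textstyle\sum_x P^\beta Q^{1-\beta}\Big)^{1-t}.
\end{align*}
Taking logarithms, this shows that $h$ is convex on $[0,\infty)$. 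Since $P$ is a probability distribution, $h(1) = \log 1 = 0$, and so the chord-slope $\alpha \mapsto h(\alpha)/(\alpha-1)$ of a convex function through the point $(1,0)$ is non-decreasing on $[0,\infty) \setminus \{1\}$ (the monotonicity also extends uniformly across $\alpha = 1$ because secants of a convex function are ordered monotonically regardless of which side of the base point one sits on). Since $h(\alpha)/(\alpha-1) = -S_\alpha(P \,\|\, Q)$, this is exactly the desired monotonicity.

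The boundary cases $\alpha \in \{0, 1, \infty\}$ are then handled by continuity: they are defined as the appropriate limits of $S_\alpha$, and continuity of $h$ on $[0,\infty)$ (ensured by $\supp{P} \subseteq \supp{Q}$) preserves the ordering in the limit. Honestly, the main obstacle is modest: it really just amounts to verifying the Nussbaum--Szkoła identity, which itself reduces to substituting the spectral decompositions and using $\sum_j \proj{f_j}$ as a resolution of the identity, together with being careful at the extremal values of $\alpha$. Notably, no operator-theoretic input is needed; in particular the operator-Jensen machinery of Lemma~\ref{lm:quasi-mono} plays no role here, which makes the reduction-to-classical strategy especially attractive.
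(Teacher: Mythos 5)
Your proof is correct, and it takes a genuinely different route from the one in the paper. The paper works directly with the quantum objects: it purifies $\rho$ via the unnormalized maximally entangled state, writes $S_\alpha(\rho\,\|\,\sigma) = -\tfrac{1}{\alpha-1}\log\langle\varphi|X^{\alpha-1}|\varphi\rangle$ with $X = (\rho\otimes\sigma^{-1})^{\mathsf T}$, differentiates in $\alpha$, and recognizes the derivative as a Jensen gap for the convex function $t\mapsto t\log t$ evaluated on $X^{\alpha-1}$ (via Lemma~\ref{lm:jensen}), which forces it to be nonpositive. You instead reduce to the classical case through the Nussbaum--Szko\l{}a distributions $P(i,j)=p_i|\langle e_i|f_j\rangle|^2$, $Q(i,j)=q_j|\langle e_i|f_j\rangle|^2$ --- the identity $\mathrm{tr}(\rho^\alpha\sigma^{1-\alpha})=\sum_{i,j}P^\alpha Q^{1-\alpha}$ and the transfer of the support condition both check out --- and then establish convexity of $h(\alpha)=\log\sum_x P^\alpha Q^{1-\alpha}$ by H\"older, concluding via the monotone-secant property of convex functions through the point $(1,0)$. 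Your route is arguably more elementary (no differentiation, no purification trick, only spectral decompositions and H\"older) and makes the classical origin of the inequality transparent; the paper's route stays entirely within the quantum formalism and reuses the same $\langle\varphi|X^{\beta}|\varphi\rangle$ representation that drives the proof of Lemma~\ref{lm:renyi-vn}, which is presumably why it was chosen. Both arguments ultimately rest on the same convexity fact, seen once as a Jensen gap and once as a secant-slope inequality. One small point worth making explicit in a write-up: for $\alpha>1$ the paper's definition uses the generalized inverse of $\sigma$, so the sum defining $h$ runs only over $\mathrm{supp}\,Q$; the support hypothesis guarantees the omitted terms carry no weight of $P$, so the normalization $h(1)=0$ and the H\"older step are unaffected.
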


\begin{proof}
  We first show that $S_{α}$ decreases monotonically with increasing $α$ 
  by showing that its derivative is negative for all $α ∈ [0, 1) \cup (1, ∞)$.

  Using the (unnormalized) fully entangled state $\ket{γ}$, we define a purification $\ket{φ} := 
  (√{ρ} ⨂ ⅈ) \ket{γ}$ of $A$. Furthermore, we set $\bar{α} = α - 1$ and 
  $X = ( ρ ⨂ σ\inv )ᵀ$. It is easy to verify that, for $f: t ↦ t \log t$,
  \begin{align}
    S_{α}(A \,\| B) &= -\frac{1}{\bar{α}} \log\, \braket{φ|X^{\bar{α}}|φ}\, \tn*{and} ¶\\
    \frac{∂}{∂ α} S_{α}(A\,\|B) &= \frac{1}{\bar{α}^2} \log\, \braket{φ|X^{\bar{α}}|φ} - 
    \frac{1}{\bar{α}} \frac{\braket{φ|X^{\bar{α}} \log X|φ}}{\braket{φ|X^{\bar{α}}|φ}} ¶\\ 
    &= \frac{ f(\braket{φ|X^{\bar{α}}|φ}) - \braket{φ|f(X^{\bar{α}})|φ}}{\bar{α}^2\,
    \braket{φ|X^{\bar{α}}|φ}} ≤ 0¶\,.
  \end{align}
  The monotonicity follows from the convexity of $f$ together with Jensen's 
  inequality (cf.~Lemma~\ref{lm:jensen}). 
\end{proof}

The Rényi entropies cannot decrease under simultaneous application of a CPM on both arguments. This is equivalent to a data-processing inequality for the conditional version of the 
entropies.\footnote{Conversely, it is easy to see that the data-processing inequality for the min- and max-entropy (cf.~Theorem~\ref{th:data-proc}) imply monotonicity for $S_{\max}$ and $S_{\min}$.}
\begin{lemma}[Monotonicity of Rényi Entropy]
  \label{lm:renyi-mono}
  Let $α ∈ [0, 2]$, let $A, B ∈ \opos{ℋ}$ and let $ℰ$ be a TP-CPM. Then,
  $S_{α}( A \,\| B ) ≤ S_{α} \big( ℰ(A) \,\| ℰ(B) \big)$.
\end{lemma}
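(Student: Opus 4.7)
The plan is to recognize that $S_\alpha$ is, up to an affine transformation of a logarithm, a quasi-entropy of the sort handled by Lemma~\ref{lm:quasi-mono}, and then to invoke that monotonicity result while carefully tracking signs. Specifically, using the shorthand $g_\alpha : t \mapsto t^\alpha$, Definition~\ref{df:quasi} together with the identity $\bra{\Gamma} X \otimes \mathbb{1} \ket{\Gamma} = \operatorname{tr}(X)$ and Lemma~\ref{lm:mirror} gives
\begin{align}
S_{g_\alpha}(A\,\|\,B) \;=\; \operatorname{tr}\!\bigl(A^\alpha\, B^{1-\alpha}\bigr),\qquad S_\alpha(A\,\|\,B) \;=\; \frac{1}{1-\alpha}\log S_{g_\alpha}(A\,\|\,B),
\end{align}
so it suffices to control the $g_\alpha$-quasi-entropy under $\mathcal E$.

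Next I split on $\alpha$ and exploit the facts collected in Section~\ref{sc:opmono}: $g_\alpha$ is operator concave on $\mathbb R_0^+$ for $\alpha\in(0,1]$ and operator convex on $\mathbb R_0^+$ for $\alpha\in[1,2]$. In the first range, Lemma~\ref{lm:quasi-mono} applies directly and yields $S_{g_\alpha}(A\,\|\,B)\le S_{g_\alpha}(\mathcal E(A)\,\|\,\mathcal E(B))$; since $\log$ is monotone and $1/(1-\alpha)>0$, taking logarithms and rescaling preserves the direction of the inequality and gives the claim. In the second range, $-g_\alpha$ is operator concave, so Lemma~\ref{lm:quasi-mono} applied to $-g_\alpha$ gives the reverse inequality $S_{g_\alpha}(A\,\|\,B)\ge S_{g_\alpha}(\mathcal E(A)\,\|\,\mathcal E(B))$; but now the prefactor $1/(1-\alpha)$ is negative, so the two sign flips cancel and we again obtain $S_\alpha(A\,\|\,B)\le S_\alpha(\mathcal E(A)\,\|\,\mathcal E(B))$.

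The two boundary values deserve a short remark. At $\alpha=1$, $S_\alpha$ is defined as the limit $\alpha\to1^\pm$, which coincides with the von Neumann relative entropy~\eqref{eq:vn/rel}, so monotonicity follows from the limit of the inequalities just proved (or, equivalently, from Lemma~\ref{lm:quasi-mono} with $f=h$ operator concave). At $\alpha=0$ the statement is the well-known fact that $\operatorname{tr}(\Pi^A B)$ does not decrease under $\mathcal E$, which is also covered by the limiting argument.

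The main obstacle I expect is simply the bookkeeping of signs: one must keep straight (i) whether $g_\alpha$ is operator concave or convex on each subinterval, (ii) the sign of the prefactor $1/(1-\alpha)$, and (iii) the fact that for $\alpha>1$ the power $B^{1-\alpha}$ must be interpreted via the generalized inverse, which in turn requires the continuity argument built into Definition~\ref{df:quasi} (and used in the proof of Lemma~\ref{lm:quasi-mono}) so that $\mathcal E(B)$ is allowed to be non-invertible even when $B$ is. Once these details are aligned, the proof is essentially a one-line reduction to Lemma~\ref{lm:quasi-mono}.
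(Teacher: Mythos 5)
Your proof is correct and follows essentially the same route as the paper: reduce $S_\alpha$ to the quasi-entropy $S_{g_\alpha}$, apply Lemma~\ref{lm:quasi-mono} using the operator concavity of $g_\alpha$ on $(0,1)$ and operator convexity on $(1,2]$, observe that the sign change of the prefactor $1/(1-\alpha)$ cancels the reversed inequality in the second range, and handle $\alpha=0$ and $\alpha=1$ by continuity.
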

\begin{proof}
  This follows directly from Lemma~\ref{lm:quasi-mono} and the fact that $g_{α} : t ↦ t^{α}$ 
  is operator concave for $α ∈ (0, 1)$ and operator convex for $α ∈ (1, 2]$. (Also note that 
  the pre-factor conveniently changes sign between these two domains.) In the limits $α → 0$ and 
  $α → 1$, the property follows by continuity.\footnote{For the von Neumann entropy this property 
  (and strong sub-additivity) also follows from the operator concavity of $h: t ↦ - t \log t$, 
  as noted in~\cite{nielsenpetz04}.}
\end{proof}

We define conditional versions of these entropies as follows.
For a state $ρ ∈ \osub{ℋ¬{AB}}$, the conditional Rényi $α$-entropy of $A$ given $B$ is
\begin{align}
  \hh{α}{A|B}[ρ] := S_{α}( ρ¬{AB} \| ⅈ ⨂ ρ¬{B} ) ¶\,.
\end{align}
This definition allows a ☼{duality} relation for ☼[pure]{state!pure} tri-partite states.
\begin{lemma}
  \label{lm:renyi-dual}
  Let $ρ ∈ \osub{ℋ¬{ABC}}$ be pure and $α ∈ [0, 1) \cup (1, 2]$. Then
  \begin{align}
    \hh{α}{A|B}[ρ] + \hh{2-α}{A|C}[ρ] = 0 ¶\,.
  \end{align}
\end{lemma}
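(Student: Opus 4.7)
The plan is to reduce the identity to a single scalar equality and then prove it by transferring operators between the $AB$ and $C$ (and between the $AC$ and $B$) sides of the pure state $\ket{ψ}$ via a Schmidt-based exchange identity. Unpacking the definitions of the conditional Rényi entropies gives
\begin{align}
\hh{α}{A|B}[ρ] + \hh{2-α}{A|C}[ρ] = \frac{1}{1-α} \log \frac{\tr\{ρ¬{AB}^α (ⅈ¬A ⨂ ρ¬B^{1-α})\}}{\tr\{ρ¬{AC}^{2-α} (ⅈ¬A ⨂ ρ¬C^{α-1})\}},
\end{align}
so the lemma reduces to showing that the two traces in the numerator and denominator coincide.

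The key ingredient is an elementary consequence of the Schmidt decomposition $\ket{ψ}¬{ABC} = \sum_i \sqrt{λ_i}\, \ket{e_i}¬{AB} ⨂ \ket{f_i}¬C$ across the cut $AB|C$: for any function $f$ defined on the common nonzero spectrum of $ρ¬{AB}$ and $ρ¬C$,
\begin{align}
(f(ρ¬{AB}) ⨂ ⅈ¬C)\, \ket{ψ} = (ⅈ¬{AB} ⨂ f(ρ¬C))\, \ket{ψ},
\end{align}
since both sides evaluate to $\sum_i \sqrt{λ_i}\, f(λ_i)\, \ket{e_i} ⨂ \ket{f_i}$. An analogous identity holds across the cut $AC|B$, with $B$ and $C$ exchanged.

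With this in hand, I would rewrite each trace as a matrix element of $\ket{ψ}$ by means of the elementary identity $\tr\{X¬{AB}\} = \braket{ψ | X¬{AB} ⨂ ρ¬C^{-1} | ψ}$, valid whenever $X¬{AB}$ is supported on $\supp ρ¬{AB}$, with $ρ¬C^{-1}$ taken as the generalized inverse (this follows directly by inserting the Schmidt form of $\ket{ψ}$). Applying this on the left with $X¬{AB} = ρ¬{AB}^{α/2} (ⅈ¬A ⨂ ρ¬B^{1-α})\, ρ¬{AB}^{α/2}$, then using the Schmidt identity to push each outer factor $ρ¬{AB}^{α/2}$ through to the $C$-side as $ρ¬C^{α/2}$, and finally combining the resulting three factors via $ρ¬C^{α/2}\, ρ¬C^{-1}\, ρ¬C^{α/2} = ρ¬C^{α-1}$, reduces the numerator to $\braket{ψ | (ⅈ¬A ⨂ ρ¬B^{1-α}) ⨂ ρ¬C^{α-1} | ψ}$. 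Running exactly the same manoeuvre on the denominator, but with the $AC|B$ version of the Schmidt identity converting the outer $ρ¬{AC}^{(2-α)/2}$ into $ρ¬B^{(2-α)/2}$, produces the same expression, which establishes the equality of traces.

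The only real obstacle is the bookkeeping of support conditions, since the fractional powers and the generalized inverse $ρ¬C^{-1}$ must be understood as acting only on $\supp ρ¬C$ (and similarly for $ρ¬B^{-1}$ and $\supp ρ¬B$), and one must verify that the Schmidt exchange identity and the trace-to-matrix-element formula apply to the particular operators in play. This can be handled cleanly by restricting from the outset to $\supp ρ¬{AB} ⨂ \supp ρ¬C$, which have equal dimension by Schmidt; alternatively, one may add $ξ ⅈ$ to each marginal and pass to the limit $ξ → 0$ at the end, mirroring the continuity device used in Def.~\ref{df:quasi}.
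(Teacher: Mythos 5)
Your proposal is correct and rests on the same key fact as the paper's proof, namely the Schmidt-based exchange identity $f(ρ¬{AB}) \ket{ψ} = f(ρ¬C)\ket{ψ}$ for the purification. The paper reaches the conclusion slightly more directly, writing $\tr\{ρ¬{AB}^α ρ¬B^{1-α}\} = \braket{θ|ρ¬{AB}^{α-1} ρ¬B^{1-α}|θ} = \braket{θ|ρ¬C^{α-1} ρ¬{AC}^{1-α}|θ}$ in one pass rather than via your symmetric half-power splitting and the $ρ¬C^{-1}$ insertion, but the two arguments are essentially identical.
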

\begin{proof}
  We write $ρ = \proj{θ}$ and note that the marginal states $ρ¬{AB}$ and $ρ¬C$ satisfy
  $ρ¬{AB} \ket{θ} = ρ¬{C} \ket{θ}$ and $ρ¬B \ket{θ} = ρ¬{AC} \ket{θ}$. Thus,
  \begin{align}
    (1 - α) \hh{α}{A|B}[ρ] &= \log \tr<b>{ρ¬{AB}^α ρ¬B^{1 - α}} 
     = \log\, \braket{θ | ρ¬{AB}^{α - 1} ρ¬B^{1 - α} | θ} ¶\\
    &= \log\, \braket{θ | ρ¬C^{α - 1} ρ¬{AC}^{1 - α} | θ} = (α - 1) \hh{2-α}{A|B}[ρ]¶.
  \end{align}
  The last equality follows from $α - 1 = 1 - (2 - α)$.
\end{proof}

% index
%\cleardoublepage
%\phantomsection
%\addcontentsline{toc}{chapter}{Index}

%\printindex

% biography

%\chapter*{Biography}
%\addcontentsline{toc}{chapter}{Biography}

%\input{bio}

\end{document}